\newcommand{\clearemptydoublepage}{\newpage{\pagestyle{empty}\cleardoublepage}}
\newcommand{\paginavuota}{\newpage\thispagestyle{empty}{\ }} 
\definecolor{gray50}{gray}{.5} 
\newtheorem{theorem}{Theorem}[section]
\newtheorem{lemma}[theorem]{Lemma}
\newtheorem{proposition}[theorem]{Proposition}
\newtheorem{corollary}[theorem]{Corollary}
\theoremstyle{definition}
\newtheorem{definition}[theorem]{Definition}
\newtheorem{example}[theorem]{Example}
\newtheorem{remark}[theorem]{Remark}
\numberwithin{equation}{section}
\def\bbZ{\mathbb{Z}}
\def\bbR{\mathbb{R}}
\def\bbC{\mathbb{C}}
\def\bbT{\mathbb{T}}
\def\lieg{\mathfrak{g}}
\def\bw{\begingroup\textstyle\bigwedge\endgroup}
\def\lra{\longrightarrow}
\def\la{\langle}
\def\ra{\rangle}
\def\ul{\underline}
\def\tint{\begingroup\textstyle\int\endgroup}
\def\mfo{\mathfrak{o}}
\def\mft{\mathfrak{t}}
\def\vol{\mathrm{vol}}
\def\id{\mathrm{id}}
\def\rank{\mathrm{rank}}
\def\Ad{\mathrm{Ad}}
\def\ad{\mathrm{ad}}
\def\op{\mathrm{op}}
\def\myexp{\mathrm{e}}
\def\ttone{\mathtt{1}}
\def\bfone{\mathbf{1}}
\def\bbone{\mathbbm{1}}
\def\triv{\mathsf{Triv}}
\def\exptriv{\triv^\bbZ}
\def\weyl{\mathsf{W}}
\def\supp{\mathrm{supp}}
\def\ker{\mathrm{ker}}
\def\im{\mathrm{im}}
\def\dd{\mathrm{d}}
\def\de{\delta}
\def\hom{\mathrm{Hom}}
\def\endo{\mathrm{End}}
\def\pr{\mathrm{pr}}
\def\gauge{\mathrm{Gau}}
\def\eqv{\mathrm{eqv}}
\def\hor{\mathrm{h}}
\def\base{\mathrm{base}}
\def\balg{\Delta^\mathrm{B}}
\def\states{\mathscr{S}}
\def\conn{\mathcal{C}}
\def\ev{\mathcal{O}}
\def\expev{\mathcal{W}}
\def\curv{\mathcal{F}}
\def\MW{\mathcal{M}}
\def\curvstar{\mathscr{F}^\ast}
\def\dense{\mathscr{D}}
\def\hilb{\mathscr{H}}
\def\bounded{\mathscr{L}}
\def\obs{\mathcal{E}}
\def\kin{\obs^{\mathrm{kin}}}
\def\inv{\obs^{\mathrm{inv}}}
\def\van{\obs^{\mathrm{van}}}
\def\expobs{\mathcal{D}}
\def\expkin{\expobs^{\mathrm{kin}}}
\def\expinv{\expobs^{\mathrm{inv}}}
\def\expvan{\expobs^{\mathrm{van}}}
\def\rad{\mathcal{N}}
\def\exprad{\mathcal{R}}
\def\expcnt{\mathcal{Z}}
\def\GHyp{\mathsf{GHyp}}
\def\GHypF{\mathsf{GHypF}}
\def\Vec{\mathsf{Vec}}
\def\Aff{\mathsf{Aff}}
\def\VBun{\mathsf{VBun}}
\def\PrBun{\mathsf{PrBun}}
\def\PSymV{\mathsf{PSymV}}
\def\PSymA{\mathsf{PSymA}}
\def\Alg{{}^\ast\mathsf{Alg}}
\def\BAlg{\mathsf{B}^\ast\mathsf{Alg}}
\def\CAlg{\mathsf{C}^\ast\mathsf{Alg}}
\def\PSV{\mathfrak{PSV}}
\def\PSA{\mathfrak{PSA}}
\def\mfker{\mathfrak{Ker}}
\def\CCR{\mathfrak{CCR}}
\def\QFT{\mathfrak{A}}
\def\sol{\mathcal{S}}
\newcommand{\solsc}[1]{\sol_{\mathrm{sc}\,#1}}
\def\gau{\mathcal{G}}
\newcommand{\gausc}[1]{\gau_{\mathrm{sc}\,#1}}
\def\herm{\mathcal{H}}
\def\hermt{\tilde{\herm}}
\def\cone{\mathcal{K}}
\def\sect{\Gamma}
\def\sc{\sect_\mathrm{c}}
\def\stc{\sect_\mathrm{tc}}
\def\c{\mathrm{C}^\infty}
\def\cc{\c_\mathrm{c}}
\def\csc{\c_\mathrm{sc}}
\def\ctc{\c_\mathrm{tc}}
\def\f{\Omega}
\def\fdd{\f_\dd}
\def\fde{\f_\de}
\def\fc{\f_\mathrm{c}}
\def\fcdd{\f_{\mathrm{c}\,\dd}}
\def\fcde{\f_{\mathrm{c}\,\de}}
\def\ffsc{\f_\mathrm{fsc}}
\def\fpsc{\f_\mathrm{psc}}
\def\fsc{\f_\mathrm{sc}}
\def\fscdd{\f_{\mathrm{sc}\,\dd}}
\def\fscde{\f_{\mathrm{sc}\,\de}}
\def\fpc{\f_\mathrm{pc}}
\def\ffc{\f_\mathrm{fc}}
\def\ftc{\f_\mathrm{tc}}
\def\ftcdd{\f_{\mathrm{tc}\,\dd}}
\def\ftcde{\f_{\mathrm{tc}\,\de}}
\def\coho{\mathrm{H}}
\def\cech{\check{\mathrm{H}}}
\def\hdd{\mathrm{H}_\dd}
\def\hde{\mathrm{H}_\de}
\def\hcdd{\mathrm{H}_{\mathrm{c}\,\dd}}
\def\hcde{\mathrm{H}_{\mathrm{c}\,\de}}
\def\hscdd{\mathrm{H}_{\mathrm{sc}\,\dd}}
\def\hscde{\mathrm{H}_{\mathrm{sc}\,\de}}
\def\htcdd{\mathrm{H}_{\mathrm{tc}\,\dd}}
\def\htcde{\mathrm{H}_{\mathrm{tc}\,\de}}
\def\connf{\f^1_\conn}
\newcommand{\quotes}[1]{``#1''}
\newcommand{\backtick}[1]{\`#1}
\newcommand{\norm}[1]{\Vert #1\Vert}
\newcommand{\Norm}[1]{\left\Vert #1\right\Vert}
\newcommand{\normb}[1]{\norm{#1}^\bullet}
\newcommand{\Normb}[1]{\Norm{#1}^\bullet}
\newcommand{\GNS}[1]{(\pi_{#1},\mathscr{H}_{#1},\Omega_{#1})}
\def\sk{\vspace{1em}}
\begin{document}

\frontmatter

\selectlanguage{english}


\thispagestyle{empty}

\begin{center}

%

{\large\textsc{Universit\backtick{a} degli Studi di Pavia}}

\vspace{0.1cm}

{\large\textsc{Dottorato di Ricerca in Fisica -- XXVII Ciclo}}

\rule[0.5ex]{\columnwidth}{1pt}

\vspace{2.5cm}

{\LARGE\textbf{Locality in Abelian gauge theories}}

\vspace{0.3cm}

{\LARGE\textbf{over globally hyperbolic spacetimes}}

\vspace{2.2cm}

{\Large{Marco Benini}}

\vspace{1.5cm}

\begin{tikzpicture}

\shade[left color=teal, right color=teal!40] 
	(0,0) to[out=10, in=130] (4,-2) -- (8,1) to[out=120, in=40] (3,3) -- cycle;

\fill[white] (4,1.3) circle [x radius=1, y radius=0.5, rotate=30];

\draw[very thick] (4,1.3) circle [x radius=1.5, y radius=1, rotate=30];

\shade[left color=teal, right color=teal!40] 
	(6,-3) to[out=10, in=130] (10,-5) -- (14,-2) to[out=120, in=40] (9,0) -- cycle;

\draw[dotted] (10,-1.7) circle [x radius=1, y radius=0.5, rotate=30];

\draw[dashed] (10,-1.7) circle [x radius=1.5, y radius=1, rotate=30];

\draw[-latex, very thick] (6,2.5) to[out=20, in=100] (10,0);

\path	(12.4,3.55)	node(x0)	{$0$}
		(12.4,2.4)	node(x1)	{$\mathrm{H}^1(M,\mathrm{C}_M^\infty(\cdot,U(1)))$}
		(12.4,1.3)	node(x2)	{$\mathrm{H}^2(M,\mathbb{Z})$}
		(12.4,0.25)	node(x3)	{$0$};
\draw[->]	(x0) -- (x1);
\draw[->]	(x1) -- (x2);
\draw[->]	(x2) -- (x3);

\path	(1.4,-1.35)	node(y0)	{$0$}
		(1.4,-2.5)	node(y1)	{$\mathrm{C}^\infty(M,\mathbb{Z})$}
		(1.4,-3.6)	node(y2)	{$\mathrm{C}^\infty(M,\mathbb{R})$}
		(1.4,-4.7)	node(y3)	{$\mathrm{C}^\infty(M,U(1))$}
		(4.3,-4.7)	node(y4)	{$\mathrm{H}^1(M,\mathbb{Z})$}
		(6.1,-4.65)	node(y5)	{$0$}
		(6.1,-4.7)	node(y6)	{$\phantom{0}$};
\draw[->]	(y0) -- (y1);
\draw[->]	(y1) -- (y2);
\draw[->]	(y2) -- (y3);
\draw[->]	(y3) -- (y4);
\draw[->]	(y4) -- (y6);

\end{tikzpicture}

\vfill

{\Large{Tesi per il conseguimento del titolo}}

\end{center}

\clearemptydoublepage


\thispagestyle{empty}

\noindent
\begin{minipage}{2cm}
\includegraphics[height=3.5cm]{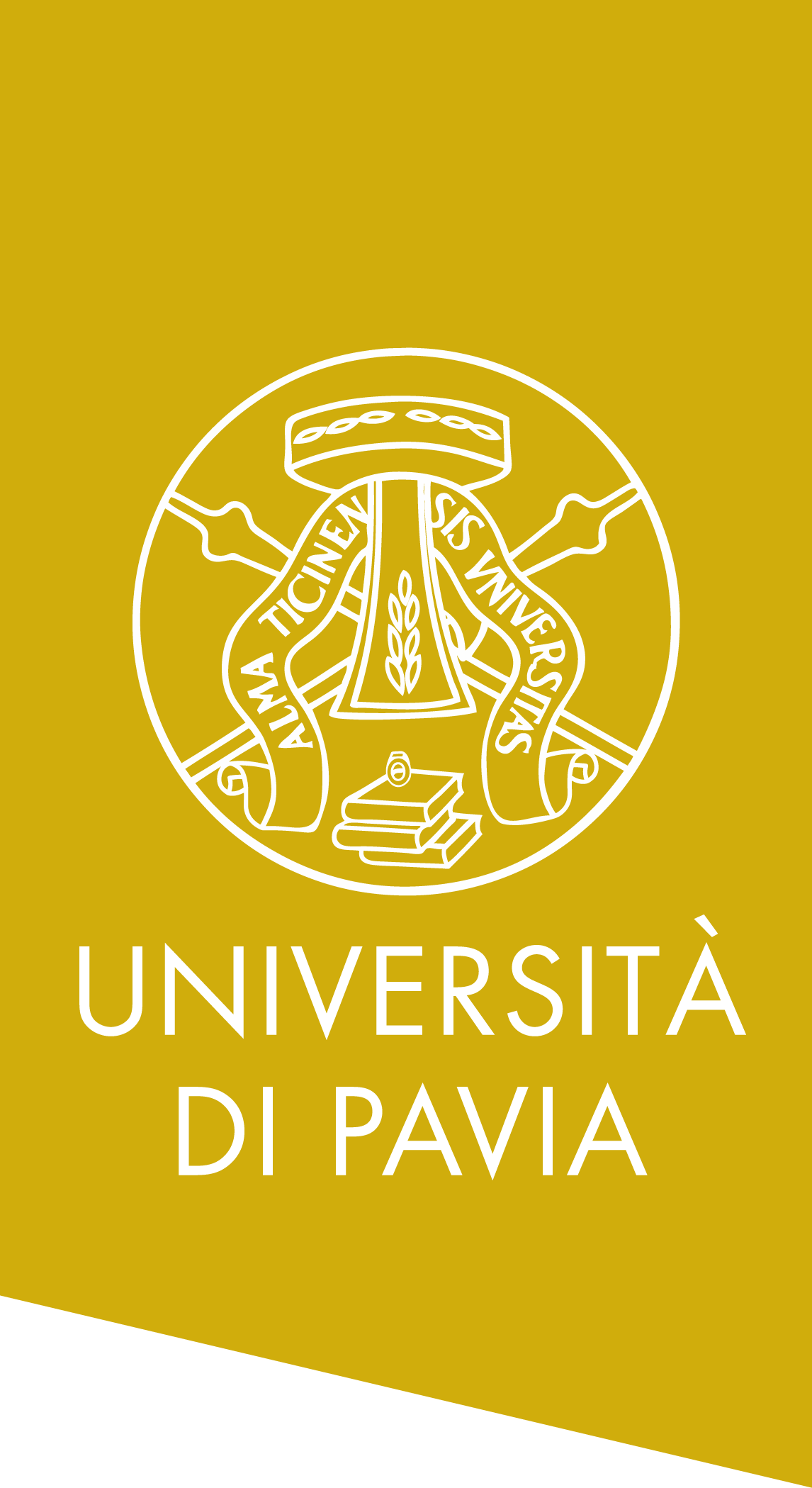}
\end{minipage}
\begin{minipage}{2.35cm}
{\large Universit\backtick{a} degli Studi di Pavia}
\end{minipage}
\hfill
\begin{minipage}{2.9cm}
{\flushright\large Dipartimento di Fisica}
\end{minipage}
\begin{minipage}{3.5cm}
\includegraphics[height=3.5cm]{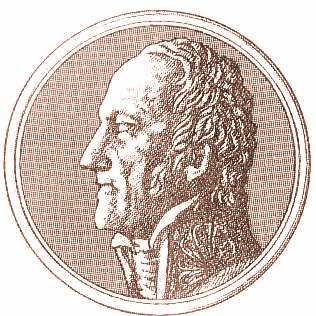}
\end{minipage}

\begin{center}

%
%
%

{\large\textsc{Dottorato di Ricerca in Fisica -- XXVII Ciclo}}

\vspace{2.7cm}

{\LARGE\textbf{Locality in Abelian gauge theories}}

\vspace{0.3cm}

{\LARGE\textbf{over globally hyperbolic spacetimes}}

\vspace{2.7cm}

{\Large{Marco Benini}}

\vspace{2.7cm}

{\large{Submitted to the Graduate School in Physics}}

\vspace{0.1cm}

{\large{in partial fulfillment of the requirements for the degree of}}

%

\vspace{0.3cm}

{\large\textsc{Dottore di Ricerca in Fisica}}

\vspace{0.1cm}

{\large\textsc{Doctor of Philosophy in Physics}}

\vspace{0.3cm}

{\large{at the University of Pavia}}

%

\end{center}

\vfill

\noindent
\begin{minipage}{0.25\textwidth}
Adviser:

Claudio Dappiaggi
\end{minipage}
\hfill
\begin{minipage}{0.25\textwidth}
Co-adviser:

Alexander Schenkel
\end{minipage}

%

\paginavuota

\vfill

\noindent\textbf{Cover:} The embedding of a surface with a hole into a contractible surface is depicted 
together with the exact sequences which are relevant for the description of the gauge theory of electromagnetism. 

\vspace{1cm}

\noindent Locality in Abelian gauge theories over globally hyperbolic spacetimes

\noindent\textit{Marco Benini}

\noindent PhD thesis -- University of Pavia

\noindent Pavia, Italy, November 2014.

\noindent\ISBN

\paginavuota


\vfill
\hfill{\Large\calligra{To Giulia}}

\paginavuota


\title{\LARGE\textbf{Locality in Abelian gauge theories\\over globally hyperbolic spacetimes}}
\author{\Large{\textit{Marco Benini}}}
\date{\Large{2014}}

\maketitle

\setcounter{page}{8}

\paginavuota


\phantomsection
\addcontentsline{toc}{chapter}{Abstract}
\chapter*{Abstract}
Being responsible for interactions between matter fields, 
gauge field theories play a major role in our understanding of nature. 
The most prominent demonstration of this fact is provided by the gauge bosons of the Standard Model of Particle Physics, 
among which the photon is the carrier of the electromagnetic interaction. 
Notably, electromagnetism is described by an Abelian gauge theory. 
Roughly, this means that photons do not exhibit self-interactions, 
the reason why electromagnetism is very well understood also mathematically. 

The aim of this thesis is to investigate the locality properties of Abelian gauge field theories 
in the mathematically rigorous framework provided by the principle of {\em general local covariance}, 
consisting in a set of axioms to describe field theories on curved spacetimes (both classically and at the quantum level). 
Roughly speaking, to each spacetime, one is supposed to assign (in a mathematically coherent way) 
a family of observables for the field theory of interest. 
We will analyze two examples, {\em Maxwell $k$-forms}, a generalization of the vector potential of electromagnetism, 
and the {\em $U(1)$ Yang-Mills model}, which describes photons in the Standard Model. 
As backgrounds, we will consider globally hyperbolic spacetimes, 
the most general class of spacetimes where the dynamics of wave-type field equations can be understood globally. 

Our attention will be focused in particular on the {\em locality axiom} of general local covariance, 
which states that an embedding between spacetimes (compatible with the causal structures) 
should induce an inclusion at the level of observables, 
namely the observables on the embedded spacetime should form a subset of those on the target spacetime. 
Both at the classical and at the quantum level, it turns out that the models we consider violate this axiom 
depending on certain global features of the background spacetime. 
For Maxwell $k$-forms, we prove that there is no coherent way to recover the locality axiom. 
For the $U(1)$ Yang-Mills model we adopt two different approaches: 
in the first case locality can be recovered coherently, but the class of observables we introduce fails 
to detect field configurations corresponding to the Aharonov-Bohm effect; 
in the second case observables are defined mimicking Wilson loops 
so that also Aharonov-Bohm configurations are captured, yet locality cannot be recovered coherently.

\paginavuota


\tableofcontents


\mainmatter
\pagenumbering{arabic}
\setcounter{page}{1}

\chapter{Introduction}

Quantum field theory on curved spacetimes is a well-established and very promising research field 
in mathematical physics. Algebraic methods proved very successful in this context. 
The original idea can be traced back to the seminal work of Haag and Kastler\ \cite{HK64}, 
who first noted that quantum field theory could be better understood 
as a net of algebras of observables on Minkowski spacetime. 
This consists of specifying, for each spacetime region, an algebra describing observables which are localized there. 
The net naturally encodes both geometric and dynamical features of quantum field theory, 
such as locality, covariance with respect to the symmetries of Minkowski spacetime and causality. 
Notably, these properties arise already at the algebraic level, 
without the need of any choice of representation on a Hilbert space. 
This approach was later extended to curved spacetimes by Dimock\ \cite{Dim80}. 
A recent breakthrough in the context of quantum field theory on curved backgrounds 
is the axiomatic formulation proposed by Brunetti, Fredenhagen and Verch\ \cite{BFV03}, 
known as {\em generally covariant locality principle}.\index{general local covariance} 
Roughly speaking, the central feature of this approach consists in taking into account quantum field theories 
on all globally hyperbolic spacetimes at the same time. 
Specifically, to each globally hyperbolic spacetime, one assigns an algebra of observables. 
Again, covariance, locality and causality are encoded in a natural way; 
moreover, there is a further requirement, the so-called time-slice axiom, 
which is reminiscent of the global well-posedness of Cauchy problems for wave equations 
on globally hyperbolic spacetimes. Basically, this means that 
the full information about the field is already contained in the initial data. 
One might summarize the axioms of general local covariance as follows: 
\begin{description}
\item[Covariance]\index{covariance} Whenever a spacetime is embedded (preserving its causal structure) 
in a larger one, there should be a map (compatible with the algebraic structures) 
which assigns an observable on the target spacetime to each observable on the source; 
\item[Locality]\index{locality} It is required that the map described above is an injection, 
namely, on the source of the spacetime embedding, one should have a subalgebra 
of the algebra of observables associated to the target spacetime; 
\item[Causality]\index{causality} Observables localized in regions 
which cannot be joined by a causal curve should commute; 
\item[Time-slice axiom]\index{time-slice axiom} The algebra of observables localized in a suitable neighborhood 
of a Cauchy hypersurface is isomorphic to the algebra describing observables on the full spacetime. 
\end{description}
\index{locally covariant field theory}Quantum field theories 
which realize the axioms of general local covariance are called {\em locally covariant quantum field theories}. 
Similar axioms can be stated for classical field theories as well. 
Notably, general local covariance provided a successful framework to extend structural results 
about quantum field theory on Minkowski spacetime to much more general curved spacetimes, 
for example see\ \cite{Ver01} for the spin-statistics theorem on curved spacetimes; 
moreover, this proved to be a solid background to develop new ideas 
(an example is the concept of dynamical locality\ \cite{FV12}), 
eventually stimulating developments in classical field theory too, {\em cfr.}\ \cite{BFR12, FR12, Kha14b}. 
For this reason, much effort has been spent in the last decade 
to realize the axioms of general local covariance in physical models. 
First of all, free field theories were taken into account, such as the scalar field \cite{BFV03}, 
the Dirac field \cite{DHP09, San10} and the Proca field \cite{Dap11} (for a review see\ also\ \cite{BDH13}). 
Furthermore, a systematic approach for linear field equations was developed in\ \cite{BGP07, BG12a, BG12b} 
and later it was extended in\ \cite{BDS14b} to include affine field theories too. 
The successful description of free field theories paved the way for further developments in treating 
interacting field theories in the framework of perturbative algebraic quantum field theory \cite{BDF09, HW10}, 
eventually leading to a mathematically rigorous notion of local Wick polynomials \cite{BFK96, HW01}
and of the operator product expansion \cite{Hol07}, 
as well as to new insights on the renormalization of quantum field theories on curved backgrounds \cite{BF00, HW03}. 
Linear gauge field theories were considered too, see for example\ \cite{Dap11, DL12}, 
for the first presentations of the electromagnetic field in the context of general local covariance, 
and\ \cite{FH13} for linearized gravity. For a more systematic approach to linear gauge fields, refer to\ \cite{HS13}. 
Note that also interacting gauge field theories have been studied 
by means of perturbative techniques in\ \cite{Hol08, FR13, BFR13}. 

It was first observed in\ \cite{DL12} in the attempt to quantize Maxwell equations 
in the framework of general local covariance that the Faraday tensor does not satisfy the locality axiom, 
such behavior being related to the second de Rham cohomology group of the background spacetimes. 
This observation motivated further investigations on the interplay between general local covariance 
and electromagnetism under different perspectives. 
The quantization of the vector potential of electromagnetism was shown to satisfy locality in\ \cite{Dap11, DS13}, 
but restricted to those globally hyperbolic spacetimes for which certain de Rham cohomology groups are trivial. 
Explicit counterexamples to locality of the vector potential (as well as of its higher analogues) 
were exhibited in\ \cite{SDH14} and physically interpreted as being caused by Gauss' law. 
Therefore de Rham cohomology classes representing the fluxes of an electric field 
are recognized as causing the failure of the locality axiom in Maxwell field theory. 
Similar issues appear when describing electromagnetism as a Yang-Mills field theory as in\ \cite{BDS14a, BDHS14}. 

So far, we did not consider states for the algebras of observables discussed above. 
Even if we will not deal with this topic, states play a major role when one wants 
to extract physical information from a quantum field theory since they provide expectation values. 
Let us mention that different approaches are available to construct states for the algebra of observables 
of electromagnetism over a curved spacetime\ \cite{FP03, DS13, FS13}. 
More recently, techniques have been developed 
also to construct states for more general linear gauge field theories including the Maxwell case\ \cite{GW14, WZ14}. 

\subsubsection*{Interplay between gauge symmetry, locality and spacetime topology}

The main goal of this thesis is twofold. On the one hand, we want to better understand 
to what extent Abelian gauge field theories violate the locality axiom of general local covariance. 
On the other hand, we try to highlight certain global topological aspects 
which are accessible also in terms of local observables. These two aspects turn out to be tightly related, 
the link between the two being provided by the interplay between dynamics and gauge symmetry: 
Since field configurations only matter up to gauge transformations, 
one is induced to restrict the class of functionals to be used to test field configurations to gauge invariant ones. 
Doing so, one can test gauge equivalence classes of field configurations, rather then their specific representatives. 
Furthermore, dynamics is such that on-shell field configurations have certain associated quantities 
which do not depend on the choice of representative in a gauge equivalence class 
and which are represented by suitable cohomology classes. 
Despite the restrictions imposed on functionals by gauge invariance, 
this cohomological information is accessible by testing observables on field configurations. 
This is reminiscent of the fact that the quantities encoding the topological information 
only depend on the gauge equivalence class of field configurations rather than on a specific representative. 
Therefore, gauge symmetry and dynamics together introduce global information in field theory 
and this feature is captured by means of gauge invariant functionals, which are used to define observables. 
This kind of topological information does not necessarily comply with the locality axiom of general local covariance. 
As a simple (non-dynamical) example, 
consider the inclusion of the pointed plane $\bbR^2\setminus\{0\}$ into the plane $\bbR^2$. 
The map induced at the level of cohomology groups by the pull-back under the inclusion cannot be surjective 
since the first de Rham cohomology group of the plane is trivial, 
while this is not the case for the pointed plane (which is diffeomorphic to a cylinder). 
Roughly, this means that there are field configurations on the pointed plane 
carrying non-trivial topological information which have no counterpart on the full plane. 
Dually, on the smaller spacetime one has observables sensitive to some field configurations reminiscent of 
specific topological aspects. These observables, after being pushed forward to the larger spacetime, 
cannot measure anything. This is simply because there cannot be any field configuration 
carrying the kind of topological information to which the considered observables are sensitive. 
Therefore observables of this type become trivial under suitable spacetime embeddings, 
thus leading to violations of locality. 

One of the main guiding principles of the investigations presented here stems from the idea that 
observables should be able to extract any possible information 
about dynamically allowed field configurations (up to gauge equivalence). 
To achieve this result, already at the classical level, 
we want to specify observables via functionals on field configurations in such a way that 
these are sufficiently many to distinguish all different (gauge classes of) field configurations. 
If this result is achieved, observables can really capture the full information 
encoded in the dynamics, including topological aspects, if any. 

So far, only the pairing between field configurations and observables has been taken into account. 
However, from the point of view of classical field theory, 
a suitable presymplectic structure is supposed to be defined on the space of observables. 
As usual, the correct presymplectic structure can be read off from the Lagrangian. 
It turns out that, depending on the spacetime topology, the presymplectic form has non-trivial degeneracies, 
which provide potential sources for the failure of locality. 
Consider once again the situation of a spacetime embedded into a larger one. 
Assuming that an observable on the embedded spacetime lies in the null space of the presymplectic structure 
for the target spacetime implies that this observable is degenerate for the presymplectic structure of the source too. 
However, the converse might fail since on the larger spacetime there is in general room for more observables, 
which might reduce the degeneracy of the presymplectic structure. 
Therefore, it might happen that observables, which are degenerate 
with respect to the presymplectic form on a spacetime, are mapped 
along a push-forward outside the null space of the presymplectic structure of the target spacetime. 
This way the dynamics eventually prevents us from recovering locality by quotients (in a suitable sense). 
In fact, one might consider spacetimes which are embedded in two different targets. 
On the source one has a degenerate observable which becomes trivial when mapped to the first target spacetime. 
However, the same observable, when mapped to the other target spacetime, is no longer degenerate 
with respect to the relevant presymplectic structure.
Therefore, to recover injectivity along the first embedding, one should take a quotient on the source 
to \quotes{remove} the observable which becomes trivial. Yet, the second embedding is such that 
this quotient is not compatible with the presymplectic structure on its target, 
hence such a quotient cannot be performed coherently on all spacetimes. 
This is the extent to which gauge theories violate the locality axiom of general local covariance. 
Dynamics and gauge symmetry together not only produce the lack of locality, 
but they even prevent its recovery by means of quotients (in a coherent sense). 

\subsubsection*{Results achieved}

In this thesis the features of Abelian gauge theories presented above are analyzed in detail for two different models. 
First, Maxwell $k$-forms are considered, which describe analogues of the vector potential 
(corresponding to degree $k=1$) in higher degree $k$. 
In the first place, we obtain a suitable space of observables for this model
and a non-degenerate pairing with gauge equivalence classes of dynamically allowed field configurations. 
On the space of observables, we introduce the presymplectic structure 
induced by the Lagrangian density of the model and we study its degeneracies exhibiting examples. 
The assignment of the corresponding space of observables to each globally hyperbolic spacetime 
gives rise to a covariant classical field theory, yet locality fails. 
In fact, we exploit the degeneracies of the presymplectic form 
to determine explicit violations of the locality axiom of general local covariance 
and we prove a no-go theorem for the recovery of locality via quotients. 
We conclude our study of Maxwell $k$-forms proving that at least isotony in the sense 
of Haag and Kastler\ \cite{HK64} can be recovered by a suitable quotient after fixing a target spacetime. 
All these facts find counterparts after the quantization procedure, 
which is performed via Weyl canonical commutation relations for presymplectic vector spaces 
producing a covariant quantum field theory which violates locality. 

Secondly, we describe electromagnetism as a pure Yang-Mills field theory with $U(1)$ as its structure group. 
The degrees of freedom of this model are represented by principal bundle connections 
and its gauge symmetry arises from the geometry by means of principal bundle automorphisms 
covering the identity on the base manifold of the principal bundle. 
We take into account the affine structure of the space of connections 
as well as the topologically non-trivial gauge transformations (also known as {\em finite} gauge transformations). 
In the first place, we introduce observables by means of gauge invariant affine functionals. 
It turns out that the vector space of observables obtained along these lines fails in distinguishing 
certain gauge classes of on-shell field configurations: Because of the severe constraints imposed by gauge invariance,  
affine observables are not sensitive to flat connections. Both mathematically and physically, this seems unsatisfactory. 
On the one hand, we have solutions to the field equations which are not detectable via the functionals we consider; 
on the other hand, flat connections reproduce the Aharonov-Bohm effect, 
therefore affine observables are not able to detect a well-known physical phenomenon. 
Analyzing this model more closely, one realizes that again the presymplectic structure has non-trivial degeneracies 
depending on the spacetime topology. 
This gives us the chance to look for counterexamples to the locality axiom, which are explicitly shown. 
Therefore, via the assignment of the corresponding presymplectic space of affine observables 
to each principal $U(1)$-bundle over a globally hyperbolic spacetime, 
we obtain a covariant classical field theory which violates locality. 
However, the fact that some observables are missing allows us to recover locality by means of a suitable quotient, 
thus leading to a locally covariant classical field theory. 
The quotient which recovers locality in this context can be interpreted by saying that 
connections describing pure electromagnetism should carry no electric flux. 
In the end, quantization is performed adopting canonical commutation relations \backtick{a} la Weyl 
associated to presymplectic vector spaces. 
Doing so, the results found for the classical field theory are promoted to the quantum case. 

To cure the impossibility to detect certain gauge equivalence classes of connections, 
we analyze the $U(1)$ Yang-Mills model once again, introducing observables in a different manner. 
Motivated by the analogy with Wilson loops, 
we replace the affine functionals described above by their complex exponentials 
(we adopt the term {\em affine characters} for functionals of this type). 
This choice weakens the constraints imposed by gauge invariance. 
The result is a richer space of observables which succeeds in distinguishing on-shell connections up to gauge, 
including the flat ones, which describe the Aharonov-Bohm effect. 
The space of observables defined via affine characters is only an Abelian group, 
which can be endowed with a presymplectic structure very similar to the one in the previous case. 
As before, degeneracies are present depending on the spacetime topology, together with counterexamples to locality. 
This shows that, assigning to each principal $U(1)$-bundle over a globally hyperbolic spacetime 
the corresponding presymplectic vector space gives rise to a covariant classical field theory which violates locality. 
Contrary to the case where affine functionals are considered, 
we now have a space of observables (specified via affine characters) 
rich enough to separate gauge equivalence classes of on-shell field configurations. 
This fact reduces the degeneracies of the presymplectic structure in comparison with the previous situation, 
eventually leading to the impossibility to recover locality by quotients. 
In analogy with the case of Maxwell $k$-forms, one can still fix a target spacetime 
(with a principal $U(1)$-bundle on top) and, performing a suitable quotient, one can recover Haag-Kastler isotony. 
In conclusion, a covariant quantum field theory is obtained after quantization via Weyl relations 
for presymplectic Abelian groups. All the features of the covariant classical field theory are shown 
to have counterparts at the quantum level. In particular, the covariant quantum field theory 
violates the locality axiom of general local covariance. 

\subsubsection*{Outline}

In the following, we summarize the topics investigated in this thesis. 
In\ Chapter\ \ref{chPreliminaries} we set the bases for the subsequent developments. 
In particular, Section\ \ref{secGlobHyp} introduces globally hyperbolic spacetimes, 
which provide the background where the field dynamics takes place. 
We proceed with\ Section\ \ref{secForms}, where our notation for differential forms is established, 
de Rham cohomology with unrestricted and compact support is briefly recalled, together with its Poincar\'e duality, 
and analogues of de Rham cohomology with causally restricted support systems are developed, 
together with the corresponding analogue of Poincar\'e duality. 
To conclude this section, we briefly recall the dynamical properties of the Hodge-d'Alembert operator 
for differential forms over globally hyperbolic spacetimes. 
Section\ \ref{secAffine} is devoted to prepare some material to deal with affine structures, 
which naturally appear when dealing with connections on principal bundles. 
In particular, we present a notion of affine differential operator and we study its dual. 
Principal bundles, together with the corresponding notions of connections and of gauge transformations, 
are presented in\ Section\ \ref{secPrBun}. To conclude, 
Section\ \ref{secQuantization} deals with the quantization of presymplectic Abelian groups 
adopting Weyl canonical commutation relations. 

Maxwell $k$-forms are introduced and analyzed in\ Chapter\ \ref{chMaxwell}. As a starting point, 
Section\ \ref{secGaugeDynForms} deals with their dynamics and gauge symmetry. 
In\ Section\ \ref{secClassicalFTForms}, we develop the covariant classical field theory of Maxwell $k$-forms 
on globally hyperbolic spacetimes. All axioms of general local covariance are shown to hold, except locality. 
Explicit counterexamples to the locality axioms are exhibited, 
the failure of locality is attributed to the non-injectivity of the push-forward 
on de Rham cohomology with compact support in degree $k+1$ for certain spacetime embeddings, 
a no-go theorem for the recovery of locality by quotients (in a suitable sense) is proved 
and the recovery of isotony \backtick{a} la Haag-Kastler by means of a quotient is presented. 
To conclude Chapter\ \ref{chMaxwell}, in\ Section\ \ref{secQFTForms} we perform the quantization 
of the covariant classical field theory developed in\ Section\ \ref{secClassicalFTForms}. 
We obtain a covariant quantum field theory which violates locality to the extent that 
no quotient can be performed to recover this property. 
Yet, a quantum field theory in the sense of Haag and Kastler can be obtained 
performing quantization after the recovery of isotony at the classical level. 

In\ Chapter\ \ref{chYangMills} we investigate the Yang-Mills model with structure group $U(1)$ 
over globally hyperbolic spacetimes. Section\ \ref{secGaugeDynamicsYM} specifies 
the dynamics and gauge symmetry for Yang-Mills connections on principal $U(1)$-bundles. 
In particular, we exhibit the naturality of the field equations and we prove the existence of solutions. 
Furthermore, we analyze in detail the action of gauge transformations on connections 
using cohomological techniques and paying the due attention to topologically non-trivial gauge transformations. 
We proceed with\ Section\ \ref{secAffObsYM} introducing observables by means of gauge invariant affine functionals,  
which, however, fail in detecting flat connections (corresponding to the Aharonov-Bohm effect). 
Despite this shortcoming, we endow the space of affine observables with the presymplectic structure 
induced from the Yang-Mills Lagrangian and we prove that 
the assignment of the presymplectic vector space of affine observables to each principal $U(1)$-bundle 
over a globally hyperbolic spacetime gives rise to a covariant classical field theory, 
namely all axioms of general local covariance hold, up to locality. As a matter of fact, 
violations of locality appear explicitly in relation to observables measuring the electric flux (Gauss' law). 
However, a quotient by these observables enables us to recover locality, 
thus leading to a locally covariant classical field theory. 
This quotient is interpreted as the restriction to those connections which carry no electric flux. 
The present section is completed with quantization by means of Weyl relations, 
producing a covariant quantum field theory violating locality when the full theory is considered and a locally 
covariant quantum field theory in case the electric flux observables are removed via the quotient mentioned above. 
To overcome the shortcomings of affine functionals, namely the impossibility to detect flat connections, 
in\ Section\ \ref{secYangMillsChar} we consider their exponentiated version (affine characters). 
The constraints imposed by gauge invariance are weakened by the complex exponential 
in such a way that observables defined via gauge invariant affine characters succeed 
in distinguishing connections up to gauge (including the flat ones, which reproduce the Aharonov-Bohm effect). 
In fact, we can interpret Wilson loops for $U(1)$-connections as the distributional counterparts 
of our more regular affine characters. 
Notice that the space of observables in this approach is only an Abelian group, 
which is endowed with a suitable presymplectic structure in analogy with the previous section. 
Once more, the result is a covariant classical field theory which violates locality 
because of the observables measuring the electric flux. 
However, in this approach we prove a no-go theorem for the recovery of locality by means of a suitable quotient. 
Comparing this result with the previous case in which affine functionals are considered in place of affine characters, 
one can interpret the no-go theorem as follows: The availability of observables detecting the Aharonov-Bohm effect 
prevents us from performing the quotient by electric flux observables, which are the sources of the lack of locality. 
Fixing a target principal $U(1)$-bundle over a globally hyperbolic spacetime, 
we can perform a quotient which recovers isotony in the sense of Haag and Kastler. 
To complete our analysis, we quantize the covariant classical field theory 
using Weyl relations for presymplectic Abelian groups. 
The result is a covariant quantum field theory violating the locality axiom. 
Yet, quantization of the model after the recovery of isotony 
produces a quantum field theory in the Haag-Kastler sense.

\chapter{Preliminaries}\label{chPreliminaries}
The aim of this chapter is to collect the most important mathematical tools upon which the whole thesis relies. 
Most of the material will be presented with very few details and often avoiding proofs. 
However, we will always provide the relevant references to the existing literature. 

The following definition specifies which category of manifolds we will be interested in throughout this thesis. 
For a detailed discussion on this topic we refer the reader to the literature, for example\ \cite[Section 1.1]{Jos11}.

\begin{definition}\label{defManifold}\index{manifold}\index{map}
With the term {\em manifold} we refer to an orientable, boundaryless, smooth, 
second countable, Hausdorff manifold of dimension $m\geq2$. 
Furthermore, whenever a map between manifolds is considered, it is implicitly assumed to be smooth. 
\end{definition}

\index{vector bundle}We will often consider vector bundles over a given manifold $M$, 
{\em e.g.}\ its tangent bundle $TM$. A vector bundle $(V,\pi,M)$ will be often denoted only by its total space $V$. 

\index{section}Similarly, we will usually denote the space of its sections $\sect(M,V)$ 
omitting the base manifold, namely we will write $\sect(V)$. 
For sections with compact support, we will adopt the standard convention 
which consists in adding a subscript $\mathrm{c}$, 
{\em i.e.}\ we will write $\sc(V)$ for sections with compact support of the vector bundle $V$. 
Again, the reader is invited to refer to the literature 
for the definition of vector bundles as well as of its sections, {\em e.g.}\ \cite[Section 2.1]{Jos11}.

\index{linear differential operators}\index{differential operator!linear}Let $V$ and $W$ be vector bundles 
over a manifold $M$. We will often consider {\em linear differential operators} $P:\sect(V)\to\sect(W)$ 
between the corresponding spaces of sections, defined according to\ \cite[Appendix\ A.4]{BGP07}.

\section{Globally hyperbolic spacetimes}\label{secGlobHyp}
In this section we introduce briefly some concepts of Lorentzian geometry. 
We are particularly interested in recalling the fundamental properties of globally hyperbolic spacetimes. 
Our references are\ \cite{BEE96, BGP07, ONe83, Wal12}. 

\begin{definition}\label{defLorManifold}\index{manifold!Lorentzian}
A Lorentzian manifold $(M,g)$ is a manifold $M$ endowed with a Lorentzian metric $g$, 
{\em i.e.}\ a fiberwise non-degenerate symmetric $(2,0)$-tensor field with signature of type $-+\cdots+$. 
\end{definition}

\index{timelike}\index{lightlike}\index{spacelike}Lorentzian manifolds provide the appropriate data 
to define a causal structure on a manifold. 
This is achieved specifying three classes of tangent vectors $v\in T_pM$ at each point $p\in M$: 
\begin{enumerate}
\item $v$ is {\em timelike} if $g(v,v)<0$; 
\item $v$ is {\em lightlike} if $g(v,v)=0$; 
\item $v$ is {\em spacelike} if $g(v,v)>0$. 
\end{enumerate}
\index{causal}{\em Causal} tangent vectors are either timelike or lightlike. 
This classification can be naturally extended to curves $\gamma:I\to M$ on $M$. 

\begin{definition}\index{curve}
Let $(M,g)$ be a Lorentzian manifold and consider a smooth curve $\gamma:I\to M$ on $M$, 
$I\subseteq\bbR$ being an open interval. 
$\gamma$ is said to be timelike, lightlike, causal or spacelike 
if its tangent vector field $\dot{\gamma}:I\to TM$ is respectively timelike, lightlike, causal or spacelike 
everywhere along the curve $\gamma$. 
\end{definition}

The structure presented so far can be enriched specifying a time-orientation.

\begin{definition}\label{defSpacetime}\index{time-orientation}\index{manifold!Lorentzian!time-orientable}
A Lorentzian manifold $(M,g)$ is {\em time-orientable} 
if there exists a vector field $\mft\in\sect(TM)$ which is everywhere timelike. 
A {\em time-orientation} for $(M,g)$ is specified by the choice of a vector field $\mft$ as above. 

\index{spacetime}A {\em spacetime} $(M,g,\mfo,\mft)$ consists of a time-orientable Lorentzian manifold $(M,g)$, 
the choice of an orientation $\mfo$ for $M$\footnote{Recall that only orientable manifolds are considered, 
see Definition\ \ref{defManifold}.}
and the assignment of a time-orientation $\mft$. 
\end{definition}

Once a time-orientation $\mft$ has been chosen, 
one can distinguish non-zero causal tangent vectors $0\neq v\in T_pM$ in two classes, 
depending on whether their orientation agrees with or is opposite to that of $\mft$. 
This distinction can be easily extended to causal curves with a non-vanishing tangent vector field. 

\begin{definition}\index{future-directed}\index{past-directed}
Let $(M,g,\mfo,\mft)$ be a spacetime 
and consider a non-vanishing causal tangent vector $0\neq v\in T_pM$ at the point $p\in M$. 
We say that $v$ is 
\begin{enumerate}
\item {\em future directed} if $g(\mft,v)<0$; 
\item {\em past directed} if $g(\mft,v)>0$. 
\end{enumerate}

Let $\gamma:I\to M$ be a causal curve with non-vanishing tangent vector field, 
{\em i.e.}\ $\dot{\gamma}:I\to TM$ never vanishes along the curve. 
We say that $\gamma$ is future/past directed 
if its tangent vector $\dot{\gamma}$ is such everywhere along $\gamma$.
\end{definition}

A spacetime $(M,g,\mfo,\mft)$ provides a sufficiently rich structure 
in order to introduce the {\em causal future/past} $J_M^\pm(S)$ 
as well as the {\em chronological future/past} $I_M^\pm(S)$ of a spacetime region $S\subseteq M$. 
Furthermore, looking at the causal future and past of each point, 
one can determine whether a subset $S$ of a given spacetime $(M,g,\mfo,\mft)$ is compatible 
with the causal structure of the spacetime itself. 

\begin{definition}\index{causal future}\index{causal past}\index{chronological future}\index{chronological past}
\index{causally compatible}Let $(M,g,\mfo,\mft)$ be a spacetime and consider a subset $S\subseteq M$. 
We define the {\em causal future/past} $J_M^\pm(S)$ of $S$ in $M$ as the set of those points in $M$ 
that can be reached by a future/past directed causal curve emanating from a point in $S$. 
Similarly, one defines the {\em chronological future/past} $I_M^\pm(S)$ of $S$ in $M$ as the set of points in $M$ 
that can be reached by a future/past directed timelike curve emanating from a point in $S$. 
A subset $S$ of a spacetime $(M,g,\mfo,\mft)$ is {\em causally compatible} 
if $J_S^\pm(x)=J_M^\pm(x)\cap S$ for each $x\in S$. 
\end{definition}

The causal future and the causal past of a spacetime region $S\subseteq M$ are often considered together. 
For this purpose we introduce $J_M(S)=J_M^+(S)\cup J_M^-(S)$. 

In the development of the thesis, we will often encounter wave-type equations. 
This family of equations admits a globally well-posed initial value problem on a distinguished class of spacetimes, 
which are the main topic of this section, namely globally hyperbolic spacetimes. 
In order to introduce this special class of spacetimes, 
we need to define first a special class of $(m-1)$-dimensional submanifolds, known as Cauchy hypersurfaces, 
$m$ being the dimension of $M$ as a manifold. These hypersurfaces are exactly those 
where to specify the initial data of a Cauchy problem. 

\begin{definition}\index{Cauchy hypersurface}\index{spacetime!globally hyperbolic}
A {\em Cauchy hypersurface }$\Sigma$ of a spacetime $(M,g,\mfo,\mft)$ is a subset of $M$ 
satisfying the following property: $\Sigma$ meets each inextensible timelike curve on $M$ exactly once. 
A spacetime $(M,g,\mfo,\mft)$ is {\em globally hyperbolic} if it admits a Cauchy hypersurface. 
\end{definition}

Without any further assumption, one can show that 
any Cauchy hypersurface of a given spacetime $(M,g,\mfo,\mft)$ 
is an $(m-1)$-dimensional submanifold of $M$ of class $\mathrm{C}^0$, 
see \cite[Lemma\ 29, Chapter\ 14]{ONe83}. 

One of the most prominent results on global hyperbolicity is the theorem stated below, 
which is due to Bernal and S\'anchez\ \cite{BS05,BS06}. 
This result allows for a very detailed characterization of the structure of any globally hyperbolic spacetime. 

\begin{theorem}\label{thmGlobHyp}
Let $(M,g,\mfo,\mft)$ be a spacetime. Then the statements presented below are equivalent:
\begin{enumerate}
\item $(M,g,\mfo,\mft)$ is globally hyperbolic;

\item There are no closed causal curves in $M$ 
and $J_M^+(p)\cap J_M^-(q)$ is a compact set for all $p,q\in M$;

\item $(M,g,\mfo,\mft)$ is isometric to $\bbR\times\Sigma$ 
endowed with the metric $-\beta\,\dd t\otimes\dd t+h_t$, 
$t:\bbR\times\Sigma\to\bbR$ is the projection on the first factor, $\beta\in\c(\bbR\times\Sigma)$ is strictly positive, 
$\bbR\ni t\mapsto h_t$ provides a smooth $1$-parameter family of Riemannian metrics on $\Sigma$
and $\{t\}\times\Sigma$ is a spacelike smooth Cauchy hypersurface in $(M,g,\mfo,\mft)$ for each $t\in\bbR$ 
(under the identification of $M$ with $\bbR\times\Sigma$). 
\end{enumerate}

Furthermore, any spacelike smooth Cauchy hypersurface $\Sigma$ for $(M,g,\mfo,\mft)$ 
induces a foliation for $(M,g,\mfo,\mft)$ of the type described in the third statement above. 
\end{theorem}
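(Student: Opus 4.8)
The plan is to cite the Bernal--Sánchez theorem in its strong form, which already provides more than the mere existence statement of item (3): given a spacelike smooth Cauchy hypersurface $\Sigma$, the isometry to $\bbR\times\Sigma$ can be chosen so that $\{0\}\times\Sigma$ corresponds to $\Sigma$ itself. The key point is therefore not to re-prove the splitting, but to upgrade the equivalence $(1)\Leftrightarrow(3)$ to the ``furthermore'' clause, which is the content of\ \cite{BS05,BS06} (see in particular the refinement in\ \cite{BS06} where one prescribes the initial slice).

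Concretely, I would proceed as follows. First, observe that by Theorem\ \ref{thmGlobHyp}, a spacetime $(M,g,\mfo,\mft)$ admitting a spacelike smooth Cauchy hypersurface $\Sigma$ is globally hyperbolic, so item (3) applies and furnishes \emph{some} isometry $M\cong\bbR\times\Sigma'$ for a possibly different hypersurface $\Sigma'$. Second, invoke the stronger version of the Bernal--Sánchez result: any spacelike smooth spacelike Cauchy hypersurface of a globally hyperbolic spacetime arises as a level set of a smooth Cauchy time function, and one can normalise this time function so that the chosen $\Sigma$ is the zero level set; then the gradient flow of this time function (suitably rescaled to unit parameter speed in the time direction) yields a diffeomorphism $\bbR\times\Sigma\to M$ under which $g$ pulls back to the form $-\beta\,\dd t\otimes\dd t+h_t$, with $\Sigma$ identified with $\{0\}\times\Sigma$ and each $\{t\}\times\Sigma$ a spacelike smooth Cauchy hypersurface. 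This is exactly the asserted foliation.

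The main obstacle, and the reason this is stated as a separate clause rather than being folded into item (3), is the prescription of the initial slice: the original splitting theorem produces a foliation but does not a priori guarantee that a \emph{given} Cauchy hypersurface is one of its leaves. Overcoming this requires the refined construction of a Cauchy temporal function with a prescribed level set, which is the technical heart of\ \cite{BS06}; once that function is in hand, the construction of the foliation is the same as for item (3). Since we are permitted to quote results stated earlier and this theorem is attributed to Bernal and Sánchez, the honest ``proof'' is to reduce the claim to the precise statement of their work and note that the rescaled gradient flow of the prescribed Cauchy temporal function does the job, with the verification that the pulled-back metric has the stated form being a routine computation identical to the one underlying item (3).
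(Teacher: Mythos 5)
Your proposal is correct and matches the paper's treatment: the thesis gives no proof of this theorem at all, simply attributing it to Bernal and S\'anchez\ \cite{BS05,BS06} (with the remark that\ \cite{BS07} weakens strong causality to causality in item (2)). You rightly identify that the only non-trivial point beyond quoting the splitting theorem is the \emph{furthermore} clause, which is exactly the refinement of\ \cite{BS06} allowing a prescribed spacelike smooth Cauchy hypersurface to appear as a leaf of the foliation.
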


Notice that the second statement appearing in the theorem above is very often used 
to define globally hyperbolic spacetimes. This is for example the approach of\ \cite{BGP07, ONe83}. 
However, in these references an apparently stronger hypothesis is required to hold, 
the so-called strong causality condition. It was proven in\ \cite{BS07} that the causality condition, 
namely non-existence of closed causal curves, is actually sufficient. 

Being motivated by the last theorem, which, given a Cauchy hypersurface for a spacetime $(M,g,\mfo,\mft)$, 
ensures the existence of a whole family of Cauchy hypersurfaces which are also smooth and spacelike, 
in the following we will always implicitly assume all Cauchy hypersurfaces to be smooth and spacelike, 
unless otherwise stated. 

Since this notational abuse almost never becomes a source of misunderstanding, 
from now on it will be much more practical to refer to a globally hyperbolic spacetime $(M,g,\mfo,\mft)$ 
mentioning explicitly only the underlying manifold $M$, the rest of the data being understood. 

For the moment we specified objects in the category of globally hyperbolic spacetimes. 
The next definition provides the suitable notion of a morphism between globally hyperbolic spacetimes. 

\begin{definition}\index{causal embedding}\index{Cauchy morphism} 
Let $M$ and $N$ be $m$-dimensional globally hyperbolic spacetimes. 
A {\em causal embedding} $f$ from $M$ to $N$ is specified by an isometric embedding $f:M\to N$ 
such that both the orientation and the time-orientation are preserved by $f$ 
and the image of $f$ is a causally compatible open subset of $N$. 
A causal embedding $f$ between two $m$-dimensional globally hyperbolic spacetimes $M$ and $N$ 
is a {\em Cauchy morphism} if there exists a Cauchy hypersurface $\Sigma$ of $N$ 
which is included in the image of $M$ under $f$, that is to say $\Sigma\subseteq f(M)$. 

\index{spacetime!globally hyperbolic!category}The {\em category of globally hyperbolic spacetimes} $\GHyp$
has $m$-dimensional globally hyperbolic spacetimes as objects 
and causal embeddings between two $m$-dimensional globally hyperbolic spacetimes as morphisms. 
\end{definition}

Before concluding the present section, following\ \cite{Bar14,San13}, 
we introduce some nomenclature for subregions of a globally hyperbolic spacetime $M$. 

\begin{definition}\label{defSupportSystems}\index{past compact}\index{future compact}\index{timelike compact}
\index{past spacelike compact}\index{future spacelike compact}\index{spacelike compact}
Let $M$ be a globally hyperbolic spacetime. We say that a subset $S\subseteq M$ is:
\begin{enumerate}

\item[$\mathrm{pc}$)] {\em past compact} if $S\cap J_M^-(K)$ is compact for each $K\subseteq M$ compact; 

\item[$\mathrm{fc}$)] {\em future compact} if $S\cap J_M^+(K)$ is compact for each $K\subseteq M$ compact; 

\item[$\mathrm{tc}$)] {\em timelike compact} if it is both past and future compact; 

\item[$\mathrm{sc}$)] {\em spacelike compact} if it is closed 
and there exists $K\subseteq M$ compact such that $S\subseteq J_M(K)$; 

\item[$\mathrm{psc}$)] {\em past spacelike compact} if it is closed 
and there exists $K\subseteq M$ compact such that $S\subseteq J_M^+(K)$; 

\item[$\mathrm{fsc}$)] {\em future spacelike compact} if it is closed 
and there exists $K\subseteq M$ compact such that $S\subseteq J_M^-(K)$. 

\end{enumerate}
\end{definition}

Mimicking the standard convention which consists in denoting sections with compact support with $\sc$, 
we will use the subscripts $\mathrm{tc}$ and $\mathrm{sc}$ 
to specify the support properties of certain sections of a vector bundle over a globally hyperbolic spacetime. 
For example, given a vector bundle $V$ over a globally hyperbolic spacetime $M$, 
we will denote the space of its sections with timelike compact support with the symbol $\stc(V)$.

\section{Differential forms}\label{secForms}
In this section we introduce briefly differential forms. 
The aim is to recall the key properties of the usual differential complexes for forms, 
which eventually lead to the standard de Rham cohomologies with unrestricted and compact supports. 
This presentation is mainly devoted to the development of a slightly modified version of de Rham cohomology 
for globally hyperbolic spacetimes defined out of the differential complexes which only involve forms 
supported on spacelike compact or timelike compact regions. 
As it will be clear from Chapter\ \ref{chMaxwell}, these cohomologies are 
very much related with the dynamics of $k$-form Maxwell fields on a globally hyperbolic spacetime. 
In the last part of this section, we will recall some properties of the Hodge-d'Alembert differential operator 
over globally hyperbolic spacetimes, 
in particular the existence and uniqueness of retarded and advanced Green operators. 
Our references for differential forms and de Rham cohomology are the classic books\ \cite{dR84, BT82}, 
while we refer to\ \cite{BGP07, Wal12} for the properties of the Hodge-d'Alembert differential operator 
(more generally, of any normally hyperbolic differential operator). 

\sk

\index{differential form}\index{form}Following the standard literature, 
we introduce {\em differential forms} of degree $k$ on an $m$-dimensional manifold $M$ 
as sections of the $k$-th exterior power $\bw^k(T^\ast M)$ of the cotangent bundle $T^\ast M$. 
This means that a $k$-form is nothing but a skew-symmetric $k$-covariant tensor field on $M$. 
The space of $k$-forms is denoted by $\f^k(M)=\sect\big(\bw^k(T^\ast M)\big)$. 
\index{wedge product}Being defined out of the exterior powers of a vector bundle, 
one can naturally endow the graded vector space $\f^\ast(M)=\bigoplus_k\f^k(M)$ 
with the structure of a graded algebra by defining the so-called {\em wedge product} 
$\wedge:\f^k(M)\times\f^{k^\prime}(M)\to\f^{k+k^\prime}(M)$. 
In the same spirit, one might also consider {\em differential forms with compact support} on $M$, 
which again form a graded algebra with respect to $\wedge$, denoted by $\fc^\ast(M)$. 

\index{integration}\index{integral}As stated at the beginning of this chapter, 
{\em cfr.}\ Definition\ \ref{defManifold}, 
we always consider $m$-dimensional oriented manifolds $M$. 
This provides a natural notion of {\em integral} $\int_M:\fc^m(M)\to\bbR$, 
which is defined as a real valued linear map on compactly supported forms of top degree $k=m$. 
\index{pairing between forms}The integration map, together with the wedge product, 
provides a pairing $\la\cdot,\cdot\ra$ between $k$-forms and $(m-k)$-forms. 
This is defined according to the formula given below: 
\begin{equation}\label{eqPairing1}
\la\alpha,\beta\ra=\tint_M\alpha\wedge\beta\,,
\end{equation}
where $\alpha\in\f^k(M)$ and $\beta\in\f^{m-k}(M)$ have supports with compact overlap. 
One can show that $\la\cdot,\cdot\ra$ induces a non-degenerate bilinear pairing between $\fc^k(M)$ and $\f^{m-k}(M)$. 
Similar conclusions hold true if the restriction to compactly supported forms is imposed 
on the second argument of $\la\cdot,\cdot\ra$ in place of the first one. 
In fact $\la\alpha,\beta\ra=(-1)^{k(m-k)}\la\beta,\alpha\ra$ 
for each $\alpha\in\f^k(M)$ and $\beta\in\f^{m-k}(M)$ with compact overlapping support, 
as it follows from the definition of the wedge product. 

\index{Hodge star}Furthermore, whenever $M$ is not only oriented, but also endowed with a metric 
(for example this is the case for a Lorentzian manifold according to Definition\ \ref{defLorManifold}), 
it is possible to perform another standard construction 
to define the {\em Hodge star} operator $\ast:\bw^k(T^\ast M)\to\bw^{m-k}(T^\ast M)$ 
as a linear map from skew-symmetric $k$-covariant tensors to skew-symmetric $(m-k)$-covariant tensors. 
For an explicit definition of the Hodge star, we refer the reader to the literature, 
{\em e.g.}\ \cite[Section\ 3.3]{Jos11}. 
\index{inner product}The Hodge star operator $\ast$ induces a non-degenerate inner product 
on $\bw^k(T^\ast M)$, which consists in contracting with respect to the metric 
the corresponding indices of two skew-symmetric covariant $k$-tensors at a point $p\in M$. 
One can express this inner product as 
\begin{equation}\label{eqContraction}
(\xi,\eta)\in\bw^k(T^\ast M)\times\bw^k(T^\ast M)\mapsto\ast^{-1}(\xi\wedge\ast\eta)\in\bbR\,.
\end{equation}
\index{volume form}An oriented manifold $M$ endowed with a metric admits a canonical top degree differential form 
induced by the Hodge star operator. Specifically $\vol=\ast1\in\f^m(M)$ 
defines the standard {\em volume form} over a pseudo-Riemannian oriented manifold. 
Via integration with the volume form $\vol$, the inner product defined in eq.\ \eqref{eqContraction} specifies 
a pairing $(\cdot,\cdot)$ between $k$-forms with compact overlapping support: 
\begin{equation}\label{eqPairing2}
(\alpha,\beta)=\tint_M\ast^{-1}(\alpha\wedge\ast\beta)\,\vol=\tint_M\alpha\wedge\ast\beta
=\la\alpha,\ast\beta\ra\,,
\end{equation}
where $\alpha,\beta\in\f^k(M)$ have supports with compact overlap. 
In particular, notice that $(\cdot,\cdot)$ is a non-degenerate bilinear pairing between $\fc^k(M)$ and $\f^k(M)$ 
or, similarly, between $\f^k(M)$ and $\fc^k(M)$. This is a consequence of \eqref{eqContraction} being symmetric. 

\index{differential}\index{exterior derivative}A central role in the theory of differential forms 
over a manifold is played by the {\em exterior derivative}, also known as {\em differential}. 
For differential forms of any degree on an $m$-dimensional manifold $M$ 
this is denoted by $\dd:\f^k(M)\to\f^{k+1}(M)$. 
An explicit definition of the exterior derivative can be found for example in\ \cite[Section 1.1]{BT82}.
The most relevant facts about the differential are the following: 
\begin{enumerate}
\item $\dd$ increases the degree of a form by one, 
\item the composition of $\dd:\f^k(M)\to\f^{k+1}(M)$ 
with $\dd:\f^{k+1}(M)\to\f^{k+2}(M)$ always gives the zero map,
\item $\dd$ does not enlarge the support of any form. 
\end{enumerate}
These facts motivate the introduction of the well-known de Rham complex 
as well as its counterpart with compact support, which constitute the main topic of the next subsection. 

\subsection{de Rham cohomology}\label{subCohom}
As anticipated, the aim of this subsection is to recall some well-known facts about de Rham cohomology, 
in particular how de Rham cohomology groups with or without compact support are defined 
and how they are related to each other via Poincar\'e duality. 

To start with, we need to recall the notions of a cochain complex and of a cochain map. 
Furthermore, we need to introduce cochain homotopies between cochain maps. 
Let us mention that one can define chain complexes along the same lines, 
the only difference being the fact that the degree is decreasing from left to right in the complex, 
instead of increasing. 
Notice that we will often refer to cochain complexes simply as complexes. 

\begin{definition}\label{defComplex}\index{complex!cochain}
Let $A$ be an Abelian group. Consider a family $\{M^k:\,k\in\bbZ\}$ of modules over $A$ 
connected by $A$-module homomorphisms $\{h^k\in\hom_A(M^k,M^{k+1}):\,k\in\bbZ\}$ 
fulfilling the property $h^{k+1}h^k=0$. These data together define a {\em cochain complex}, 
which is denoted by $(M^\ast,\,h^\ast)$, or simply $M^\ast$, and which is represented pictorially as follows: 
\begin{equation*}
\cdots\overset{h^{k-2}}{\lra}M^{k-1}\overset{h^{k-1}}{\lra}M^k\overset{h^k}{\lra}M^{k+1}
\overset{h^{k+1}}{\lra}\cdots\,.
\end{equation*}

Given two complexes $(M^\ast,\,h^\ast)$ and $(N^\ast,\,l^\ast)$ of modules over $A$, 
a cochain map $f$ from $M^\ast$ to $N^\ast$ is a collection of $A$-module homomorphisms 
$\{f^k\in\hom_A(M^k,N^k):\,k\in\bbZ\}$ such that $l^kf^k=f^{k+1}h^k$ for all $k\in\bbZ$. 
Pictorially this is represented by the commutative diagram below: 
\begin{equation*}
\xymatrix@C=3.5pc@R=3.5pc{
\cdots\ar[r]^{h^{k-2}} & M^{k-1}\ar[r]^{h^{k-1}}\ar[d]_{f^{k-1}} 
& M^k\ar[r]^{h^k}\ar[d]_{f^k} & M^{k+1}\ar[r]^{h^{k+1}}\ar[d]_{f^{k+1}} & \cdots\\
\cdots\ar[r]_{l^{k-2}} & N^{k-1}\ar[r]_{l^{k-1}} & N^k\ar[r]_{l^k} & N^{k+1}\ar[r]_{l^{k+1}} & \cdots
}\,.
\end{equation*}

\index{cochain homotopy}Let $(M^\ast,\,h^\ast)$ and $(N^\ast,\,l^\ast)$ be two complexes of $A$-modules 
and consider two cochain maps $f,g:M^\ast\to N^\ast$. A {\em cochain homotopy} $\eta$ between $f$ and $g$ 
is a collection $\{\eta^k\in\hom_A(M^k,N^{k-1}):\,k\in\bbZ\}$ of $A$-module homomorphisms 
such that $f^k-g^k=l^{k-1}\eta^k+\eta^{k+1}h^k$ for all $k\in\bbZ$. Pictorially we have 
\begin{equation*}
\xymatrix@C=3.5pc@R=3.5pc{
\cdots\ar[r]^{h^{k-2}} 
& M^{k-1}\ar[r]^{h^{k-1}}\ar@<-2pt>[d]_{f^{k-1}}\ar@<2pt>[d]^{g^{k-1}}\ar[ld]|{\eta^{k-1}} 
& M^k\ar[r]^{h^k}\ar@<-2pt>[d]_{f^k}\ar@<2pt>[d]^{g^k}\ar[ld]|{\eta^k} 
& M^{k+1}\ar[r]^{h^{k+1}}\ar@<-2pt>[d]_{f^{k+1}}\ar@<2pt>[d]^{g^{k+1}}\ar[ld]|{\eta^{k+1}} 
& \cdots\ar[ld]|{\eta^{k+2}}\\
\cdots\ar[r]_{l^{k-2}} & N^{k-1}\ar[r]_{l^{k-1}} & N^k\ar[r]_{l^k} & N^{k+1}\ar[r]_{l^{k+1}} & \cdots
}\,.
\end{equation*}
In this situation one says that the cochain maps $f$ and $g$ are {\em cochain homotopic}. 
\end{definition}

According to this definition, given a complex $(M^\ast, h^\ast)$, 
it is natural to consider the associated cohomology groups. 

\begin{definition}\index{cohomology}
Let $(M^\ast, h^\ast)$ be a complex of modules over an Abelian group $A$. 
The {\em cohomology groups} $H^\ast(M)$ associated to this complex\footnote{Despite 
the standard nomenclature, cohomology groups defined this way are $A$-modules.} 
are defined by 
\begin{align*}
H^k(M)=\frac{\ker(h^k:M^k\to M^{k+1})}{\im(h^{k-1}:M^{k-1}\to M^k)},\,k\in\bbZ\,.
\end{align*}
\end{definition}

As a straightforward consequence of the last definition and of Definition\ \ref{defComplex}, 
each cochain map induces an $A$-module homomorphism between the corresponding cohomology groups. 
Furthermore, given two homotopic cochain maps, 
it turns out that the homomorphisms induced in cohomology coincide. 

One realizes immediately that, given an $m$-dimensional manifold $M$, 
the differential $\dd$ for the graded algebra of differential forms $\f^\ast(M)$ 
fits into this scheme choosing $A=\bbR$, $M^k=\f^k(M)$ for $k\in\{0,\,\dots,\,m\}$, $M^k=0$ otherwise, 
$h^k=\dd:\f^k(M)\to\f^{k+1}(M)$ for $k\in\{0,\,\dots,\,m-1\}$ and $h^k=0$ otherwise. 
Similar conclusions can be drawn when only compactly supported forms are taken into account. 
This gives rise to the so-called de Rham complex and its analog with compact support. 

\begin{definition}\label{defdRComplex}\index{complex!de Rham}\index{complex!de Rham!compact support}
Let $M$ be an $m$-dimensional manifold and consider the graded algebra of differential forms $\f^\ast(M)$ 
over $M$. Furthermore, denote the differential with $\dd:\f^k(M)\to\f^{k+1}(M)$. 
The {\em de Rham complex} is the following cochain complex: 
\begin{equation}\label{eqddComplex}
0\lra\f^0(M)\overset{\dd}{\lra}\f^1(M)\overset{\dd}{\lra}\cdots\overset{\dd}{\lra}\f^m(M)\lra0\,.
\end{equation}
The {\em de Rham complex with compact support} is exactly the same complex 
where only forms with compact support are taken into account: 
\begin{equation}\label{eqddCompactComplex}
0\lra\fc^0(M)\overset{\dd}{\lra}\fc^1(M)\overset{\dd}{\lra}\cdots\overset{\dd}{\lra}\fc^m(M)\lra0\,.
\end{equation}
\end{definition}

Out of these complexes one can define de Rham cohomology groups (possibly, with compact support)
by taking the quotient of the kernel of an arrow in the de Rham complex 
(respectively in its counterpart with compact support) by the image of the previous arrow. 

\begin{definition}\index{cohomology!de Rham}\index{cohomology!de Rham!compact support}
Let $M$ be a manifold and consider both the de Rham complex over $M$, eq.\ \eqref{eqddComplex}, 
and its counterpart with compact support, eq.\ \eqref{eqddCompactComplex}. 
Denote the kernel of $\dd:\f^k(M)\to\f^{k+1}(M)$ with $\fdd^k(M)=\ker\big(\dd:\f^k(M)\to\f^{k+1}(M)\big)$, 
and the kernel of $\dd:\fc^k(M)\to\fc^{k+1}(M)$ with $\fcdd^k(M)=\ker\big(\dd:\fc^k(M)\to\fc^{k+1}(M)\big)$.  
The $k$-th {\em de Rham cohomology group} 
and the $k$-th {\em de Rham cohomology group with compact support} over $M$\footnote{Note that 
these cohomology groups are $\bbR$-modules, hence vector spaces.} 
are defined as the quotient spaces 
\begin{align}\label{eqddCohomology}
\hdd^k(M)=\frac{\fdd^k(M)}{\dd\f^{k-1}(M)}\,, && \hcdd^k(M)=\frac{\fcdd^k(M)}{\dd\fc^{k-1}(M)},\,
&& k\in\{0,\,\dots,\,m\}\,. 
\end{align}
\end{definition}

\index{form!closed}\index{form!exact}We say that a form $\omega\in\f^k(M)$ is {\em closed} if $\dd\omega=0$, 
while we refer to a form $\theta\in\f^k(M)$ of the type $\theta=\dd\zeta$ for some $\zeta\in\f^{k-1}(M)$ as being 
{\em exact}. The same nomenclature applies to forms with any kind of support. 

De Rham cohomology and its counterpart with compact support are strictly related to each other 
due to the relation provided by {\em Stokes' theorem} 
between the differential $\dd$ and the integration map $\int_M$ on an oriented manifold $M$. 

\begin{theorem}[Stokes' theorem]\label{thmStokes}\index{Stokes' theorem}
Let $M$ be an $m$-dimensional oriented manifold (without boundary) and $\omega\in\fc^{m-1}(M)$. 
Then $\int_M\dd\omega=0$. 
\end{theorem}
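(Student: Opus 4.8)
The plan is to reduce the statement to a local computation in $\bbR^m$ by means of a partition of unity, and then to invoke the one–dimensional fundamental theorem of calculus; note that neither global hyperbolicity nor the Lorentzian structure plays any role here, only orientability and the absence of a boundary. First I would fix an atlas of oriented charts $\{(U_\alpha,\varphi_\alpha)\}$ for $M$. Since $\supp\omega$ is compact, it is covered by finitely many of the $U_\alpha$, say $U_1,\dots,U_N$, and one can choose a smooth partition of unity $\{\chi_j\}_{j=1}^N$ subordinate to this finite cover with $\sum_j\chi_j\equiv1$ on a neighbourhood of $\supp\omega$. Writing $\omega=\sum_j\chi_j\omega$ and using that $\dd$ and $\int_M$ are both $\bbR$-linear, it suffices to prove $\int_M\dd(\chi_j\omega)=0$ for each $j$; since $\chi_j\omega\in\fc^{m-1}(M)$ is supported inside $U_j$, we may assume from the outset that $\omega$ is supported in a single oriented chart.

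Next I would transport the problem to $\bbR^m$. Using the orientation-preserving diffeomorphism $\varphi_j:U_j\to\varphi_j(U_j)\subseteq\bbR^m$, the form $(\varphi_j^{-1})^\ast\omega$ extends by zero to a compactly supported $(m-1)$-form on $\bbR^m$; because $\varphi_j$ preserves orientation and $\dd$ commutes with pull-back, $\int_M\dd\omega=\int_{\bbR^m}\dd\big((\varphi_j^{-1})^\ast\omega\big)$. Hence it remains to show that $\int_{\bbR^m}\dd\eta=0$ for every $\eta\in\fc^{m-1}(\bbR^m)$. Writing $\eta=\sum_{i=1}^m(-1)^{i-1}f_i\,\dd x^1\wedge\cdots\wedge\widehat{\dd x^i}\wedge\cdots\wedge\dd x^m$ with $f_i\in\cc(\bbR^m)$, a direct computation gives $\dd\eta=\big(\sum_{i=1}^m\partial_i f_i\big)\,\dd x^1\wedge\cdots\wedge\dd x^m$.

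Finally, by Fubini's theorem each summand satisfies $\int_{\bbR^m}\partial_i f_i\,\dd x=\int_{\bbR^{m-1}}\Big(\int_{\bbR}\partial_i f_i\,\dd x^i\Big)\,\dd x^1\cdots\widehat{\dd x^i}\cdots\dd x^m$, and the inner integral vanishes by the fundamental theorem of calculus, since $f_i$ has compact support in the $x^i$ variable and therefore vanishes at $x^i=\pm\infty$. Summing over $i$ yields $\int_{\bbR^m}\dd\eta=0$, and unwinding the reductions completes the proof. The only genuinely delicate points are bookkeeping: that the charts can be chosen orientation-compatible so that the pull-back identity for $\int_M$ holds with the correct sign, and that the partition-of-unity reduction is legitimate because each $\chi_j\omega$ still has compact support contained in a single chart; there is no analytic obstacle beyond the one-variable fundamental theorem of calculus.
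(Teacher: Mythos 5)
Your proof is correct and complete; it is the standard partition-of-unity reduction to the Euclidean case followed by Fubini and the fundamental theorem of calculus. The paper itself states this classical theorem without proof, deferring to the literature (de Rham, Bott--Tu), and the argument given there is exactly the one you reproduce, so there is nothing to add.
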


Later we will exploit this theorem to introduce a pairing between de Rham cohomology groups 
and their counterparts with compact support, which is known as Poincar\'e duality. 
\index{codifferential}Before getting to this point, 
we would like to present another kind of differential operator acting on forms, 
the codifferential, which can be defined for every pseudo-Riemannian oriented manifold 
and plays a major role in describing the dynamics of Maxwell $k$-forms, see\ Chapter\ \ref{chMaxwell}. 
Assuming that $M$ is an oriented manifold endowed with a metric, we can consider the Hodge star $\ast$. 
Exploiting this tool, we can introduce the {\em codifferential}:
\begin{equation*}
\de=(-1)^k\ast^{-1}\dd\,\ast:\f^k(M)\to\f^{k-1}(M)\,.
\end{equation*}
The role played by the codifferential is explained by Stokes' theorem. 
In fact, a simple calculation exploiting Theorem\ \ref{thmStokes} shows that $\de$ is the formal adjoint of $\dd$ 
with respect to the pairing $(\cdot,\cdot)$ defined in\ \eqref{eqPairing2}. 
Explicitly, this means $(\de\alpha,\beta)=(\alpha,\dd\beta)$ 
for each $\alpha\in\f^{k+1}(M)$ and $\beta\in\f^k(M)$ such that $\supp(\alpha)\cap\supp(\beta)$ is compact: 
\begin{align}
(\de\alpha,\beta) & =(-1)^{k+1}\int_M(\ast^{-1}\dd\ast\alpha)\wedge\ast\beta
=(-1)^{k+1}\int_M\beta\wedge\dd\ast\alpha\nonumber\\
& =-\int_M\big(\dd(\beta\wedge\ast\alpha)-\dd\beta\wedge\ast\alpha\big)
=(\alpha,\dd\beta)\,.\label{eqAdjoint}
\end{align}
Contrary to the differential $\dd$, the codifferential $\de$ lowers the degree by one. 
All other properties which are relevant for the definition of the complexes\ \eqref{eqddComplex}\ 
and \eqref{eqddCompactComplex} hold true for $\de$ as well, 
namely the codifferential is a support preserving map and $\de\circ\de=0$. 
These facts are straightforward consequences of the properties of $\dd$ and $\ast$. 
For this reason one can associate chain complexes to the codifferential. 
The only difference with respect to Definition\ \ref{defComplex} is that now the degree is decreasing. 
It turns out that the chain complexes for $\de$ presented below are isomorphic 
to those of Definition\ \ref{defdRComplex}, the isomorphism being provided by $\ast$: 
\begin{align}
0\lra\f^m(M)\overset{\de}{\lra}\f^{m-1}(M)\overset{\de}{\lra}\cdots
\overset{\de}{\lra}\f^0(M)\lra0\,,\label{eqdeComplex}\\
0\lra\fc^m(M)\overset{\de}{\lra}\fc^{m-1}(M)\overset{\de}{\lra}\cdots
\overset{\de}{\lra}\fc^0(M)\lra0\,.\label{eqdeCompactComplex}
\end{align}
Mimicking the notation introduced for $\dd$, we define $\fde^k(M)$ as the kernel of $\de:\f^k(M)\to\f^{k-1}(M)$ 
and $\fcde^k(M)$ as the kernel of $\de:\fc^k(M)\to\fc^{k-1}(M)$. 
\index{form!coclosed}\index{form!coexact}We say that a form $\omega\in\f^k(M)$ is {\em coclosed} 
if $\de\omega=0$, while a form $\theta\in\f^k(M)$ of the type $\theta=\de\zeta$ 
for some $\zeta\in\f^{k+1}(M)$ is {\em coexact}. 
The same nomenclature applies to any kind of support system. 

One gets cohomology groups and cohomology groups with compact support for $\de$ 
in the same spirit as the usual ones, {\em cfr.}\ \eqref{eqddCohomology}:
\begin{align}\label{eqdeCohomology}
\hde^k(M)=\frac{\fde^k(M)}{\de\f^{k+1}(M)}\,, && \hcde^k(M)=\frac{\fcde^k(M)}{\de\fc^{k+1}(M)}, 
&& k\in\{0,\,\dots,\,m\}\,.
\end{align}
It is straightforward to check that the Hodge star operator induces an isomorphism
between the cohomology groups for $\de$ and those defined for $\dd$, 
namely $\hdd^k(M)\simeq\hde^{m-k}(M)$ and $\hcdd^k(M)\simeq\hcde^{m-k}(M)$ via $\ast$. 

As anticipated, Theorem\ \ref{thmStokes} entails that, 
in the case of an oriented manifold, the pairing $\la\cdot,\cdot\ra$ descends to $\dd$-cohomology groups. 
Similarly, on any oriented pseudo-Riemannian manifold, $(\cdot,\cdot)$ induces 
a pairing between the relevant cohomology groups. We summarize these facts below: 
\begin{subequations}\label{eqPoincarePairing}
\begin{align}
\la\cdot,\cdot\ra & :\hcdd^k(M)\times\hdd^{m-k}(M)\to\bbR\,,\\
{}_\de(\cdot,\cdot) & :\hcde^k(M)\times\hdd^k(M)\to\bbR\,,\\
(\cdot,\cdot)_\de & :\hcdd^k(M)\times\hde^k(M)\to\bbR\,.
\end{align}
\end{subequations}
For the first pairing we used the notation for the original pairing between forms, 
while for the other ones we introduced a subscript $\de$ to specify 
which of the arguments is defined on $\de$-cohomology groups. 
Furthermore, notice that compact supports always appear in the first argument. 

We conclude the present section stating a version of Poincar\'e duality 
for de Rham cohomologies, see\ \cite[Section\ 1.5]{BT82}. 
This requires the notion of a good cover of a manifold.

\begin{definition}\index{good cover}\index{manifold!finite type}\index{finite type}
Let $M$ be an $m$-dimensional manifold. A {\em good cover} of $M$ is a cover of $M$ by open sets 
such that the intersection of the elements of any finite subset of the cover is either empty 
or diffeomorphic to $\bbR^m$. A manifold $M$ is {\em of finite type} if it admits a finite good cover.
\end{definition}

Taking into account manifolds of finite type, following \cite{BT82} 
it is possible to show that the pairings\ \eqref{eqPoincarePairing} are actually non-degenerate. 

\begin{theorem}\label{thmPoincareDuality}\index{Poincar\'e duality}
Let $M$ be an oriented manifold of finite type. 
Then the pairing $\la\cdot,\cdot\ra$ between $\hcdd^k(M)$ and $\hdd^{m-k}(M)$ is non-degenerate. 
If $M$ is also endowed with a metric, {\em i.e.}\ $M$ is a pseudo-Riemannian oriented manifold of finite type, 
both the pairings ${}_\de(\cdot,\cdot):\hcde^k(M)\times\hdd^k(M)\to\bbR$ 
and $(\cdot,\cdot)_\de:\hcdd^k(M)\times\hde^k(M)\to\bbR$ are non-degenerate. 
\end{theorem}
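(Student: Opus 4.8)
The plan is to reduce the whole statement to the first assertion — classical Poincar\'e duality for the pair $\big(\hcdd^k(M),\hdd^{m-k}(M)\big)$ — and then to obtain the two codifferential variants for free by transporting along the Hodge star isomorphisms $\hdd^k(M)\simeq\hde^{m-k}(M)$ and $\hcdd^k(M)\simeq\hcde^{m-k}(M)$ recorded above. For the first assertion I would follow \cite[Section 1.5]{BT82} and argue by induction on the least cardinality $N$ of a good cover of $M$. The base case $N=1$ is $M\cong\bbR^m$: here the Poincar\'e lemma, with and without compact support, gives $\hdd^\ast(\bbR^m)$ concentrated in degree $0$ and $\hcdd^\ast(\bbR^m)$ concentrated in degree $m$, and the only surviving pairing $\hcdd^m(\bbR^m)\times\hdd^0(\bbR^m)\to\bbR$ sends $([\omega],[c])$ to $c\int_M\omega$, which is non-degenerate because a bump $m$-form of total integral $1$ generates $\hcdd^m(\bbR^m)$. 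One also needs the statement for a finite disjoint union of manifolds for which duality is already known; this is immediate since $\hdd^\ast$ sends finite disjoint unions to finite products, $\hcdd^\ast$ sends them to finite direct sums, and the pairing decomposes accordingly.

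For the inductive step, write $M=U\cup V$ with $U$ the union of the first $N-1$ members of a good cover and $V$ the last one; then $U$, $V$ and $U\cap V$ each admit good covers with strictly fewer than $N$ members (each pairwise intersection with $V$ is empty or diffeomorphic to $\bbR^m$), so duality holds for them by the inductive hypothesis, and for $U\cap V$ also by the disjoint-union remark. I would then place the Mayer--Vietoris long exact sequence for $\hcdd^\ast$ next to the $\bbR$-linear dual of the Mayer--Vietoris long exact sequence for $\hdd^\ast$ read in complementary degree; the pairing $\langle\cdot,\cdot\rangle$ supplies the vertical maps, which are isomorphisms over $U$, $V$, $U\cap V$ by induction, and which make the resulting ladder commute up to signs, in particular on the two squares joining the connecting homomorphisms. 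The five lemma then forces the vertical map over $M$ to be an isomorphism, i.e. $\langle\cdot,\cdot\rangle:\hcdd^k(M)\times\hdd^{m-k}(M)\to\bbR$ to be non-degenerate, all spaces being finite-dimensional by finite type. This closes the induction.

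Finally I would deduce the codifferential pairings. From \eqref{eqPairing2} together with symmetry of the fibrewise inner product \eqref{eqContraction} one gets $(\alpha,\beta)=\pm\langle\ast\alpha,\beta\rangle$ for $\alpha\in\fcde^k(M)$, $\beta\in\fdd^k(M)$, and $\ast\alpha\in\fcdd^{m-k}(M)$ because $\de\alpha=0\iff\dd\ast\alpha=0$. Since $\ast$ intertwines the complexes \eqref{eqdeCompactComplex} and \eqref{eqddCompactComplex}, it induces an isomorphism $\hcde^k(M)\xrightarrow{\sim}\hcdd^{m-k}(M)$ under which ${}_\de(\cdot,\cdot):\hcde^k(M)\times\hdd^k(M)\to\bbR$ becomes, up to sign, the pairing $\langle\cdot,\cdot\rangle:\hcdd^{m-k}(M)\times\hdd^k(M)\to\bbR$, non-degenerate by the first part; hence ${}_\de(\cdot,\cdot)$ is non-degenerate. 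The argument for $(\cdot,\cdot)_\de:\hcdd^k(M)\times\hde^k(M)\to\bbR$ is identical, now transporting the second slot along $\ast:\hde^k(M)\xrightarrow{\sim}\hdd^{m-k}(M)$ to recover \eqref{eqPoincarePairing}a.

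The hard part will be the inductive step: turning the two Mayer--Vietoris sequences into a genuinely (anti)commuting ladder under the pairing, which amounts to matching the signs introduced by the partition of unity used in one connecting homomorphism against the extension-by-zero used in the other and checking these are compatible enough for the five lemma — a single anticommuting square is harmless. Everything after the first assertion, namely the disjoint-union bookkeeping and the Hodge-star transport for the codifferential pairings, is then routine.
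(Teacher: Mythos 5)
Your proposal is correct and follows essentially the same route as the paper: the paper proves the first assertion by deferring to the good-cover/Mayer--Vietoris induction of \cite[Section~1.5]{BT82}, which is exactly the argument you spell out, and it obtains the two codifferential pairings from the first one via the Hodge star isomorphisms $\hcde^k(M)\simeq\hcdd^{m-k}(M)$ and $\hde^k(M)\simeq\hdd^{m-k}(M)$, precisely as in your last paragraph. No gaps to report.
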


The second part of the last theorem is a direct consequence of the first part 
simply taking into account that the Hodge star $\ast$ induces isomorphisms 
$\hcde^k(M)\simeq\hcdd^{m-k}(M)$ and $\hde^k(M)\simeq\hdd^{m-k}(M)$. 

\begin{remark}\label{remPoincareDualityImproved}
Actually, a partial positive result, related to Theorem\ \ref{thmPoincareDuality}, 
holds true under slightly weaker hypotheses, namely without the requirement of existence of a finite good cover. 
Specifically, for any oriented manifold $M$ (possibly not of finite type), 
the map defined below turns out to be a vector space isomorphism, see\ \cite[Section\ V.4]{GHV72}: 
\begin{align*}
\hdd^{m-k}(M)\to\big(\hcdd^k(M)\big)^\ast\,, && [\beta]\mapsto\la\cdot,[\beta]\ra\,.
\end{align*}
However, when $M$ is not of finite type, it might be the case that 
the linear map defined below fails to be an isomorphism, see\ \cite[Remark 5.7]{BT82}: 
\begin{align*}
\hcdd^k(M)\to\big(\hdd^{m-k}(M)\big)^\ast\,, && [\alpha]\mapsto\la[\alpha],\cdot\ra\,.
\end{align*}
In summary, for any oriented manifold $M$ the first pairing\ in\ \eqref{eqPoincarePairing} turns out 
to be always non-degenerate in the second argument, 
moreover this is the case for the first argument too if $M$ is of finite type. 
Similar conclusions hold true for the pairings $_\de(\cdot,\cdot)$ and $(\cdot,\cdot)_\de$. 
\end{remark}

\subsection{Causally restricted de Rham cohomology}\label{subRestrictedCohom}
One might observe that a complex similar to the de Rham one\ \eqref{eqddComplex} 
can be defined for any support system. 
This fact simply relies on the support preserving property of $\dd$, which was already exploited 
in order to define the de Rham complex with compact support in\ \eqref{eqddCompactComplex}. 
When one is dealing with a globally hyperbolic spacetime $M$, other choices of support systems may be considered 
besides the compact case, for example those introduced in Definition\ \ref{defSupportSystems}. 
For the moment we will focus our attention only on the timelike compact and spacelike compact cases, 
which, besides their major role in discussing the dynamics of field equations, 
turn out to be relevant from the perspective of cohomology theory too. 
As we will see later, {\em cfr.}\ Remark\ \ref{remTrivCohom}, all other support systems 
presented in Definition\ \ref{defSupportSystems} are trivial from the point of view of de Rham cohomology. 
Let us mention that the content of the present subsection summarizes the results of\ \cite{Ben14}. 
An approach to this topic more in the spirit of homological algebra is also available in\ \cite{Kha14a}. 

\begin{definition}\index{complex!de Rham!timelike compact}\index{complex!de Rham!spacelike compact}
\index{cohomology!de Rham!timelike compact}\index{cohomology!de Rham!spacelike compact}
Let $M$ be an $m$-dimensional globally hyperbolic spacetime and consider the differential $\dd$ on $M$. 
The diagrams presented below define the {\em timelike compact and spacelike compact de Rham complexes}: 
\begin{align*}
0\lra\ftc^0(M)\overset{\dd}{\lra}\ftc^1(M)\overset{\dd}{\lra}\cdots\overset{\dd}{\lra}\ftc^m(M)\lra0\,,\\
0\lra\fsc^0(M)\overset{\dd}{\lra}\fsc^1(M)\overset{\dd}{\lra}\cdots\overset{\dd}{\lra}\fsc^m(M)\lra0\,.
\end{align*}
Furthermore, {\em timelike compact and spacelike compact de Rham cohomology groups} are defined according to 
\begin{align*}
\htcdd^k(M)=\frac{\ftcdd^k(M)}{\dd\ftc^{k-1}(M)}\,, && \hscdd^k(M)=\frac{\fscdd^k(M)}{\dd\fsc^{k-1}(M)},\,
&& k\in\{0,\,\dots,\,m\}\,, 
\end{align*}
where $\ftcdd^k(M)$ and $\fscdd^k(M)$ denote the kernels of $\dd:\ftc^k(M)\to\ftc^{k+1}(M)$ 
and respectively of $\dd:\fsc^k(M)\to\fsc^{k+1}(M)$. 
\end{definition}

Being a globally hyperbolic spacetime, $M$ comes endowed with a metric and an orientation. 
Therefore we can consider the codifferential $\de$ and, in full analogy with the last definition, 
we can introduce similar complexes involving $\de$ in place of $\dd$ 
as well as cohomology groups $\htcde^k(M)=\ftcde^k(M)/\de\ftc^{k+1}(M)$ 
and $\hscde^k(M)=\fscde^k(M)/\de\fsc^{k+1}(M)$ for $k\in\{0,\,\dots,\,m\}$. 
Notice that once again the Hodge star $\ast$ induces isomorphisms 
between the $\de$-complexes and the $\dd$-complexes as well as between the associated cohomology groups. 

Various techniques are available to compute the classical de Rham cohomology groups of a manifold. 
As for the timelike compact and spacelike compact cohomologies of a globally hyperbolic spacetime $M$, 
it turns out that they can be determined very conveniently from the classical ones. 
This fact relies on two isomorphisms which we are going to establish. 
Denoting with $\Sigma$ one of the Cauchy hypersurfaces of $M$, one has the following: 
\begin{enumerate}
\item The first isomorphism relates the $k$-th de Rham cohomology group with timelike compact support 
$\htcdd^k(M)$ on $M$ to the $(k-1)$-th de Rham cohomology group $\hdd^{k-1}(\Sigma)$ on $\Sigma$; 
\item The second one relates the $k$-th de Rham cohomology group with spacelike compact support $\hscdd^k(M)$ 
on $M$ to the $k$-th de Rham cohomology group with compact support $\hcdd^k(\Sigma)$ on $\Sigma$. 
\end{enumerate}
The remainder of this subsection will be devoted to the explicit construction of such isomorphisms. 
In the end we will see how these isomorphisms provide an analogue of Poincar\'e duality 
between spacelike compact and timelike compact cohomology groups. 

\begin{remark}
We are taking care of causally restricted de Rham cohomologies only for the differential $\dd$. 
As a matter of fact, it is possible to repeat all the arguments which follow for the codifferential $\delta$ as well, 
the Hodge star $\ast$ inducing natural isomorphisms between $\htcdd^\ast(M)$ and $\htcde^{m-\ast}(M)$ in the 
timelike compact case and between $\hscdd^\ast(M)$ and $\hscde^{m-\ast}(M)$ in the spacelike compact case. 
\end{remark}

\subsubsection{Timelike compact vs. unrestricted cohomology}
Let us start from the timelike compact de Rham cohomology group $\htcdd^k(M)$ 
of a globally hyperbolic spacetime $M$. 
We present here the strategy which we are going to adopt 
in order to exhibit an isomorphism with the de Rham cohomology group $\hdd^{k-1}(\Sigma)$ 
of a Cauchy hypersurface $\Sigma$ of $M$ in degree lowered by $1$: 
\begin{enumerate}
\item $M$ being globally hyperbolic, one can exploit Theorem\ \ref{thmGlobHyp} 
to exhibit an isometry from $M$ to a globally hyperbolic spacetime $M_\Sigma$ 
which is explicitly \quotes{split in time and space factors}. 
This means that the underlying manifold is $\bbR\times\Sigma$, $\Sigma$ being a Cauchy hypersurface for $M$, 
and that the metric splits in time and space components too as explained by Theorem\ \ref{thmGlobHyp}. 
Because of the isometry between $M$ and $M_\Sigma$, 
$\htcdd^k(M)$ turns out to be isomorphic to $\htcdd^k(M_\Sigma)$. 
This provides the first part of the isomorphism we are going to construct; 
\item At a second stage we want to exploit the fact that, as a manifold, $M_\Sigma$ looks like $\bbR\times\Sigma$. 
This fact will eventually enable us to construct the second part of the isomorphism we are interested in, 
namely an isomorphism $\htcdd^k(M_\Sigma)\simeq\hdd^{k-1}(\Sigma)$ 
which is provided by \quotes{integration along the time factor} of $M_\Sigma$. 
\end{enumerate}

\begin{remark}\label{remTCCohoTriv}
The isomorphism $\htcdd^k(M)\simeq\hdd^{k-1}(\Sigma)$ we would like to exhibit 
entails in particular that $\htcdd^0(M)$ is trivial. 
As a consistency check for our theorem, we prove this fact directly: 
An element in $\htcdd^0(M)$ is nothing but a constant function with timelike compact support. 
As a consequence, this function has to vanish everywhere showing that $\htcdd^0(M)=\{0\}$. 
\end{remark}

Fix a globally hyperbolic spacetime $M$ and consider a foliation $M_\Sigma$ 
provided by Theorem\ \ref{thmGlobHyp}. Since $M_\Sigma$ as a manifold looks like 
the Cartesian product $\bbR\times\Sigma$, $\Sigma$ being a Cauchy hypersurface for $M$, 
we can consider the projections on each factor: 
\begin{align}\label{eqProj}
t:M_\Sigma\to\bbR\,, && \pi:M_\Sigma\to\Sigma\,.
\end{align}
To define the maps which will give rise to the sought isomorphism in cohomology, 
we follow the approach of \cite[Section\ I.6, pp.\ 61--63]{BT82}. 
Let us stress that the context in the mentioned reference is slightly different. 
In fact, in place of forms with timelike compact support, the so-called vertical compact forms are considered there. 
As a matter of fact, each timelike compact region is indeed vertical compact in the sense of \cite{BT82}, 
but the converse is not generally true. 
For this reason, it is mandatory for us to take care that the relevant maps in the construction 
of the sought isomorphism in cohomology are well-behaved 
with respect to the stricter constraint considered here on the support of forms. 
For the sake of completeness, we reproduce below the construction of \cite{BT82}, 
adapting it to the present setting, namely to differential forms with timelike compact support. 

The starting point is the following observation: All $k$-forms with timelike compact support on $M_\Sigma$ 
are linear combinations of two types of forms:\footnote{Notice that for the remainder of this subsection 
we will often omit the wedge product in order to improve readability.} 
\begin{align}
\mbox{Type $1_\mathrm{tc}$:} && (\pi^\ast\phi)\,f\,, 
& \quad\phi\in\f^k(\Sigma),\,f\in\ctc(M_\Sigma)\,,\label{eqType1TC}\\
\mbox{Type $2_\mathrm{tc}$:} && (\pi^\ast\psi)\,h\dd t\,, & 
\quad\psi\in\f^{k-1}(\Sigma),\,h\in\ctc(M_\Sigma)\label{eqType2TC}\,.
\end{align}

\index{time-integration map}The idea is to introduce a time-integration map acting on timelike compact forms 
which integrates all terms proportional to $\dd t$, thus mapping to forms on the Cauchy hypersurface $\Sigma$. 
In order to consistently define such a map, we observe that, 
given a compact subset $K$ of $\Sigma$, $\pi^{-1}(K)$ has compact overlap 
with each timelike compact region $T$ of $M_\Sigma$. 
In fact, the overlap is indeed contained in the compact set $J_{M_\Sigma}(K)\cap T$. 
That said, we are in able to define the {\em time-integration} map as follows: 
\begin{align}\label{eqTimeInt}
i:\ftc^k(M_\Sigma) & \to\f^{k-1}(\Sigma)\,,\\
(\pi^\ast\phi)\,f & \mapsto0\,,\nonumber\\
(\pi^\ast\psi)\,h\dd t & \mapsto\psi\,\tint_\bbR h(s,\cdot)\dd s\,.\nonumber
\end{align}
Notice that we need the integration map to descend to cohomology groups. 
This happens whenever $i$ is a cochain map between the complexes $\big(\ftc^\ast(M_\Sigma),\dd\big)$ 
and $\big(\f^{\ast-1}(\Sigma),\dd\big)$, as it follows from the lemma stated below, 
{\em cfr.}\ Definition\ \ref{defComplex}. 

\begin{lemma}\label{lemTimeInt}
$\dd\,i=i\,\dd$ on $\ftc^k(M_\Sigma)$ for each $k\in\{0,\dots,m\}$. 
\end{lemma}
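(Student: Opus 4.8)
The plan is to verify the identity $\dd\,i=i\,\dd$ on each of the two generating types of timelike compact forms described in \eqref{eqType1TC} and \eqref{eqType2TC}, since every form in $\ftc^k(M_\Sigma)$ is an $\bbR$-linear combination of such forms and both $\dd$ and $i$ are $\bbR$-linear. First I would fix the notation: for a form on $M_\Sigma\cong\bbR\times\Sigma$, write $\dd=\dd_\Sigma+\dd t\wedge\partial_t$, where $\dd_\Sigma$ is the exterior derivative along the $\Sigma$-factor (which commutes with $\pi^\ast$) and $\partial_t$ differentiates the coefficient functions in the $t$-direction. The key technical input is that differentiation under the integral sign and Stokes/fundamental-theorem-of-calculus along $\bbR$ are legitimate here: this is exactly where the timelike compact support hypothesis is used, via the observation recorded just before \eqref{eqTimeInt} that $\pi^{-1}(K)$ meets every timelike compact region in a compact set, so the coefficient functions $f,h$ restricted to $\pi^{-1}(K)$ have compact support and the fibrewise integral $\int_\bbR h(s,\cdot)\,\dd s$ is a well-defined smooth form on $\Sigma$ with the expected support behaviour.

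For a Type $1_\mathrm{tc}$ form $\omega=(\pi^\ast\phi)\,f$ with $\phi\in\f^k(\Sigma)$ and $f\in\ctc(M_\Sigma)$, I compute $\dd\omega=(\pi^\ast\dd_\Sigma\phi)\,f+(-1)^k(\pi^\ast\phi)\,\dd f$, and then split $\dd f=\dd_\Sigma f+(\partial_t f)\,\dd t$. The first two pieces, $(\pi^\ast\dd_\Sigma\phi)\,f$ and $(-1)^k(\pi^\ast\phi)\,\dd_\Sigma f$, are again of Type $1_\mathrm{tc}$, so $i$ annihilates them; the remaining piece $(-1)^k(\pi^\ast\phi)(\partial_t f)\,\dd t$ is of Type $2_\mathrm{tc}$ and $i$ sends it to $(-1)^k\phi\int_\bbR(\partial_t f)(s,\cdot)\,\dd s=0$ by the fundamental theorem of calculus, since $f(\cdot,x)$ has compact support in $t$ for each fixed $x$ after restricting to the relevant compact base set. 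Hence $i\,\dd\omega=0$, and $\dd\,i\,\omega=\dd\,0=0$, so the identity holds on Type $1_\mathrm{tc}$.

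For a Type $2_\mathrm{tc}$ form $\eta=(\pi^\ast\psi)\,h\,\dd t$ with $\psi\in\f^{k-1}(\Sigma)$ and $h\in\ctc(M_\Sigma)$, I first compute $\dd\eta=(\pi^\ast\dd_\Sigma\psi)\,h\,\dd t+(-1)^{k-1}(\pi^\ast\psi)\,\dd_\Sigma h\wedge\dd t$ (the $\partial_t h\,\dd t\wedge\dd t$ term vanishes). Both summands are of Type $2_\mathrm{tc}$ (for the second, reorder $\dd_\Sigma h\wedge\dd t=-\dd t\wedge\dd_\Sigma h$ and absorb signs), so applying $i$ gives $i\,\dd\eta=(\dd_\Sigma\psi)\int_\bbR h(s,\cdot)\,\dd s+(-1)^{k-1}(-1)\cdot(-1)^{?}\,\psi\wedge\int_\bbR\dd_\Sigma h(s,\cdot)\,\dd s$; carefully tracking the sign from commuting $\dd t$ past $\psi$ one finds the two terms assemble precisely into $\dd_\Sigma\big(\psi\int_\bbR h(s,\cdot)\,\dd s\big)=\dd\,i\,\eta$, using that $\dd_\Sigma$ commutes with fibrewise integration over $\bbR$ (again justified by the compact-support-in-$t$ property). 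This matches $\dd\,i\,\eta$ since $i\,\eta=\psi\int_\bbR h(s,\cdot)\,\dd s$ and on $\Sigma$ the differential $\dd$ is just $\dd_\Sigma$. I expect the main obstacle to be purely bookkeeping: getting the Koszul signs from $\dd t\wedge(-)=(-1)^{\deg}(-)\wedge\dd t$ to cancel correctly between the two terms in the Type $2_\mathrm{tc}$ case, and making fully explicit — rather than merely plausible — that differentiation under the integral sign and the vanishing of $\int_\bbR\partial_t(\cdot)$ are licensed by the timelike compact constraint; neither is deep, but both must be stated with care since the whole point of this subsection is that the weaker "vertical compact" hypothesis of \cite{BT82} is being replaced by the stricter timelike compact one.
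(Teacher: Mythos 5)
Your proposal is correct and follows essentially the same route as the paper's proof: check the identity separately on forms of type $1_\mathrm{tc}$ and $2_\mathrm{tc}$, use the compact support of $f(\cdot,x)$ in $t$ (guaranteed by timelike compactness) to get $\int_\bbR\partial_tf(s,x)\,\dd s=0$ in the first case, and interchange $\dd_\Sigma$ with the fibrewise time integral in the second. The paper carries out the same computation in an adapted coordinate atlas rather than via the invariant splitting $\dd=\dd_\Sigma+\dd t\wedge\partial_t$, which settles the Koszul signs you left as bookkeeping; the two terms do assemble into $\dd\big(\psi\int_\bbR h(s,\cdot)\,\dd s\big)$ exactly as you predict.
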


\begin{proof}
To start with, we choose an oriented atlas for $\Sigma$ 
and we extend it along the time factor in order to obtain an atlas for $M_\Sigma$. 
This will be used to perform explicitly the relevant computations. 
The key point of this proof relies on the possibility to interchange spatial derivatives 
and integrals along the time direction. In fact, this follows from $\pi^{-1}(K)$ having compact overlap 
with each timelike compact region of $M_\Sigma$ for any compact $K\subseteq\Sigma$. 

First we consider $k$-forms of type $1_\mathrm{tc}$, see \eqref{eqType1TC}. 
Notice that, for each fixed $x\in\Sigma$, the map $s\mapsto f(s,x)$ has compact support on $\bbR$. 
As a consequence $\int_\bbR\partial_tf(s,x)\dd s=0$ for each $x\in\Sigma$. 
This fact motivates the following chain of identities, 
eventually concluding the proof for $k$-forms of type $1_{tc}$: 
\begin{align*}
i\,\dd\big((\pi^\ast\phi)\,f\big) & =i\big((\pi^\ast\dd\phi)\,f+(-1^k)(\pi^\ast\phi)\,\dd f\big)\\
& =(-1^k)\phi\,\tint_\bbR\partial_tf(s,\cdot)\dd s=0=\dd\,i\big((\pi^\ast\phi)\,f\big)\,.
\end{align*}

To complete the proof, we consider now $k$-forms of type $2_\mathrm{tc}$, see \eqref{eqType2TC}. 
As already noted above, we are allowed to interchange the order in which integration along the time direction 
and spatial derivatives are performed. In particular, one has the identity 
$\partial_i\int_\bbR h(s,x)\dd s=\int_\bbR\partial_ih(s,x)\dd s$ for each $x\in\Sigma$. 
Therefore, the following chain of identities follows as a consequence of this fact, 
thus completing the proof of the lemma: 
\begin{align*}
\dd\,i\big((\pi^\ast\psi)\,h\dd t\big) & =\dd\big(\psi\,\tint_\bbR h(s,\cdot)\dd s\big)\\
& =\dd\psi\,\tint_\bbR h(s,\cdot)\dd s+(-1)^{k-1}\psi\,\dd x^i\,\tint_\bbR\partial_ih(s,\cdot)\dd s\\
& =i\,\big((\pi^\ast\dd\psi)\,h\dd t+(-1)^{k-1}(\pi^\ast\psi)\,\dd x^i\,\partial_ih\dd t\big)\\
& =i\,\dd\big((\pi^\ast\psi)\,h\dd t\big)\,.
\end{align*}
\end{proof}

Due to Lemma\ \ref{lemTimeInt}, the integration map $i:\ftc^\ast(M_\Sigma)\to\f^{\ast-1}(\Sigma)$ 
induces a corresponding map in cohomology. With a slight abuse of notation, 
we denote such map with $i:\htcdd^\ast(M_\Sigma)\to\hdd^{\ast-1}(\Sigma)$. 
At this stage $i$ establishes a relation between 
the timelike compact cohomology groups $\htcdd^\ast(M_\Sigma)$ of $M_\Sigma$ 
and the standard de Rham cohomology groups $\hdd^{\ast-1}(\Sigma)$ of $\Sigma$ in degree lowered by one. 
Since by construction $M_\Sigma$ is isometric to the original globally hyperbolic spacetime $M$, 
which has $\Sigma$ as one of its Cauchy hypersurfaces, 
our aim to exhibit an isomorphism between $\htcdd^\ast(M)$ and $\hdd^{\ast-1}(\Sigma)$ 
is achieved as soon as we manage to exhibit an inverse to $i$ at the level of cohomology groups. 
The standard technique to answer this question is 
to look for a cochain map from $\f^{\ast-1}(M_\Sigma)$ to $\ftc^\ast(M_\Sigma)$, 
which is an inverse of $i:\ftc^\ast(M_\Sigma)\to\f^{\ast-1}(M_\Sigma)$ up to a cochain homotopy, 
see\ Definition\ \ref{defComplex}. If this is the case, the cochain map induces an exact inverse in cohomology, 
thus showing that $i:\htcdd^\ast(M_\Sigma)\to\hdd^{\ast-1}(\Sigma)$ is an isomorphism. 

\index{time-extension map}A reasonable guess to define an inverse of 
$i:\ftc^\ast(M_\Sigma)\to\f^{\ast-1}(M_\Sigma)$ up to homotopy 
might be to extend $(k-1)$-forms on $\Sigma$ to timelike compact $k$-forms on $M_\Sigma$ 
simply by taking the wedge product with a suitable $1$-form along the time component: 
Choosing a smooth function $a\in\cc(\bbR)$ with compact support such that $\int_\bbR a(s)\dd s=1$, 
one can introduce a closed $1$-form $\omega=(t^\ast a)\dd t$ on $M_\Sigma$ with timelike compact support. 
Notice that $[a\dd s]$ generates the 1-dimensional vector space $\hcdd^1(\bbR)$. 
We can exploit $\omega$ to introduce a {\em time-extension map}: 
\begin{align}\label{eqTimeExt}
e:\f^{k-1}(\Sigma)\to\ftc^k(M_\Sigma)\,, && \phi\mapsto(\pi^\ast\phi)\,\omega\,.
\end{align}
It is straightforward to check that $e:\f^{\ast-1}(\Sigma)\to\ftc^\ast(M_\Sigma)$ is a cochain map. 

\begin{lemma}\label{lemTimeExt}
$\dd\,e=e\,\dd$ on $	\f^{k-1}(\Sigma)$ for each $k\in\{1,\,\dots,\,m\}$. 
\end{lemma}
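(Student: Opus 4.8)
The plan is to unwind the definition of the time-extension map and then run the graded Leibniz rule for $\dd$, using two elementary inputs: that pullback along a smooth map commutes with the exterior derivative, and that the $1$-form $\omega=(t^\ast a)\,\dd t$ is closed, as was already noted when $\omega$ was introduced. Concretely, I would fix $\phi\in\f^{k-1}(\Sigma)$, write $e\phi=(\pi^\ast\phi)\wedge\omega$ (keeping the wedge explicit for this computation), and apply $\dd$ on $M_\Sigma$:
\[
\dd(e\phi)=\dd\big((\pi^\ast\phi)\wedge\omega\big)=(\dd\pi^\ast\phi)\wedge\omega+(-1)^{k-1}(\pi^\ast\phi)\wedge\dd\omega\,.
\]

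Next I would simplify the two resulting terms. Since $\dd$ commutes with pullbacks, $\dd\pi^\ast\phi=\pi^\ast\dd\phi$. For the second term, $\dd\omega=\dd\big((t^\ast a)\,\dd t\big)=(t^\ast a')\,\dd t\wedge\dd t=0$, so $\omega$ is closed and that term drops out. Hence
\[
\dd(e\phi)=(\pi^\ast\dd\phi)\wedge\omega=e(\dd\phi)\,,
\]
which is exactly the identity $\dd\,e=e\,\dd$ on $\f^{k-1}(\Sigma)$ claimed in the lemma. This also settles the consistency of the support bookkeeping: the right-hand side lies in $\ftc^{k+1}(M_\Sigma)$ because $e$ maps into timelike compact forms and $\dd\phi\in\f^k(\Sigma)$, while the left-hand side lies there because $\dd$ does not enlarge supports and $e\phi$ was already known to be timelike compact; no separate support estimate is needed.

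There is essentially no obstacle in this proof: the whole argument is the two-line computation above. The only point that deserves a moment's attention is making sure that applying $\dd$ to a timelike compact form keeps it timelike compact, but this is immediate from the support-preserving property of the differential recorded in Section\ \ref{secForms}, so nothing genuinely new has to be established beyond the displayed calculation.
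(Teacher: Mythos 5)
Your computation is correct and is essentially the paper's own proof: both apply the graded Leibniz rule to $(\pi^\ast\phi)\wedge\omega$, use that pullback commutes with $\dd$, and kill the second term because $\omega=(t^\ast a)\,\dd t$ is closed. The extra remarks on support are harmless and consistent with what the paper already established when defining $e$.
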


\begin{proof}
Take $\phi\in\f^{k-1}(\Sigma)$ for some $k\in\{1,\dots,m\}$. 
Then, $\omega$ being closed, the following chain of identities holds true: 
\begin{equation*}
\dd\,e\,\phi=\dd\big((\pi^\ast\phi)\,\omega\big)=(\pi^\ast\dd\phi)\,\omega=e\,\dd\,\phi\,.
\end{equation*}
This concludes the proof. 
\end{proof}

Due to this lemma, the time-extension map $e:\f^{\ast-1}(\Sigma)\to\ftc^\ast(M_\Sigma)$ 
descends to cohomology groups. With a slight abuse of notation, 
we denote the induced map with $e:\hdd^{\ast-1}(\Sigma)\to\htcdd^\ast(M_\Sigma)$. 

As anticipated, our aim is to prove that, up to a chain homotopy, 
the time-extension map $e:\f^{\ast-1}(\Sigma)\to\ftc^\ast(M_\Sigma)$ 
is an inverse of the time-integration map $i:\ftc^\ast(M_\Sigma)\to\f^{\ast-1}(\Sigma)$. 
In fact, from \eqref{eqTimeInt} and \eqref{eqTimeExt} one can directly check that $e$ is a right-inverse of $i$, 
namely that $i\,e=\id_{\f^{\ast-1}(\Sigma)}$, which of course entails a similar identity in cohomology. 
As a matter of fact, for arbitrary $\phi\in\f^{k-1}(\Sigma)$, $k\in\{1,\dots,m\}$, 
on account of the normalization of $a\in\cc(\bbR)$, one has 
\begin{equation}\label{eqIntExt}
i\,e\,\phi=i\big((\pi^\ast\phi)\,\omega\big)=\phi\int_\bbR a(s)\dd s=\phi\,.
\end{equation}
Therefore, to conclude that $e$ and $i$ induce the sought isomorphisms in cohomology, 
one has to find a cochain homotopy $Q:\ftc^k(M_\Sigma)\to\ftc^{k-1}(M_\Sigma)$ 
between $e\,i$ and $\id_{\ftc^\ast(M_\Sigma)}$. 

As it was already mentioned, in order to find the relevant cochain homotopy, we mimic 
the approach of\ \cite[Section\ I.4, p.\ 38]{BT82}. However, we must pay attention 
to the fact that timelike compact $k$-forms must be mapped to timelike compact $(k-1)$-forms 
by the cochain homotopy we are going to introduce. This fact relies on the following observation: 
For a given smooth function $f\in\ctc(M_\Sigma)$ with timelike compact support on $M_\Sigma$, 
we can define a new function $\widehat{f}:M_\Sigma\to\bbR$, according to the formula 
\begin{equation}\label{eqHat}
\widehat{f}(t,x)=\int_{-\infty}^tf(s,x)\dd s-\int_\bbR f(r,x)\dd r\,\int_{-\infty}^ta(s)\dd s\,.
\end{equation}
This is well-defined since, for any arbitrary but fixed $x\in\Sigma$, the function $s\in\bbR\mapsto f(s,x)$ 
has compact support due to the support of $f$ being timelike compact. In fact, 
the support of $f(\cdot,x)$ is both included in $\supp(f)$ and in $\bbR\times\{x\}$, 
which in turn is included in $J_{M_\Sigma}(\bar{s},x)$ for any choice of $\bar{s}\in\bbR$; 
therefore $\supp(f(\cdot,x))\subseteq\supp(f)\cap J_{M_\Sigma}(\bar{s},x)$ is compact. 

However, this is not enough for our scope. We also need $\widehat{f}$ to have timelike compact support. 
Recalling that $a$ is a smooth function with compact support on $\bbR$, 
one can find an interval $[c,d]\subseteq\bbR$ which includes the support of $a$; 
therefore, by construction $t^\ast a$ is supported in a time slab 
$[c,d]\times\Sigma=J_{M_\Sigma}^+(\Sigma_c)\cap J_{M_\Sigma}^-(\Sigma_d)$, 
$\Sigma_c$ and $\Sigma_d$ being respectively the constant time Cauchy hypersurfaces of $M_\Sigma$ 
corresponding to $c$ and $d$, {\em i.e.}\ $\Sigma_c=\{c\}\times\Sigma$ 
and $\Sigma_d=\{d\}\times\Sigma$ both lying in $M_\Sigma$. 
Exploiting \cite[Theorem 3.1]{San13}, it is possible to find two Cauchy hypersurfaces $\tilde{\Sigma}_\pm$ 
such that $\supp(f)\subseteq J_{M_\Sigma}^+(\tilde{\Sigma}_-)\cap J_{M_\Sigma}^-(\tilde{\Sigma}_+)$. 
From \eqref{eqHat} one deduces that $\widehat{f}$ vanishes 
in the intersection of the chronological futures of $\Sigma_d$ and $\tilde{\Sigma}_+$ 
as well as in the intersection of the chronological pasts of $\Sigma_c$ and $\tilde{\Sigma}_-$. 
This is a direct consequence of the following properties of the integrals along time of both $a$ and $f$: 
\begin{align*}
\tint_{-\infty}^ta(s)\dd s & =
\begin{cases}
1\,, & \hspace{4.5pc}(t,x)\in I_{M_\Sigma}^+(\Sigma_d)\,,\\
0\,, & \hspace{4.5pc}(t,x)\in I_{M_\Sigma}^-(\Sigma_c)\,;
\end{cases}\\
\tint_{-\infty}^tf(s,x)\dd s & =
\begin{cases}
\tint_\bbR f(s,x)\dd s\,, & (t,x)\in I_{M_\Sigma}^+(\tilde{\Sigma}_+)\,,\\
0\,, & (t,x)\in I_{M_\Sigma}^-(\tilde{\Sigma}_-)\,.
\end{cases}
\end{align*}
We deduce that the support of $\widehat{f}$ lies in the intersection between 
the union of the causal futures of $\Sigma_c$ and $\tilde{\Sigma}_-$ 
and the union of the causal pasts of $\Sigma_d$ and $\tilde{\Sigma}_+$,\footnote{This 
is the complement in $M_\Sigma$ of the union between 
$I_{M_\Sigma}^+(\Sigma_d)\cap I_{M_\Sigma}^+(\tilde{\Sigma}_+)$ 
and $I_{M_\Sigma}^-(\Sigma_c)\cap I_{M_\Sigma}^-(\tilde{\Sigma}_-)$. 
This is due to the fact that, by its definition, 
each Cauchy hypersurface $\Sigma$ for a globally hyperbolic spacetime $M$ 
splits it into the two disjoint parts $I_M^\pm(\Sigma)$ and $J_M^\mp(\Sigma)=M$.} 
that is to say 
\begin{equation*}
\supp(\widehat{f}\,)\subseteq\big(J_{M_\Sigma}^+(\Sigma_c)\cup J_{M_\Sigma}^+(\tilde{\Sigma}_-)\big)
\cap\big(J_{M_\Sigma}^-(\Sigma_d)\cup J_{M_\Sigma}^-(\tilde{\Sigma}_+)\big)\,. 
\end{equation*}
From the inclusion above we deduce that $\supp(\widehat{f}\,)$ is timelike compact. 
This follows from Definition\ \ref{defSupportSystems} 
and the fact that, on a globally hyperbolic spacetime $M$, $J_M^\pm(K)\cap J_M^\mp(\Sigma)$ is compact 
for each compact $K\subseteq M$ and for each Cauchy hypersurface $\Sigma$ for $M$. 

The argument presented above shows that the following map is well defined: 
\begin{align}\label{eqTCHomotopy}
Q:\ftc^k(M_\Sigma) & \to\ftc^{k-1}(M_\Sigma)\,,\\
(\pi^\ast\phi)\,f & \mapsto0\,,\nonumber\\
(\pi^\ast\psi)\,h\dd t & \mapsto(-1)^k(\pi^\ast\psi)\,\widehat{h}\,,\nonumber
\end{align}
where $\widehat{h}$ is defined according to eq.\ \eqref{eqHat}. 
The next lemma shows that $Q$ is the sought cochain homotopy between $e\,i$ and $\id_{\ftc^\ast(M_\Sigma)}$. 

\begin{lemma}\label{lemTCHomotopy}
Consider the map $Q:\ftc^k(M_\Sigma)\to\ftc^{k-1}(M_\Sigma)$ 
defined in eq.\ \eqref{eqTCHomotopy} for $k\in\{0,\,\dots,\,m\}$. 
Then $Q$ provides a cochain homotopy between $e\,i$ and $\id_{\ftc^\ast(M_\Sigma)}$, that is to say 
$e\,i-\id_{\ftc^\ast(M_\Sigma)}=\dd\,Q+Q\,\dd$ on $\ftc^k(M_\Sigma)$ for each $k\in\{0,\dots,m\}$. 
\end{lemma}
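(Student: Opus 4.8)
The plan is to follow the classical argument of\ \cite[Section\ I.4, p.\ 38]{BT82}, adapting it to timelike compact supports exactly as was done for the time-integration map in Lemma\ \ref{lemTimeInt}. Since $e\,i-\id_{\ftc^\ast(M_\Sigma)}$ and $\dd\,Q+Q\,\dd$ are both additive, and every timelike compact $k$-form on $M_\Sigma$ is a linear combination of forms of type $1_\mathrm{tc}$ and of type $2_\mathrm{tc}$, {\em cfr.}\ \eqref{eqType1TC} and\ \eqref{eqType2TC}, it suffices to verify the identity on these two types. I would fix an oriented atlas of $\Sigma$, extend it along the time factor, and write $\dd f=\partial_tf\,\dd t+\partial_if\,\dd x^i$ on functions, with $\dd x^i=\pi^\ast(\dd x^i|_\Sigma)$; this lets one decompose every form occurring in the computation into pieces of type $1_\mathrm{tc}$ and $2_\mathrm{tc}$, on which $Q$ acts through eq.\ \eqref{eqTCHomotopy}. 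The only genuine analytic inputs are three elementary facts about $f\mapsto\widehat f$ of eq.\ \eqref{eqHat}: since $f$ is timelike compact, $s\mapsto f(s,x)$ is compactly supported for each $x\in\Sigma$, whence $\widehat{\partial_tf}=f$; differentiating eq.\ \eqref{eqHat} in $t$ gives $\partial_t\widehat f=f-\big(\tint_\bbR f(r,\cdot)\,\dd r\big)\,(t^\ast a)$; and the compact-overlap property of $\pi^{-1}(K)$ with timelike compact regions used in Lemma\ \ref{lemTimeInt} allows interchanging spatial derivatives with the time integral, so that $\partial_i\widehat f=\widehat{\partial_if}$.

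For a $k$-form $\mu=(\pi^\ast\phi)\,f$ of type $1_\mathrm{tc}$ one has $Q\mu=0$ and $i\mu=0$, so the claim reduces to $Q\,\dd\mu=-\mu$. Writing $\dd\mu=(\pi^\ast\dd\phi)\,f+(-1)^k(\pi^\ast\phi)(\partial_tf\,\dd t+\partial_if\,\dd x^i)$ and noting that $Q$ annihilates every summand without a $\dd t$, only $(-1)^k(\pi^\ast\phi)\,\partial_tf\,\dd t$ survives; applying $Q$ and using $\widehat{\partial_tf}=f$ gives $Q\,\dd\mu=(-1)^k(-1)^{k+1}(\pi^\ast\phi)\,\widehat{\partial_tf}=-(\pi^\ast\phi)\,f=-\mu$, as required.

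For a $k$-form $\nu=(\pi^\ast\psi)\,h\,\dd t$ of type $2_\mathrm{tc}$ I would compute $\dd\,Q\,\nu$ and $Q\,\dd\,\nu$ separately. From $Q\nu=(-1)^k(\pi^\ast\psi)\,\widehat h$ and the identities for $\partial_t\widehat h$ and $\partial_i\widehat h$, the form $\dd\,Q\,\nu$ splits into a multiple of $(\pi^\ast\dd\psi)\,\widehat h$, a multiple of $(\pi^\ast\psi)\,\widehat{\partial_ih}\,\dd x^i$, and the term $-(\pi^\ast\psi)\big(h-(\tint_\bbR h(r,\cdot)\,\dd r)(t^\ast a)\big)\,\dd t$. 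On the other hand $\dd\nu=(\pi^\ast\dd\psi)\,h\,\dd t+(-1)^{k-1}(\pi^\ast\psi)\,\partial_ih\,\dd x^i\wedge\dd t$ is again a sum of type $2_\mathrm{tc}$ forms once one rewrites $\pi^\ast\psi\wedge\dd x^i=\pi^\ast(\psi\wedge\dd x^i|_\Sigma)$, and $Q\,\dd\,\nu$ contributes exactly the opposites of the $(\pi^\ast\dd\psi)\,\widehat h$ and $(\pi^\ast\psi)\,\widehat{\partial_ih}\,\dd x^i$ pieces. Adding, all of these cancel and one is left with $-(\pi^\ast\psi)\,h\,\dd t+(\pi^\ast\psi)\big(\tint_\bbR h(r,\cdot)\,\dd r\big)(t^\ast a)\,\dd t$, which is precisely $e\,i\,\nu-\nu$ since $i\nu=\psi\,\tint_\bbR h(s,\cdot)\,\dd s$ and $e$ wedges with $\omega=(t^\ast a)\,\dd t$. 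The only real work, and the only place where one can go wrong, is the sign bookkeeping in this type-$2_\mathrm{tc}$ computation together with the justification that $\partial_i$ and $\tint_\bbR\,\dd s$ commute on timelike compact forms; the latter, however, rests on the same support estimates already established before eq.\ \eqref{eqTCHomotopy}, so no further geometric input is needed.
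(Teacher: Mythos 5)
Your proposal is correct and follows essentially the same route as the paper's proof: the same decomposition into type-$1_\mathrm{tc}$ and type-$2_\mathrm{tc}$ forms in an adapted atlas, the same three facts about $f\mapsto\widehat f$ (namely $\widehat{\partial_tf}=f$, the formula for $\partial_t\widehat f$ equivalent to the paper's expression for $\dd\widehat h$, and $\partial_i\widehat f=\widehat{\partial_if}$ via the interchange of spatial derivatives with the time integral), and the same cancellation pattern in the type-$2_\mathrm{tc}$ computation. The sign bookkeeping you flag as the delicate point indeed works out exactly as in the paper.
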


\begin{proof}
Similarly to the proof of Lemma\ \ref{lemTimeInt}, we perform all computations using a coordinate system. 
To do so, we choose an oriented atlas for $\Sigma$ and we extend it in time to an atlas for $M_\Sigma$. 

First, we check the thesis on $k$-forms of type $1_\mathrm{tc}$, \eqref{eqType1TC}. 
Recalling that both $Q$ and $i$ vanish on forms of this type, 
{\em cfr.}\ \eqref{eqTCHomotopy} and \eqref{eqTimeInt}, 
and noting that $\widehat{\partial_tf}=f$ since $f(\cdot,x)$ has compact support for each $x\in\Sigma$, 
{\em cfr.}\ \eqref{eqHat}, we get the following chain of identities: 
\begin{align*}
(\dd\,Q+Q\,\dd)\big((\pi^\ast\phi)\,f\big) & =Q\big((\pi^\ast\dd\phi)\,f+(-1)^k(\pi^\ast\phi)\,\dd f\big)\\
& =-(\pi^\ast\phi)\,\widehat{\partial_tf}
=(e\,i-\id_{\ftc^\ast(M_\Sigma)})\big((\pi^\ast\phi)\,f\big)\,.
\end{align*}

We now consider the only possibility left, namely to have a $k$-form of type $2_\mathrm{tc}$, \eqref{eqType2TC}. 
Simply recalling the definitions of $Q$, $e$, and $i$, respectively\ \eqref{eqTCHomotopy}, \eqref{eqTimeExt} 
and \eqref{eqTimeInt}, one deduces the three identities presented below: 
\begin{align*}
\dd\,Q\big((\pi^\ast\psi)\,h\dd t\big) & =(-1)^k\dd\big((\pi^\ast\psi)\,\widehat{h}\big)
=(-1)^k(\pi^\ast\dd\psi)\,\widehat{h}-(\pi^\ast\psi)\,\dd\widehat{h}\,,\\
Q\,\dd\big((\pi^\ast\psi)\,h\dd t\big) & 
=Q\big((\pi^\ast\dd\psi)\,h\dd t+(-1)^{k-1}(\pi^\ast\psi)\,\dd x^i\,\partial_ih\dd t\big)\\
& =(-1)^{k+1}(\pi^\ast\dd\psi)\,\widehat{h}+(\pi^\ast\psi)\,\dd x^i\,\widehat{\partial_ih}\,,\\
e\,i\big((\pi^\ast\psi)\,h\dd t\big) & =e\left(\psi\,\int_\bbR h(s,\cdot)\dd s\right)
=\left(\pi^\ast\left(\psi\,\int_\bbR h(s,\cdot)\dd s\right)\right)\,\omega\,.
\end{align*}
Furthermore, eq.\ \eqref{eqHat} allows one to compute $\dd\widehat{h}$: 
\begin{equation*}
\dd\widehat{h}=h\dd t-\pi^\ast\left(\int_\bbR h(r,\cdot)\dd r\right)\,\omega+\dd x^i\,\partial_i\widehat{h}\,.
\end{equation*}
Putting all these data together, one reads 
\begin{align*}
(\dd\,Q+Q\,\dd)\big((\pi^\ast\psi)\,h\dd t\big) 
& =\left(\pi^\ast\left(\psi\,\int_\bbR h(r,\cdot)\dd r\right)\right)\,\omega-(\pi^\ast\psi)\,h\dd t\\
& =\big(e\,i-\id_{\ftc^k(M_\Sigma)}\big)\big((\pi^\ast\psi)\,h\dd t\big)\,,
\end{align*}
thus concluding the proof of the lemma. 
\end{proof}

Applying the last lemma, we get the sought isomorphism between cohomology groups with timelike compact support 
$\htcdd^\ast(M)$ of a globally hyperbolic spacetime $M$ and de Rham cohomology groups $\hdd^{\ast-1}(\Sigma)$ 
of a Cauchy hypersurface $\Sigma$ for $M$ in degree lowered by one. 

\begin{theorem}\label{thmTCCohomology}
Let $M$ be a globally hyperbolic spacetime and consider a Cauchy hypersurface $\Sigma$ for $M$. 
Then $i$ and $e$, defined respectively in \eqref{eqTimeInt} and \eqref{eqTimeExt}, 
induce isomorphisms at the level of cohomology groups: 
\begin{equation*}
\xymatrix{\htcdd^\ast(M)\ar@/^1.5pc/[r]^{i} & \hdd^{\ast-1}(\Sigma)\ar@/^1.5pc/[l]^{e}}\,.
\end{equation*}
\end{theorem}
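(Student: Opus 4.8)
The plan is to assemble the isomorphism purely formally from the ingredients already in place: the fact that $i$ and $e$ are cochain maps (Lemmas~\ref{lemTimeInt} and~\ref{lemTimeExt}), the identity $i\,e=\id$ at the level of forms (eq.~\eqref{eqIntExt}), the cochain homotopy $Q$ between $e\,i$ and $\id_{\ftc^\ast(M_\Sigma)}$ (Lemma~\ref{lemTCHomotopy}), and the isometry $M\simeq M_\Sigma$ provided by Theorem~\ref{thmGlobHyp}. So the bulk of the work has already been done, and what remains is bookkeeping.

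First I would invoke the general principle, recorded just after the definition of cohomology groups, that a cochain map induces a homomorphism on cohomology and that cochain-homotopic cochain maps induce the same homomorphism. Applied to $i:\ftc^\ast(M_\Sigma)\to\f^{\ast-1}(\Sigma)$ and $e:\f^{\ast-1}(\Sigma)\to\ftc^\ast(M_\Sigma)$, this produces the induced maps $i:\htcdd^\ast(M_\Sigma)\to\hdd^{\ast-1}(\Sigma)$ and $e:\hdd^{\ast-1}(\Sigma)\to\htcdd^\ast(M_\Sigma)$. By eq.~\eqref{eqIntExt} we already have $i\,e=\id_{\f^{\ast-1}(\Sigma)}$ at the level of forms, hence $i\,e=\id$ on $\hdd^{\ast-1}(\Sigma)$; in the other direction Lemma~\ref{lemTCHomotopy} says $e\,i$ and $\id_{\ftc^\ast(M_\Sigma)}$ are cochain homotopic, so $e\,i=\id$ on $\htcdd^\ast(M_\Sigma)$. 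Therefore $i$ and $e$ are mutually inverse isomorphisms between $\htcdd^\ast(M_\Sigma)$ and $\hdd^{\ast-1}(\Sigma)$.

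Finally I would transport this along the isometry. By Theorem~\ref{thmGlobHyp}, $M$ is isometric to $M_\Sigma=\bbR\times\Sigma$, and an isometry is in particular an orientation- and time-orientation-preserving diffeomorphism that intertwines the causal structures; hence it maps timelike compact subsets of $M$ to timelike compact subsets of $M_\Sigma$ and commutes with $\dd$, so the pullback is an isomorphism of the timelike compact de Rham complexes and induces $\htcdd^\ast(M)\simeq\htcdd^\ast(M_\Sigma)$. Composing with the isomorphism of the previous paragraph yields the claimed $\htcdd^\ast(M)\simeq\hdd^{\ast-1}(\Sigma)$, realised by $i$ and $e$ after this identification.

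There is essentially no remaining obstacle: all the delicate points — that $i$, $e$ and $Q$ respect the timelike compact support system and that the homotopy identity holds — were discharged in the preceding lemmas. The only item needing a line of care is the compatibility of the isometry of Theorem~\ref{thmGlobHyp} with the timelike compact support system, which is immediate since a causal-structure-preserving diffeomorphism carries causal diamonds with compact closure to causal diamonds with compact closure. (One may also note in passing that the chain-level maps $i$ and $e$ depend on the auxiliary choice of $a\in\cc(\bbR)$ with $\int_\bbR a=1$, but different choices give cochain-homotopic maps and hence the same isomorphism in cohomology.)
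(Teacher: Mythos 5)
Your proposal is correct and follows essentially the same route as the paper: it assembles the isomorphism from Lemmas~\ref{lemTimeInt}, \ref{lemTimeExt} and \ref{lemTCHomotopy} together with eq.~\eqref{eqIntExt}, and then transports the result along the isometry $M\simeq M_\Sigma$ from Theorem~\ref{thmGlobHyp}. The extra remarks on the compatibility of the isometry with timelike compact supports and on the independence of the choice of $a$ are sensible but do not change the argument.
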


\begin{proof}
Theorem\ \ref{thmGlobHyp} entails that $M$ is isometric to the globally hyperbolic spacetime $M_\Sigma$, 
whose underlying manifold is the Cartesian product $\bbR\times\Sigma$, 
$\Sigma$ being the given Cauchy hypersurface. This isometry induces an isomorphism 
at the level of cohomology groups, $\htcdd^\ast(M)\simeq\htcdd^\ast(M_\Sigma)$. 
$M_\Sigma$ is exactly a globally hyperbolic spacetime of the type considered in the construction 
of the time-integration map $i:\ftc^\ast(M_\Sigma)\to\f^{\ast-1}(\Sigma)$, 
see\ \eqref{eqTimeInt} and Lemma\ \ref{lemTimeInt}, 
and a time-extension map $e:\f^{\ast-1}(\Sigma)\to\ftc^\ast(M_\Sigma)$ can be introduced 
choosing a function $a\in\cc(\bbR)$ such that $\int_\bbR a(s)\dd s=1$, 
see\ \eqref{eqTimeExt} and Lemma\ \ref{lemTimeExt}. 
Furthermore, one can define a cochain homotopy $Q$ between the cochain maps $e\,i$ 
and $\id_{\ftc^\ast(M_\Sigma)}$, see\ \eqref{eqTCHomotopy} and Lemma\ \ref{lemTCHomotopy}. 
Recalling eq.\ \eqref{eqIntExt} and Lemma\ \ref{lemTCHomotopy}, we deduce 
that $i$ and $e$ give rise to inverse maps of each other at the level of cohomology, thus completing the proof. 
This is due to the fact that the term involving the cochain homotopy $Q$ in Lemma\ \ref{lemTCHomotopy} 
maps closed forms with timelike compact support to exact ones 
and therefore this contribution vanishes passing to timelike compact cohomology groups. 
\end{proof}

\subsubsection{Spacelike compact vs. compactly supported cohomology}
We now focus our attention on the spacelike compact cohomology $\hscdd^\ast(M)$ 
associated to a globally hyperbolic spacetime $M$. In analogy with the analysis in the timelike compact case, 
we would like to establish an isomorphism between the spacelike compact cohomology groups of $M$ 
and the de Rham cohomology groups with compact support $\hcdd^\ast(\Sigma)$ 
of a Cauchy hypersurface $\Sigma$ of $M$. 

Let us consider an $m$-dimensional globally hyperbolic spacetime $M$. 
Consider moreover a Cauchy hypersurface $\Sigma$ for $M$. 
To exhibit an isomorphism $\hscdd^\ast(M)\simeq\hcdd^\ast(\Sigma)$ we follow a strategy 
which is similar to the one for the timelike compact case: 
\begin{enumerate}
\item[1.] First, Theorem\ \ref{thmGlobHyp} is exploited to foliate the globally hyperbolic spacetime $M$ 
in such a way that $\Sigma$ is the typical folium. In fact, Theorem\ \ref{thmGlobHyp} provides an isometry 
between $M$ and an auxiliary globally hyperbolic spacetime $M_\Sigma$, 
whose underlying manifold is given by the Cartesian product $\bbR\times\Sigma$, 
$\Sigma$ being the given Cauchy hypersurface for $M$. 
Indeed this isometry induces an isomorphism $\hscdd^k(M)\simeq\hscdd^k(M_\Sigma)$ 
between cohomology groups. This is the first part of the sought isomorphism; 
\item[2.] The second part is constructed exploiting the structure of the auxiliary globally hyperbolic spacetime 
$M_\Sigma$, which is explicitly decomposed in time and space factors. 
This decomposition eventually leads to an isomorphism $\hscdd^k(M_\Sigma)\simeq\hcdd^{k}(\Sigma)$, 
thus completing our program. 
\end{enumerate}

\begin{remark}\label{remSCCohoTriv}
Writing $\hscdd^\ast(M)\simeq\hcdd^\ast(\Sigma)$, we mean that the isomorphism holds true in each degree. 
Indeed $\hcdd^m(\Sigma)$ vanishes, $\Sigma$ being an $(m-1)$-dimensional manifold. 
Therefore the sought isomorphism in degree $k=m$ tells us that the top cohomology group 
with spacelike compact support $\hscdd^m(M)$ on a globally hyperbolic spacetime $M$ is always trivial. 
\end{remark}

In\ \eqref{eqProj} we exploited the Cartesian product structure $\bbR\times\Sigma$ underlying $M_\Sigma$ 
to define projection maps on each factor. It is now convenient to introduce also the time-zero section $s$ of $\pi$: 
\begin{align}\label{eqSect}
s:\Sigma\to M_\Sigma\,, && x\mapsto(0,x)\,,
\end{align}

We note that, for each compact subset $K$ of the Cauchy hypersurface $\Sigma$, 
the preimage of $K$ under the projection $\pi:M_\Sigma\to\Sigma$ onto the space factor is spacelike compact. 
In fact, $\pi^{-1}(K)$ is closed, $\pi$ being continuous, and it is contained in $J_{M_\Sigma}(\{0\}\times K)$. 
This entails that the pull-back via $\pi$ of a compactly supported differential form on $\Sigma$ 
is a spacelike compact form on $M_\Sigma$. 
Furthermore, we observe that the image of $s$ is the Cauchy hypersurface $\{0\}\times\Sigma$ of $M_\Sigma$. 
Due to \cite[Corollary\ A.5.4]{BGP07}), each spacelike compact region has compact intersection 
with any Cauchy hypersurface. Therefore, pulling a spacelike compact form on $M_\Sigma$ 
back to $\Sigma$ via $s$ gives a compactly supported form on $\Sigma$. We summarize these facts below: 
\begin{align}\label{eqSCCochainMaps}
\pi^\ast:\fc^k(\Sigma)\to\fsc^k(M_\Sigma)\,, && s^\ast:\fsc^k(M_\Sigma)\to\fc^k(\Sigma)\,.
\end{align}
It is straightforward to check that $\pi\circ s=\id_\Sigma$. It follows immediately that 
\begin{align}\label{eqSectProj}
s^\ast\pi^\ast=\id_{\fc^k(\Sigma)}\,, && k\in\{0,\,\dots,\,m-1\}\,.
\end{align}
Being pull-backs along a smooth map, both $\pi^\ast$ and $s^\ast$ intertwine 
the differentials on $M_\Sigma$ and $\Sigma$, that is to say 
\begin{align*}
\pi^\ast\dd=\dd\pi^\ast\,, && s^\ast\dd=\dd s^\ast\,.
\end{align*}
This means that we have cochain maps $\pi^\ast:\fc^\ast(\Sigma)\to\fsc^\ast(M_\Sigma)$ 
and $s^\ast:\fsc^\ast(\Sigma)\to\fc^\ast(\Sigma)$ and moreover $s^\ast$ is a left inverse of $\pi^\ast$. 
Automatically, the same holds true at the level of cohomology groups as well. 
Therefore, in order to exhibit an isomorphism between $\hscdd^\ast(M_\Sigma)$ and $\hcdd^\ast(\Sigma)$, 
we are left with the proof of the fact that $\pi^\ast s^\ast$ is cochain homotopic to $\id_{\fsc^\ast(M_\Sigma)}$. 
If this were the case, then $\pi^\ast s^\ast$ would be the identity map 
at the level of spacelike compact cohomology groups, 
thus showing that $s^\ast$ is also a right inverse of $\pi^\ast$ at this cohomological level. 
In particular, this would prove that $\pi^\ast$ actually gives the sought isomorphism 
$\hcdd^\ast(\Sigma)\simeq\hscdd^\ast(M_\Sigma)$. 

The cochain homotopy between $\pi^\ast s^\ast$ and $\id_{\fsc^\ast(M_\Sigma)}$ 
is defined along the lines of \cite[Section\ I.4, pp.\ 33--35]{BT82}. 
However, the setting considered here is slightly different. 
In particular, one has to be careful with the support properties of the cochain homotopy we are going to consider. 
In fact, we want our candidate cochain homotopy to map spacelike compact forms in degree $k$ 
to spacelike compact forms in degree $k-1$. 
If this is the case, then the formula presented in \cite{BT82} provides 
an appropriate cochain homotopy for spacelike compact cohomologies too. 

Similarly to the timelike compact case, $k$-forms with spacelike compact support on $M_\Sigma$ are always  
a linear combination of $k$-forms of two types: 
\begin{align}
\mbox{$1_\mathrm{sc}$:} && (\pi^\ast\phi)\,f\,, & \quad\phi\in\f^k(\Sigma),\,f\in\csc(M_\Sigma)\,,
\label{eqType1SC}\\
\mbox{$2_\mathrm{sc}$:} && (\pi^\ast\psi)\,h\dd t\,, & \quad\psi\in\f^{k-1}(\Sigma),\,
h\in\csc(M_\Sigma)\label{eqType2SC}\,.
\end{align}

Given a smooth function $f\in\csc(M_\Sigma)$ with spacelike compact support, 
we can consider a new smooth function $\tilde{f}$ on $M_\Sigma$ defined by $\tilde{f}(t,x)=\int_0^tf(s,x)\dd s$. 
For our purposes, the most relevant feature of $\tilde{f}$ is that its support is spacelike compact. 
In fact, on account of the spacelike compact support of $f$, one can find a compact subset $K$ of $M_\Sigma$ 
such that the support of $f$ lies inside $J_{M_\Sigma}(K)$. 
Since $\{0\}\times\Sigma$ is a Cauchy hypersurface of $M_\Sigma$, 
the intersection $\tilde{K}$ between $\{0\}\times\Sigma$ and $J_{M_\Sigma}(K)$ is compact 
and therefore $J_{M_\Sigma}(\tilde{K})$ is spacelike compact. 
If we can prove that $\tilde{f}$ vanishes outside $J_{M_\Sigma}(\tilde{K})$, 
then $\tilde{f}$ has spacelike compact support as argued. 
Given a point $(t,x)\in M_\Sigma\setminus J_{M_\Sigma}(\tilde{K})$, 
we can consider the timelike curve $\gamma:s\in[0,t]\mapsto(s,x)\in M_\Sigma$ 
and deduce that the curve $\gamma$ does not meet $\tilde{K}$. 
By construction $J_{M_\Sigma}(K)$ includes the support of $f$ and is contained in $J_{M_\Sigma}(\tilde{K})$. 
It follows that $f$ vanishes along the curve $\gamma$, that is to say $f(s,x)=0$ for each $s\in[0,t]$. 
Recalling the formula which gives the value of $\tilde{f}$ at the point $(t,x)$, we get $\tilde{f}(t,x)=0$. 
Therefore we conclude that $\tilde{f}$ vanishes outside $J_{M_\Sigma}(\tilde{K})$, 
hence the support of $\tilde{f}$ is actually spacelike compact. 
 
Bearing this fact in mind, one defines the candidate for the cochain homotopy 
between $\pi^\ast s^\ast$ and $\id_{\fsc^\ast(M_\Sigma)}$ as follows: 
\begin{align}\label{eqSCHomotopy}
P:\fsc^k(M_\Sigma) & \to\fsc^{k-1}(M_\Sigma)\,,\\
(\pi^\ast\phi)\,f & \mapsto0\,,\\
(\pi^\ast\psi)\,h\dd t & \mapsto(-1)^k(\pi^\ast\psi)\,\tint_0^\cdot h(s,\cdot)\dd s\,.
\end{align}
The next lemma shows that $P$ is indeed the sought cochain homotopy. 

\begin{lemma}\label{lemSCHomotopy}
The map $P:\fsc^k(M_\Sigma)\to\fsc^{k-1}(M_\Sigma)$ defined according to\ \eqref{eqSCHomotopy}
provides a cochain homotopy between $\pi^\ast s^\ast$ and $\id_{\fsc^\ast(M_\Sigma)}$, 
namely $\pi^\ast s^\ast-\id_{\fsc^k(M_\Sigma)}=\dd\,P+P\,\dd$ for each $k\in\{0,\dots,m\}$. 
\end{lemma}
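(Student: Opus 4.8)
The plan is to follow almost verbatim the proof of Lemma~\ref{lemTCHomotopy}: fix an oriented atlas for $\Sigma$ and extend it along the time factor to an atlas for $M_\Sigma$, so that every spacelike compact $k$-form is a linear combination of forms of type $1_\mathrm{sc}$ and $2_\mathrm{sc}$, {\em cfr.}~\eqref{eqType1SC} and~\eqref{eqType2SC}. Since $\dd\,P+P\,\dd$ and $\pi^\ast s^\ast-\id_{\fsc^\ast(M_\Sigma)}$ are both $\bbR$-linear and local, it is enough to check the identity on these two types separately. The only analytic ingredient is differentiation under the integral sign applied to $\int_0^\cdot h(s,\cdot)\dd s$, which causes no trouble since here one integrates only over the compact interval $[0,t]$ with a smooth integrand; unlike the timelike compact case, no support condition on $h(\cdot,x)$ is needed for the integral to make sense, and the spacelike compactness of the resulting form was already established right before the statement.

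On forms of type $1_\mathrm{sc}$ the map $P$ vanishes, so only $P\,\dd$ contributes; expanding $\dd\big((\pi^\ast\phi)f\big)=(\pi^\ast\dd\phi)\,f+(-1)^k(\pi^\ast\phi)\,\dd f$ and retaining only the summand proportional to $\dd t$ (the sole one surviving $P$), the fundamental theorem of calculus gives $\int_0^t\partial_s f(s,x)\,\dd s=f(t,x)-f(0,x)$, and together with $s^\ast f=f(0,\cdot)$ and $s^\ast\pi^\ast=\id$, {\em cfr.}~\eqref{eqSectProj}, one recovers exactly $\big(\pi^\ast s^\ast-\id\big)\big((\pi^\ast\phi)f\big)$. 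On forms of type $2_\mathrm{sc}$ I would compute $\dd\,P\big((\pi^\ast\psi)h\dd t\big)$ and $P\,\dd\big((\pi^\ast\psi)h\dd t\big)$ separately, using $\dd\big(\int_0^\cdot h(s,\cdot)\dd s\big)=h\,\dd t+\dd x^i\int_0^\cdot\partial_i h(s,\cdot)\dd s$ in the former and the expansion $\dd\big((\pi^\ast\psi)h\dd t\big)=(\pi^\ast\dd\psi)\,h\wedge\dd t+(-1)^{k-1}(\pi^\ast\psi)(\partial_i h)\,\dd x^i\wedge\dd t$ in the latter. Adding the two contributions, the terms carrying $(\pi^\ast\dd\psi)\int_0^\cdot h\,\dd s$ cancel against each other, and so do those carrying $(\pi^\ast\psi)\,\dd x^i\int_0^\cdot\partial_i h\,\dd s$, leaving only $-(\pi^\ast\psi)h\dd t$. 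Since $t\circ s=0$ forces $s^\ast\dd t=0$, hence $s^\ast\big((\pi^\ast\psi)h\dd t\big)=0$ and $\pi^\ast s^\ast\big((\pi^\ast\psi)h\dd t\big)=0$, this equals $\big(\pi^\ast s^\ast-\id\big)\big((\pi^\ast\psi)h\dd t\big)$, which finishes the verification.

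I do not anticipate any genuine obstacle here: the whole proof reduces to careful sign bookkeeping in the type $2_\mathrm{sc}$ case and to invoking the support analysis of $\tilde f$ performed just above the statement; the only item that truly needs a word of justification is that $P$, and hence $\dd\,P$ and $P\,\dd$, really land among spacelike compact forms, and that has already been taken care of.
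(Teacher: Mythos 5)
Your proposal is correct and coincides with the paper's own proof: the same decomposition into forms of type $1_\mathrm{sc}$ and $2_\mathrm{sc}$ in an adapted atlas, the fundamental theorem of calculus for the type $1_\mathrm{sc}$ case, the separate computation of the $\dd\,P$ and $P\,\dd$ contributions with the interchange of $\partial_i$ and $\int_0^\cdot$ for the type $2_\mathrm{sc}$ case, and the observation $s^\ast\dd t=0$. The sign bookkeeping and the appeal to the previously established spacelike compactness of $\tilde f$ are exactly as in the paper.
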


\begin{proof}
The proof is an explicit computation performed choosing an oriented atlas for $\Sigma$ 
and extending it to an atlas for $M_\Sigma$. We will first consider forms of type $1_\mathrm{sc}$ 
and then forms of type $2_\mathrm{sc}$. 

Let us consider a $k$-form $(\pi^\ast\phi)f$ of type $1_\mathrm{sc}$, \eqref{eqType1SC}. 
Bearing in mind the identity $\int_0^t\partial_tf(s,x)\dd s=f(t,x)-f(0,x)=(f-\pi^\ast s^\ast f)(t,x)$ 
for all $(t,x)\in M_\Sigma$, 
one gets the following chain of identities: 
\begin{align*}
(\dd\,P+P\,\dd)\big((\pi^\ast\phi)f\big) & =P\big((\pi^\ast\dd\phi)\,f+(-1)^k(\pi^\ast\phi)\,\dd f\big)\\
& =-(\pi^\ast\phi)\,\int_0^\cdot\partial_tf(s,\cdot)\dd s\\
& =(\pi^\ast\phi)\,(\pi^\ast s^\ast f-f)\\
& =\big(\pi^\ast s^\ast-\id_{\fsc^k(M_\Sigma)}\big)\big((\pi^\ast\phi)\,f\big)\,.
\end{align*}
Note that for the last equality we exploited eq.\ \eqref{eqSectProj}. 

For $k$-forms of the type $2_\mathrm{sc}$, \eqref{eqType2SC}, the computation is a bit more involved, 
therefore we prefer to compute each contribution separately. 
Only in the end we will put all terms together to get the final result. 
For the $\dd\,P$-term we have the following expression: 
\begin{align*}
\dd\,P & \big((\pi^\ast\psi)\,h\dd t\big)=(-1)^k\dd\left((\pi^\ast\psi)\,\int_0^\cdot h(s,\cdot)\dd s\right)\\
& =(-1)^k(\pi^\ast\dd\psi)\,\int_0^\cdot h(s,\cdot)\dd s
-(\pi^\ast\psi)\,\left(\dd x^i\partial_i\int_0^\cdot h(s,\cdot)\dd s+h\dd t\right)\,.
\end{align*}
As a second step we compute the $P\,\dd$-term explicitly: 
\begin{align*}
P\,\dd\big((\pi^\ast\psi)\,h\dd t\big)
& =P\,\big((\pi^\ast\dd\psi)\,h\dd t+(-1)^{k-1}(\pi^\ast\psi)\,\dd x^i\,\partial_ih\dd t\big)\\
& =(-1)^{k+1}(\pi^\ast\dd\psi)\,\int_0^\cdot h(s,\cdot)\dd s
+(\pi^\ast\psi)\,\dd x^i\,\int_0^\cdot\partial_ih(s,\cdot)\dd s\,.
\end{align*}
Exploiting the possibility to interchange the order 
in which the integral along the time direction and the spatial derivatives are performed 
and bearing in mind that $s^\ast\dd t=0$, we come to the conclusion of the proof: 
\begin{align*}
(\dd\,P+P\,\dd)\big((\pi^\ast\psi)\,h\dd t\big) & =-(\pi^\ast\psi)\,h\dd t
=(\pi^\ast s^\ast-\id_{\fsc^k(M_\Sigma)})\big((\pi^\ast\psi)\,h\dd t\big)\,.
\end{align*}
\end{proof}

Exploiting the last lemma, we can exhibit an isomorphism between the spacelike compact cohomology groups 
$\hscdd^\ast(M)$ of a globally hyperbolic spacetime $M$ and the cohomology groups $\hcdd^\ast(\Sigma)$ 
with compact support of a Cauchy hypersurface $\Sigma$ for $M$. 

\begin{theorem}\label{thmSCCohomology}
Let $M$ be a globally hyperbolic spacetime and consider a Cauchy hypersurface $\Sigma$ for $M$. 
Then $\pi^\ast:\fc^\ast(\Sigma)\to\fsc^\ast(M_\Sigma)$ and $s^\ast:\fsc^\ast(M_\Sigma)\to\fc^\ast(\Sigma)$, 
introduced in\ \eqref{eqSCCochainMaps}, induce isomorphisms in cohomology: 
\begin{equation*}
\xymatrix{\hscdd^\ast(M)\ar@/^1.5pc/[r]^{s^\ast} & \hcdd^\ast(\Sigma)\ar@/^1.5pc/[l]^{\pi^\ast}}\,.
\end{equation*}
\end{theorem}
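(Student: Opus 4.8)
The plan is to reproduce, \emph{mutatis mutandis}, the argument used for the timelike compact case in Theorem~\ref{thmTCCohomology}, assembling the pieces already prepared above. First I would invoke Theorem~\ref{thmGlobHyp} to replace $M$ by an isometric globally hyperbolic spacetime $M_\Sigma$ whose underlying manifold is the product $\bbR\times\Sigma$, with $\Sigma$ the chosen Cauchy hypersurface. Since such an isometry is in particular a diffeomorphism mapping the support system of spacelike compact regions onto itself, it induces an isomorphism $\hscdd^\ast(M)\simeq\hscdd^\ast(M_\Sigma)$. This reduces the problem to showing that $\pi^\ast$ and $s^\ast$ are mutually inverse isomorphisms between $\hcdd^\ast(\Sigma)$ and $\hscdd^\ast(M_\Sigma)$.

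For that, recall from the discussion preceding the statement that $\pi^\ast:\fc^\ast(\Sigma)\to\fsc^\ast(M_\Sigma)$ and $s^\ast:\fsc^\ast(M_\Sigma)\to\fc^\ast(\Sigma)$ are well-defined cochain maps — the support behaviour having already been verified, namely that $\pi^{-1}(K)\subseteq J_{M_\Sigma}(\{0\}\times K)$ is spacelike compact for $K\subseteq\Sigma$ compact, and that a spacelike compact region meets the Cauchy hypersurface $\{0\}\times\Sigma$ in a compact set. Moreover $s^\ast\pi^\ast=\id_{\fc^\ast(\Sigma)}$ by eq.~\eqref{eqSectProj}; passing to cohomology this already gives $s^\ast\pi^\ast=\id$ on $\hcdd^\ast(\Sigma)$, so $\pi^\ast$ is injective and $s^\ast$ surjective at the cohomological level.

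The only remaining point is that $\pi^\ast s^\ast=\id$ on $\hscdd^\ast(M_\Sigma)$, and this is exactly the content of Lemma~\ref{lemSCHomotopy}: the map $P$ of~\eqref{eqSCHomotopy} is a cochain homotopy between $\pi^\ast s^\ast$ and $\id_{\fsc^\ast(M_\Sigma)}$, and a cochain homotopy forces the two cochain maps to induce the same homomorphism in cohomology — concretely, for a closed spacelike compact form $\omega$ one has $\pi^\ast s^\ast\omega-\omega=\dd(P\omega)+P(\dd\omega)=\dd(P\omega)$, which is exact, so the contribution of $P$ dies in $\hscdd^\ast(M_\Sigma)$. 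Combining $s^\ast\pi^\ast=\id$ and $\pi^\ast s^\ast=\id$ in cohomology shows that $\pi^\ast$ and $s^\ast$ are inverse to one another, and composing with the isometry-induced isomorphism $\hscdd^\ast(M)\simeq\hscdd^\ast(M_\Sigma)$ yields the claimed isomorphisms.

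I do not expect a genuine obstacle here: unlike some of the earlier lemmas, all the analytic input — in particular the verification that $\tilde f(t,x)=\int_0^t f(s,x)\,\dd s$ has spacelike compact support whenever $f$ does, so that $P$ really lands in $\fsc^{\ast-1}(M_\Sigma)$ — has already been carried out before the statement. The proof is therefore a short piece of homological bookkeeping; the one thing worth flagging explicitly is that $\pi^\ast$ and $s^\ast$, viewed as maps involving $M$ rather than $M_\Sigma$, are defined only after the identification furnished by Theorem~\ref{thmGlobHyp}, exactly as $i$ and $e$ were in the timelike compact case.
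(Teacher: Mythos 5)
Your proposal is correct and follows the paper's own proof essentially verbatim: reduce to $M_\Sigma$ via the isometry from Theorem~\ref{thmGlobHyp}, use eq.~\eqref{eqSectProj} to get $s^\ast\pi^\ast=\id$ on the nose, and use the cochain homotopy of Lemma~\ref{lemSCHomotopy} to get $\pi^\ast s^\ast=\id$ in cohomology. Nothing is missing.
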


\begin{proof}
From Theorem\ \ref{thmGlobHyp} we get an isometry between $M$ 
and the explicitly foliated globally hyperbolic spacetime $M_\Sigma$, 
whose underlying manifold is the Cartesian product $\bbR\times\Sigma$. 
On account of this isometry, there is an isomorphism $\hscdd^\ast(M)\simeq\hscdd^\ast(M_\Sigma)$. 
From\ \eqref{eqSectProj} we read that the cochain map $s^\ast:\fsc^\ast(M_\Sigma)\to\fc^\ast(\Sigma)$ 
is a left inverse of the cochain map $\pi^\ast:\fc^\ast(\Sigma)\to\fsc^\ast(M_\Sigma)$. 
In particular $s^\ast\pi^\ast$ induces the identity map in $\hcdd^\ast(\Sigma)$. 
Furthermore, Lemma\ \ref{lemSCHomotopy} shows 
that the cochain maps $\pi^\ast s^\ast$ and $\id_{\fsc^k(M_\Sigma)}$ are the same up to a cochain homotopy. 
Thus $\pi^\ast s^\ast$ induces the identity map in $\hscdd^\ast(\Sigma)$. This concludes the proof. 
\end{proof}

\subsubsection{Poincar\'e duality for causally restricted cohomology}
To complete our analysis on causally restricted de Rham cohomology groups, 
we present an analogue of the usual Poincar\'e duality between de Rham cohomology groups $\hdd^\ast(M)$ 
and de Rham cohomology groups $\hcdd^\ast(M)$ with compact support, see\ Theorem\ \ref{thmPoincareDuality}. 
This modified version relates cohomology groups with timelike compact support $\htcdd^\ast(M)$ 
to cohomology groups with spacelike compact support $\hscdd^{m-\ast}(M)$ 
on an $m$-dimensional globally hyperbolic spacetime $M$. 

As a starting point we observe the following fact: On an $m$-dimensional globally hyperbolic spacetime $M$, 
both eq.\ \eqref{eqPairing1} and eq.\ \eqref{eqPairing2} are well-defined 
for differential forms with spacelike compact support in one argument 
and with timelike compact support in the other argument. 
In fact, by Definition\ \ref{defSupportSystems} the intersection 
between a spacelike compact region and a timelike compact one is always a compact set. 
Bearing in mind that compact sets are both spacelike compact and timelike compact 
and that it is enough to consider forms with compact support to prove non-degeneracy 
even in the general case where no restriction is imposed on the support of the other argument of the pairing, 
it turns out that both pairings\ \eqref{eqPairing1} and \eqref{eqPairing2} are non-degenerate 
when only spacelike compact and timelike compact forms are taken into account. 

In \eqref{eqPoincarePairing} we were able to pair cohomology with compact support 
to cohomology with arbitrary support using Stokes' theorem. 
Adopting exactly the same approach, one can exploit Theorem\ \ref{thmStokes} 
to prove that $\la\cdot,\cdot\ra:\fsc^k(M)\times\ftc^{m-k}(M)\to\bbR$, 
and $(\cdot,\cdot):\fsc^k(M)\times\ftc^k(M)\to\bbR$ descend to cohomologies 
with spacelike and timelike compact support 
($\dd$-cohomology groups in both arguments for the first pairing, while $\de$-cohomology groups in one argument 
and $\dd$-cohomology groups in the other argument for the second pairing). 
As an example, take $\alpha\in\fsc^{k-1}(M)$ and $\beta\in\ftcdd^{m-k}(M)$, 
that is to say $\dd\beta=0$. Bearing in mind that $\alpha\wedge\beta$ has compact support, 
which in turn allows us to apply Stokes' theorem, we get 
\begin{equation*}
\la\dd\alpha,\beta\ra=\int_M\dd\alpha\wedge\beta=\int_M\dd(\alpha\wedge\beta)=0\,.
\end{equation*}
Summing up, we are interested in the following pairings 
between spacelike compact and timelike compact cohomology groups: 
\begin{subequations}\label{eqSCTCPairing}
\begin{align}
\la\cdot,\cdot\ra & :\hscdd^k(M)\times\htcdd^{m-k}(M)\to\bbR\,,\\
{}_\de(\cdot,\cdot) & :\hscde^k(M)\times\htcdd^k(M)\to\bbR\,,\\
(\cdot,\cdot)_\de & :\hscdd^k(M)\times\htcde^k(M)\to\bbR\,.
\end{align}
\end{subequations}

Our aim is to exhibit an analogue of Poincar\'e duality in the case of causally restricted cohomologies. 
Note that it is sufficient to prove non-degeneracy for one of the pairings listed above. 
In fact, the others are related to the chosen one via Hodge duality $\ast$. 
Our choice is to prove non-degeneracy for the first pairing in \eqref{eqSCTCPairing}. 
To achieve this result we prove the following preliminary lemma. 

\begin{lemma}\label{lemEquivalence}
Let $M$ be an $m$-dimensional globally hyperbolic spacetime and consider a Cauchy hypersurface $\Sigma$ for $M$. 
Recalling Theorem\ \ref{thmSCCohomology} and Theorem\ \ref{thmTCCohomology}, 
consider the isomorphisms $\pi^\ast:\hcdd^\ast(\Sigma)\to\hscdd^\ast(M)$ 
(pull-back along the projection to the Cauchy hypersurface)
and $e:\hdd^{m-1-\ast}(\Sigma)\to\htcdd^{m-\ast}(M)$ (time-extension). 
Then $\big\la\pi^\ast[\phi],e\,[\psi]\big\ra=\big\la[\phi],[\psi]\big\ra$ 
for each $[\phi]\in\hcdd^k(\Sigma)$ and $[\psi]\in\hdd^{m-1-k}(\Sigma)$, $k\in\{0,\dots,m-1\}$. 
Note that the pairing on the left-hand-side is given by\ \eqref{eqSCTCPairing}, 
while on the right-hand-side we have the pairing 
defined in\ \eqref{eqPoincarePairing} for the oriented manifold $\Sigma$. 
\end{lemma}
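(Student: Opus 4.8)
The plan is to transport everything to the explicitly foliated model and then compute both sides of the identity by integrating along the time factor. First I would use Theorem~\ref{thmGlobHyp} to replace $M$ by the isometric globally hyperbolic spacetime $M_\Sigma$ whose underlying manifold is $\bbR\times\Sigma$; since an orientation- and time-orientation-preserving isometry intertwines the wedge product and the integration map, it preserves the pairing $\la\cdot,\cdot\ra$ of~\eqref{eqSCTCPairing} as well as the maps $\pi^\ast$ and $e$ (which on $M$ are defined through precisely such an isometry, see Theorems~\ref{thmSCCohomology} and~\ref{thmTCCohomology}). Hence it is enough to establish the identity on $M_\Sigma$, where the projection $\pi$, the time coordinate $t$, the closed $1$-form $\omega=(t^\ast a)\,\dd t$ with $\tint_\bbR a=1$, and the cochain maps $\pi^\ast$ of~\eqref{eqSCCochainMaps} and $e$ of~\eqref{eqTimeExt} are all at our disposal.

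Next I would pass to convenient representatives. Let $\phi\in\fcdd^k(\Sigma)$ and $\psi\in\fdd^{m-1-k}(\Sigma)$ be closed forms representing $[\phi]$ and $[\psi]$. Then $\pi^\ast\phi\in\fscdd^k(M_\Sigma)$ represents $\pi^\ast[\phi]$ and $e\,\psi=(\pi^\ast\psi)\,\omega\in\ftcdd^{m-k}(M_\Sigma)$ represents $e\,[\psi]$, both being closed since $\phi$, $\psi$ and $\omega$ are. Their supports satisfy $\supp(\pi^\ast\phi)\cap\supp(e\,\psi)\subseteq\pi^{-1}\big(\supp\phi\cap\supp\psi\big)\cap t^{-1}(\supp a)$, which is compact because the intersection of a spacelike compact set with a timelike compact one is compact. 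Since the pairing~\eqref{eqSCTCPairing} descends to cohomology, its value on $\pi^\ast[\phi]$ and $e\,[\psi]$ is computed on these representatives, namely $\la\pi^\ast[\phi],e\,[\psi]\ra=\tint_{M_\Sigma}\pi^\ast\phi\wedge e\,\psi$, in accordance with~\eqref{eqPairing1}.

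The computation itself is then immediate. Using $\pi^\ast\phi\wedge\pi^\ast\psi=\pi^\ast(\phi\wedge\psi)$ and noting that $\phi\wedge\psi\in\f^{m-1}(\Sigma)$ is of top degree on $\Sigma$, we get $\pi^\ast\phi\wedge e\,\psi=\pi^\ast(\phi\wedge\psi)\wedge(t^\ast a)\,\dd t$. With the product orientation on $\bbR\times\Sigma$ (fixed so as to be consistent with the definition of $e$ and of the pairing, i.e. so that no spurious sign arises), Fubini's theorem for this compactly supported top form on the product yields $\tint_{M_\Sigma}\pi^\ast(\phi\wedge\psi)\wedge(t^\ast a)\,\dd t=\Big(\tint_\bbR a(s)\,\dd s\Big)\,\tint_\Sigma\phi\wedge\psi=\tint_\Sigma\phi\wedge\psi$, where we used the normalization $\tint_\bbR a=1$. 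By~\eqref{eqPairing1} applied on the oriented manifold $\Sigma$, the right-hand side equals $\la[\phi],[\psi]\ra$, which is exactly the claimed identity.

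The only genuinely delicate issues are of a bookkeeping nature: verifying that the relevant supports overlap in a compact set — so that all integrals, and the fact that the pairing descends to cohomology, are legitimate — and pinning down the orientation convention on $\bbR\times\Sigma$ together with the position of $\dd t$ in $\omega$ so that the two sides match on the nose rather than up to a sign $(-1)^{m-1}$. Once these conventions are fixed to agree with those used in the definitions of $e$ and of $\la\cdot,\cdot\ra$, the substantive content of the lemma is simply Fubini's theorem on the product together with $\tint_\bbR a=1$.
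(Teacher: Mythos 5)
Your proposal is correct and follows essentially the same route as the paper's proof: pass to the foliated model $M_\Sigma$, represent $\pi^\ast[\phi]$ and $e\,[\psi]$ by $\pi^\ast\phi$ and $(\pi^\ast\psi)\,\omega$, and reduce the integral over $M_\Sigma$ to $\tint_\Sigma\phi\wedge\psi\cdot\tint_\bbR a=\la[\phi],[\psi]\ra$ via the time–space factorization. Your extra remarks on compactness of the overlapping supports and on the orientation convention are sound bookkeeping that the paper leaves implicit.
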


\begin{proof}
Denote with $M_\Sigma$ the foliation of the globally hyperbolic spacetime $M$ 
induced by the given Cauchy hypersurface $\Sigma$ for $M$ according to Theorem\ \ref{thmGlobHyp}. 
Consider any $[\phi]\in\hcdd^k(\Sigma)$ and $[\psi]\in\hdd^{m-k-1}(M)$ for arbitrary $k\in\{0,\dots,m-1\}$. 
Recalling the definitions of the maps $\pi^\ast:\fc^k(\Sigma)\to\fsc^k(M_\Sigma)$, \eqref{eqProj}, 
and $e:\f^{m-k-1}(\Sigma)\to\ftc^{m-k}(M_\Sigma)$, \eqref{eqTimeExt}, 
it is possible to perform the following calculation taking advantage of the \quotes{time-space} factorization: 
\begin{equation*}
\big\la\pi^\ast[\phi],e[\psi]\big\ra=\int_{M_\Sigma}(\pi^\ast\phi)\wedge\big((\pi^\ast\psi)\wedge\omega\big)
=\int_\Sigma\phi\wedge\psi\;\int_\bbR a(s)\dd s=\big\la[\phi],[\psi]\big\ra\,,
\end{equation*}
where we exploited also the normalization of $a$, $\tint_\bbR a(s)\dd s=1$. 
\end{proof}

The last lemma, together with the standard version of Poincar\'e duality, Theorem\ \ref{thmPoincareDuality}, 
provides the counterpart of Poincar\'e duality for causally restricted cohomology. 
This is the content of the next theorem. 

\begin{theorem}\label{thmSCTCPoincareDuality}
Let $M$ be an $m$-dimensional globally hyperbolic spacetime of finite type. 
Then the pairing $\la\cdot,\cdot\ra$ between $\hscdd^k(M)$ and $\htcdd^{m-k}(M)$ is non-degenerate. 
Therefore the same holds true for ${}_\de(\cdot,\cdot)$ between $\hscde^k(M)$ and $\htcdd^k(M)$ 
and $(\cdot,\cdot)_\de$ between $\hscdd^k(M)$ and $\htcde^k(M)$ as well. 
\end{theorem}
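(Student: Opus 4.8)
The plan is to transport the whole question along the cohomology isomorphisms already established and reduce it to the classical Poincar\'e duality of Theorem~\ref{thmPoincareDuality}, applied to a Cauchy hypersurface $\Sigma$ of $M$ regarded as an oriented manifold of dimension $m-1$. As in the discussion following Theorem~\ref{thmPoincareDuality}, it is enough to treat the first pairing in \eqref{eqSCTCPairing}: the Hodge star $\ast$ induces isomorphisms $\hscde^k(M)\simeq\hscdd^{m-k}(M)$ and $\htcde^k(M)\simeq\htcdd^{m-k}(M)$ which intertwine ${}_\de(\cdot,\cdot)$ and $(\cdot,\cdot)_\de$ with $\la\cdot,\cdot\ra$ up to sign, so non-degeneracy of the first pairing immediately yields it for the other two.

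Concretely, I would fix a Cauchy hypersurface $\Sigma$ for $M$. For $k\in\{0,\dots,m-1\}$, Theorem~\ref{thmSCCohomology} provides an isomorphism $\pi^\ast\colon\hcdd^k(\Sigma)\to\hscdd^k(M)$ and Theorem~\ref{thmTCCohomology} provides an isomorphism $e\colon\hdd^{m-1-k}(\Sigma)\to\htcdd^{m-k}(M)$, while Lemma~\ref{lemEquivalence} says precisely that $\la\pi^\ast[\phi],e\,[\psi]\ra=\la[\phi],[\psi]\ra$ for all $[\phi]\in\hcdd^k(\Sigma)$ and $[\psi]\in\hdd^{m-1-k}(\Sigma)$, the left-hand side being the pairing \eqref{eqSCTCPairing} on $M$ and the right-hand side the classical Poincar\'e pairing \eqref{eqPoincarePairing} on $\Sigma$. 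Since $\pi^\ast$ and $e$ are bijective, the pairing $\la\cdot,\cdot\ra\colon\hscdd^k(M)\times\htcdd^{m-k}(M)\to\bbR$ is non-degenerate if and only if $\la\cdot,\cdot\ra\colon\hcdd^k(\Sigma)\times\hdd^{m-1-k}(\Sigma)\to\bbR$ is. For the remaining value $k=m$ there is nothing to prove, since $\hscdd^m(M)$ and $\htcdd^0(M)$ both vanish by Remark~\ref{remSCCohoTriv} and Remark~\ref{remTCCohoTriv}.

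It then remains to check that Theorem~\ref{thmPoincareDuality} applies to $\Sigma$. The manifold $\Sigma$ is orientable: its normal line bundle inside the oriented, time-oriented spacetime $M$ is trivialised by the chosen time-orientation, so the splitting $TM|_\Sigma\simeq T\Sigma\oplus\nu$ endows $\Sigma$ with a canonical orientation. Furthermore $\Sigma$ is of finite type: $M$ is of finite type by hypothesis and, by Theorem~\ref{thmGlobHyp}, $M$ is diffeomorphic to $\bbR\times\Sigma$, from which one deduces (with some care) that $\Sigma$ is of finite type as well. Granting this, Theorem~\ref{thmPoincareDuality} on the $(m-1)$-dimensional manifold $\Sigma$ gives non-degeneracy of the classical pairing, hence of $\la\cdot,\cdot\ra$ between $\hscdd^k(M)$ and $\htcdd^{m-k}(M)$, and the Hodge-star argument of the first paragraph finishes the proof.

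The only genuinely delicate point I anticipate is the transfer of the finite-type property from $M$ to its Cauchy hypersurface $\Sigma$; once this is in hand the statement is essentially a translation of Theorem~\ref{thmPoincareDuality} through Lemma~\ref{lemEquivalence}, so I would expect this to be the main obstacle, everything else being bookkeeping of degrees and of the signs produced by $\ast$.
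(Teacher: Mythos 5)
Your proposal is correct and follows essentially the same route as the paper's own proof: fix a Cauchy hypersurface $\Sigma$, transport the pairing through the isomorphisms of Theorem~\ref{thmSCCohomology} and Theorem~\ref{thmTCCohomology} via Lemma~\ref{lemEquivalence}, invoke Theorem~\ref{thmPoincareDuality} on $\Sigma$ (which inherits the finite good cover from $M$), handle $k=m$ via Remarks~\ref{remSCCohoTriv} and~\ref{remTCCohoTriv}, and deduce the other two pairings by Hodge duality. The point you flag as delicate — that $\Sigma$ is of finite type — is treated in the paper exactly as you do, by asserting that a finite good cover of $M\simeq\bbR\times\Sigma$ yields one for $\Sigma$.
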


\begin{proof}
Choose a Cauchy hypersurface $\Sigma$ of $M$. According to the hypotheses, 
$M$ admits a finite good cover, therefore so does $\Sigma$. 
As a consequence of this fact, the pairing between $\hcdd^k(\Sigma)$ and $\hdd^{m-k-1}(\Sigma)$ 
is non-degenerate for each $k\in\{0,\dots,m-1\}$, 
see Theorem\ \ref{thmPoincareDuality} applied to the oriented manifold $\Sigma$. 
From Lemma\ \ref{lemEquivalence}, we deduce that the pairing between the causally restricted 
cohomology groups $\hscdd^k(M)$ and $\htcdd^{m-k}(M)$ is equivalent to the one between 
the standard cohomology groups $\hcdd^k(\Sigma)$ and $\hdd^{m-k-1}(\Sigma)$ for each $k\in\{0,\dots,m-1\}$. 
For definiteness, we remind the reader that this result is achieved via time-extension on the first argument 
and pull-back along the projection onto the Cauchy hypersurface on the second argument. 
Therefore, non-degeneracy of the pairing between $\hcdd^k(\Sigma)$ and $\hdd^{m-k-1}(\Sigma)$ 
carries over to the pairing between $\hscdd^k(M)$ and $\htcdd^{m-k}(M)$ for each $k\in\{0,\dots,m-1\}$. 
For $k=m$, recalling Remark\ \ref{remSCCohoTriv} and\ Remark\ \ref{remTCCohoTriv}, 
we have $\hscdd^m(M)=\{0\}$ and $\htcdd^0(M)=\{0\}$, 
whence the pairing is obviously non degenerate in this case as well. 
\end{proof}

\begin{remark}\label{remSCTCPoincareDualityImproved}
Recalling Remark \ref{remPoincareDualityImproved}, one can obtain a partial positive result 
under slightly weaker hypotheses. In fact, even when $M$ fails to be of finite type, one can exhibit an isomorphism 
$\htcdd^{m-\ast}(M)\to\big(\hscdd^\ast(M)\big)^\ast$ defined by $[\beta]\mapsto\la\cdot,[\beta]\ra$ 
exploiting the map $\hdd^{m-1-\ast}(\Sigma)\to\big(\hcdd^\ast(\Sigma)\big)^\ast$ 
defined by $[\psi]\mapsto\la\cdot,[\psi]\ra$, which is always an isomorphism 
on account of Remark\ \ref{remPoincareDualityImproved}. 
We get similar results for the other pairings listed in \eqref{eqSCTCPairing}. 
\end{remark}

\subsection{Dynamics of the Hodge-d'Alembert operator}\label{subDynamics}
In this subsection certain features of the Hodge-d'Alembert partial differential operator are presented. 
In particular, we collect here some well-known facts about Green-hyperbolic differential operators 
on globally hyperbolic spacetimes specializing them to the case of interest. 
As general references we adopt\ \cite{BGP07, Fri75}. 
Note that some of the forthcoming statements rely on the extensions presented in\ \cite{Bar14, San13, Kha14b}. 

\begin{definition}\index{Hodge-d'Alembert operator}
Let $(M,g,\mfo)$ be an $m$-dimensional oriented Lorentzian manifold. 
The {\em Hodge-d'Alembert operator} $\Box:\f^k(M)\to\f^k(M)$ 
acting on differential forms of degree $k$ over $M$ is defined by $\Box=\de\dd+\dd\de$, 
$\dd$ and $\de$ being respectively the differential and the codifferential on $M$. 
\end{definition}

Notice that we are going to denote the Hodge-d'Alembert differential operator with the same symbol 
regardless of the degree of the form upon which it acts. 
Nevertheless, if relevant, it will be clear from the context which degree is considered. 

\index{differential operator!formally self-adjoint}It is fairly easy to check that $\Box$ is a {\em formally self-adjoint} 
differential operator with respect to the pairing $(\cdot,\cdot)$ defined in\ \eqref{eqPairing2} 
for $k$-forms with compact overlapping support. In fact, this directly follows from Stokes' theorem: 
For each $\alpha,\beta\in\f^k(M)$ with compact overlapping support, exploiting\ \eqref{eqAdjoint}, one reads: 
\begin{equation*}
(\Box\alpha,\beta)=(\de\dd\alpha,\beta)+(\dd\de\alpha,\beta)=(\alpha,\de\dd\beta)+(\alpha,\dd\de\beta)
=(\alpha,\Box\beta)\,.
\end{equation*}
\index{differential operator!normally hyperbolic}Furthermore, $\Box$ turns out to be {\em normally hyperbolic}. 
This amounts to say that $\Box$ is a second order linear differential operator 
whose principal symbol is of metric type. For further details, see\ \cite[Section\ 1.5]{BGP07}. 

We are interested in the dynamics which is described via hyperbolic partial differential equations such as 
the one expressed in terms of the Hodge-d'Alembert operator $\Box$ on a globally hyperbolic spacetime $M$. 
In fact, in this situation initial value problems ruled by a normally hyperbolic differential operator 
(hence $\Box$ in particular) are globally well-posed. This is the feature which motivates the physical interest 
towards both globally hyperbolic spacetimes and normally hyperbolic partial differential equations. 
More details about this topic can be found in\ \cite[Chapter 3]{BGP07}. 
\index{Green operators}For our aims, it is enough to mention that, being a normally hyperbolic differential operator 
on a globally hyperbolic spacetime $M$, $\Box$ admits unique {\em retarded and advanced Green operators} 
$G^+,G^-:\fc^k(M)\to\f^k(M)$ (as for $\Box$, we use the same symbol regardless of the degree $k$). 
These linear maps are uniquely specified by their properties, which are listed below: 
\begin{enumerate}
\item[1.] $\Box G^\pm\alpha=\alpha$ for each $\alpha\in\fc^k(M)$;
\item[2.] $G^\pm\Box\alpha=\alpha$ for each $\alpha\in\fc^k(M)$;
\item[3.] $\supp(G^\pm\alpha)\subseteq J_M^\pm\big(\supp(\alpha)\big)$ for each $\alpha\in\fc^k(M)$.
\end{enumerate}
The third property allows one to consider retarded and advanced Green operators 
as mapping to subspaces of forms with past spacelike compact and respectively future spacelike compact support: 
\begin{align*}
G^+:\fc^k(M)\to\fpsc^k(M)\,, && G^-:\fc^k(M)\to\ffsc^k(M)\,.
\end{align*}

Let us also mention that formal self-adjointness of $\Box$, together with the properties listed above, entails that 
the retarded and advanced Green operators $G^+$ and $G^-$ are formal adjoints of each other 
with respect to\ \eqref{eqPairing2}. Specifically, given $\alpha,\beta\in\fc^k(M)$, one has 
\begin{equation}\label{eqGreenAdjoint}
(G^\pm\alpha,\beta)=(G^\pm\alpha,\Box G^\mp\beta)=(\Box G^\pm\alpha,G^\mp\beta)=(\alpha,G^\mp\beta)\,.
\end{equation}

\index{causal propagator}Using the retarded and advanced Green operators $G^\pm$ for $\Box$ 
on a globally hyperbolic spacetime $M$, one can introduce the {\em causal propagator}:
\begin{equation}\label{eqCausalProp}
G=G^+-G^-:\fc^k(M)\to\fsc^k(M)\,.
\end{equation}
Notice that we used the observation that the retarded and advanced Green operators map compactly 
supported forms to forms with past and respectively future spacelike compact support to conclude that 
the causal propagator maps compactly supported forms to spacelike compact forms. 
The causal propagator $G$ for $\Box$ is formally antiself-adjoint, 
$G^+$ and $G^-$ being formal adjoints of each other. 
The role played by the causal propagator in describing the dynamics of $\Box$ is explained by the next theorem. 
Indeed, this is a general feature of any Green-hyperbolic differential operator 
on a globally hyperbolic spacetime.\footnote{In fact, Theorem\ \ref{thmCausalProp} relies 
only on the properties of retarded and advanced Green operators, whose existence is ensured 
by definition for Green hyperbolic differential operators, see\ \cite[Definition\ 3.2]{Bar14}.} 

\begin{theorem}\label{thmCausalProp}
Let $M$ be an $m$-dimensional globally hyperbolic spacetime. Consider the causal propagator 
$G:\fc^k(M)\to\fsc^k(M)$ for the Hodge-d'Alembert operator $\Box$ acting on $k$-forms. 
Then the complex below is actually an exact sequence: 
\begin{equation*}
0\lra\fc^k(M)\overset{\Box}{\lra}\fc^k(M)\overset{G}{\lra}\fsc^k(M)\overset{\Box}{\lra}\fsc^k(M)\lra0\,.
\end{equation*}
\end{theorem}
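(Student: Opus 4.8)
The plan is to establish exactness at all four interior nodes of the sequence, exploiting the defining properties of the Green operators $G^\pm$ (properties 1--3 in Subsection\ \ref{subDynamics}) together with the support-propagation facts recalled there. I would organize the argument node by node, moving from left to right.

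\emph{Exactness at the first $\fc^k(M)$ (injectivity of $\Box$ on $\fc^k(M)$).} Suppose $\alpha\in\fc^k(M)$ satisfies $\Box\alpha=0$. Applying the retarded Green operator and using property 2, $\alpha=G^+\Box\alpha=G^+0=0$. Hence $\Box$ is injective on compactly supported forms.

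\emph{Exactness at the second $\fc^k(M)$ ($\ker G=\im\Box$ on $\fc^k(M)$).} The inclusion $\im\Box\subseteq\ker G$ follows from $G\Box=(G^+-G^-)\Box=\id-\id=0$ on $\fc^k(M)$, by property 2 for both $G^+$ and $G^-$. For the reverse inclusion, take $\alpha\in\fc^k(M)$ with $G\alpha=0$, i.e.\ $G^+\alpha=G^-\alpha=:\beta$. By the support property 3, $\supp(G^+\alpha)\subseteq J_M^+(\supp\alpha)$ and $\supp(G^-\alpha)\subseteq J_M^-(\supp\alpha)$, so $\supp(\beta)\subseteq J_M^+(\supp\alpha)\cap J_M^-(\supp\alpha)$, which is compact by global hyperbolicity (statement 2 of Theorem\ \ref{thmGlobHyp}). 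Thus $\beta\in\fc^k(M)$, and property 1 gives $\Box\beta=\Box G^+\alpha=\alpha$, so $\alpha\in\im\Box$.

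\emph{Exactness at the first $\fsc^k(M)$ ($\ker\Box\cap\fsc^k(M)=\im G$).} For $\im G\subseteq\ker\Box$: for $\alpha\in\fc^k(M)$, $\Box G\alpha=\Box G^+\alpha-\Box G^-\alpha=\alpha-\alpha=0$ by property 1. Conversely, suppose $\eta\in\fsc^k(M)$ with $\Box\eta=0$; one wants $\eta=G\alpha$ for some $\alpha\in\fc^k(M)$. The standard trick is a partition of unity adapted to a Cauchy surface: pick a Cauchy hypersurface $\Sigma$ of $M$, choose $\chi\in\c(M)$ equal to $1$ on a past spacelike compact region and to $0$ on a future spacelike compact region (for instance $\chi=\chi(t)$ in a splitting $M\cong\bbR\times\Sigma$ from Theorem\ \ref{thmGlobHyp}), and set $\alpha:=-\Box(\chi\eta)=[\Box,\chi]\eta$ (using $\Box\eta=0$). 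Because $\chi$ is locally constant outside a timelike compact slab, $[\Box,\chi]$ is supported in that slab; intersecting with the spacelike compact support of $\eta$ gives that $\alpha$ has timelike-compact \emph{and} spacelike-compact, hence compact, support. One then checks $G^+\alpha=G^+[\Box,\chi]\eta$ agrees with $\chi\eta$ up to the part supported to the past, and similarly for $G^-$, so that $G\alpha=G^+\alpha-G^-\alpha=\eta$; the verification uses uniqueness of solutions of $\Box(\cdot)=\alpha$ with past/future spacelike compact support, which in turn follows from the already-established injectivity of $\Box$ together with the support estimates. This node is the main obstacle: one must be careful that the decomposition $\chi\eta$ and $(1-\chi)\eta$ has the right support systems so that the commutator term lands in $\fc^k(M)$, and that the Green operators genuinely invert $\Box$ on the relevant classes of supported forms — here I would lean on the extensions to general support systems recalled from\ \cite{Bar14, San13, Kha14b}.

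\emph{Exactness at the last $\fsc^k(M)$ (surjectivity of $\Box:\fsc^k(M)\to\fsc^k(M)$).} Given $\eta\in\fsc^k(M)$, split $\eta=\eta_++\eta_-$ with $\eta_+$ past spacelike compact and $\eta_-$ future spacelike compact, using a smooth partition of unity subordinate to a Cauchy-surface splitting as above (each summand is a product of $\eta$ with a function that is eventually constant in time, so has the claimed support). The retarded Green operator extends to $G^+\colon\fpsc^k(M)\to\fpsc^k(M)$ and the advanced one to $G^-\colon\ffsc^k(M)\to\ffsc^k(M)$ with $\Box G^\pm=\id$ on those spaces (this is part of the Green-hyperbolic formalism of\ \cite{Bar14}); then $\mu:=G^+\eta_++G^-\eta_-\in\fsc^k(M)$ satisfies $\Box\mu=\eta_++\eta_-=\eta$. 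This gives surjectivity and completes the proof.

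Throughout, the only genuinely delicate points are the bookkeeping of support systems (compact vs.\ spacelike/timelike compact) and the appeal to uniqueness of Green operators on the extended domains; everything else is a direct manipulation of the three defining properties of $G^\pm$.
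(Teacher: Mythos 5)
Your argument is correct, but it is worth noting that the paper does not actually prove this theorem: it delegates everything except surjectivity of the last arrow to \cite[Theorem~3.4.7]{BGP07}, and the surjectivity to \cite[Theorem~3.2.11]{BGP07} (with \cite{Kha14b} as an alternative). What you have written is essentially a node-by-node reconstruction of that delegated proof, and each step is sound: injectivity of $\Box$ on $\fc^k(M)$ from $G^+\Box=\id$; the identification $\ker G=\Box\fc^k(M)$ via compactness of $J_M^+(\supp\alpha)\cap J_M^-(\supp\alpha)$; the cutoff construction at the first $\fsc^k(M)$ node; and the splitting $\eta=\eta_++\eta_-$ with $\mu=G^+\eta_++G^-\eta_-$ for surjectivity, which is exactly the argument the paper spells out later for Theorem~\ref{thmExtCausalProp}. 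Two small blemishes. First, $-\Box(\chi\eta)=-[\Box,\chi]\eta$, not $+[\Box,\chi]\eta$; the sign is immaterial to the structure but should be fixed, and the cutoff $\chi$ should be described as equal to $1$ on a \emph{past compact} region and $0$ on a \emph{future compact} one (the slab where $\dd\chi\neq0$ is then timelike compact), not on past/future \emph{spacelike} compact regions --- your concrete choice $\chi=\chi(t)$ does realize the correct version. Second, the uniqueness you invoke at the third node (a solution of $\Box u=0$ with past spacelike compact support vanishes) does \emph{not} follow from the injectivity of $\Box$ on $\fc^k(M)$ established at the first node; it requires the identity $G^+\Box u=u$ for $u\in\fpc^k(M)$, i.e.\ the extended Green operators. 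You do correctly point to \cite{Bar14, San13, Kha14b} for these extensions, so this is a matter of attribution rather than a gap, but the phrase ``follows from the already-established injectivity'' should be dropped.
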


\begin{proof}
The proof can be found in\ \cite[Theorem\ 3.4.7]{BGP07}, except for surjectivity of $\Box:\fsc^k(M)\to\fsc^k(M)$. 
This last property is a byproduct of\ \cite[Theorem\ 3.2.11]{BGP07}. 
For a different proof, which holds true for any Green-hyperbolic differential operator, 
refer to\ \cite[Section\ 2.3]{Kha14b}. 
\end{proof}

As explained in\ \cite{Bar14, San13}, there exist unique linear extensions 
for the retarded and advanced Green operators: 
\begin{align*}
G^+:\fpc^k(M)\to\f^k(M)\,, && G^-:\ffc^k(M)\to\f^k(M)\,,
\end{align*}
where the subscripts $\mathrm{pc}$ and $\mathrm{fc}$ refer to 
past compact and respectively future compact supports. 
Such extensions are obtained mainly exploiting the support properties of $G^\pm$, 
see {\em e.g.}\ \cite[Theorem\ 3.8]{Bar14}. 
Let us stress that the unique extensions of the retarded and advanced Green operators exhibit properties 
analogous to the original retarded and advanced Green operators. 
We list only the properties for the extension of the retarded Green operator $G^+$. 
The case of $G^-$ can be obtained interchanging future and past: 
\begin{enumerate}
\item[1.] $\Box G^+\alpha=\alpha$ for each $\alpha\in\fpc^k(M)$;
\item[2.] $G^+\Box\alpha=\alpha$ for each $\alpha\in\fpc^k(M)$;
\item[3.] $\supp(G^+\alpha)\subseteq J_M^+\big(\supp(\alpha)\big)$ for each $\alpha\in\fpc^k(M)$.
\end{enumerate}
In particular, $G^+$ maps $\fpc^k(M)$ to itself, while $G^-$ maps $\ffc^k(M)$ to itself. 
Furthermore, \eqref{eqGreenAdjoint} is now extended to those $\alpha$ having past or future compact support 
and those $\beta$ having respectively past or future spacelike compact support. 
As a consequence, we also get a unique extension of the causal propagator $G$, 
which can be now defined for $k$-forms with timelike compact support 
as the difference between the extensions of retarded and advanced Green operators: 
\begin{equation*}
G=G^+-G^-:\ftc^k(M)\to\f^k(M)\,.
\end{equation*}
An extended version of Theorem\ \ref{thmCausalProp} holds true for the extended causal propagator as well. 

\begin{theorem}\label{thmExtCausalProp}
Let $M$ be an $m$-dimensional globally hyperbolic spacetime. Consider the extended causal propagator 
$G:\ftc^k(M)\to\f^k(M)$ for the Hodge-d'Alembert operator $\Box$ acting on $k$-forms. 
Then the complex below is actually an exact sequence: 
\begin{equation*}
0\lra\ftc^k(M)\overset{\Box}{\lra}\ftc^k(M)\overset{G}{\lra}\f^k(M)\overset{\Box}{\lra}\f^k(M)\lra0\,.
\end{equation*}
\end{theorem}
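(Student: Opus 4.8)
The strategy is to reduce the statement to the already-established compactly-supported version, Theorem~\ref{thmCausalProp}, by exploiting the extension properties of $G^\pm$ listed just above. The four places where exactness must be checked are: injectivity of $\Box:\ftc^k(M)\to\ftc^k(M)$; exactness at the second $\ftc^k(M)$ (i.e.\ $\ker G=\im\Box$); exactness at $\f^k(M)$ (i.e.\ $\ker(\Box:\f^k\to\f^k)=\im G$); and surjectivity of $\Box:\f^k(M)\to\f^k(M)$.

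First I would dispose of the two ``outer'' spots. Injectivity of $\Box$ on $\ftc^k(M)$ is immediate from property~2 of the extended retarded Green operator: if $\Box\alpha=0$ with $\alpha$ timelike compact, then $\alpha=G^+\Box\alpha=0$. Surjectivity of $\Box:\f^k(M)\to\f^k(M)$ is handled by splitting an arbitrary $\beta\in\f^k(M)$ into a past compact piece and a future compact piece via a partition of unity subordinate to a cover of $M$ by the past and future of a Cauchy hypersurface (this is the standard trick, e.g.\ \cite{Bar14, San13}); writing $\beta=\beta^++\beta^-$ with $\beta^\pm$ respectively past/future compact, one sets $\alpha=G^+\beta^++G^-\beta^-\in\f^k(M)$ and checks $\Box\alpha=\beta$ using property~1 for each extended Green operator.

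Next, exactness at the middle $\ftc^k(M)$. The inclusion $\im\Box\subseteq\ker G$ follows from $G\Box=(G^+-G^-)\Box=\id-\id=0$ on $\ftc^k(M)$, using property~2 for both extensions. For the converse, suppose $\alpha\in\ftc^k(M)$ with $G\alpha=0$, i.e.\ $G^+\alpha=G^-\alpha=:\gamma\in\f^k(M)$. By property~3, $\supp\gamma\subseteq J_M^+(\supp\alpha)\cap J_M^-(\supp\alpha)$, which is timelike compact since $\supp\alpha$ is timelike compact (this is exactly the kind of intersection argument used repeatedly in the causally-restricted cohomology subsection). Hence $\gamma\in\ftc^k(M)$, and $\Box\gamma=\Box G^+\alpha=\alpha$ by property~1, so $\alpha\in\im\Box$.

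Finally, exactness at $\f^k(M)$, which I expect to be the main obstacle. The easy inclusion is $\im G\subseteq\ker\Box$: for $\alpha\in\ftc^k(M)$, $\Box G\alpha=\Box G^+\alpha-\Box G^-\alpha=\alpha-\alpha=0$. The hard direction is: if $\beta\in\f^k(M)$ with $\Box\beta=0$, then $\beta=G\alpha$ for some timelike compact $\alpha$. The plan is to mimic the compactly-supported proof: choose a Cauchy hypersurface $\Sigma$ with a ``past'' cut-off function $\chi$ equal to $1$ in the far future of $\Sigma$ and $0$ in the far past, set $\alpha=\Box(\chi\beta)=[\Box,\chi]\beta$ (the zeroth-order term drops since $\Box\beta=0$, so $\alpha$ involves only $\dd\chi$ and its derivatives). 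Then $\supp\alpha$ lies in the time slab where $\dd\chi\neq0$, which is timelike compact, so $\alpha\in\ftc^k(M)$; and one must verify $G\alpha=\beta$, i.e.\ $G^+\alpha=\chi\beta$ and $G^-\alpha=(\chi-1)\beta$, using uniqueness of the extended Green operators together with the support constraints. The subtlety is to confirm that the extended operators $G^\pm$ on $\fpc^k$ and $\ffc^k$ genuinely satisfy the uniqueness property needed here; I would invoke \cite{Bar14, San13} (cited in the excerpt) for this, and otherwise the argument is formally identical to \cite[Theorem~3.4.7]{BGP07}. Alternatively, one can simply cite \cite[Section~2.3]{Kha14b}, where the exact sequence is proved for arbitrary Green-hyperbolic operators in precisely this extended form.
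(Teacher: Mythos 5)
Your proposal is correct and follows essentially the same route as the paper: the paper delegates the bulk of the exactness argument to \cite[Theorems 3.8 and 4.3]{Bar14} (whose proofs are precisely the reductions to properties 1--3 of the extended Green operators that you spell out), and its explicit surjectivity argument via the partition of unity $\{\chi_+,\chi_-\}$ with past/future compact supports is identical to yours. One minor simplification: in the last step you do not need any uniqueness statement for the extended $G^\pm$ --- since $\chi\beta\in\fpc^k(M)$ and $(\chi-1)\beta\in\ffc^k(M)$, property 2 of the extensions directly yields $G^+\alpha=G^+\Box(\chi\beta)=\chi\beta$ and $G^-\alpha=(\chi-1)\beta$.
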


\begin{proof}
The proof, up to surjectivity of the last non-trivial arrow, is obtained from\ \cite[Theorem\ 4.3]{Bar14} 
taking only smooth sections into account, see\ also\ \cite[Theorem\ 3,8]{Bar14}. 
Surjectivity of $\Box:\f^k(M)\to\f^k(M)$ turns out to be a byproduct of\ \cite[Section\ 3.5.3, Corollary\ 5]{BF09}. 
A slightly more general proof, which holds true for any Green hyperbolic differential operator, 
can be obtained extending the argument in\ \cite[Section\ 2.3]{Kha14b}: 
Let $\omega\in\f^k(M)$ and consider a partition of unity $\{\chi_+,\chi_-\}$ on $M$ 
such that $\chi_\pm$ is past/future compact. Therefore $\zeta=G^+(\chi_+\omega)+G^-(\chi_-\omega)$, 
defined using the extended retarded and advanced Green operators, is such that $\Box\zeta=\omega$. 
\end{proof}

Intertwining properties of $\dd$ and $\de$ with $\Box$, as well as their consequences 
on the corresponding retarded and advanced Green operators, can be found in\ \cite[Proposition\ 2.1]{Pfe09}. 
For the sake of completeness, we provide a proof of these results for the extended Green operators. 
Note that we will adopt an approach slightly different from\ \cite{Pfe09}. 

\begin{proposition}\label{prpIntertwiners}
Let $M$ be an $m$-dimensional globally hyperbolic spacetime 
and denote with $G^+:\fpc^k(M)\to\f^k(M)$ and with $G^-:\ffc^k(M)\to\f^k(M)$
the extended retarded and advanced Green operators for the Hodge-d'Alembert operator $\Box$. 
The following identities hold true for all $\alpha\in\f^k(M)$, $\beta\in\fpc^k(M)$ and $\gamma\in\ffc^k(M)$: 
\begin{align*}
\dd\Box\alpha & =\Box\dd\alpha\,, & \de\Box\alpha & =\Box\de\alpha\,,\\
\dd G^+\beta & =G^+\dd\beta\,, & \de G^+\beta & =G^+\de\beta\,,\\
\dd G^-\gamma & =G^-\dd\gamma\,, & \de G^-\gamma & =G^-\de\gamma\,.
\end{align*}
\end{proposition}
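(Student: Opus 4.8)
The plan is to establish first the two identities for $\Box$, then deduce the identities for $G^\pm$ by a uniqueness argument. The first line, $\dd\Box\alpha=\Box\dd\alpha$ and $\de\Box\alpha=\Box\de\alpha$ for $\alpha\in\f^k(M)$, is a purely algebraic computation: writing $\Box=\de\dd+\dd\de$ and using $\dd\circ\dd=0$ and $\de\circ\de=0$, one has $\dd\Box=\dd\de\dd+\dd\dd\de=\dd\de\dd=\de\dd\dd+\dd\de\dd=\Box\dd$, and symmetrically for $\de$. Since $\dd$ and $\de$ map $\f^k(M)$ to $\f^{k\pm1}(M)$ without enlarging supports, these identities immediately restrict to any support system; in particular $\dd$ maps $\fpc^k(M)$ to $\fpc^{k+1}(M)$ and $\ffc^k(M)$ to $\ffc^{k+1}(M)$, and likewise for $\de$, so all the compositions appearing below are well-defined on the stated domains.

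Next I would treat $\dd G^+\beta=G^+\dd\beta$ for $\beta\in\fpc^k(M)$. The idea is to apply $\Box$ to both sides and invoke the characterizing property of the extended retarded Green operator: $\Box(\dd G^+\beta)=\dd\Box G^+\beta=\dd\beta$ using the first line and property~1 of $G^+$, while $\Box(G^+\dd\beta)=\dd\beta$ by property~1 again (noting $\dd\beta\in\fpc^{k+1}(M)$). Hence $\dd G^+\beta-G^+\dd\beta$ lies in $\ker\big(\Box:\f^{k+1}(M)\to\f^{k+1}(M)\big)$. To conclude it vanishes I would check its support is past compact: $\supp(\dd G^+\beta)\subseteq\supp(G^+\beta)\subseteq J_M^+(\supp\beta)$ by property~3, and $\supp(G^+\dd\beta)\subseteq J_M^+(\supp\dd\beta)\subseteq J_M^+(\supp\beta)$, so the difference is supported in $J_M^+(\supp\beta)$, which is past compact whenever $\supp\beta$ is. A form in $\fpc^{k+1}(M)$ annihilated by $\Box$ must be zero — this follows from injectivity of $\Box$ on past compact sections, which is exactly the first nontrivial arrow in the extended version of Theorem~\ref{thmExtCausalProp} (equivalently, apply $G^+$ to it and use property~2). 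Therefore $\dd G^+\beta=G^+\dd\beta$. The identity $\de G^+\beta=G^+\de\beta$ is proved the same way, with $\de$ in place of $\dd$ throughout. The two identities for $G^-$ follow by the mirror-image argument, replacing future compact by past compact and $J_M^+$ by $J_M^-$, using the analogous properties of the extended advanced Green operator.

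The only place where care is genuinely needed — and what I would flag as the main obstacle — is the justification that a past compact (resp.\ future compact) form killed by $\Box$ vanishes, i.e.\ the injectivity of $\Box$ on $\fpc^k(M)$ and $\ffc^k(M)$. This is not one of the three listed defining properties of $G^\pm$ directly, but it is precisely encoded in the exactness at the first $\ftc^k(M)$ in Theorem~\ref{thmExtCausalProp} (restricted to the past/future compact subspace), or can be seen directly: if $\Box\sigma=0$ with $\sigma$ past compact, then $\sigma=G^+\Box\sigma=0$ by property~2 of the extended retarded Green operator. Once this is in hand, everything else is bookkeeping with supports and the elementary nilpotency relations $\dd^2=0$, $\de^2=0$. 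I would also remark that, restricting the $G^+$ identities to $\fc^k(M)\subseteq\fpc^k(M)$ (and $G^-$ to $\fc^k(M)\subseteq\ffc^k(M)$) and subtracting, one recovers $\dd G=G\dd$ and $\de G=G\de$ on $\fc^k(M)$ for the (unextended) causal propagator, and similarly $\dd G=G\dd$, $\de G=G\de$ on $\ftc^k(M)$ for the extended causal propagator, which is the form in which these intertwining relations are used later.
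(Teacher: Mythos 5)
Your proof is correct, but it takes a genuinely different route from the one in the paper. The paper proves $\dd G^+\beta=G^+\dd\beta$ by a duality argument: for arbitrary $\omega\in\fc^{k+1}(M)$ it computes $(\dd G^+\beta,\omega)=(\dd G^+\beta,\Box G^-\omega)=(\Box\dd G^+\beta,G^-\omega)=(\dd\Box G^+\beta,G^-\omega)=(\dd\beta,G^-\omega)=(G^+\dd\beta,\omega)$, using formal self-adjointness of $\Box$, the extended adjoint relation between $G^+$ and $G^-$, and non-degeneracy of the pairing $(\cdot,\cdot)$ to conclude. You instead observe that both $\dd G^+\beta$ and $G^+\dd\beta$ are solutions of $\Box u=\dd\beta$ supported in $J_M^+(\supp\beta)$, and that $\Box$ is injective on past compact forms because $G^+\Box=\id$ there; this is the standard uniqueness argument for the retarded problem. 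Your route avoids the bookkeeping of compact support overlaps needed to justify each pairing in the paper's chain of identities (and does not need the extended adjoint relation \eqref{eqGreenAdjoint} at all), at the price of invoking that $J_M^+$ of a past compact set is again past compact --- a fact the paper already records when stating that the extended $G^+$ maps $\fpc^k(M)$ to itself, so nothing new is required. Both arguments correctly reduce the $\de$-identities and the $G^-$-identities to the same template, and your opening computation for the $\Box$-identities coincides with the paper's.
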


\begin{proof}
The formulas involving $\Box=\de\dd+\dd\de$ are immediate consequences of its definition 
and of the identities $\dd\dd=0$ and $\de\de=0$. 

We prove only the formula involving $\dd$ and $G^+$ since all others can be obtained along the same lines. 
Consider $\beta\in\fpc^k(M)$ and $\omega\in\fc^{k+1}(M)$. 
Exploiting the properties of the retarded and advanced Green operators, formal self-adjointness of $\Box$ 
and the fact that $\dd$ intertwines $\Box:\f^k(M)\to\f^k(M)$ and $\Box:\f^{k+1}(M)\to\f^{k+1}(M)$, 
one obtains the following chain of identities: 
\begin{align*}
(\dd G^+\beta,\omega) & =(\dd G^+\beta,\Box G^-\omega)=(\Box\dd G^+\beta,G^-\omega)\\
& =(\dd\Box G^+\beta,G^-\omega)=(\dd\beta,G^-\omega)=(G^+\dd\beta,\omega)\,.
\end{align*}
Since $\omega\in\fc^{k+1}(M)$ can be chosen arbitrarily and the pairing 
$(\cdot,\cdot):\f^{k+1}(M)\times\fc^{k+1}(M)\to\bbR$ defined in\ \eqref{eqPairing2} is non-degenerate, 
we conclude that $\dd G^+\beta=G^+\dd\beta$. 
\end{proof}

\begin{remark}\label{remTrivCohom}
At the beginning of Subsection\ \ref{subRestrictedCohom} we anticipated that 
the de Rham complex on a globally hyperbolic spacetime $M$ provides trivial cohomology groups 
when restricted to forms with support which is either past/future compact or past/future spacelike compact. 
In fact, this follows from Proposition\ \ref{prpIntertwiners}, together with the properties of 
retarded and advanced Green operators for $\Box$. Consider for example $\omega\in\fpc^k(M)$ 
such that $\dd\omega=0$, namely a closed form with past compact support. 
Introducing the past compact $(k-1)$-form $\eta=G^+\de\omega$, 
we conclude that 
\begin{equation*}
\dd\eta=\dd G^+\de\omega=G^+\dd\de\omega=G^+\Box\omega=\omega\,.
\end{equation*}
This shows that $\omega$ is an exact form in the past compact sense too, 
therefore $\mathrm{H}_{\mathrm{pc}\,\dd}^k(M)=\f_{\mathrm{pc}\,\dd}^k(M)/\dd\fpc^{k-1}(M)=\{0\}$. 
For the support system $\mathrm{psc}$ (see Definition\ \ref{defSupportSystems}) one can argue similarly, 
while for the support systems $\mathrm{fc}$ and $\mathrm{fsc}$ one comes to the same conclusions 
by reversing the time-orientation, see\ Chapter\ \ref{chMaxwell} for example. 
\end{remark}

We conclude the present subsection discussing certain naturality properties of the Hodge-d'Alembert operator 
and the corresponding Green functions. This will turn out useful when discussing functors 
describing classical or quantum field theories. 

\begin{remark}[Naturality of $\dd$, $\de$ and $\Box$]\label{remdddeBoxNaturality}
\index{vector space!category}Denote with $\Vec$ the {\em category of $\bbR$-vector spaces}, 
whose objects are vector spaces over the field of real numbers 
and whose morphisms are $\bbR$-linear maps between $\bbR$-vector spaces. 
Given a causal embedding $f:M\to N$ between $m$-dimensional globally hyperbolic spacetimes, 
for each degree $k\in\{0,\dots,m\}$ we can consider both the pull-back $f^\ast:\f^k(N)\to\f^k(M)$ 
and the push-forward $f_\ast:\fc^k(M)\to\fc^k(N)$. Therefore we can think of $\f^\ast(\cdot):\GHyp\to\Vec$ 
and $\fc^\ast(\cdot):\GHyp\to\Vec$ respectively as a contravariant and a covariant functor. 
It is well-known that the pull-back and the push-forward along $f$ intertwine the differentials on $M$ and on $N$. 
Furthermore, since $f$ is an isometry compatible with orientations, both $f^\ast:\f^k(N)\to\f^k(M)$ 
and $f_\ast:\fc^k(M)\to\fc^k(N)$ intertwine the Hodge star operators on $M$ and on $N$ as well. 
For any value of $k\in\{0,\dots,m\}$, we collect these statements in the following equations: 
\begin{align*}
f^\ast\dd_N & =\dd_Mf^\ast & \mbox{on}\;\f^k(N)\,, &&&& f_\ast\dd_M & =\dd_Nf_\ast 
& \mbox{on}\;\fc^k(M)\,,\\
f^\ast\ast_N & =\ast_Mf^\ast & \mbox{on}\;\f^k(N)\,, &&&& f_\ast\ast_M & =\ast_Nf_\ast 
& \mbox{on}\;\fc^k(M)\,,
\end{align*}
where the subscripts are introduced to specify to which manifold the differentials and the Hodge stars are referred. 
These two properties together entail that the pull-back and the push-forward along $f$ intertwine 
the codifferentials for $M$ and $N$ too, and hence the Hodge-d'Alembert operators as well, namely 
\begin{align*}
f^\ast\de_N & =\de_Mf^\ast & \mbox{on}\;\f^k(N)\,, &&&& f_\ast\de_M&=\de_Nf_\ast 
& \mbox{on}\;\fc^k(M)\,,\\
f^\ast\Box_N & =\Box_Mf^\ast & \mbox{on}\;\f^k(N)\,, &&&& f_\ast\Box_M&=\Box_Nf_\ast 
& \mbox{on}\;\fc^k(M)\,.
\end{align*}
In the language of category theory, we have four natural transformations between contravariant functors: 
\begin{align*}
\dd:\f^\ast(\cdot)\to\f^{\ast+1}(\cdot)\,, && \ast:\f^\ast(\cdot)\to\f^\ast(\cdot)\,,\\
\de:\f^\ast(\cdot)\to\f^{\ast-1}(\cdot)\,, && \Box:\f^\ast(\cdot)\to\f^\ast(\cdot)\,,
\end{align*}
as well as four natural transformations between covariant functors: 
\begin{align*}
\dd:\fc^\ast(\cdot)\to\fc^{\ast+1}(\cdot)\,, && \ast:\fc^\ast(\cdot)\to\fc^\ast(\cdot)\,,\\
\de:\fc^\ast(\cdot)\to\fc^{\ast-1}(\cdot)\,, && \Box:\fc^\ast(\cdot)\to\fc^\ast(\cdot)\,.
\end{align*}
\end{remark}

\begin{proposition}\label{prpGreenNat}
Let $M$ and $N$ be $m$-dimensional globally hyperbolic spa\-ce\-ti\-mes 
and consider a causal embedding $f:M\to N$. 
Denote with $G^+_M,G^-_M:\fc^\ast(M)\to\f^\ast(M)$ the retarded and advanced Green operators 
for the Hodge-d'Alembert operator $\Box_M$ on $M$ and with $G^+_N,G^-_N:\fc^\ast(M)\to\f^\ast(M)$ 
the retarded and advanced Green operators for the Hodge-d'Alembert operator $\Box_N$ on $N$. Then 
\begin{align*}
f^\ast G^\pm_Nf_\ast=G^\pm_M\;\mbox{on}\;\fc^k(M)\,, && k\in\{0,\dots,m\}\,.
\end{align*}
In particular, the same property holds true for the causal propagators $G_M$ and $G_N$. 
\end{proposition}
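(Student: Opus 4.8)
The plan is to exploit the uniqueness of the retarded and advanced Green operators, which was recalled before Theorem~\ref{thmCausalProp}: on a globally hyperbolic spacetime, a normally hyperbolic operator admits \emph{unique} linear maps $\fc^k\to\f^k$ satisfying the three defining properties (left/right inverse to $\Box$ on compactly supported forms, and the causal support condition). So the strategy is to show that the composite $f^\ast G^\pm_N f_\ast:\fc^k(M)\to\f^k(M)$ satisfies exactly those three properties for $\Box_M$; uniqueness then forces $f^\ast G^\pm_N f_\ast=G^\pm_M$.

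First I would verify the two algebraic properties. Given $\alpha\in\fc^k(M)$, using $f_\ast\Box_M=\Box_N f_\ast$ on compactly supported forms (Remark~\ref{remdddeBoxNaturality}) and the defining property $G^\pm_N\Box_N=\id$ on $\fc^k(N)$, together with $f^\ast\Box_N=\Box_M f^\ast$ and $f^\ast f_\ast=\id$ on compactly supported forms (since $f$ is an open embedding, $f^\ast\circ f_\ast$ is the identity on $\fc^k(M)$), one computes
\begin{align*}
f^\ast G^\pm_N f_\ast\,\Box_M\alpha &= f^\ast G^\pm_N\,\Box_N f_\ast\alpha = f^\ast f_\ast\alpha = \alpha\,,\\
\Box_M\, f^\ast G^\pm_N f_\ast\alpha &= f^\ast\Box_N G^\pm_N f_\ast\alpha = f^\ast f_\ast\alpha = \alpha\,.
\end{align*}
This gives properties~1 and~2 for $f^\ast G^\pm_N f_\ast$ as a candidate Green operator on $M$.

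The support property is the main point requiring care. For $\alpha\in\fc^k(M)$ one has $\supp(f_\ast\alpha)=f(\supp\alpha)$, and property~3 for $G^\pm_N$ gives $\supp(G^\pm_N f_\ast\alpha)\subseteq J_N^\pm\big(f(\supp\alpha)\big)$; pulling back along $f$, $\supp(f^\ast G^\pm_N f_\ast\alpha)\subseteq f^{-1}\big(J_N^\pm(f(\supp\alpha))\big)$. What must be shown is that this is contained in $J_M^\pm(\supp\alpha)$, i.e. that $f^{-1}\big(J_N^\pm(f(K))\big)\subseteq J_M^\pm(K)$ for compact $K\subseteq M$. This is exactly where causal compatibility of the embedding enters: since $f(M)$ is a causally compatible open subset of $N$, any causal curve in $N$ between two points of $f(M)$ already lies in $f(M)$ and corresponds under $f^{-1}$ to a causal curve in $M$, so $J_N^\pm(f(K))\cap f(M)=f\big(J_M^\pm(K)\big)$. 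Combining the three verified properties with the uniqueness of Green operators for $\Box_M$ yields $f^\ast G^\pm_N f_\ast=G^\pm_M$. Finally, the statement for the causal propagator is immediate by linearity: $f^\ast G_N f_\ast = f^\ast(G^+_N-G^-_N)f_\ast = G^+_M-G^-_M = G_M$. I expect the only genuine obstacle to be the careful use of causal compatibility in the support step; everything else is formal manipulation with the defining properties and the naturality relations of Remark~\ref{remdddeBoxNaturality}.
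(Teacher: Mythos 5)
Your proposal is correct and follows exactly the route the paper takes: verify the three defining properties (the two inverse relations via the intertwining of $\Box$ with $f^\ast$ and $f_\ast$, and the support condition via causal compatibility and preservation of the time-orientation) and then invoke uniqueness of the retarded/advanced Green operators on $M$. The paper merely states this in two sentences, whereas you spell out the computations; the content is the same.
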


\begin{proof}
We can check directly that $f^\ast G^\pm_Nf_\ast$ satisfies all the defining properties 
for being a retarded/advanced Green operator for $\Box_M$. This fact follows from $G^\pm_N$ 
being the retarded/advanced Green operator for $\Box_N$ and $f$ being a causal embedding 
(in particular, we exploit both causal compatibility and the fact that the time-orientation is preserved). 
Since the retarded/advanced Green operator $G^\pm_M$ for $\Box_M$ is unique, the statement is proved. 
\end{proof}

\section{Affine spaces and affine bundles}\label{secAffine}
In the present section we briefly recall the material developed in\ \cite{BDS14a}, 
which will be employed in\ Section\ \ref{secPrBun} and especially in\ Chapter\ \ref{chYangMills}. 
Starting from the definition of an affine space and of its vector dual, 
we will introduce affine bundles and the corresponding vector dual bundles. As an example, 
we will present an affine bundle describing fiberwise splittings of an exact sequence of vector bundles. 
A special case of this construction plays a central role in\ Subsection\ \ref{subConnBun}. 
We will conclude introducing affine differential operators 
and we will address the problem of specifying their formal duals. 

\begin{definition}\label{defAffSp}\index{affine!space}\index{affine!map}\index{affine!map!linear part}
\index{linear part!affine map}\index{affine!space!category}
Let $V$ be a vector space. An {\em affine space} $A$ modeled on the vector space $V$ 
is a set endowed with a free and transitive right action $+:A\times V\to A$ of the group $V$ on $A$, 
where $V$ is regarded as an Abelian group with respect to addition of vectors. 

Let $A$ and $B$ be affine spaces modeled respectively on the vector spaces $V$ and $W$. 
An {\em affine map} $f:A\to B$ is a map between the underlying sets such that 
there exists a linear map $f_V:V\to W$, called {\em linear part} of $f$, 
which fulfils the following requirement: 
\begin{align*}
f(a+v)=f(a)+f_V(v)\,, && \forall\,a\in A\,,\;v\in V\,,
\end{align*}
where $+$ denotes the right action of $V$ on $A$ in the left-hand-side, 
while on the right-hand-side the same symbol denotes the right action of $W$ on $B$. 

The {\em category of affine spaces} $\Aff$ has affine spaces as objects and affine maps as morphisms. 
\end{definition}

Note that the linear part of an affine map is unique due to the fact that 
the right action on an affine space of the underlying vector space is free. 

\begin{remark}\label{remAffDiff}
Let $A$ be an affine space modeled on the vector space $V$. 
Since the right action of $V$ on $A$ is both free and transitive, 
for each $a,b\in A$ there exists a unique $v\in V$ such that $a+v=b$. 
This property entails that the following map is well defined: 
\begin{align*}
-:A\times A\to V\,, && (a,b)\mapsto v\,,
\end{align*}
where $v\in V$ is such that $a+v=b$. 
\end{remark}

In the next example, and in the following as well, we will use the symbol $\hom(V,W)$ 
to denote the vector space of linear maps between the vector spaces $V$ and $W$. 
Furthermore, given two linear maps $f:U\to V$ and $h:W\to Z$, 
define $\hom(f,h):\hom(V,W)\to\hom(U,Z)$ to be the linear map $g\mapsto h\circ g\circ f$. 
Then $\hom(\cdot,\cdot):\Vec^\op\times\Vec\to\Vec$ becomes an internal $\hom$ functor. 

\begin{example}\label{exaAffSp}
A trivial example of an affine space $A$ is provided by any vector space $V$. 
As sets $A=V$ and we endow $A$ with the right group action induced by addition of vectors. 
Basically, $A$ is the same as the vector space $V$ where one forgets about the origin. 
In this case one says that $A$ is the vector space $V$ regarded as an affine space modeled on itself. 

A less trivial example comes from short exact sequences of vector spaces. 
Consider the following short exact sequence, where $V$, $W$ and $Z$ are vector spaces: 
\begin{equation*}
0\lra V\overset{\iota}{\lra}W\overset{\pi}{\lra}Z\lra0\,.
\end{equation*}
It is well-known that short exact sequences of vector spaces are always split exact, 
namely there always exists a linear map $\rho:Z\to W$, called right splitting, such that $\pi\rho=\id_Z$ 
or equivalently a linear map $\lambda:W\to V$, called left splitting, such that $\lambda\iota=\id_V$. 
For example, let us consider the set $R$ of right splittings of the exact sequence displayed above. 
Clearly, $R$ is a subset of $\hom(Z,W)$ and, given $\rho\in R$ and $\phi\in\hom(Z,V)$, 
$\rho+\iota\phi\in\hom(W,Z)$ still lies in $R$. Furthermore, two right splittings always differ 
by a linear map $Z\to V$. Consider $\rho,\rho^\prime\in R$ and note that $\pi(\rho^\prime-\rho)=0$. 
Therefore $\rho^\prime-\rho\in\hom(Z,W)$ factors through $\iota$, the sequence above being exact. 
This means that the right action of the Abelian group $\hom(Z,V)$ 
on the set $R\subseteq\hom(Z,W)$ defined below is transitive: 
\begin{align*}
+:R\times\hom(W,V)\to R\,, && (\rho,\phi)\mapsto\rho+\iota\phi\,,
\end{align*}
Since $\iota$ is injective, this right action is also free, hence we conclude that 
$R$ is an affine space modeled on the vector space $\hom(Z,V)$. 
\end{example}

The next definition introduces the notion of the vector dual of an affine space. 

\begin{definition}\label{defAffSpDual}\index{affine!space!vector dual}\index{vector dual of an affine space}
Let $A$ be an affine space modeled on the vector space $V$. 
The {\em vector dual} $A^\dagger$ of the affine space $A$ is the vector space of real valued affine maps on $A$. 
\end{definition}

\begin{remark}\label{remAffSpDualDim}
$A^\dagger$ inherits the structure of a vector space from $\bbR$. 
Indeed, if $n$ is the dimension of the underlying vector space $V$, then $\dim(A^\dagger)=n+1$. 
In fact, consider a basis $\{v^\ast_1,\dots,v^\ast_n\}$ of the dual $V^\ast$ of $V$. 
Choosing an arbitrary point $\tilde{a}\in A$, we can introduce $\varphi_i:A\to\bbR$ 
by setting $\varphi_i(\tilde{a}+v)=v_i^\ast(v)$ for each $v\in V$. 
Clearly, $\{\varphi_1,\dots,\varphi_n\}$ is a family of $n$ linearly independent elements of $A^\dagger$. 
Furthermore, consider the function $\ttone:A\to\bbR$ which takes the constant value $1$. 
Indeed $\ttone\in A^\dagger$ and we can express each $\varphi\in A^\dagger$ 
as a linear combination of $\{\ttone,\varphi_1,\dots,\varphi_n\}$ 
noting that the linear part $\varphi_V$ lies in $V^\ast$ 
and that $\varphi(\tilde{a}+v)=\varphi(\tilde{a})+\varphi_V(v)$ for each $v\in V$. 
It remains only to check that $\{\ttone,\varphi_1,\dots,\varphi_n\}$ are linearly independent in $A^\dagger$. 
This follows from the fact that $\ttone$ is constant and $\{\varphi_1,\dots,\varphi_n\}$ are linearly independent. 
\end{remark}

We turn our attention to bundles whose fibers are affine spaces in such a way 
that the affine structures of neighboring fibers agree in a suitable sense. 
Our definition mimics the one of\ \cite[Section\ 6.22]{KMS93}. 

\begin{definition}\label{defAffBun}\index{affine!bundle}\index{affine!bundle map}
Let $\pi_V:V\to M$ be a vector bundle with typical fiber $\bbR^n$ 
and consider an affine space $S$ modeled on the vector space $\bbR^n$. 
An {\em affine bundle} $\pi_A:A\to M$ with typical fiber $S$ modeled on $V$ 
is a fiber bundle $A$ satisfying the following requirements: 
\begin{enumerate}
\item The fiber over any point $x\in M$ is an affine space modeled on the corresponding fiber of $V$; 
\item For each $x\in M$, there exists an affine bundle trivialization in a neighborhood of $x$. 
This consists of an open neighborhood $U\subseteq M$ of $x$, 
a vector bundle trivialization $\psi_V:\pi_V^{-1}(U)\to U\times\bbR^n$
and a fiber-preserving diffeomorphism $\psi:\pi_A^{-1}(U)\to U\times S$, 
namely $\pr_1\circ\psi=\pi$ on $\pi_A^{-1}(U)$, where $\pr_1:U\times S\to U$ denotes 
the projection on the first factor. The triple $(U,\psi_V,\psi)$ is such that, 
for each $y\in U$, the restriction $\psi\vert_y:A_y\to \{y\}\times S$ is an isomorphism of affine spaces 
whose linear part coincides with $\psi_V\vert_y:V_y\to\{y\}\times\bbR^n$. 
\end{enumerate}

Let $\pi_A:A\to M$ and $\pi_B:B\to N$ be affine bundles with typical fibers $S$ and respectively $T$ 
modeled on the vector bundles $\pi_V:V\to M$ and respectively $\pi_W:W\to M$. 
An {\em affine bundle map} $f:A\to B$ is a bundle map covering $\ul{f}:M\to N$ such that, 
for each $x\in M$, the restriction of $f$ to the fiber $A_x$ is an affine map $f\vert_x:A_x\to B_{\ul{f}(x)}$. 
\end{definition}

\begin{remark}[Linear part of an affine bundle map]\label{remAffBunMapLinPart}
\index{affine!bundle map!linear part}\index{linear part!affine bundle map}
The existence and uniqueness of the linear part of any affine map, {\em cfr.}\ Definition\ \ref{defAffSp}, 
entails the existence and uniqueness of a vector bundle map 
which can be regarded as the linear part of an affine bundle map. 
Let $f:A\to B$ be an affine bundle map between the affine bundles $A$ and $B$ 
modeled on the vector bundles $V$ and respectively $W$. Since on each fiber $f$ provides an affine map, 
we can uniquely specify a vector bundle map $f_V:V\to W$ by saying that 
its restriction to each fiber of $V$ must be the linear part of the restriction of $f$ to the corresponding fiber of $A$. 
$f_V:V\to W$ is also called the {\em linear part} of the affine bundle map $f:A\to B$. 
\end{remark}

We present two examples of affine bundles. We will adopt the following notation: 
Given two vector bundles $V,W$ over $M$, we denote with $\hom(V,W)$ the corresponding $\hom$-bundle, 
namely the vector bundle over $M$ whose fiber over $x\in M$ is given by 
the vector space $\hom(V_x,W_x)$ of linear maps between the fibers over $x$ of $V$ and $W$. 
Note that the vector bundle structure of $\hom(V,W)$ is inherited from $V$ and $W$. 
Furthermore, given two bundle maps $f:U\to V$ and $h:W\to Z$ covering the identity, 
define $\hom(f,h):\hom(V,W)\to\hom(U,Z)$ to be the vector bundle map $\gamma\mapsto h\circ\gamma\circ f$. 
Then $\hom(\cdot,\cdot):\VBun_M^\op\times\VBun_M\to\VBun_M$ 
becomes a bi-functor on the category of vector bundles $\VBun_M$ over a fixed manifold $M$, 
which is contravariant in the first entry and covariant in the second. 

\begin{example}\label{exaAffBun}
In full analogy with\ Example\ \ref{exaAffSp}, each vector bundle can be regarded as an affine bundle 
modeled on itself. In fact, given a vector bundle $V$ over $M$, we can set $A=V$ only as fiber bundles. 
Obviously, each fiber of $A$ can be regarded as an affine space modeled on the corresponding fiber of $V$ 
and indeed the typical fiber of $A$ is the typical fiber of $V$ regarded as an affine space modeled on itself. 
It is straightforward to check that vector bundle trivializations of $V$ provide affine bundle trivializations of $A$. 
Basically, to define $A$, we are forgetting of the origin of each fiber of $V$. 

Let $V$, $W$ and $Z$ be vector bundles over $M$ of rank $a$, $b$ and respectively $c=b-a$ 
forming the following short exact sequence in $\VBun_M$: 
\begin{equation*}
0\lra V\overset{\iota}{\lra}W\overset{\pi}{\lra}Z\lra0\,.
\end{equation*}
Local triviality of vector bundles, together with the existence of a partition of unity for the base space $M$, 
entails that the property of short exact sequences of vector spaces to be always split exact 
carries over to short exact sequences of vector bundles over a fixed manifold $M$. 
Hence, there always exists a vector bundle map $\rho:Z\to W$ covering $\id_M$ such that $\pi\circ\rho=\id_Z$. 
Note that we can equivalently regard $\rho$ as a section of the vector bundle $\hom(Z,W)$. 
At the moment we would like to better understand the structure of 
the bundle $R$ over $M$ of right splittings of the short exact sequence displayed above, 
whose typical fiber at $x\in M$ is given by the affine space $R_x$ of right splittings 
of the sequence restricted to the base point $x$, {\em cfr.}\ Example\ \ref{exaAffSp}. 
Indeed, we note that the typical fibers of the vector bundles fit 
into the short exact sequence of vector spaces $0\to\bbR^a\to\bbR^b\to\bbR^c\to0$. 
We denote by $S$ the unique (up to isomorphisms) affine space of right splittings of this sequence 
and we observe that $S$ is modeled on the vector space $\hom(\bbR^c,\bbR^a)$, 
{\em cfr.}\ Example\ \ref{exaAffSp}. 
Taking into account local trivializations $\alpha:\pi_V^{-1}(U)\to U\times\bbR^a$, 
$\beta:\pi_W^{-1}(U)\to U\times\bbR^b$ and $\gamma:\pi_Z^{-1}(U)\to U\times\bbR^c$
of the vector bundles $V$, $W$ and respectively $Z$ on a neighborhood $U$ of $x\in M$, 
we can construct an affine bundle trivialization over $U$ of the bundle $R$ of right splittings 
of the short exact sequence of vector bundles displayed above. 
First of all, note that the local trivialization $\hom(\gamma^{-1},\beta)$ over $U$ of the vector bundle $\hom(Z,W)$ 
maps $\pi_R^{-1}(U)$ to $U\times S$, thus providing a candidate $\phi:\pi_R^{-1}(U)\to U\times S$ 
for an affine bundle trivialization of $R$ over $U$. 
Introducing also the local trivialization $\hom(\gamma^{-1},\alpha)$ over $U$ of the vector bundle $\hom(Z,V)$, 
we realize that its restriction to each fiber over $x\in U$ fulfils the condition 
to be the linear part of the restriction to the fiber over $x$ of $\phi$, 
whence $\phi$ is an affine isomorphism on each fiber. 
This shows that $A$ is an affine bundle over $M$ modeled on the vector bundle $\hom(Z,V)$ over $M$ 
whose typical fiber $S$ is an affine space modeled on the vector space $\hom(\bbR^c,\bbR^a)$. 
\end{example}

In analogy with\ Definition\ \ref{defAffSpDual}, we introduce now the vector dual of an affine vector bundle. 

\begin{definition}\index{affine!bundle!vector dual}\index{vector dual of an affine bundle}
Let $A$ be an affine bundle over $M$ modeled on the vector bundle $V$ over $M$ 
with typical fiber $S$ modeled on the typical fiber $\bbR^n$ of $V$. 
The {\em vector dual} $A^\dagger$ of the affine bundle $A$ is 
the vector bundle over $M$ with typical fiber $S^\dagger$ 
whose fiber over each point $x\in M$ is given by the vector dual of the affine space $A_x$. 
\end{definition}

Local trivializations of $A^\dagger$ can be obtained from local trivializations of $A$. Let $x\in M$ 
and consider a local trivialization $\psi:\pi_A^{-1}(U)\to U\times S$ of $A$ over a neighborhood $U$ of $x$. 
It is easy to check that $\hom(\psi^{-1},\id_{U\times\bbR}):\pi_{A^\dagger}^{-1}(U)\to U\times S^\dagger$ 
defines a local trivialization of $A^\dagger$ over $U$. 
Note that $\rank(A^\dagger)=\rank(V)+1$ as it follows from\ Remark\ \ref{remAffSpDualDim}. 

\subsection{Sections of an affine bundle}
The space $\sect(A)$ of sections of an affine bundle $A$ over $M$ modeled on the vector bundle $V$ 
can be endowed with the affine structure naturally induced by the affine structure on $A$. 
We have a right action $+:\sect(A)\times\sect(V)\to\sect(A)$ of the Abelian group $\sect(V)$ 
on the set $\sect(A)$ obtained by pointwise translation via the affine structure of $A$. 
This action is free and transitive since the fiberwise right action of $V$ on $A$ is such. 
We will always endow $\sect(A)$ with this affine structure and thus regard it as an affine space. 

At each point $x\in M$, a sections $\phi$ of the vector dual $A^\dagger$ of an affine bundle $A$ over $M$ 
gives an affine map $\phi(x):A_x\to\bbR$. Taking the linear part of $\phi(x)$ at each point $x\in M$, 
we obtain a section $\phi_V\in\sect(V^\ast)$ of the dual of the vector bundle $V$ 
underlying the affine structure of $A$. $\phi_V$ is the {\em linear part} of the section $\phi$. 
Notice that this construction amounts to the application of\ Remark\ \ref{remAffBunMapLinPart} 
to the affine bundle map $A\to M\times\bbR$ corresponding to the section $\phi\in\sect(A^\dagger)$, 
which is defined by $a\in A_x\mapsto(x,\phi(x)(a))$ for each $x\in M$. 

There is a distinguished section $\bfone\in\sect(A^\dagger)$ of the vector dual bundle of $A$ 
which is defined imposing that $\bfone$ is the affine map $a\in A_x\mapsto 1\in\bbR$ for each point $x\in M$. 

\begin{remark}\label{remAffDualSectDecomp}
A choice of a section $\tilde{\alpha}\in\sect(A)$ allows us to decompose each section $\phi\in\sect(A^\dagger)$ 
in terms of a linear combination with $\c(M)$-coefficients of its linear part $\phi_V\in\sect(V^\ast)$ 
and $\bfone\in\sect(A^\dagger)$. In fact, consider the real valued smooth function $\phi(\tilde{\alpha})$ on $M$ 
whose value at $x\in M$ is given by the evaluation of $\phi(x)\in A^\dagger_x$ on $\tilde{\alpha}(x)\in A_x$. 
Then the following identity follows from the fact that $\phi_V$ is the linear part of $\phi$ 
and provides the sought decomposition: 
\begin{equation*}
\phi=\phi(\tilde{\alpha})\,\bfone+\phi_V(\cdot-\tilde{\alpha})\,,
\end{equation*} 
where $-:\sect(A)\times\sect(A)\to\sect(V)$ is defined along the lines of\ Remark\ \ref{remAffDiff}. 
\end{remark}

The section $\bfone\in\sect(A^\dagger)$ is responsible for the degeneracy in the left argument 
of the integral pairing between $\sc(A^\dagger)$ and $\sect(A)$: 
\begin{align}\label{eqAffPairing}
\sc(A^\dagger)\times\sect(A)\to\bbR\,, && (\phi,\alpha)\mapsto\int_M\phi(\alpha)\,\vol\,,
\end{align}
where $M$ is now assumed to be an $m$-dimensional oriented manifold, $\vol\in\f^m(M)$ is a volume form on $M$ 
and $\phi(\alpha)\in\c(M)$ has compact support since $\supp(\phi)$ is compact. 

\begin{proposition}\label{prpAffSeparability}
Let $M$ be an oriented manifold and consider a volume form $\vol\in\f^m(M)$. 
Furthermore, let $A$ be an affine bundle over $M$ modeled on the vector bundle $V$ 
and consider the integral pairing between $\sc(A^\dagger)$ and $\sect(A)$ introduced in\ \eqref{eqAffPairing}. 
Then $\sc(A^\dagger)$ separates points in $\sect(A)$ via this pairing, 
while $\sect(A)$ does not separate points in $\sect(A^\dagger)$. 
More specifically, $\phi\in\sect(A^\dagger)$ is such that $\int_M\phi(\alpha)\,\vol=0$ 
for each $\alpha\in\sect(A)$ if and only if there exists $f\in\cc(M)$ 
such that $\int_Mf\,\vol=0$ and $f\,\bfone=\phi$. 
\end{proposition}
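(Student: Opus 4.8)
The plan is to prove the two claims separately. For the first claim — that $\sc(A^\dagger)$ separates points in $\sect(A)$ — suppose $\alpha,\alpha'\in\sect(A)$ are distinct. Then $v=\alpha'-\alpha\in\sect(V)$ is a non-zero section (using the notation of Remark\ \ref{remAffDiff}), so there is a point $x_0\in M$ with $v(x_0)\neq0$. Since $V_{x_0}^\ast$ separates points of $V_{x_0}$, pick $\xi_0\in V_{x_0}^\ast$ with $\xi_0(v(x_0))\neq0$ and extend it, after choosing a local trivialization near $x_0$ and a bump function supported in a small neighborhood $U\ni x_0$, to a section $\eta\in\sc(V^\ast)$ with $\eta(x_0)=\xi_0$ and $\eta(v)$ not identically zero (indeed strictly positive at $x_0$, hence with non-zero integral after shrinking the support and the bump). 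Now define $\phi\in\sc(A^\dagger)$ by $\phi=\eta(\cdot-\alpha)$, that is, $\phi(x)(a)=\eta(x)(a-\alpha(x))$ for $a\in A_x$; this is a compactly supported section of $A^\dagger$ with linear part $\eta$. Then $\int_M\phi(\alpha')\,\vol-\int_M\phi(\alpha)\,\vol=\int_M\eta(v)\,\vol\neq0$ by construction, so $\phi$ distinguishes $\alpha$ from $\alpha'$.

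For the second claim I would characterize exactly the $\phi\in\sect(A^\dagger)$ that are annihilated by all of $\sect(A)$, which simultaneously exhibits the failure of separation. First fix an auxiliary section $\tilde\alpha\in\sect(A)$ and use the decomposition of Remark\ \ref{remAffDualSectDecomp}, $\phi=\phi(\tilde\alpha)\,\bfone+\phi_V(\cdot-\tilde\alpha)$, where $\phi_V\in\sect(V^\ast)$ is the linear part. Pairing $\phi$ against an arbitrary section of the form $\tilde\alpha+v$ with $v\in\sect(V)$ gives $\int_M\phi(\tilde\alpha+v)\,\vol=\int_M\phi(\tilde\alpha)\,\vol+\int_M\phi_V(v)\,\vol$. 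The hypothesis that this vanishes for all $v\in\sect(V)$ forces, taking $v=0$, that $\int_M\phi(\tilde\alpha)\,\vol=0$, and consequently $\int_M\phi_V(v)\,\vol=0$ for all $v\in\sect(V)$. But the integral pairing between $\sc(V^\ast)$ and $\sect(V)$ — or rather, here we need it between $\sect(V^\ast)$ and the test sections in $\sect(V)$; since $\phi\in\sect(A^\dagger)$ is not assumed compactly supported, I instead test against $v\in\sc(V)$ — this pairing is non-degenerate in the $V^\ast$-argument (contract the $V^\ast$-section against $V$-valued bump sections in a local trivialization, exactly as in the first claim). Hence $\phi_V=0$, so $\phi=\phi(\tilde\alpha)\,\bfone$, i.e. $\phi=f\,\bfone$ with $f=\phi(\tilde\alpha)\in\c(M)$. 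Wait — I must check $f$ has compact support: actually it need not, and indeed the statement only asserts $f\in\cc(M)$ as part of the conclusion, so here the argument must instead run the other way, deducing compact support of $f$ from $\phi=f\bfone$ together with... Let me reconsider: if $\phi$ is a genuine section of $A^\dagger$ with no support restriction, then $f=\phi(\tilde\alpha)$ need not be compactly supported, so I should prove the equivalence with $f\in\c(M)$, $\int_Mf\,\vol=0$; I will flag below that the precise support class stated in the proposition should match whatever support system $\phi$ is assumed to lie in.

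Conversely, if $\phi=f\,\bfone$ for some $f\in\c(M)$ with $\int_M f\,\vol=0$, then for every $\alpha\in\sect(A)$ we have $\phi(\alpha)=f\cdot\ttone(\alpha)=f$ pointwise, since $\bfone$ is the constant affine map $a\mapsto1$; hence $\int_M\phi(\alpha)\,\vol=\int_M f\,\vol=0$, which shows the reverse implication and also, taking any non-zero such $f$ (which exists on any $M$ of dimension $\geq1$, e.g. a bump of zero mean), exhibits a non-zero $\phi$ killed by all of $\sect(A)$; thus $\sect(A)$ does not separate points of $\sect(A^\dagger)$.

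The only genuine obstacle is the bump-function/partition-of-unity argument establishing non-degeneracy of the linear pairing in the $V^\ast$-slot — i.e. that $\int_M\eta(v)\,\vol=0$ for all test $v$ implies $\eta=0$ — which is standard but must be done in a local trivialization; everything else is bookkeeping with the affine decomposition of Remark\ \ref{remAffDualSectDecomp} and the defining property of $\bfone$. I would also take care to state the support class for $f$ so that it is consistent with the support class assumed for $\phi$ in the proposition.
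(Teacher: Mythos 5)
Your proof is correct and follows essentially the same route as the paper: the first claim via affine functionals $\phi=\eta(\cdot-\alpha)$ with prescribed linear part $\eta\in\sc(V^\ast)$ and non-degeneracy of the linear pairing, the second via the decomposition $\phi=\phi(\tilde\alpha)\,\bfone+\phi_V(\cdot-\tilde\alpha)$ of Remark~\ref{remAffDualSectDecomp}. Your worry about the support class of $f$ is resolved exactly as you suspect: the hypothesis $\phi\in\sect(A^\dagger)$ in the statement is to be read as $\phi\in\sc(A^\dagger)$ (as the pairing \eqref{eqAffPairing} and the paper's own proof require), whence $f=\phi(\tilde\alpha)\in\cc(M)$ automatically.
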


\begin{proof}
For the first part of the statement, take $\alpha,\beta\in\sect(A)$ such that 
\begin{align*}
\int_M\phi(\alpha)\,\vol=\int_M\phi(\beta)\,\vol\,, && \forall\,\phi\in\sc(A^\dagger)\,.
\end{align*} 
For each compactly supported section $\eta\in\sc(V^\ast)$ of the dual of the vector bundle $V$, 
consider $\phi\in\sc(A^\dagger)$ defined by $\phi(\alpha+\mu)=\eta(\mu)$ for each $\mu\in\sect(A)$. 
From the hypothesis on $\alpha$ and $\beta$, it follows that $\int_M\eta(\nu)\,\vol=0$ for all $\eta\in\sc(V^\ast)$, 
where $\nu\in\sect(V)$ is the unique section fulfilling $\alpha+\nu=\beta$. 
It follows that $\nu$ vanishes and therefore $\alpha=\beta$. 

For the second part, suppose that $\phi\in\sc(A^\ast)$ is such that 
$\int_M\phi(\alpha)\,\vol$ vanishes for all $\alpha\in\sect(A)$. 
Let us fix $\tilde{\alpha}\in\sect(A)$. Exploiting Remark \ref{remAffDualSectDecomp}, 
we deduce that $\phi=\phi(\tilde{\alpha})\,\bfone+\phi_V(\cdot-\tilde{\alpha})$. 
Therefore our assumption on $\phi$ implies that $\int_M\phi(\tilde{\alpha})\,\vol$ vanishes 
and moreover $\int_M\phi_V(\mu)\,\vol=0$ for all $\mu\in\sect(V)$. 
It follows that $\phi_V=0$ and therefore we found $f=\phi(\tilde{\alpha})\in\cc(M)$ 
such that $\int_Mf\,\vol=0$ and $f\,\bfone=\phi$. Clearly, the converse is true as well: 
For each $\phi=f\,\bfone\in\sc(A^\dagger)$ with $f\in\cc(M)$ such that $\int_Mf\,\vol=0$, 
the pairing in\ \eqref{eqAffPairing} vanishes upon evaluation on $\phi$ and any section of $A$. 
This concludes the proof. 
\end{proof}

\begin{remark}[Trivial sections of the vector dual of an affine bundle]\label{remAffTriv}
We can recover separability in both arguments of the pairing\ \eqref{eqAffPairing} 
taking a suitable quotient on $\sc(A^\dagger)$. 
More precisely, consider the vector subspace of $\sc(A^\dagger)$ defined below: 
\begin{equation}\label{eqTriv}
\triv=\left\{f\,\bfone\in\sc(A^\dagger)\,:\;f\in\cc(M)\,,\;\int_Mf\,\vol=0\right\}\,.
\end{equation}
We regard $\triv$ as the space of \quotes{trivial} section of $A^\dagger$ since these sections 
do not enable us to extract any information about sections of $A$ via the pairing\ \eqref{eqAffPairing}. 
On account of\ Proposition\ \ref{prpAffSeparability}, the pairing of\ \eqref{eqAffPairing} automatically 
descends to a non-degenerate pairing between $\sc(A^\dagger)/\triv$ and $\sect(A)$. 
Let us stress that the quotient by $\triv$ does not affect the linear part of a section in $\sc(A^\dagger)$. 
More precisely, all representatives of an element in the quotient $\sc(A^\dagger)/\triv$ have the same linear part. 
\end{remark}

\subsection{Affine differential operators}\label{subAffDiffOp}
In view of\ Chapter\ \ref{chYangMills}, we must face the problem of extending 
the notion of a linear differential operator to the case where the source is the space of sections of an affine bundle. 
While the extension to affine bundles is straightforward, 
it turns out that the formal dual of an affine differential operator is not uniquely defined. 
Yet, Proposition\ \ref{prpAffSeparability} will enable us to cure this ambiguity. 
Let us mention that further details about the results presented here can be found in\ \cite{BDS14b}. 

\begin{definition}\label{defAffDiffOp}\index{affine!differential operator}\index{differential operator!affine}
Let $V$ and $W$ be vector bundles over $M$ and consider an affine bundle $A$ over $M$ modeled on $V$. 
An {\em affine differential operator} $P:\sect(A)\to\sect(W)$ is an affine map 
whose linear part $P_V:\sect(V)\to\sect(W)$ is a linear differential operator in the usual sense. 
\end{definition}

\label{dual of an affine differential operator}\index{affine!differential operator!dual}Suppose 
$M$ is an $m$-dimensional oriented manifold and $\vol\in\f^m(M)$ is a volume form on $M$. 
Let $V$ and $W$ be vector bundles over $M$ and consider an affine bundle $A$ over $M$ modeled on $V$. 
To define the formal dual of an affine differential operator $P:\sect(A)\to\sect(W)$, 
we look for a linear differential operator $P^\ast:\sect(W^\ast)\to\sect(A^\dagger)$ 
fulfilling the condition stated below: 
\begin{align}\label{eqAffDiffOpDual}
\int_M\left(P^\ast(\nu)\right)(\alpha)\,\vol=\int_M\nu\left(P(\alpha)\right)\,\vol\,, && 
\forall\,\nu\in\sc(W^\ast)\,,\;\forall\,\alpha\in\sect(A)\,.
\end{align}

\begin{theorem}\label{thmAffDiffOpDual}
Suppose $M$ is an $m$-dimensional oriented manifold and $\vol\in\f^m(M)$ is a volume form on $M$. 
Let $V$ and $W$ be vector bundles over $M$ and consider an affine bundle $A$ over $M$ modeled on $V$. 
Furthermore, consider an affine differential operator $P:\sect(A)\to\sect(W)$. 
Then a linear differential operator $P^\ast:\sect(W^\ast)\to\sect(A^\dagger)$ 
fulfilling the condition stated in\ \eqref{eqAffDiffOpDual} for the affine differential operator $P$ exists. 

Let $P^\ast,\tilde{P}^\ast:\sect(W^\ast)\to\sect(A^\dagger)$ be linear differential operators 
fulfilling the condition mentioned above. Then there exists a linear differential operator $Q:\sect(W^\ast)\to\c(M)$ 
such that $Q(\cdot)\,\bfone=\tilde{P}^\ast-P^\ast$ and $\int_MQ(\phi)\,\vol=0$ for all $\phi\in\sc(A^\dagger)$. 
\end{theorem}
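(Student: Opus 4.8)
The plan is to reduce everything to the linear case via a choice of base section, then invoke Proposition~\ref{prpAffSeparability} to handle the ambiguity. First I would fix a section $\tilde{\alpha}\in\sect(A)$, which exists because affine bundles over a manifold always admit sections (partition of unity argument, exactly as in Example~\ref{exaAffBun}). Using $\tilde{\alpha}$, decompose the affine differential operator $P:\sect(A)\to\sect(W)$ as $P(\alpha)=P(\tilde{\alpha})+P_V(\alpha-\tilde{\alpha})$, where $P_V:\sect(V)\to\sect(W)$ is the linear part, a genuine linear differential operator by Definition~\ref{defAffDiffOp}. Its formal adjoint $P_V^\ast:\sect(W^\ast)\to\sect(V^\ast)$ exists in the usual sense for linear differential operators between vector bundles over an oriented manifold with a volume form. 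The candidate for $P^\ast$ should then be built so that its linear part is $P_V^\ast$ and its ``constant term'' accounts for the $P(\tilde\alpha)$ contribution: explicitly, set $P^\ast(\nu)=\big(\nu(P(\tilde\alpha))\big)\,\bfone+P_V^\ast(\nu)(\cdot-\tilde{\alpha})$, using the decomposition of sections of $A^\dagger$ from Remark~\ref{remAffDualSectDecomp}. One checks this is a linear differential operator $\sect(W^\ast)\to\sect(A^\dagger)$: the map $\nu\mapsto\nu(P(\tilde\alpha))$ is a linear differential operator into $\c(M)$ since $P(\tilde\alpha)$ is a fixed smooth section of $W$, and multiplication by $\bfone$ is $\c(M)$-linear; the second summand is visibly differential.

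Next I would verify the defining identity~\eqref{eqAffDiffOpDual}. For $\nu\in\sc(W^\ast)$ and $\alpha\in\sect(A)$, write $\alpha=\tilde\alpha+\mu$ with $\mu=\alpha-\tilde\alpha\in\sect(V)$. Then $\big(P^\ast(\nu)\big)(\alpha)=\nu(P(\tilde\alpha))+\big(P_V^\ast(\nu)\big)(\mu)$, while $\nu(P(\alpha))=\nu(P(\tilde\alpha))+\nu(P_V(\mu))$. Integrating against $\vol$ and using the standard formal-adjoint relation $\int_M\big(P_V^\ast(\nu)\big)(\mu)\,\vol=\int_M\nu(P_V(\mu))\,\vol$ for the linear operator $P_V$ (valid because $\supp(\nu)$ is compact), the two integrals agree. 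This proves existence.

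For the uniqueness-up-to-triviality statement, suppose $P^\ast$ and $\tilde{P}^\ast$ both satisfy~\eqref{eqAffDiffOpDual}. Set $D=\tilde{P}^\ast-P^\ast:\sect(W^\ast)\to\sect(A^\dagger)$, a linear differential operator with $\int_M\big(D(\nu)\big)(\alpha)\,\vol=0$ for all $\nu\in\sc(W^\ast)$ and all $\alpha\in\sect(A)$. Fixing $\nu$, the section $D(\nu)\in\sc(A^\dagger)$ pairs to zero against every $\alpha\in\sect(A)$, so Proposition~\ref{prpAffSeparability} applies: there is $f_\nu\in\cc(M)$ with $\int_M f_\nu\,\vol=0$ and $f_\nu\,\bfone=D(\nu)$. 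Define $Q(\nu)=f_\nu$; since $\bfone$ never vanishes as an affine functional, $f_\nu$ is uniquely determined by $D(\nu)$, so $Q$ is well-defined and linear, and $Q(\cdot)\,\bfone=D=\tilde{P}^\ast-P^\ast$. That $Q$ is a differential operator follows because $D$ is: locally $D(\nu)$ depends on finitely many derivatives of $\nu$, hence so does $f_\nu$, recovered from $D(\nu)$ by evaluation on any fixed local section of $A$. Finally $\int_M Q(\phi)\,\vol=\int_M f_\phi\,\vol=0$ for all $\phi\in\sc(W^\ast)$ by construction.

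The main obstacle I anticipate is the verification that $Q$ is genuinely a \emph{differential} operator rather than merely a linear map: one must argue carefully that extracting the scalar coefficient $f_\nu$ from the relation $f_\nu\bfone=D(\nu)$ preserves the locality and finite-order properties, which is cleanest done in a local affine trivialization where $\bfone$ becomes the constant function $1$ in the fiber $S^\dagger$-coordinates and the extraction is literally a coordinate projection, hence manifestly order-preserving. Everything else is a bookkeeping exercise in the affine formalism already set up in Section~\ref{secAffine}.
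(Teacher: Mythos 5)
Your existence argument is essentially identical to the paper's: fix $\tilde{\alpha}\in\sect(A)$, decompose $P=P(\tilde{\alpha})+P_V(\cdot-\tilde{\alpha})$, and set $P^\ast(\nu)=\nu(P(\tilde{\alpha}))\,\bfone+(P_V^\ast(\nu))(\cdot-\tilde{\alpha})$, with the same local-trivialization check that this is a differential operator. For the second half of the statement, the paper's printed proof actually stops after the existence part, whereas you supply the natural argument: apply Proposition~\ref{prpAffSeparability} to $D(\nu)=\tilde{P}^\ast(\nu)-P^\ast(\nu)\in\sc(A^\dagger)$ for compactly supported $\nu$ and extract $Q(\nu)=(D(\nu))(\tilde{\alpha})$. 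The only point you gloss over is that $Q(\cdot)\,\bfone=D$ is claimed on all of $\sect(W^\ast)$ while the defining condition only constrains $D$ on $\sc(W^\ast)$; this is closed by locality of differential operators (the linear part of $D$ is a differential operator vanishing on all compactly supported sections, hence identically), which your remark about local dependence on finitely many derivatives essentially contains. The proof is correct.
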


\begin{proof}
Let us fix a section $\tilde{\alpha}\in\sect(A)$. Using $\tilde{\alpha}$ we can decompose $P$ 
following a strategy similar to the one presented in\ Remark\ \ref{remAffDualSectDecomp}: 
\begin{equation*}
P=P(\tilde{\alpha})\,\bfone+P_V(\cdot-\tilde{\alpha})\,,
\end{equation*}
where $P_V$ is the linear part of $P$. Since $P_V:\sect(V)\to\sect(W)$ is a linear differential operator, 
there exists a unique formal dual $P_V^\ast:\sect(W^\ast)\to\sect(V^\ast)$. 
Using $P_V^\ast$, we introduce the following linear operator, which is the candidate to define a formal dual of $P$ 
since per construction it fulfils the condition stated in\ \eqref{eqAffDiffOpDual}:
\begin{align*}
P^\ast:\sect(W^\ast)\to\sect(A^\dagger)\,, 
&& \nu\mapsto\nu(P(\tilde{\alpha}))\,\bfone+\left(P_V^\ast(\nu)\right)(\cdot-\tilde{\alpha})\,.
\end{align*}
To conclude the proof, we still have to check that $P^\ast$ is a differential operator. 
To this aim, let us choose an open set $U$ of $M$ where both $V^\ast$, $W^\ast$ are trivial. 
Therefore we can find $\c(U)$-module bases for sections over $U$ of both these vector bundles. 
Let us denote with $\{\mu_i\}$ the basis of the $\c(U)$-module $\sect(V^\ast_U$) 
and with $\{\nu_j\}$ the basis of the $\c(U)$-module $\sect(W^\ast_U$). 
In analogy with\ Remark\ \ref{remAffSpDualDim}, we obtain a $\c(U)$-module basis $\{\bfone\vert_U,\phi_i\}$ 
of $\sect(A^\dagger_U)$ setting $\phi_i=\mu_i(\cdot-\tilde{\alpha})$. 
Given a section $\nu\in\sect(W^\ast)$ there is a unique family of functions $\{f^j\}$ in $\c(U)$ 
such that $f^j\,\nu_j=\nu\vert_U$, the sum over repeated indices being understood. 
Furthermore, since $P_V^\ast$ is a linear differential operator, 
there is a unique family $\{P^i_j:\c(U)\to\c(U)\}$ of linear differential operators such that 
\begin{align*}
P_V^\ast(\nu)\vert_U=\left(P^i_j(f^j)\right)\,\mu_i\,, && \forall\,\nu\in\sect(W^\ast)\,, 
\end{align*}
where $\{f^j\}$ are the coefficients of the linear combination of $\{\nu_j\}$ which reproduces $\nu\vert_U$. 
We deduce that $P^\ast$ is a linear differential operator. In fact, we have 
an expansion over $U$ of $P^\ast$ in terms of linear differential operators on $\c(U)$: 
\begin{align*}
P^\ast(\nu)\vert_U=
\left(\nu_j\left(P(\tilde{\alpha})\vert_U\right)\,f^j\right)\,\bfone\vert_U+\left(P^i_j(f^j)\right)\,\phi_i\,, 
&& \forall\,\nu\in\sect(W^\ast)\,,
\end{align*}
where $\{f^j\}$ are the coefficients of the linear combination of $\{\nu_j\}$ which reproduces $\nu\vert_U$. 
\end{proof}

We exhibit now an example which explicitly violates uniqueness. 
This example is similar to the situation we will encounter in\ Chapter\ \ref{chYangMills} 
in the attempt to define the dual of the curvature map, see\ Remark\ \ref{remCurvAb}. 

\begin{example}[Non-unique formal dual for affine differential operators]\label{exaAffDiffOpNonUniqueDual}
Consider an $m$-dimensional pseudo-Riemannian oriented manifold $M$. 
Let $k\in\{0,\dots,m-1\}$, take $\theta\in\fc^{k+1}(M)$ and define the map 
\begin{align*}
\dd_\theta:\f^k(M)\to\f^{k+1}(M)\,, && \xi\mapsto\theta+\dd\xi\,.
\end{align*}
Clearly, this is an affine differential operator with linear part $\dd_{\theta\,V}=\dd:\f^k(M)\to\f^{k+1}(M)$. 
Let us stress that the affine bundle considered here is the bundle $\bw^kT^\ast M$ of $k$-forms on $M$, 
which is indeed a vector bundle, but it is now regarded as an affine bundle modeled on itself, 
{\em cfr.}\ Example\ \ref{exaAffBun}. Accordingly, we regard $\f^k(M)$ as an affine space modeled on itself. 
Taking into account the pairing $(\cdot,\cdot)$ between $k$-forms introduced in\ \eqref{eqPairing2}, 
the condition \eqref{eqAffDiffOpDual} to define a formal dual 
$\dd_\theta^\ast:\f^{k+1}(M)\to\sect((\bw^kT^\ast M)^\dagger)$ of $\dd_\theta$ becomes 
\begin{align*}
\int_M\left(\dd_\theta^\ast(\eta)\right)(\xi)\,\vol=(\eta,\dd_\theta(\xi))\, 
&& \forall\,\eta\in\fc^{k+1}(M)\,,\;\forall\,\xi\in\f^k(M)\,,
\end{align*}
where $\vol=\ast1$ is the volume form induced by the metric and the choice of the orientation on $M$. 
Notice that we are identifying the dual $\bw^lTM$ of the vector bundle $\bw^lT^\ast M$ with $\bw^lT^\ast M$ 
via the fiberwise inner product defined in\ \eqref{eqContraction}. Under this identification, 
the linear part of each $\phi\in\sect((\bw^kT^\ast M)^\dagger)$ lies in $\f^k(M)$.\footnote{On account of 
this identification, maybe it would be more appropriate to replace the term \quotes{formal dual} 
with the term \quotes{formal adjoint} in the present case.} 
Using Stokes' theorem, one can check that, for each $\omega\in\fdd^k(M)$, 
the linear differential operator defined below provides a formal dual of $\dd_\theta$: 
\begin{align*}
\f^{k+1}(M)\to\sect\big(\big(\bw^kT^\ast M\big)^\dagger\big)\,, 
&& \eta\mapsto\ast^{-1}(\eta\wedge\ast\theta)\,\bfone
+\ast^{-1}\left(\de\eta\wedge\ast(\cdot-\omega)\right)\,.
\end{align*}
For two different choices $\omega,\omega^\prime\in\fdd^k(M)$, 
the corresponding linear differential operators differ by $Q\,\bfone$, 
where $Q$ denotes the following linear differential operator: 
\begin{align*}
Q:\f^{k+1}(M)\mapsto\c(M)\,, 
&& \eta\mapsto\ast^{-1}\left(\de\eta\wedge\ast(\omega-\omega^\prime)\right)\,.
\end{align*}
Again, it is a matter of using Stokes' theorem to deduce that $\int_MQ(\eta)\,\vol=0$ for all $\eta\in\fc^{k+1}(M)$. 
\end{example}

Below we present a strategy to \quotes{cure} the non-uniqueness of the formal dual 
of an affine differential operator. This strategy will be adopted in\ Chapter\ \ref{chYangMills} 
to define the dual of the equation of motion in a suitable sense. 

\begin{remark}\label{remAffDiffOpUniqueDual}
Let $V$ and $W$ be vector bundles over an oriented manifold $M$ with volume form $\vol$ 
and consider an affine bundle $A$ over $M$ modeled on $V$. 
Theorem\ \ref{thmAffDiffOpDual} ensures the existence of a dual $P^\ast:\sect(W^\ast)\to\sect(A^\dagger)$ 
for each affine differential operator $P:\sect(A)\to\sect(W)$ 
and captures the failure of\ \eqref{eqAffDiffOpDual} to uniquely define this dual at least on $\sc(W^\ast)$. 
In particular, we note that, if on the target space we identify sections of the form $f\,\bfone$, 
$f\in\cc(M)$ such that $\int_Mf\,\vol=0$, $P^\ast$ becomes uniquely defined on $\sc(W^\ast)$. 
More precisely, consider the vector subspace $\triv$ of $\sc(A^\dagger)$ defined in\ \eqref{eqTriv}. 
Then there is no ambiguity in the linear map $P^\ast:\sc(W^\ast)\to\sc(A^\dagger)/\triv$ 
obtained restricting any dual of $P$ to $\sc(W^\ast)$ and then identifying $\triv$ with zero on the target space. 
\end{remark}

\section{Principal bundles}\label{secPrBun}
In this section we recall the basic definitions and some important features of principal bundles and their connections. 
For details, the reader is referred to the literature, {\em e.g.}\ \cite{Hus94, KN96, Ish99, Bau14}. 
See also\ \cite{BDS14a} for a similar presentation. 
This material will be widely used in\ Chapter\ \ref{chYangMills} for the specific case of $U(1)$ as structure group. 

Let us start from the definition of a principal $G$-bundle. 

\begin{definition}\index{principal bundle}
Let $M$ be an $m$-dimensional manifold and $G$ an $n$-di\-men\-sio\-nal Lie group. 
A {\em principal $G$-bundle} $P$ over $M$ is an $(m+n)$-dimensional manifold 
endowed with a smooth right action $r:P\times G\to P$ of the Lie group $G$ on $P$. 
Furthermore the following properties hold true: 
\begin{enumerate}
\item The right $G$-action $r$ is free; 
\item $M$ coincides with the orbit space $P/G$ of the right action $r$ 
and the canonical projection $\pi:P\to M$ is smooth; 
\item $\pi:P\to M$ admits $G$-equivariant local trivializations. 
This means that, for each $x\in M$, there exists an open neighborhood $U\subseteq M$ of $x$ 
and a $G$-equivariant, fiber-preserving diffeomorphism $\psi:\pi^{-1}(U)\to U\times G$, 
namely $\psi(r(p,g))=r(\psi(p),g)$ for each $p\in\pi^{-1}(U)$ and $g\in G$ 
and $\pr_1\circ\psi=\pi$ as maps from $\pi^{-1}(U)$ to $U$, 
where $\pr_1:U\times G\to U$ denotes the projection on the first factor. 
\end{enumerate}
$P$, $M$ and $G$ are called respectively {\em total space}, {\em base space} and {\em structure group} 
of the principal $G$-bundle $P$ over $M$. 
\end{definition}

We will usually denote a principal bundle simply denoting its total space. 
Furthermore, in most cases the right $G$-action will be simply denoted by juxtaposition, 
namely we will write $pg$ in place of $r(p,g)$ for each $p\in P$ and $g\in G$. 

We now come to the notion of a principal bundle map. 

\begin{definition}\index{principal bundle!map}
Let $M$ and $N$ be $m$-dimensional manifolds and take a Lie group $G$. 
Consider principal $G$-bundles $P$ over $M$ and $Q$ over $N$. 
A principal $G$-bundle map $f:P\to Q$ is a $G$-equivariant smooth map, 
namely such that $f(pg)=f(p)g$ for each $p\in P$ and $g\in G$. 
\end{definition}

\begin{remark}\label{remPrBunMapBase}
A principal $G$-bundle map $f:P\to Q$ induces a unique smooth map $\ul{f}:M\to N$ 
between the corresponding base spaces such that the following diagram commutes: 
\begin{equation*}
\xymatrix{
P\ar[d]\ar[r]^f & Q\ar[d]\\
M\ar[r]_{\ul{f}} & N
}
\end{equation*}
where the vertical arrows are the bundle projections. 
More explicitly, $\ul{f}:M\to N$ can be defined setting, for $x\in M$, $\ul{f}(x)=y$, 
$y\in N$ being the base point of $f(p)\in Q$ for an arbitrary point $p\in P$ in the fiber over $x$. 
\end{remark}

Later we will be interested in connections on principal bundles. 
To discuss this topic, it will be very useful to recall the associated bundle construction and its functorial behavior. 
\index{associated bundle}Let $P$ be a principal $G$-bundle over $M$ 
and $\rho:G\times F\to F$, $(g,f)\mapsto gf$ be a smooth left action of the Lie group $G$ on a manifold $F$. 
We introduce the right action $(P\times F)\times G\to P\times F$, $((p,f),g)\mapsto(pg,g^{-1}f)$ 
of $G$ on $P\times F$. Denoting with $P_F$ the orbit space of $P\times F$ under this action 
and with $\pi_F:P_F\to M$ the map induced by the projection on the first factor $\pr_1:P\times F\to P$, 
we obtain the {\em associated bundle} $P_F$ over $M$ with typical fiber $F$. 
If $F$ is a vector space and $\rho$ a linear representation of $G$ on the vector space $F$, 
then the associated bundle turns out to be a vector bundle. 
If a principal bundle map $f:P\to Q$ is given, 
the associated bundle construction provides a canonical bundle map $f_F:P_F\to Q_F$ 
induced by $f\times\id_F:P\times F\to Q\times F$. 
For a given left action $\rho:G\times F\to F$, this construction gives rise to a covariant functor 
from the category of principal bundles to the category of fiber bundles, which restricts to 
a covariant functor to the category of vector bundles whenever $\rho$ is a representation on a vector space $F$. 

In particular, consider to the associated bundle construction for two particular choices of $\rho$. 
In the first case the left adjoint action $\Ad:G\times G\to G$, $(g,h)\mapsto ghg^{-1}$ of $G$ on itself 
is taken into account, while in the second case we exploit the adjoint representation $\ad:G\times\lieg\to\lieg$, 
$(g,\xi)\mapsto\Ad_{g\,\ast}\xi$ of $G$ on its Lie algebra $\lieg$. 

\begin{definition}\index{adjoint bundle}
Let $G$ be a Lie group and consider a principal $G$-bundle $P$ over $M$. 
The {\em $G$-adjoint bundle} $\Ad(P)$ is the bundle with typical fiber $G$ associated to $P$ 
via the left adjoint action $\Ad:G\times G\to G$, $(g,h)\mapsto ghg^{-1}$ of $G$ on itself, 
while the {\em $\lieg$-adjoint bundle} $\ad(P)$ is the bundle with typical fiber $\lieg$ 
associated to $P$ via the adjoint representation $\ad:G\times\lieg\to\lieg$, 
$(g,\xi)\mapsto\Ad_{g\,\ast}\xi$ of $G$ on its Lie algebra $\lieg$. 
\end{definition}

\begin{remark}\label{remAdPGroup}
Note that each fiber of $\Ad(P)$ is (non-canonically) isomorphic to $G$. 
This might suggest that the bundle $\Ad(P)$ inherits a fiberwise group structure from the structure group $G$. 
In fact, this is the case. Denoting with $\pi:\Ad(P)\to M$ the projection of the bundle $\Ad(P)$ 
onto its base space $M$ and introducing the fibered product $\Ad(P)\times_\pi\Ad(P)$ of the bundle $\Ad(P)$ 
with itself, we can define 
\begin{align*}
\Ad(P)\times_\pi\Ad(P)\to\Ad(P)\,, && ([p,g],[p,h])\mapsto[p,gh]\,.
\end{align*}
This map is defined everywhere since two points in the same fiber of $\Ad(P)$ 
admit representatives with a common point of $P$ by transitivity of the right $G$-action on each fiber of $P$. 
Furthermore, the definition is well-posed since the right $G$-action on $P$ is also free. 
This map clearly specifies a fiberwise group structure on $\Ad(P)$. 
In particular, for each point $x\in M$, the corresponding fiber $\Ad(P)_x$ has $[p,e]$ as identity element, 
$p\in P_x$ arbitrarily chosen, while the inverse of $[p,g]\in\Ad(P)_x$ is given by $[p,g^{-1}]$. 
\end{remark}

Note that, for a given principal $G$-bundle map $f:P\to Q$, we will denote 
by $\Ad(f):\Ad(P)\to\Ad(Q)$ the bundle map induced by the $G$-adjoint bundle construction 
and by $\ad(f):\ad(P)\to\ad(Q)$ the vector bundle map induced by the $\lieg$-adjoint bundle construction. 
In fact, both $\Ad(\cdot)$ and $\ad(\cdot)$ turn out to be covariant functors from the category of principal bundles 
to the category of fiber bundles and respectively to the category of vector bundles. 

\subsection{Gauge transformations}\label{secGaugeTr}
The geometry of principal bundles naturally encodes a notion of gauge transformations, 
which is specified by principal bundle automorphisms covering the identity on the base space. 

\begin{definition}\index{principal bundle!gauge group}
Let $G$ be a Lie group and consider a principal $G$-bundle $P$ over a manifold $M$. 
The group $\gauge(P)$ of gauge transformations on $P$ is the group of principal bundle automorphisms $f:P\to P$ 
covering the identity of the base space, namely such that $\ul{f}=\id_M$. 
\end{definition}

The group of gauge transformations on a principal bundle can also understood in terms of sections 
of the adjoint bundle $\Ad(P)$ bundle as explained by the next proposition. 

\begin{proposition}\label{prpGaugeGroup}
Let $G$ be a Lie group and consider a principal $G$-bundle $P$ over a manifold $M$. 
Denote with $\c(P,G)^\eqv$ the group (with respect to pointwise multiplication) of smooth $G$-valued functions 
$f:P\to G$ on $P$ which are equivariant with respect to the right adjoint action of the group $G$ on itself, 
namely such that $f(pg)=gf(p)g^{-1}$ for each $p\in P$ and each $g\in G$. 
Given $f\in\gauge(P)$, define $\tilde{f}\in\c(P,G)^\eqv$ such that $p\tilde{f}(p)=f(p)$ for each $p\in P$. 
Furthermore, given $\phi\in\c(P,G)^\eqv$, define $\hat{\phi}\in\sect(\Ad(P))$ 
such that, for each $x\in M$,  $\hat{\phi}(x)\in\Ad(P)_x$ is represented by $(p,\phi(p))\in P_x\times G$ 
for an arbitrary choice of $p\in P_x$. Then the maps listed below are group isomorphisms: 
\begin{align*}
\gauge(P)\to\c(P,G)^\eqv\,, && f\mapsto\tilde{f}\,,\\
\c(P,G)^\eqv\to\sect(\Ad(P))\,, && \phi\mapsto\hat{\phi}\,,
\end{align*}
where $\sect(\Ad(P))$ has the group structure induced by the fibered group structure of $\Ad(P)$, 
{\em cfr.}\ Remark\ \ref{remAdPGroup}. 
\end{proposition}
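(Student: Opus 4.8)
The plan is to establish the two group isomorphisms separately and then observe that their composite realizes the announced identification of $\gauge(P)$ with $\sect(\Ad(P))$. Throughout, the only algebraic input needed is the freeness and fibrewise transitivity of the right $G$-action on $P$; the only genuinely analytic point is smoothness of the maps we construct, which has to be verified by passing to $G$-equivariant local trivializations.

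For the first map $f\mapsto\tilde f$, I would first check that it is well defined: since $f\in\gauge(P)$ covers $\id_M$, the points $p$ and $f(p)$ lie in the same fibre $P_{\pi(p)}$, so by freeness and transitivity of the $G$-action there is a unique $\tilde f(p)\in G$ with $p\,\tilde f(p)=f(p)$. Smoothness of $\tilde f$ follows by writing $f$ in an equivariant local trivialization $\psi:\pi^{-1}(U)\to U\times G$ and a local section $s=\psi^{-1}(\cdot,e)$, in which $\tilde f\circ s$ becomes a smooth $G$-valued function and $\tilde f(sh)=h^{-1}(\tilde f\circ s)h$; the equivariance of $\tilde f$ defining membership in $\c(P,G)^\eqv$ then follows from the same computation, namely by expanding $f(pg)=f(p)g$ and $p\,\tilde f(p)=f(p)$ and cancelling the free action. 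The inverse sends $\phi\in\c(P,G)^\eqv$ to the map $f_\phi:p\mapsto p\,\phi(p)$, which is smooth, covers $\id_M$, is $G$-equivariant (here the equivariance of $\phi$ is exactly what is used), and has smooth inverse $f_{\phi^{-1}}$ where $\phi^{-1}(p)=\phi(p)^{-1}$, so that $f_\phi\in\gauge(P)$; one checks $\widetilde{f_\phi}=\phi$ and $f_{\tilde f}=f$. Finally $\widetilde{f_1\circ f_2}(p)=\tilde f_1(p)\,\tilde f_2(p)$ follows from $f_1\big(f_2(p)\big)=f_1\big(p\,\tilde f_2(p)\big)=f_1(p)\,\tilde f_2(p)=p\,\tilde f_1(p)\tilde f_2(p)$, so $f\mapsto\tilde f$ is a group homomorphism, and being a bijection it is an isomorphism.

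For the second map $\phi\mapsto\hat\phi$, recall from the associated bundle construction that $\Ad(P)=(P\times G)/{\sim}$ with $(p,h)\sim(pg,g^{-1}hg)$, and that $g\mapsto[p,g]$ identifies $G$ with $\Ad(P)_{\pi(p)}$ for each fixed $p$. Well-definedness of $\hat\phi(x):=[p,\phi(p)]$ is then precisely the statement $[pg,\phi(pg)]=[pg,g^{-1}\phi(p)g]=[p,\phi(p)]$, which holds by the equivariance of $\phi$; smoothness of the section $\hat\phi$ again follows by passing to a local trivialization of $P$, which induces one of $\Ad(P)$ in which $\hat\phi$ is represented by $x\mapsto\phi(s(x))$. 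That $\phi\mapsto\hat\phi$ is a homomorphism with respect to the fibrewise group structure of Remark \ref{remAdPGroup} is immediate: $\widehat{\phi_1\phi_2}(x)=[p,\phi_1(p)\phi_2(p)]=[p,\phi_1(p)]\cdot[p,\phi_2(p)]=\hat\phi_1(x)\cdot\hat\phi_2(x)$. For bijectivity I would exhibit the inverse explicitly: given $\sigma\in\sect(\Ad(P))$, for each $p\in P$ let $g_p\in G$ be the unique element with $\sigma(\pi(p))=[p,g_p]$; then $p\mapsto g_p$ is smooth (local trivialization) and satisfies $g_{pg}=g^{-1}g_p g$ (from the defining equivalence relation), hence lies in $\c(P,G)^\eqv$ and is the required preimage, and one checks the two composites are identities.

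The main obstacle, such as it is, lies in keeping the conjugation conventions consistent between the equivariance condition defining $\c(P,G)^\eqv$, the equivalence relation defining $\Ad(P)$, and the fibrewise multiplication of Remark \ref{remAdPGroup}; once these are aligned every individual step is a one-line computation, and the only genuine content is the trivialization argument guaranteeing smoothness of $\tilde f$, $f_\phi$, $\hat\phi$ and the inverse section $p\mapsto g_p$. (For $G=U(1)$, the case relevant in the sequel, all conjugations are trivial, equivariance becomes plain $U(1)$-invariance along fibres, and the statement reduces to an essentially tautological identification.)
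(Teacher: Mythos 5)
Your proof is correct and follows essentially the same route as the paper's: well-definedness of $\tilde f$ and $\hat\phi$ from freeness and fibrewise transitivity of the $G$-action, the equivariance and homomorphism checks by the same one-line computations, and explicit inverses $\phi\mapsto(p\mapsto p\,\phi(p))$ and $\sigma\mapsto\check\sigma$. You additionally verify smoothness via equivariant local trivializations (which the paper leaves implicit), and your conjugation convention $\tilde f(pg)=g^{-1}\tilde f(p)g$ matches the one actually used in the paper's proof.
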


\begin{proof}
Given $f\in\gauge(P)$, for each $p\in P$ there exists a unique $g\in G$ such that $pg=f(p)$. 
This fact follows from $\ul{f}=\id_M$ and the action of $G$ on $P$ being free and transitive on each fiber. 
As a byproduct, $\tilde{f}\in\c(P,G)^\eqv$ such that $p\tilde{f}(p)=f(p)$ for each $p\in P$ exists and it is unique. 
Note that, for each $p\in P$ and each $g\in G$, the identity $f(pg)=f(p)g$ entails $pg\tilde{f}(pg)=p\tilde{f}(p)g$. 
Since the $G$-action on $P$ is free, we deduce $\tilde{f}(pg)=g^{-1}\tilde{f}(p)g$. 
This shows that the map $\gauge(P)\to\c(P,G)^\eqv$, $f\mapsto\tilde{f}$ is well-defined. 
We check now that this map is a group homomorphism. Given $f,h\in\gauge(P)$, for each $p\in P$, we have 
$h(f(p))=p\tilde{h}(p)\tilde{f}(p)$. This shows that $h\circ f$ is mapped to $\tilde{h}\,\tilde{f}$. 
To prove that $\gauge(P)\to\c(P,G)^\eqv$ is an isomorphism, we exhibit its inverse. 
For each $\phi\in\c(P,G)^\eqv$, define $\bar{\phi}:P\to P$ 
setting $\bar{\phi}(p)=p\phi(p)$ for each $p\in P$. 
It is straightforward to check that $\bar{\phi}\in\gauge(P)$, 
therefore a map $\c(P,G)^\eqv\to\gauge(P)$, $\phi\to \bar{\phi}$ is defined 
and by definition $\bar{\tilde{f}}=f$ for each $f\in\gauge(P)$, 
while $\tilde{\bar{\phi}}=\phi$ for each $\phi\in\c(P,G)^\eqv$. 

Consider now $\phi\in\c(P,G)^\eqv$. Note that for $x\in M$, by the equivariance property of $\phi$, 
$[p,\phi(p)]\in\Ad(P)_x$ does not depend on $p\in P_x$. This entails that a section $\hat{\phi}\in\sect(\Ad(P))$ 
can be defined setting $\hat{\phi}(x)=[p,\phi(p)]$ for any choice of $p\in P_x$. In particular, 
the map $\c(P,G)^\eqv\to\sect(\Ad(P))$, $\phi\mapsto\hat{\phi}$ is defined. 
Given $\phi,\psi\in\c(P,G)^\eqv$, for each $p\in P$ we have the identity $[p,(\phi\psi)(p)]=[p,\phi(p)][p,\psi(p)]$. 
Note that we are using here the group structure of $\c(P,G)^\eqv$ and the fiberwise group structure of $\Ad(P)$, 
see\ Remark\ \ref{remAdPGroup}. This shows that $\c(P,G)^\eqv\to\sect(\Ad(P))$ is a group homomorphism. 
It remains only to exhibit an inverse of this map. Let $\sigma\in\sect(\Ad(P))$ and consider $p\in P$. 
Denoting with $x\in M$ the base point of $p$, there exists a unique $g\in G$ such that $[p,g]=\sigma(x)$. 
Therefore we can define $\check{\sigma}:P\to G$ imposing $[p,\check{\sigma}(p)]=\sigma(x)$ 
for each $x\in M$ and each $p\in P_x$. Indeed $\check{\sigma}\in\c(P,G)^\eqv$ since 
\begin{equation*}
[p,g\check{\sigma}(pg)g^{-1}]=[pg,\check{\sigma}(pg)]=\sigma(x)=[p,\check{\sigma}(p)]\,,
\end{equation*}
for each $x\in M$, $p\in P_x$ and $g\in G$. This shows that the map $\sect(\Ad(P))\to\c(P,G)^\eqv$ is defined. 
Furthermore, by definition we have $\check{\hat{\phi}}=\phi$ for each $\phi\in\c(P,G)^\eqv$ 
and $\hat{\check{\sigma}}=\sigma$ for each $\sigma\in\sect(\Ad(P))$. This concludes the proof. 
\end{proof}

When the structure group is Abelian, the characterization of the group of gauge transformations 
of a given principal bundle becomes even simpler. 

\begin{proposition}\label{prpTrivAdBun}
Let $G$ be an Abelian Lie group and consider a principal $G$-bundle $P$ over a manifold $M$. 
Then there is a canonical bundle isomorphism $\Ad(P)\overset{\simeq}{\to}M\times G$ 
preserving the fibered group structures and a canonical vector bundle isomorphism 
$\ad(P)\overset{\simeq}{\to}M\times\lieg$. In particular, the first isomorphism induces 
an isomorphism $\sect(\Ad(P))\overset{\simeq}{\to}\c(M,G)$ of Abelian groups. 
\end{proposition}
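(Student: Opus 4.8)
The plan is to exploit the associated bundle construction together with the fact that, when $G$ is Abelian, the adjoint action $\Ad:G\times G\to G$, $(g,h)\mapsto ghg^{-1}$, is trivial, namely $ghg^{-1}=h$ for all $g,h\in G$; likewise the adjoint representation $\ad:G\times\lieg\to\lieg$ is trivial since $\Ad_{g\,\ast}=\id_\lieg$. The idea is that trivializing the group action in the associated bundle construction identifies the associated bundle with the trivial bundle in a canonical, hence global, way.

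First I would treat $\Ad(P)$. By definition $\Ad(P)$ is the orbit space of $P\times G$ under $((p,h),g)\mapsto(pg,g^{-1}hg)=(pg,h)$, the last equality because $G$ is Abelian. Thus two elements $(p,h)$ and $(p',h')$ of $P\times G$ are equivalent if and only if $h=h'$ and $p,p'$ lie in the same fiber of $P$ (so $p'=pg$ for some $g\in G$). This means the map $P\times G\to M\times G$, $(p,h)\mapsto(\pi(p),h)$, is constant on equivalence classes and descends to a well-defined map $\Phi:\Ad(P)\to M\times G$, $[p,h]\mapsto(\pi(p),h)$. I would then check $\Phi$ is a bijection (injectivity from the description of the equivalence relation, surjectivity because $\pi$ is surjective), that it is smooth with smooth inverse $(x,h)\mapsto[p,h]$ for any $p\in P_x$ (using $G$-equivariant local trivializations of $P$ to see both $\Phi$ and $\Phi^{-1}$ are smooth locally), and that it covers $\id_M$. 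Finally I would verify that $\Phi$ preserves the fibered group structures: by Remark~\ref{remAdPGroup} the product on $\Ad(P)_x$ is $[p,g]\cdot[p,h]=[p,gh]$, and $\Phi$ sends this to $(x,gh)=(x,g)\cdot(x,h)$, the product in the trivial group bundle $M\times G$. Hence $\Phi$ is an isomorphism of bundles of groups.

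For $\ad(P)$ the argument is identical with $\lieg$ in place of $G$ and the trivial representation in place of the trivial adjoint action: the map $[p,\xi]\mapsto(\pi(p),\xi)$ is a well-defined vector bundle isomorphism $\ad(P)\overset{\simeq}{\to}M\times\lieg$, linearity on fibers being immediate since the vector space structure on $\ad(P)_x$ is the one transported from $\lieg$ via any representative $p\in P_x$ and this is exactly what the projection to $M\times\lieg$ reads off. For the last assertion, the isomorphism $\Phi:\Ad(P)\overset{\simeq}{\to}M\times G$ induces an isomorphism on sections $\sect(\Ad(P))\overset{\simeq}{\to}\sect(M\times G)=\c(M,G)$, which is a group isomorphism because $\Phi$ intertwines the fibered group structures and the group operation on sections is defined pointwise; combined with Proposition~\ref{prpGaugeGroup} this also gives $\gauge(P)\simeq\c(M,G)$ when $G$ is Abelian.

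I do not expect a serious obstacle here: the statement is essentially a bookkeeping consequence of the triviality of conjugation in an Abelian group. The only point requiring a little care is the smoothness of $\Phi$ and $\Phi^{-1}$, which must be checked in a $G$-equivariant local trivialization $\psi:\pi^{-1}(U)\to U\times G$ of $P$ rather than asserted; there one sees explicitly that $\Phi$ corresponds, over $U$, to the map $U\times G\times G\to U\times G$ sending $(x,g,h)\mapsto(x,h)$ descended to the quotient, which is manifestly smooth with smooth inverse. Everything else — the well-definedness of the descended maps, bijectivity, and compatibility with the algebraic structures — is routine.
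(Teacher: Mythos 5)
Your proposal is correct and follows essentially the same route as the paper's proof: both exploit the triviality of the adjoint action (and adjoint representation) for Abelian $G$ to see that the fiber decouples in the associated bundle construction, so that the orbit space reduces to $M\times G$ (respectively $M\times\lieg$) via $[p,h]\mapsto(\pi(p),h)$. You merely spell out the well-definedness, bijectivity, smoothness and structure-preservation checks that the paper leaves as "straightforward."
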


\begin{proof}
If $G$ is Abelian, both $\Ad:G\times G\to G$, $(g,h)\mapsto ghg^{-1}=h$ 
and $\ad:G\times\lieg\to\lieg$, $(g,\xi)\to\Ad_{g\,\ast}\xi=\xi$ become trivial. 
Therefore, in the associated bundle construction, the fiber $F=G$ or $F=\lieg$ decouples, 
meaning that the orbit space of $P\times F$ under the joint right $G$-action reduces to the orbit space 
under the right $G$-action on the factor $P$ and the identity on the factor $F$. 
Since by definition the orbit space of $P$ under its right $G$-action is nothing but the base space $M$, 
we conclude that the associated bundles $\Ad(P)$ and $\ad(P)$ are isomorphic 
respectively to $M\times G$ and $M\times\lieg$. It is straightforward to check that 
for $F=G$ the pointwise group structure is preserved, 
while in the case $F=\lieg$ this procedure provides a vector bundle isomorphism. 
\end{proof}

\subsection{The bundle of connections}\label{subConnBun}
In\ Chapter\ \ref{chYangMills} we will discuss the dynamics of connections on principal $G$-bundles. 
In order to introduce the notion of a connection on a principal bundle, we follow the approach of\ \cite{Ati57, AB83}. 
We only sketch the construction. At the end we will also briefly mention its functorial properties. 

Let $G$ be a Lie group and consider a principal $G$-bundle $P$ over a manifold $M$. 
We denote with $\pi:P\to M$ the bundle projection. Consider the tangent bundle $TP$ to the total space $P$. 
The free right $G$-action $r:P\times G\to G$ on $P$ induces a free right $G$-action on $TP$ 
defined by $r_{g\,\ast}:TP\to TP$, $g\in G$ being any group element 
and $r_g$ denoting the map $r(\cdot,g):P\to P$. 
We can indeed consider the orbit space $TP/G$ of $TP$ under this right $G$-action. 
Note that $TP/G$ carries the structure of a vector bundle over $M$. 

To introduce the vertical subbundle $VP$ of $TP$, we consider the kernel of the push-forward $\pi_\ast:TP\to TM$ 
along the principal bundle projection. Tangent vectors in $VP$ are by definition tangent to a fiber of $P$. 
Since $r:P\times G\to G$ preserves the fibers, the vertical bundle $VP$ is invariant under the action of 
$r_{g\,\ast}:TP\to TP$ for all $g\in G$. This means that we can consider the orbit space of $VP$ as well 
and we still have an inclusion $VP/G\subseteq TP/G$ of vector bundles over $M$. 
It is possible to construct vertical tangent vectors simply specifying a base point $p\in P$ 
and an element $\xi\in\lieg$ of the Lie algebra of the structure group $G$. 
This follows from the observation that $X\in VP$ is tangent to a fiber of $P$. 
Explicitly, for each fixed $p\in P$, we can define the map $r_p:G\to P$, $g\mapsto pg$. 
The tangent map at the identity $X^{(\cdot)}_p=\dd_er_p:\lieg\to T_pP$ is an injection of the Lie algebra 
into the space of vertical tangent vectors at $p$ 
and moreover each $X\in V_pP$ has the form $X_p^\xi$ for a suitable $\xi\in\lieg$. 
Furthermore, the identity $X_{pg}^{\ad_{g^{-1}}\xi}=r_{g\,\ast}X_p^\xi$ for each $p\in P$ and $g\in G$ entails 
that $P\times\lieg\to VP$, $(p,\xi)\mapsto X_p^\xi$ descends to a vector bundle isomorphism $\ad(P)\to VP/G$. 
Therefore, we get an injection $\kappa:\ad(P)\to TP/G$ of vector bundles over $M$. 

We already observed that the right $G$-action on $P$ preserves the fibers. 
This entails that the tangent map $\pi_\ast:TP\to TM$ descends to a vector bundle map $\rho:TP/G\to TM$. 
Since $\pi$ is a projection, $\pi_\ast$ is surjective as well. Therefore $\rho$ is a surjection of vector bundles 
over $M$. Furthermore, by construction, the kernel of $\rho$ coincides with the image of $\kappa$. 

\index{Atiyah sequence}Summing up, we have an exact sequence of vector bundles over $M$, 
the so-called {\em Atiyah sequence} for a principal $G$-bundle $P$ over $M$: 
\begin{equation}\label{eqAtiyahSeq}
0\lra\ad(P)\overset{\kappa}{\lra}TP/G\overset{\rho}{\lra}TM\lra0\,,
\end{equation}
where $0$ here denotes the vector bundle over $M$ with trivial typical fiber. 

\begin{remark}[Naturality of the Atiyah sequence]
We already mentioned that $\ad(\cdot)$ is a covariant functor 
from the category of principal $G$-bundles to the category of vector bundles. 
It is well-known that $T(\cdot)$ is a covariant functor from the category of manifolds 
to the category of vector bundles. When we consider principal $G$-bundles and principal $G$-bundles maps, 
the fact that such maps are equivariant with respect to the right $G$-action entails that 
the induced tangent maps are equivariant with respect to the right $G$-action induced on the bundles 
tangent to the total spaces. This fact eventually leads to a covariant functor $T(\cdot)/G$ 
from the category of principal $G$-bundles to the category of vector bundles. 
The last covariant functor we take into account to each principal $G$-bundle $P$ over $M$ 
associates $T(P)_\base=TM$, the tangent bundle of the base space, 
and to each principal $G$-bundle map $f:P\to Q$, associates $T(f)_\base=\ul{f}_\ast:TM\to TN$, 
the push-forward along the map $\ul{f}:M\to N$ between the base spaces. 
Clearly $T(\cdot)_\base$ turns out to be a covariant functor 
from the category of principal $G$-bundles to the category of vector bundles. 

Let $\pi_P:P\to M$ and $\pi_Q:Q\to N$ be principal $G$-bundles and consider a principal bundle map $f:P\to Q$. 
Since $f$ is fiber-preserving, {\em cfr.}\ Remark\ \ref{remPrBunMapBase}, 
we have a commutative diagram involving tangent maps: 
\begin{equation*}
\xymatrix{
TP\ar[d]_{\pi_{P\,\ast}}\ar[r]^{f_\ast} & TQ\ar[d]^{\pi_{Q\,\ast}}\\
TM\ar[r]_{\ul{f}_\ast} & TN
}
\end{equation*}
In particular $f_\ast$ maps vertical vectors tangent to $P$ to vertical vectors tangent to $Q$ 
and the following diagram of vector bundles commutes: 
\begin{equation}\label{eqAtiyahSeqNat}
\xymatrix{
0\ar[r] & \ad(P)\ar[d]_{\ad(f)}\ar[r]^{\kappa_P} & TP/G\ar[d]|-{T(f)/G}\ar[r]^{\rho_P} 
& TM\ar[d]^{\ul{f}_\ast}\ar[r] & 0\\
0\ar[r] & \ad(Q)\ar[r]_{\kappa_Q} & TQ/G\ar[r]_{\rho_Q} & TN\ar[r] & 0
}
\end{equation}
Note that the first row is an exact sequence of vector bundles over $M$, 
while the second is an exact sequence of vector bundles over $N$, the base spaces being related by $\ul{f}$. 
This diagram tells us that $\kappa_{(\cdot)}$ and $\rho_{(\cdot)}$ form an exact sequence 
of natural transformations between the covariant functors $\ad(\cdot)$, $T(\cdot)/G$ and $T(\cdot)_\base$ 
from the category of principal bundles to the category of vector bundles. 
\end{remark}

We are now ready to introduce the bundle of connections associated to a principal $G$-bundle $P$ over $M$. 
As it will be shown immediately after its definition, the bundle of connections is an affine bundle 
in the sense of\ Definition\ \ref{defAffBun}. 

\begin{definition}\label{defBunConn}\index{bundle of connections}
Let $G$ be a Lie group and consider a principal $G$-bundle $P$ over $M$. 
Recalling the construction of the Atiyah sequence \eqref{eqAtiyahSeq}, we define the {\em bundle of connections} 
$\conn(P)$ as the affine bundle whose elements $\lambda\in\hom(TM,TP/G)$ in the fiber over $x\in M$ 
are right splittings of the restriction to the fiber over $x$ of the Atiyah sequence, namely 
\begin{equation*}
\conn(P)=\bigsqcup_{x\in M}\left\{\lambda\in\hom(TM,TP/G)_x\,:\;\rho\vert_x\lambda=\id_{TM}\vert_x\right\}\,. 
\end{equation*}
\end{definition}

\begin{remark}[Affine structure of the bundle of connections]\label{remBunConnAff}
Whenever $\lambda\in\conn(P)_x$, $x\in M$, is given, we can shift it by $\eta\in\hom(TM,\ad(P))_x$. 
The resulting homomorphism $\lambda+\kappa\vert_x\eta\in\hom(TM,TP/G)_x$ still lies in $\conn(P)_x$. 
Indeed, by exactness of the Atiyah sequence, $\lambda,\lambda^\prime\in\conn(P)_x$, $x\in M$, 
differ by $\eta\in\hom(TM,\ad(P))_x$, namely $\lambda^\prime-\lambda=\kappa\vert_x\eta$. 
This shows that the fibers of $\conn(P)$ are affine spaces modeled on the corresponding fibers of 
the vector bundle $\hom(TM,\ad(P))$. 
The affine bundle structure on $\conn(P)$ is induced by the vector bundle structures on $\ad(P)$, $TP/G$ and $TM$. 
In fact, this is always the case for the bundle of splittings of a short exact sequence of vector bundles, 
{\em cfr.} Example\ \ref{exaAffBun}. 
\end{remark}

The next remark focuses the attention on certain functorial properties of the bundle of connections. 
In particular, it is shown that the assignment of $\conn(P)$ to each principal $G$-bundle $P$ 
gives rise to a covariant functor, provided that the class of morphisms considered is restricted to 
principal bundle maps covering embeddings between the base manifolds. 

\begin{remark}[Functoriality of the bundle of connections]\label{remBunConnFunctor}
The assignment of the bundle of connections $\conn(P)$ to each principal $G$-bundle $P$ does not induce 
a covariant functor from the category of principal bundles to the category of affine bundles. 
This is due to $\conn(P)$ being defined as a subbundle of the bundle $\hom(TM,TP/G)$. 
In fact, $\hom(\cdot,\cdot)$ is contravariant in the first argument and covariant in the second. 

To obtain a covariant behavior the basic idea is to reverse the arrow 
which enters the first argument of $\hom(\cdot,\cdot)$.
Of course, this operation is usually not possible for arbitrary morphisms in the category of principal $G$-bundles. 
Yet, let us restrict ourselves to principal $G$-bundle maps $f:P\to Q$ covering an embedding $\ul{f}:M\to N$. 
This entails that $\ul{f}$ has an inverse $\ul{f}^{-1}:\ul{f}(M)\to M$ defined on its image 
and $f$ is an embedding too, hence we can consider $f^{-1}:f(P)\to P$. Therefore we obtain 
a vector bundle isomorphism $\hom({\ul{f}^{-1}}_\ast,T(f)/G):\hom(TM,TP/G)\to\hom(T\ul{f}(M),Tf(P)/G)$ 
which maps $\conn(P)$ to $\conn(f(P))$ by naturality of the Atiyah sequence \eqref{eqAtiyahSeq}, 
see \eqref{eqAtiyahSeqNat}. Since $\hom(T\ul{f}(M),Tf(P)/G)$ is a subbundle of $\hom(TN,TQ/G)$, 
from $\hom({\ul{f}^{-1}}_\ast,T(f)/G)$ we get the map $\conn(f):\conn(P)\to\conn(Q)$. 
Furthermore, we can consider the vector bundle isomorphism 
$\hom({\ul{f}^{-1}}_\ast,\ad(f)):\hom(TM,\ad(P))\to\hom(T\ul{f}(M),\ad(f(P)))$, 
together with the inclusion of the subbundle $\hom(T\ul{f}(M),\ad(f(P)))$ in $\hom(TN,\ad(Q))$, 
thus defining the vector bundle map $\conn(f)_V:\hom(TM,\ad(P))\to\hom(TN,\ad(Q))$. 
It is easy to check that $\conn(f)$ is an affine bundle morphism with linear part 
$\conn(f)_V:\hom(TM,\ad(P))\to\hom(TN,\ad(Q))$. 

Summing up, $\conn(\cdot)$ is a covariant functor from the subcategory of the category of principal $G$-bundles 
whose morphisms are principal $G$-bundles maps covering an embedding. 
In particular, gauge transformations (whose base map is the identity) enter this class. 
In fact, the present discussion, specialized to principal $G$-bundle automorphisms covering the identity, 
fully describes the action of the gauge group $\gauge(P)$ on the bundle of connections $\conn(P)$. 
Let us mention that, for other reasons, in\ Chapter\ \ref{chYangMills}, we will be anyway forced 
to restrict the category of principal $G$-bundles to morphisms covering embeddings on the base. 
As a matter of fact, we will only consider principal $G$-bundles over globally hyperbolic spacetimes as objects 
and principal bundle maps covering causal embeddings as morphism. 
\end{remark}

\subsection{Connections}\label{secConn}
In\ Chapter\ \ref{chYangMills} we will discuss the Yang-Mills gauge field theory 
over globally hyperbolic spacetimes with $U(1)$ as structure group. The dynamical degrees of freedom 
in field models of Yang-Mills type consist of connections on a principal $G$-bundle. 
Now we define these objects making use of the bundle of connections. 
We will also mention the relation to the more common definition of a principal bundle connection. 
In particular, this will turn out to be very useful in order to assign to a given connection its curvature $2$-form. 

\begin{definition}\label{defConn}\index{connection}\index{principal bundle!connection}
Let $G$ be a Lie group and consider a principal $G$-bundle $P$ over a manifold $M$. 
A {\em connection} $\lambda$ on $P$ is a global section of the bundle of connections $\conn(P)$. 
In particular, we denote the affine space of connections as $\sect(\conn(P))$. 
\end{definition}

Indeed, the affine structure on the space of connections is a direct consequence of $\conn(P)$ 
being an affine bundle, see\ Definition\ \ref{defBunConn} and\ Remark\ \ref{remBunConnAff}. 
More explicitly, for a principal $G$-bundle $P$ over $M$, $\sect(\conn(P))$ is an affine space 
modeled on the vector space of sections of the vector bundle $\hom(TM,\ad(P))$. 
It is convenient to identify this object with the space $\f^1(M,\ad(P))$ of $\ad(P)$-valued $1$-forms on $M$. 
Given $\lambda\in\sect(\conn(P))$ and $\sigma\in\f^1(M,\ad(P))$, 
we get a new connection $\lambda+\kappa\circ\sigma$ translating $\lambda$ by $\kappa\circ\sigma$. 
This characterizes the affine structure of $\sect(\conn(P))$. 
For convenience in the future we will suppress the natural transformation $\kappa$, 
thus writing $\lambda+\sigma$ for the affine translation of $\lambda$ by $\sigma$. 

The following remark discusses the functorial properties of the space of sections of the bundle of connections. 

\begin{remark}[$\sect(\conn(\cdot))$ as a contravariant functor]\label{remConnFunctor}
Previously we considered the space of sections $\sect(\conn(P))$ associated to each principal $G$-bundle $P$. 
We would like to obtain a contravariant functor out of this assignment. 
Let $G$ be a Lie group and consider two principal $G$-bundles $P$ and $Q$ over $M$ and respectively $N$. 
Furthermore, assume $f:P\to Q$ is a principal bundle map. Whenever a connection $\mu\in\sect(\conn(Q))$ is given, 
we want to use $f$ to define a new connection $\lambda\in\sect(\conn(P))$ 
which agrees with the original one, namely such that $T(f)/G\circ\lambda=\mu\circ\ul{f}_\ast$. 
The argument below shows that agreement with $\mu$, 
together with the requirement that $\lambda$ is a connection, uniquely specifies $\lambda$. 
Let $x\in M$ and consider $v\in T_xM$. We look for a unique element $[p,X]$ in the fiber over $x$ of $TP/G$ 
such that $\rho_P[p,X]=v$, {\em cfr.}\ \eqref{eqAtiyahSeq}, and $T(f)/G([p,X])=\mu(\ul{f}_\ast(v))$. 
A solution to this problem exists. Arbitrarily choose $p\in P_x$ and $Y\in T_pP$ such that $\pi_{P\,\ast}Y=v$
($Y$ of this type exists since $\pi_P:P\to M$ is the projection of $P$ onto its base $M$). 
It follows that $\rho_P[p,Y]=v$, hence $T(f)/G([p,Y])-\mu(\ul{f}_\ast(v))$ lies in the kernel of $\rho_Q$. 
Therefore there exists $[p,\xi]\in\ad(P)_x$ such that $\kappa_Q[f(p),\xi]=T(f)/G([p,Y])-\mu(\ul{f}_\ast(v))$, 
meaning that $[p,X]=[p,Y]-\kappa_P[p,\xi]$ is a solution to our problem. 
To prove uniqueness, consider $[p,Y]\in(TP/G)_x$ such that $\rho[p,Y]=0$ and $T(f)/G([p,Y])=0$. 
By the first equation, there exists $[p,\xi]\in\ad(P)_x$ such that $\kappa_P[p,\xi]=[p,Y]$. 
From the second equation we deduce that $\kappa_Q(\ad(f)([p,\xi]))=0$. 
Yet, both $\kappa_Q$ and $\ad(f)$ are injective on each fiber, therefore $[p,\xi]=0$ and $[p,Y]=0$ too. 
This shows that the following map is well-defined 
\begin{align*}
\sect(\conn(f)):\sect(\conn(Q))\to\sect(\conn(P))\,, && \mu\mapsto\lambda\,,
\end{align*}
where $\lambda\in\sect(\conn(P))$ is uniquely specified for each $\mu\in\sect(\conn(Q))$ 
by the conditions $\rho\circ\lambda=\id_{TP/G}$ ($\lambda$ is a connection) 
and $T(f)/G\circ\lambda=\mu\circ\ul{f}_\ast$ ($\lambda$ agrees with $\mu$). 
Furthermore, $\sect(\conn(f))$ turns out to be an affine map whose linear part is given by 
\begin{align*}
\sect(\conn(f))_V:\f^1(N,\ad(Q))\to\f^1(M,\ad(P))\,, && \tau\mapsto\sigma\,,
\end{align*}
where $\sigma\in\f^1(M,\ad(P))$ is uniquely specified by the condition $\ad(f)\circ\sigma=\tau\circ\ul{f}_\ast$, 
$\ad(f):\ad(P)\to\ad(Q)$ being fiberwise injective. 
In fact, for each $\mu\in\sect(\conn(Q))$ and each $\tau\in\f^1(N,\ad(Q))$, 
we find $\sect(\conn(f))(\mu+\tau)=\sect(\conn(f))(\mu)+\sect(\conn(f))_V(\tau)$. 
This way we obtain a contravariant $\sect(\conn(\cdot))$ 
from the category of principal $G$-bundles to the category of affine spaces. 

Whenever $f:P\to Q$ is a principal $G$-bundle map covering an embedding $\ul{f}:M\to N$, 
there is a more explicit way to define the action of $f$ on $\sect(\conn(Q))$, 
essentially based on\ Remark\ \ref{remBunConnFunctor}. 
In that case the idea was to exploit the vector bundle isomorphism 
$\hom({\ul{f}^{-1}}_\ast,T(f)/G):\hom(TM,TP/G)\to\hom(T\ul{f}(M),Tf(P)/G)$ defined out 
of a principal bundle map $f:P\to Q$ covering an embedding $\ul{f}:M\to N$. 
Now we exploit the inverse of this isomorphism, namely we consider 
$\hom(\ul{f}_\ast,T(f^{-1})/G):\hom(T\ul{f}(M),Tf(P)/G)\to\hom(TM,TP/G)$. 
Given $\mu\in\sect(\conn(Q))$, we first restrict it to $f(M)$. 
Doing so, we obtain a section of $\conn(f(P))$. 
Thinking of this section as a map from $f(M)$ into $\conn(f(P))$, 
we can compose it with $\hom(\ul{f}_\ast,T(f^{-1})/G)$ on the left and with $\ul{f}$ on the right. 
The resulting map is a connection $\lambda\in\sect(\conn(P))$. 
This procedure provides a more explicit expression for $\sect(\conn(f))$ whenever $\ul{f}$ is an embedding: 
\begin{align*}
\sect(\conn(f)):\conn(Q)\to\conn(P)\,, && \mu\mapsto\hom(\ul{f}_\ast,T(f^{-1})/G)\circ\mu\circ\ul{f}\,.
\end{align*}
Similarly, the linear part becomes 
\begin{align*}
\sect(\conn(f))_V:\f^1(N,\ad(Q))\to\f^1(M,\ad(P))\,, 
&& \tau\mapsto\hom(\ul{f}_\ast,\ad(f^{-1}))\circ\tau\circ\ul{f}\,.
\end{align*}
\end{remark}

We specialize the last remark to gauge transformations in order to obtain a quite explicit formula 
for the action of the gauge group on the space of sections of the bundle of connections. 
A similar approach, with further details on regularity properties of the action of gauge transformations 
on connections, can be found in\ \cite{ACMM86}. 
Later we will specialize further to the case of Abelian structure groups. 

Let $G$ be a Lie group and consider a principal $G$-bundle $\pi:P\to M$. 
Furthermore, consider $f\in\gauge(P)$. Note that, according to\ Proposition\ \ref{prpGaugeGroup}, 
we can equivalently consider $\hat{f}\in\sect(\Ad(P))$. 
Our aim is to rewrite the map $\sect(\conn(f))$ in terms of the action of $\hat{f}$ on $\sect(\conn(P))$. 
Regarding the last arrow in the Atiyah sequence as defining the affine bundle $\rho:TP/G\to TM$ 
modeled over the pull-back of the vector bundle $\ad(P)$ under $\pi_{TM}:TM\to M$, 
we can reinterpret each connection $\lambda\in\sect(\conn(P))$ 
as a section $\lambda:TM\to TP/G$ of the bundle $\rho:TP/G\to TM$. 
From the fiber bundle $\pi_\Ad:\Ad(P)\to M$, we obtain a new bundle $\pi_{\Ad\,\ast}:T\Ad(P)\to TM$. 
Since $\hat{f}$ is a section of $\pi_\Ad:\Ad(P)\to M$, 
the corresponding tangent map $\hat{f}_\ast$ provides a section of $\pi_{\Ad\,\ast}:T\Ad(P)\to TM$. 
The action of $\hat{f}_\ast$ on $\lambda$ can be understood in terms of 
an action of $T\Ad(P)$ on $TP/G$ fibered over $TM$, which we describe below. 

As a first step, note that $\pi_\ast:TP\to TM$ is a principal $TG$-bundle. In fact $TG$, 
endowed with the group structure induced by $G$, is a Lie group 
and $\pi_\ast:TP\to TM$ becomes a principal bundle with the right $TG$-action 
induced by the right $G$-action of the original principal bundle $P$. 
We will denote the group multiplication in $TG$ by juxtaposition and the inverse of $\Xi\in TG$ by $\Xi^{-1}$. 
In analogy with the notation used for the right $G$-action on $P$, 
we will denote the action of $\Xi\in TG$ on $X\in TP$ simply by $X\Xi$. 
If we consider the action $TG\times TG\to TG$ induced by the adjoint action $\Ad:G\times G\to G$ of $G$ on itself, 
in full analogy with the associated bundle $\pi_\Ad:\Ad(P)\to M$, we can introduce 
the corresponding bundle over $TM$ associated to $TP$ with fiber $TG$, namely $TP_{TG}\to TM$. 
It turns out that $TP_{TG}\to TM$ is isomorphic to $\pi_{\Ad\,\ast}:T\Ad(P)\to TM$.\footnote{In fact, 
we have a fiberwise surjective bundle map $P\times G\to\Ad(P)$ covering $\id_M$ 
which maps each pair $(p,g)$ to the corresponding orbit $[p,g]$. 
Looking at the tangent map $TP\times TG\to T\Ad(P)$ covering $\id_{TM}$, which is still fiberwise surjective, 
it is easily understood that injectivity is recovered as soon as the orbit space of the source $TP\times TG$ 
under the right $TG$-action is considered. Therefore we conclude that 
$TP_{TG}\to TM$ and $\pi_{\Ad\,\ast}:T\Ad(P)\to TM$ are isomorphic as fiber bundles.} 
Identifying $T\Ad(P)$ with $TP_{TG}$ via this isomorphism allows us 
to define a left action of $T\Ad(P)$ on $TP/G$ fibered over $TM$: 
\begin{align}\label{eqConnGauge}
A:T\Ad(P)\times TP/G\to TP/G\,, && ([X,\Xi],[X])\mapsto[X\Xi^{-1}]\,.
\end{align}
Note that if $X,Y\in TP$ are such that $\pi_\ast X=\pi_\ast Y$, 
there exists a unique $\Theta\in TG$ such that $X\Theta=Y$. 
Therefore it is always possible to have the same vector tangent to $P$ 
as a representative in both arguments in the map $A$ defined above. This shows that $A$ is well-defined. 

\begin{proposition}\label{prpGaugeConn}
Let $G$ be a Lie group and consider a principal $G$-bundle $P$ over $M$. 
Denote with $A:T\Ad(P)\times TP/G\to TP/G$ the left action of $T\Ad(P)$ on $TP/G$, see\ \eqref{eqConnGauge}. 
For each $f\in\gauge(P)$, $A(\hat{f}_\ast,\cdot)=\sect(\conn(f)):\sect(\conn(P))\to\sect(\conn(P))$, 
where $\hat{f}\in\sect(\Ad(P))$ is defined by $f$ according to\ Proposition\ \ref{prpGaugeGroup}. 
\end{proposition}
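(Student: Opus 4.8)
The plan is to reduce the claim to the uniqueness property of $\sect(\conn(f))$ recorded in\ Remark~\ref{remConnFunctor}. Since $f\in\gauge(P)$ covers the identity, we have $\ul f=\id_M$ and $\ul f_\ast=\id_{TM}$, so that remark tells us $\sect(\conn(f))(\mu)$ is the \emph{unique} section $\lambda$ of $\conn(P)$ characterised by $\rho\circ\lambda=\id_{TM}$ (i.e.\ $\lambda$ is a connection, {\em cfr.}\ \eqref{eqAtiyahSeq}) together with the agreement condition $T(f)/G\circ\lambda=\mu$. Hence it suffices to prove that $\lambda':=A(\hat f_\ast,\mu)$ satisfies these two conditions for every $\mu\in\sect(\conn(P))$, and everything can be checked fibrewise, at an arbitrary $x\in M$ and $v\in T_xM$.

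First I would set up convenient representatives. Fix $p\in P_x$ and let $Z\in T_pP$ be the unique vector with $[Z]=\mu(v)$ in $(TP/G)_x$; such $Z$ exists and is unique because the right $G$-action on $P$ is free and transitive on fibres, and automatically $\pi_\ast Z=\rho(\mu(v))=v$. Pick a curve $q(\cdot)$ in $P$ with $q(0)=p$ and $\dot q(0)=Z$, and set $\gamma=\pi\circ q$, so that $\dot\gamma(0)=v$. Recalling from\ Proposition~\ref{prpGaugeGroup} that $\hat f(y)=[p',\tilde f(p')]$ for $p'\in P_y$, I would compute $\hat f_\ast(v)=\tfrac{\dd}{\dd t}\big|_0[q(t),\tilde f(q(t))]$. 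Under the identification $T\Ad(P)\simeq TP_{TG}$ used in\ \eqref{eqConnGauge} (induced by the tangent map of the defining surjection $P\times G\to\Ad(P)$) this reads $\hat f_\ast(v)=[Z,\Xi]$ with $\Xi=\tfrac{\dd}{\dd t}\big|_0\tilde f(q(t))\in T_{\tilde f(p)}G$, and therefore, by\ \eqref{eqConnGauge}, $\lambda'(v)=A([Z,\Xi],[Z])=[Z\,\Xi^{-1}]$.

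Next I would verify the two conditions. Since $\pi\circ r=\pi\circ\pr_1$ on $P\times G$, the push-forward $\pi_\ast:TP\to TM$ annihilates the $TG$-direction, so $\rho([Z\,\Xi^{-1}])=\pi_\ast(Z\,\Xi^{-1})=\pi_\ast Z=v$; this gives $\rho\circ\lambda'=\id_{TM}$, i.e.\ $\lambda'$ is a connection. For the agreement condition, write $Z\,\Xi^{-1}=\tfrac{\dd}{\dd t}\big|_0\big(q(t)\,\tilde f(q(t))^{-1}\big)$ and use $G$-equivariance of $f$ together with $f(p')=p'\,\tilde f(p')$ to get $f\big(q(t)\,\tilde f(q(t))^{-1}\big)=f(q(t))\,\tilde f(q(t))^{-1}=q(t)$, whence $f_\ast(Z\,\Xi^{-1})=\dot q(0)=Z$ and $T(f)/G\big(\lambda'(v)\big)=[f_\ast(Z\,\Xi^{-1})]=[Z]=\mu(v)$. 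Thus $\lambda'$ meets both defining conditions of $\sect(\conn(f))(\mu)$, and the two maps coincide.

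The only delicate points I expect are bookkeeping ones: making the identification $T\Ad(P)\simeq TP_{TG}$ explicit enough to read off a representative $[Z,\Xi]$ whose first slot is exactly the chosen $Z$, and checking that the several tangent-group operations (inversion in $TG$, the right $TG$-action on $TP$, and its descent to $TP/G$) are used compatibly. Once these are in place the computation telescopes cleanly via the equivariance of $f$, and smoothness of $\lambda'$ follows from smoothness of all the ingredients, so I do not anticipate a genuinely hard step beyond this careful setup.
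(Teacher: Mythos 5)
Your proposal is correct and follows essentially the same route as the paper: both reduce the claim to the uniqueness characterisation of $\sect(\conn(f))$ in Remark~\ref{remConnFunctor} (noting $\ul{f}=\id_M$), compute $A(\hat f_\ast,\lambda)(v)=[X(\tilde f_\ast X)^{-1}]$ fibrewise, and use $G$-equivariance of $f$ together with $f(p')=p'\tilde f(p')$ to conclude that $T(f)/G$ maps this back to $[X]=\lambda(v)$. The only difference is presentational — you spell out the curve representatives and the identification $T\Ad(P)\simeq TP_{TG}$ a bit more explicitly, where the paper condenses the last step into the identity $\widetilde{(f^{-1})}=\tilde f(\cdot)^{-1}$.
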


\begin{proof}
Take $f\in\gauge(P)$ and $\lambda\in\sect(\conn(P))$ 
and consider $\nu=A\circ(\hat{f}_\ast,\lambda):TM\to TP/G$. 
Indeed, $\nu$ still lies in $\sect(\conn(P))$ since $\rho\circ A=\rho\circ\pr_2$, 
where $\pr_2:T\Ad(P)\times TP/G\to TP/G$ denotes the projection on the second factor.
This defines a map $A(\hat{f},\cdot):\sect(\conn(P))\to\sect(\conn(P))$ 
and the claim is that this map coincides with $\sect(\conn(f))$. 
According to\ Remark\ \ref{remConnFunctor} and recalling that $\ul{f}=\id_M$, 
we only have to check that, given $\lambda\in\sect(\conn(P))$, 
$\nu=A\circ(\hat{f}_\ast,\lambda)$ fulfils the requirement $T(f)/G\circ\nu=\lambda$. 
Let $x\in M$ and $v\in T_xM$. Take $p\in P_x$ and $X\in T_pP$ such that $[X]=\lambda(v)$. 
Recalling the definition of $\hat{f}\in\sect(\ad(P))$ in terms of $\tilde{f}\in\c(P,G)$, 
{\em cfr.}\ Proposition\ \ref{prpGaugeGroup}, $\hat{f}_\ast v=[X,\tilde{f}_\ast X]$. 
Therefore $\nu(v)=[X(\tilde{f}_\ast X)^{-1}]$. Recalling also the definition of 
$\tilde{f}\in\c(P,G)$ in terms of $f\in\gauge(P)$ of\ Proposition\ \ref{prpGaugeGroup} 
and keeping in mind that $\tilde{(f^{-1})}=\tilde{f}(\cdot)^{-1}$, one gets $\nu(v)=[{f^{-1}}_\ast X]$. 
This allows us to conclude that $T(f)/G(\nu(v))=\lambda(v)$ for each $v\in TM$, thus proving the claim. 
\end{proof}

\begin{remark}[Action of $\gauge(P)$ on $\sect(\conn(P))$ for Abelian structure groups]\label{remGaugeConnAb}
By Proposition\ \ref{prpTrivAdBun}, it is possible to give a very explicit formula for the action of $f\in\gauge(P)$ 
on $\lambda\in\sect(\conn(P))$ for a principal $G$-bundle $P$ whose structure group is Abelian. 
First of all, note that in this case $\hat{f}\in\c(M,G)$. Furthermore, $\Ad(P)\simeq M\times G$, 
therefore $T\Ad(P)\simeq TM\times TG$. Let $x\in M$ and $v\in T_xM$ and choose $p\in P_x$ and $X\in T_pP$ 
such that $[X]=\lambda(v)$. Following the proof of\ Proposition\ \ref{prpGaugeConn} 
and denoting $A\circ(\hat{f}_\ast,\lambda)$ with $\nu$, we realize that 
\begin{equation*}
\nu(v)=[X(\hat{f}_\ast v)^{-1}]=[X\hat{f}(x)^{-1}-p\hat{f}(x)^{-1}(\hat{f}_\ast v)\hat{f}(x)^{-1}]
=\lambda(v)-\kappa[p,\hat{f}^\ast\mu(v)]\,,
\end{equation*}
where $\mu\in\f^1(G,\lieg)$ denotes the Maurer-Cartan form of the Abelian Lie group $G$, 
defined by $\mu(\Xi)=g^{-1}\Xi$ for $g\in G$ and $\Xi\in T_gG$, see\ {\em e.g}\ \cite[Definition\ 4.6]{Ish99}. 
In terms of the affine structure of the bundle of connections, the last equation reads 
\begin{equation*}
\sect(\conn(f))\lambda=A(\hat{f}_\ast,\lambda)=\lambda-\hat{f}^\ast\mu\,,
\end{equation*}
where we applied Proposition \ref{prpGaugeConn} to obtain the first equality. 
\end{remark}

Let us mention that there are several equivalent definitions of a connection on a principal $G$-bundle, 
see\ {\em e.g.}\ \cite[Subsection\ 6.1.1]{Ish99} for two possible alternative approaches. 
Definition\ \ref{defConn} has the advantage of highlighting the fact that 
connections can be described as sections of a suitable affine bundle. 
This feature, at least for Abelian structure groups, proves very helpful in the attempt of defining 
a sufficiently rich space of regular functionals on the space of connections, see\ Chapter\ \ref{chYangMills}. 
We introduce now the notion of a connection form, which turns out to be convenient 
to capture the contravariant behavior of the space of connections with respect to principal $G$-bundle maps 
and to specify the curvature associated to a given connection. 

\begin{definition}\label{defConnForm}\index{connection!form}
Let $G$ be a Lie group and consider a principal $G$-bundle $P$ over a manifold $M$. 
A {\em connection form} on $P$ is a $\lieg$-valued $1$-form $\omega\in\f^1(P,\lieg)$ 
fulfilling the following properties: 
\begin{enumerate}
\item $\omega(X_p^\xi)=\xi$ for each $p\in P$ and $\xi\in\lieg$, 
where $X_p^\xi\in T_pP$ is the vertical tangent vector at $p$ generated by $\xi$ 
(see the construction of the Atiyah sequence \eqref{eqAtiyahSeq} for further details); 
\item $\omega$ is equivariant with respect to the right $G$-action induced by the adjoint representation on $\lieg$, 
namely $r_g^\ast\omega=\ad_{g^{-1}}\circ\omega$. 
\end{enumerate}
We denote the space of connection forms on $P$ by $\connf(P)$. 
\end{definition}

The space $\connf(P)$ of connection forms is an affine space modeled 
on the space of horizontal $1$-forms $\f^1_\hor(P,\lieg)^\eqv$ 
which are equivariant with respect to the adjoint right $G$-action on $\lieg$, 
namely $\theta\in\f^1_\hor(P,\lieg)^\eqv$ is a $\lieg$-valued $1$-form on $P$ 
such that $\theta(X)=0$ for each vertical tangent vector $X\in VP$ 
and $r_g^\ast\theta=\ad_{g^{-1}}\circ\theta$ for each $g\in G$. 
In fact, it is clear from the definition of $\f^1_\hor(P,\lieg)^\eqv$ that shifting $\omega\in\connf(P)$ 
by $\theta\in\f^1_\hor(P,\lieg)^\eqv$ gives a new connection form $\omega+\theta$. 
Conversely, $\omega,\omega^\prime\in\connf(P)$ always differ by a suitable $\theta\in\f^1_\hor(P,\lieg)^\eqv$. 

\begin{remark}[Natural equivalence between $\connf(\cdot)$ and $\sect(\conn(\cdot))$]
\label{remConnFormToConn}
Let $G$ be a Lie group and consider a principal $G$-bundle $P$ over a manifold $M$. 
As already anticipated, the space of sections of the bundle of connections and the space of connection forms 
are very closely related to each other. In fact, a section $\lambda\in\sect(\conn(P))$ of $\conn(P)$ 
specifies a splitting (on the right) of the Atiyah sequence at each point. 
Furthermore, this splitting \quotes{varies smoothly} with the point on the base manifold. 
Indeed smooth right-splittings of the Atiyah sequence are nothing but sections of $\conn(P)$. 
Let now $\omega\in\connf(P)$ be a connection form and consider the map 
\begin{align*}
TP\to P\times\lieg\,, && X\in T_pP\to (p,\omega(X))\,.
\end{align*}
Equivariance of connection forms entails that the map defined above descends to $\tilde{\omega}:TP/G\to\ad(P)$. 
In fact, the right $G$-action on the source is translated into the right $G$-action on the target 
induced by the adjoint representation of $G$ on $\lieg$, whose orbit space defines $\ad(P)$. 
Recalling\ \eqref{eqAtiyahSeq}, the first property listed in\ Definition\ \ref{defConnForm}, 
namely the fact that $\omega$ maps each vertical tangent vector to its Lie algebra generator, 
turns into the identity $\tilde{\omega}\circ\kappa=\id_{\ad(P)}$, 
meaning that $\tilde{\omega}$ splits the Atiyah sequence everywhere on the left. 
Indeed, there is a one-to-one correspondence between right and left splittings of short exact sequences 
(in fact one usually does not distinguish between the two). In this specific case, this correspondence is realized 
as follows. Given $\lambda\in\sect(\conn(P)))$, $\rho\circ\lambda=\id_{TM}$, whence 
$\id_{TP/G}-\lambda\circ\rho$ factors through $\kappa:\ad(P)\to TP/G$ by exactness of the Atiyah sequence, 
thus defining $\tilde{\omega}:TP/G\to\ad(P)$ such that $\kappa\circ\tilde{\omega}=\id_{TP/G}-\lambda\circ\rho$, 
which entails that $\tilde{\omega}\circ\kappa=\id_{\ad(P)}$. 
In particular, we obtain an affine space isomorphism 
\begin{align}\label{eqConnFormToConn}
\connf(P)\to\sect(\conn(P))\,, && \omega\mapsto\lambda_\omega\,,
\end{align}
where $\lambda_\omega\in\sect(\conn(P))$ is defined by 
$\lambda_\omega\circ\rho=\id_{TP/G}-\kappa\circ\tilde{\omega}$. 
The linear part of this isomorphism is given by the vector space isomorphism 
\begin{align*}
\f^1_\hor(P,\lieg)^\eqv\to\f^1(M,\ad(P))\,, && \theta\mapsto\sigma_\theta\,,
\end{align*}
where $\sigma_\theta\in\f^1(M,\ad(P))$ is defined, for each $x\in M$ and $v\in T_xM$, 
by $\sigma_\theta(v)=-[p,\theta(X)]\in\ad(P)_x$ for an arbitrary choice of 
$p\in P_x$ and $X\in T_pP$ such that $\pi_\ast X=v$. 
\end{remark}

As already mentioned, one of the advantages of looking at connection forms 
rather than sections of the bundle of connections is that, in this perspective, 
the way principal $G$-bundle maps act on connections is more easily understood. 
In fact, given a principal bundle map $f:P\to Q$ covering $\ul{f}:M\to N$, 
we realize that the pull-back $f^\ast:\f^1(Q,\lieg)\to\f^1(P,\lieg)$ maps $\connf(Q)$ to $\connf(P)$. 
Denoting the restriction of $f^\ast$ to $\connf(Q)$ with $\connf(f):\connf(Q)\to\connf(P)$, 
it follows that $\connf(\cdot)$ is a contravariant functor from the category of principal $G$-bundles 
to the category of affine spaces. 
We can now compare the contravariant functors $\connf(\cdot)$ and $\sect(\conn(\cdot))$ 
exploiting the isomorphism in\ \eqref{eqConnFormToConn}. 
As a matter of fact, whenever a principal bundle map $f:P\to Q$ is given, we have A commutative diagram: 
\begin{equation*}
\xymatrix@C=2.5em{
\connf(Q)\ar[d]_\simeq\ar[r]^{\connf(f)} & \connf(P)\ar[d]^\simeq\\
\sect(\conn(Q))\ar[r]_{\sect(\conn(f))} & \sect(\conn(P))
}
\end{equation*}
The diagram above can be interpreted saying that \eqref{eqConnFormToConn} 
is a natural equivalence between the contravariant functors $\connf(\cdot)$ and $\sect(\conn(\cdot))$. 

\subsection{Curvature of a connection}
Our next goal is to introduce the curvature associated to a connection form, 
and therefore the curvature of the corresponding section of the bundle of connections as well. 
To define it, a map providing the horizontal part of a vector tangent to a principal bundle 
with respect to a connection form on such bundle is needed. 
Given a principal $G$-bundle $\pi:P\to M$ and a connection form $\omega\in\connf(P)$, 
we provide a prescription to lift each tangent vector $v\in TM$ over $x\in M$ 
to a tangent vector $v_p^\uparrow\in TP$ over any point $p\in P_x$ 
which is horizontal with respect to the connection form $\omega$, namely $\omega(v_p^\uparrow)=0$. 
In fact, consider $X\in T_pP$ such that $\pi_\ast X=v$ ($X$ exists since $\pi$ is the principal bundle projection) 
and take the vertical tangent vector $X_p^{\omega(X)}\in T_pP$ at $p$ generated by $\omega(X)\in\lieg$. 
$v_p^\uparrow=X-X_p^{\omega(X)}\in T_pP$ still lies in the preimage of $v$ under $\pi_\ast$ 
and moreover $\omega(v_p^\uparrow)=\omega(X)-\omega(X)=0$ by the properties of the connection form. 
Let $Y\in T_pP$ be such that $\omega(Y)=0$ and $\pi_\ast Y=v$. Then $Z=Y-v_p^\uparrow$ is vertical, 
therefore $Z=X_p^{\omega(Z)}=0$, whence $Y=v_p^\uparrow$. 
This shows that, given $\omega\in\connf(P)$, there exists a well-defined vector bundle map 
which realizes the lift prescription we were looking for: 
\begin{align*}
{}^\uparrow:\pi^\ast(TM)\to TP\,, && (p,v)\mapsto v_p^\uparrow\,,
\end{align*}
where $\pi^\ast(TM)$ denotes the pull-back of the vector bundle $TM$ under $\pi:P\to M$. 
In particular, we can assign to each vector tangent to $P$, its horizontal part with respect to $\omega$: 
\begin{align}\label{eqHorPart}
\hor_\omega:TP\to TP\,, && X\in T_pP\mapsto(\pi_\ast X)_p^\uparrow\,.
\end{align}
Note that by the arguments presented above $\hor_\omega$ decomposes $TP$ 
in the direct sum of its image $\hor_\omega(TP)\simeq\pi^\ast(TM)$ and $VP\simeq P\times\lieg$. 
In fact, ${}^\uparrow$ is nothing but a right-splitting of the $G$-equivariant counterpart of the Atiyah sequence, 
which is a short exact sequence of $G$-equivariant vector bundle maps covering $\id_P$, 
{\em cfr.}\ \eqref{eqAtiyahSeq}: 
\begin{equation*}
0\lra VP\overset{\subseteq}{\lra}TP\overset{\pi_\ast}{\lra}\pi^\ast(TM)\lra0\,.
\end{equation*}

\begin{definition}\label{defCurvConnForm}\index{curvature}\index{connection!curvature}
Let $G$ be a Lie group and consider a principal $G$-bundle $P$ over a manifold $M$. 
Given a connection form $\omega\in\connf(P)$, consider the associated vector bundle map 
$\hor_\omega:TP\to TP$, which provides the $\omega$-horizontal part of each vector tangent to $P$, 
see\ eq.\ \eqref{eqHorPart} and the previous discussion. The {\em curvature} 
$\curv_\omega\in\f^2_\hor(P,\lieg)^\eqv$ of the connection form $\omega$ is 
the horizontal and $G$-equivariant $\lieg$-valued $2$-form on $P$ defined by 
$\curv_\omega(X,Y)=\dd\omega(\hor_\omega(X),\hor_\omega(Y))$ for each $p\in P$ and $X,Y\in T_pP$. 
\end{definition}

As already mentioned in the definition, the curvature $\curv_\omega\in\f^2_\hor(P,\lieg)^\eqv$ 
of a connection form $\omega\in\connf(P)$ is both horizontal and equivariant 
with respect to the adjoint right $G$-action on $\lieg$, 
namely it vanishes whenever one of its arguments lies in $VP$ 
and it is such that $r_g^\ast\curv_\omega=\ad_{g^{-1}}\circ\curv_\omega$ for each $g\in G$. 
Being horizontal follows immediately from the fact that the kernel of $\hor_\omega$ is exactly $VP$, 
{\em cfr.}\ \eqref{eqHorPart}, while equivariance is inherited from $\omega$ and $\hor_\omega$. 
These features allow us to equivalently represent the curvature $\curv_\omega$ 
by an $\ad(P)$-valued $2$-form $F_\omega$ on $M$ by setting, for each $x\in M$ and $v,w\in T_xM$, 
\begin{equation}\label{eqCurvConn}
F_\omega(v,w)=[p,\curv_\omega(X,Y)]\,,
\end{equation}
for an arbitrary choice of $p\in P_x$ and $X,Y\in T_pP$ 
such that $\pi_\ast X=v$ and $\pi_\ast Y=w$. 
In fact, for each degree $k$ there is a vector space isomorphism $\f^k_\hor(P,\lieg)\simeq\f^k(M,\ad(P))$, 
{\em cfr.}\ \cite[Chapter\ 2, Example\ 5.2]{KN96} or \cite[Satz\ 3.5]{Bau14}. 
Furthermore, we observe that Definition\ \ref{defCurvConnForm} implicitly defines a map 
\begin{align*}
\curv:\connf(P)\to\f^2_\hor(P,\lieg)^\eqv\,, && \omega\mapsto\curv_\omega\,.
\end{align*}
These facts, together with\ \eqref{eqConnFormToConn}, motivate the following definition. 

\begin{definition}\label{defCurvMap}\index{curvature!map}
Let $G$ be a Lie group and consider a principal $G$-bundle $P$ over a manifold $M$. 
The curvature $F_\lambda\in\f^2(M,\ad(P))$ of a connection $\lambda\in\sect(\conn(P))$ 
is the curvature $F_\omega\in\f^2(M,\ad(P))$ of the corresponding connection form $\omega\in\connf(P)$, 
{\em cfr.}\ \eqref{eqConnFormToConn} and\ \eqref{eqCurvConn}. In particular, we have the {\em curvature map} 
\begin{align*}
F:\sect(\conn(P))\to\f^1(M,\ad(P))\,, && \lambda\mapsto F_\lambda\,.
\end{align*}
\end{definition}

Let $P$ and $Q$ be principal $G$-bundles over $M$ and respectively $N$. 
By naturality of the Atiyah sequence and of the exterior derivative $\dd$, 
for each principal $G$-bundle map $f:P\to Q$ and each connection form $\omega\in\connf(Q)$, we get 
\begin{equation*}
\curv(f^\ast\omega)=(\dd f^\ast\omega)\circ\hor_{f^\ast\omega}=\dd\omega\circ\hor_\omega\circ f_\ast
=f^\ast(\curv(\omega))\,.
\end{equation*}
Note that $\ad(f):\ad(P)\to\ad(Q)$ is fiberwise injective. In particular, we can invert it on a given fiber. 
This allows us to define 
\begin{align*}
\f^2(f):\f^2(N,\ad(Q))\to\f^2(M,\ad(P))\,, && \omega\mapsto\omega^\prime\,,
\end{align*}
$\omega^\prime\in\f^2(M,\ad(P))$ being specified 
by the condition $\ad(f)\circ\omega^\prime=\omega\circ\ul{f}_\ast$, 
where both $\omega:TN\to\ad(Q)$ and $\omega^\prime:TM\to\ad(P)$ are regarded here as vector bundle maps. 
where $\omega\in\f^2(N,\ad(Q))$ is regarded as a vector bundle map $\omega:TN\to\ad(Q)$. 
With this definition, the equation displayed above entails that the following diagram commutes: 
\begin{equation}\label{eqCurvNat}
\xymatrix{
\sect(\conn(Q))\ar[d]_F\ar[r]^{\sect(\conn(f))} & \sect(\conn(P))\ar[d]^F\\
\f^2(N,\ad(Q))\ar[r]_{\f^2(f)} & \f^2(M,\ad(P))
}
\end{equation}
In particular, this diagram illustrates how the curvature of a connection transforms 
under the action of a gauge transformation. 

The subsequent theorem provides a very well-known formula for the curvature of a connection form. 
The proof of this result, originally due to Cartan, is easily available in the literature, 
see\ {\em e.g.} \cite[Theorem\ 6.4]{Ish99}. 

\begin{theorem}[Cartan's structural equation]\label{thmCartan}
Let $G$ be a Lie group and consider a principal $G$-bundle $P$ over $M$. Let $\omega\in\connf(P)$ be 
a connection form and denote its curvature with $\curv_\omega\in\f^2_\hor(P,\lieg)^\eqv$. 
The following identity holds true for all $p\in P$ and $X,Y\in T_pP$: 
\begin{equation*}
\curv_\omega(X,Y)=\dd\omega(X,Y)+[\omega(X),\omega(Y)]\,,
\end{equation*}
where $[\cdot,\cdot]:\lieg\times\lieg\to\lieg$ denotes the Lie bracket. 
\end{theorem}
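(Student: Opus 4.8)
The plan is to reduce the identity to a pointwise check on a convenient decomposition of the tangent space. Observe that $\curv_\omega$, $\dd\omega$ and $(X,Y)\mapsto[\omega(X),\omega(Y)]$ are all skew-symmetric and bilinear over $\bbR$ in $X,Y$, and that $\dd\omega(X,Y)$ at $p$ depends only on $X,Y\in T_pP$. Moreover the horizontal projector $\hor_\omega$ of\ \eqref{eqHorPart} realises the splitting $T_pP = V_pP\oplus\ker\omega_p$, projecting onto $\ker\omega_p$ along $V_pP$. Hence it suffices to verify the claimed equality when each of $X$ and $Y$ is either vertical or horizontal (i.e.\ lies in $\ker\omega_p$), and in each case I would extend $X,Y$ to suitable global vector fields and apply the invariant formula $\dd\omega(A,B) = A\big(\omega(B)\big) - B\big(\omega(A)\big) - \omega([A,B])$ (applied componentwise to the $\lieg$-valued form $\omega$).

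First, if both $X,Y$ are horizontal, then $\hor_\omega(X)=X$, $\hor_\omega(Y)=Y$, while $\omega(X)=\omega(Y)=0$, so both sides reduce to $\dd\omega(X,Y)$. Second, if both are vertical, write $X = X_p^\xi$, $Y = X_p^\eta$ for $\xi,\eta\in\lieg$ and extend them to the fundamental vector fields $X^\xi,X^\eta$ generated by the right $G$-action. Then $\hor_\omega(X)=\hor_\omega(Y)=0$ kills the left-hand side, while $\omega(X^\xi)\equiv\xi$ and $\omega(X^\eta)\equiv\eta$ are constant and $[X^\xi,X^\eta]=X^{[\xi,\eta]}$, so the invariant formula yields $\dd\omega(X,Y) = -\omega(X^{[\xi,\eta]}) = -[\xi,\eta]$, which exactly cancels $[\omega(X),\omega(Y)]=[\xi,\eta]$. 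Third, by skew-symmetry it remains to handle $X = X_p^\xi$ vertical and $Y$ horizontal; here $\hor_\omega(X)=0$ again annihilates the left-hand side and $[\omega(X),\omega(Y)]=[\xi,0]=0$, so I must check $\dd\omega(X,Y)=0$. Extending $X$ to $X^\xi$ and $Y$ to the horizontal lift of a vector field on $M$ (which satisfies $r_{g\,\ast}Y_p = Y_{pg}$), the first two terms of the invariant formula vanish since $\omega(X^\xi)$ is constant and $\omega(Y)\equiv0$; and $[X^\xi,Y]$ is horizontal because the distribution $\ker\omega$ is preserved by every $r_{g\,\ast}$, a consequence of the equivariance $r_g^\ast\omega = \ad_{g^{-1}}\circ\omega$ in\ Definition~\ref{defConnForm}. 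Hence $\dd\omega(X,Y) = -\omega([X^\xi,Y]) = 0$.

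The routine ingredients are the existence of the fundamental vector fields $X^\xi$ (already used to build the Atiyah sequence\ \eqref{eqAtiyahSeq}) and the fact that horizontal lifts of base vector fields are $G$-equivariant, hence globally defined. The one point that really needs care is conventions: one must confirm that, with the conventions fixed in\ Definition~\ref{defConn} and\ Definition~\ref{defConnForm}, the assignment $\xi\mapsto X^\xi$ for the right $G$-action is a Lie algebra homomorphism, $[X^\xi,X^\eta]=X^{[\xi,\eta]}$ rather than an anti-homomorphism, so that the cancellation in the vertical–vertical case comes out with the correct sign. This is where I expect the bulk of the bookkeeping to sit, though no genuine difficulty; granting it, the three cases combine by bilinearity to give the identity at every $p\in P$.
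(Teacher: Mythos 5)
The paper does not actually prove this theorem---it defers to the literature, citing \cite{Ish99}---and your argument is precisely the standard case-by-case proof given there, so it is correct and there is nothing substantive to compare against. All three cases check out: in the vertical--vertical case the sign does come out right, since for a \emph{right} $G$-action $\xi\mapsto X^\xi$ is a Lie algebra homomorphism (it would be an anti-homomorphism for a left action), and in the mixed case you could even shortcut your "bracket is horizontal" argument by observing that the $G$-equivariant horizontal extension of $Y$ is invariant under the flow $r_{\exp(t\xi)}$ of $X^\xi$, whence $[X^\xi,Y]=0$ outright.
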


\begin{remark}[Curvature map for Abelian structure groups]\label{remCurvAb}
In view of\ Chapter\ \ref{chYangMills}, we focus our attention on the case of principal $G$-bundles $P$ 
having an Abelian structure Lie group $G$. 

Since $\ad(P)\simeq M\times\lieg$, {\em cfr.}\ Proposition\ \ref{prpTrivAdBun}, 
the curvature $F_\lambda$ of a connection $\lambda\in\sect(\conn(P))$ 
becomes a $\lieg$-valued $2$-form on the base space $M$ of $P$. 
Accordingly, given a principal $G$-bundle map $f:P\to Q$, $\f^2(f):\f^2(N,\ad(Q))\to\f^2(M,\ad(P))$ 
reduces to the pull-back $\ul{f}^\ast:\f^2(N,\lieg)\to\f^2(M,\lieg)$ under the base map $\ul{f}:M\to N$. 
In particular, if we look at a gauge transformation $f\in\gauge(P)$, we realize that 
the curvature is invariant under gauge transformations. In fact, since $\ul{f}=\id_M$ for $f\in\gauge(P)$, 
the bottom arrow in\ \eqref{eqCurvNat} reduces to the identity for $G$ Abelian, 
whence $F\circ\sect(\conn(f))=F$, meaning that the curvature of a connection $\lambda\in\sect(\conn(P))$ 
coincides with the curvature of the connection $\sect(\conn(f))\lambda$ after the gauge transformation $f$. 

Let us also mention that, for $G$ Abelian, the curvature $F:\sect(\conn(P))\to\f^2(M,\lieg)$ turns out to be 
an affine map (we are regarding the vector space $\f^2(M,\lieg)$ as an affine space modeled on itself). 
To exhibit the linear part of $F$ (thus proving that $F$ is indeed affine), we note that 
in this case $\sect(\conn(P))$ is an affine space modeled on $\f^1(M,\lieg)$, $G$ being Abelian. 
Therefore we can consider the linear map $F_V=-\dd:\f^1(M,\lieg)\to\f^2(M,\lieg)$. 
Our aim consists in showing that $F(\lambda+\sigma)=F(\lambda)+F_V(\sigma)$ 
for each $\lambda\in\sect(\conn(P))$ and each $\sigma\in\f^1(M,\lieg)$. 
Denoting with $\omega_\lambda\in\connf(P)$ and with $\theta_\sigma\in\f^1_\hor(P,\lieg)^\eqv\,$\footnote{$G$ 
being Abelian, the requirement of equivariance under the right adjoint $G$-action on $\lieg$ 
actually reduces to invariance.} the forms obtained from $\lambda$ and respectively from $\theta$ 
via the affine isomorphism of\ \eqref{eqConnFormToConn} and its linear part, we can express 
$F(\lambda+\sigma)$ in terms of $\curv(\omega_\lambda+\theta_\sigma)\in\f^2_\hor(P,\lieg)^\eqv$ 
and $F(\lambda)$ in terms of $\curv(\omega_\lambda)\in\f^2_\hor(P,\lieg)^\eqv$, 
{\em cfr.}\ Definition\ \ref{defCurvMap}. Exploiting Cartan's structural equation 
and noting that the Lie bracket on the Lie algebra of an Abelian group is trivial, we conclude that
\begin{equation*}
\curv(\omega_\lambda+\theta_\sigma)=\dd(\omega_\lambda+\theta_\sigma)
=\dd\omega_\lambda+\dd\theta_\sigma=\curv(\omega_\lambda)+\dd\theta_\sigma\,.
\end{equation*}
This identity already shows that, for $G$ Abelian, $\curv:\connf(P)\to\f^2_\hor(P,\lieg)^\eqv$ is an affine map 
whose linear part $\curv_V:\f^1_\hor(P,\lieg)^\eqv\to\f^2_\hor(P,\lieg)^\eqv$ is provided by 
the restriction to $\f^1_\hor(P,\lieg)^\eqv$ of $\dd:\f^1(P,\lieg)\to\f^2(P,\lieg)$.\footnote{As one 
can directly check, $\dd$ always preserves the equivariance property. 
Furthermore, $\dd$ maps horizontal forms to horizontal forms in the Abelian case. 
In fact, for $p\in P$ and $X\in V_pP$, there exists $\xi\in\lieg$ such that $X_p^\xi=X$, 
therefore an extension of $X$ is given by the vertical vector field $q\in P\mapsto X_q^\xi\in T_qP$. 
Furthermore, for $Y\in T_pP$, we can always consider a vector field $\tilde{Y}$ equivariantly extending $Y$, 
namely such that $\tilde{Y}_p=Y$ and $\tilde{Y}_{qg}=r_{g\,\ast}\tilde{Y}_q$ for all $q\in P$ and $g\in G$. 
Using the vector fields $X^\xi$ and $\tilde{Y}$ to evaluate $\dd\theta(X,Y)$ shows that 
this term vanishes whenever $\theta$ lies in $\f^1_\hor(P,\lieg)^\eqv$ and $G$ is Abelian.} 
To conclude that $F(\lambda+\sigma)=F(\lambda)+F_V(\sigma)$, consider $x\in M$ and $v,w\in T_xM$. 
Furthermore, choose $p\in P_x$ and $X,Y\in T_pP$ such that $\pi_\ast X=v$ and $\pi_\ast Y=w$, 
$\pi:P\to M$ being the projection from the total space to the base of the principal bundle $P$. 
Recalling Definition\ \ref{defCurvMap} and taking into account that $\ad(P)\simeq M\times\lieg$, from the equation 
displayed above we deduce $F_{\lambda+\sigma}(v,w)=F_\lambda(v,w)+\dd\theta_\sigma(X,Y)$. 
By the arguments in\ Remark\ \ref{remConnFormToConn} and in particular the linear part of the affine isomorphism 
displayed in\ \eqref{eqConnFormToConn}, we conclude that $\theta_\sigma=-\pi^\ast\sigma$ for $G$ Abelian, 
whence $F_{\lambda+\sigma}(v,w)=F_\lambda(v,w)-\dd\sigma(v,w)$. 
Summing up, for a principal bundle $P$ with an Abelian structure group $G$, 
the curvature $F:\sect(\conn(P))\to\f^2(M,\lieg)$ turns out to be an affine map. 

The commutative diagram in\ \eqref{eqCurvNat} for $G$ Abelian expresses the fact that 
$F:\sect(\conn(\cdot))\to\f^2((\cdot)_\base,\lieg)$ is a natural transformation between contravariant functors 
from the category of principal $G$-bundles to the category of affine spaces $\Aff$. 
Note that, for each principal $G$-bundle $P$, $\f^2((P)_\base,\lieg)=\f^2(M,\lieg)$ is indeed a vector space, 
now regarded as an affine space modeled on itself. 
Even more, each component of the natural transformation $F$ is an affine differential operator 
in the sense of\ Definition\ \ref{defAffDiffOp}. In fact, for $G$ Abelian, on each principal $G$-bundle $P$, 
the linear part of $F:\sect(\conn(P))\to\f^2(M,\lieg)$ is the linear differential operator 
$F_V=-\dd:\f^1(M,\lieg)\to\f^2(M,\lieg)$. This feature expresses in a mathematical language the fact that 
Yang-Mills field theories with Abelian structure groups do not exhibit self-interactions. 
This fact simplifies a lot the analysis of the dynamics of the equation of motion, 
see\ Chapter\ \ref{chYangMills} for further details. 
\end{remark}

\section{Quantization}\label{secQuantization}
In this section we present a procedure developed in \cite{MSTV73} to quantize presymplectic Abelian groups. 
This generalizes the theory of Weyl systems and CCR-representations for symplectic vector spaces, 
which is very well-understood, for example refer to\ \cite{BGP07, BR87}. 
Note that a generalization is also available for presymplectic vector spaces \cite{BHR04}. 
Furthermore, we analyze certain properties of this construction which are relevant for locally covariant field theories. 
This material is also available in\ \cite[Appendix\ A]{BDHS14}. 
For the sake of completeness, we will briefly sketch some of the proofs. 
As a starting point let us introduce the category of presymplectic Abelian groups. 

\begin{definition}\index{presymplectic!Abelian group}\index{presymplectic!homomorphism}
\index{presymplectic!Abelian group!category}
A {\em presymplectic Abelian group} $(H,\rho)$ consists in the assignment of an Abelian group $H$ 
endowed with a presymplectic form $\rho:H\times H\to\bbR$, 
namely a (possibly degenerate) anti-symmetric bi-homomorphism. 
A {\em presymplectic homomorphism} $L$ between the presymplectic Abelian groups $(H,\rho)$ and $(K,\sigma)$ 
is a group homomorphism $L:H\to K$ preserving the presymplectic structures, 
namely $\sigma\circ(L\times L)=\rho$. 
The {\em category of presymplectic Abelian groups} $\PSymA$ has presymplectic Abelian groups as objects 
and presymplectic homomorphisms as morphisms. 
\end{definition}

Following\ \cite{MSTV73}, to a presymplectic Abelian group $(H,\rho)$, 
one can associate a $\ast$-algebra over the field $\bbC$: 
Consider the vector space $\Delta(H,\sigma)$ spanned over $\bbC$ by the set of symbols $\{\weyl_h:h\in H\}$ 
and endow it with the structure of an associative unital $\ast$-algebra 
introducing a product and an involution via the following identities, 
known as {\em Weyl relations} or {\em canonical commutation relations (CCR)}: 
\begin{align}\label{eqWeylRel}\index{Weyl relations}\index{canonical commutation relation}\index{CCR}
\weyl_h\,\weyl_k=\exp\left(-\frac{i}{2}\rho(h,k)\right)\weyl(h+k)\,, &&
\weyl_h^\ast=\weyl_{-h}\,.
\end{align}
Note that $W_0=\bbone$ is the unit of $\Delta(H,\sigma)$. 
To each presymplectic homomorphism $L:(H,\rho)\to (K,\sigma)$, 
one can also associate a $\ast$-homomorphism 
\begin{align}\label{eqQuantMorph}
\Delta(L):\Delta(H,\rho)\to\Delta(K,\sigma)\,, && \Delta(L)\weyl_h=\weyl_{Lh}\,,\quad\forall\,h\in H\,. 
\end{align}
Denoting with $\Alg$ the category of unital $\ast$-algebras, 
one can easily check that $\Delta:\PSymA\to\Alg$ is a covariant functor. 
Note that $\Delta:\PSymA\to\Alg$ preserves injectivity of morphisms. 
In fact, suppose $a\in\Delta(H,\rho)$ is such that $\Delta(L)a=0$. 
By definition $a$ can be expressed as a $\bbC$-linear combination $a=\sum_{h\in S}c_h\weyl_h$ 
labeled by a finite subset $S\subseteq H$, 
where $\{W_h\}$ is a family of linearly independent elements of $\Delta(H\,\rho)$. 
If $L$ is injective, by definition it follows that $\{W_{Lh}\}$ is 
a collection of linearly independent elements of $\Delta(K,\sigma)$. 
Furthermore, by assumption $\Delta(L)a=\sum_{h\in S}c_h\weyl_{Lh}$ vanishes. 
Therefore $c_h=0$ for each $h\in S$, whence $a=0$, thus showing that $\Delta(L)$ is injective. 

We want to endow $\Delta(H,\rho)$ with a $C^\ast$-norm. 
To do so, we first introduce an auxiliary norm $\normb{\cdot}$ on $\Delta(H,\rho)$: 
For each finite subset $S\subseteq H$ and each collection $\{c_h\in\bbC:h\in S\}$ of complex numbers, we set 
\begin{equation}\label{eqNormb}
\Normb{\sum_{h\in S}c_h\weyl_h}=\sum_{h\in S}\vert c_h\vert\,.
\end{equation}
This norm fulfils the following inequalities for each $a,b\in\Delta(H,\sigma)$: 
\begin{align*}
\normb{a\,b}\leq\normb{a}\,\normb{b}\,, && \normb{a^\ast}=\normb{a}\,.
\end{align*}
Therefore the completion of $\Delta(H,\sigma)$ with respect to $\normb{\cdot}$ 
provides a unital Banach $\ast$-algebra $\balg(H,\rho)$ whose elements are of the form $\sum_{h\in S}c_hW_h$ 
for a countable subset $S\subseteq H$ and a collection $\{c_h\in\bbC:h\in S\}$ of complex numbers 
such that $\sum_{h\in S}\vert c_h\vert<+\infty$. 
Furthermore, if a presymplectic homomorphism $L:(H,\rho)\to(K,\sigma)$ is given, 
and we endow $\Delta(H,\rho)$ and $\Delta(K,\sigma)$ with the norms $\normb{\cdot}_1$ 
and respectively $\normb{\cdot}_2$ defined according to\ \eqref{eqNormb}, 
one can check that the $\ast$-homomorphism $\Delta(L):\Delta(H,\rho)\to\Delta(K,\sigma)$ is bounded by $1$, 
namely $\normb{\Delta(L)a}_2\leq\normb{a}_1$ for each $a\in\Delta(H,\rho)$. 
As a consequence, one gets a continuous extension to the completions, 
namely a continuous $\ast$-algebra homomorphism $\balg(L):\balg(H,\rho)\to\balg(K,\sigma)$. 
Even more, if $L$ is injective, $\Delta(L)$ is an isometry, therefore $\balg(L)$ is an isometry too. 
In fact, denoting with $\BAlg$ the category of Banach $\ast$-algebras, 
it turns out that $\balg:\PSymA\to\BAlg$ is a covariant functor which assigns to each 
injective presymplectic homomorphism an isometric $\ast$-homomorphism between Banach $\ast$-algebras. 

The second step to define a $C^\ast$-norm on $\Delta(H,\rho)$ consists in considering states on $\balg(H,\rho)$. 

\begin{definition}\index{state}\index{state!faithful}
Let $A$ be a unital $\ast$-algebra over $\bbC$. 
A {\em state} $\omega:A\to\bbC$ is a linear functional which is positive and normalized, 
namely such that $\omega(a^\ast\,a)\geq0$ for each $a\in A$ and $\omega(\bbone)=1$. 
A state $\omega:A\to\bbC$ is {\em faithful} if $\omega(a^\ast\,a)=0$ implies $a=0$. 
\end{definition}

In case a topological unital $\ast$-algebra is considered, states are usually also required to be continuous. 
Proposition\ 2.17 of\ \cite{MSTV73} shows that one can obtain continuous positive linear functionals 
on $\balg(H,\rho)$ from positive linear functionals on $\Delta(H,\rho)$. We recall this result below. 

\begin{proposition}\label{prpPosLinFunc}
Let $(H,\rho)$ be a presymplectic Abelian group and $\omega:\Delta(H,\rho)\to\bbC$ be a positive linear functional. 
Then $\omega$ is continuous with respect to $\normb{\cdot}$ 
and it can be extended by continuity to a continuous positive linear functional $\omega:\balg(H,\rho)\to\bbC$. 
Furthermore, if $\omega:\Delta(H,\rho)\to\bbC$ is a state on $\Delta(H,\rho)$, 
its extension $\omega:\balg(H,\rho)\to\bbC$ is a state on $\balg(H,\rho)$. 
\end{proposition}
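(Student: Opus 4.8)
The plan is to exploit the fact that the generators $\weyl_h$ are unitary, together with the Cauchy--Schwarz inequality valid for any positive linear functional on a unital $\ast$-algebra, to produce an explicit bound on $\omega$ with respect to $\normb{\cdot}$; continuity of the extension and preservation of positivity then follow from the Banach $\ast$-algebra structure of $\balg(H,\rho)$ recalled above.

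First I would observe that each $\weyl_h$ is a unitary element of $\Delta(H,\rho)$: from the Weyl relations\ \eqref{eqWeylRel} and antisymmetry of $\rho$ (so that $\rho(h,h)=0$, hence also $\rho(-h,h)=0$) one gets $\weyl_h^\ast\weyl_h=\weyl_{-h}\weyl_h=\weyl_0=\bbone$, and likewise $\weyl_h\weyl_h^\ast=\bbone$. Next, recall that a positive linear functional $\omega$ on a unital $\ast$-algebra induces the positive semidefinite sesquilinear form $(a,b)\mapsto\omega(a^\ast b)$, whence the Cauchy--Schwarz inequality $\vert\omega(a^\ast b)\vert^2\leq\omega(a^\ast a)\,\omega(b^\ast b)$ holds. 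Taking $a=\bbone$ and $b=\weyl_h$ yields $\vert\omega(\weyl_h)\vert^2\leq\omega(\bbone)\,\omega(\weyl_h^\ast\weyl_h)=\omega(\bbone)^2$, so $\vert\omega(\weyl_h)\vert\leq\omega(\bbone)$ for every $h\in H$. Writing an arbitrary $a\in\Delta(H,\rho)$ as $a=\sum_{h\in S}c_h\weyl_h$ over a finite $S\subseteq H$, linearity and the triangle inequality give $\vert\omega(a)\vert\leq\sum_{h\in S}\vert c_h\vert\,\vert\omega(\weyl_h)\vert\leq\omega(\bbone)\,\Normb{a}$, which is precisely the claimed $\normb{\cdot}$-continuity.

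With this bound in hand, the extension to $\balg(H,\rho)$ is automatic: by construction $\balg(H,\rho)$ is the completion of $\Delta(H,\rho)$ in the norm $\normb{\cdot}$, and a bounded linear functional on a dense subspace extends uniquely by continuity to a bounded linear functional on the completion, with the same bound $\omega(\bbone)$. It remains to check that the extension, still denoted $\omega$, is positive. Given $a\in\balg(H,\rho)$, pick $a_n\in\Delta(H,\rho)$ with $\normb{a_n-a}\to0$; using that $\normb{\cdot}$ is submultiplicative and $\ast$-invariant on the Banach $\ast$-algebra $\balg(H,\rho)$, one estimates $\normb{a_n^\ast a_n-a^\ast a}\leq\normb{a_n}\,\normb{a_n-a}+\normb{a_n-a}\,\normb{a}\to0$, so $a_n^\ast a_n\to a^\ast a$ in $\balg(H,\rho)$. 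By continuity of the extended $\omega$ one gets $\omega(a^\ast a)=\lim_n\omega(a_n^\ast a_n)$, and each $\omega(a_n^\ast a_n)\geq0$ by positivity of $\omega$ on $\Delta(H,\rho)$, whence $\omega(a^\ast a)\geq0$.

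For the final assertion, if $\omega$ is a state on $\Delta(H,\rho)$ then $\omega(\bbone)=1$, and since $\bbone\in\Delta(H,\rho)$ this value is unchanged by the extension, so the extended functional is normalized; combined with the positivity just shown, it is a state on $\balg(H,\rho)$. I do not expect any serious obstacle: the only point needing care is the $\normb{\cdot}$-continuity of $a\mapsto a^\ast a$, which is exactly where the Banach $\ast$-algebra structure of $\balg(H,\rho)$ enters, and that structure has been established in the excerpt immediately preceding the proposition.
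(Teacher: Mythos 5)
Your proof is correct and follows essentially the same route as the paper: the Cauchy--Schwarz inequality for the positive functional combined with $\weyl_h^\ast\weyl_h=\bbone$ yields $\vert\omega(\weyl_h)\vert\leq\omega(\bbone)$, hence $\vert\omega(a)\vert\leq\omega(\bbone)\,\normb{a}$, and the extension by continuity follows. Your explicit verification that positivity survives the extension (via continuity of $a\mapsto a^\ast a$ in the Banach $\ast$-algebra) is a detail the paper leaves implicit, but it is the standard argument and does not constitute a different approach.
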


\begin{proof}
Each positive linear functional $\omega$ on a $\ast$-algebra always provides a Cauchy-Schwarz inequality. 
In the case under analysis, one has $\vert\omega(b^\ast\,a)\vert^2\leq\omega(a^\ast\,a)\,\omega(b^\ast\,b)$ 
for each $a,b\in\Delta(H,\rho)$. Applying this inequality to $b=\bbone\in\Delta(H,\rho)$ 
and $a=\weyl_h\in\Delta(H,\rho)$ for an arbitrary choice of $h\in H$, 
one gets $\vert\omega(\weyl_h)\vert\leq\omega(\bbone)$. 
Recalling that $\{\weyl_h\,:\; h\in H\}$ generates $\Delta(H,\rho)$, from the last inequality 
it follows that, for each $a\in\Delta(H,\rho)$, $\vert\omega(a)\vert\leq\omega(\bbone)\,\normb{a}$. 
Then $\omega:\Delta(H,\rho)\to\bbC$ is continuous with respect to $\normb{\cdot}$ 
and its extension by continuity $\omega:\balg(H,\rho)\to\bbC$ is a continuous positive linear functional. 
Furthermore, if we also assume $\omega:\Delta(H,\rho)\to\bbC$ to be normalized, namely $\omega(\bbone)=1$, 
clearly its extension $\omega:\balg(H,\rho)\to\bbC$ is normalized as well, hence a state on $\balg(H,\rho)$. 
\end{proof}

It is easy to exhibit a faithful state on $\balg(H,\rho)$: 
Define a state $\tilde{\omega}:\Delta(H,\rho)\to\bbC$ imposing $\tilde{\omega}(\weyl_h)=0$ 
for all $h\in H\setminus\{0\}$ and $\tilde{\omega}(\bbone)=1$. 
Proposition\ \ref{prpPosLinFunc} shows that $\tilde{\omega}$ can be extended 
to a state $\tilde{\omega}:\balg(H,\rho)\to\bbC$ by continuity. One can check that this state is faithful. 
In fact, suppose $a\in\balg(H,\rho)$ is such that $\tilde{\omega}(a^\ast\,a)=0$. 
Since $a$ can be written as $a=\sum_{h\in S}c_h\weyl_h$ for a countable subset $S\subseteq H$ 
and a collection $\{c_h\in\bbC:h\in S\}$ of complex numbers such that $\sum_{h\in S}\vert c_h\vert<+\infty$, 
the definition of $\tilde{\omega}$ entails that 
\begin{equation*}
\tilde{\omega}(a^\ast\,a)=\sum_{h\in S}\sum_{k\in S}\overline{c_h}c_k
\exp\left(\frac{i}{2}\rho(h,k)\right)\tilde{\omega}(\weyl_{h-k})=\sum_{h\in S}\vert c_h\vert^2\,.
\end{equation*}
Therefore $\tilde{\omega}(a^\ast\,a)=0$ implies $c_h=0$ for all $h\in S$, that is to say $a=0$. 
The existence of at least one faithful state on $\balg(H,\rho)$ enables us 
to introduce another norm on $\balg(H,\rho)$. 

\begin{definition}\label{defMinRegNorm}\index{minimal regular norm}
Let $(H,\rho)$ be a presymplectic Abelian group and denote with $\states$ the set of states on $\balg(H,\rho)$. 
The {\em minimal regular norm} $\norm{\cdot}$ on $\balg(H,\lieg)$ is defined by 
\begin{align*}
\norm{a}=\sup_{\omega\in\states}\sqrt{\omega(a^\ast\,a)}\,, && a\in\balg(H,\rho)\,.
\end{align*}
\end{definition}

Notice that the supremum is bounded from above by $\normb{\cdot}$. 
In fact, for each $\omega\in\states$ and each $a\in\balg(H,\rho)$, $\vert\omega(a)\vert\leq\normb{a}$, 
see the proof of\ Proposition\ \ref{prpPosLinFunc}. 
One can check that $\norm{\cdot}$ is a norm on $\balg(H,\rho)$ compatible with the algebraic product, 
namely $\norm{a\,b}\leq\norm{a}\norm{b}$ for each $a,b\in\balg(H,\rho)$ directly from its definition 
and the fact that there exists a faithful state on $\balg(H,\rho)$. 
The next lemma shows that $\norm{\cdot}$ is also a $C^\ast$-norm on $\balg(H,\rho)$. 

\begin{lemma}
Let $(H,\rho)$ be a presymplectic Abelian group. The minimal regular norm $\norm{\cdot}$ 
introduced in\ Definition\ \ref{defMinRegNorm} is a $C^\ast$-norm on $\balg(H,\rho)$, 
namely $\norm{a^\ast}=\norm{a}$ and $\norm{a^\ast\,a}=\norm{a}^2$ for each $a\in\balg(H,\rho)$. 
In particular, the completion of $\balg(H,\rho)$, or equivalently of $\Delta(H,\rho)$, 
with respect to the minimal regular norm $\norm{\cdot}$ provides the $C^\ast$-algebra $\CCR(H,\rho)$ 
of canonical commutation relations associated to the presymplectic Abelian group $(H,\rho)$. 
\end{lemma}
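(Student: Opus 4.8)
The plan is to verify the two $C^\ast$-identities for $\norm{\cdot}$ directly from Definition~\ref{defMinRegNorm}, using the fact that the set $\states$ of states is closed under the two operations that make these identities work. First I would observe that $\norm{a^\ast}=\norm{a}$: for any state $\omega\in\states$ and any $a\in\balg(H,\rho)$, the functional $\omega_a(b)=\omega(a^\ast\,b\,a)$ is a positive linear functional, and one checks $\omega(a\,a^\ast)\leq\norm{a^\ast\,a}\,\omega(\bbone)$ via the standard trick that $\norm{c}^2=\sup_\omega\omega(c^\ast c)$ dominates $\omega(b^\ast\,c^\ast\,c\,b)/\omega(b^\ast\,b)$; more cleanly, since $\balg(H,\rho)$ is a Banach $\ast$-algebra with isometric involution (inherited from $\normb{\cdot}$, for which $\normb{a^\ast}=\normb{a}$), and each $\omega$ extends to $\balg$ by Proposition~\ref{prpPosLinFunc}, the Cauchy--Schwarz inequality $\vert\omega(a^\ast\,a)\vert\leq\sqrt{\omega(a^\ast\,a)\,\omega\big((a^\ast a)^\ast(a^\ast a)\big)}$ together with boundedness gives $\omega(a^\ast a)\leq\norm{a^\ast a}$, hence $\norm{a}^2\leq\norm{a^\ast a}\leq\norm{a^\ast}\,\norm{a}$ by submultiplicativity, yielding $\norm{a}\leq\norm{a^\ast}$, and symmetrically the reverse.

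Next I would establish $\norm{a^\ast\,a}=\norm{a}^2$. One inequality, $\norm{a^\ast\,a}\leq\norm{a^\ast}\,\norm{a}=\norm{a}^2$, is immediate from submultiplicativity and the $\ast$-identity just proved. For the reverse, fix $a$ and $\varepsilon>0$ and pick $\omega\in\states$ with $\omega(a^\ast\,a)\geq\norm{a}^2-\varepsilon$. The point is that $\sqrt{\omega(a^\ast\,a)}$ already appears in the supremum defining $\norm{a^\ast\,a}$ only after replacing $a$ by $a^\ast a$, so instead I would use that $\norm{a^\ast a}\geq\sup_\omega\omega\big((a^\ast a)^\ast(a^\ast a)\big)^{1/2}\geq\omega\big((a^\ast a)^2\big)^{1/2}$ and then invoke the Cauchy--Schwarz-type bound $\omega(a^\ast a)^2=\omega\big((\bbone)^\ast(a^\ast a)\big)^2\leq\omega(\bbone)\,\omega\big((a^\ast a)^\ast(a^\ast a)\big)=\omega\big((a^\ast a)^2\big)$, so $\omega\big((a^\ast a)^2\big)\geq(\norm{a}^2-\varepsilon)^2$ and hence $\norm{a^\ast a}\geq\norm{a}^2-\varepsilon$; letting $\varepsilon\to0$ closes it. Finally I would note that $\norm{\cdot}$ is a genuine norm (not merely a seminorm): this uses the existence of the faithful state $\tilde\omega$ constructed just before Definition~\ref{defMinRegNorm}, since $\norm{a}=0$ forces $\tilde\omega(a^\ast a)=0$, hence $a=0$.

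Having the $C^\ast$-identities and definiteness, the completion of $\Delta(H,\rho)$ (equivalently of $\balg(H,\rho)$, since the two span the same algebra and $\normb{\cdot}\geq\norm{\cdot}$ makes the $\normb{\cdot}$-dense subalgebra $\norm{\cdot}$-dense as well) with respect to $\norm{\cdot}$ is by construction a $C^\ast$-algebra, which we denote $\CCR(H,\rho)$; this last sentence is essentially a definition and requires no further argument beyond recording that a complete normed $\ast$-algebra satisfying the $C^\ast$-identity is a $C^\ast$-algebra.

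The main obstacle I anticipate is purely bookkeeping: one must be careful that all the positivity/Cauchy--Schwarz manipulations are legitimately applied to elements of the completed Banach $\ast$-algebra $\balg(H,\rho)$ and to states extended via Proposition~\ref{prpPosLinFunc}, rather than only to finite Weyl combinations in $\Delta(H,\rho)$; the extension step guarantees this, so there is no real analytic difficulty, and indeed the paper signals that it will only \emph{sketch} the proof. The genuinely substantive input has already been supplied upstream — namely that the supremum in Definition~\ref{defMinRegNorm} is finite (bounded by $\normb{\cdot}$) and that at least one faithful state exists — so what remains is the routine derivation of the $C^\ast$-axioms from the definition of the minimal regular norm.
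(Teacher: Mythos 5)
Your proof is correct, but it takes a genuinely different route from the paper. The paper proves the $C^\ast$-identities by passing through the GNS construction: for each $\omega\in\states$ it builds the representation $\GNS{\omega}$, shows $\norm{\pi_\omega(a)}^\bounded_\omega\leq\norm{a}$, and then establishes the reverse inequality $\norm{a}\leq\sup_\omega\norm{\pi_\omega(a)}^\bounded_\omega$, so that the minimal regular norm is exhibited as a supremum of operator norms on Hilbert spaces and inherits $\norm{a^\ast}=\norm{a}$ and $\norm{a^\ast a}=\norm{a}^2$ from $\bounded(\hilb_\omega)$. You instead derive both identities purely algebraically from the Cauchy--Schwarz inequality for states, $\omega(a^\ast a)^2\leq\omega(\bbone)\,\omega\big((a^\ast a)^\ast(a^\ast a)\big)$, which yields $\omega(a^\ast a)\leq\norm{a^\ast a}$ and hence $\norm{a}^2\leq\norm{a^\ast a}\leq\norm{a^\ast}\,\norm{a}$; combined with submultiplicativity (asserted by the paper just before the lemma, so fair to use) this closes both identities. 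Your argument is more elementary and avoids GNS entirely; the paper's argument costs more but delivers as a byproduct the continuous GNS representations of $\CCR(H,\rho)$, which it records at the end of the proof. One cosmetic slip: the inequality you write in the first paragraph, $\vert\omega(a^\ast a)\vert\leq\sqrt{\omega(a^\ast a)\,\omega\big((a^\ast a)^\ast(a^\ast a)\big)}$, is not a correct instance of Cauchy--Schwarz (it would amount to $\omega(a^\ast a)\leq\omega\big((a^\ast a)^2\big)$, which fails, e.g., for $a=\varepsilon\,\bbone$); the correct form, with $\omega(\bbone)=1$ as the first factor, is exactly the one you state and use in the second paragraph, so the conclusion $\omega(a^\ast a)\leq\norm{a^\ast a}$ and everything downstream is unaffected.
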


\begin{proof}
For each state $\omega\in\states$, consider the associated GNS-triple $\GNS{\omega}$, 
where $\pi_\omega:\balg(H,\rho)\to\endo(\dense_\omega)$ is a $\ast$-representation 
on a pre-Hilbert space $\dense_\omega$, while $\Omega_\omega\in\dense_\omega$ is a cyclic vector such that 
$\la\Omega_\omega,\pi_\omega(a)\Omega_\omega\ra_\omega=\omega(a)$ for each $a\in\balg(H,\rho)$. 
References about the GNS theorem for $C^\ast$-algebras are {\em e.g.}\ \cite[Section\ 2.3]{BR87} 
and\ \cite[Section\ 1.4]{BF09}. However, one can easily check that the GNS construction goes through 
also for states on arbitrary unital $\ast$-algebras, yet resulting in a much less regular GNS representation, 
{\em cfr.}\ \cite[Theorem\ 4.1]{BDH13}. The GNS construction applied to a state $\omega$ on $\balg(H,\rho)$ 
provides the pre-Hilbert space $\dense_\omega$ given by the quotient of vector spaces $\balg(H,\rho)/N_\omega$ 
endowed with the scalar product $\la\cdot,\cdot\ra$ defined by $\la[a],[b]\ra_\omega=\omega(b^\ast\,a)$ 
for each $a,b\in\balg(H,\rho)$, where $N_\omega=\{a\in\balg(H,\rho):\omega(a^\ast\,a)=0\}$ is a left ideal 
of $\balg(H,\rho)$. Furthermore, the $\ast$-representation $\pi_\omega:\balg(H,\rho)\to\endo(\dense_\omega)$ 
is defined by $\pi_\omega(a)[b]=[ab]$ for each $a,b\in\balg(H,\rho)$, 
while the cyclic vector is specified by $\Omega_\omega=[\bbone]$. In particular, we get the following identity: 
\begin{align}\label{eqGNSvsState}
\la\pi_\omega(a)[b],\pi_\omega(a)[b]\ra_\omega=\omega(b^\ast\,a^\ast\,a\,b)\,, 
&& \forall\,a,\,b\in\balg(H,\rho)\,.
\end{align}
This entails the inequality below, which in turn ensures continuity of the $\ast$-representation $\pi_\omega$: 
\begin{align*}
\norm{\pi_\omega(a)[b]}_\omega\leq\norm{a}\,\norm{[b]}_\omega\,, && \forall\,a,\,b\in\balg(H,\rho)\,. 
\end{align*}
This shows that $\pi_\omega(a)$ is a bounded operator on $\dense_\omega$ 
with operator norm $\norm{\pi_\omega(a)}_\omega^\bounded\leq\norm{a}\leq\normb{a}$, 
whence $\pi_\omega(a)$ extends by continuity to a bounded linear operator 
on the Hilbert space completion $\hilb_\omega$ of $\dense_\omega$. 
The result is a continuous $\ast$-representation $\pi_\omega:\balg(H,\rho)\to\bounded(\hilb_\omega)$. 
Furthermore, we already know that $\norm{\pi_\omega(a)}_\omega^\bounded\leq\norm{a}$ 
for each $a\in\balg(H,\rho)$ and each $\omega\in\states$, whence 
\begin{align*}
\sup_{\omega\in\states}\norm{\pi_\omega(a)}_\omega^\bounded\leq\norm{a}\,, && \forall\,a\in\balg(H,\rho)\,. 
\end{align*}
The converse inequality holds true by definition of the minimal regular norm 
and exploiting the identity in\ \eqref{eqGNSvsState}: 
\begin{align*}
\norm{a} & =\sup_{\omega\in\states}\sqrt{\omega(a^\ast\,a)}
\leq\sup_{\omega\in\states}\;\sup_{\substack{b\in\balg(H,\rho)\\\omega(b^\ast\,b)=1}}
\sqrt{\omega(b^\ast\,a^\ast\,a\,b)}\\
& =\sup_{\omega\in\states}\;\sup_{\substack{b\in\balg(H,\rho)\\\omega(b^\ast\,b)=1}}
\norm{\pi_\omega(a)[b]}_\omega
=\sup_{\omega\in\states}\norm{\pi_\omega(a)}_\omega^\bounded\,, & \forall\,a\in\balg(H,\rho)\,.
\end{align*}
Since, for each $\omega\in\states$, $\bounded(\hilb_\omega)$ is 
the $C^\ast$-algebra of bounded operators on $\hilb_\omega$, for each $a\in\balg(H,\rho)$, 
both the identity $\norm{a^\ast}=\norm{a}$ and the $C^\ast$-condition $\norm{a^\ast\,a}=\norm{a}^2$ 
follow from the corresponding ones on $\bounded(\hilb_\omega)$ for each $\omega\in\states$. 
This shows that $\norm{\cdot}$ is a $C^\ast$-norm. Therefore, the completion of $\balg(H,\rho)$, 
or equivalently of $\Delta(H,\rho)$, with respect to $\norm{\cdot}$ provides the $C^\ast$-algebra $\CCR(H,\rho)$. 
Furthermore, for each state $\omega\in\states$, 
$\pi_\omega$ extends by continuity to a continuous $\ast$-representation of $\CCR(H,\rho)$. 
\end{proof}

Note that $\norm{\cdot}$ is dominated by $\normb{\cdot}$ on $\balg(H,\rho)$ 
and that any other $C^\ast$-norm on $\Delta(H,\rho)$ induces a $C^\ast$-algebra 
isomorphic to a quotient of $\CCR(H,\rho)$ by one of its ideals, {\em cfr.}\ \cite[Corollary\ 3.9]{MSTV73}. 
Furthermore, each state $\omega\in\states$ extends by continuity to a state on $\CCR(H,\rho)$ 
by definition of the minimal regular norm and using the Cauchy-Schwarz inequality. 
In fact, $\vert\omega(a)\vert^2\leq\omega(a^\ast\,a)\leq\norm{a}^2$ for each $a\in\balg(H,\sigma)$. 

\begin{remark}\label{remRestriction}
Let us stress that, in the case of a symplectic vector space $(V,\tau)$, 
the $C^\ast$-norm on $\Delta(V,\tau)$ is unique, {\em cfr.}\ \cite[Theorem\ 4.2.9]{BGP07}, 
hence $\CCR(V,\tau)$ coincides with the standard CCR-representation 
associated to a symplectic vector space $(V,\tau)$, see for example \cite[Definition\ 4.2.8]{BGP07}. 
Furthermore, notice that the minimal regular norm coincides with the one considered in\ \cite{BHR04} 
in case $(V,\tau)$ is a presymplectic vector space, {\em cfr.}\ \cite[Proposition\ 3.4]{BHR04}. 
\end{remark}

Given a presymplectic homomorphism $L:(H,\rho)\to(K,\sigma)$, 
one gets a continuous $\ast$-homomorphism $\balg(L):\balg(H,\rho)\to\balg(K,\sigma)$. 
Now we would like to get a continuous $\ast$-homomorphism 
$\CCR(L):\CCR(H,\rho)\to\CCR(K,\sigma)$ out of $\balg(L)$. 
This is a straightforward consequence of the fact that the pull-back via $\balg(L)$
of any state $\omega_2\in\states_2$ on $\balg(K,\sigma)$ is 
a state $\omega_2\circ\balg(L)\in\states_1$ on $\balg(H,\rho)$. 
Keeping this fact in mind, the inequality below holds for each $a\in\balg(H,\rho)$: 
\begin{equation}\label{eqContinuity}
\norm{\balg(L)a}_2=\sup_{\omega_2\in\states_2}\sqrt{\omega_2\big(\balg(L)(a^\ast\,a)\big)}
\leq\sup_{\omega_1\in\states_1}\sqrt{\omega_1(a^\ast\,a)}=\norm{a}_1\,,
\end{equation}
where $\norm{\cdot}_1$ and $\norm{\cdot}_2$ denote the minimal regular norms 
respectively on $\balg(H,\rho)$ and on $\balg(K,\sigma)$. Therefore, the extension to the $C^\ast$-completions 
provides the sought $C^\ast$-algebra homomorphism $\CCR(L):\CCR(H,\rho)\to\CCR(K,\sigma)$. 
We collect the results outlined above in the next theorem. 

\begin{theorem}\label{thmQuantFunctor}
The assignment of the unital $C^\ast$-algebra $\CCR(H,\rho)$ to each presymplectic Abelian group $(H,\rho)$ 
and of the continuous $\ast$-homomorphism $\CCR(L):\CCR(H,\rho)\to\CCR(K,\sigma)$ 
to each presymplectic homomorphism $L:(H,\rho)\to(K,\sigma)$ provides a covariant functor 
$\CCR:\PSymA\to\CAlg$ from the category of presymplectic Abelian groups 
to the category of unital $C^\ast$-algebras. 
\end{theorem}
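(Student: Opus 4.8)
The plan is to assemble the pieces already established in this section, since the essential analytic work has been carried out. For each presymplectic Abelian group $(H,\rho)$ the unital $C^\ast$-algebra $\CCR(H,\rho)$ has been defined as the completion of $\Delta(H,\rho)$ (equivalently of $\balg(H,\rho)$) with respect to the minimal regular norm $\norm{\cdot}$, which is a genuine norm rather than merely a seminorm precisely because $\balg(H,\rho)$ admits a faithful state. For a presymplectic homomorphism $L:(H,\rho)\to(K,\sigma)$ we already have at our disposal the continuous unital $\ast$-homomorphism $\balg(L):\balg(H,\rho)\to\balg(K,\sigma)$ together with the estimate \eqref{eqContinuity}, namely $\norm{\balg(L)a}_2\leq\norm{a}_1$ for all $a\in\balg(H,\rho)$, obtained by pulling back states along $\balg(L)$. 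Hence $\balg(L)$ is uniformly continuous for the minimal regular norms and extends uniquely to a continuous linear map $\CCR(L):\CCR(H,\rho)\to\CCR(K,\sigma)$.

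First I would check that $\CCR(L)$ is a morphism in $\CAlg$, i.e.\ a unital $\ast$-homomorphism. Multiplicativity, unitality and compatibility with the involution hold on the dense subalgebra $\balg(H,\rho)$ (indeed already on $\Delta(H,\rho)$, where $\Delta(L)\weyl_h=\weyl_{Lh}$ and the Weyl relations \eqref{eqWeylRel} are preserved by the very definition \eqref{eqQuantMorph}), and these algebraic identities pass to the completion because in a $C^\ast$-algebra multiplication is jointly continuous and the involution is isometric. Thus $\CCR(L)$ inherits all the required structure.

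Next I would verify functoriality by testing on the generating symbols. One has $\CCR(\id_H)\weyl_h=\weyl_h$ for every $h\in H$, and for composable presymplectic homomorphisms $L:(H,\rho)\to(K,\sigma)$ and $L':(K,\sigma)\to(K',\sigma')$ one has $\CCR(L'\circ L)\weyl_h=\weyl_{L'Lh}=\CCR(L')\CCR(L)\weyl_h$; this is immediate from \eqref{eqQuantMorph} and the already-noted fact that $\Delta$ and $\balg$ are covariant functors. Since $\{\weyl_h:h\in H\}$ spans a dense subalgebra of $\CCR(H,\rho)$ and all maps in sight are continuous, the identities $\CCR(\id_H)=\id_{\CCR(H,\rho)}$ and $\CCR(L'\circ L)=\CCR(L')\circ\CCR(L)$ follow by density, completing the proof that $\CCR:\PSymA\to\CAlg$ is a covariant functor.

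There is essentially no deep obstacle left at this stage: the hard work lies in the earlier lemmas establishing that $\norm{\cdot}$ is a $C^\ast$-norm and that \eqref{eqContinuity} holds with the minimal regular norms on both sides. The one point demanding a modicum of care is simply to be explicit that the unique continuous extension of $\balg(L)$ \emph{is} well defined on $\CCR(H,\rho)$ and lands in $\CCR(K,\sigma)$ — which is exactly what \eqref{eqContinuity} and the norm property guarantee — after which everything reduces to routine bookkeeping with dense subalgebras.
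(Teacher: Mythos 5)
Your proposal is correct and follows essentially the same route as the paper, which states this theorem explicitly as a collection of the results already established (the construction of the minimal regular $C^\ast$-norm, the existence of a faithful state, and the estimate \eqref{eqContinuity} giving the continuous extension $\CCR(L)$ of $\balg(L)$). The verification that the extension is a unital $\ast$-homomorphism and that functoriality holds by checking on the dense span of the Weyl generators is exactly the intended bookkeeping.
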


To conclude the present section, we want to show that $\CCR:\PSymA\to\CAlg$ preserves injectivity of morphisms. 
To achieve this result we need a preliminary lemma showing that, given a presymplectic Abelian group $(H,\rho)$,  
any positive linear functional on a unital $\ast$-subalgebra of $\Delta(H,\rho)$ admits an extension 
to a positive linear functional on $\Delta(H,\rho)$. This is a consequence 
of the positive cone version of the Hahn-Banach theorem, {\em cfr.}\ \cite[Theorem\ 2.6.2]{Edw65}. 

\begin{lemma}\label{lemExtPosLinFunct}
Let $(H,\rho)$ be a presymplectic Abelian group and consider a positive linear functional $\tilde{\omega}:A\to\bbC$ 
on a unital $\ast$-subalgebra $A$ of the unital $\ast$-algebra $\Delta(H,\rho)$. 
Then there exists a positive linear functional $\omega:\Delta(H,\rho)\to\bbC$ extending $\tilde{\omega}$, 
namely such that its restriction $\omega\vert_A$ to $A$ coincides with $\tilde{\omega}$. 
\end{lemma}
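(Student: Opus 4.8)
The plan is to obtain $\omega$ from the positive-cone version of the Hahn--Banach theorem, \cite[Theorem\ 2.6.2]{Edw65}, applied after passing to self-adjoint parts. First I would set $\Delta(H,\rho)_{\mathrm{sa}}=\{a\in\Delta(H,\rho):a^\ast=a\}$, regard it as a real vector space ordered by the convex cone $\Delta^+$ generated by $\{a^\ast\,a:a\in\Delta(H,\rho)\}$, and let $A_{\mathrm{sa}}=A\cap\Delta(H,\rho)_{\mathrm{sa}}$ be the corresponding real subspace. Since any positive linear functional on a unital $\ast$-algebra is Hermitian --- expanding $\tilde\omega\big((\bbone+a)^\ast(\bbone+a)\big)\geq0$ and $\tilde\omega\big((\bbone+ia)^\ast(\bbone+ia)\big)\geq0$ shows $\tilde\omega(a^\ast)=\overline{\tilde\omega(a)}$ --- it suffices to extend the real-linear functional $\tilde\omega\colon A_{\mathrm{sa}}\to\bbR$, which is non-negative on $A_{\mathrm{sa}}\cap\Delta^+$, to a real-linear functional on $\Delta(H,\rho)_{\mathrm{sa}}$ that is non-negative on $\Delta^+$; complexifying this extension and invoking Hermiticity will then give the statement.

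The substance of the argument is the verification of the hypothesis of \cite[Theorem\ 2.6.2]{Edw65}, namely that $A_{\mathrm{sa}}$ is cofinal in $\Delta(H,\rho)_{\mathrm{sa}}$ with respect to $\Delta^+$, i.e.\ $A_{\mathrm{sa}}+\Delta^+=\Delta(H,\rho)_{\mathrm{sa}}$. Here unitality of $A$ enters through the elementary bound $-\Normb{b}\,\bbone\leq b\leq\Normb{b}\,\bbone$, valid for every self-adjoint $b=\sum_{h\in S}c_h\,\weyl_h\in\Delta(H,\rho)$ (a finite sum; since the $\weyl_h$ are linearly independent, $b=b^\ast$ forces $c_{-h}=\overline{c_h}$, in particular $c_0\in\bbR$). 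Indeed, writing $c_h=|c_h|\,\myexp^{i\theta_h}$ and using $\weyl_{-h}\weyl_h=\weyl_0=\bbone$, which holds because $\rho(-h,h)=-\rho(h,h)=0$, one has
\begin{equation*}
|c_h|\,\bbone-\tfrac12\big(c_h\,\weyl_h+\overline{c_h}\,\weyl_{-h}\big)
=\tfrac{|c_h|}{2}\big(\bbone-\myexp^{i\theta_h}\weyl_h\big)^\ast\big(\bbone-\myexp^{i\theta_h}\weyl_h\big)\in\Delta^+\,;
\end{equation*}
summing over a set of representatives of the pairs $\{h,-h\}$ with $h\neq0$ and adding $c_0\,\bbone\leq|c_0|\,\bbone$ yields $b\leq\Normb{b}\,\bbone$, and the lower bound follows by replacing $b$ with $-b$. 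Because $\bbone\in A$, both $\pm\Normb{b}\,\bbone$ lie in $A_{\mathrm{sa}}$, so $A_{\mathrm{sa}}$ is cofinal, as required.

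With cofinality in hand, \cite[Theorem\ 2.6.2]{Edw65} yields a real-linear $\omega\colon\Delta(H,\rho)_{\mathrm{sa}}\to\bbR$ with $\omega\geq0$ on $\Delta^+$ and $\omega|_{A_{\mathrm{sa}}}=\tilde\omega$. I would then extend it to $\Delta(H,\rho)$ by $\omega(a)=\omega\big(\tfrac12(a+a^\ast)\big)+i\,\omega\big(\tfrac1{2i}(a-a^\ast)\big)$: $\bbC$-linearity is checked directly; positivity holds because $a^\ast\,a$ is self-adjoint, so $\omega(a^\ast\,a)$ is the value on $a^\ast\,a\in\Delta^+$ of the functional just constructed, hence $\geq0$; and $\omega$ restricts to $\tilde\omega$ on $A$, since for $a\in A$ both $\tfrac12(a+a^\ast)$ and $\tfrac1{2i}(a-a^\ast)$ lie in $A_{\mathrm{sa}}$ and $\tilde\omega$ is Hermitian. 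The main obstacle is precisely the cofinality step: as $A$ is only required to be a unital $\ast$-subalgebra (not hereditary), positivity of $\tilde\omega$ cannot be transported to $\Delta(H,\rho)$ directly, and it is the $\ell^1$-type estimate $b\leq\Normb{b}\,\bbone$ that makes the order-theoretic extension theorem applicable; the complexification and Hermiticity bookkeeping afterwards is routine.
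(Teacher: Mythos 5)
Your proof is correct and follows essentially the same route as the paper: pass to the real subspace of Hermitian elements, verify that $\hermt$ is cofinal in $\herm$ with respect to the positive cone by dominating each Hermitian generator $c\,\weyl_h+\bar{c}\,\weyl_{-h}$ by a multiple of $\bbone$ obtained from an element of the form $(\bbone-z\weyl_h)^\ast(\bbone-z\weyl_h)$, apply \cite[Theorem\ 2.6.2]{Edw65}, and complexify via the decomposition into Hermitian real and imaginary parts. The only cosmetic differences are your slightly sharper bound $\Normb{b}\,\bbone$ in place of the paper's $(1+\vert c\vert^2)\,\bbone$ and your explicit check that $\tilde{\omega}$ is Hermitian, which the paper leaves implicit in the final restriction step.
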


\begin{proof}
Consider the real vector subspaces of Hermitian elements of $\Delta(H,\rho)$ and respectively $A$: 
\begin{align*}
\herm=\{a\in\Delta(H,\rho)\,:\;a^\ast=a\}\,, && \hermt=\{a\in A\,:\;a^\ast=a\}\subseteq\herm\,.
\end{align*}
The restriction $\tilde{\omega}\vert_{\hermt}:\hermt\to\bbR$ of $\tilde{\omega}:A\to\bbC$ to $\hermt$ 
is a positive $\bbR$-linear functional. 

Denote with $\cone\subseteq\herm$ the positive cone in $\herm$ whose elements are linear combinations 
with positive coefficients of elements in $\herm$ of the form $a^\ast\,a$ for $a\in\Delta(H,\rho)$. 
We show that, for each $a\in\herm$, there exists $\tilde{a}\in\hermt$ such that $\tilde{a}-a\in\cone$. 
In fact, each $a\in\herm$ is a finite sum of Hermitian elements of the form 
$a_{c,h}=c\weyl_h+\bar{c}\weyl_{-h}$, $c\in\bbC$, $h\in H$. Hence, it is enough to prove that 
the property stated above holds true for $a_{c,h}$ for each $c\in\bbC$ and each $h\in H$. 
Take $c\in\bbC$ and $h\in H$ and consider $b=\bbone-c\weyl_h\in\Delta(H,\rho)$. 
Clearly $k=b^\ast\,b$ lies in the positive cone $\cone$ 
and moreover $k=(1+\vert c\vert^2)\bbone-a_{c,h}$. 
Since $\bbone\in\hermt$, introducing $\tilde{a}=(1+\vert c\vert^2)\bbone\in\hermt$, 
we conclude that $\tilde{a}-a_{c,h}\in\cone$, thus proving the statement above. 

The fact that, for each $a\in\herm$, there exists $\tilde{a}\in\hermt$ such that $\tilde{a}-a\in\cone$, 
allows one to apply \cite[Theorem\ 2.6.2]{Edw65} to $\tilde{\omega}\vert_{\hermt}:\hermt\to\bbR$ 
in order to obtain a positive $\bbR$-linear functional $\omega:\herm\to\bbR$ 
extending $\tilde{\omega}\vert_{\hermt}:\hermt\to\bbR$. 

It only remains to extend $\omega:\herm\to\bbR$ to $\Delta(H,\rho)$. 
First, note that each element $a\in\Delta(H,\rho)$ can be uniquely decomposed as a $\bbC$-linear combination 
of Hermitian elements, $a=a_R+ia_I$, for $a_R=(a+a^\ast)/2\in\herm$ and $a_I=(a-a^\ast)/2i\in\herm$. 
Then one defines $\omega:\Delta(H,\rho)\to\bbC$ setting $\omega(a)=\omega(a_R)+i\omega(a_I)$. 
To conclude the proof, note that $\omega=\tilde{\omega}$ on $\hermt$ entails 
$\omega=\tilde{\omega}$ on $A$ too. 
\end{proof}

\begin{proposition}\label{prpQuantInj}
Let $L:(H,\rho)\to(K,\sigma)$ be an injective presymplectic homomorphism. 
Then $\CCR(L):\CCR(H,\rho)\to\CCR(K,\sigma)$ is an isometric $\ast$-homomorphism, in particular injective. 
\end{proposition}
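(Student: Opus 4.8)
The plan is to upgrade the already-established continuity of $\CCR(L)$ (Theorem~\ref{thmQuantFunctor} together with the estimate \eqref{eqContinuity}, which give $\norm{\CCR(L)a}_2\leq\norm{a}_1$) to an isometry, from which injectivity follows automatically. Since $\Delta(H,\rho)$ is dense in $\CCR(H,\rho)$ and $\CCR(L)$ is continuous, I would only need to prove the reverse inequality $\norm{\Delta(L)a}_2\geq\norm{a}_1$ for $a\in\Delta(H,\rho)$. The heart of the argument is the claim that \emph{every} state on $\Delta(H,\rho)$ is of the form $\omega_2\circ\Delta(L)$ for some state $\omega_2$ on $\Delta(K,\sigma)$; note that by Proposition~\ref{prpPosLinFunc} states on $\Delta(\cdot)$ and on $\balg(\cdot)$ are in bijection via continuous extension and restriction, so in computing the minimal regular norm one may freely take the supremum over states of either algebra.

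To prove that claim I would proceed as follows. Since the functor $\Delta$ preserves injectivity of morphisms, $\Delta(L)$ is an injective $\ast$-homomorphism, so $A:=\Delta(L)\big(\Delta(H,\rho)\big)$ is a unital $\ast$-subalgebra of $\Delta(K,\sigma)$ and $\Delta(L)$ restricts to a $\ast$-isomorphism $\Delta(H,\rho)\to A$. Given a state $\omega_1$ on $\Delta(H,\rho)$, I would transport it to the positive linear functional $\tilde\omega_2:=\omega_1\circ\big(\Delta(L)\big)^{-1}$ on $A$, and then invoke Lemma~\ref{lemExtPosLinFunct} to extend $\tilde\omega_2$ to a positive linear functional $\omega_2$ on $\Delta(K,\sigma)$. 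Because $\bbone\in A$ and $\tilde\omega_2(\bbone)=\omega_1(\bbone)=1$, the extension $\omega_2$ is normalized, hence a genuine state, and by construction $\omega_2\circ\Delta(L)=\omega_1$.

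Granting this, the conclusion would be quick: for $a\in\Delta(H,\rho)$ and $\varepsilon>0$, pick a state $\omega_1$ on $\Delta(H,\rho)$ with $\omega_1(a^\ast a)\geq\norm{a}_1^2-\varepsilon$, take the corresponding $\omega_2$, and compute
\begin{equation*}
\norm{\Delta(L)a}_2^2\geq\omega_2\big((\Delta(L)a)^\ast\Delta(L)a\big)=\omega_2\big(\Delta(L)(a^\ast a)\big)=\omega_1(a^\ast a)\geq\norm{a}_1^2-\varepsilon\,,
\end{equation*}
using that $\Delta(L)$ is a $\ast$-homomorphism. Letting $\varepsilon\to0$ and combining with \eqref{eqContinuity} gives $\norm{\Delta(L)a}_2=\norm{a}_1$; density and continuity then make $\CCR(L)$ isometric on all of $\CCR(H,\rho)$, and an isometric $\ast$-homomorphism is injective.

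The main obstacle is not conceptual but consists in the careful bookkeeping around which algebra each functional lives on: one must check that the functional produced by the positive-cone Hahn--Banach argument in Lemma~\ref{lemExtPosLinFunct} really is normalized (so that it is a state, and not merely a positive functional), and that the identity $\omega_2\circ\Delta(L)=\omega_1$ --- established at the level of the dense $\ast$-algebras --- is all that is needed, since the minimal regular norm is itself computed as a supremum over states of the (pre-completion) $\ast$-algebras. Once these points are tracked, the rest is routine.
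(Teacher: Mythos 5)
Your proposal is correct and follows essentially the same route as the paper's proof: both directions of the norm inequality are obtained exactly as the paper does, with the key step being the extension of $\omega_1\circ\Delta(L)^{-1}$ from the unital $\ast$-subalgebra $\Delta(L)(\Delta(H,\rho))$ to a state on $\Delta(K,\sigma)$ via Lemma~\ref{lemExtPosLinFunct}, followed by the supremum-over-states computation. The bookkeeping points you flag (normalization of the Hahn--Banach extension, passing between states on $\Delta$ and on $\balg$ via Proposition~\ref{prpPosLinFunc}) are handled in the paper in precisely the way you describe.
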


\begin{proof}
The inequality displayed in\ \eqref{eqContinuity} entails that $\norm{\CCR(L)a}_2\leq\norm{a}_1$ 
for each $a\in\CCR(H,\rho)$, where $\norm{\cdot}_1$ and $\norm{\cdot}_2$ denote 
the minimal regular norms on $\CCR(H,\rho)$ and respectively on $\CCR(K,\sigma)$. 
Therefore, it remains only to check that $\norm{\CCR(L)a}_2\geq\norm{a}_1$ for each $\in\CCR(H,\rho)$. 
Let us denote the set of states on $\balg(H,\rho)$ by $\states_1$ 
and the set of states on $\balg(K,\sigma)$ by $\states_2$. 
Recalling Definition\ \ref{defMinRegNorm}, the thesis is implied by the following statement: 
For each state $\omega_1\in\states_1$, 
there exists a state $\omega_2\in\states_2$ such that $\omega_2\circ\balg(L)=\omega_1$. 

Let us take a state $\omega_1\in\states_1$. 
Recalling that the $\ast$-homomorphism $\balg(L):\balg(H,\rho)\to\balg(K,\sigma)$ is an isometry, 
one deduces in particular that $\Delta(L):\Delta(H,\rho)\to\Delta(K,\sigma)$ is injective, 
therefore it admits a unique inverse $\Delta(L)^{-1}:\Delta(L)(\Delta(H,\rho))\to\Delta(H,\rho)$ 
defined on the image. Consider now 
\begin{equation*}
\tilde{\omega}_2=\omega_1\circ\Delta(L)^{-1}:\Delta(L)\big(\Delta(H,\rho)\big)\to\bbC\,. 
\end{equation*}
This is a positive linear functional defined on a unital $\ast$-subalgebra of $\Delta(K,\sigma)$ 
and it is such that $\tilde{\omega}_2(\bbone_2)=1$. Exploiting\ Lemma\ \ref{lemExtPosLinFunct}, 
one finds a positive linear functional $\omega_2:\Delta(K,\sigma)\to\bbC$ extending $\tilde{\omega}_2$. 
In particular, $\omega_2(\bbone_2)=1$. Applying\ Proposition\ \ref{prpPosLinFunc}, 
one gets a continuous extension of $\omega_2$ to $\balg(K,\sigma)$ too. 
We denote it still by $\omega_2:\balg(K,\sigma)\to\bbC$ and we observe that this is a state in $\states_2$. 
Furthermore, by construction, $\omega_2\circ\Delta(L)=\tilde{\omega}_2\circ\Delta(L)=\omega_1$ 
on $\Delta(H,\rho)$, whence $\omega_2\circ\balg(L)=\omega_1$ on $\balg(H,\rho)$ by continuity. 

Keeping all this in mind, for $a\in\balg(H,\rho)$, one gets the following inequality: 
\begin{equation*}
\norm{a}_1=\sup_{\omega_1\in\states_1}\sqrt{\omega_1(a^\ast\,a)}
\leq\sup_{\omega_2\in\states_2}\sqrt{\omega_2\big(\balg(L)(a^\ast\,a)\big)}=\norm{\balg(L)a}_2\,.
\end{equation*}
The same inequality holds true when passing to the $C^\ast$-completions, 
namely $\norm{\CCR(L)a}_2\geq\norm{a}_1$ for each $a\in\CCR(H,\rho)$, thus concluding the proof. 
\end{proof}

\paginavuota

\chapter{Maxwell $k$-forms}\label{chMaxwell}\index{Maxwell $k$-form}
In this chapter we introduce the first gauge field theory we are interested in, 
namely {\em Maxwell $k$-forms}. For $k=1$ this model reproduces the usual dynamics of the vector potential. 
Maxwell $k$-forms have been widely considered in the literature, especially to model pure electromagnetism, 
possibly in the presence of external source currents. A list of papers in the framework 
of algebraic quantum field theory where electromagnetism and its higher analogues for $k\geq2$ 
have been thoroughly analyzed either from the point of view of the Faraday tensor 
or in the spirit adopted here, possibly including external sources as well, 
is\ \cite{Dim92, FP03, Pfe09, Dap11, DS13, FS13, CRV13, SDH14, FL14}. 

We will not include external sources in our description, 
but we will allow for any degree $k$ ranging from $1$ to $m-1$, 
$m$ being the dimension of the spacetime under consideration. 
Our attention will be focused on certain properties of the functionals which will be considered on field configurations. 
In particular we will exploit the tools developed in Section\ \ref{secForms} 
to show that such functionals separate field configurations 
and moreover that all functionals vanishing on-shell can be obtained via the equation of motion. 
Furthermore, we will analyze in detail the functorial behavior of this model both at the classical 
and at the quantum level adopting the perspective of general local covariance, see\ \cite{BFV03}. 
In particular we will exhibit explicit violations of the locality property which arise 
from degeneracies of the classical presymplectic structures and from non-trivial centers of the quantum algebras. 
Even more, locality cannot be restored via suitable quotients, yet the interpretation of spacetime regions 
as subsystems can be recovered in the Haag-Kastler framework\ \cite{HK64}.

\section{Gauge symmetry and dynamics}\label{secGaugeDynForms}
In the present section $M$ will be a fixed globally hyperbolic spacetime of dimension $m$. 
Off-shell field configurations are provided by $k$-forms $A\in\f^k(M)$ over $M$ with arbitrary support 
for any fixed $k\in\{1,\dots,m-1\}$. 
The equation of motion which specifies on-shell field configurations is the standard equation 
for the vector potential generalized to arbitrary degree $k$: 
\begin{align}\label{eqOnShellForms}
\de\dd A=0\,, && A\in\f^k(M)\,, && k\in\{1,\,\dots,\,m-1\}\,.
\end{align}
Note that for $k=m$ this equation becomes trivial, all forms of top degree being closed. 
This motivates the fact that $m$ has been excluded from the range of possible degrees $k$ 
which are considered here. 
On-shell field configurations are not the central objects that describe the dynamics of the vector potential, 
rather gauge equivalence classes are regarded as carrying the actual physical information. In this spirit, 
and extending to arbitrary degree what is usually considered as gauge symmetry for the vector potential, 
we introduce the following notion of gauge equivalence for field configurations: 
For $k\in\{1,\dots,m-1\}$ and $A,A^\prime\in\f^k(M)$, we have 
\begin{align}\label{eqGaugeForms}
A\sim A^\prime && \iff && \exists\,\chi\in\f^{k-1}(M)\,:\;\dd\chi=A^\prime-A\,,
\end{align}
namely $A,A^\prime$ are equivalent provided $A^\prime=A+\dd\chi$, for some $\chi\in\f^{k-1}(M)$. 
If the degree $k$ is $0$, then there is no gauge equivalence 
and the model boils down to to the free real scalar field. 
Being interested in field theories with non-trivial gauge symmetry, we have excluded the degree $k=0$. 
Let us stress that for $k=1$ the model considered describes the vector potential 
for free electromagnetism without any external source. 

\begin{remark}\label{remEquivalentVecPot}
One can take into account a model related to the present one via the Hodge star operator $\ast$. 
Specifically, consider a model where a field configuration $B\in\f^{m-k}(M)$ is on-shell 
when the equation of motion $\dd\de B=0$, instead of\ \eqref{eqOnShellForms}, is fulfilled 
and where gauge equivalence is specified by the condition 
$B\sim B^\prime$ if and only if there exists $\chi\in\f^{m-k+1}(M)$ such that $\de\chi=B^\prime-B$, 
in place of\ \eqref{eqGaugeForms}. 
It is straightforward to check that this model is completely equivalent to the one described previously, 
the relation $B=\ast A$ being defined via the Hodge star $\ast$. 
In particular one can observe that for this \quotes{dual} model, 
in degree $m-k=0$ the equation of motion becomes trivial, all $0$-forms being coclosed, 
and in degree $m-k=m$ there is no gauge equivalence and the equation of motion reduces to $\Box B=0$. 
Therefore we are interested in the dual version of the model 
described previously only in degree $m-k\in\{1,\dots,m-1\}$. 
\end{remark}

We denote the space of solutions to the field equation\ \eqref{eqOnShellForms} as follows: 
\begin{align}\label{eqSolForms}
\sol_M=\ker\left(\de\dd:\f^k(M)\to\f^k(M)\right)\,, && k\in\{1,\,\dots,\,m-1\}\,.
\end{align}
According to\ \eqref{eqGaugeForms}, two $k$-forms are regarded as equivalent 
whenever they differ by an exact $k$-form,\ {\em i.e.}\ their difference lies in 
\begin{align*}
\gau_M=\dd\f^{k-1}(M)\,, && k\in\{1,\,\dots,\,m-1\}\,.
\end{align*}
Since $\dd\dd=0$, it follows that $\gau_M$ is a subspace of $\sol_M$ 
and therefore it makes sense to consider the quotient below 
which defines the space $[\sol_M]$ of gauge equivalence classes of on-shell $k$-forms: 
\begin{align}\label{eqSolModGaugeForms}
[\sol_M]=\frac{\sol_M}{\gau_M}\,, && k\in\{1,\,\dots,\,m-1\}\,.
\end{align}

For the rest of the present section we will be concerned in presenting a characterization 
of the space $[\sol_M]$ of on-shell Maxwell $k$-forms up to gauge. 
This result is achieved in three steps following the strategy of\ \cite[Section\ 2.2]{Dap11}. 
The first step consists in showing that 
it is always possible to set the Lorenz gauge starting from an on-shell $k$-form. 
Next, we show that, up to a gauge transformation, 
one can exploit the causal propagator of the Hodge-d'Alembert differential operator $\Box=\de\dd+\dd\de$, 
see\ Subsection\ \ref{subDynamics}, to represent each solution in the Lorenz gauge 
starting from a coclosed $k$-form with timelike compact support. 
The last step exploits these two facts to represent on-shell Maxwell $k$-forms up to gauge 
in terms of a suitable quotient of $\ftcde^k(M)$. 

\begin{lemma}\label{lemLorenzGaugeFixing}
Let $M$ be an $m$-dimensional globally hyperbolic spacetime and consider $k\in\{1,\dots,m-1\}$. 
Each $k$-form $A\in\sol_M$ is gauge equivalent to a $k$-form $A^\prime\in\sol_M$ 
satisfying the Lorenz gauge condition, 
namely there exists $\chi\in\f^{k-1}(M)$ such that $A^\prime=A+\dd\chi$ fulfils $\de A^\prime=0$. 
\end{lemma}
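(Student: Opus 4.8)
The plan is to obtain the gauge parameter $\chi$ by a single application of the surjectivity of the Hodge-d'Alembert operator in the degree of $A$ itself, thereby bypassing any direct manipulation of the (generically non-compactly supported) form $\de A$ and any discussion of constraint equations on a Cauchy hypersurface.

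First I would invoke Theorem~\ref{thmExtCausalProp} in degree $k$: since $1\le k\le m-1$, exactness at the last slot of the displayed sequence says that $\Box:\f^k(M)\to\f^k(M)$ is surjective, so there exists $\psi\in\f^k(M)$ with $\Box\psi=A$. The decisive step is then purely algebraic. Writing $\Box=\de\dd+\dd\de$, the identity $\Box\psi=A$ rearranges to $\de\dd\psi=A-\dd\de\psi$; hence, setting $\chi:=-\de\psi\in\f^{k-1}(M)$, one gets
$A':=A+\dd\chi=A-\dd\de\psi=\de\dd\psi$, and therefore $\de A'=\de\de(\dd\psi)=0$ because $\de\de=0$. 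This is exactly the Lorenz gauge condition. The remaining bookkeeping is immediate: $\chi\in\f^{k-1}(M)$ realizes $A\sim A'$ in the sense of \eqref{eqGaugeForms}, and $A'$ is still a solution, $\de\dd A'=\de\dd A+\de\dd\dd\chi=0$ using $A\in\sol_M$ and $\dd\dd=0$ (equivalently, because $\gau_M\subseteq\sol_M$).

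I do not expect a genuine obstacle here; the proof is essentially the three lines above. The only subtlety worth flagging is why the more naive route fails: solving $\Box\chi=-\de A$ directly yields $\de A'=\de A+\de\dd\chi=-\dd\de\chi$, which need not vanish, and forcing $\de\chi=0$ would require adjusting the Cauchy data of $\chi$ so that $\de\chi$ carries trivial initial data (exploiting $\Box\de\chi=\de\Box\chi=0$ together with uniqueness for the homogeneous wave equation). Solving $\Box\psi=A$ in the same degree as $A$ and taking $\chi=-\de\psi$ circumvents this completely, so the only nontrivial input beyond the elementary identities for $\dd$, $\de$ and $\Box$ is the surjectivity statement of Theorem~\ref{thmExtCausalProp}.
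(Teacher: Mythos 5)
Your proposal is correct and is essentially the paper's own argument: the paper's explicit gauge parameter $\chi=-\de\bigl(G^+(\chi_+A)+G^-(\chi_-A)\bigr)$ is exactly $-\de\psi$ for a particular $\psi$ solving $\Box\psi=A$, obtained by inlining the surjectivity construction from Theorem~\ref{thmExtCausalProp} rather than citing it. Your verification that $A'=\de\dd\psi$ is coclosed is, if anything, slightly cleaner than the paper's check that $\de\dd\chi=-\de A$, but the underlying idea is identical.
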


\begin{proof}
Given a $k$-form $A\in\sol_M$, we can consider the equation $\de\dd\chi=-\de A$, 
where $\chi\in\f^{k-1}(M)$ has to be determined. 
Consider a partition of unity $\{\chi_+,\chi_-\}$ on $M$ such that 
$\chi_+=1$ in a past compact region, while $\chi_-=1$ in a future compact one. 
Exploiting the retarded and advanced Green operators $G^+,G^-$ for $\Box$, 
one can explicitly write down a solution to the equation $\de\dd\chi=-\de A$: 
\begin{equation*}
\chi=-\de\left(G^+(\chi_+A)+G^-(\chi_-A)\right)\,. 
\end{equation*}
Introducing a new $k$-form $A^\prime=A+\dd\chi$, one realizes that $A^\prime\in\sol_M$ 
and moreover $\de A^\prime=\de A+\de\dd\chi=0$. 
Since per construction $A^\prime\sim A$, this concludes the proof. 
\end{proof}

\begin{remark}
Lemma\ \ref{lemLorenzGaugeFixing} shows how to impose the Lorenz gauge on-shell. 
Actually this result can be also achieved off-shell. Given $\omega\in\f^k(M)$, we look for $\chi\in\f^{k-1}(M)$ 
such that $\omega^\prime=\omega+\dd\chi$ satisfies the Lorenz condition, namely $\de\omega^\prime=0$. 
This amounts to solving the equation $\de\dd\chi=-\de\omega$. A solution $\chi\in\f^{k-1}(M)$ of this equation 
can be found using the argument presented in the proof of the last lemma. 
\end{remark}

\begin{lemma}\label{lemLorenzSol}
Let $M$ be an $m$-dimensional globally hyperbolic spacetime and consider $k\in\{1,\dots,m-1\}$. 
Denote with $G$ the causal propagator for $\Box$, see\ Subsection\ \ref{subDynamics}. 
Each $A\in\sol_M$ satisfying the Lorenz gauge condition is gauge equivalent to $G\omega\in\sol_M$ 
for a suitable coclosed $k$-form $\omega$ with timelike compact support, 
namely there exist $\chi\in\f^{k-1}(M)$ and $\omega\in\ftcde^k(M)$ such that $G\omega=A+\dd\chi$. 
\end{lemma}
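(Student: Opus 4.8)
The plan is to reduce the statement to the exactness of the extended causal propagator sequence (Theorem~\ref{thmExtCausalProp}) and the intertwining relations of Proposition~\ref{prpIntertwiners}, then to make a single algebraic observation that produces a coclosed representative. First I would note that, since $A\in\sol_M$ satisfies $\de\dd A=0$ and, by hypothesis, the Lorenz condition $\de A=0$, one has $\Box A=\de\dd A+\dd\de A=0$. Thus $A\in\f^k(M)$ lies in the kernel of $\Box:\f^k(M)\to\f^k(M)$, and exactness of the extended sequence in degree $k$ yields some $\eta\in\ftc^k(M)$ with $A=G\eta$.

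Next I would determine the codifferential of $\eta$. Using $\de G=G\de$ on timelike compact forms (Proposition~\ref{prpIntertwiners}), one computes $G\de\eta=\de G\eta=\de A=0$, so $\de\eta\in\ftc^{k-1}(M)$ lies in $\ker\big(G:\ftc^{k-1}(M)\to\f^{k-1}(M)\big)$. By exactness of the extended sequence in degree $k-1$ there is $\zeta\in\ftc^{k-1}(M)$ with $\Box\zeta=\de\eta$. The key point is that any such $\zeta$ is automatically coclosed: applying $\de$ and using $\de\Box=\Box\de$ gives $\Box\de\zeta=\de\Box\zeta=\de\de\eta=0$, and since $\Box$ is injective on $\ftc^{k-2}(M)$ (once more by the extended exact sequence; for $k=1$ one simply uses that $\de$ annihilates functions), it follows that $\de\zeta=0$.

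I would then set $\omega:=\eta-\dd\zeta$, which has timelike compact support because $\eta$ does and $\dd$ preserves supports. Since $\de\zeta=0$ one has $\Box\zeta=\de\dd\zeta+\dd\de\zeta=\de\dd\zeta$, hence $\de\omega=\de\eta-\de\dd\zeta=\Box\zeta-\de\dd\zeta=0$, so $\omega\in\ftcde^k(M)$. Finally, using $\dd G=G\dd$, one obtains $G\omega=G\eta-\dd G\zeta=A-\dd(G\zeta)$; setting $\chi:=-G\zeta\in\f^{k-1}(M)$ gives $G\omega=A+\dd\chi$. In particular $G\omega$ and $A$ differ by an element of $\gau_M=\dd\f^{k-1}(M)$, so both lie in $\sol_M$ and are gauge equivalent, which is the claim.

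The argument is essentially bookkeeping of the causally restricted support systems together with the intertwining identities; the only step demanding care is the injectivity of $\Box$ on timelike-compactly-supported forms, since this is precisely what forces $\zeta$ to be coclosed and thereby makes $\omega=\eta-\dd\zeta$ coclosed. Everything else follows immediately from Theorem~\ref{thmExtCausalProp} and Proposition~\ref{prpIntertwiners}, so I do not expect any serious obstacle beyond keeping track of degrees and supports.
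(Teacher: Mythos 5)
Your proposal is correct and follows essentially the same route as the paper's proof: write $A=G\eta$ via Theorem~\ref{thmExtCausalProp}, use the intertwining relations to get $G\de\eta=0$, produce $\zeta$ with $\Box\zeta=\de\eta$, deduce $\de\zeta=0$ from injectivity of $\Box$ on timelike compact forms, and set $\omega=\eta-\dd\zeta$, $\chi=-G\zeta$. Only the names of the auxiliary forms differ.
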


\begin{proof}
Consider $A\in\sol_M$ satisfying the Lorenz gauge condition,\ {\em i.e.}\ such that $\de A=0$. 
Then $\de\dd A=0$ entails $\Box A=0$ since $\Box=\de\dd+\dd\de$. 
By the properties of the causal propagator $G$, in particular\ Theorem\ \ref{thmExtCausalProp}, 
there exists $\theta\in\ftc^k(M)$ such that $G\theta=A$. 
From the Lorenz gauge condition, together with Proposition\ \ref{prpIntertwiners}, one reads $G\de\theta=0$. 
This ensures the existence of $\rho\in\ftc^{k-1}(M)$ such that $\Box\rho=\de\theta$. 
Furthermore note that via\ Theorem\ \ref{thmExtCausalProp}, the last identity entails $\de\rho=0$. 
Introducing $\omega=\theta-\dd\rho\in\ftc^k(M)$ and $\chi=-G\rho\in\f^{k-1}(M)$, 
one can conclude that $\de\omega=0$ and, moreover, $G\omega=A+\dd\chi$, thus concluding the proof. 
\end{proof}

\begin{theorem}\label{thmSolModGaugeForms}
Let $M$ be an $m$-dimensional globally hyperbolic spacetime and consider $k\in\{1,\dots,m-1\}$. 
Denote with $G$ the causal propagator for $\Box$, see\ Subsection\ \ref{subDynamics}. 
Then $G$ induces the following isomorphism of vector spaces: 
\begin{align*}
\frac{\ftcde^k(M)}{\de\dd\ftc^k(M)}\to[\sol_M]\,, && [\omega]\mapsto[G\omega]\,.
\end{align*}
\end{theorem}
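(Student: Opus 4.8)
The plan is to show that the map $[\omega]\mapsto[G\omega]$ is well defined, surjective, and injective, using the two preceding lemmas together with the properties of the causal propagator collected in Subsection \ref{subDynamics}.

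\textbf{Well-definedness.} First I would check that $G$ sends $\ftcde^k(M)$ into $\sol_M$: for $\omega\in\ftcde^k(M)$ we have $\de\omega=0$, hence $\Box\omega=\dd\de\omega+\de\dd\omega=\de\dd\omega$, and by Proposition \ref{prpIntertwiners} together with $\de\de=0$ one gets $\de\dd G\omega=\de G\dd\omega$; more directly, $\de\dd(G\omega)=\de G(\dd\de\dd\omega)$ -- it is cleaner to argue $\Box(G\omega)=0$ from Theorem \ref{thmExtCausalProp} (since $\omega\in\ftc^k(M)$), and then $\de\dd(G\omega)=\Box(G\omega)-\dd\de(G\omega)=-\dd\de G\omega=-\dd G\de\omega=0$ using Proposition \ref{prpIntertwiners} and $\de\omega=0$. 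So $G\omega\in\sol_M$. Next, if $\omega=\de\dd\eta$ with $\eta\in\ftc^k(M)$, then $\omega=\Box\eta-\dd\de\eta$, and since $\de\omega=0$ one checks $\de\dd\eta$ is already the relevant representative; applying $G$ and using $G\Box=0$ on $\ftc^k(M)$ (Theorem \ref{thmExtCausalProp}) gives $G\omega=G\de\dd\eta=-G\dd\de\eta=-\dd G\de\eta\in\gau_M$. Hence $[G\omega]=0$ in $[\sol_M]$, so the map descends to the quotient. Linearity is clear.

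\textbf{Surjectivity.} Given $[A]\in[\sol_M]$, apply Lemma \ref{lemLorenzGaugeFixing} to replace $A$ by a gauge-equivalent $A'\in\sol_M$ with $\de A'=0$, then apply Lemma \ref{lemLorenzSol} to find $\chi\in\f^{k-1}(M)$ and $\omega\in\ftcde^k(M)$ with $G\omega=A'+\dd\chi$. Thus $[G\omega]=[A']=[A]$ in $[\sol_M]$, proving the map is onto.

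\textbf{Injectivity.} This is the step I expect to be the main obstacle. Suppose $\omega\in\ftcde^k(M)$ with $G\omega\in\gau_M$, say $G\omega=\dd\chi$ for some $\chi\in\f^{k-1}(M)$. I want to produce $\eta\in\ftc^k(M)$ with $\omega=\de\dd\eta$. From $G\omega=\dd\chi$ apply $\Box$: since $\Box G\omega=0$ on $\ftc^k(M)$ by the exact sequence of Theorem \ref{thmExtCausalProp}, we get $\Box\dd\chi=\dd\Box\chi=0$, so $\Box\chi\in\fdd^{k-1}(M)$, but more usefully I would feed $\omega$ into the exact sequence: $G\omega=\dd\chi$ means $\omega\in\ftc^k(M)$ maps under $G$ into the image of $\dd$; I would like to conclude $\omega-\de\dd\eta\in\ker(G:\ftc^k\to\f^k)=\Box\ftc^k(M)$ for suitable $\eta$. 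Concretely: applying $\de$ to nothing yet, consider that $\dd G\omega=\dd\dd\chi=0$, hence $G\dd\omega=0$ by Proposition \ref{prpIntertwiners}, so by exactness (Theorem \ref{thmExtCausalProp} in degree $k+1$) there is $\zeta\in\ftc^{k+1}(M)$ with $\Box\zeta=\dd\omega$. Set $\eta=\de\zeta\in\ftc^k(M)$; then $\de\dd\eta=\de\dd\de\zeta$ and $\Box\eta=\de\Box\zeta=\de\dd\omega=\Box\omega$ (using $\de\omega=0$), so $\Box(\omega-\eta)=0$... this needs care, so instead set $\eta'=\de\zeta$ and compute $\omega-\de\dd\eta' = \omega-\de\dd\de\zeta$; since $\Box\zeta=\dd\omega$ we have $\de\dd\de\zeta=\de\Box\zeta-\de\dd\de\zeta$, forcing me to also control $\dd\de\de\zeta=0$. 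The honest route is: from $\Box\zeta=\dd\omega$, $\dd\de\zeta+\de\dd\zeta=\dd\omega$; apply $\de$: $\de\dd\de\zeta=\de\dd\omega$, i.e. $\de\dd(\de\zeta)=\de\dd\omega=\Box\omega$ (as $\de\omega=0$). So with $\eta=\de\zeta$ we get $\de\dd\eta=\Box\omega$. But $G\omega=\dd\chi$ and applying $\Box$: is $\Box\omega=0$? No. Rather $\Box\omega\neq 0$ in general. I then need $\omega=\de\dd\eta$, i.e. $\omega=\Box\omega$, false. So the correct argument must use $G\omega=\dd\chi$ more sharply: apply $G^{-1}$-type reasoning, namely $\omega-\Box(\text{something})$ lies in $\ker G\cap(\text{coclosed timelike compact forms})$. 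I would complete this by showing $\ker(G|_{\ftcde^k})=\de\dd\ftc^k(M)$ directly via Theorem \ref{thmExtCausalProp} and Proposition \ref{prpIntertwiners}: if $G\omega=\dd\chi$, pick $\psi\in\ftc^{k-1}(M)$ with $\Box\psi=\de\omega+\text{correction}$ — the clean statement is that $G\omega$ exact forces, via uniqueness in the exact sequence, $\omega\in\de\dd\ftc^k(M)+\ker G$, and a coclosed element of $\ker G=\Box\ftc^k(M)$ of the form $\Box\eta$ with $\de\Box\eta=0$ can be adjusted so that $\Box\eta=\de\dd\eta'$. Assembling these using only the earlier results yields $\omega\in\de\dd\ftc^k(M)$, hence injectivity.
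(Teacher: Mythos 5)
Your well-definedness and surjectivity arguments are correct and coincide with the paper's: $G$ maps $\ftcde^k(M)$ into $\sol_M$, the inclusion $\de\dd\ftc^k(M)\subseteq\ker$ follows from $G\de\dd\eta=-\dd G\de\eta\in\gau_M$, and surjectivity is exactly the combination of Lemma \ref{lemLorenzGaugeFixing} and Lemma \ref{lemLorenzSol}. The problem is injectivity, where your argument has a genuine gap and, as you yourself note, your concrete attempt does not close. The detour through $\zeta\in\ftc^{k+1}(M)$ with $\Box\zeta=\dd\omega$ cannot work (it only produces $\de\dd(\de\zeta)=\Box\omega$, not $\omega$), and the final paragraph replaces a proof by the assertion that ``$G\omega$ exact forces, via uniqueness in the exact sequence, $\omega\in\de\dd\ftc^k(M)+\ker G$.'' That assertion is precisely what has to be proved: the exact sequence of Theorem \ref{thmExtCausalProp} tells you $\ker(G|_{\ftc^k})=\Box\ftc^k(M)$, but it does not by itself let you split off an exact piece $\dd\rho$ with $\rho$ of timelike compact support from $\omega$.

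The missing idea is to exploit the hypothesis $G\omega=\dd\chi$ one degree lower: the potential $\chi\in\f^{k-1}(M)$ satisfies $\de\dd\chi=0$, i.e.\ it is itself an on-shell Maxwell $(k-1)$-form (for $k\geq2$), so the very same Lorenz-gauge and propagator-representation machinery (Lemma \ref{lemLorenzGaugeFixing} and Lemma \ref{lemLorenzSol} in degree $k-1$) yields $\rho\in\ftcde^{k-1}(M)$ and $\xi\in\f^{k-2}(M)$ with $G\rho=\chi+\dd\xi$; for $k=1$ one argues directly, since $\de\dd\chi=0$ is $\Box\chi=0$ and Theorem \ref{thmExtCausalProp} gives $\chi=G\rho$ with $\rho\in\ctc(M)$. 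Then $G\omega=\dd\chi=\dd G\rho=G\dd\rho$, so $\omega-\dd\rho\in\ker(G|_{\ftc^k})=\Box\ftc^k(M)$, say $\Box\theta=\omega-\dd\rho$. Applying $\de$ and using $\de\omega=0$, $\de\rho=0$ gives $\Box(\de\theta+\rho)=0$, hence $\de\theta=-\rho$ by injectivity of $\Box$ on timelike compact forms, and finally $\omega=\Box\theta+\dd\rho=\de\dd\theta+\dd\de\theta+\dd\rho=\de\dd\theta$. Your closing remark that a coclosed element of $\Box\ftc^k(M)$ is automatically of the form $\de\dd\eta'$ is essentially this last computation and is fine; what you never supply is the degree-lowering representation of $\chi$ that produces $\rho$ in the first place, and without it the injectivity claim is not established.
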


\begin{proof}
Given any $\omega\in\ftcde^k(M)$, $G\omega$ is coclosed by Proposition\ \ref{prpIntertwiners}, 
therefore $\de\dd G\omega=\Box G\omega=0$. 
This means that $G$ maps $\ftcde^k(M)$ to $\sol_M$, hence one can consider the map 
\begin{align*}
\ftcde^k(M)\to[\sol_M]\,, && \omega\mapsto[G\omega]\,.
\end{align*}
First of all, we prove that the map defined above is surjective. 
Take $[A]\in[\sol_M]$ and choose a representative $A\in[A]$. 
Applying Lemma\ \ref{lemLorenzGaugeFixing}, one finds $A^\prime\in\sol_M$ satisfying the Lorenz gauge condition 
which is gauge equivalent to $A$, namely such that $\de A^\prime=0$ and $[A^\prime]=[A]$. 
By Lemma\ \ref{lemLorenzSol} applied to $A^\prime$, 
there exists $\omega\in\ftcde^{k}(M)$ such that $[G\omega]=[A^\prime]$, 
therefore one deduces that $[G\omega]=[A]$. 

It remains only to check that the kernel of the map mentioned above coincides with $\de\dd\ftc^k(M)$. 
The inclusion in one direction is a consequence of the following chain of identities for each $\theta\in\ftc^k(M)$: 
\begin{equation*}
G\de\dd\theta=G(\Box-\dd\de)\theta=-\dd G\de\theta\in\gau_M\,. 
\end{equation*}
This shows that $\de\dd\ftc^k(M)$ is included in the kernel of the map $\ftcde^k(M)\to[\sol_M]$ introduced above. 
To prove also the converse inclusion we take $\omega\in\ftcde^k(M)$ 
lying in the kernel of the map $\ftcde^k(M)\to[\sol_M]$. 
This is to say that there exists $\chi\in\f^{k-1}(M)$such that $\dd\chi=G\omega$, 
in particular we have the identity $\de\dd\chi=0$. 
For $k\geq2$, an argument similar to the one used above to show surjectivity, 
exploiting both Lemma\ \ref{lemLorenzGaugeFixing} and Lemma\ \ref{lemLorenzSol}, 
shows that there exist $\rho\in\ftcde^{k-1}(M)$ and $\xi\in\f^{k-2}(M)$ such that $G\rho=\chi+\dd\xi$. 
For $k=1$, the situation is simpler since the equation $\de\dd\chi=0$ coincides with $\Box\chi=0$ 
and hence one finds $\rho\in\ctc(M)$ such that $G\rho=\chi$. 
In both cases one deduces that $G\omega=G\dd\rho$. 
Because of this identity there exists $\theta\in\ftc^k(M)$ such that $\Box\theta=\omega-\dd\rho$. 
Acting with $\de$ on both sides, we deduce $\Box\de\theta=-\de\dd\rho$. 
Furthermore, we note that, for any $k\geq1$, $\de\dd\rho=\Box\rho$ and hence $\Box(\de\theta+\rho)=0$. 
By Theorem\ \ref{thmExtCausalProp}, this entails that $\de\theta+\rho=0$. 
Together with the identity $\Box\theta=\omega-\dd\rho$, this entails $\omega=\de\dd\theta$. 
This shows that the kernel of the map $\ftcde^k(M)\to[\sol_M]$ introduced above is included in $\de\dd\ftc^k(M)$, 
thus completing the proof. 
\end{proof}

Under suitable conditions, we can exploit the quotient by $\de\dd\ftc^k(M)$ in the source of the isomorphism 
provided by the last theorem in order to choose a representative in $\ftcde^k(M)$ with support 
\quotes{as close as possible} to a Cauchy hypersurface for $M$. 
This property is related to the fact that the Cauchy problem for Maxwell $k$-forms is well-posed 
(in a suitable sense), {\em cfr.}\ \cite[Proposition II.12]{Pfe09}. 
To make this statement more precise, we prove the following lemma, 
which will be used later to exhibit the so-called time-slice axiom for the functor describing 
the classical field theory of Maxwell $k$-forms, see\ Theorem \ref{thmTimeSliceForms}. 

\begin{lemma}\label{lemTimeSliceForms1}
Let $M$ be an $m$-dimensional globally hyperbolic spacetime and consider $k\in\{1,\dots,m-1\}$. 
Denote with $G$ the causal propagator for $\Box$. Furthermore, assume a partition of unity $\{\chi_+,\chi_-\}$ 
is given in such a way that $\supp(\chi_\pm)$ is past/future compact. 
Given $[A]\in[\sol_M]$, there exists $\omega\in\ftcde^k(M)$ 
with support inside $\supp(\chi_+)\cap\supp(\chi_-)$ such that $[G\omega]=[A]$. 
\end{lemma}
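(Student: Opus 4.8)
The plan is to reduce everything to Theorem~\ref{thmSolModGaugeForms}. First I would pick a representative $\omega_0\in\ftcde^k(M)$ of the class in $\ftcde^k(M)/\de\dd\ftc^k(M)$ corresponding to $[A]$, so that $[G\omega_0]=[A]$. Writing $S:=\supp(\chi_+)\cap\supp(\chi_-)$, the task is then only to replace $\omega_0$ by a cohomologous element $\omega$ (in the sense $\omega-\omega_0\in\de\dd\ftc^k(M)$) with $\supp(\omega)\subseteq S$: indeed $\omega$ is then automatically coclosed since $\de\de=0$, hence $\omega\in\ftcde^k(M)$, and $[G\omega]=[G\omega_0]=[A]$ follows at once from Theorem~\ref{thmSolModGaugeForms}. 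I would also record two elementary facts: $S$ is itself timelike compact, being the intersection of a past compact with a future compact region (Definition~\ref{defSupportSystems}), and $\supp(\dd\chi_\pm)\subseteq S$, because $\dd\chi_+=-\dd\chi_-$ while $\supp(\dd\chi_\pm)\subseteq\supp(\chi_\pm)$.

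For the construction I would use the extended retarded and advanced Green operators of Subsection~\ref{subDynamics}. Since $\omega_0$ has timelike compact, hence past and future compact, support, both $C_+:=G^+\omega_0$ and $C_-:=G^-\omega_0$ are defined; they satisfy $\Box C_\pm=\omega_0$, they are coclosed by Proposition~\ref{prpIntertwiners} (as $\de\omega_0=0$), and their supports are respectively past compact and future compact, by the support property of $G^\pm$ together with global hyperbolicity (for a past compact closed $T$ and compact $K$ one has $J_M^+(T)\cap J_M^-(K)\subseteq J_M^+(T\cap J_M^-(K))\cap J_M^-(K)$, which is compact by Theorem~\ref{thmGlobHyp}). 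I then set
\[
\theta:=\chi_+\,C_-+\chi_-\,C_+\in\ftc^k(M)\,,\qquad \omega:=\omega_0-\de\dd\theta\,,
\]
where $\theta$ has timelike compact support because $\supp(\chi_+C_-)\subseteq\supp(\chi_+)\cap\supp(C_-)$ and $\supp(\chi_-C_+)\subseteq\supp(\chi_-)\cap\supp(C_+)$ are each the intersection of a past compact with a future compact region.

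It remains to check that $\omega$ vanishes outside $S$. Using $\de\dd\theta=\Box\theta-\dd\de\theta$ and expanding by the Leibniz rule,
\[
\Box\theta=\chi_+\Box C_-+\chi_-\Box C_++[\Box,\chi_+]C_-+[\Box,\chi_-]C_+=\omega_0+[\Box,\chi_+]C_-+[\Box,\chi_-]C_+\,,
\]
and the two commutator terms are supported in $\supp(\dd\chi_\pm)\subseteq S$, since $[\Box,\chi_\pm]$ is a first order operator whose coefficients vanish where $\dd\chi_\pm$ vanishes. Likewise $\de\theta=\chi_+\de C_-+\chi_-\de C_++(\text{terms supported in }\supp(\dd\chi_\pm))$ and $\de C_\pm=0$, so $\de\theta$, and therefore $\dd\de\theta$, is supported in $S$. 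Hence $\de\dd\theta-\omega_0$ is supported in $S$, i.e.\ $\omega=\omega_0-\de\dd\theta$ has $\supp(\omega)\subseteq S=\supp(\chi_+)\cap\supp(\chi_-)$, which together with the remarks of the first paragraph completes the proof.

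The part requiring the most care is the support bookkeeping: that the extended Green operators turn $\omega_0$ into forms with past, resp.\ future, compact support; that products of a past compact and a future compact quantity are timelike compact; and that the commutators $[\Box,\chi_\pm]$ and the first order corrections appearing in $\de(\chi_\pm\,\cdot\,)$ are all supported in $\supp(\dd\chi_\pm)\subseteq S$. Everything else is formal manipulation with $\Box=\de\dd+\dd\de$ and an appeal to Theorem~\ref{thmSolModGaugeForms}.
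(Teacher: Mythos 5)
Your proof is correct and, in fact, constructs exactly the same element as the paper: your $\theta=\chi_+G^-\omega_0+\chi_-G^+\omega_0$ is the paper's auxiliary form $\rho$, and your $\omega=\omega_0-\de\dd\theta$ coincides with the paper's $\omega=\de\dd(\chi_+G\omega_0)$. The only (minor) difference is in how the support bound is verified: the paper reads $\supp(\omega)\subseteq\supp(\chi_+)\cap\supp(\chi_-)$ directly off the identity $\de\dd(\chi_+G\omega_0)=-\de\dd(\chi_-G\omega_0)$, whereas you obtain it by the Leibniz-rule/commutator computation localizing everything on $\supp(\dd\chi_\pm)$; both arguments are sound.
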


\begin{proof}
Given $[A]\in[\sol_M]$, Theorem\ \ref{thmSolModGaugeForms} provides $\theta\in\ftcde^k(M)$ 
such that $[G\theta]=[A]$. First, recall that $\de\dd G\theta=0$. 
It follows that $\de\dd(\chi_+G\theta)=-\de\dd(\chi_-G\theta)$, 
so that $\omega=\de\dd(\chi_+G\theta)$ has timelike compact support. In fact, 
\begin{equation*}
\supp(\omega)\subseteq\supp(\chi_+)\cap\supp(\chi_-)
\end{equation*}
and the right-hand-side is a timelike compact set since it is the intersection of a past compact set 
with a future compact one. 
The subsequent computation shows that $\theta-\omega\in\de\dd\ftc^k(M)$: 
\begin{align*}
\theta-\omega=\de\dd G^+\theta-\de\dd(\chi_+G^+\theta)+\de\dd(\chi_+G^-\theta)=\de\dd\rho\,.
\end{align*}
Note that for the first equality we exploited the identity $\de\dd G^+\theta=\theta$ and the definition of $G$,
while for the last step we made use of the partition of unity $\{\chi_+,\chi_-\}$ 
and we introduced $\rho=\chi_-G^+\theta+\chi_+G^-\theta$. 
It remains only to check that $\rho$ has timelike compact support. As a matter of fact, 
\begin{equation*}
\supp(\rho)\subseteq (J_M^+(\supp(\theta))\cap\supp(\chi_-))\cup(J_M^-(\supp(\theta))\cap\supp(\chi_+))
\end{equation*}
is the union of two timelike compact sets, which are obtained as the intersection of a past compact set 
and a future compact one. Hence $\rho\in\ftc^k(M)$ and therefore $\theta-\omega\in\de\dd\ftc^k(M)$. 
Applying the isomorphism of\ Theorem\ \ref{thmSolModGaugeForms}, 
we deduce $[G\omega]=[G\theta]=[A]$, which concludes the proof. 
\end{proof}

\begin{remark}\label{remSCSolModGaugeForms}
In place of $\sol_M$, one can also consider the space $\solsc{M}$ of solutions 
to the field equation $\de\dd A=0$ having spacelike compact support. 
In agreement with this restriction on the space of solutions, one has to restrict the notion of gauge equivalence, 
which is specified by $\gausc{M}=\dd\fsc^{k-1}(M)$ in this case. 
Adopting the same arguments presented above to forms with spacelike compact support, 
one can prove statements similar to Lemma\ \ref{lemLorenzGaugeFixing}, Lemma\ \ref{lemLorenzSol} 
and Theorem\ \ref{thmSolModGaugeForms}, 
where $\f^k$ and $\ftc^k$ are replaced respectively by $\fsc^k$ and $\fc^k$. 
In particular, these versions with restricted supports are obtained 
exploiting Theorem\ \ref{thmCausalProp} instead of Theorem\ \ref{thmExtCausalProp}. 
\end{remark}

\section{Covariant classical field theory}\label{secClassicalFTForms}
In this section we first introduce a space of linear functionals 
on the space of on-shell Maxwell $k$-forms up to gauge $[\sol_M]$ 
for a given $m$-dimensional globally hyperbolic spacetime $M$ 
and we endow it with a suitable presymplectic structure in the spirit of Peierls\ \cite{Pei52}, 
see \cite[Section 3.2]{SDH14} for a more recent perspective in the context of Maxwell $k$-forms. 
We will choose regular functionals, but still sufficiently many in order to separate points in $[\sol_M]$. 
In order to exhibit separability of on-shell Maxwell $k$-forms up to gauge 
by means of evaluation on the class of functionals considered, 
we will exploit standard Poincar\'e duality for de Rham cohomology, see\ Subsection\ \ref{subCohom}. 
Furthermore, we will implicitly get rid of vanishing functionals in order to obtain a faithful labelling of our functionals. 
Only later we will also provide a more explicit characterization of vanishing functionals 
by means of Poincar\'e duality between causally restricted de Rham cohomology groups, 
see\ Subsection\ \ref{subRestrictedCohom}. 
After the construction of this space of functionals for a fixed globally hyperbolic spacetime, 
we will exhibit the functorial properties of this construction 
in the spirit of the generally covariant locality principle\ \cite{BFV03}. 
However, as shown in\ \cite[Theorem 4.14]{SDH14}, the theory fails to satisfy locality in the strict sense. 
We will see how to recover at least isotony in the Haag-Kastler framework\ \cite{HK64} 
considering a subcategory of $m$-dimensional globally hyperbolic spacetimes with a specified terminal object, 
see\ \cite[Section\ 6]{BDHS14}. 
The discussion for the moment is at a purely classical level. 
In the next section we will quantize the model using a suitable functor 
to show that the relevant properties are preserved by quantization. 
In particular, the failure of locality in the strict sense,
as well as the possibility to recover the Haag-Kastler framework by fixing a target globally spacetime are shown. 
Let us stress that, even though our approach is sometimes slightly different, 
the results we obtain are in agreement with\ \cite{SDH14}. 

To specify the class of functionals on $[\sol_M]$ we are interested in, 
we adopt the strategy of \cite{BDS14b} for affine field theories 
and its modification developed in \cite{BDS14a} to include the case of Abelian gauge field theories. 
Nevertheless, notice that the situation here is simpler since we are dealing with linear spaces. 
Let us fix an $m$-dimensional globally hyperbolic spacetime $M$ and consider a fixed degree $k\in\{1,\dots,m-1\}$. 
Given a $k$-form with compact support $\alpha\in\fc^k(M)$, we introduce the linear functional
\begin{align}\label{eqEvForms}
\ev_\alpha:\f^k(M)\to\bbR\,, && \beta\mapsto(\alpha,\beta)\,.
\end{align}
We collect all functionals of this type in the space of kinematic functionals: 
\begin{equation*}
\kin_M=\left\{\ev_\alpha\,:\;\alpha\in\fc^k(M)\right\}\,.
\end{equation*}
This nomenclature is motivated by the fact that $\kin_M$ keeps track of the kinematics of Maxwell $k$-forms, 
but disregards both the dynamics and the gauge symmetry. 
Non-degeneracy of the pairing $(\cdot,\cdot):\fc^k(M)\times\f^k(M)\to\bbR$ entails that 
$\kin_M$ is isomorphic to $\fc^k(M)$ via the map $\alpha\in\fc^k(M)\mapsto\ev_\alpha\in\kin_M$ 
implicitly defined in\ \eqref{eqEvForms}. 

As a first step we deal with gauge equivalence for field configurations. 
In fact, keeping in mind that purely gauge Maxwell $k$-forms $\gau_M$ should be irrelevant, 
we want to consider only functionals vanishing on $\gau_M$ in order to define observables for Maxwell $k$-forms. 
Given a functional $\ev_\omega$ of $\kin_M$ evaluated on $\dd\chi\in\gau_M$, 
via Stokes' theorem we get the identity: 
\begin{equation*}
(\de\omega,\chi)=(\omega,\dd\chi)=\ev_\omega(\dd\chi)\,.
\end{equation*}
Non-degeneracy of $(\cdot,\cdot):\fc^{k-1}(M)\times\f^{k-1}(M)\to\bbR$ entails that 
$\ev_\omega$ is invariant if and only if $\omega$ is coclosed. 
This motivates our definition of the space $\inv_M$ of gauge invariant functionals for Maxwell $k$-forms: 
\begin{equation*}
\inv_M=\left\{\ev_\omega\in\kin_M\,:\;\omega\in\fcde^k(M)\right\}\,.
\end{equation*}
It is straightforward to check that the isomorphism between $\kin_M$ and $\fc^k(M)$ 
restricts to an isomorphism between $\inv_M$ and $\fcde^k(M)$. 
Notice that, because of gauge invariance, we can now evaluate a functional $\ev_\omega\in\inv_M$ on 
gauge equivalence classes of off-shell Maxwell $k$-forms, namely elements of the quotient $\f^k(M)/\gau_M$, 
simply by the arbitrary choice of a representative. 
It is important to note that gauge invariant functionals in $\inv_M$ separate on-shell Maxwell $k$-forms up to gauge. 

\begin{proposition}\label{prpInvSeparatesSol}
Let $M$ be an $m$-dimensional globally hyperbolic spacetime and consider $k\in\{1,\,\dots,\,m-1\}$. 
For $A\in\sol_M$, $\ev_\omega(A)=0$ for each $\ev_\omega\in\inv_M$ if and only if $A\in\gau_M$. 
\end{proposition}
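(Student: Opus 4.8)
The plan is to prove both implications separately; the easy one is immediate from Stokes' theorem, and the substantial one reduces to ordinary Poincar\'e duality for de Rham cohomology. For the trivial direction, if $A\in\gau_M$, say $A=\dd\chi$ with $\chi\in\f^{k-1}(M)$, then for every $\omega\in\fcde^k(M)$ the adjointness relation \eqref{eqAdjoint} gives $\ev_\omega(A)=(\omega,\dd\chi)=(\de\omega,\chi)=0$, since $\de\omega=0$; hence every functional in $\inv_M$ annihilates $\gau_M$.

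For the converse, assume $A\in\sol_M$ satisfies $\ev_\omega(A)=0$ for all $\omega\in\fcde^k(M)$. The first step I would carry out is to deduce that $A$ is closed: for any $\tau\in\fc^{k+1}(M)$ the form $\de\tau$ is coclosed and compactly supported, so $\de\tau\in\fcde^k(M)$ and $0=\ev_{\de\tau}(A)=(\de\tau,A)=(\tau,\dd A)$ by \eqref{eqAdjoint}; since $\tau$ is arbitrary and the pairing $(\cdot,\cdot)$ between $\fc^{k+1}(M)$ and $\f^{k+1}(M)$ is non-degenerate, this forces $\dd A=0$ (note that this step does not even use $A\in\sol_M$).

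Now $A$ determines a class $[A]\in\hdd^k(M)$. The Hodge star operator restricts to a bijection $\ast\colon\fcde^k(M)\to\fcdd^{m-k}(M)$ which induces an isomorphism $\hcde^k(M)\to\hcdd^{m-k}(M)$, so as $\omega$ ranges over $\fcde^k(M)$ the classes $[\ast\omega]$ exhaust $\hcdd^{m-k}(M)$. Using the symmetry of $(\cdot,\cdot)$ and \eqref{eqPairing2}, $\ev_\omega(A)=(\omega,A)=(A,\omega)=\langle A,\ast\omega\rangle$, and since both $A$ and $\ast\omega$ are closed (with $\ast\omega$ compactly supported) this value depends only on $[A]$ and $[\ast\omega]$ through the pairing \eqref{eqPoincarePairing}. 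Therefore the hypothesis says exactly that $\langle[\beta],[A]\rangle=0$ for every $[\beta]\in\hcdd^{m-k}(M)$. By Remark \ref{remPoincareDualityImproved}, applied with $k$ replaced by $m-k$, the map $\hdd^k(M)\to\big(\hcdd^{m-k}(M)\big)^\ast$, $[A]\mapsto\langle\cdot,[A]\rangle$, is an isomorphism, hence injective, so $[A]=0$; that is, $A=\dd\chi$ for some $\chi\in\f^{k-1}(M)$, i.e. $A\in\gau_M$.

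The only delicate point is invoking the correct half of Poincar\'e duality: when $M$ is not of finite type the map $\hcdd^{m-k}(M)\to\big(\hdd^k(M)\big)^\ast$ may fail to be an isomorphism, but the map in the other direction, $\hdd^k(M)\to\big(\hcdd^{m-k}(M)\big)^\ast$, is always an isomorphism by Remark \ref{remPoincareDualityImproved}, and that is precisely the direction needed here, so no finite-type hypothesis on $M$ is required. Apart from this the argument is a routine combination of the adjointness formula \eqref{eqAdjoint}, non-degeneracy of the form pairings \eqref{eqPairing1}--\eqref{eqPairing2}, and the Hodge-star isomorphism on compactly supported cohomology, so I do not anticipate any further obstacle.
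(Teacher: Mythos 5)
Your proof is correct and follows essentially the same route as the paper's: test against coexact compactly supported forms to deduce $\dd A=0$, then apply the half of Poincar\'e duality from Remark \ref{remPoincareDualityImproved} that holds without any finite-type hypothesis to conclude $[A]=0$. The only cosmetic difference is that you pass through the Hodge star to use the $\la\cdot,\cdot\ra$ form of the duality, whereas the paper invokes the ${}_\de(\cdot,\cdot)$ version of the same statement directly.
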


\begin{proof}
The sufficient condition has already been checked before the definition of $\inv_M$ 
(in fact, this was the motivation for our definition of $\inv_M$). 
We are left with the proof of the necessary condition. Therefore let us consider $A\in\sol_M$ 
such that $\ev_\omega(A)=0$ for each $\ev_\omega\in\inv_M$. This condition reads 
\begin{align*}
(\omega,A)=0\,, && \forall\,\omega\in\fcde^k(M)\,.
\end{align*}
If we plug in the first argument only coexact $k$-forms of compact support, 
our condition entails that $A$ is closed, therefore $A$ defines a cohomology class $[A]\in\hdd^k(M)$. 
By\ Remark\ \ref{remPoincareDualityImproved}, the following is an isomorphism of vector spaces: 
\begin{align*}
\hdd^k(M)\to(\hcde^k(M))^\ast\,, && [\beta]\to{}_\de(\cdot,[\beta])\,,
\end{align*}
where the pairing ${}_\de(\cdot,\cdot):\hcde^k(M)\times\hdd^k(M)\to\bbR$ is defined 
in\ \eqref{eqPoincarePairing}. Since, according to our assumption, ${}_\de(\cdot,[A])$ vanishes on $\hcde^k(M)$, 
this is the trivial element of $(\hcde^k(M))^\ast$. By the isomorphism above, $[A]=0\in\hdd^k(M)$, 
whence $A\in\gau_M$ as expected. 
\end{proof}

So far no information about the dynamics of the system has been taken into account. 
However, it is clear that some gauge invariant functionals always vanish on-shell. 
For example, given $\xi\in\fc^k(M)$, $\ev_{\de\dd\xi}$ is an invariant functionals since $\de\de=0$. 
If we evaluate such functional on $[A]\in[\sol_M]$, applying Stokes' theorem, we obtain $\ev_{\de\dd\xi}([A])=0$. 
To encode dynamics on the space of gauge invariant linear functionals $\inv_M$, 
we quotient by the space $\van_M$ of functionals vanishing on-shell, 
thus obtaining the space of linear classical observables $\obs_M$: 
\begin{align}\label{eqObsForms}
\van_M=\left\{\ev_\omega\in\kin_M\,:\;\ev_\omega([\sol_M])=\{0\}\right\}\,, &&
\obs_M=\frac{\inv_M}{\van_M}\,.
\end{align}
By construction, we have a natural pairing between the space of observables $\obs_M$ 
and the space $[\sol_M]$ of on-shell Maxwell $k$-forms up to gauge: 
\begin{align}\label{eqObsSolPairingForms}
\obs_M\times[\sol_M]\to\bbR\,, && (\ev_{[\omega]},[A])\mapsto\ev_{[\omega]}([A])=\ev_\omega(A)\,.
\end{align}
This definition is well-posed since $\ev_{[\omega]}=\ev_\omega+\van_M$, $[A]=A+\dd\f^{k-1}(M)$, 
$\de\dd A=0$ and $\omega\in\inv_M$. On account of Proposition\ \ref{prpInvSeparatesSol}, 
the space of classical observables introduced above separates on-shell Maxwell $k$-forms up to gauge. 
Moreover, by definition, observables vanishing on all on-shell Maxwell $k$-forms up to gauge are trivial. 
We collect this result in the next theorem.

\begin{theorem}
Let $M$ be an $m$-dimensional globally hyperbolic spacetime and consider $k\in\{1,\,\dots,\,m-1\}$. 
The pairing $\obs_M\times[\sol_M]\to\bbR$ defined in\ \eqref{eqObsSolPairingForms} is non-degenerate. 
\end{theorem}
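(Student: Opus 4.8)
The plan is to verify non-degeneracy separately in the two arguments of the pairing \eqref{eqObsSolPairingForms}. Unwinding the definitions, one has to establish: (i) if $\ev_{[\omega]}\in\obs_M$ satisfies $\ev_{[\omega]}([A])=0$ for every $[A]\in[\sol_M]$, then $\ev_{[\omega]}=0$ in $\obs_M$; and (ii) if $[A]\in[\sol_M]$ satisfies $\ev_{[\omega]}([A])=0$ for every $\ev_{[\omega]}\in\obs_M$, then $[A]=0$ in $[\sol_M]$. Both assertions will follow almost immediately once the relevant definitions are recalled, the only genuine input being Proposition \ref{prpInvSeparatesSol}.

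For (i), I would pick a representative $\ev_\omega\in\inv_M$ of $\ev_{[\omega]}$, with $\omega\in\fcde^k(M)$. By the well-posedness observation recorded just after \eqref{eqObsSolPairingForms}, one has $\ev_{[\omega]}([A])=\ev_\omega(A)$ for any choice of representative $A\in\sol_M$ of $[A]$; hence the hypothesis says precisely that $\ev_\omega$ vanishes on all of $[\sol_M]$. This is exactly the condition $\ev_\omega\in\van_M$ from \eqref{eqObsForms}, so $\ev_{[\omega]}=\ev_\omega+\van_M$ is the zero class in $\obs_M=\inv_M/\van_M$. This direction is therefore true essentially by construction.

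For (ii), I would use that the quotient projection $\inv_M\to\obs_M$ is surjective, so the hypothesis is equivalent to $\ev_\omega(A)=0$ for every $\ev_\omega\in\inv_M$, where $A\in\sol_M$ is any representative of $[A]$. Proposition \ref{prpInvSeparatesSol} then yields directly that $A\in\gau_M$, that is $[A]=0$ in $[\sol_M]=\sol_M/\gau_M$. The substantive content here --- that gauge invariant functionals with compactly supported coclosed coefficients separate on-shell configurations modulo gauge --- is precisely what Proposition \ref{prpInvSeparatesSol} provides, via the improved Poincar\'e duality of Remark \ref{remPoincareDualityImproved}. Thus the main obstacle has already been cleared before the statement of this theorem, and nothing further is required: the theorem is just a repackaging of Proposition \ref{prpInvSeparatesSol} together with the defining quotients \eqref{eqObsForms} and \eqref{eqSolModGaugeForms}.
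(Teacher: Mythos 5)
Your proposal is correct and coincides with the paper's own argument: non-degeneracy in the first argument is immediate from the quotient by $\van_M$ in \eqref{eqObsForms}, while non-degeneracy in the second argument is exactly Proposition~\ref{prpInvSeparatesSol} applied to a representative $A\in\sol_M$. Nothing further is needed.
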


On a globally hyperbolic spacetime of finite type it is possible 
to have an explicit characterization of the space $\van_M$ of functionals vanishing on-shell. 
This result makes contact with the previous literature, see\ {\em e.g.}\ \cite[Proposition\ 3.3]{SDH14}. 

\begin{proposition}\label{prpVanForms}
Let $M$ be an $m$-dimensional globally hyperbolic spacetime of finite type and consider $k\in\{1,\,\dots,\,m-1\}$. 
Let $\ev_\omega\in\inv_M$ be a gauge invariant linear functional. Then $\ev_\omega$ lies in $\van_M$ 
if and only if there exists $\rho\in\fc^k(M)$ such that $\de\dd\rho=\omega$. 
In particular, under these hypotheses, $\obs_M=\inv_M/\de\dd\fc^k(M)$. 
\end{proposition}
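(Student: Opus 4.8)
The strategy is to establish the two inclusions separately. One direction was essentially remarked already: if $\omega = \de\dd\rho$ for some $\rho\in\fc^k(M)$, then for any $A\in\sol_M$ Stokes' theorem and the identity $\de\de = 0$ give $\ev_\omega(A) = (\de\dd\rho, A) = (\rho, \dd\de A)$; combined with $\de\dd A = 0$ and an integration by parts on the pairing $(\cdot,\cdot)$ (using eq.\ \eqref{eqAdjoint}), one concludes $\ev_\omega([A]) = 0$, so $\ev_\omega\in\van_M$. Thus the content of the proposition lies entirely in the converse: a gauge invariant functional $\ev_\omega$ vanishing on all on-shell configurations must arise as $\ev_{\de\dd\rho}$ for compactly supported $\rho$.

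\textbf{Main step: the converse inclusion.} Suppose $\omega\in\fcde^k(M)$ (so $\de\omega = 0$) and $(\omega, A) = 0$ for every $A\in\sol_M$. The plan is to feed into this condition the largest possible supply of solutions. In particular, by Theorem\ \ref{thmSolModGaugeForms} (or directly Lemma\ \ref{lemLorenzSol}), every form of the shape $G\mu$ with $\mu\in\ftcde^k(M)$ is on-shell, where $G$ is the causal propagator for $\Box$; even more simply, for any $\nu\in\fc^k(M)$ the form $G\nu$ satisfies $\Box G\nu = 0$, hence (after gauge-fixing, or by noting $\de\dd G\nu = \Box G\nu - \dd\de G\nu$ and $\de G\nu$ is itself in the image of $G$) one can exhibit enough solutions. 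The cleanest route: since $\de\omega = 0$ and we want $\omega$ in $\de\dd\fc^k(M) = \Box\fc^k(M)\cap\ker\dd$-type image, I would show $(\omega, G\nu)$-type pairings control $\omega$. Using formal antiself-adjointness of $G$ (eq.\ \eqref{eqGreenAdjoint}), $0 = (\omega, G\nu) = -(G\omega, \nu)$ for all $\nu\in\fc^k(M)$ lying in a suitable subspace; by non-degeneracy of $(\cdot,\cdot):\fc^k(M)\times\f^k(M)\to\bbR$ this forces $G\omega$ to vanish on that subspace, and then Theorem\ \ref{thmCausalProp} (exactness of $0\to\fc^k\overset{\Box}{\to}\fc^k\overset{G}{\to}\fsc^k\to\cdots$) gives $\omega = \Box\rho$ for some $\rho\in\fc^k(M)$. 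It remains to upgrade $\Box\rho = \de\dd\rho + \dd\de\rho$ to $\de\dd\rho$: one writes $\omega = \Box\rho$, applies $\de$ to get $0 = \de\Box\rho = \Box\de\rho$ (using $\de\omega=0$ and Proposition\ \ref{prpIntertwiners}), whence $\de\rho\in\ker(\Box:\fc^{k-1}\to\fc^{k-1}) = \{0\}$ by Theorem\ \ref{thmCausalProp} applied in degree $k-1$; therefore $\dd\de\rho = 0$ and $\omega = \de\dd\rho$ with $\rho\in\fc^k(M)$, as required.

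\textbf{Where finite type enters.} The subtlety is that the naive argument only tests $\omega$ against solutions of the form $G\nu$, and these represent only a portion of $[\sol_M]$ — missing the cohomological directions. To kill those, one needs that $\omega$, already known to be coclosed, defines a class in $\hcde^k(M)$ which must be shown to be trivial; this is exactly where the hypothesis that $M$ is of finite type is used, invoking Poincar\'e duality (Theorem\ \ref{thmPoincareDuality}): the pairing ${}_\de(\cdot,\cdot):\hcde^k(M)\times\hdd^k(M)\to\bbR$ is non-degenerate, and the vanishing of $\ev_\omega$ on closed on-shell representatives (which include representatives of every class in $\hdd^k(M)$, since closed forms solve $\de\dd A = 0$) forces $[\omega] = 0$ in $\hcde^k(M)$, i.e.\ $\omega = \de\sigma$ for some $\sigma\in\fc^{k+1}(M)$. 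This is the step analogous to Proposition\ \ref{prpInvSeparatesSol} but dualized. Combining $\omega = \de\sigma$ with the Green-operator argument above (or redoing the Lorenz-gauge analysis in the compact-support setting, cf.\ Remark\ \ref{remSCSolModGaugeForms}) then yields $\omega = \de\dd\rho$. The main obstacle I anticipate is the careful bookkeeping of supports when transferring the gauge-fixing lemmas (Lemma\ \ref{lemLorenzGaugeFixing}, Lemma\ \ref{lemLorenzSol}) to compactly supported forms and making sure that the representative $\rho$ one ultimately produces genuinely has compact — not merely spacelike or timelike compact — support; this is where Theorem\ \ref{thmCausalProp} rather than its extended version Theorem\ \ref{thmExtCausalProp} must be used throughout. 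The final sentence, $\obs_M = \inv_M/\de\dd\fc^k(M)$, is then immediate by substituting the characterization of $\van_M$ into the definition \eqref{eqObsForms}.
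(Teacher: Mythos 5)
Your sufficient direction is fine in substance (though the adjoint computation should read $(\de\dd\rho,A)=(\dd\rho,\dd A)=(\rho,\de\dd A)=0$, not $(\rho,\dd\de A)$), but the converse --- which is the entire content of the proposition --- has a genuine gap. The solutions you actually test $\omega$ against are (a) closed forms and (b) forms $G\nu$ with $\nu\in\fcde^k(M)$ compactly supported, and from these you extract $[\omega]=0$ in $\hcde^k(M)$ together with $\dd G\omega=0$ and triviality of $[G\omega]$ in $\hdd^k(M)$. This information is strictly weaker than $\ev_\omega\in\van_M$: the form $\omega=\de\theta$ of Example~\ref{exaNonTrivRadForms}, with $\theta\in\fcdd^{k+1}(M)\cap\dd\ftc^k(M)$ but $[\theta]\neq 0$ in $\hcdd^{k+1}(M)$, is coclosed and coexact (so $[\omega]=0$ in $\hcde^k(M)$) and satisfies $(\nu,G\omega)=0$ for all $\nu\in\fcde^k(M)$ (it defines an element of $\rad_M$), hence $\ev_\omega$ vanishes on every solution in your test set; yet $\omega\notin\de\dd\fc^k(M)$ and $\ev_\omega\notin\van_M$. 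In other words, closed forms plus spacelike compact solutions do not exhaust $[\sol_M]$, and the solutions you drop are exactly those separating $\van_M$ (this proposition) from the null space $\rad_M$ (Proposition~\ref{prpRadForms}); no combination of $\omega=\de\sigma$ with your Green-operator step can recover the conclusion. Relatedly, your appeal to Theorem~\ref{thmCausalProp} to get $\omega=\Box\rho$ is unjustified: that theorem needs $G\omega=0$, which does not follow from the vanishing of $(G\omega,\cdot)$ on a subspace of compactly supported forms and is generally false (in the correct argument one only obtains $G\omega=\dd\theta$ with $\theta$ spacelike compact).

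The paper closes exactly this gap by using the full parametrization of $[\sol_M]$ from Theorem~\ref{thmSolModGaugeForms}: vanishing on-shell is equivalent to $(G\omega,\eta)=0$ for all $\eta\in\ftcde^k(M)$ with \emph{timelike compact} support. Testing against coexact such $\eta$ gives $\dd G\omega=0$, so $[G\omega]\in\hscdd^k(M)$, and the remaining condition says that this class annihilates $\htcde^k(M)$; the finite type hypothesis enters through the causally restricted Poincar\'e duality of Theorem~\ref{thmSCTCPoincareDuality} --- not through the ordinary duality you invoke --- yielding $G\omega=\dd\theta$ with $\theta\in\fsc^{k-1}(M)$. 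One then repeats the Lorenz-gauge analysis in degree $k-1$ with spacelike compact supports (Remark~\ref{remSCSolModGaugeForms}) to write $\theta=G\psi+\dd\chi$ with $\psi\in\fcde^{k-1}(M)$, obtains $\omega-\dd\psi=\Box\rho$ for some $\rho\in\fc^k(M)$ from Theorem~\ref{thmCausalProp}, and finally $\de\rho=-\psi$, whence $\omega=\de\dd\rho$. Your worry about support bookkeeping is well placed, but its resolution requires the spacelike/timelike compact cohomologies, which your outline bypasses.
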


\begin{proof}
$\ev_\omega$ vanishes on $[\sol_M]$ if $\omega=\de\dd\rho$ for some $\rho\in\fc^k(M)$. 
This relies on the fact that $\de\dd$ is formally self-adjoint with respect to the pairing 
$(\cdot,\cdot):\fc^k(M)\times\f^k(M)\to\bbR$. Therefore only the necessary condition is still to be proved. 
Recalling Theorem\ \ref{thmSolModGaugeForms}, 
we can represent the space $[\sol_M]$ of gauge equivalence classes of on-shell Maxwell $k$-forms 
via the map induced on the quotient $\ftcde^k(M)/\de\dd\ftcde^k(M)$ 
by the causal propagator $G$ for the Hodge-d'Alembert differential operator $\Box$ acting on $k$-forms over $M$. 
This fact, together with formal antiself-adjointness of $G$, entails that 
the condition $\ev_\omega([\sol_M])=\{0\}$ on $\ev_\omega\in\inv_M$ translates into the following: 
\begin{align*}
(G\omega,\eta)=0\,, && \forall\,\eta\in\ftcde^k(M)\,.
\end{align*}
If we temporarily consider only $\eta\in\de\ftc^{k+1}(M)$, by Stokes' theorem we deduce $\dd G\omega=0$, 
hence $G\omega$ defines a cohomology class with spacelike compact support $[G\omega]\in\hscdd^k(M)$, 
see\ Subsection\ \ref{subRestrictedCohom}. Since in the condition displayed above only coclosed forms 
with timelike compact support appear in the second argument of the pairing, we deduce 
\begin{align*}
([G\omega],[\eta])_\de=0\,, && \forall\,[\eta]\in\htcde^k(M)\,,
\end{align*}
where the pairing $(\cdot,\cdot)_\de:\hscdd^k(M)\times\htcde^k(M)\to\bbR$ 
is defined in\ \eqref{eqSCTCPairing}. Recalling that $M$ is of finite type, 
Theorem\ \ref{thmSCTCPoincareDuality} ensures non-degeneracy of this pairing. 
We deduce that $[G\omega]=0\in\hscdd^k(M)$, 
therefore there exists $\theta\in\fsc^{k-1}(M)$ such that $\dd\theta=G\omega$. 
$\omega$ being coclosed, the identity $\de\dd\theta=0$ follows. 
For $k\geq2$, Lemma\ \ref{lemLorenzGaugeFixing} and Lemma\ \ref{lemLorenzSol}, 
together with\ Remark\ \ref{remSCSolModGaugeForms}, provide $\psi\in\fcde^{k-1}(M)$ 
and $\chi\in\fsc^{k-2}(M)$ such that $G\psi+\dd\chi=\theta$. 
For $k=1$, $\Box\theta=\de\dd\theta=0$, therefore one finds $\psi\in\cc(M)$ such that $G\psi=\theta$. 
In both cases one deduces the identity $G\dd\psi=\dd\theta=G\omega$, where $\psi$ is coclosed. 
Therefore one can find $\rho\in\fc^k(M)$ such that $\Box\rho=\omega-\dd\psi$. 
Acting with $\de$ on both sides of the last identity, we obtain $\Box\de\rho=-\de\dd\psi=-\Box\psi$ 
on account of $\omega$ and $\psi$ being coclosed, whence $\de\rho=-\psi$ via Theorem\ \ref{thmCausalProp}. 
This allows us to conclude that $\de\dd\rho=\Box\rho-\dd\de\rho=\Box\rho+\dd\psi=\omega$, 
thus completing the proof. 
\end{proof}

We can endow the space of observables $\obs_M$ for Maxwell $k$-forms over $M$ with a presymplectic structure 
defined in terms of the causal propagator $G$ for the Hodge-d'Alembert operator $\Box$. 
This choice can be motivated exploiting Peierls' method, {\em cfr.}\ \cite[Subsection\ 3.2]{SDH14} or 
\cite{Pei52} and \cite[Section\ I.4]{Haa96} for a presentation of this procedure in a broader context. 

\begin{definition}\index{presymplectic!vector space}\index{presymplectic!linear map}
\index{presymplectic!vector space!category}
A {\em presymplectic vector space} $(V,\sigma)$ consists in the assignment of a vector space $V$ 
endowed with a presymplectic form $\sigma:V\times V\to\bbR$, 
namely a (possibly degenerate) anti-symmetric bilinear form on $V$. 
A {\em presymplectic linear map} $L$ between the presymplectic spaces $(V,\sigma)$ and $(W,\tau)$ 
is a linear map $L:V\to W$ preserving the presymplectic structures, namely $\tau\circ(L\times L)=\sigma$. 
The {\em category of presymplectic vector spaces} $\PSymV$ has presymplectic vector spaces as objects 
and presymplectic linear maps as morphisms. 
\end{definition}

\begin{proposition}\label{prpPSymObsForms}
Let $M$ be an $m$-dimensional globally hyperbolic spacetime and consider $k\in\{1,\,\dots,\,m-1\}$. 
Denote with $G$ the causal propagator for the Hodge-d'Alembert operator $\Box$ on $M$, 
see\ Subsection\ \ref{subDynamics}. 
The anti-symmetric bilinear map 
\begin{align*}
\tau_M:\inv_M\times\inv_M\to\bbR\,, 
&& (\ev_\omega,\ev_{\omega^\prime})\mapsto(\omega,G\omega^\prime)\,,
\end{align*}
defined via the pairing $(\cdot,\cdot)$ between $k$-forms with compact overlapping support, 
see\ \eqref{eqPairing2}, 
induces a presymplectic form $\tau_M:\obs_M\times\obs_M\to\bbR$ on the quotient $\obs_M=\inv_M/\van_M$. 
\end{proposition}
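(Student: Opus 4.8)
The plan is to verify two things: that $\tau_M$ as defined on $\inv_M$ is well-defined and anti-symmetric, and that it descends to the quotient $\obs_M = \inv_M/\van_M$, i.e.\ that $\van_M$ lies in the radical of $\tau_M$ on both sides. Since $\inv_M \cong \fcde^k(M)$ via $\ev_\omega \mapsto \omega$, the bilinear form $\tau_M$ is manifestly well-defined as a map $\inv_M \times \inv_M \to \bbR$: given $\ev_\omega, \ev_{\omega'} \in \inv_M$ with $\omega,\omega' \in \fcde^k(M)$, the quantity $(\omega, G\omega')$ makes sense because $G\omega'$ has spacelike compact support (Theorem\ \ref{thmCausalProp}) and $\omega$ has compact support, so the overlap is compact and the pairing\ \eqref{eqPairing2} applies.

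First I would check anti-symmetry. This is exactly the formal antiself-adjointness of the causal propagator: $G^+$ and $G^-$ are formal adjoints of each other with respect to $(\cdot,\cdot)$, see\ \eqref{eqGreenAdjoint}, so $G = G^+ - G^-$ satisfies $(\omega, G\omega') = -(G\omega, \omega') = -(\omega', G\omega)$ for all $\omega, \omega' \in \fc^k(M)$. Hence $\tau_M(\ev_\omega, \ev_{\omega'}) = -\tau_M(\ev_{\omega'}, \ev_\omega)$, and bilinearity is immediate from bilinearity of $(\cdot,\cdot)$ and linearity of $G$.

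Next I would show $\tau_M$ descends to $\obs_M$. It suffices to prove that if $\ev_\eta \in \van_M$ then $\tau_M(\ev_\eta, \ev_{\omega'}) = 0$ for every $\ev_{\omega'} \in \inv_M$; anti-symmetry then handles the other argument. So fix $\ev_\eta \in \van_M$, meaning $\ev_\eta([\sol_M]) = \{0\}$, and take any $\omega' \in \fcde^k(M)$. By Proposition\ \ref{prpIntertwiners}, $G\omega'$ is coclosed, hence $\de\dd G\omega' = \Box G\omega' = 0$, so $G\omega' \in \sol_M$ and $[G\omega'] \in [\sol_M]$. Applying the hypothesis on $\ev_\eta$ to this class, and using anti-symmetry of $\tau_M$ on $\inv_M$ together with\ \eqref{eqGreenAdjoint}, we get
\begin{equation*}
\tau_M(\ev_\eta, \ev_{\omega'}) = (\eta, G\omega') = \ev_\eta(G\omega') = \ev_\eta([G\omega']) = 0\,.
\end{equation*}
This shows $\van_M$ is contained in the left radical of $\tau_M$, hence (by anti-symmetry) in the right radical too, so $\tau_M$ factors through $\obs_M \times \obs_M$. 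The resulting form is still bilinear and anti-symmetric, hence a presymplectic form on $\obs_M$, which is what we wanted.

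The only mild subtlety — and the one point I would be careful about — is the identification of $\ev_\eta$ evaluated on a field configuration with $\ev_\eta$ evaluated on its gauge class and with the value of $\eta$ against that configuration via $(\cdot,\cdot)$; this is precisely the content of the pairing\ \eqref{eqObsSolPairingForms} being well-defined, which relies on $\eta$ being coclosed and $\de\dd(G\omega') = 0$, both already established. No genuine obstacle arises: the entire argument is a bookkeeping exercise combining the adjointness relation\ \eqref{eqGreenAdjoint}, the intertwining property of Proposition\ \ref{prpIntertwiners}, and the definition\ \eqref{eqObsForms} of $\van_M$.
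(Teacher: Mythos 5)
Your proof is correct and follows essentially the same route as the paper: anti-symmetry from the formal antiself-adjointness of $G$ via \eqref{eqGreenAdjoint}, and descent to the quotient by noting (via Proposition\ \ref{prpIntertwiners}) that $G\omega'\in\sol_M$ for coclosed $\omega'$, so that any $\ev_\eta\in\van_M$ pairs to zero with it, anti-symmetry handling the other slot. The only cosmetic difference is that you spell out the support bookkeeping and the well-definedness of the evaluation on gauge classes, which the paper leaves implicit.
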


\begin{proof}
The map $\tau_M$ is clearly well-defined and bilinear. Anti-symmetry follows from $(\cdot,\cdot)$ being symmetric 
and $G$ being formally antiself-adjoint with respect to $(\cdot,\cdot)$, see\ \eqref{eqGreenAdjoint}. 
More explicitly, given $\ev_\omega,\ev_{\omega^\prime}\in\inv_M$, we get the following chain of identities: 
\begin{align*}
(\omega,G\omega^\prime)=(G\omega^\prime,\omega)=-(\omega^\prime,G\omega)\,.
\end{align*}
Because of anti-symmetry, to conclude the proof it is enough to check 
that $\tau_M(\ev_\theta,\ev_\omega)=0$ for $\ev_\theta\in\van_M$ and $\ev_\omega\in\inv_M$. 
Since per assumption $\omega\in\fcde^k(M)$, the intertwining properties of Proposition\ \ref{prpIntertwiners} 
ensure that $\de\dd G\omega=\Box G\omega-\dd G\de\omega=0$, that is to say $G\omega\in\sol_M$. 
Therefore, by definition, we have 
\begin{equation*}
\tau_M(\ev_\theta,\ev_\omega)=(\theta,G\omega)=\ev_\theta(G\omega)=0\,,
\end{equation*}
thus showing that $\tau_M$ induces a presymplectic form $\tau_M$ on the quotient space $\obs_M$.  
\end{proof}

The last proposition provides a presymplectic form $\tau_M$ on the space of linear classical observables $\obs_M$ 
for on-shell Maxwell $k$-forms up to gauge. 
It seems natural to ask whether the presymplectic form introduced above 
is actually degenerate. This information turns out to be particularly relevant 
for the analysis of the functorial behavior of the model we are developing. 
Specifically, locality (in the sense of\ \cite[Definition\ 2.1]{BFV03}) requires 
that causal embeddings give rise to injective morphisms at the field theoretical level. 
This fact would be automatic if $(\obs_M,\tau_M)$ were a symplectic vector space 
for each globally hyperbolic spacetime $M$ and the maps induced at the level of observables by causal embeddings 
preserved the symplectic structures, see the subsequent Proposition \ref{prpSymInj}. 
However, we will see that degeneracies of the presymplectic form occur for Maxwell $k$-forms 
on globally hyperbolic spacetimes with suitable spacetime topologies. 
The next proposition provides a characterization of the null space of the presymplectic form 
and the subsequent example will present a globally hyperbolic spacetime where the null space is non-trivial. 

\begin{proposition}\label{prpRadForms}
Let $M$ be an $m$-dimensional globally hyperbolic spacetime and consider $k\in\{1,\dots,m-1\}$. 
The null space $\rad_M$ of the symplectic form $\tau_M$ has the following explicit expression: 
\begin{equation*}
\rad_M=\left\{\ev_{[\omega]}\in\obs_M\,:\;\omega\in\de(\fc^{k+1}(M)\cap\dd\ftc^k(M))\right\}\,,
\end{equation*}
In particular, if $M$ has compact Cauchy hypersurfaces, $\rad_M$ is trivial and $\tau_M$ is symplectic. 
\end{proposition}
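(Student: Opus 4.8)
\textbf{Proof strategy for Proposition \ref{prpRadForms}.}

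The plan is to compute the null space of $\tau_M$ directly from its definition $\tau_M(\ev_{[\omega]},\ev_{[\omega']}) = (\omega,G\omega')$ and then to identify the resulting space with $\de(\fc^{k+1}(M)\cap\dd\ftc^k(M))$ using the representation of $[\sol_M]$ from Theorem \ref{thmSolModGaugeForms} together with the various intertwining and duality results already established. First I would observe that, since $G$ is formally antiself-adjoint with respect to $(\cdot,\cdot)$ and maps $\fcde^k(M)$ into $\sol_M$ (by Proposition \ref{prpIntertwiners}), the functional $\ev_{[\omega]}\in\obs_M$ lies in $\rad_M$ precisely when $(\omega,G\omega')=0$ for all $\omega'\in\fcde^k(M)$, equivalently, by antiself-adjointness, when $(G\omega,\omega')=0$ for all $\omega'\in\fcde^k(M)$. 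Plugging coexact compactly supported forms into the second slot forces $\dd G\omega = 0$ (via Stokes' theorem, \ref{thmStokes}), so that $G\omega$ defines a spacelike compact cohomology class $[G\omega]\in\hscdd^k(M)$, and the remaining content of the vanishing condition becomes $([G\omega],[\eta])_\de = 0$ for all $[\eta]\in\htcde^k(M)$, with the pairing $(\cdot,\cdot)_\de:\hscdd^k(M)\times\htcde^k(M)\to\bbR$ of \eqref{eqSCTCPairing}.

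The next step is to translate $[G\omega] = 0$ in $\hscdd^k(M)$ back into a statement about $\omega$. If $M$ is of finite type, Theorem \ref{thmSCTCPoincareDuality} gives non-degeneracy of this pairing and hence $[G\omega]=0$; for general $M$ one still has the partial duality of Remark \ref{remSCTCPoincareDualityImproved}, which provides injectivity of $\htcde^k(M)\to(\hscdd^k(M))^\ast$ in the right direction, so the conclusion $[G\omega]=0$ persists without the finite-type hypothesis. Thus there is $\theta\in\fsc^{k-1}(M)$ with $\dd\theta = G\omega$. From here the argument proceeds exactly as in the second half of the proof of Proposition \ref{prpVanForms} (and of Theorem \ref{thmSolModGaugeForms}): $\omega$ being coclosed gives $\de\dd\theta = 0$; applying Lemma \ref{lemLorenzGaugeFixing}, Lemma \ref{lemLorenzSol} and Remark \ref{remSCSolModGaugeForms} (with the $k=1$ case handled separately via $\Box\theta=0$) one produces $\psi\in\fcde^{k-1}(M)$ with $G\dd\psi = G\omega$, whence $\Box\rho = \omega - \dd\psi$ for some $\rho\in\fc^k(M)$, and acting with $\de$ forces $\de\rho = -\psi$ by Theorem \ref{thmCausalProp}, giving $\de\dd\rho = \omega + \dd\psi$. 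This shows $\omega$ differs from $\de\dd\rho\in\de\dd\fc^k(M)$ by $\dd\psi$ with $\psi$ compactly supported and coclosed; modulo $\van_M = \de\dd\fc^k(M)$ (Proposition \ref{prpVanForms}), one is left with identifying the class of $\dd\psi$. The reverse inclusion is routine: for $\omega = \de\xi$ with $\xi\in\fc^{k+1}(M)\cap\dd\ftc^k(M)$, write $\xi=\dd\zeta$, $\zeta\in\ftc^k(M)$; then $G\omega = G\de\dd\zeta = -\dd G\de\zeta$ by Proposition \ref{prpIntertwiners}, so $G\omega$ is $\dd$-exact with spacelike compact support and $\tau_M(\ev_{[\omega]},\ev_{[\omega']}) = (G\omega,\omega') = (\dd(\cdots),\omega') = 0$ for every $\omega'\in\fcde^k(M)$ by Stokes' theorem.

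The main obstacle I anticipate is the precise bookkeeping in the forward direction: showing that the condition $[G\omega]=0$ together with the Lorenz-gauge manipulations pins down $\omega$ up to $\de\dd\fc^k(M)$ as an element of $\de(\fc^{k+1}(M)\cap\dd\ftc^k(M))$ rather than merely of $\de\dd\fc^k(M) + \dd(\fcde^{k-1}(M))$, so one must check that the extra term $\dd\psi$ can be absorbed — indeed $\dd\psi = \de\dd(G^+\!\dd\psi\text{-type primitive})$ or, more cleanly, one rewrites the whole obstruction intrinsically as $\de$ applied to a compactly supported $(k+1)$-form that is $\dd$-exact in the timelike compact sense, which is exactly how the statement is phrased. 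For the last sentence of the proposition, if $M$ has compact Cauchy hypersurfaces then $\hscdd^k(\Sigma)\simeq\hcdd^k(\Sigma)=\hdd^k(\Sigma)$ is finite-dimensional and, more to the point, $\ftc = \fc$ and $\fsc = \f$ coincide in the relevant degrees of the argument, forcing $\fc^{k+1}(M)\cap\dd\ftc^k(M) = \dd\fc^k(M)$; hence the exhibited null space collapses to $\{\ev_{[\de\dd\zeta]} : \zeta\in\fc^k(M)\} = \{0\}$ in $\obs_M$ by Proposition \ref{prpVanForms}, so $\tau_M$ is symplectic. I would close by remarking that this is the degeneracy responsible for the failure of locality discussed in the following example.
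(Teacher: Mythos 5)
Your backward inclusion and the final remark about compact Cauchy hypersurfaces are fine and essentially coincide with the paper's argument. The forward inclusion, however, contains a genuine gap. The radical condition only gives $(G\omega,\omega')=0$ for $\omega'\in\fcde^k(M)$, i.e.\ against \emph{compactly supported} coclosed test forms; these represent classes in $\hcde^k(M)$, not all of $\htcde^k(M)$, so your reduction to $([G\omega],[\eta])_\de=0$ for all $[\eta]\in\htcde^k(M)$ is unjustified. Moreover, the partial duality you invoke from Remark~\ref{remSCTCPoincareDualityImproved} goes the wrong way: without the finite-type hypothesis it gives non-degeneracy in the \emph{timelike compact} slot (injectivity of $\htcdd^{m-k}(M)\to(\hscdd^k(M))^\ast$), whereas to deduce $[G\omega]=0$ in $\hscdd^k(M)$ you would need injectivity of $\hscdd^k(M)\to(\htcde^k(M))^\ast$, which is exactly the direction that can fail — the same asymmetry as in Remark~\ref{remPoincareDualityImproved}. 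The error is not cosmetic: if one could conclude $[G\omega]=0$ in $\hscdd^k(M)$ and then run the argument of Proposition~\ref{prpVanForms} as you propose, one would land at $\omega\in\de\dd\fc^k(M)$, i.e.\ $\ev_{[\omega]}=0$, so $\rad_M$ would be trivial on \emph{every} spacetime — contradicting both the statement and Example~\ref{exaNonTrivRadForms}. (Your closing bookkeeping, ``$\omega$ differs from $\de\dd\rho$ by $\dd\psi$'', also does not match what that computation actually yields, namely $\de\dd\rho=\omega$ on the nose.)

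The correct route, which is the one the paper takes, is to read the condition against $\de\fc^{k+1}(M)$ to get $\dd G\omega=0$ and then regard $[G\omega]$ as a class in the \emph{unrestricted} cohomology $\hdd^k(M)$, paired with $\hcde^k(M)$ via ${}_\de(\cdot,\cdot)$; Remark~\ref{remPoincareDualityImproved} makes $\hdd^k(M)\to(\hcde^k(M))^\ast$ an isomorphism with no finite-type assumption, so $G\omega=\dd\chi$ for some $\chi\in\f^{k-1}(M)$ of \emph{unrestricted} support. Applying Theorem~\ref{thmSolModGaugeForms} in degree $k-1$ (and $\Box\chi=0$ for $k=1$) yields $\alpha\in\ftcde^{k-1}(M)$ with $G\dd\alpha=G\omega$, then Theorem~\ref{thmExtCausalProp} provides $\eta\in\ftc^k(M)$ — only timelike compact, in contrast to the compactly supported $\rho$ of Proposition~\ref{prpVanForms} — with $\de\eta=-\alpha$ and $\de\dd\eta=\omega$. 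The step your plan is missing entirely is the support argument that makes $\dd\eta$ compactly supported: from $\Box\dd\eta=\dd\omega$ one gets $\dd\eta=G^+\dd\omega=G^-\dd\omega$, so $\supp(\dd\eta)\subseteq J_M^+(\supp\omega)\cap J_M^-(\supp\omega)$ is compact, and hence $\omega=\de\dd\eta\in\de\big(\fc^{k+1}(M)\cap\dd\ftc^k(M)\big)$. This is precisely how the statement distinguishes the null space from $\van_M$.
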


\begin{proof}
We show the inclusion $\supseteq$ first. Consider $\omega\in\de(\fc^{k+1}(M)\cap\dd\ftc^k(M))$. 
This means there exists $\eta\in\ftc^k(M)$ such that $\dd\eta\in\fc^{k+1}(M)$ and $\de\dd\eta=\omega$. 
In particular, we note that $G\omega=G\Box\eta-G\dd\de\eta=-\dd G\de\eta$.
For any $\ev_{[\omega^\prime]}\in\obs_M$, 
\begin{align*}
\tau_M(\ev_{[\omega^\prime]},\ev_{[\omega]})=(\omega^\prime,G\omega)=-(\omega^\prime,\dd G\de\eta)
=-(\de\omega^\prime,G\de\eta)=0\,,
\end{align*}
where we exploited the fact that $\omega^\prime$ is coclosed for the last equality. 
This shows that $\ev_{[\omega]}\in\rad_M$. 

For the inclusion $\subseteq$, consider $\ev_{[\omega]}\in\obs_M$ 
such that $\tau_M(\ev_{[\omega^\prime]},\ev_{[\omega]})=0$ for each $\ev_{[\omega^\prime]}\in\obs_M$. 
Choosing a representative $\ev_\omega\in\inv_M$ for $\ev_{[\omega]}$, this condition translates into 
$(\omega^\prime,G\omega)=0$ for each $\omega^\prime\in\fcde^k(M)$. 
Consider first only $\omega^\prime\in\de\fc^{k+1}(M)$. Stokes' theorem entails that $\dd G\omega=0$, 
meaning that we can consider the cohomology class $[G\omega]\in\hdd^k(M)$. 
We can now reinterpret our assumption by saying that ${}_\de(\cdot,[G\omega])=0$ in $(\hcde^k(M))^\ast$, 
see\ \eqref{eqPoincarePairing} for the de\-fi\-ni\-tion of ${}_\de(\cdot,\cdot):\hcde^k(M)\times\hdd^k(M)\to\bbR$. 
According to\ Remark\ \ref{remPoincareDualityImproved}, the map $\hdd^k(M)\to(\hcde^k(M))^\ast$ 
induced by this pairing is an isomorphism. Therefore $[G\omega]=0$ in $\hdd^k(M)$, 
hence there exists $\chi\in\f^{k-1}(M)$ such that $\dd\chi=G\omega$. 
Since $\ev_\omega\in\inv_M$, $\de\omega=0$ and hence $\de\dd\chi=0$ as well. 
For $k\geq2$, applying Theorem\ \ref{thmSolModGaugeForms} to $[\chi]\in\sol_M$ (note that the degree is $k-1$), 
there exist $\alpha\in\ftcde^{k-1}(M)$ and $\beta\in\f^{k-2}(M)$ such that $G\alpha+\dd\beta=\chi$. 
For $k=1$, we have $\Box\chi=\de\dd\chi=0$, therefore there exists $\alpha\in\ctc(M)$ such that $G\alpha=\chi$. 
In both cases one concludes that $G\dd\alpha=\dd G\alpha=\dd\chi=G\omega$ 
for a suitable $\alpha\in\ftcde^{k-1}(M)$. 
The exact sequence in\ Theorem\ \ref{thmExtCausalProp} ensures the existence of $\eta\in\ftc^k(M)$ 
such that $\Box\eta=\omega-\dd\alpha$. 
In particular, $\Box\de\eta=-\de\dd\alpha=-\Box\alpha$, whence $\de\eta=-\alpha$. 
This identity implies that $\de\dd\eta=\Box\eta-\dd\de\eta=\Box\eta+\dd\alpha=\omega$. 
Furthermore, $\Box\dd\eta=\dd\de\dd\eta=\dd\omega$, hence, by the properties of the retarded/advanced 
Green operator $G^\pm$ for $\Box$, we conclude $\dd\eta=G^\pm\dd\omega$. 
Note that the identity holds true for both $G^+$ and $G^-$, meaning that the support of $\dd\eta$ 
is contained in the intersection of the causal future and the causal past of $\supp(\omega)$. 
Being a closed subset both of a past spacelike compact set and of a future spacelike compact one, 
we deduce that $\supp(\dd\eta)$ is compact. 
Therefore $\omega=\de\dd\eta\in\de(\fc^{k+1}(M)\cap\dd\ftc^k(M))$. 

The second part of the statement follows from the observation that timelike compact sets are also compact 
whenever $M$ admits a compact Cauchy hypersurface. 
\end{proof}

The subsequent example exhibits a class of globally hyperbolic spacetimes 
where the presymplectic form has a non-trivial null space. 

\begin{example}[Non-trivial null space]\label{exaNonTrivRadForms}
This example is a slight modification of\ \cite[Example\ 3.7]{SDH14}. Let $m\geq2$ and $k\in\{1,\dots,m-1\}$. 
We consider an $m$-dimensional globally hyperbolic spacetime $M$ with Cauchy hypersurface diffeomorphic to 
$\bbR^k\times\bbT^{m-k-1}$. 
Points in $M$ are denoted by tuples $(x^0,x^1,\dots,x^k,\varphi^{k+1},\dots,\varphi^{m-1})$. 
We consider a function $a\in\cc(\bbR)$ such that $\int_\bbR a(r)\,\dd r=1$ 
and we define $b\in\c(\bbR)$ such that $b(s)=\int_s^\infty a(r)\,\dd r$. 
For each $i\in\{0,\dots,k\}$, we set $\alpha_i={x^i}^\ast(a\,\dd s)$; 
moreover we set $\beta={x^1}^\ast b$. 
We can introduce $\theta=\alpha_0\wedge\alpha_1\wedge\cdots\wedge\alpha_k\in\fcdd^{k+1}(M)$ 
and $\eta=\beta\alpha_0\wedge\alpha_2\wedge\cdot\wedge\alpha_k\in\ftc^k(M)$. 
We observe that $[\theta]\in\hcdd^{k+1}(M)$ is non-trivial. In fact, for $k\in\{1,\dots,m-2\}$, 
and denoting with $\mu_j={\varphi^j}^\ast\mu$ the pull-back to $M$ of the volume form $\mu$ 
on $\bbT$ (normalized to $1$), we can consider $\nu=\mu_{k+1}\wedge\cdots\wedge\mu_{m-1}$. 
Since $\dd\mu=0$, we can consider the cohomology class $[\nu]\in\hdd^{m-k-1}(M)$. 
One can explicitly evaluate the Poincar\'e pairing between $[\theta]$ and $[\nu]$, see\ \eqref{eqPoincarePairing}: 
\begin{equation*}
\la[\theta],[\nu]\ra=\int_M\theta\wedge\nu
=\left(\int_\bbR a(r)\dd r\right)^{k+1}\left(\int_{\bbT}\mu\right)^{m-k-1}=1\,,
\end{equation*}
For $k=m-1$, we simply have 
\begin{equation*}
\la[\theta],[1]\ra=\int_M\theta=\left(\int_\bbR a(r)\dd r\right)^m=1\,.
\end{equation*}
In particular, this shows that $[\theta]\in\hcdd^{k+1}(M)$ is non-trivial. 
Furthermore, noting that $\dd\beta=-\alpha_1$, we deduce $\dd\eta=\theta$. 
Therefore $\ev_{[\omega]}\in\rad_M$ for $\omega=\de\theta$. To show that $\rad_M$ is actually non-trivial, 
we still have to prove that $\ev_\omega\in\inv_M\setminus\van_M$. 
Recalling Proposition\ \ref{prpVanForms} and noting that $M$ is of finite type, 
it is sufficient to show that $\omega\in\fcde^k(M)\setminus\de\dd\fc^k(M)$. 
In fact, assuming by contradiction, that there exists $\rho\in\fc^k(M)$ such that $\de\dd\rho=\omega$, 
we get the following chain of identities: 
\begin{equation*}
\Box\dd\rho=\dd\de\dd\rho=\dd\omega=\dd\de\theta=\Box\theta\,,
\end{equation*}
which implies $\dd\rho=\theta$, contradicting the fact that $[\theta]\in\hcdd^{k+1}(M)$ is non-trivial. 
\end{example}

Now we would like to understand whether the assignment of the presymplectic space $(\obs_M,\tau_M)$ 
to each globally hyperbolic spacetime $M$ behaves functorially. 

\begin{theorem}\label{thmFunctorForms}
Let $m\geq2$ and $k\in\{1,\,\dots,\,m-1\}$. 
Consider two $m$-dimensional globally hyperbolic spacetimes $M$ and $N$. 
Given a causal embedding $f:M\to N$, the push-forward $f_\ast:\fc^k(M)\to\fc^k(N)$ 
induces a presymplectic linear map $\PSV(f)$ between the presymplectic spaces 
$(\obs_M,\tau_M)$ and $(\obs_N,\tau_N)$. Furthermore, by setting $\PSV(M)=(\obs_M,\tau_M)$ 
for each $m$-dimensional globally hyperbolic spacetime, we get a covariant functor $\PSV:\GHyp\to\PSymV$.
\end{theorem}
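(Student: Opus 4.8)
The plan is to carry the push-forward along a causal embedding through the identification $\kin_M\simeq\fc^k(M)$ and then check that it survives the two quotients that define $\obs_M$. Since $\alpha\mapsto\ev_\alpha$ identifies $\kin_M$ with $\fc^k(M)$ (non-degeneracy of the pairing $(\cdot,\cdot)$), given a causal embedding $f\colon M\to N$ I would set $\PSV(f)\ev_{[\omega]}=\ev_{[f_\ast\omega]}$ and verify in turn: well-definedness (i.e.\ $f_\ast$ sends $\inv_M$ to $\inv_N$ and $\van_M$ to $\van_N$), compatibility with the presymplectic forms, and the functor axioms.

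The preliminary observation I would record first is the adjunction $(f_\ast\mu,\nu)_N=(\mu,f^\ast\nu)_M$ for $\mu\in\fc^k(M)$ and $\nu\in\f^k(N)$, where subscripts indicate the spacetime on which the pairing is taken; this is just a change of variables, using that $f$ is an orientation- and metric-preserving open embedding (so that $f^\ast\ast_N=\ast_M f^\ast$, cf.\ Remark\ \ref{remdddeBoxNaturality}) and that $f_\ast\mu$ is supported in the open set $f(M)$. Then $f_\ast(\inv_M)\subseteq\inv_N$ is immediate, because $f_\ast$ intertwines the codifferentials (Remark\ \ref{remdddeBoxNaturality}), so coclosedness is preserved. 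The heart of the argument is $f_\ast(\van_M)\subseteq\van_N$: for $\ev_\omega\in\van_M$ and any $B\in\sol_N$, naturality of $\de\dd$ gives $\de_M\dd_M f^\ast B=f^\ast(\de_N\dd_N B)=0$, hence $f^\ast B\in\sol_M$, and therefore $(f_\ast\omega,B)_N=(\omega,f^\ast B)_M=0$ by the adjunction and the assumption $\ev_\omega([\sol_M])=\{0\}$; as $B$ was arbitrary, $\ev_{f_\ast\omega}\in\van_N$. Together with linearity of $f_\ast$ this produces a well-defined linear map $\PSV(f)\colon\obs_M\to\obs_N$.

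For the presymplectic structure I would invoke Proposition\ \ref{prpGreenNat}, which gives $f^\ast G^\pm_N f_\ast=G^\pm_M$ and hence $f^\ast G_N f_\ast=G_M$, together with the same adjunction:
\begin{align*}
\tau_N\big(\PSV(f)\ev_{[\omega]},\PSV(f)\ev_{[\omega^\prime]}\big) & =(f_\ast\omega,G_N f_\ast\omega^\prime)_N=(\omega,f^\ast G_N f_\ast\omega^\prime)_M\\
& =(\omega,G_M\omega^\prime)_M=\tau_M\big(\ev_{[\omega]},\ev_{[\omega^\prime]}\big)\,,
\end{align*}
so $\PSV(f)$ is a presymplectic linear map. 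Finally $(\id_M)_\ast=\id_{\fc^k(M)}$ and $(g\circ f)_\ast=g_\ast\circ f_\ast$ for composable causal embeddings immediately yield $\PSV(\id_M)=\id_{\obs_M}$ and $\PSV(g\circ f)=\PSV(g)\circ\PSV(f)$, whence $\PSV\colon\GHyp\to\PSymV$ is a covariant functor.

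I expect the only step requiring real care to be the inclusion $f_\ast(\van_M)\subseteq\van_N$, since it is the one place where one genuinely uses that pull-back along a causal embedding preserves the solution space and is adjoint to the push-forward; everything else reduces to naturality of $\dd$, $\de$, $\ast$ and of the Green operators, which has already been established in the preliminaries.
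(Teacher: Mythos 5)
Your proposal is correct and follows essentially the same route as the paper's proof: the change-of-variables adjunction $(f_\ast\mu,\nu)_N=(\mu,f^\ast\nu)_M$, the inclusion $f^\ast\sol_N\subseteq\sol_M$ to get $f_\ast(\van_M)\subseteq\van_N$, the identity $f^\ast G_Nf_\ast=G_M$ from Proposition~\ref{prpGreenNat} for compatibility with $\tau$, and functoriality inherited from $\fc^k(\cdot)$. The only cosmetic difference is that you verify $f_\ast(\inv_M)\subseteq\inv_N$ by noting that $f_\ast$ preserves coclosedness, whereas the paper checks gauge invariance directly via $f^\ast(\gau_N)\subseteq\gau_M$; since $\inv_M$ consists exactly of the functionals labelled by coclosed forms, these are equivalent.
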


\begin{proof}
Given $\ev_\omega\in\inv_M$, we check that $\ev_{f_\ast\omega}$ lies in $\inv_N$. 
In fact, $f^\ast(\gau_N)\subseteq\gau_M$ since $f^\ast\dd_N=\dd_Mf^\ast$. 
Furthermore, by a change of variables in the integral, for each $\alpha\in\fc^k(M)$ and each $\beta\in\f^k(N)$, 
we have $(f_\ast\alpha,\beta)_N=(\alpha,f^\ast\beta)_M$, where the subscripts are used to distinguish 
the pairing over $M$ from the one over $N$. Therefore 
\begin{equation*}
\ev_{f_\ast\omega}(\gau_N)=(f_\ast\omega,\gau_N)_N=(\omega,f^\ast(\gau_N))_M
\subseteq(\omega,\gau_M)_M=\{0\}\,,
\end{equation*}
thus showing that $\ev_{f_\ast\omega}$ is invariant. 
Consider now $\ev_\theta\in\van_M$. Since also $f^\ast\de_N=\de_Mf^\ast$, 
we have the inclusion $f^\ast\sol_N\subseteq\sol_M$. By the same argument as above, we obtain 
\begin{equation*}
\ev_{f_\ast\theta}(\sol_N)=(f_\ast\theta,\sol_N)_N
=(\theta,f^\ast(\sol_N))_M\subseteq(\theta,\sol_M)_M=\{0\}\,,
\end{equation*}
thus showing that $\ev_{f_\ast\theta}$ vanishes on $\sol_N$. 
Summing up, we can conclude that $f_\ast:\fc^k(M)\to\fc^k(N)$ induces a linear map $\PSV(f):\obs_M\to\obs_N$. 

In order to check that $\PSV(f):\obs_M\to\obs_N$ preserves the presymplectic structures, 
we take $\ev_{[\omega]},\ev_{[\omega^\prime]}\in\obs_M$ and perform the following computation: 
\begin{equation*}
(f_\ast\omega,G_Nf_\ast\omega^\prime)_N=(\omega,f^\ast G_Nf_\ast\omega^\prime)_M
=(\omega,G_M\omega^\prime)_M\,,
\end{equation*}
where we used the identity $f^\ast G_Nf_\ast=G_M$, which holds true on $\fc^k(M)$ 
on account of Proposition\ \ref{prpGreenNat}. This shows that $\PSV(f)^\ast\tau_N=\tau_M$, 
whence $\PSV(f):(\obs_M,\tau_M)\to(\obs_N,\tau_N)$ is a presymplectic linear map. 

Since $\PSV(f)$ is defined via the push-forward $f_\ast:\fc^k(M)\to\fc^k(N)$ along $f$, 
$\PSV$ inherits its functorial properties from the covariant functor $\fc^k(\cdot)$. This concludes the proof. 
\end{proof}

We now analyze certain properties of the covariant functor $\PSV:\GHyp\to\PSymV$ to make contact 
with the formulation of field theory on curved spacetimes proposed in\ \cite[Definition\ 2.1]{BFV03}. 

\begin{theorem}\label{thmCausalityForms}\index{causality}
Let $m\geq2$ and $k\in\{1,\,\dots,\,m-1\}$. 
The covariant functor $\PSV:\GHyp\to\PSymV$ fulfils the {\em causality} property: 
Consider the $m$-dimensional globally hyperbolic spacetimes $M_1$, $M_2$ and $N$. Furthermore, 
suppose that $f:M_1\to N$ and $h:M_2\to N$ are causal embeddings with causally disjoint images in $N$, 
namely $f(M_1)\cap J_N(h(M_2))=\emptyset$. Then 
\begin{equation*}
\tau_N(\PSV(f)\obs_{M_1},\PSV(h)\obs_{M_2})=\{0\}\,.
\end{equation*}
\end{theorem}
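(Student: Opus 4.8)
The statement is a standard causality argument for a theory built from a Green-hyperbolic operator, so the plan is to reduce it to the support properties of the causal propagator $G_N$ together with the naturality property $f^\ast G_N f_\ast = G_M$ (Proposition~\ref{prpGreenNat}) and the compatibility of the pairings with push-forwards (the identity $(f_\ast\alpha,\beta)_N = (\alpha, f^\ast\beta)_M$ established in the proof of Theorem~\ref{thmFunctorForms}). First I would pick arbitrary $\ev_{[\omega_1]}\in\obs_{M_1}$ and $\ev_{[\omega_2]}\in\obs_{M_2}$ with representatives $\omega_1\in\fcde^k(M_1)$ and $\omega_2\in\fcde^k(M_2)$, and by definition of $\tau_N$ and $\PSV$ write
\begin{equation*}
\tau_N\big(\PSV(f)\ev_{[\omega_1]},\PSV(h)\ev_{[\omega_2]}\big)=(f_\ast\omega_1,\,G_N h_\ast\omega_2)_N\,.
\end{equation*}

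The core of the argument is then the observation that $\supp(G_N h_\ast\omega_2)\subseteq J_N\big(\supp(h_\ast\omega_2)\big)\subseteq J_N(h(M_2))$ by the support property of the retarded-minus-advanced Green operator, {\em cfr.}\ Subsection~\ref{subDynamics}. On the other hand $\supp(f_\ast\omega_1)\subseteq f(M_1)$. Since the causal embeddings have causally disjoint images, $f(M_1)\cap J_N(h(M_2))=\emptyset$, hence the supports of $f_\ast\omega_1$ and $G_N h_\ast\omega_2$ are disjoint. The pairing $(\cdot,\cdot)_N$ of eq.~\eqref{eqPairing2} is an integral of a wedge product, which vanishes when the two factors have disjoint supports; therefore $(f_\ast\omega_1, G_N h_\ast\omega_2)_N = 0$.

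Since $\ev_{[\omega_1]}$ and $\ev_{[\omega_2]}$ were arbitrary, this gives $\tau_N(\PSV(f)\obs_{M_1},\PSV(h)\obs_{M_2})=\{0\}$, as claimed. The only minor point to check is well-definedness, i.e.\ that the computation is independent of the choice of representatives $\omega_1,\omega_2$; but this is already guaranteed by the fact that $\tau_N$ descends to $\obs_N=\inv_N/\van_N$ (Proposition~\ref{prpPSymObsForms}) and that $\PSV(f),\PSV(h)$ are well-defined on observables (Theorem~\ref{thmFunctorForms}). I do not expect any genuine obstacle here: the argument is entirely a matter of combining the support estimate for $G_N$ with disjointness of the images, and no cohomological input is needed. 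If one wanted to avoid invoking descent to the quotient, one could equally work with the bilinear form $\tau_N$ on $\inv_N$ directly and note that the vanishing holds already at that level.
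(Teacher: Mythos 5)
Your proposal is correct and follows essentially the same route as the paper's proof: reduce $\tau_N\big(\PSV(f)\ev_{[\omega_1]},\PSV(h)\ev_{[\omega_2]}\big)$ to $(f_\ast\omega_1,G_Nh_\ast\omega_2)_N$, then use $\supp(G_Nh_\ast\omega_2)\subseteq J_N(h(M_2))$ together with $\supp(f_\ast\omega_1)\subseteq f(M_1)$ and the hypothesis of causal disjointness to conclude the pairing vanishes. The remarks on representative-independence and the (unneeded) invocation of Proposition~\ref{prpGreenNat} are harmless additions.
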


\begin{proof}
Let $\ev_{[\omega_1]}\in\obs_{M_1}$ and $\ev_{[\omega_2]}\in\obs_{M_2}$. By definition we have
\begin{equation*}
\tau_O(\PSV(f)\ev_{[\omega_1]},\PSV(h)\ev_{[\omega_2]})=(f_\ast\omega_1,G_Nh_\ast\omega_2)_N\,.
\end{equation*}
The inclusion $\supp(G_Nh_\ast\omega_2)\subseteq J_N(h(M_2))$ follows 
from the properties of the causal propagator $G_N$. 
Per hypothesis $\supp(f_\ast\omega_1)\subseteq f(M_1)$ does not meet $J_N(h(M_2))$. 
Therefore the right-hand-side of the last equation vanishes, thus concluding the proof. 
\end{proof}

To prove the next theorem, namely the time-slice axiom for the functor $\PSV$, we need a preliminary result. 

\begin{lemma}\label{lemTimeSliceForms2}
Let $M$ be an $m$-dimensional globally hyperbolic spacetime and consider $k\in\{1,\dots,m-1\}$. 
Furthermore, assume that a partition of unity $\{\chi_+,\chi_-\}$ 
with $\supp(\chi_\pm)$ past/future compact is given. Given $\omega\in\fcde^k(M)$, 
there exists $\xi\in\fcde^k(M)$ with support inside $\supp(\chi_+)\cap\supp(\chi_-)$  
such that $\omega-\xi\in\de\dd\fc^k(M)$. Furthermore, if $\xi^\prime$ satisfies the same condition, 
then $\xi^\prime-\xi\in\de\dd\fc^k(M)$ is supported inside $\supp(\chi_+)\cap\supp(\chi_-)$. 
\end{lemma}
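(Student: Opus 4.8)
The plan is to dualize the argument used in the proof of Lemma\ \ref{lemTimeSliceForms1}, trading timelike compact supports for compact ones; in fact the present case is a bit easier, since $\omega$ is already compactly supported and we may apply the retarded and advanced Green operators to it directly, without invoking Theorem\ \ref{thmSolModGaugeForms}. Since $\omega\in\fcde^k(M)$ is past and future compact, Proposition\ \ref{prpIntertwiners} together with $\de\omega=0$ gives $\de\dd G^\pm\omega=\Box G^\pm\omega-\dd\de G^\pm\omega=\omega-\dd G^\pm\de\omega=\omega$, whence $\de\dd G\omega=0$, that is, $G\omega\in\fsc^k(M)$ is an on-shell $k$-form with spacelike compact support. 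The idea is then to cut $G\omega$ with the partition of unity $\{\chi_+,\chi_-\}$ so as to confine the relevant form to the timelike compact region $\supp(\chi_+)\cap\supp(\chi_-)$.

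Concretely, I would set $\xi=\de\dd(\chi_+G\omega)\in\f^k(M)$ and $\rho=\chi_-G^+\omega+\chi_+G^-\omega$. First one checks $\rho\in\fc^k(M)$: indeed $\supp(\chi_-G^+\omega)\subseteq\supp(\chi_-)\cap J_M^+(\supp(\omega))$ is compact because $\supp(\chi_-)$ is future compact and $\supp(\omega)$ is compact, and symmetrically $\supp(\chi_+G^-\omega)\subseteq\supp(\chi_+)\cap J_M^-(\supp(\omega))$ is compact since $\supp(\chi_+)$ is past compact. Writing $\chi_-=1-\chi_+$ and using $\de\dd G^+\omega=\omega$ together with the definition of $G$, a direct computation yields
\begin{equation*}
\de\dd\rho=\de\dd G^+\omega-\de\dd(\chi_+G^+\omega)+\de\dd(\chi_+G^-\omega)=\omega-\de\dd(\chi_+G\omega)=\omega-\xi\,,
\end{equation*}
so that $\omega-\xi\in\de\dd\fc^k(M)$. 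In particular $\xi=\omega-\de\dd\rho\in\fc^k(M)$; it is coclosed since $\de\de=0$; and $\supp(\xi)\subseteq\supp(\chi_+G\omega)\subseteq\supp(\chi_+)$. On the other hand, since $\de\dd G\omega=0$ and $\chi_++\chi_-=1$ we also have $\xi=\de\dd(\chi_+G\omega)=-\de\dd(\chi_-G\omega)$, whence $\supp(\xi)\subseteq\supp(\chi_-)$. Therefore $\xi\in\fcde^k(M)$ is supported inside $\supp(\chi_+)\cap\supp(\chi_-)$, as required.

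For the uniqueness clause, if $\xi'\in\fcde^k(M)$ is supported inside $\supp(\chi_+)\cap\supp(\chi_-)$ and $\omega-\xi'\in\de\dd\fc^k(M)$, then $\xi'-\xi=(\omega-\xi)-(\omega-\xi')\in\de\dd\fc^k(M)$ because $\de\dd\fc^k(M)$ is a linear subspace, and $\supp(\xi'-\xi)\subseteq\supp(\xi')\cup\supp(\xi)\subseteq\supp(\chi_+)\cap\supp(\chi_-)$; this is precisely the assertion. The only point demanding genuine care is the support bookkeeping for $\rho$ — namely verifying that intersecting a past, respectively future, compact set with the causal past, respectively future, of the compact set $\supp(\omega)$ gives a compact set — and checking that the intertwining identities of Proposition\ \ref{prpIntertwiners} are invoked on admissible support classes; beyond this there is no real obstacle.
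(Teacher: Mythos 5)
Your proof is correct and follows essentially the same route as the paper's: the same choices $\xi=\de\dd(\chi_+G\omega)$ and $\rho=\chi_-G^+\omega+\chi_+G^-\omega$ (called $\eta$ there), the same identity $\de\dd G^+\omega=\omega$, and the same support bookkeeping, with only the cosmetic difference that you deduce compactness of $\supp(\xi)$ from $\xi=\omega-\de\dd\rho$ rather than via the direct inclusion $\supp(\xi)\subseteq J_M(\supp(\omega))\cap\supp(\chi_+)\cap\supp(\chi_-)$. No gaps.
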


\begin{proof}
First, recall that $\de\omega=0$ and $\Box G\omega=0$, therefore $\de\dd G\omega=0$. 
It follows that $\de\dd(\chi_+G\omega)=-\de\dd(\chi_-G\omega)$, 
whence $\xi=\de\dd(\chi_+G\omega)$ has compact support. In fact, we have the inclusion 
\begin{equation*}
\supp(\xi)\subseteq J_M(\supp(\omega))\cap\supp(\chi_+)\cap\supp(\chi_-)
\end{equation*}
showing that the right-hand-side is compact being the intersection of a spacelike compact, 
a past compact and a future compact set. 
The subsequent computation shows that $\omega-\xi$ has the desired properties: 
\begin{align*}
\omega-\xi=\de\dd G^+\omega-\de\dd(\chi_+G^+\omega)+\de\dd(\chi_+G^-\omega)=\de\dd\eta\,.
\end{align*}
Note that in the first equality we exploited the identity $\de\dd G^+\omega=\omega$ and the definition of $G$,
while in the second equality we used the partition of unity $\{\chi_+,\chi_-\}$ to introduce 
$\eta=\chi_-G^+\omega+\chi_+G^-\omega$. It turns out that $\eta$ has compact support in $N$ since 
\begin{equation*}
\supp(\eta)\subseteq (J_N^+(\supp(\omega))\cap\supp(\chi_-))\cup(J_N^-(\supp(\omega))\cap\supp(\chi_+))
\end{equation*}
is the union of two compact sets, the first being the intersection between a past spacelike compact and a future 
compact set, while the second being the intersection between a future spacelike compact and a past compact set. 

The second part of the statement follows immediately from the first. 
\end{proof}

\begin{theorem}\label{thmTimeSliceForms}\index{time-slice axiom}
Let $m\geq2$ and $k\in\{1,\,\dots,\,m-1\}$. The covariant functor $\PSV:\GHyp\to\PSymV$ 
fulfils the {\em time-slice axiom}: Consider the $m$-dimensional globally hyperbolic spacetimes $M$ and $N$. 
Furthermore, assume that $f:M\to N$ is a Cauchy morphism. Then $\PSV(f):\PSV(M)\to\PSV(N)$ is an isomorphism. 
\end{theorem}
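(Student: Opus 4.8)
The plan is to show that $\PSV(f)$ is bijective; since it is already a presymplectic linear map by Theorem~\ref{thmFunctorForms}, this yields the isomorphism. \textbf{Geometric setup.} Because $f$ is a Cauchy morphism, $f(M)$ is a causally compatible open subset of $N$ containing a Cauchy hypersurface $\Sigma$ of $N$. Invoking the standard fact that a Cauchy hypersurface of a globally hyperbolic spacetime admits a neighbourhood basis of \quotes{Cauchy slabs} (see\ e.g.\ \cite{BGP07,San13}), I would pick Cauchy hypersurfaces $\Sigma_\pm$ of $N$ with $\Sigma_\pm\subseteq I_N^\pm(\Sigma)$ and $J_N^+(\Sigma_-)\cap J_N^-(\Sigma_+)\subseteq f(M)$, and then a partition of unity $\{\chi_+,\chi_-\}$ on $N$ with $\chi_+\equiv1$ on $J_N^-(\Sigma_-)$, $\chi_+\equiv0$ on $J_N^+(\Sigma_+)$ and $\chi_-=1-\chi_+$. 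Then $\supp\chi_\pm$ is past/future compact and $K_N:=\supp\chi_+\cap\supp\chi_-$ is timelike compact and contained in $f(M)$. Pulling back along the causal embedding $f$, one checks that $\{f^\ast\chi_+,f^\ast\chi_-\}$ is a partition of unity on $M$ of the same type (past/future compact supports, by the correspondence $f(J_M^\pm(\cdot))=J_N^\pm(\cdot)\cap f(M)$), that $K_M:=\supp(f^\ast\chi_+)\cap\supp(f^\ast\chi_-)\subseteq f^{-1}(K_N)$ is timelike compact in $M$, and that $f$ restricts to a diffeomorphism of $f^{-1}(K_N)$ onto $K_N$, so $f_\ast$ and $f^\ast$ are mutually inverse on forms supported in these sets.

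\textbf{Surjectivity.} Given $\ev_{[\omega']}\in\obs_N$ with representative $\omega'\in\fcde^k(N)$, I would apply Lemma~\ref{lemTimeSliceForms2} on $N$ with the partition $\{\chi_+,\chi_-\}$ to obtain $\xi\in\fcde^k(N)$ with $\supp\xi\subseteq K_N\subseteq f(M)$ and $\omega'-\xi\in\de\dd\fc^k(N)$. Since $\de\dd\fc^k(N)\subseteq\van_N$ (formal self-adjointness of $\de\dd$, via Stokes), we have $\ev_{[\omega']}=\ev_{[\xi]}$ in $\obs_N$. Because $\supp\xi$ is compact and contained in $f(M)$, the pullback $\eta:=f^\ast\xi$ lies in $\fc^k(M)$ and is coclosed (as $f^\ast\de_N=\de_Mf^\ast$), so $\ev_\eta\in\inv_M$, and $f_\ast\eta=\xi$. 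Hence $\PSV(f)\ev_{[\eta]}=\ev_{[f_\ast\eta]}=\ev_{[\xi]}=\ev_{[\omega']}$, and $\PSV(f)$ is onto.

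\textbf{Injectivity.} Suppose $\ev_{[\omega]}\in\obs_M$ with $\PSV(f)\ev_{[\omega]}=0$, i.e.\ $\ev_{f_\ast\omega}$ vanishes on $\sol_N$. Since $\obs_M=\inv_M/\van_M$ and $\van_M$ consists precisely of the gauge invariant functionals vanishing on $[\sol_M]$, it suffices to show that $\ev_\omega$ vanishes on every class of $[\sol_M]$. Combining Theorem~\ref{thmSolModGaugeForms} with Lemma~\ref{lemTimeSliceForms1} applied on $M$ with the partition $\{f^\ast\chi_+,f^\ast\chi_-\}$, every such class is represented by $G_M\theta$ with $\theta\in\ftcde^k(M)$ and $\supp\theta\subseteq K_M\subseteq f^{-1}(K_N)$. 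Then the extension by zero $f_\ast\theta$ lies in $\ftcde^k(N)$ with support in $K_N$, so $A:=G_N(f_\ast\theta)$ is coclosed (Proposition~\ref{prpIntertwiners}) and satisfies $\Box_N A=\de_N\dd_N A=0$ (Theorem~\ref{thmExtCausalProp}), whence $A\in\sol_N$. By the naturality of the causal propagator (Proposition~\ref{prpGreenNat}, whose proof carries over verbatim to the extended Green operators of Theorem~\ref{thmExtCausalProp} on timelike compact supports), $f^\ast A=f^\ast G_N f_\ast\theta=G_M\theta$. Therefore, using the change of variables $(f_\ast\omega,B)_N=(\omega,f^\ast B)_M$ as in Theorem~\ref{thmFunctorForms},
\begin{equation*}
\ev_\omega(G_M\theta)=(\omega,f^\ast A)_M=(f_\ast\omega,A)_N=\ev_{f_\ast\omega}(A)=0\,,
\end{equation*}
so $\ev_\omega$ vanishes on $[\sol_M]$ and $\ev_{[\omega]}=0$.

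\textbf{Main obstacle.} The analytic core is already packaged in Lemmas~\ref{lemTimeSliceForms1} and~\ref{lemTimeSliceForms2}, together with Theorem~\ref{thmSolModGaugeForms} and the exactness/naturality results for the causal propagator. The delicate point of the present argument is the geometric bookkeeping of supports: verifying that the adapted partition of unity on $N$ pulls back to one of the required type on $M$ with the stated support inclusions, that the preimage of a timelike compact subset of $f(M)$ under the causal embedding is timelike compact in $M$, and that $f^\ast\,/\,f_\ast$ are genuinely inverse on forms supported in the slab. The extension of the naturality of $G$ (Proposition~\ref{prpGreenNat}) from compact to timelike compact supports is routine but must be recorded, since $\theta$ above is only timelike compactly supported.
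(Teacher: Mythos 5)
Your proposal is correct and rests on exactly the same ingredients as the paper's proof: the slab of Cauchy hypersurfaces inside $f(M)$ with its adapted partition of unity, Lemma~\ref{lemTimeSliceForms2} to relocate representatives of observables into the slab, and Lemma~\ref{lemTimeSliceForms1} (via Theorem~\ref{thmSolModGaugeForms}) to represent solutions by causal propagation of data supported there. The only difference is organizational — you verify injectivity and surjectivity separately and pull the partition of unity back to $M$, whereas the paper works on $O=f(M)\subseteq N$ and constructs an explicit two-sided inverse $K:\obs_N\to\obs_O$ by proving $\van_N\cap\inv_O=\van_O$; your flagged extension of Proposition~\ref{prpGreenNat} to timelike compact supports is indeed routine and is used implicitly by the paper as well.
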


\begin{proof}
First of all, note that each causal embedding $f:M\to N$ can be factored into 
an isomorphism $M\to f(M)$ and an inclusion $f(M)\to N$ according to the diagram below: 
\begin{equation*}
\xymatrix{
M\ar[r]^\simeq\ar[rd]_f & f(M)\ar[d]^\subseteq\\
& N
}
\end{equation*}
Note that the one above is a commutative diagram in the category $\GHyp$. 
In particular, $f(M)$ is a globally hyperbolic spacetime in its own right since $f$ is a causal embedding. 
Being a functor, $\PSV$ turns isomorphisms into isomorphisms, therefore it remains only to check that, 
under our assumptions, the inclusion $f(M)\to N$ gives rise to an isomorphism via $\PSV$. 

Since $f$ is a Cauchy morphism, $f(M)$ is an open neighborhood of a Cauchy hypersurface of $N$. 
In particular, we can find Cauchy hypersurfaces $\Sigma_+$ and $\Sigma_-$ for $N$ which lie inside $f(M)$ 
and such that $\Sigma_+\cap J^-_N(\Sigma_-)=\emptyset$. This enables us to choose a partition of unity 
$\{\chi_+,\chi_-\}$ subordinated to the open cover $\{I_N^+(\Sigma_-),I_N^-(\Sigma_+)\}$ of $N$. 

To simplify the notation, we set $O=f(M)\subseteq N$ and we keep the inclusion map implicit. 
Assume $\ev_{[\omega]}\in\obs_N$ is given and choose a representative $\ev_\omega\in\inv_N$. 
Applying Lemma\ \ref{lemTimeSliceForms2}, we find $\ev_\xi\in\inv_O$ 
such that $\ev_\omega-\ev_\xi\in\van_N$ (see the comment before\ \eqref{eqObsForms}) and  
moreover, if $\ev_{\xi^\prime}$ does the job too, then $\ev_{\xi^\prime}-\ev_\xi$ lies in $\van_N\cap\inv_O$. 
Furthermore, if we assume $\ev_\omega\in\van_N$, this implies $\ev_\xi\in\van_N\cap\inv_O$ as well. 
Summing up, if we were able to show that $\van_N\cap\inv_O=\van_O$, 
then our procedure would define a linear map $K:\obs_N\to\obs_O$. 

Let $\ev_\theta\in\van_N\cap\inv_O$ and consider $[A]\in\sol_O$. 
Lemma\ \ref{lemTimeSliceForms1} provides $\alpha\in\ftcde^k(O)$ 
supported in $J_N^+(\Sigma_-)\cap J_N^-(\Sigma_+)$ such that $[G\alpha]=[A]\in\sol_O$ 
(here $G$ denotes the causal propagator for $\Box$ on $O$). 
Being supported between the Cauchy hypersurfaces $\Sigma_+$ and $\Sigma_-$, 
$\alpha$ can be extended by zero to a $k$-form with timelike compact support on the whole $N$. 
In particular, we can consider $[G\alpha]\in[\sol_N]$ (here $G$ denotes the causal propagator for $\Box$ on $N$) 
and conclude that this restricts to $[A]\in[\sol_O]$. 
Since $\ev_\theta\in\inv_O$, it can be evaluated on $[A]\in\sol_O$. 
Since the outcome of this evaluation is expressed in term of an integral over $O$ 
of an $m$-form with compact support inside $O$, we can equivalently perform this integral over $N$. 
\begin{equation*}
\ev_\theta([A])=\ev_\theta([G\alpha])=(\theta,G\alpha)_O=(\theta,G\alpha)_N=\ev_\theta([G\alpha])\,,
\end{equation*}
where $[G\alpha]\in[\sol_N]$ appears after the last equality, 
while its restriction to $O$ appears after the first one. Since $\ev_\theta\in\van_N$, the identity above allows us 
to conclude that $\ev_\theta([A])=\ev_\theta([G\alpha])=0$, namely $\ev_\theta\in\van_O$. 
The inclusion $\van_O\subseteq\van_N\cap\inv_O$ is trivial 
because each element in $[\sol_N]$ restricts to one in $[\sol_O]$, 
therefore we conclude that the following linear map is well-defined by the procedure presented above: 
\begin{align*}
K:\obs_N\to\obs_O\,, && \ev_{[\omega]}\to\ev_{[\xi]}\,.
\end{align*}

To conclude the proof it remains only to check that $K$ is the inverse of the presymplectic linear map $L$ 
obtained via the functor $\PSV$ applied to the inclusion map of $O$ in $N$. 
Let $\ev_{[\xi]}\in\obs_O$. $L\ev_{[\xi]}\in\obs_N$ is obtained simply choosing a representative $\ev_\xi\in\inv_O$ 
and extending it by zero to the whole $N$. Since per construction $L\ev_{[\xi]}$ admits a representative 
whose restriction to $O$ coincides with $\ev_\xi\in\inv_O$, 
to represent $KL\ev_{[\xi]}\in\obs_O$ we can again consider $\ev_\xi\in\inv_O$, 
meaning that $KL\ev_{[\xi]}=\ev_{[\xi]}$. Conversely, consider $\ev_{[\omega]}\in\obs_N$. 
Once a representative $\ev_\omega\in\inv_N$ has been chosen, by the argument presented above 
it is possible to find $\ev_\xi\in\inv_O$ such that $\ev_\omega-\ev_\xi\in\van_N$. 
This defines $K\ev_{[\omega]}=\ev_{[\xi]}\in\obs_O$. As above, we can evaluate $LK\ev_{[\omega]}\in\obs_N$ 
choosing a representative of $K\ev_{[\omega]}\in\obs_O$ and extending it by zero to the whole $N$. 
$\ev_\xi\in\inv_O$ is indeed such a representative and moreover, by construction, 
its extension differs from $\ev_\omega$ by an element in $\van_N$, 
therefore $LK\ev_{[\omega]}=\ev_{[\omega]}$. This shows that $L$ is indeed an isomorphism in $\PSymV$. 
\end{proof}

\subsection{Failure of locality}\label{subLocFailForms}
Here we discuss the locality property, as stated in\ \cite[Definition\ 2.1]{BFV03}, in the case of Maxwell $k$-forms, 
namely the requirement that causal embeddings should induce injective morphisms via $\PSV:\GHyp\to\PSymV$. 
As anticipated, we show first that this would be the case 
if $\PSV:\GHyp\to\PSymV$ were valued in the full subcategory of symplectic vector spaces. 
However, we have already exhibited examples where this does not happen, 
{\em cfr.}\ Example\ \ref{exaNonTrivRadForms}. 
We prove also that the kernel of a presymplectic map is always a subspace of the null space of its domain. 
This is indeed the reason why we look for a non-trivial element of the null space 
in order to explicitly show in the subsequent example a causal embedding 
which induces a morphism in $\PSymV$ with a non-trivial kernel. 

\begin{proposition}\label{prpSymInj}
Let $(V,\sigma)$ and $(W,\tau)$ be presymplectic vector spaces 
and consider a presymplectic linear map $L:(V,\sigma)\to(W,\tau)$. 
Then $\ker(L)$ is a vector subspace of the null space of $\sigma$. 
In particular, if $\sigma$ is non-degenerate, namely $(V,\sigma)$ is actually symplectic, then $L$ is injective. 
\end{proposition}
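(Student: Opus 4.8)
The plan is to unwind the definitions directly, since the statement is essentially a tautology once one recalls what a presymplectic linear map is.

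First I would take an arbitrary element $v\in\ker(L)$ and show it lies in the null space of $\sigma$, i.e.\ that $\sigma(v,v')=0$ for every $v'\in V$. The key identity is the defining property of a presymplectic linear map, $\tau\circ(L\times L)=\sigma$, which gives $\sigma(v,v')=\tau(Lv,Lv')$ for all $v'\in V$. Since $v\in\ker(L)$ we have $Lv=0$, hence $\tau(Lv,Lv')=\tau(0,Lv')=0$ by bilinearity of $\tau$. Therefore $\sigma(v,v')=0$ for all $v'\in V$, which is precisely the condition for $v$ to belong to the null space of $\sigma$. That $\ker(L)$ is a vector subspace is automatic, being the kernel of a linear map, and the inclusion just established shows it is contained in the (vector) subspace given by the null space of $\sigma$.

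For the final assertion, suppose $\sigma$ is non-degenerate, meaning its null space is $\{0\}$. By what was just shown, $\ker(L)\subseteq\{0\}$, hence $\ker(L)=\{0\}$, so $L$ is injective.

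There is no real obstacle here: the whole argument is a one-line application of $\sigma=\tau\circ(L\times L)$ together with linearity of $L$ and bilinearity of $\tau$. The only thing to be mildly careful about is to state clearly that ``null space of $\sigma$'' means $\{v\in V:\sigma(v,v')=0\ \forall v'\in V\}$, so that the phrasing matches the use made of this proposition elsewhere (e.g.\ in the discussion following it, where a nonzero element of the null space of $\tau_M$ is exhibited to produce a causal embedding inducing a non-injective morphism).
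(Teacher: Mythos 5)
Your proof is correct and follows essentially the same route as the paper: both take $v\in\ker(L)$, use $\sigma(v,v')=\tau(Lv,Lv')=0$ to place $v$ in the null space of $\sigma$, and then deduce injectivity from non-degeneracy. Nothing to add.
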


\begin{proof}
The first part of the statement follows from the fact that $L$ preserves the presymplectic forms. 
Let $v\in\ker(L)$. Then, $\sigma(v,v^\prime)=\tau(Lv,Lv^\prime)=0$ for each $v^\prime\in V$. 
This means that $v$ lies in the null space of $\sigma$. 

The second part of the statement follows from the extra assumption of non-degeneracy. 
In fact, under this hypothesis the null space of $\sigma$ is trivial, 
therefore its subspace $\ker(L)$ must be trivial as well. 
\end{proof}

In the example below we show that for Maxwell $k$-forms there are causal embeddings 
giving rise to presymplectic linear maps with a non-trivial kernel. 

\begin{example}[Non-injective presymplectic maps from causal embeddings]\label{exaNonInjForms}
This example is similar to the one in\ \cite[Example 6.9]{BSS14}. Let $m\geq3$ and consider $k\in\{1,\dots,m-2\}$. 
As a starting point, consider an $m$-dimensional globally hyperbolic spacetime $N$ 
with Cauchy hypersurface $\Sigma^\prime=\bbR^{k+1}\times\bbT^{m-k-2}$. 
Exploiting\ Theorem\ \ref{thmGlobHyp}, $N$ is foliated as $\bbR\times\Sigma^\prime$, 
therefore we can introduce the projection $t:N\to\bbR$ on the first factor such that 
$\Sigma^\prime$ is the locus $t=0$ in $N$, $\Sigma^\prime=\{0\}\times\bbR^{k+1}\times\bbT^{m-k-2}$. 
Consider $\Sigma=\{0\}\times\bbR^{k-1}\times(\bbR^2\setminus\{0\})\times\bbT^{m-k-2}
\subseteq\Sigma^\prime$ and define $M$ to be the Cauchy development of $\Sigma$ in $N$. 
$M$ turns out to be a globally hyperbolic spacetime causally embedded in $N$, see\ \cite[Lemma A.5.9]{BGP07}. 
We note that $\bbR^2\setminus\{0\}$ is diffeomorphic to $\bbR\times\bbT$. Since $\Sigma$ is per construction 
a Cauchy hypersurface for $M$, by\ Theorem \ref{thmGlobHyp} $M$ admits a foliation 
$\bbR\times\bbR^{k-1}\times(\bbR\times\bbT)\times\bbT^{m-k-2}=\bbR\times\bbR^k\times\bbT^{m-k-1}$. 
Therefore we can apply to $M$ the construction of\ Example \ref{exaNonTrivRadForms} 
to exhibit a non-trivial element $\ev_{[\omega]}\in\rad_M$. 
Under the causal embedding $M\subseteq N$, this observable is simply extended by zero. 
Since per construction $\omega\in\de\fcdd^{k+1}(M)$, after the extension $\omega\in\de\fcdd^{k+1}(N)$. 
Note that $\hcdd^{k+1}(N)\simeq\{0\}$, hence $\fcdd^{k+1}(N)=\dd\fc^k(N)$ 
and therefore $\ev_\omega\in\van_N$. This shows that the presymplectic linear map 
obtained applying the functor $\PSV$ to the causal embedding $M\subseteq N$ is not injective. 

For $m\geq2$ and $k=m-1$, we take an $m$-dimensional globally hyperbolic spacetime $N$ 
with Cauchy hypersurface $\Sigma^\prime=\bbR^{m-1}$. By the usual foliation, 
$\Sigma^\prime$ is seen as the $t=0$ hypersurface in $N$, namely $\Sigma^\prime=\{0\}\times\bbR^{m-1}$.  
We consider $\Sigma=\{0\}\times\bbR^{m-2}\times(\bbR^1\setminus\{0\})\subseteq\Sigma^\prime$ 
and we note that it is made of two disconnected components $\Sigma_1$ and $\Sigma_2$, 
therefore $M=M_1\sqcup M_2$, which is obtained as the Cauchy development of $\Sigma$ in $N$, 
is made of two disconnected components as well. 
Let us consider $\theta_i\in\fc^m(M_i)$ such that $\int_{M_i}\theta_i=(-1)^i$, $i\in\{1,2\}$. 
We define $\theta\in\fc^m(M)$ by setting $\theta=\theta_i$ on $M_i$, $i\in\{1,2\}$. 
Applying\ Theorem \ref{thmTCCohomology} to $M$, 
we deduce $\htcdd^m(M)\simeq\hdd^{m-1}(\Sigma)\simeq\{0\}$, 
$\Sigma$ being the disjoint union of two manifolds diffeomorphic to $\bbR^{m-1}$with $m\geq2$. 
This means that $\theta\in\fc^m(M)\cap\dd\ftc^{m-1}(M)$, 
therefore $\omega=\theta$ defines $\ev_{[\omega]}\in\rad_M$. 
Per construction, $[\theta]\in\hcdd^m(M)$ is non-trivial. In fact, 
$\hcdd^m(M)\simeq\hcdd^m(M_1)\oplus\hcdd^m(M_2)$, 
$M_1$ and $M_2$ being the two disconnected components of $M$, 
and neither $[\theta_1]\in\hcdd^m(M_1)$ nor $[\theta_2]\in\hcdd^m(M_2)$ are trivial, 
since their integrals do not vanish. By the argument given in the last part of\ Example\ \ref{exaNonTrivRadForms}, 
this fact implies $\ev_{[\omega]}\in\rad_M$ is non-trivial. 
Also in this case the inclusion $M\subseteq N$ is a causal embedding, see\ \cite[Lemma A.5.9]{BGP07}. 
Now we show that the presymplectic linear map obtained applying the functor $\PSV$ 
to the causal embedding $M\subseteq N$ is not injective. 
In fact, the extension by zero of $\theta$ to $N$ has vanishing integral, 
\begin{equation*}
\int_N\theta=\int_M\theta=\int_{M_1}\theta_1+\int_{M_2}\theta_2=-1+1=0\,,
\end{equation*}
meaning that $\la[\theta],[1]\ra=0$, 
where $[\theta]\in\hcdd^m(N)$ denotes the cohomology class of the extension by zero of $\theta$ 
and $[1]\in\hdd^0(N)$ is a generator of $\hdd^0(N)\simeq\bbR$, $N$ being connected. 
Since the pairing $\la\cdot,\cdot\ra:\hcdd^m(N)\times\hdd^0(M)\to\bbR$ is non-degenerate 
({\em cfr.}\ Theorem \ref{thmPoincareDuality}, keeping in mind that $N$ is of finite type), 
it follows that $[\theta]\in\hcdd^m(N)$ is trivial. Therefore, mapping $\ev_{[\omega]}\in\rad_M$ to $N$ 
gives an element in $\van_N$, thus showing that the presymplectic linear map obtained 
applying the functor $\PSV$ to the causal embedding $M\subseteq N$ is not injective. 
\end{example}

\begin{theorem}\label{thmNonLocForms}
Let $m\geq2$ and $k\in\{1,\dots,m-1\}$. The covariant functor $\PSV:\GHyp\to\PSymV$ 
violates the locality property according to\ \cite[Definition\ 2.1]{BFV03}, namely 
there exists a causal embedding $f$ such that $\PSV(f)$ has non-trivial kernel. 
\end{theorem}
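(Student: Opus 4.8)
The statement follows immediately from the work already done in Example~\ref{exaNonInjForms}, so the plan is simply to package that example into the general statement. First I would recall that, by Theorem~\ref{thmFunctorForms}, the functor $\PSV:\GHyp\to\PSymV$ is well-defined and, for each causal embedding $f$, the map $\PSV(f)$ is the presymplectic linear map induced by the push-forward $f_\ast$ at the level of observables. Locality in the sense of\ \cite[Definition\ 2.1]{BFV03} is precisely the requirement that $\PSV(f)$ be injective for every causal embedding $f$; hence to violate it, it suffices to produce a single causal embedding $f$ with $\ker\PSV(f)\neq\{0\}$.

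\textbf{Key steps.} Second, I would split into the two relevant cases according to the degree. For $m\geq3$ and $k\in\{1,\dots,m-2\}$, take the causal embedding $M\subseteq N$ constructed in\ Example\ \ref{exaNonInjForms}, where $N$ has Cauchy hypersurface $\bbR^{k+1}\times\bbT^{m-k-2}$ and $M$ is the Cauchy development of $\bbR^{k-1}\times(\bbR^2\setminus\{0\})\times\bbT^{m-k-2}$; there one exhibits $\ev_{[\omega]}\in\rad_M$ non-trivial with $\omega\in\de\fcdd^{k+1}(M)$, and after extension by zero $\omega\in\de\fcdd^{k+1}(N)=\de\dd\fc^k(N)$ because $\hcdd^{k+1}(N)\simeq\{0\}$, so $\ev_\omega\in\van_N$ and hence $\PSV(f)\ev_{[\omega]}=0$ while $\ev_{[\omega]}\neq0$ in $\obs_M$. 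For $m\geq2$ and $k=m-1$, I would instead use the disconnected construction of\ Example\ \ref{exaNonInjForms}: $M=M_1\sqcup M_2$ embedded in a connected $N$ with $\theta=\theta_i$ on $M_i$, $\int_{M_i}\theta_i=(-1)^i$; then $\ev_{[\omega]}\in\rad_M$ with $\omega=\theta$ is non-trivial (since $[\theta_i]\in\hcdd^m(M_i)$ are non-trivial), whereas the extension of $\theta$ to $N$ has $\int_N\theta=0$, so by Poincar\'e duality (Theorem\ \ref{thmPoincareDuality}, $N$ being of finite type) $[\theta]=0\in\hcdd^m(N)$, whence $\ev_\theta\in\van_N$ and again $\PSV(f)$ has non-trivial kernel.

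\textbf{Remaining point and main obstacle.} These two cases cover all $k\in\{1,\dots,m-1\}$ for $m\geq2$ (noting that for $m=2$ only $k=1=m-1$ occurs, handled by the second case, and for $m\geq3$ the ranges $\{1,\dots,m-2\}$ and $\{m-1\}$ together exhaust $\{1,\dots,m-1\}$). In each case, to conclude that $\ker\PSV(f)\neq\{0\}$ one must know that $\ev_{[\omega]}$ is genuinely non-zero in $\obs_M$, which is exactly the content of $\ev_{[\omega]}\in\rad_M$ being non-trivial as established in the examples via Proposition\ \ref{prpVanForms} and the non-triviality of the relevant compactly supported de Rham class. There is no real obstacle here: all the analytic and cohomological input has been assembled in the preceding results, and the proof is essentially a citation of Example\ \ref{exaNonInjForms} together with the identification of the locality axiom with injectivity of $\PSV(f)$. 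The only mild care needed is the bookkeeping of degree ranges and the observation that a causal embedding whose induced map kills a non-trivial observable is precisely a counterexample to locality.
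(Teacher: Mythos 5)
Your proposal is correct and follows exactly the same route as the paper: the paper's proof of Theorem~\ref{thmNonLocForms} is a direct citation of Example~\ref{exaNonInjForms}, using its first part for $m\geq3$, $k\in\{1,\dots,m-2\}$ and its second part for $m\geq2$, $k=m-1$. Your additional recap of the cohomological details from the example and the degree-range bookkeeping is accurate but adds nothing beyond what the example already establishes.
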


\begin{proof}
Example\ \ref{exaNonInjForms} exhibits a causal embedding such that the presymplectic map induced 
by $\PSV$ has non-trivial kernel (consider the first part of the example for $m\geq3$, $k\in\{1,\dots,m-2\}$ 
and the second for $m\geq2$, $k=m-1$). 
\end{proof}

\begin{remark}\label{remLocViolationForms}
The last theorem exhibits the violation of locality of the functor $\PSV$. 
In particular, it shows that, for $m\geq3$, $k\in\{1,\dots,m-2\}$, the violation of locality occurs 
even in the full subcategory of $\GHyp$ whose objects are connected globally hyperbolic spacetimes. 
Note that this includes the most prominent physical situation, namely the vector potential of free electromagnetism 
over four-dimensional globally hyperbolic spacetimes, that is $m=4$ and $k=1$. 
\end{remark}

Example\ \ref{exaNonInjForms} shows a quite explicit violation of the locality property. 
The idea used to produce such violation is the fact that causal embeddings do not always preserve 
compactly supported cohomology groups. In fact, we are going to show that, for $k\in\{1,\dots,m-1\}$, 
the failure of injectivity of $\PSV(f):\PSV(M)\to\PSV(N)$, $f:M\to N$ being a causal embedding 
between $m$-dimensional globally hyperbolic spacetimes, can be traced back to the failure of 
injectivity of the push-forward $\hcdd^{k+1}(f):\hcdd^{k+1}(M)\to\hcdd^{k+1}(N)$ 
for the compactly supported de Rham $(k+1)$-th cohomology group. 

The next lemma shows that, for $k\in\{1,\dots,m-1\}$, the cohomology functor $\hcdd^{k+1}(\cdot)$ 
is naturally included in the covariant functor $\PSV$. 

\begin{lemma}\label{lemLocCohoForms}
Let $m\geq2$ and $k\in\{1,\dots,m-1\}$. On an $m$-dimensional globally hyperbolic spacetime $M$, 
consider the presymplectic linear map 
\begin{align*}
\Delta:\hcdd^{k+1}(M)\to\PSV(M)\,, && [\theta]\to[\de\theta]\,,
\end{align*}
where $\hcdd^{k+1}(M)$ is regarded as a presymplectic space endowed with the trivial presymplectic structure. 
$\Delta:\hcdd^{k+1}(\cdot)\to\PSV$ defines a natural transformation 
between covariant functors taking values in the category $\PSymV$ of presymplectic vector spaces, 
which is injective when restricted to the full subcategory $\GHypF$ 
of $m$-dimensional globally hyperbolic spacetimes of finite type. 
\end{lemma}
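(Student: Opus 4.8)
The plan is to verify, step by step, that the assignment $[\theta]\mapsto[\de\theta]$ is (i) well-defined as a linear map $\hcdd^{k+1}(M)\to\obs_M$, (ii) compatible with the trivial presymplectic structure on the source, hence a presymplectic linear map, (iii) natural in $M$ with respect to causal embeddings, and (iv) injective once we restrict to $\GHypF$. First I would check well-definedness: if $\theta\in\fcdd^{k+1}(M)$, then $\de\theta\in\fc^k(M)$ is coclosed since $\de\de=0$, so $\ev_{\de\theta}\in\inv_M$; and if $\theta=\dd\zeta$ for $\zeta\in\fc^k(M)$, then $\de\theta=\de\dd\zeta$, and by Proposition\ \ref{prpVanForms} (or directly by formal self-adjointness of $\de\dd$ together with Stokes' theorem, which does not even need finite type for this direction) $\ev_{\de\dd\zeta}\in\van_M$, so $[\de\theta]=0$ in $\obs_M=\inv_M/\van_M$. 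Linearity is immediate. For (ii): the source carries the zero presymplectic form, so I only need $\tau_M([\de\theta],[\de\theta'])=0$ for all $\theta,\theta'\in\fcdd^{k+1}(M)$; this follows because $G(\de\theta')$ is exact in the appropriate sense — indeed, by the intertwining relations of Proposition\ \ref{prpIntertwiners}, $\de\dd G\de\theta'=\Box G\de\theta'-\dd G\de\de\theta'=\Box G\de\theta'=0$ since $\de\theta'$ has compact support and $\dd\theta'=0$ forces $G\de\theta'$ to satisfy the equation of motion, so $G\de\theta'\in\sol_M$, whence $\tau_M([\de\theta],[\de\theta'])=(\de\theta,G\de\theta')=\ev_{\de\theta}(G\de\theta')=0$ because $\ev_{\de\theta}\in\inv_M$ vanishes on $\sol_M$ modulo $\gau_M$ — more precisely one uses that $[\de\theta']\in\obs_M$ and $\tau_M$ descends to $\obs_M$, so it is enough that $G\de\theta'$ is on-shell, which it is.

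Next, for naturality (iii): given a causal embedding $f:M\to N$, I need the square relating $\hcdd^{k+1}(f)$ (the push-forward $f_\ast$ on compactly supported cohomology) and $\PSV(f)$ to commute, i.e. $\PSV(f)\circ\Delta_M=\Delta_N\circ\hcdd^{k+1}(f)$. Unwinding the definitions, $\PSV(f)[\de\theta]=[f_\ast(\de\theta)]$ and $\Delta_N[f_\ast\theta]=[\de(f_\ast\theta)]$, so the claim reduces to $f_\ast\de_M=\de_N f_\ast$ on $\fc^{k+1}(M)$, which is exactly the naturality of the codifferential recorded in Remark\ \ref{remdddeBoxNaturality} (the push-forward along a causal embedding intertwines the codifferentials because $f$ is an isometry preserving orientations). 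Since $\Delta$ is built out of the natural transformation $\de:\fc^{k+1}(\cdot)\to\fc^k(\cdot)$ postcomposed with the quotient map $\fc^k(\cdot)\supseteq\inv_{(\cdot)}\to\obs_{(\cdot)}$, functoriality of $\hcdd^{k+1}(\cdot)$ and $\PSV$ together with this intertwining give the commuting square.

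Finally, injectivity on $\GHypF$ (iv) — this is the step I expect to be the main obstacle, since it is where Poincar\'e duality must be invoked. Suppose $M$ is of finite type and $[\theta]\in\hcdd^{k+1}(M)$ is such that $\Delta_M[\theta]=0$ in $\obs_M$, i.e. $\ev_{\de\theta}\in\van_M$. By Proposition\ \ref{prpVanForms}, since $M$ is of finite type, there exists $\rho\in\fc^k(M)$ with $\de\dd\rho=\de\theta$, hence $\de(\theta-\dd\rho)=0$: the compactly supported $(k+1)$-form $\theta-\dd\rho$ is coclosed. It is also closed (apply $\dd$ to $\dd\theta=0$ and $\dd\dd\rho=0$), so $\theta-\dd\rho$ is harmonic in the sense of being both closed and coclosed with compact support; I must conclude $[\theta-\dd\rho]=[\theta]$ vanishes in $\hcdd^{k+1}(M)$. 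The clean way is duality: by Theorem\ \ref{thmPoincareDuality} the pairing $\la\cdot,\cdot\ra:\hcdd^{k+1}(M)\times\hdd^{m-k-1}(M)\to\bbR$ is non-degenerate for $M$ of finite type, so it suffices to show $\la[\theta],[\zeta]\ra=0$ for every closed $\zeta\in\f^{m-k-1}(M)$. For this I would pass to the $\de$-picture: $\ast\theta$ is, up to sign, $\de$-closed-plus-$\de$-exact considerations aside, a closed form (since $\theta$ is coclosed, $\ast\theta$ is closed) with the property that it differs from an exact form by $\ast\dd\rho=\pm\de\ast\rho$; and then $\la[\theta],[\zeta]\ra=\int_M\theta\wedge\zeta=\int_M(\theta-\dd\rho)\wedge\zeta=\int_M\dd((\theta-\dd\rho\text{-primitive})\wedge\zeta)$ — here I must be careful, since $\theta-\dd\rho$ need not itself be exact, only coclosed, so instead I use that $\theta-\dd\rho$ coclosed means $\ast(\theta-\dd\rho)$ is closed, and by Poincar\'e duality in the complementary degree applied to $\ast(\theta-\dd\rho)$ together with the fact that it represents the same class as $\ast\theta$, one gets that $[\theta-\dd\rho]$ pairs to zero against all of $\hdd^{m-k-1}(M)$ precisely when $\ast(\theta-\dd\rho)$ is $\dd$-exact; establishing that from coclosedness plus compact support is exactly the content of Remark\ \ref{remTrivCohom}-type arguments or, more simply, of the Hodge-theoretic input already packaged in Proposition\ \ref{prpVanForms}. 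In short, the cleanest route is: $\ev_{\de\theta}\in\van_M$ $\Rightarrow$ (by Proposition\ \ref{prpVanForms}) $\de\theta=\de\dd\rho$ $\Rightarrow$ $\theta-\dd\rho$ is both closed and coclosed with compact support $\Rightarrow$ (by non-degeneracy of $(\cdot,\cdot)_\de$ or equivalently a Hodge argument on a finite-type manifold) $\theta-\dd\rho$ is $\dd$-exact among compactly supported forms $\Rightarrow$ $[\theta]=0$. I would spell out this last implication using Theorem\ \ref{thmPoincareDuality} and the isomorphism $\ast:\hcdd^{k+1}(M)\simeq\hcde^{k+1}(M)$ to reduce ``closed and coclosed with compact support implies exact'' to the injectivity of $\hcdd^{k+1}(M)\to(\hdd^{m-k-1}(M))^\ast$, which holds exactly because $M$ is of finite type — and this finite-type hypothesis is the reason the injectivity statement is restricted to $\GHypF$.
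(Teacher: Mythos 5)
Your steps (i) and (iii) are fine and match the paper, but the injectivity step (iv) has a genuine gap. After invoking Proposition~\ref{prpVanForms} you correctly reduce to: $\omega=\theta-\dd\rho\in\fc^{k+1}(M)$ is closed and coclosed, and you must conclude that it is trivial in $\hcdd^{k+1}(M)$. The route you sketch — pairing against $\hdd^{m-k-1}(M)$ and appealing to Poincar\'e duality or ``Hodge-theoretic input'' — cannot work as stated, and the paragraph where you try to carry it out is circular (you end up asserting that $\ast(\theta-\dd\rho)$ is exact, which is precisely what has to be proved). Indeed, ``closed and coclosed with compact support $\Rightarrow$ exact'' is \emph{not} a duality or Hodge-theoretic fact: on a compact Riemannian manifold the closed-and-coclosed forms are exactly the harmonic representatives of (generally non-trivial) cohomology classes. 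The implication is special to globally hyperbolic Lorentzian manifolds and comes from hyperbolicity: $\omega$ closed and coclosed gives $\Box\omega=0$, and the exactness of the sequence in Theorem~\ref{thmCausalProp} (no non-zero compactly supported solutions of $\Box$) forces $\omega=0$, i.e. $\theta=\dd\rho$ on the nose. This is what the paper does (phrased as $\Box\dd\eta=\Box\theta$ with both sides compactly supported), and once you use it, Poincar\'e duality is not needed for this step at all — finite type enters only through Proposition~\ref{prpVanForms}. Your reference to Remark~\ref{remTrivCohom} is also off target: that remark concerns past/future (spacelike) compact supports, not compact ones.

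A smaller but real flaw sits in step (ii): you justify $\tau_M([\de\theta],[\de\theta'])=0$ by saying that $G\de\theta'$ is on-shell and that $\ev_{\de\theta}\in\inv_M$ ``vanishes on $\sol_M$ modulo $\gau_M$''. Gauge-invariant functionals do \emph{not} vanish on solutions (only elements of $\van_M$ do), so on-shellness of $G\de\theta'$ is not enough. The correct one-line argument uses that $\theta'$ is closed with compact support: $\dd G\de\theta'=G\dd\de\theta'=G\Box\theta'=0$ by Proposition~\ref{prpIntertwiners}, whence $(\de\theta,G\de\theta')=(\theta,\dd G\de\theta')=0$ by Stokes' theorem, cf.\ \eqref{eqAdjoint}.
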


\begin{proof}
First of all, note that $\Delta:\hcdd^{k+1}(M)\to\PSV(M)$ is well-defined. 
This follows from the fact that $\de:\fc^{k+1}(M)\to\fc^{k}(M)$ maps $\dd\fc^k(M)$ to $\de\dd\fc^k(M)$. 
Furthermore, $\de:\hcdd^{k+1}(M)\to\PSV(M)$ preserves the presymplectic structures. 
In fact, for each $\theta,\eta\in\fcdd^{k+1}(M)$, $\tau_M([\de\theta],[\de\eta])=0$. 
The fact that $\de:\fc^{k+1}(\cdot)\to\fc^{k}(\cdot)$ is a natural transformation entails that 
$\Delta:\hcdd^{k+1}(\cdot)\to\PSV$ is such as well. It remains only to check that, 
for each $m$-dimensional globally hyperbolic spacetime $M$, $\Delta:\hcdd^{k+1}(M)\to\PSV(M)$ is injective. 
Consider $\theta\in\fcdd^{k+1}(M)$ and, recalling Proposition\ \ref{prpVanForms}, 
suppose there exists $\eta\in\fc^k(M)$ such that $\de\dd\eta=\de\theta$. 
Applying $\dd$ on both sides and keeping in mind that $\dd\theta=0$, we get $\Box\dd\eta=\Box\theta$. 
Since both $\eta$ and $\theta$ have compact supports, we conclude that $\theta=\dd\eta$. 
Therefore $\Delta:\hcdd^{k+1}(M)\to\PSV(M)$ is injective as claimed. 
\end{proof}

\begin{theorem}\label{thmLocCohoForms}
Let $m\geq2$ and $k\in\{1,\dots,m-1\}$. Consider a causal embedding $f:M\to N$ between 
$m$-dimensional globally hyperbolic spacetimes of finite type. Then $\PSV(f):\PSV(M)\to\PSV(N)$ is injective 
if and only if $\hcdd^{k+1}(f):\hcdd^{k+1}(M)\to\hcdd^{k+1}(N)$ is injective too. 
\end{theorem}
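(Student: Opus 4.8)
The plan is to establish the two implications separately. The forward direction, ``$\PSV(f)$ injective $\Rightarrow\hcdd^{k+1}(f)$ injective'', is immediate from Lemma~\ref{lemLocCohoForms}: since $M$ and $N$ are of finite type, the components $\Delta_M,\Delta_N$ of the natural transformation $\Delta:\hcdd^{k+1}(\cdot)\to\PSV$ are injective, and naturality gives the commuting relation $\PSV(f)\circ\Delta_M=\Delta_N\circ\hcdd^{k+1}(f)$. Hence a nonzero class in $\ker\hcdd^{k+1}(f)$ produces, via $\Delta_M$, a nonzero element of $\ker\PSV(f)$; I would phrase this as a contrapositive.

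For the harder direction, ``$\hcdd^{k+1}(f)$ injective $\Rightarrow\PSV(f)$ injective'', the key observation is that $\PSV(f)$ is a presymplectic linear map (Theorem~\ref{thmFunctorForms}), so by Proposition~\ref{prpSymInj} its kernel is contained in the null space $\rad_M$ of $\tau_M$. Thus, given $\ev_{[\omega]}\in\ker\PSV(f)$, I may invoke the explicit description of $\rad_M$ in Proposition~\ref{prpRadForms} to choose the representative $\omega$ of the form $\omega=\de\psi$ with $\psi\in\fc^{k+1}(M)\cap\dd\ftc^k(M)$; in particular $\psi\in\fcdd^{k+1}(M)$ is closed and compactly supported and therefore defines a class $[\psi]\in\hcdd^{k+1}(M)$, while $\psi=\dd\eta$ for some $\eta\in\ftc^k(M)$.

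Next I would translate the hypothesis $\PSV(f)\ev_{[\omega]}=0$, i.e.\ $\ev_{f_\ast\omega}\in\van_N$, through Proposition~\ref{prpVanForms} (here finite type of $N$ is used): there exists $\rho\in\fc^k(N)$ with $\de\dd\rho=f_\ast\omega$. Since $f_\ast$ intertwines $\dd$ and $\de$ (Remark~\ref{remdddeBoxNaturality}) and $\psi$ is closed, the compactly supported form $f_\ast\psi-\dd\rho$ satisfies $\dd(f_\ast\psi-\dd\rho)=0$ and $\de(f_\ast\psi-\dd\rho)=f_\ast\de\psi-\de\dd\rho=f_\ast\omega-f_\ast\omega=0$, hence $\Box(f_\ast\psi-\dd\rho)=0$; injectivity of $\Box$ on $\fc^{k+1}(N)$, which is part of the exact sequence in Theorem~\ref{thmCausalProp}, then forces $f_\ast\psi=\dd\rho$, that is $\hcdd^{k+1}(f)[\psi]=[f_\ast\psi]=0$. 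By the injectivity hypothesis on $\hcdd^{k+1}(f)$ one obtains $[\psi]=0$, so $\psi=\dd\sigma$ with $\sigma\in\fc^k(M)$, whence $\omega=\de\dd\sigma$ and therefore $\ev_\omega\in\van_M$, i.e.\ $\ev_{[\omega]}=0$ in $\obs_M$. This shows $\ker\PSV(f)=\{0\}$.

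The only genuinely delicate point is the very first step of the second implication: realising that one should not work with an arbitrary representative of $\ev_{[\omega]}$ but exploit that the kernel of the presymplectic map already lies in $\rad_M$, so that Proposition~\ref{prpRadForms} supplies a representative $\de\psi$ with $\psi$ closed, compactly supported, and admitting a timelike compact primitive. Everything after that is a routine interplay of the naturality of $\dd$ and $\de$, the characterisation of on-shell vanishing functionals on a spacetime of finite type, and the injectivity of $\Box$ on compactly supported forms.
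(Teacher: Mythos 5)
Your proposal is correct and follows essentially the same route as the paper: the forward implication is the diagram chase through Lemma~\ref{lemLocCohoForms}, and the converse rests on exactly the same ingredients, namely $\ker\PSV(f)\subseteq\rad_M$ from Proposition~\ref{prpSymInj} together with the explicit description of $\rad_M$ in Proposition~\ref{prpRadForms}. The only cosmetic difference is that where the paper concludes by combining the commutative diagram with the injectivity of $\Delta$ on $N$, you re-derive that injectivity by hand (Proposition~\ref{prpVanForms} on $N$ plus injectivity of $\Box$ on $\fc^{k+1}(N)$), which is precisely the computation inside the proof of Lemma~\ref{lemLocCohoForms}.
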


\begin{proof}
According to\ Lemma\ \ref{lemLocCohoForms}, a causal embedding $f:M\to N$ induces the commutative diagram 
\begin{equation*}
\xymatrix@C=3em{
\hcdd^{k+1}(M)\ar[d]_{\Delta}\ar[r]^{\hcdd^{k+1}(f)} & \hcdd^{k+1}(N)\ar[d]^\Delta\\
\PSV(M)\ar[r]_{\PSV(f)} & \PSV(N)
}
\end{equation*}
where the vertical arrows are injective since $M$ and $N$ are of finite type. 
Clearly, if $\PSV(f)$ is injective, also $\hcdd^{k+1}(f)$ is such. 
For the converse, suppose that $\hcdd^{k+1}(f)$ is injective. 
Note that the null space $\rad_M$ of the presymplectic form $\tau_M$ is 
a subspace of $\Delta(\hcdd^{k+1}(M))$, {\em cfr.}\ Proposition\ \ref{prpRadForms}. 
From the diagram above we deduce that the only element shared by $\ker(\PSV(f))$ 
and $\Delta(\hcdd^{k+1}(M))$ is zero. However, according to\ Proposition\ \ref{prpSymInj}, 
$\ker(\PSV(f))$ is a subspace of $\rad_M$. We deduce that $\PSV(f)$ is injective. 
\end{proof}

One might wonder whether it is possible to recover locality by taking quotients in a suitable sense. 
We will show that this is not the case. 
Let us first define what we mean by a suitable quotient of a covariant functor taking values in $\PSymV$. 

\begin{definition}\label{defQSubfunctorV}
\index{subfunctor}\index{quotientable subfunctor}\index{subfunctor!quotientable}
Let $\mathsf{C}$ be a category and consider a covariant functor $\mathfrak{F}:\mathsf{C}\to\PSymV$. 
A covariant functor $\mathfrak{S}:\mathsf{C}\to\PSymV$ is a {\em subfunctor} of $\mathfrak{F}$ 
if the following conditions are fulfilled: 
\begin{enumerate}
\item For each object $c$ of the category $C$, 
$\mathfrak{S}(c)$ is a presymplectic vector subspace of $\mathfrak{F}(c)$; 
\item For each morphism $\gamma:c\to d$ in the category $C$, 
$\mathfrak{S}(\gamma):\mathfrak{S}(c)\to\mathfrak{S}(d)$ is the restriction to $\mathfrak{S}(c)$ 
of $\mathfrak{F}(\gamma):\mathfrak{F}(c)\to\mathfrak{F}(d)$. 
\end{enumerate}
A {\em quotientable subfunctor} $\mathfrak{Q}:\mathsf{C}\to\PSymV$ of the covariant functor 
$\mathfrak{F}:\mathsf{C}\to\PSymV$ is a subfunctor such that, for each object $c$ of the category $C$, 
$\mathfrak{Q}(c)$ is a presymplectic vector subspace of the null space of $\mathfrak{F}(c)$. 
\end{definition}

For $m\geq2$ and $k\in\{1,\dots,m-1\}$, Lemma\ \ref{lemLocCohoForms} shows that 
$\hcdd^{k+1}(\cdot):\GHyp\to\PSymV$ is a subfunctor of $\PSV:\GHyp\to\PSymV$. 
However, this is not quotientable since there exist $m$-dimensional globally hyperbolic spacetimes $M$ 
for which $\fc^{k+1}(M)\cap\dd\ftc^k(M)\neq\dd\fc^k(M)$, see\ Example\ \ref{exaNonTrivRadForms}. 

Once a quotientable subfunctor is given, it is possible to perform quotients at the level of functors. 
The next proposition makes this statement precise. 

\begin{proposition}\label{prpQuotientFunctor}\index{quotient functor}
Let $\mathsf{C}$ be a category and consider a covariant functor $\mathfrak{F}:\mathsf{C}\to\PSymV$. 
Furthermore, let $\mathfrak{Q}:\mathsf{C}\to\PSymV$ be a quotientable subfunctor of $\mathfrak{F}$. 
To each object $c$ in $\mathsf{C}$ assign the object 
$\mathfrak{F}/\mathfrak{Q}(c)=\mathfrak{F}(c)/\mathfrak{Q}(c)$ in $\PSymV$. 
Furthermore, to each morphism $\gamma:c\to d$ in $\mathsf{C}$ assign the morphism 
$\mathfrak{F}/\mathfrak{Q}(\gamma):\mathfrak{F}/\mathfrak{Q}(c)\to\mathfrak{F}/\mathfrak{Q}(d)$ 
in $\PSymV$ induced by $\mathfrak{F}(\gamma):\mathfrak{F}(c)\to\mathfrak{F}(d)$. 
Then $\mathfrak{F}/\mathfrak{Q}:\mathsf{C}\to\PSymV$ is a covariant functor, 
called {\em quotient functor} of $\mathfrak{F}$ by $\mathfrak{Q}$. 
\end{proposition}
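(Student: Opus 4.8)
The statement is a routine bookkeeping lemma: once we have a quotientable subfunctor $\mathfrak{Q}$ of $\mathfrak{F}$, we must check that the pointwise quotients assemble into a functor. The plan is to verify three things in order: (i) that $\mathfrak{F}/\mathfrak{Q}(c)$ is a well-defined object of $\PSymV$, i.e.\ the presymplectic form on $\mathfrak{F}(c)$ descends to the quotient by $\mathfrak{Q}(c)$; (ii) that for each morphism $\gamma$ the linear map $\mathfrak{F}(\gamma)$ descends to a well-defined presymplectic linear map on the quotients; (iii) that this assignment respects identities and composition.

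First I would observe that, since $\mathfrak{Q}(c)$ is by definition a presymplectic vector subspace of the null space of $\mathfrak{F}(c)$, the presymplectic form $\sigma_c$ of $\mathfrak{F}(c)$ satisfies $\sigma_c(q,v)=0$ for all $q\in\mathfrak{Q}(c)$ and $v\in\mathfrak{F}(c)$; hence $\sigma_c$ factors through $\mathfrak{F}(c)/\mathfrak{Q}(c)$, giving a well-defined (possibly degenerate) antisymmetric bilinear form on the quotient. Thus $\mathfrak{F}/\mathfrak{Q}(c)$ is an object of $\PSymV$. Next, for a morphism $\gamma:c\to d$ in $\mathsf{C}$, the definition of a subfunctor gives $\mathfrak{F}(\gamma)(\mathfrak{Q}(c))=\mathfrak{Q}(\gamma)(\mathfrak{Q}(c))\subseteq\mathfrak{Q}(d)$, so $\mathfrak{F}(\gamma)$ maps $\mathfrak{Q}(c)$ into $\mathfrak{Q}(d)$ and therefore descends to a linear map $\mathfrak{F}/\mathfrak{Q}(\gamma):\mathfrak{F}(c)/\mathfrak{Q}(c)\to\mathfrak{F}(d)/\mathfrak{Q}(d)$ on the quotient spaces. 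That this descended map preserves the presymplectic structures follows from the corresponding property of $\mathfrak{F}(\gamma)$ together with the surjectivity of the two quotient projections: given classes $[v],[w]$ in $\mathfrak{F}/\mathfrak{Q}(c)$, pick representatives $v,w$ and compute $\sigma_d(\mathfrak{F}(\gamma)v,\mathfrak{F}(\gamma)w)=\sigma_c(v,w)$, which is exactly the value of the descended forms on $[v],[w]$.

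Finally, functoriality is inherited from $\mathfrak{F}$: since $\mathfrak{F}(\id_c)=\id_{\mathfrak{F}(c)}$, the induced map on the quotient is the identity, and since $\mathfrak{F}(\gamma'\circ\gamma)=\mathfrak{F}(\gamma')\circ\mathfrak{F}(\gamma)$ on $\mathfrak{F}(c)$, the same identity holds after passing to the quotient, because the quotient projections intertwine the maps before and after descent. This concludes the verification. There is genuinely no hard part here — the only point that requires the hypothesis (rather than being pure abstract nonsense about quotients in an abelian-group-like setting) is step (i), where the quotientability condition is used precisely to ensure the presymplectic form survives the quotient; everything else is the standard argument that a functor descends along a compatible family of congruences.
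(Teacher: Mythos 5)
Your proposal is correct and follows essentially the same route as the paper's proof: quotientability (membership in the null space) guarantees the presymplectic form descends, the subfunctor property guarantees $\mathfrak{F}(\gamma)$ maps $\mathfrak{Q}(c)$ into $\mathfrak{Q}(d)$ so the morphism descends, and functoriality is inherited from $\mathfrak{F}$. Your version merely spells out the representative-chasing verifications (well-definedness on classes, preservation of the descended forms, identities and composition) that the paper leaves implicit.
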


\begin{proof}
By\ Definition\ \ref{defQSubfunctorV}, for each object $c$ in $\mathsf{C}$, 
$\mathfrak{Q}(c)$ is a presymplectic vector subspace of the null space of $\mathfrak{F}(c)$. 
Therefore the quotient of the vector space underlying $\mathfrak{F}(c)$ 
by the vector space underlying $\mathfrak{Q}(c)$ can be endowed with the presymplectic structure 
which is consistently induced from the one of $\mathfrak{F}(c)$. 
This defines $\mathfrak{F}/\mathfrak{Q}(c)$ as an object in $\PSymV$. 

Furthermore, $\mathfrak{Q}$ being a subfunctor of $\mathfrak{F}$, 
given a morphism $\gamma:c\to d$ in $\mathsf{C}$, we know that 
$\mathfrak{Q}(\gamma):\mathfrak{Q}(c)\to\mathfrak{Q}(d)$ is the restriction 
to $\mathfrak{Q}(c)$ of $\mathfrak{F}(\gamma):\mathfrak{F}(c)\to\mathfrak{F}(d)$. 
Furthermore, $\mathfrak{Q}(d)$ is a subspace of the null space of $\mathfrak{F}(d)$. 
This entails that $\mathfrak{F}(\gamma)$ descends to the quotients, thus providing a presymplectic map 
$\mathfrak{F}/\mathfrak{Q}(\gamma):\mathfrak{F}/\mathfrak{Q}(c)\to\mathfrak{F}/\mathfrak{Q}(d)$. 

$\mathfrak{F}/\mathfrak{Q}$ inherits all its functorial properties from $\mathfrak{F}$ 
and therefore it is a covariant functor as expected. 
\end{proof}

The forthcoming example is a very explicit construction involving two causal embeddings 
having the same globally hyperbolic spacetime $M$ as source. 
These embeddings are such that there is an element in the null space of $\PSV(M)$ 
which lies in the kernel of one of the induced presymplectic linear maps 
and which is mapped out of the null space of the target by the other induced presymplectic linear map. 
This provides the counterexample to show that there are no quotientable subfunctors of $\PSV$ 
which implement injectivity at the level of morphisms. 
Therefore recovering locality in the sense of\ \cite[Definition\ 2.1]{BFV03} is not possible. 

\begin{example}\label{exaNoQuotientForms}
Let $m\geq3$ and $k\in\{1,\dots,m-2\}$. 
Consider the $m$-dimensional globally hyperbolic spacetime $N$ defined as follows: 
\begin{enumerate}
\item The underlying manifold is $\bbR\times\bbT^{m-1}$ 
and its points are denoted by $m$-tuples $(t,\theta^1,\dots,\theta^{m-1})$; 
\item The metric is given by 
$g_N=-{\dd t}^2+\sum_{i=1}^{m-1}{\dd\theta^i}^2$; 
\item The orientation is specified by 
$\dd t\wedge\bigwedge_{i=1}^{m-1}\dd\theta^i$, 
while the time-orientation is specified by the vector field $\partial_t$. 
\end{enumerate}

Furthermore, consider the $m$-dimensional globally hyperbolic spacetime $O$ defined by the following data: 
\begin{enumerate}
\item The underlying manifold is $\bbR\times\bbR^{k+1}\times\bbT^{m-k-2}$ 
and its points are denoted by $m$-tuples $(t,y^1,\dots,y^{k+1},\eta^{k+2},\dots,\eta^{m-1})$; 
\item The metric is given by 
\begin{equation*}
g_O=-{\dd t}^2+\sum_{i=1}^{k-1}{\dd y^i}^2
+\alpha(r^2)\left({\dd y^k}^2+{\dd y^{k+1}}^2\right)+\beta(r^2){\dd r}^2
+\sum_{j=k+2}^{m-1}{\dd\eta^j}^2\,,
\end{equation*}
where $r=r(y^k,y^{k+1})=({y^k}^2+{y^{k+1}}^2)^{1/2}$ 
and $\alpha,\beta\in\c(\bbR)$ are functions taking values in the interval $[0,1]$ 
such that, for $\xi\le1$, $\alpha(\xi)=1$ and $\beta(\xi)=0$, 
while, for $\xi\ge4$, $\alpha(\xi)=\xi^{-1}$ and $\beta(\xi)=1-\xi^{-1}$. 
\item The orientation is specified by 
$\dd t\wedge\bigwedge_{i=1}^{k+1}\dd y^i\wedge\bigwedge_{j=k+2}^{m-1}\dd\eta^j$, 
while the time-orientation is specified by the vector field $\partial_t$. 
\end{enumerate}
Note that, in the region $r\geq2$ and using the coordinates specified above, 
$g_O$ has the same formal expression as $g_N$. 
This feature plays a central role in order to show that the $m$-dimensional globally spacetime $M$ 
defined below is causally embedded in both $N$ and $O$. 

Consider now the acausal hypersurface $\Sigma=\{0\}\times I^k\times\bbT^{m-k-1}$ of $N$, 
where $I\subset\bbT$ is an arbitrary open interval of $(0,2\pi)$. 
Define the $m$-dimensional globally hyperbolic spacetime $M$ to be the Cauchy development of $\Sigma$ in $N$ 
endowed with the induced metric $g_M=g_N\vert_M$, orientation and time-orientation. 
As a byproduct of this procedure, we obtain a causal embedding $f:M\to N$. 

We would like to causally embed $M$ in $O$ too. 
Exploiting the fact that $M\subseteq N$ as sets, we represent points of $M$ using the coordinates for $N$. 
Keeping this in mind, we introduce the following embedding: 
\begin{align*}
h:M\to O\,, && (t,\,\theta^1,\,\dots,\,\theta^{m-1})
\mapsto(t,\,\theta^1,\,\dots,\,\theta^{k-1},\,y^k,\,y^{k+1},\,\theta^{k+2},\,\dots,\,\theta^{m-1})\,,
\end{align*}
where $y^k=(\theta^k+2)\cos(\theta^{k+1})$ and $y^{k+1}=(\theta^k+2)\sin(\theta^{k+1})$. 
Per construction, the image of $h$ lies in the region of $O$ where $g_O$ has the same formal expression as $g_N$. 
Since the metric $g_M$ on $M$ is nothing but the restriction of the metric $g_N$ 
and one can check that $h$ preserves both the orientation and the time-orientation, 
$h$ turns out to be a causal embedding. 

The topology being the same, barring minor modifications, we can apply\ Example\ \ref{exaNonTrivRadForms} 
to $M$ to exhibit a non-trivial element $\ev_{[\omega]}\in\rad_M$. 
In fact, it is enough to take $a\in\cc(\bbR)$ to be supported inside the interval $I$, 
now interpreted as a subset of $\bbR$. Since $\hcdd^{k+1}(O)=\{0\}$, 
$\ev_{[\omega]}$ lies in the kernel of $\PSV(h)$, see\ Example \ref{exaNonInjForms} 
and\ Lemma\ \ref{lemLocCohoForms}. 

Note that $N$ has a compact Cauchy hypersurface, therefore, according to\ Proposition\ \ref{prpRadForms}, 
the null space $\rad_N$ of the presymplectic form on $N$ is trivial. 
This observation entails that either $\PSV(f)$ annihilates $\ev_{[\omega]}$ 
or $\PSV(f)\ev_{[\omega]}$ is an element of $\obs_N$ which lies outside of the null space $\rad_N$. 
In fact, the argument which shows that $\ev_{[\omega]}$ is non-trivial in $\obs_M$ 
implies that same is true for $\PSV(f)\ev_{[\omega]}$ in $\obs_N$. 
This is already enough to conclude that $\PSV(f)\ev_{[\omega]}$ lies outside of $\rad_N=\{0\}$. 

Let now $m\geq2$ and $k=m-1$. 
Consider the $m$-dimensional globally hyperbolic spacetime $N$ defined as follows: 
\begin{enumerate}
\item The underlying manifold is $\bbR\times\bbT^{m-2}\times(\bbT\sqcup\bbT)$, 
where $\sqcup$ denotes the disjoint union of manifolds. 
Points of this manifold are specified by $(m+1)$-tuples $(t,\theta^1,\dots,\theta^{m-2},j,\theta^{m-1})$, 
where $j\in\{0,1\}$ specifies to which disconnected component a given point belongs; 
\item On both components the metric is given by $g_N=-\dd t^2+\sum_{i=1}^{m-1}{\dd\theta^i}^2$; 
\item On both components the orientation is specified by $\dd t\wedge\bigwedge_{i=1}^{m-1}\dd\theta^i$, 
while the time-orientation is specified by the vector field $\partial_t$. 
\end{enumerate}

Furthermore, consider the $m$-dimensional Minkowski spacetime $O$: 
\begin{enumerate}
\item The underlying manifold is $\bbR\times\bbR^{m-1}$ 
and points of this manifold are specified by $m$-tuples $(t,x^1,\dots,x^{m-1})$; 
\item The metric is given by $g_O=-\dd t^2+\sum_{i=1}^{m-1}{\dd x^i}^2$; 
\item The orientation is specified by $\dd t\wedge\bigwedge_{i=1}^{m-1}\dd x^i$, 
while the time-orientation is specified by the vector field $\partial_t$. 
\end{enumerate}

Consider now the acausal hypersurface $\Sigma=\{0\}\times I^{m-2}\times (I\sqcup I)$ of $N$, 
where $I\subset\bbT$ is an arbitrary open interval of $(0,2\pi)$. 
Define the $m$-dimensional globally hyperbolic spacetime $M$ to be the Cauchy development of $\Sigma$ in $N$ 
endowed with the induced metric $g_M=g_N\vert_M$, orientation and time-orientation. 
As a byproduct of this procedure, we obtain a causal embedding $f:M\to N$. 

To embed $M$ in $O$ we proceed as follows. As in the previous case, 
we keep in mind the inclusion of $M$ into $N$ as sets and we adopt the coordinates for $N$ to specify points of $M$. 
Then we introduce the following embedding: 
\begin{align*}
h:M\to O && (t,\,\theta^1,\,\dots,\,\theta^{m-2}\,,j,\,\theta^{m-1})
\mapsto(t,\,\theta^1,\,\theta^2,\,\dots,\,\theta^{m-2},\,2\pi j+\theta^{m-1})\,.
\end{align*}
This is clearly a causal embedding of the $2$-component globally hyperbolic spacetime $M$ into $O$, 
which is performed in such a way that the image of $M$ is made of two causally disjoint regions of $O$. 

We can exploit Example \ref{exaNonInjForms} to exhibit a non-trivial element $\ev_{[\omega]}$ in $\rad_M$ 
which is annihilated by $\PSV(h)$. As before, by\ Proposition\ \ref{prpRadForms}, 
$\rad_N$ is trivial since its Cauchy hypersurface is compact. Therefore, 
if $\PSV(f)\ev_{[\omega]}\neq0$ in $\obs_N$, then $\PSV(f)\ev_{[\omega]}$ lies outside $\rad_N$. 
This is indeed the case for the very same reason which ensures that $\ev_{[\omega]}\neq0$ in $\obs_M$. 
In fact, one can also check that $\hcdd^m(f):\hcdd^m(M)\to\hcdd^m(N)$ is injective 
(actually this is an isomorphism) and then apply Lemma\ \ref{lemLocCohoForms}. 
\end{example}

\begin{theorem}\label{thmLocNotRecoveredForms}
Let $m\geq2$ and $k\in\{1,\dots,m-1\}$. The covariant functor $\PSV:\GHyp\to\PSymV$ 
has no quotientable subfunctor $\mathfrak{Q}:\GHyp\to\PSymV$ 
which recovers the locality property of\ \cite[Definition\ 2.1]{BFV03}, 
namely such that $\PSV/\mathfrak{Q}(f)$ is injective for each causal embedding $f$. 
\end{theorem}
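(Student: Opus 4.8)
The plan is to argue by contradiction using the explicit geometric configuration of Example~\ref{exaNoQuotientForms}, which is precisely engineered to obstruct any such quotient. Suppose $\mathfrak{Q}:\GHyp\to\PSymV$ is a quotientable subfunctor of $\PSV$ such that $\PSV/\mathfrak{Q}(g)$ is injective for every causal embedding $g$. First I would recall the data from Example~\ref{exaNoQuotientForms}: a globally hyperbolic spacetime $M$ together with two causal embeddings $f:M\to N$ and $h:M\to O$, and a nonzero observable $\ev_{[\omega]}\in\rad_M$ (the null space of $\tau_M$) satisfying $\PSV(h)\ev_{[\omega]}=0$ while $\PSV(f)\ev_{[\omega]}$ is a nonzero element of $\obs_N$ lying outside $\rad_N=\{0\}$ (recall $N$ has compact Cauchy hypersurfaces, so $\rad_N$ is trivial by Proposition~\ref{prpRadForms}).

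The contradiction then comes from playing the two embeddings against each other. On the one hand, $\PSV/\mathfrak{Q}(h)$ must be injective; since $\PSV(h)\ev_{[\omega]}=0$, its class in $\PSV(M)/\mathfrak{Q}(M)$ must already be zero, i.e. $\ev_{[\omega]}\in\mathfrak{Q}(M)$. On the other hand, $\mathfrak{Q}$ being a subfunctor means $\PSV(f)$ restricts to a map $\mathfrak{Q}(M)\to\mathfrak{Q}(N)$, so $\PSV(f)\ev_{[\omega]}\in\mathfrak{Q}(N)$. But $\mathfrak{Q}(N)$ is by definition a subspace of the null space of $\tau_N$, which is $\rad_N=\{0\}$; hence $\PSV(f)\ev_{[\omega]}=0$, contradicting the fact (established in Example~\ref{exaNoQuotientForms}) that $\PSV(f)\ev_{[\omega]}\neq 0$ in $\obs_N$. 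This handles the range $m\geq 3$, $k\in\{1,\dots,m-2\}$ via the first part of the example, and the complementary range $m\geq 2$, $k=m-1$ via the second part of the example, which supplies an entirely analogous configuration $M$, $f:M\to N$, $h:M\to O$ (with $O$ Minkowski spacetime) and a nonzero $\ev_{[\omega]}\in\rad_M$ with $\PSV(h)\ev_{[\omega]}=0$ and $\PSV(f)\ev_{[\omega]}\notin\rad_N=\{0\}$.

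The substantive content is entirely in Example~\ref{exaNoQuotientForms}, so the ``hard part'' is not in the logic of the above argument but in verifying that the example does what it claims. The key facts I would invoke there are: that both $f$ and $h$ are genuine causal embeddings (this rests on the metric of $O$ agreeing formally with that of $N$ in the relevant region, so that the Cauchy development $M$ sits compatibly in both); that $\ev_{[\omega]}$ is non-degenerate as an observable on $M$, which follows from Example~\ref{exaNonTrivRadForms} applied to $M$ (whose Cauchy hypersurface has the requisite topology $\bbR^k\times\bbT^{m-k-1}$); that $\hcdd^{k+1}(O)=\{0\}$ forces $\PSV(h)\ev_{[\omega]}$ into $\van_O$ by Lemma~\ref{lemLocCohoForms}; and that $\PSV(f)\ev_{[\omega]}$ remains nonzero because the same cohomological invariant detecting $\ev_{[\omega]}\neq 0$ on $M$ survives under $f$ (here $\hcdd^{k+1}(f)$, or $\hcdd^m(f)$ in the second case, is injective). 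Once these are in place the functor-level contradiction is immediate, so I would keep the proof of the theorem itself to the short paragraph above and let the example carry the geometric weight.

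\begin{proof}
Suppose, by contradiction, that $\mathfrak{Q}:\GHyp\to\PSymV$ is a quotientable subfunctor of $\PSV$ such that $\PSV/\mathfrak{Q}(g)$ is injective for every causal embedding $g$. Consider the globally hyperbolic spacetime $M$ and the two causal embeddings $f:M\to N$ and $h:M\to O$ of\ Example\ \ref{exaNoQuotientForms} (take the first construction for $m\geq3$, $k\in\{1,\dots,m-2\}$ and the second for $m\geq2$, $k=m-1$), together with the non-trivial observable $\ev_{[\omega]}\in\rad_M$ provided there. By construction $\PSV(h)\ev_{[\omega]}=0$, hence the class of $\ev_{[\omega]}$ in $\PSV(M)/\mathfrak{Q}(M)$ is annihilated by $\PSV/\mathfrak{Q}(h)$; injectivity of the latter forces $\ev_{[\omega]}\in\mathfrak{Q}(M)$. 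Since $\mathfrak{Q}$ is a subfunctor of $\PSV$, {\em cfr.}\ Definition\ \ref{defQSubfunctorV}, $\PSV(f)$ restricts to a map $\mathfrak{Q}(M)\to\mathfrak{Q}(N)$, whence $\PSV(f)\ev_{[\omega]}\in\mathfrak{Q}(N)$. By Definition\ \ref{defQSubfunctorV} again, $\mathfrak{Q}(N)$ is a presymplectic vector subspace of the null space $\rad_N$ of $\tau_N$. However, $N$ admits a compact Cauchy hypersurface, so Proposition\ \ref{prpRadForms} gives $\rad_N=\{0\}$, whence $\PSV(f)\ev_{[\omega]}=0$. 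This contradicts the fact, established in\ Example\ \ref{exaNoQuotientForms}, that $\PSV(f)\ev_{[\omega]}$ is a non-zero element of $\obs_N$. Therefore no such quotientable subfunctor $\mathfrak{Q}$ exists.
\end{proof}
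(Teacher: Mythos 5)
Your proof is correct and follows essentially the same route as the paper's: a contradiction argument hinging on the two embeddings $f:M\to N$ and $h:M\to O$ of Example~\ref{exaNoQuotientForms}, with injectivity of $\PSV/\mathfrak{Q}(h)$ forcing $\ev_{[\omega]}\in\mathfrak{Q}(M)$ and the subfunctor property pushing it into $\mathfrak{Q}(N)$, contradicting quotientability since $\PSV(f)\ev_{[\omega]}$ lies outside the null space of $\PSV(N)$. Your slight rephrasing via $\rad_N=\{0\}$ (from Proposition~\ref{prpRadForms}) is exactly the content the paper's example already supplies, so the two arguments coincide.
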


\begin{proof}
Let $m\geq2$ and $k\in\{1,\dots,m-1\}$.
By contradiction, suppose $\mathfrak{Q}:\GHyp\to\PSymV$ is a quotientable subfunctor of $\PSV:\GHyp\to\PSymV$ 
such that locality is recovered taking the quotient by $\mathfrak{Q}$. 
According to\ Example\ \ref{exaNoQuotientForms} (consider the first part of the example for $m\geq3$, 
$k\in\{1,\dots,m-2\}$ and the second for $m\geq2$, $k=m-1$), we have a diagram in $\GHyp$ of the form 
\begin{equation*}
\xymatrix{
N && O\\
&M\ar[lu]^f\ar[ru]_h
}
\end{equation*}
and a non-trivial element $\ev_{[\omega]}$ of $\PSV(M)$ which is annihilated by $\PSV(h)$. 
Since the quotient by $\mathfrak{Q}$ has to recover injectivity, $\ev_{[\omega]}$ has to lie in $\mathfrak{Q}(M)$. 
However, by\ Definition\ \ref{defQSubfunctorV}, $\mathfrak{Q}(f)$ is the restriction of $\PSV(f)$ 
to $\mathfrak{Q}(M)$. In particular, $\PSV(f)\ev_{[\omega]}$ must lie in $\mathfrak{Q}(N)$. 
This contradicts the fact that $\mathfrak{Q}$ is quotientable, 
since we know from\ Example\ \ref{exaNoQuotientForms} that 
$\PSV(f)\ev_{[\omega]}$ does not lie in the null space of $\PSV(N)$. 
\end{proof}

\begin{remark}\label{remNoQuotientLocForms}
The last theorem shows the impossibility to recover locality by quotientable subfunctors of $\PSV$. 
In particular, it shows that, for $m\geq3$, $k\in\{1,\dots,m-2\}$, 
this is the case even in the full subcategory of $\GHyp$ whose objects are connected. 
Note that this includes the most prominent physical situation, namely the vector potential of free electromagnetism 
on four-dimensional globally hyperbolic spacetimes, $m=4$ and $k=1$. 
\end{remark}

\subsection{Recovering isotony \backtick{a} la Haag-Kastler}\label{subHKForms}
Even though the covariant functor $\PSV$ violates locality in the sense of\ \cite[Definition\ 2.1]{BFV03} 
and, moreover, there exists no quotientable subfunctor which allows to recover this property, 
it is still possible to recover isotony in the spirit of\ \cite{HK64}. 
This procedure is the analogue of the one in \cite[Section\ 6]{BDHS14}. 

The idea is to fix an $m$-dimensional globally hyperbolic spacetime $M$ (interpreted as the \quotes{universe}) 
and consider only its causally compatible open subsets (interpreted as regions of the universe $M$), 
so that a unique causal embedding into $M$ (induced by the inclusion) is specified for each region. 
This ultimately enables us to quotient out the kernels of all morphisms induced by inclusions, 
therefore preserving on each region only those observables 
which provide some information about Maxwell $k$-forms defined on the whole universe $M$ 
and not about those which exist as non-trivial field configurations only on a certain region. 

\begin{definition}
Let $m\geq2$ and consider an $m$-dimensional globally hyperbolic spacetime $M$. 
We define the category $\GHyp_M$ to be the subcategory of $\GHyp$ 
whose objects are causally compatible open subsets of $M$ (often referred to as {\em regions} of $M$) 
and whose morphisms are only those induced by inclusions, 
namely there exists a morphism $\iota_{O\,O^\prime}$ from a region $O$ to a region $O^\prime$ 
if and only if $O\subseteq O^\prime$, such morphism is unique and it is specified by the inclusion. 
\end{definition}

By definition there exists only one morphism from each region $O^\prime$ of $M$ into $M$, 
namely the one induced by the inclusion $O^\prime\subseteq M$. 
In the language of category theory, $M$ is the only terminal object of $\GHyp_M$. 
Furthermore, if a morphism from $O$ to $O^\prime$ is given, 
then we have the chain of inclusions $O\subseteq O^\prime\subseteq M$, 
which is pictorially represented by the following commutative diagram, where all arrows are simply inclusions: 
\begin{equation}\label{eqUniverse}
\xymatrix{
& M\\
O\ar[rr]_{\iota_{O\,O^\prime}}\ar[ru]^{\iota_{O\,M}} && O^\prime\ar[lu]_{\iota_{O^\prime\,M}}
}
\end{equation}
For $m\geq2$ and $k\in\{1,\dots,m-1\}$ fixed, we introduce the functor $\PSV:\GHyp_M\to\PSymV$ 
(still denoted by the same symbol with a slight abuse of notation) 
obtained by restriction of the covariant functor $\PSV:\GHyp\to\PSymV$ to the subcategory $\GHyp_M$. 
Our aim is to construct a new covariant functor out of $\PSV:\GHyp_M\to\PSymV$, 
which, besides causality and time-slice axiom, satisfies isotony as well, 
in the sense that induced morphisms are always injective. 
Due to the restriction from $\GHyp$ to $\GHyp_M$, now the class of morphisms is very much restricted. 
In place of the term \quotes{locality}, which, according to\ \cite[Definition\ 2.1]{BFV03}, 
was always referred to the requirement of injectivity for the presymplectic maps induced 
via $\PSV:\GHyp\to\PSymV$ by morphisms in $\GHyp$, we adopt here the term \quotes{isotony} 
for the weaker requirement of injectivity for presymplectic maps 
induced by morphisms in the subcategory $\GHyp_M$. 
This term refers to a similar property, which was originally formulated in\ \cite{HK64}. 

\begin{lemma}\label{lemKerSubfunctorForms}
Let $m\geq2$ and $k\in\{1,\dots,m-1\}$ and consider an $m$-dimensional globally hyperbolic spacetime $M$. 
The following assignment defines a quotientable subfunctor $\mfker_M:\GHyp_M\to\PSymV$ 
of the covariant functor $\PSV:\GHyp_M\to\PSymV$: 
To each object $O$ in $\GHyp_M$ assign the object $\mfker_M(O)=\ker(\PSV(\iota_{O\,M}))$ 
(endowed with the trivial presymplectic structure) in $\PSymV$ 
and to each morphism $\iota_{O\,O^\prime}:O\to O^\prime$ in $\GHyp_M$ 
assign the morphism $\mfker_M(\iota_{O\,O^\prime}):\mfker_M(O)\to\mfker_M(O^\prime)$ in $\PSymV$ 
obtained as the restriction of $\PSV(\iota_{O\,O^\prime}):\PSV(O)\to\PSV(O^\prime)$ to $\mfker_M(O)$. 
\end{lemma}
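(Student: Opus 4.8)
The plan is to verify the two conditions in Definition~\ref{defQSubfunctorV} directly, exploiting the functoriality of $\PSV$ and the diagram~\eqref{eqUniverse}. First I would check that $\mfker_M$ is a subfunctor of $\PSV:\GHyp_M\to\PSymV$. Condition (1) is immediate: for each region $O$ of $M$, the kernel $\ker(\PSV(\iota_{O\,M}))$ is a linear subspace of $\PSV(O)$, and we endow it with the trivial presymplectic structure, so it is trivially a presymplectic vector subspace of $\PSV(O)$. For condition (2), I must show that $\PSV(\iota_{O\,O^\prime})$ really does map $\mfker_M(O)$ into $\mfker_M(O^\prime)$, so that the restriction $\mfker_M(\iota_{O\,O^\prime})$ makes sense as a morphism $\mfker_M(O)\to\mfker_M(O^\prime)$ in $\PSymV$. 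Here the key input is the commutative triangle~\eqref{eqUniverse}: since $\iota_{O^\prime\,M}\circ\iota_{O\,O^\prime}=\iota_{O\,M}$ in $\GHyp_M$, functoriality of $\PSV$ gives $\PSV(\iota_{O^\prime\,M})\circ\PSV(\iota_{O\,O^\prime})=\PSV(\iota_{O\,M})$. Hence, if $\ev_{[\omega]}\in\mfker_M(O)$, i.e. $\PSV(\iota_{O\,M})\ev_{[\omega]}=0$, then $\PSV(\iota_{O^\prime\,M})\big(\PSV(\iota_{O\,O^\prime})\ev_{[\omega]}\big)=0$, so $\PSV(\iota_{O\,O^\prime})\ev_{[\omega]}\in\ker(\PSV(\iota_{O^\prime\,M}))=\mfker_M(O^\prime)$, as required. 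Moreover the restricted map is a presymplectic linear map between spaces carrying the trivial presymplectic form, which is automatic. Functoriality of $\mfker_M$ (respecting composition and identities) is then inherited from that of $\PSV$, since all the structure maps are restrictions of the corresponding maps for $\PSV$.

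It remains to check that the subfunctor $\mfker_M$ is \emph{quotientable}, i.e. that for each region $O$ the space $\mfker_M(O)=\ker(\PSV(\iota_{O\,M}))$ is contained in the null space $\rad_O$ of the presymplectic form $\tau_O$ on $\PSV(O)$. This is precisely Proposition~\ref{prpSymInj} applied to the presymplectic linear map $L=\PSV(\iota_{O\,M}):(\obs_O,\tau_O)\to(\obs_M,\tau_M)$: the kernel of a presymplectic linear map is always a vector subspace of the null space of its domain. Thus $\ker(\PSV(\iota_{O\,M}))\subseteq\rad_O$, which is the content of the quotientability requirement, and we can legitimately take the trivial presymplectic structure on $\mfker_M(O)$ as a presymplectic subspace of the null space of $\PSV(O)$.

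I do not expect any serious obstacle here: the statement is essentially a formal consequence of three facts already established, namely that $\PSV:\GHyp\to\PSymV$ is a covariant functor (Theorem~\ref{thmFunctorForms}), that $\GHyp_M$ has $M$ as terminal object with the factorization~\eqref{eqUniverse}, and that kernels of presymplectic maps sit inside null spaces (Proposition~\ref{prpSymInj}). The only point requiring a line of care is the well-definedness of the restricted morphism in condition (2), where one genuinely uses the triangle identity $\iota_{O^\prime\,M}\circ\iota_{O\,O^\prime}=\iota_{O\,M}$; everything else is bookkeeping. Accordingly the write-up will be short: assemble the subfunctor conditions from functoriality and~\eqref{eqUniverse}, then invoke Proposition~\ref{prpSymInj} for quotientability.
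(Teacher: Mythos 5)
Your proposal is correct and follows essentially the same route as the paper's own proof: quotientability via Proposition~\ref{prpSymInj} (kernels of presymplectic maps lie in the null space), and well-definedness of the restricted morphisms via the commutative triangle~\eqref{eqUniverse} together with functoriality of $\PSV$, the presymplectic-preservation being trivial since both structures are trivial. No gaps to report.
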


\begin{proof}
Given an object $O$ in $\GHyp_M$, its inclusion $\iota_{O\,M}$ 
in the terminal object $M$ of $\GHyp_M$ is given as well. 
Therefore it makes sense to consider the vector space $\ker(\PSV(\iota_{O\,M}))$. 
By\ Proposition\ \ref{prpSymInj}, this vector space is a subspace of $\rad_O$. 
Obviously the presymplectic form of $\PSV(O)$ vanishes when restricted to $\rad_O$. 
For this reason it is natural to define $\mfker_M(O)$
by endowing $\ker(\PSV(\iota_{O\,M}))$ with the trivial presymplectic structure. 
Furthermore, the fact that $\mfker_M(O)$ is a subspace of the null space of the presymplectic form of $\PSV(O)$ 
entails that, if $\mfker:\GHyp_M\to\PSymV$ were a subfunctor of the covariant functor $\PSV:\GHyp_M\to\PSymV$, 
then it would also be quotientable. 

Consider now a morphism $\iota_{O\,O^\prime}:O\to O^\prime$ in $\GHyp_M$. 
First of all, note that, $M$ being the terminal object in $\GHyp_M$, 
we have also the morphisms $\iota_{O\,M},\iota_{O^\prime\,M}$ in $\GHyp_M$ 
and these fit into the commutative diagram\ \eqref{eqUniverse} together with $\iota_{O\,O^\prime}$. 
Applying the covariant functor $\PSV:\GHyp_M\to\PSymV$, we obtain a new commutative diagram: 
\begin{equation*}
\xymatrix{
& \PSV(M)\\
\PSV(O)\ar[rr]_{\PSV(\iota_{O\,O^\prime})}\ar[ru]|{\PSV(\iota_{O\,M})} 
&& \PSV(O^\prime)\ar[lu]|{\PSV(\iota_{O^\prime\,M})}
}
\end{equation*}
From this diagram we deduce that $\PSV(\iota_{O\,O^\prime})$ 
maps the kernel of $\PSV(\iota_{O\,M})$ to the kernel of $\PSV(\iota_{O^\prime\,M})$. 
This means that restricting $\PSV(\iota_{O\,O^\prime})$ to the kernel of $\PSV(\iota_{O\,M}))$ 
provides a homomorphism $\mfker_M(\iota_{O\,O^\prime}):\mfker_M(O)\to\mfker(O^\prime)$, 
which preserves the relevant presymplectic structures since they are trivial. 
Therefore we get a morphism in $\PSymV$. 
In particular, since $\mfker_M:\GHyp_M\to\PSymV$ inherits its functorial behavior from $\PSV:\GHyp_M\to\PSymV$, 
it is a subfunctor of $\PSV:\GHyp_M\to\PSymV$ and therefore also quotientable by the first part of the proof. 
\end{proof}

\begin{theorem}\label{thmIsotonyForms}\index{isotony}
Let $m\geq2$ and $k\in\{1,\dots,m-1\}$. Consider an $m$-dimensional globally hyperbolic spacetime $M$. 
The covariant functor $\PSV_M:\GHyp_M\to\PSymV$ defined as the quotient of $\PSV:\GHyp_M\to\PSymV$ 
by its quotientable subfunctor $\mfker_M:\GHyp_M\to\PSymV$ ({\em cfr.}\ Lemma\ \ref{lemKerSubfunctorForms}) 
satisfies the {\em isotony} property, 
namely $\PSV_M$ assigns an injective morphism in $\PSymV$ to each morphism in $\GHyp_M$. 
Furthermore, both causality and the time-slice axiom hold for $\PSV_M:\GHyp_M\to\PSymV$. 
\end{theorem}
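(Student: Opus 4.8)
The plan is to verify the three required properties of $\PSV_M:\GHyp_M\to\PSymV$ in turn, deriving each from the corresponding property already established for $\PSV:\GHyp\to\PSymV$ together with the construction of $\mfker_M$ in Lemma\ \ref{lemKerSubfunctorForms}. First, for \emph{isotony}, I would take a morphism $\iota_{O\,O^\prime}:O\to O^\prime$ in $\GHyp_M$ and show that the induced map $\PSV_M(\iota_{O\,O^\prime}):\PSV(O)/\mfker_M(O)\to\PSV(O^\prime)/\mfker_M(O^\prime)$ has trivial kernel. The key input is the commutative triangle\ \eqref{eqUniverse} of inclusions $O\subseteq O^\prime\subseteq M$: applying $\PSV$ gives $\PSV(\iota_{O^\prime\,M})\circ\PSV(\iota_{O\,O^\prime})=\PSV(\iota_{O\,M})$. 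Hence if $\ev_{[\omega]}\in\PSV(O)$ has $\PSV(\iota_{O\,O^\prime})\ev_{[\omega]}\in\mfker_M(O^\prime)=\ker(\PSV(\iota_{O^\prime\,M}))$, then composing shows $\ev_{[\omega]}\in\ker(\PSV(\iota_{O\,M}))=\mfker_M(O)$, so its class in the quotient $\PSV_M(O)$ is zero. This is precisely injectivity of $\PSV_M(\iota_{O\,O^\prime})$.

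Next, for the \emph{time-slice axiom}: suppose $\iota_{O\,O^\prime}$ is a Cauchy morphism in $\GHyp_M$. By Theorem\ \ref{thmTimeSliceForms}, $\PSV(\iota_{O\,O^\prime}):\PSV(O)\to\PSV(O^\prime)$ is already an isomorphism in $\PSymV$. I would then argue that the quotientable subfunctor $\mfker_M$ is compatible with this isomorphism in the sense that $\PSV(\iota_{O\,O^\prime})$ restricts to an isomorphism $\mfker_M(O)\to\mfker_M(O^\prime)$; indeed it maps $\ker(\PSV(\iota_{O\,M}))$ into $\ker(\PSV(\iota_{O^\prime\,M}))$ by the triangle\ \eqref{eqUniverse}, and, being an isomorphism with inverse $\PSV(\iota_{O\,O^\prime})^{-1}=\PSV$ of the corresponding Cauchy morphism $O^\prime\to O$ (which exists once one checks $O\to O^\prime$ Cauchy forces the relevant data), the same argument run in reverse gives the inverse inclusion. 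An isomorphism of short exact sequences of presymplectic spaces descends to an isomorphism of the quotients, so $\PSV_M(\iota_{O\,O^\prime})$ is an isomorphism.

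Finally, \emph{causality}: given regions $O_1,O_2$ of $M$ with $O_1\cap J_M(O_2)=\emptyset$, I would take classes $\ev_{[\omega_1]}\in\PSV_M(O_1)$ and $\ev_{[\omega_2]}\in\PSV_M(O_2)$, lift them to representatives in $\PSV(O_1)$ and $\PSV(O_2)$, push them forward along the inclusions into $M$, and apply Theorem\ \ref{thmCausalityForms} (with $N=M$, $f=\iota_{O_1\,M}$, $h=\iota_{O_2\,M}$) to conclude the $\tau_M$-pairing of the images vanishes; since the presymplectic form on $\PSV_M(M)=\PSV(M)/\mfker_M(M)$ is the one induced by $\tau_M$, and the pairing is evaluated after pushforward to $M$ exactly as in the definition of the presymplectic structure on a $\GHyp_M$-object, this gives vanishing of the pairing in $\PSV_M(M)$, hence in $\PSV_M(O_1)$ and $\PSV_M(O_2)$ after transporting along the (injective, by isotony) structure maps. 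The main obstacle I anticipate is the bookkeeping in the time-slice step: one must be careful that a Cauchy morphism \emph{in} $\GHyp_M$ (an inclusion of regions whose image contains a Cauchy surface of the larger region) genuinely produces a two-sided inverse at the level of $\PSV$, and that the quotient by $\mfker_M$ does not destroy this — but this follows cleanly once one observes that $\mfker_M$ is carried isomorphically under a Cauchy inclusion, which is itself a consequence of the triangle\ \eqref{eqUniverse} applied in both directions.
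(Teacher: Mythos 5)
Your argument follows the paper's route: isotony is obtained exactly as in the paper, namely from the commutative triangle\ \eqref{eqUniverse} with the terminal object $M$ together with the fact that $\PSV_M(O)$ is by construction the quotient of $\PSV(O)$ by $\mfker_M(O)=\ker(\PSV(\iota_{O\,M}))$ (your representative-chasing version and the paper's descended-diagram version are the same computation), and causality descends immediately because the presymplectic form on the quotient is the induced one; the paper simply states that causality and the time-slice axiom are inherited from Theorem\ \ref{thmCausalityForms} and Theorem\ \ref{thmTimeSliceForms}.

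The one step that does not work as written is your justification in the time-slice part. You claim that $\PSV(\iota_{O\,O^\prime})^{-1}$ is ``$\PSV$ of the corresponding Cauchy morphism $O^\prime\to O$'' and speak of applying the triangle ``in both directions''. No such morphism exists: in $\GHyp_M$ (and already in $\GHyp$) arrows only go from the smaller region into the larger one, and when $O\subsetneq O^\prime$ there is no causal embedding of $O^\prime$ into $O$, hence no reversed triangle. The conclusion you want, however, needs none of this. Either argue directly that $\PSV(\iota_{O\,O^\prime})$ maps $\mfker_M(O)$ \emph{onto} $\mfker_M(O^\prime)$: given $x\in\ker(\PSV(\iota_{O^\prime\,M}))$, bijectivity of $\PSV(\iota_{O\,O^\prime})$ (Theorem\ \ref{thmTimeSliceForms}) gives $y$ with $\PSV(\iota_{O\,O^\prime})y=x$, and the single triangle yields $\PSV(\iota_{O\,M})y=\PSV(\iota_{O^\prime\,M})x=0$, so $y\in\mfker_M(O)$; or, even more economically, note that $\PSV_M(\iota_{O\,O^\prime})$ is injective by your isotony step and surjective because $\PSV(\iota_{O\,O^\prime})$ and the quotient projection onto $\PSV_M(O^\prime)$ are surjective, and a bijective presymplectic linear map automatically has a presymplectic inverse, hence is an isomorphism in $\PSymV$. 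With that replacement your proof is correct and coincides in substance with the paper's.
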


\begin{proof}
Proposition\ \ref{prpQuotientFunctor} and Lemma\ \ref{lemKerSubfunctorForms} ensure that 
$\PSV_M:\GHyp_M\to\PSymV$ is a well-defined covariant functor. 

For each morphism $\iota_{O\,M}$ in $\GHyp_M$, $\PSV_M(\iota_{O\,M})$ is injective per construction. 
In fact, $\PSV_M(O)$ is obtained from $\PSV(O)$ exactly by taking the quotient 
by $\mfker_M(O)=\ker(\PSV(\iota_{O\,M}))$. 
For a morphism $\iota_{O\,O^\prime}$ in $\GHyp_M$, we have a commutative diagram in $\PSymV$ 
induced from the one in\ \eqref{eqUniverse} by the functorial properties of $\PSV_M$: 
\begin{equation*}
\xymatrix{
& \PSV_M(M)=\PSV(M)\\
\PSV_M(O)\ar[rr]_{\PSV_M(\iota_{O\,O^\prime})}\ar[ru]|{\PSV_M(\iota_{O\,M})} 
&& \PSV_M(O^\prime)\ar[lu]|{\PSV_M(\iota_{O^\prime\,M})}
}
\end{equation*}
We already know that $\PSV_M(\iota_{O\,M})$ is injective. 
Commutativity of the diagram implies that $\PSV_M(\iota_{O\,O^\prime})$ is injective too, 
thus showing that the isotony property holds true. 

Both causality and the time-slice axiom are inherited 
from the corresponding properties of the functor $\PSV:\GHyp\to\PSymV$, 
{\em cfr.}\ Theorem\ \ref{thmCausalityForms} and Theorem\ \ref{thmTimeSliceForms}. 
\end{proof}

\begin{remark}\label{remIsotonyForms}
This procedure to recover isotony \backtick{a} la Haag-Kastler\ \cite{HK64} has a physical interpretation, 
see\ also\ \cite[Section\ 7]{BDHS14}. 
Let $m\geq2$ and $k\in\{1,\cdots,m-1\}$. For each region $O$ of a globally hyperbolic spacetime $M$, 
we take the quotient by exactly those observables in $\obs_O$ which would become trivial in $\obs_M$. 
After a quotient on a space of functionals is performed, 
the space upon which the elements of the resulting quotient can be evaluated is a subspace of the original one. 
In the present context, this can be thought of as a further restriction 
beyond the on-shell condition for Maxwell $k$-forms over the region $O$. 
This restriction is such that the outcome of a \quotes{measurement} in some region $O$ 
cannot provide some information which is not available in the whole universe $M$. 
In fact, after the quotient is performed, isotony tells us that the resulting space of observables $\PSV_M(O)$ 
associated to a region $O$ is nothing but a subspace of the space of observables $\PSV_M(M)=\PSV(M)$ 
of the whole universe $M$. 
\end{remark}

\section{Covariant quantum field theory}\label{secQFTForms}
We now quantize the functor $\PSV:\GHyp\to\PSymV$ describing observables for Maxwell $k$-forms 
on $m$-dimensional globally hyperbolic spacetimes for $m\geq2$ and $k\in\{1,\dots,m-1\}$, 
see\ Theorem\ \ref{thmFunctorForms}. 
Quantization is performed via the covariant functor $\CCR:\PSymA\to\CAlg$ 
introduced in\ Section\ \ref{secQuantization} restricted to the category $\PSymV$ of presymplectic vector spaces.  
This functor assigns a unital $C^\ast$-algebra implementing Weyl relations to each presymplectic vector space. 
Clearly, the resulting covariant functor $\QFT=\CCR\circ\PSV:\GHyp\to\CAlg$ 
inherits the causality property and the time-slice axiom from $\PSV:\GHyp\to\PSymV$, 
{\em cfr.}\ Theorem\ \ref{thmCausalityForms} and\ Theorem\ \ref{thmTimeSliceForms}. 
However, locality fails also for $\QFT:\GHyp\to\CAlg$ and it is not possible to recover it taking quotients. 
This follows from the analogous results for $\PSV:\GHyp\to\PSymV$, 
see\ Theorem\ \ref{thmNonLocForms} and\ Theorem\ \ref{thmLocNotRecoveredForms}. 
Finally, one can show that isotony in the sense of\ \cite{HK64} survives quantization. 
In fact, fixing an $m$-dimensional globally hyperbolic spacetime $M$, 
the covariant functor $\QFT_M=\CCR\circ\PSV_M:\GHyp_M\to\CAlg$, 
which is the quantized counterpart of the functor introduced in\ Subsection\ \ref{subHKForms}, 
inherits isotony, together with causality and the time-slice axiom, from $\PSV_M:\GHyp\to\PSymV$, 
see\ Theorem\ \ref{thmIsotonyForms}. This is due to the fact that $\CCR$ preserves injectivity of morphisms, 
{\em cfr.}\ Proposition\ \ref{prpQuantInj}. 

Notice that, for an $m$-dimensional globally hyperbolic spacetime $M$, 
the generator of $\QFT(M)$ corresponding to an element $\ev_{[\omega]}\in\PSV(M)$ 
will be denoted by $\weyl_{[\omega]}$. 

\begin{theorem}
Let $m\geq2$ and $k\in\{1,\,\dots,\,m-1\}$. Consider the covariant functor $\PSV:\GHyp\to\PSymV$ 
introduced in\ Theorem\ \ref{thmFunctorForms} and the covariant functor $\CCR:\PSymV\to\CAlg$ 
obtained by restricting the one in\ Theorem\ \ref{thmQuantFunctor} to presymplectic vector spaces. 
The covariant functor $\QFT=\CCR\circ\PSV:\GHyp\to\CAlg$ fulfils 
the quantum counterparts of both causality and the time-slice axiom, namely: 
\begin{description}
\item[Causality] If $f:M_1\to N$ and $h:M_2\to N$ are causal embeddings with causally disjoint images in $N$, 
namely such that $f(M_1)\cap J_N(h(M_2))=\emptyset$, then the $C^\ast$-subalgebras 
$\QFT(f)(\QFT(M_1))$ and $\QFT(h)(\QFT(M_2))$ of the $C^\ast$-algebra $\QFT(N)$ commute with each other; 
\item[Time-slice axiom] If $f:M\to N$ is a Cauchy morphism, then $\QFT(f):\QFT(M)\to\QFT(N)$ 
is an isomorphism of $C^\ast$-algebras. 
\end{description}
The covariant functor $\QFT:\GHyp\to\CAlg$ violates locality, 
namely there exists a causal embedding $f:M\to N$ such that $\QFT(f):\QFT(M)\to\QFT(N)$ is not injective. 
\end{theorem}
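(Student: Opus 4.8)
The plan is to derive everything from the results already established for the presymplectic functor $\PSV:\GHyp\to\PSymV$, invoking the properties of the quantization functor $\CCR$ proved in Section~\ref{secQuantization}. The statement has three parts: causality, the time-slice axiom, and the violation of locality. All three are inherited from the classical level.

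First I would address causality. Let $f:M_1\to N$ and $h:M_2\to N$ be causal embeddings with $f(M_1)\cap J_N(h(M_2))=\emptyset$. By Theorem~\ref{thmCausalityForms}, $\tau_N(\PSV(f)\obs_{M_1},\PSV(h)\obs_{M_2})=\{0\}$. Now $\QFT(f)(\QFT(M_1))$ is generated as a $C^\ast$-algebra by the Weyl generators $\weyl_{\PSV(f)\ev_{[\omega_1]}}$ with $\ev_{[\omega_1]}\in\PSV(M_1)$, and similarly for $h$. Using the Weyl relations~\eqref{eqWeylRel}, two generators $\weyl_{a}$ and $\weyl_{b}$ satisfy $\weyl_a\weyl_b=\exp(-i\tau_N(a,b))\weyl_b\weyl_a$; when $\tau_N(a,b)=0$ they commute. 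Since the vanishing of the presymplectic form on the two subspaces forces all such pairs to commute, the generated $C^\ast$-subalgebras commute, which gives the quantum causality property.

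Next, the time-slice axiom. If $f:M\to N$ is a Cauchy morphism, Theorem~\ref{thmTimeSliceForms} tells us that $\PSV(f):\PSV(M)\to\PSV(N)$ is an isomorphism in $\PSymV$. Applying the covariant functor $\CCR$, which in particular sends isomorphisms to isomorphisms, we conclude that $\QFT(f)=\CCR(\PSV(f)):\QFT(M)\to\QFT(N)$ is an isomorphism of unital $C^\ast$-algebras. For the violation of locality, I would use Theorem~\ref{thmNonLocForms}: there is a causal embedding $f:M\to N$ such that $\PSV(f)$ has non-trivial kernel, say $0\neq\ev_{[\omega]}\in\ker(\PSV(f))$. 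The key input is that $\CCR$ preserves and reflects injectivity in a usable way; more precisely, I would observe directly that $\weyl_{\ev_{[\omega]}}-\bbone\neq0$ in $\Delta(\PSV(M),\tau_M)$ since the $\weyl_h$ are linearly independent, whereas $\QFT(f)(\weyl_{\ev_{[\omega]}}-\bbone)=\weyl_{\PSV(f)\ev_{[\omega]}}-\bbone=\weyl_0-\bbone=0$. Hence $\QFT(f)$ is not injective.

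The only genuinely delicate point is that $\weyl_{\ev_{[\omega]}}-\bbone$ must remain nonzero after completing to the $C^\ast$-algebra $\CCR(\PSV(M),\tau_M)$, i.e.\ the minimal regular norm must not collapse this element. This follows from the existence of the faithful state $\tilde\omega$ on $\balg(\PSV(M),\tau_M)$ constructed after Proposition~\ref{prpPosLinFunc}: a faithful state has a faithful GNS representation, so the image of $\CCR(\PSV(M),\tau_M)$ in $\bounded(\hilb_{\tilde\omega})$ is isometric, and one computes $\tilde\omega((\weyl_{\ev_{[\omega]}}-\bbone)^\ast(\weyl_{\ev_{[\omega]}}-\bbone))=2>0$ whenever $\ev_{[\omega]}\neq0$. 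Thus $\weyl_{\ev_{[\omega]}}\neq\bbone$ in $\QFT(M)$, while its image is $\bbone$ in $\QFT(N)$, so $\QFT(f)$ fails to be injective, completing the proof. I expect this norm-nondegeneracy check to be the main (though routine) obstacle; the rest is a direct transport of the classical statements through the functor $\CCR$.
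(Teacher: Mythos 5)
Your proposal is correct and follows essentially the same route as the paper: causality via the Weyl relations together with Theorem~\ref{thmCausalityForms}, the time-slice axiom by functoriality of $\CCR$ applied to Theorem~\ref{thmTimeSliceForms}, and the locality violation by mapping $\bbone_M-\weyl_{[\omega]}$ (with $\ev_{[\omega]}\in\ker\PSV(f)$ from Theorem~\ref{thmNonLocForms}) to zero. Your explicit verification via the faithful state $\tilde\omega$ that $\bbone_M-\weyl_{[\omega]}$ survives the $C^\ast$-completion is a welcome elaboration of a point the paper leaves implicit, but it does not change the argument.
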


\begin{proof}
$\QFT:\GHyp\to\CAlg$ is a covariant functor since it is defined composing two covariant functors. 
Causality follows from the Weyl relations\ \eqref{eqWeylRel} and\ Theorem\ \ref{thmCausalityForms}. 
The time-slice axiom follows from $\CCR:\PSymV\to\CAlg$ being a functor and\ Theorem\ \ref{thmTimeSliceForms}. 
To exhibit a violation of locality, consider Theorem\ \ref{thmNonLocForms}. 
The violation at the classical level means that there exists a causal embedding $f:M\to N$ 
and an element $\ev_{[\omega]}\in\PSV(M)$ such that $\PSV(f)\ev_{[\omega]}=0$. 
Consider now the corresponding generator $\weyl_{[\omega]}$ of $\QFT(M)$ 
and the unit $\weyl_0=\bbone_M\in\QFT(M)$ and introduce $a=\bbone_M-\weyl_{[\omega]}\in\QFT(M)$. 
Since $\PSV(f)\ev_{[\omega]}=0$, it follows by definition of $\CCR$, {\em cfr.}\ \eqref{eqQuantMorph}, 
that $\QFT(f)a=\bbone_N-\weyl_0=0$, namely $\QFT(f)$ is not injective. 
\end{proof}

To state the next theorem about the impossibility to recover locality by quotients, 
one has to introduce first suitable notions of subfunctors and quotientable subfunctors 
for functors taking values in the category $\CAlg$ of unital $C^\ast$-algebras. 
We adopt the notions of\ \cite[Section\ 5]{BDHS14}. 

\begin{definition}\label{defQSubfunctorAlg}
\index{subfunctor}\index{quotientable subfunctor}\index{subfunctor!quotientable}
Let $\mathsf{C}$ be a category and consider a covariant functor $\mathfrak{F}:\mathsf{C}\to\CAlg$. 
A covariant functor $\mathfrak{S}:\mathsf{C}\to\CAlg$ is a {\em subfunctor} of $\mathfrak{F}$ 
if the following conditions are fulfilled: 
\begin{enumerate}
\item For each object $c$ of the category $C$, 
$\mathfrak{S}(c)$ is a $C^\ast$-subalgebra of $\mathfrak{F}(c)$ (non-unital subalgebras are allowed); 
\item For each morphism $\gamma:c\to d$ in the category $C$, 
$\mathfrak{S}(\gamma):\mathfrak{S}(c)\to\mathfrak{S}(d)$ is the restriction to $\mathfrak{S}(c)$ 
of $\mathfrak{F}(\gamma):\mathfrak{F}(c)\to\mathfrak{F}(d)$. 
\end{enumerate}
A {\em quotientable subfunctor} $\mathfrak{Q}:\mathsf{C}\to\CAlg$ of the covariant functor 
$\mathfrak{F}:\mathsf{C}\to\CAlg$ is a subfunctor such that, for each object $c$ of the category $C$, 
$\mathfrak{Q}(c)$ is a closed two-sided $\ast$-ideal of the $C^\ast$-algebra $\mathfrak{F}(c)$. 
Furthermore, a subfunctor $\mathfrak{Q}:\mathsf{C}\to\CAlg$ is {\em proper} 
if $\mathfrak{S}(c)$ is a proper $C^\ast$-subalgebra of $\mathfrak{F}(c)$ for each object $c$ of the category $C$. 
\end{definition}

The next proposition adapts to the $C^\ast$-algebraic setting 
the one for presymplectic vector spaces, {\em cfr.}\ Proposition\ \ref{prpQuotientFunctor}. 

\begin{proposition}\label{prpQuotientFunctorAlg}\index{quotient functor}
Let $\mathsf{C}$ be a category and consider a covariant functor $\mathfrak{F}:\mathsf{C}\to\CAlg$. 
Furthermore, let $\mathfrak{Q}:\mathsf{C}\to\CAlg$ be a proper quotientable subfunctor of $\mathfrak{F}$. 
To each object $c$ in $\mathsf{C}$ assign the object 
$\mathfrak{F}/\mathfrak{Q}(c)=\mathfrak{F}(c)/\mathfrak{Q}(c)$ in $\CAlg$. 
Furthermore, to each morphism $\gamma:c\to d$ in $\mathsf{C}$ assign the morphism 
$\mathfrak{F}/\mathfrak{Q}(\gamma):\mathfrak{F}/\mathfrak{Q}(c)\to\mathfrak{F}/\mathfrak{Q}(d)$ 
in $\CAlg$ induced by $\mathfrak{F}(\gamma):\mathfrak{F}(c)\to\mathfrak{F}(d)$. 
Then $\mathfrak{F}/\mathfrak{Q}:\mathsf{C}\to\CAlg$ is a covariant functor, 
called {\em quotient functor} of $\mathfrak{F}$ by $\mathfrak{Q}$. 
\end{proposition}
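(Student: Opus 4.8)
The plan is to verify the functor axioms for $\mathfrak{F}/\mathfrak{Q}$ in three stages: first checking that the object assignment lands in $\CAlg$, then that the morphism assignment is well-defined, and finally that composition and identities are preserved. For the object assignment, I would invoke the hypothesis that $\mathfrak{Q}$ is a quotientable subfunctor: by Definition\ \ref{defQSubfunctorAlg}, $\mathfrak{Q}(c)$ is a closed two-sided $\ast$-ideal of the $C^\ast$-algebra $\mathfrak{F}(c)$, so the standard result that the quotient of a $C^\ast$-algebra by a closed two-sided $\ast$-ideal is again a $C^\ast$-algebra (with the quotient norm) gives that $\mathfrak{F}/\mathfrak{Q}(c)=\mathfrak{F}(c)/\mathfrak{Q}(c)$ is an object of $\CAlg$. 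The \emph{proper} hypothesis on $\mathfrak{Q}$ ensures the ideal is a proper subalgebra, hence the quotient is nontrivial (in particular unital, so it stays within $\CAlg$ which has unital $C^\ast$-algebras as objects).

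Next I would treat the morphism assignment. Given a morphism $\gamma:c\to d$ in $\mathsf{C}$, we have the $\ast$-homomorphism $\mathfrak{F}(\gamma):\mathfrak{F}(c)\to\mathfrak{F}(d)$ and, composing with the quotient projection $\mathfrak{F}(d)\to\mathfrak{F}(d)/\mathfrak{Q}(d)$, we get a $\ast$-homomorphism $\mathfrak{F}(c)\to\mathfrak{F}/\mathfrak{Q}(d)$. The key point is that this composite kills $\mathfrak{Q}(c)$: since $\mathfrak{Q}$ is a subfunctor, $\mathfrak{F}(\gamma)$ restricted to $\mathfrak{Q}(c)$ is $\mathfrak{Q}(\gamma):\mathfrak{Q}(c)\to\mathfrak{Q}(d)$, so $\mathfrak{F}(\gamma)(\mathfrak{Q}(c))\subseteq\mathfrak{Q}(d)$, which is exactly the kernel of the projection onto $\mathfrak{F}/\mathfrak{Q}(d)$. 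By the universal property of the quotient $C^\ast$-algebra $\mathfrak{F}(c)/\mathfrak{Q}(c)$, the composite factors uniquely through a $\ast$-homomorphism $\mathfrak{F}/\mathfrak{Q}(\gamma):\mathfrak{F}/\mathfrak{Q}(c)\to\mathfrak{F}/\mathfrak{Q}(d)$, and this map is automatically continuous (indeed norm-decreasing) as any $\ast$-homomorphism between $C^\ast$-algebras.

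Finally, functoriality of $\mathfrak{F}/\mathfrak{Q}$ is inherited from that of $\mathfrak{F}$. For composition, given $\gamma:c\to d$ and $\delta:d\to e$, both $\mathfrak{F}/\mathfrak{Q}(\delta\circ\gamma)$ and $\mathfrak{F}/\mathfrak{Q}(\delta)\circ\mathfrak{F}/\mathfrak{Q}(\gamma)$ are $\ast$-homomorphisms $\mathfrak{F}/\mathfrak{Q}(c)\to\mathfrak{F}/\mathfrak{Q}(e)$ making the same square with the quotient projections commute; since these projections are surjective, uniqueness in the universal property forces the two to agree, using $\mathfrak{F}(\delta\circ\gamma)=\mathfrak{F}(\delta)\circ\mathfrak{F}(\gamma)$. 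Similarly $\mathfrak{F}/\mathfrak{Q}(\id_c)$ descends from $\mathfrak{F}(\id_c)=\id_{\mathfrak{F}(c)}$ to the identity on the quotient. This is essentially a routine diagram chase; the only genuinely substantive input is the existence and universal property of quotient $C^\ast$-algebras by closed two-sided $\ast$-ideals, so there is no real obstacle here — the proof mirrors verbatim the argument of Proposition\ \ref{prpQuotientFunctor}, with ``presymplectic vector subspace of the null space'' replaced by ``closed two-sided $\ast$-ideal'' and ``presymplectic structure induced on the quotient'' replaced by ``quotient $C^\ast$-norm''.
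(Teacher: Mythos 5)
Your proof is correct and follows essentially the same route as the paper, which simply refers back to the argument of Proposition\ \ref{prpQuotientFunctor} with presymplectic subspaces of the null space replaced by closed two-sided $\ast$-ideals. Your additional remarks (universal property of the quotient $C^\ast$-algebra, and properness guaranteeing a non-trivial unital quotient) just spell out details the paper leaves implicit.
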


\begin{proof}
The proof is very similar to the one of\ Proposition\ \ref{prpQuotientFunctor}. 
In fact, one only has to replace presymplectic vector spaces with $C^\ast$-algebras 
and null spaces of presymplectic forms with closed two-sided $\ast$-ideals of $C^\ast$-algebras. 
\end{proof}

Below we only consider proper quotientable subfunctors. This is done in order to avoid the situation 
where, after performing the quotient, the resulting functor gives trivial $C^\ast$-algebras. 

\begin{theorem}
Let $m\geq2$ and $k\in\{1,\dots,m-1\}$. The covariant functor $\QFT:\GHyp\to\CAlg$ 
has no proper quotientable subfunctor $\mathfrak{Q}:\GHyp\to\CAlg$ 
which recovers locality in the sense of\ \cite[Definition\ 2.1]{BFV03}, 
namely such that $\QFT/\mathfrak{Q}(f)$ is injective for each causal embedding $f:M\to N$. 
\end{theorem}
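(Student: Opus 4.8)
The plan is to reduce the $C^\ast$-algebraic no-go statement to the classical one, Theorem\ \ref{thmLocNotRecoveredForms}, by exploiting the fact that $\CCR:\PSymA\to\CAlg$ preserves injectivity (Proposition\ \ref{prpQuantInj}) together with the explicit structure of the algebras $\CCR(H,\rho)$ and their generators. The argument is a proof by contradiction: assume a proper quotientable subfunctor $\mathfrak{Q}:\GHyp\to\CAlg$ exists such that $\QFT/\mathfrak{Q}(f)$ is injective for every causal embedding $f$, and derive a contradiction from the commutative-diagram counterexample produced in\ Example\ \ref{exaNoQuotientForms}.

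\textbf{Key steps.} First I would recall the diagram in $\GHyp$ from\ Example\ \ref{exaNoQuotientForms}, namely causal embeddings $f:M\to N$ and $h:M\to O$ sharing the source $M$, and the non-trivial element $\ev_{[\omega]}\in\rad_M\subseteq\PSV(M)$ which is annihilated by $\PSV(h)$ while $\PSV(f)\ev_{[\omega]}$ lies outside the null space $\rad_N=\{0\}$ of $\PSV(N)$. Applying $\CCR$ to this diagram, I would consider the self-adjoint element $a=\bbone_M-\weyl_{[\omega]}\in\QFT(M)$; since $\PSV(h)\ev_{[\omega]}=0$, the functoriality of $\CCR$ and\ \eqref{eqQuantMorph} give $\QFT(h)a=\bbone_O-\weyl_0=0$, so $a\in\ker(\QFT(h))$. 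For $\QFT/\mathfrak{Q}(h)$ to be injective, $a$ must belong to the ideal $\mathfrak{Q}(M)$. By\ Definition\ \ref{defQSubfunctorAlg}, $\mathfrak{Q}(f)$ is the restriction of $\QFT(f)$ to $\mathfrak{Q}(M)$, and $\mathfrak{Q}(N)$ is a closed two-sided $\ast$-ideal of $\QFT(N)$; hence $\QFT(f)a=\bbone_N-\weyl_{\PSV(f)\ev_{[\omega]}}$ lies in $\mathfrak{Q}(N)$. The contradiction is obtained by showing that this element generates (as a closed two-sided ideal) all of $\QFT(N)$, which forces $\mathfrak{Q}(N)=\QFT(N)$, contradicting properness of $\mathfrak{Q}$.

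\textbf{The main obstacle.} The delicate point is the last step: proving that $\bbone_N-\weyl_{v}$, with $v=\PSV(f)\ev_{[\omega]}$ a non-zero element of the \emph{symplectic} vector space $\PSV(N)$ (recall $\rad_N=\{0\}$ since $N$ has compact Cauchy hypersurfaces, by\ Proposition\ \ref{prpRadForms}), generates the whole $C^\ast$-algebra $\QFT(N)=\CCR(\PSV(N),\tau_N)$ as a closed two-sided $\ast$-ideal. For this I would use that $\CCR(V,\tau)$ of a symplectic vector space is simple (it coincides with the standard Weyl algebra, {\em cfr.}\ Remark\ \ref{remRestriction} and\ \cite[Theorem\ 4.2.9]{BGP07}), so its only closed two-sided $\ast$-ideals are $\{0\}$ and the whole algebra; since $\bbone_N-\weyl_v\neq0$ (because $v\neq0$ implies $\weyl_v\neq\weyl_0=\bbone_N$ by linear independence of the Weyl generators), the ideal it generates cannot be $\{0\}$, hence it is all of $\QFT(N)$. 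One should double-check that $v\neq0$ genuinely holds here: this is exactly the content of the relevant part of\ Example\ \ref{exaNonTrivRadForms}\ and\ Example\ \ref{exaNoQuotientForms}, where $\PSV(f)\ev_{[\omega]}$ is shown to be non-trivial in $\obs_N$ by the same cohomological argument that establishes non-triviality of $\ev_{[\omega]}$ in $\obs_M$. Finally, as in\ Remark\ \ref{remNoQuotientLocForms}, I would note that the same counterexample lives in the full subcategory of connected globally hyperbolic spacetimes when $m\geq3$ and $k\in\{1,\dots,m-2\}$, so the obstruction persists for the physically relevant case $m=4$, $k=1$.
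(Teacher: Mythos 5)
Your proposal is correct, and it coincides with the paper's proof up to the decisive last step: both arguments proceed by contradiction from the two embeddings $f:M\to N$ and $h:M\to O$ of Example~\ref{exaNoQuotientForms}, conclude that $\bbone_M-\weyl_{[\omega]}$ must lie in $\mathfrak{Q}(M)$ because $\QFT(h)$ kills it, and hence that $\bbone_N-\weyl_{[\omega^\prime]}$, with $\ev_{[\omega^\prime]}=\PSV(f)\ev_{[\omega]}\neq0$, lies in the closed two-sided $\ast$-ideal $\mathfrak{Q}(N)$. Where you genuinely diverge is in how properness is contradicted. The paper does not use simplicity: since $\ev_{[\omega^\prime]}\notin\rad_N$, it picks $\ev_{[\theta]}\in\PSV(N)$ with $\tau_N(\ev_{[\theta]},\ev_{[\omega^\prime]})\notin2\pi\bbZ$ (rescaling $\theta$ if necessary) and uses the Weyl relations to show that $\weyl_{[-\theta]}\big(\bbone_N-\weyl_{[\omega^\prime]}\big)\weyl_{[\theta]}$ combines with $\bbone_N-\weyl_{[\omega^\prime]}$ to produce the non-zero multiple $\big(1-\myexp^{i\tau_N(\ev_{[\omega^\prime]},\ev_{[\theta]})}\big)\,\bbone_N$ of the identity inside $\mathfrak{Q}(N)$, so the ideal contains the unit. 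You instead note that $N$ has compact Cauchy hypersurfaces, so $\tau_N$ is symplectic by Proposition~\ref{prpRadForms}; by Remark~\ref{remRestriction} $\QFT(N)$ is then the standard Weyl algebra of a symplectic vector space, which is simple, and since $\bbone_N-\weyl_{[\omega^\prime]}\neq0$ (linear independence of the generators plus faithfulness of the minimal regular norm), the ideal it generates is everything. Both routes are sound; yours is shorter where simplicity is available, but it relies on the non-degeneracy of $\tau_N$ and on the external simplicity theorem (make sure the reference you quote covers simplicity and not only the uniqueness of the $C^\ast$-norm stated in Remark~\ref{remRestriction}), whereas the paper's explicit commutation trick only needs $\PSV(f)\ev_{[\omega]}$ to pair non-trivially modulo $2\pi\bbZ$ with some observable, which is exactly why the same computation is reused verbatim in Chapter~\ref{chYangMills} for the affine-character model, where the target CCR algebra of a presymplectic Abelian group is not simple and only membership in $\expcnt_Q$ matters. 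Two minor remarks: Proposition~\ref{prpQuantInj} plays no role in this proof (it is needed for isotony, not here), and your parenthetical appeal to $\rad_N=\{0\}$ is indeed established in Example~\ref{exaNoQuotientForms} together with the non-triviality of $\PSV(f)\ev_{[\omega]}$, so no further checking is required.
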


\begin{proof}
Let $m\geq2$ and $k\in\{1,\dots,m-1\}$.
By contradiction, suppose $\mathfrak{Q}:\GHyp\to\CAlg$ is a quotientable subfunctor of $\QFT:\GHyp\to\CAlg$ 
such that locality is recovered taking the quotient by $\mathfrak{Q}$. 
According to\ Theorem\ \ref{thmLocNotRecoveredForms} (consider the first part of the example for $m\geq3$, 
$k\in\{1,\dots,m-2\}$ and the second for $m\geq2$, $k=m-1$), we have a diagram in $\GHyp$ of the form 
\begin{equation*}
\xymatrix{
N && O\\
&M\ar[lu]^f\ar[ru]_h
}
\end{equation*}
and a non-trivial element $\ev_{[\omega]}$ of $\PSV(M)$ which is annihilated by $\PSV(h)$, 
but such that $\PSV(f)\ev_{[\omega]}$ does not lie in the null space of $\PSV(N)$. 
It follows that $\bbone_M-\weyl_{[\omega]}\in\QFT(M)$ is annihilated by $\QFT(h)$. 
Therefore $\bbone_M-\weyl_{[\omega]}$ must lie in the $\ast$-ideal $\mathfrak{Q}(M)$. 
In fact, per hypothesis, taking the quotient recovers injectivity. 
On the other side, defining $\ev_{[\omega^\prime]}=\PSV(f)\ev_{[\omega]}\in\PSV(N)$, 
there exists $\ev_{[\theta]}\in\PSV(N)$ such that $\tau_N(\ev_{[\theta]},\ev_{[\omega^\prime]})\notin2\pi\bbZ$. 
Using Weyl relations\ \eqref{eqWeylRel}, one gets 
\begin{equation*}
\weyl_{[-\theta]}\,\big(\bbone_N-\weyl_{[\omega^\prime]}\big)\,\weyl_{[\theta]}
=\bbone_N-\myexp^{-i\tau_N(\ev_{[\omega^\prime]},\ev_{[\theta]})}\,\weyl_{[\omega^\prime]}\,,
\end{equation*}
whence 
\begin{equation*}
\bbone_N-\weyl_{[\omega^\prime]}-\myexp^{i\tau_N(\ev_{[\omega^\prime]},\ev_{[\theta]})}\,
\weyl_{[-\theta]}\,\big(\bbone_N-\weyl_{[\omega^\prime]}\big)\,\weyl_{[\theta]}
=\big(1-\myexp^{i\tau_N(\ev_{[\omega^\prime]},\ev_{[\theta]})}\,\big)\,\bbone_N
\end{equation*}
is a non-zero multiple of the identity, 
$\tau_N(\ev_{[\theta]},\ev_{[\omega^\prime]})$ not being an integer multiple of $2\pi$. 
Since we proved that $\bbone_M-\weyl_{[\omega]}$ lies in the $\ast$-ideal $\mathfrak{Q}(M)$ and 
$\mathfrak{Q}$ being a subfunctor, we deduce that $\bbone_N-\weyl_{[\omega^\prime]}$ lies $\mathfrak{Q}(N)$. 
However, $\mathfrak{Q}(N)$ is a $\ast$-ideal because $\mathfrak{Q}$ is quotientable by assumption. 
In particular, $\weyl_{[-\theta]}\,(\bbone_N-\weyl_{[\omega^\prime]})\,\weyl_{[\theta]}$ 
must lie in $\mathfrak{Q}(N)$ as well, 
and so does any linear combination with $\bbone_N-\weyl_{[\omega^\prime]}$, 
see in particular the left-hand-side of the equation displayed above. 
Yet, this is a non-zero multiple of the identity, therefore $\mathfrak{Q}(N)$ is a $\ast$-ideal of $\QFT(N)$
containing $\bbone_N$, whence it coincides with $\QFT(N)$. 
This contradicts the hypothesis that $\mathfrak{Q}(N)$ is a proper $\ast$-ideal of $\QFT(N)$. 
\end{proof}

Notice that, exactly as in the classical situation, see\ Remark\ \ref{remLocViolationForms} 
and\ Remark\ \ref{remNoQuotientLocForms}, in dimension $m\geq3$ and for any degree $k\in\{1,\dots,m-2\}$, 
for the covariant quantum field theory $\QFT:\GHyp\to\CAlg$ the violation of locality and the impossibility 
to recover it via quotients already occurs for $m$-dimensional connected globally hyperbolic spacetimes. 

It remains only to show that at least isotony in the sense of Haag and Kastler\ \cite{HK64} 
can be lifted to the quantum theory. 

\begin{theorem}\index{isotony}
Let $m\geq2$, $k\in\{1,\dots,m-1\}$ and take an $m$-dimensional globally hyperbolic spacetime $M$. 
Consider the covariant functor $\PSV_M:\GHyp_M\to\PSymV$ introduced in\ Theorem\ \ref{thmIsotonyForms}. 
Then the covariant functor $\QFT_M=\CCR\circ\PSV_M:\GHyp_M\to\CAlg$ satisfies the isotony property, 
namely $\QFT_M$ assigns an injective morphism in $\CAlg$ to each morphism in $\GHyp_M$. 
Furthermore, both causality and the time-slice axiom hold for $\QFT_M:\GHyp_M\to\CAlg$. 
\end{theorem}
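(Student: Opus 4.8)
The statement is the quantum counterpart of Theorem~\ref{thmIsotonyForms}, so the plan is to transport each of the three properties across the quantization functor $\CCR:\PSymV\to\CAlg$ (the restriction to presymplectic vector spaces of the functor of Theorem~\ref{thmQuantFunctor}). First I would record that $\QFT_M=\CCR\circ\PSV_M$ is a covariant functor, being the composite of two covariant functors, so there is nothing to check for functoriality itself. The whole proof then reduces to three transport arguments, one per property.

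\textbf{Isotony.} For each morphism $\gamma$ in $\GHyp_M$, Theorem~\ref{thmIsotonyForms} tells us that $\PSV_M(\gamma)$ is an injective presymplectic linear map. A presymplectic linear map is in particular an injective presymplectic homomorphism of presymplectic Abelian groups, so Proposition~\ref{prpQuantInj} applies and yields that $\CCR(\PSV_M(\gamma))=\QFT_M(\gamma)$ is an isometric $\ast$-homomorphism, hence injective. This is the core of the statement and it is essentially immediate given the earlier results; I do not expect any obstacle here, since the heavy lifting—preservation of injectivity by $\CCR$—is precisely Proposition~\ref{prpQuantInj}.

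\textbf{Causality and time-slice.} These are inherited from the corresponding properties of $\PSV_M:\GHyp_M\to\PSymV$ established in Theorem~\ref{thmIsotonyForms}. For causality: if $f:M_1\to N$ and $h:M_2\to N$ are morphisms in $\GHyp_M$ with causally disjoint images, then $\tau_N(\PSV_M(f)\,\PSV_M(M_1),\PSV_M(h)\,\PSV_M(M_2))=\{0\}$; via the Weyl relations~\eqref{eqWeylRel}, generators $\weyl_{[\omega_1]}$ and $\weyl_{[\omega_2]}$ coming from the two subspaces satisfy $\weyl_{[\omega_1]}\weyl_{[\omega_2]}=\weyl_{[\omega_2]}\weyl_{[\omega_1]}$ because the symplectic pairing of their labels vanishes, and since such generators span dense subalgebras, the $C^\ast$-subalgebras $\QFT_M(f)(\QFT_M(M_1))$ and $\QFT_M(h)(\QFT_M(M_2))$ commute. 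For the time-slice axiom: if $\gamma$ is a Cauchy morphism in $\GHyp_M$, then $\PSV_M(\gamma)$ is an isomorphism in $\PSymV$ by Theorem~\ref{thmIsotonyForms}, and applying the functor $\CCR$ turns isomorphisms into isomorphisms, so $\QFT_M(\gamma)$ is an isomorphism of $C^\ast$-algebras.

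\textbf{Main obstacle.} There is essentially no genuine obstacle: every ingredient—preservation of injectivity (Proposition~\ref{prpQuantInj}), functoriality of $\CCR$ (Theorem~\ref{thmQuantFunctor}), the classical isotony/causality/time-slice facts (Theorem~\ref{thmIsotonyForms}), and the Weyl relations~\eqref{eqWeylRel}—is already available in the excerpt. The only point requiring a little care is the dense-subalgebra argument for causality: one must note that the Weyl generators generate a norm-dense $\ast$-subalgebra of $\QFT_M(N)$ and that commutation of generators propagates by continuity and bilinearity to the generated $C^\ast$-subalgebras. This is routine, mirroring exactly the argument used for the full functor $\QFT:\GHyp\to\CAlg$ in the preceding theorem.
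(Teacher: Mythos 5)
Your proposal is correct and follows exactly the paper's own argument: isotony via Proposition~\ref{prpQuantInj} applied to the injective morphisms provided by Theorem~\ref{thmIsotonyForms}, with causality inherited through the Weyl relations and the time-slice axiom through functoriality of $\CCR$. The added detail on density of the Weyl generators is a harmless elaboration of what the paper leaves implicit.
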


\begin{proof}
According to\ Theorem\ \ref{thmIsotonyForms}, $\PSV_M$ assigns an injective morphism in $\PSymV$ 
to each morphism in $\GHyp_M$. Since $\CCR$ preserves injectivity, {\em cfr.}\ Proposition\ \ref{prpQuantInj}, 
we conclude that $\QFT_M:\GHyp_M\to\CAlg$ satisfies the isotony property. 
Both causality and the time-slice axiom are inherited from $\PSV_M:\GHyp_M\to\PSymV$ 
exactly as in the general case. 
\end{proof}

\paginavuota

\chapter{$U(1)$ Yang-Mills model}\label{chYangMills}\index{Yang-Mills model}
We present an approach for the quantization of the Yang-Mills model with $U(1)$ as structure group. 
The dynamical degrees of freedom of this model are connections on principal $U(1)$-bundles 
over globally hyperbolic spacetimes, which form an affine space modeled on an infinite dimensional vector space. 
The underlying gauge symmetry is induced by the action of principal bundle automorphisms 
covering the identity on the base manifold. 
Physically, the model describes free electromagnetism without external sources. 
As in the previous chapter, our aim is to implement general local covariance\ \cite{BFV03}. 
Also in this context we will encounter issues in relations to injectivity of morphisms at the level of observables, 
both in the classical and in the quantum case. In particular, we will adopt two different approaches. 
The first one, based on\ \cite{BDS14a}, considers gauge invariant affine smooth functionals 
on on-shell $U(1)$-connections. It turns out that in this situation the requirement of invariance 
under gauge transformations restricts too much the class of functionals, 
thus leading to the failure in distinguishing certain gauge-inequivalent field configurations. 
In particular, flat connections, which are tightly related to the Aharonov-Bohm effect, 
see\ {\em e.g.}\ \cite[Example 6.6.1]{MS00}, are not detected by gauge invariant affine functionals. 
This observation motivates our second attempt, based on\ \cite{BDHS14}. 
Instead of affine functionals, we consider their complex exponential. 
This fact significantly weakens the constraint imposed by gauge invariance. In particular, it turns out that now 
on-shell field configurations can be separated (up to gauge) by evaluation on this class of exponential functionals.  
In particular, also flat connections can be detected in this approach. 
We also show that these functionals are closely related to Wilson loops. 
Both approaches provide covariant functors which satisfy \cite[Definition\ 2.1]{BFV03} up to injectivity. 
The failure of injectivity turns out to be related to subfunctors describing certain cohomology groups. 
In the first case, injectivity can be recovered by a suitable quotient, 
which might be interpreted as enforcing the idea that no electric charge should be detectable 
and hence admissible field configurations should carry no electric flux. 
In the second case, similarly to Maxwell $k$-forms, {\em cfr.}\ Theorem\ \ref{thmLocNotRecoveredForms}, 
we prove that there exists no quotientable subfunctor which allows to recover injectivity. 
Once more, a procedure in the spirit of the Haag-Kastler axioms \cite{HK64} provides a setting 
which allows to recover isotony, {\em cfr.}\ Subsection\ \ref{subHKForms}. 

Note that, unless otherwise stated, in this chapter $G$ denotes the Lie group $U(1)$, 
while $\lieg$ stands for the associated Lie algebra $i\bbR$.

\section{Gauge symmetry and dynamics}\label{secGaugeDynamicsYM}
To start with, we specialize the discussion of\ Section\ \ref{secPrBun} about connections for principal $G$-bundles 
to the case in which the Lie group $G$ is the Abelian Lie group $U(1)$. In particular, we introduce 
the space of off-shell configurations relevant to the present model and we recall the notion of gauge equivalence. 
Afterwards, by means of the curvature map, see\ Definition\ \ref{defCurvMap}, we present the equation of motion. 

In this section $M$ is a fixed $m$-dimensional globally hyperbolic spacetime, 
and we consider a principal $G$-bundle $P$ on top of it. 
The affine space of connections $\lambda\in\sect(\conn(P))$ provides off-shell field configurations for the model 
we are presenting, see\ Section\ \ref{secConn} for the notion of a connection on a principal $G$-bundle. 
Note that $\sect(\conn(P))$ is an affine space modeled on the vector space $\f^1(M,\lieg)$ 
on account of the comment following Definition\ \ref{defConn} and of Proposition\ \ref{prpTrivAdBun}. 

We do not regard connections on $P$ as the physically relevant objects of the model, 
rather we consider equivalence classes of such connections 
with respect to the natural notion of gauge transformation which is encoded by the geometry of principal bundles, 
see\ Section\ \ref{secGaugeTr}. As explained in\ Remark\ \ref{remGaugeConnAb}, since $G$ is Abelian, 
gauge transformations on $P$ can be identified with smooth $G$-valued functions on $M$ 
and a gauge transformation $f\in\c(M,G)$ acts on a connection $\lambda\in\sect(\conn(P))$ 
by translation exploiting the affine structure of $\sect(\conn(P))$: 
\begin{equation}\label{eqGaugeConnAb}
\lambda\mapsto\lambda-f^\ast\mu\,,
\end{equation}
where $\mu\in\f^1(G,\lieg)$ is the Maurer-Cartan form on the Abelian Lie group $G$. 
Let us introduce the set $\gau_P$ of gauge shifts, 
which encompasses all possible shifts by translation induced by gauge transformations of the principal bundle $P$: 
\begin{equation}\label{eqGaugeShifts}
\gau_P=\left\{f^\ast\mu\,:\;f\in\c(M,G)\right\}\,.
\end{equation}
As we will see in\ Subsection \ref{subGaugeConnAb}, $\gau_P$ is only an Abelian group. 
We consider the orbit space $[\sect(\conn(P))]$ of $\sect(\conn(P))$ 
under translation by $\gau_P$ as the relevant object. This amounts to the following equivalence relation: 
For $\lambda,\lambda^\prime\in\sect(\conn(P))$, we set
\begin{align*}
\lambda\sim\lambda^\prime && \iff && \exists\,f\in\c(M,G)\,:\;\lambda-f^\ast\mu=\lambda^\prime\,.
\end{align*}
Note that in general the orbit space $[\sect(\conn(P))]$ is no longer an affine space. 
In fact, taking the orbits under translation by $\gau_P$, which is only an Abelian group, breaks the affine structure. 

To introduce the equation of motion, we exploit the curvature map, see\ Remark\ \ref{remCurvAb}. 
The curvature map $F:\sect(\conn(P))\to\f^2(M,\lieg)$ provides an affine differential operator 
in the sense of\ \cite[Definition\ 3.1]{BDS14b}, {\em cfr.}\ Remark\ \ref{remCurvAb}. 
In particular, we remind the reader that the linear part of $F$ is given by $F_V=-\dd:\f^1(M,\lieg)\to\f^2(M,\lieg)$. 
Furthermore, we can directly check that the curvature does not change under gauge transformations, 
meaning that $F$ descends to a map $[\sect(\conn(P))]\to\f^2(M,\lieg)$. 
In fact, $\mu$ is the Maurer-Cartan form for an Abelian Lie group, hence it is closed. 

We introduce a new affine differential operator composing the curvature map with the codifferential: 
\begin{align}\label{eqMaxwellOp}
\MW:\sect(\conn(P))\to\f^1(M,\lieg)\,, && \lambda\mapsto\de F(\lambda)\,.
\end{align}
$\MW$ is an affine differential operator of second order on account of the fact that it is obtained 
composing an affine differential operator with a linear differential operator, both being of first order. 
In fact, its linear part is given by $\MW_V=\de\circ F_V=-\de\dd:\f^1(M,\lieg)\to\f^1(M,\lieg)$. 
Let us mention that, up to a sign, $\MW_V$ is the differential operator 
which rules the dynamics for Maxwell $1$-forms, see\ Chapter\ \ref{chMaxwell}. 
The dynamics for the Yang-Mills model with structure group $G=U(1)$ is specified 
by the affine differential operator $\MW$: 
\begin{align}\label{eqOnShellConn}
\MW(\lambda)=0\,, && \lambda\in\sect(\conn(P))\,.
\end{align}

\begin{remark}[Existence of solutions]\label{remYMSolExist}
If the base space is a globally hyperbolic spacetime $M$, solutions of eq.\ \eqref{eqOnShellConn} always exist. 
Let $\lambda\in\sect(\conn(P))$ be a connection and consider the equation $\de\dd\omega=\MW(\lambda)$ 
for $\omega\in\f^1(M,\lieg)$. Since $\MW(\lambda)=\de F(\lambda)$ is coexact, a solution $\omega$ exists. 
In fact, considering a partition of unity $\{\chi_+,\chi_-\}$ on $M$ such that 
$\chi_+=1$ in a past compact region, while $\chi_-=1$ in a future compact one, 
one can check that $\omega=\de(G^+(\chi_+F(\lambda))+G^-(\chi_-F(\lambda)))\in\f^1(M,\lieg)$ 
solves the equation $\de\dd\omega=\MW(\lambda)$, 
where $G^\pm$ is the retarded/advanced Green operator for $\Box$ acting on $\lieg$-valued $1$-forms. 
From $\MW_V=-\de\dd$ it follows that $\lambda+\omega$ solves eq.\ \eqref{eqOnShellConn}. 
\end{remark}

In agreement with the equation of motion displayed above, we introduce the space of on-shell connections: 
\begin{equation}\label{eqSolYM}
\sol_P=\left\{\lambda\in\sect(\conn(P))\,:\;\MW(\lambda)=0\right\}\,.
\end{equation}
This is indeed an affine space modeled on the space of solutions of the equation $\de\dd\omega=0$ 
for $1$-forms over $M$, {\em cfr.}\ \eqref{eqSolForms} for $k=1$. 
In the next remark we comment upon certain naturality property of $\MW$ and the consequent functorial behaviour 
of $\sol_P$. In order to make sense of the following statements, let us define a suitable category 
of principal $G$-bundles over globally hyperbolic spacetimes. 

\begin{definition}\index{principal bundle!category}
Let $m\geq2$ and $G=U(1)$. The {\em category of principal $G$-bundles over globally hyperbolic spacetimes} 
$\PrBun_\GHyp$ has principal $G$-bundles over $m$-dimensional globally hyperbolic spacetimes as objects 
and principal bundle maps covering causal embeddings as arrows. 
\end{definition}

\begin{remark}[Naturality of $\MW$]\label{remMWNaturality}
According to\ Remark\ \ref{remCurvAb}, the curvature map $F:\sect(\conn(\cdot))\to\f^2((\cdot)_\base,\lieg)$ 
is a natural transformation between contravariant functors from $\PrBun_\GHyp$ to $\Aff$. 
Furthermore, according to\ Remark\ \ref{remdddeBoxNaturality}, 
$\de:\f^2((\cdot)_\base,\lieg)\to\f^1((\cdot)_\base,\lieg)$ 
is a natural transformation between contravariant functors from $\PrBun_\GHyp$ to $\Vec$. 
Regarding vector spaces as affine spaces modeled on themselves, we get a covariant functor 
from the category $\Vec$ of vector spaces to the category $\Aff$ of affine spaces, 
therefore $\MW=\de\circ F:\sect(\conn(\cdot))\to\f^1((\cdot)_\base,\lieg)$ 
is a natural transformation between contravariant functors from $\PrBun_\GHyp$ to $\Aff$. 
This means that $f:P\to Q$ induces the following commutative diagram in $\Aff$:
\begin{equation*}
\xymatrix@C=2.5em{
\sect(\conn(Q))\ar[d]_{\MW}\ar[r]^{\sect(\conn(f))} & \sect(\conn(P))\ar[d]^{\MW}\\
\f^1(N,\lieg)\ar[r]_{\ul{f}^\ast} & \f^1(M,\lieg)
}
\end{equation*}
This diagram shows that $\sect(\conn(f))$ maps $\sol_Q$ to $\sol_P$. 
Therefore we can regard the assignment of the affine space $\sol_P$ to each object $P$ in $\PrBun_\GHyp$ 
and of the affine map $\sect(\conn(f)):\sol_Q\to\sol_P$ as a contravariant functor 
from $\PrBun_\GHyp$ to the category $\Aff$ of affine spaces. 
\end{remark}

As already mentioned, only gauge equivalence classes of connections are relevant 
and not their individual representatives. 
Since gauge transformations do not affect the dynamics (in fact, they do not even change the curvature), 
we can introduce the space of gauge equivalence classes of on-shell connections on $P$: 
\begin{equation*}
[\sol_P]=\sol_P/\gau_P.
\end{equation*}
As $[\sect(\conn(P))]$, $[\sol_P]$ is not an affine space due to the equivalence relation induced by $\gau_P$. 

\subsection{How gauge transformations act on connections}\label{subGaugeConnAb}
As before, $P$ is a fixed principal bundle with the Lie group $G=U(1)$ as structure group. 
Its base space is a globally hyperbolic spacetime $M$ of dimension $m$. 
Before we proceed with the construction of functionals defined on the space $[\sol_P]$ 
of on-shell connections up to gauge, it is convenient to shed light 
on the action of gauge transformations on connections as in\ \cite[Section\ 4]{BDS14a}. 

Let $\mu\in\f^1(G,\lieg)$ denote the Maurer-Cartan form on an Abelian Lie group $G$ 
and recall that $\mu$ is closed, whence, for each $f\in\c(M,G)$, $f^\ast\mu\in\f^1(M,\lieg)$ is closed too. 
Therefore we can introduce the following homomorphism of Abelian groups: 
\begin{align}\label{eqConnGaugeHom}
T:\c(M,G)\to\f^1_\dd(M,\lieg)\,, && f\mapsto f^\ast\mu\,,
\end{align}
where $\c(M,G)$ is endowed with the structure of an Abelian group specified by pointwise multiplication in $G$, 
while $\f^1_\dd(M,\lieg)$ is regarded as an Abelian group under addition. 
Eq.\ \eqref{eqGaugeConnAb} means that gauge transformations act on connections by translation 
with $1$-forms in the image $T(\c(M,G))\subseteq\f^1_\dd(M,\lieg)$ of the homomorphism $T$. 

\begin{remark}
To check that the map in\ \eqref{eqConnGaugeHom} is a group homomorphism, 
consider $f,g\in\c(M,G)$ and evaluate $(fg)^\ast\mu$ on a tangent vector $v\in TM$ over $x\in M$. 
Denoting with $\Delta:G\times G\to G$ the group multiplication, we can rewrite $fg$ 
as a composition $\Delta\circ(f,g)$. This allows us to compute $(fg)_\ast v$: 
\begin{equation*}
(fg)_\ast v=l_{f(x)\,\ast}g_\ast v+r_{g(x)\,\ast}f_\ast v\,,
\end{equation*}
where $l_h=\Delta(h,\cdot):G\to G$ is the left-action of $h\in G$ on $G$, 
while $r_h=\Delta(\cdot,h):G\to G$ is the right-action of $h\in G$ on $G$. 
Since $G$ is Abelian, $l_h=r_h$ for each $h\in G$. 
Therefore, by definition of the Maurer-Cartan form, we conclude that
\begin{equation*}
(fg)^\ast\mu(v)=\mu(l_{f(x)\,\ast}g_\ast v)+\mu(r_{g(x)\,\ast}f_\ast v)=f^\ast\mu(v)+g^\ast\mu(v)\,,
\end{equation*}
which proves that $T$ is a group homomorphism. 
\end{remark}

The next lemma characterizes the kernel of the homomorphism $T$ introduced in\ \eqref{eqConnGaugeHom}, 
hence it provides full information about its image, 
which is indeed isomorphic to the quotient of the domain $\c(M,G)$ of $T$ by its kernel. 

\begin{lemma}\label{lemGaugeConnKer}
Let $G$ be an Abelian Lie group and consider the group homomorphism $T:\c(M,G)\to\f^1_\dd(M,\lieg)$ 
defined in\ \eqref{eqConnGaugeHom}. The kernel of $T$ consists of locally constant $G$-valued functions on $M$: 
\begin{equation*}
\ker(T)=\left\{f\in\c(M,G)\,:\;f\mbox{ is locally constant}\right\}\,.
\end{equation*}
\end{lemma}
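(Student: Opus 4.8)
The statement identifies $\ker(T)$ with the locally constant $G$-valued functions, where $T(f)=f^\ast\mu$. The plan is to prove the two inclusions separately, using the fact that $G=U(1)$ (equivalently any Abelian Lie group of dimension one, which is all we need here) is connected and that $f^\ast\mu$ encodes exactly the derivative of $f$ transported to the Lie algebra.

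\emph{The easy inclusion $\supseteq$.} First I would observe that if $f\in\c(M,G)$ is locally constant, then on each connected component $U$ of $M$ the map $f$ is constant, hence $f_\ast v=0$ for every $v\in TM$ over a point of $U$. By the very definition of the pull-back, $(f^\ast\mu)(v)=\mu(f_\ast v)=\mu(0)=0$, so $f^\ast\mu=0$ and $f\in\ker(T)$. This is immediate and requires no calculation beyond unwinding definitions.

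\emph{The harder inclusion $\subseteq$.} Suppose $f\in\c(M,G)$ satisfies $f^\ast\mu=0$. Since $M$ is a disjoint union of connected (in fact path-connected, as $M$ is a manifold) components, it suffices to show $f$ is constant on each component; fix a connected component and, without loss of generality, work on it. I would show that $f$ is locally constant by a connectedness argument: the condition $f^\ast\mu=0$ says precisely that for every $x\in M$ and every $v\in T_xM$ one has $\mu(f_\ast v)=0$; because $\mu$ is a pointwise linear isomorphism $T_gG\to\lieg$ (it is the Maurer-Cartan form, $\mu(\Xi)=g^{-1}\Xi$), this forces $f_\ast v=0$ for all $v$, i.e.\ the differential $\dd f$ vanishes identically. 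A smooth map with vanishing differential on a connected manifold is constant — the standard argument is that the set where $f$ equals a fixed value $f(x_0)$ is non-empty, closed (by continuity), and open (since $\dd f=0$ locally implies $f$ is constant in a chart), hence all of the component. Therefore $f$ is constant on each connected component, i.e.\ locally constant on $M$.

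\emph{Main obstacle.} There is no serious obstacle; the only point that deserves care is the justification that $\mu$ being a fiberwise linear isomorphism lets one pass from $\mu(f_\ast v)=0$ to $f_\ast v=0$, and then the (elementary but worth stating) passage from $\dd f\equiv 0$ to local constancy via the connectedness of the components of $M$. One should also note explicitly that $M$ need not be connected — Definition\ \ref{defManifold} only requires second countability and the Hausdorff property — which is exactly why the conclusion is phrased as \emph{locally} constant rather than constant. Once these points are in place the proof is complete.
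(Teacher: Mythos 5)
Your proof is correct and follows essentially the same route as the paper: the forward inclusion is definitional, and the converse uses that the Maurer--Cartan form is a fiberwise isomorphism $T_gG\to\lieg$ to deduce $f_\ast=0$, hence local constancy (your explicit open/closed argument on components just spells out the last step the paper leaves implicit). One small remark: neither connectedness of $G$ nor $G=U(1)$ is actually needed here — the argument works verbatim for any Lie group, so your opening parenthetical restricting to the one-dimensional case is unnecessary.
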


\begin{proof}
It is clear that, if $f$ is locally constant, $Tf=f^\ast\mu$ vanishes since $f_\ast$ is zero on each fiber of $TM$. 
For the converse, consider $f\in\ker(T)$. Therefore, for each $x\in M$ and $v\in T_xM$, $\mu(f_\ast v)=0$. 
Since the Maurer-Cartan form provides an isomorphism between the vector spaces $T_gG$ and $\lieg$ 
for each $g\in G$, we deduce that $f_\ast v=0$ for each $x\in M$ and $v\in T_xM$, whence $f$ is locally constant. 
\end{proof}

Let $\exp:\lieg\to G$ denote the exponential map and observe that, since $G$ is Abelian, 
this is a group homomorphism, where $\lieg$ is regarded as an Abelian group under addition. 

\begin{lemma}\label{lemExpConAbSurj}
Let $G$ be a connected Abelian Lie group. The group homomorphism $\exp:\lieg\to G$ 
defined by the exponential map is surjective. 
\end{lemma}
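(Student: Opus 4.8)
The plan is to use the standard fact that the exponential map of a Lie group is a local diffeomorphism near the identity, together with connectedness, to show surjectivity by an open-closed argument on the image. First I would recall that $\exp:\lieg\to G$ maps a neighbourhood $V$ of $0\in\lieg$ diffeomorphically onto a neighbourhood $U=\exp(V)$ of the identity $e\in G$; this is the inverse function theorem applied to $\exp$, whose differential at $0$ is the identity on $\lieg$. Since $G$ is Abelian, $\exp$ is a group homomorphism, as already noted before the statement, so its image $H=\exp(\lieg)$ is a subgroup of $G$.

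Next I would argue that $H$ is open in $G$. Indeed, for each $h\in H$, writing $h=\exp(\xi)$, the set $h\cdot U=\exp(\xi)\exp(V)=\exp(\xi+V)$ is an open neighbourhood of $h$ contained in $H$, using again that $\exp$ is a homomorphism; hence $H$ is a union of open sets, thus open. Then I would invoke the elementary topological fact that an open subgroup of a topological group is automatically closed, since its complement is a union of cosets, each of which is open (being a translate of the open set $H$). Therefore $H$ is a non-empty clopen subset of $G$.

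Finally, since $G$ is connected and $H$ is non-empty, open and closed, we conclude $H=G$, i.e. $\exp:\lieg\to G$ is surjective. I do not expect any serious obstacle here: the only inputs are the inverse function theorem (to get $\exp$ a local diffeomorphism at $0$), the Abelian hypothesis (to make $\exp$ a homomorphism, so that its image is a subgroup and translations by image elements preserve the image), and connectedness (to run the clopen argument). If one wished to avoid even quoting ``open subgroups are closed,'' one could instead note directly that $G\setminus H=\bigcup_{g\notin H} gH$ is open; this is the only place worth spelling out, and it is routine. One could also remark that this lemma applies in particular to $G=U(1)$ with $\lieg=i\bbR$ and $\exp(it)=e^{it}$, which is the case actually used in the sequel.
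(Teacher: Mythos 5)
Your proof is correct. It takes a slightly different route from the paper's: the paper fixes an arbitrary $g\in G$, invokes the standard fact that a neighbourhood $U=\exp(V)$ of the identity generates a connected group, writes $g=\exp(\xi_1)\cdots\exp(\xi_n)$, and only then uses commutativity to collapse the product into $\exp(\xi_1+\dots+\xi_n)$; you instead observe that the image $H=\exp(\lieg)$ is a subgroup (here is where you use that $\exp$ is a homomorphism in the Abelian case), show $H$ is open via translates $\exp(\xi+V)$, and conclude by the clopen argument that $H=G$. The two arguments use the same ingredients — local diffeomorphism at $0$, the homomorphism property, connectedness — but lean on different auxiliary facts: the paper quotes ``a neighbourhood of $e$ generates a connected group,'' you quote ``an open subgroup is closed.'' Yours has the small advantage of making the Abelian hypothesis enter only once (to make the image a subgroup stable under the translates you use), while the paper's makes explicit where a non-Abelian group would escape (one cannot merge $\exp(\xi_1)\cdots\exp(\xi_n)$ into a single exponential); both are equally rigorous and either would serve the sequel, where only $G=U(1)$ is needed.
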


\begin{proof}
Let $g\in G$ be arbitrary but fixed. $\exp$ is a local diffeomorphism onto an open neighborhood $U\subseteq G$ 
of the identity. Since $G$ is connected, $U$ generates the whole $G$. Therefore we find $\xi_1,\dots,\xi_n\in\lieg$ 
such that $\exp(\xi_1)\cdots\exp(\xi_n)=g$. Since $\exp$ is a homomorphism for $G$ Abelian, 
we conclude that $\xi=\xi_1+\dots+\xi_n\in\lieg$ is such that $\exp(\xi)=g$. 
\end{proof}

We can exhibit a large class of gauge transformations 
by composition of an arbitrary $\lieg$-valued smooth function on $M$ with $\exp$. 
Gauge transformations of this form shift connections by exact $1$-forms. 
In fact, for $\chi\in\c(M,\lieg)$ and $f=\exp(\chi)\in\c(M,G)$, 
by the properties of the exponential map of an Abelian Lie group, we deduce that $f^\ast\mu=\dd\chi$. 
This means that $T$ descends to a group homomorphism 
\begin{equation}\label{eqConnGaugeHomQuot}
\tilde{T}:\frac{\c(M,G)}{\exp(\c(M,\lieg))}\to\hdd^1(M,\lieg)\,.
\end{equation}
$\tilde{T}$ gives us information about the obstruction to reproduce all gauge transformations $\c(M,G)$ 
via $\exp$ acting on $\c(M,\lieg)$. Since we know that this kind of exponential gauge transformations 
shift connections by exact forms, this part of the gauge group is closely related 
to the one considered for Maxwell $1$-forms, {\em cfr.}\ Section\ \ref{secGaugeDynForms}. 
In the following lemma we will show that the obstructions to having only gauge transformations 
of exponential type are captured by a subgroup of $\hdd^1(M,\lieg)$. 

\begin{lemma}\label{lemGaugeConnInj}
Let $G$ be a connected Abelian Lie group. The group homomorphism 
$\tilde{T}:\c(M,G)/\exp(\c(M,\lieg))\to\hdd^1(M,\lieg)$ defined in\ \eqref{eqConnGaugeHomQuot} is injective. 
\end{lemma}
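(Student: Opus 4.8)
The plan is to show that the induced homomorphism $\tilde{T}$ from the quotient group $\c(M,G)/\exp(\c(M,\lieg))$ to $\hdd^1(M,\lieg)$ has trivial kernel. So I would start with a gauge transformation $f\in\c(M,G)$ such that $T f = f^\ast\mu$ is an exact $\lieg$-valued $1$-form on $M$, say $f^\ast\mu=\dd\chi$ for some $\chi\in\c(M,\lieg)$. The goal is to conclude that $f$ itself lies in $\exp(\c(M,\lieg))$, i.e. that $f$ differs from $\exp(\chi)$ by something that $\tilde{T}$ does not see --- in fact I want to show $f=\exp(\chi')$ for a suitable $\chi'\in\c(M,\lieg)$, which is exactly the statement that $[f]$ is the identity in the quotient.

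First I would consider the function $g=f\cdot\exp(\chi)^{-1}=f\cdot\exp(-\chi)\in\c(M,G)$. Since $T$ is a group homomorphism (established in the remark after \eqref{eqConnGaugeHom}) and $T(\exp(-\chi))=-\dd\chi$ while $Tf=\dd\chi$, we get $Tg=T f + T(\exp(-\chi))=\dd\chi-\dd\chi=0$. By Lemma \ref{lemGaugeConnKer}, $g$ is locally constant. Now I would invoke the fact that a locally constant $G$-valued function with values in a connected Abelian Lie group $G$ is in the image of $\exp$: indeed, on each connected component of $M$ the function $g$ takes a single constant value $g_0\in G$, and by Lemma \ref{lemExpConAbSurj} (surjectivity of $\exp:\lieg\to G$ for connected Abelian $G$) there is $\xi_0\in\lieg$ with $\exp(\xi_0)=g_0$; assembling these constants over the (at most countably many) components gives a locally constant --- hence smooth --- function $\psi\in\c(M,\lieg)$ with $\exp(\psi)=g$. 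Then $f=g\cdot\exp(\chi)=\exp(\psi)\cdot\exp(\chi)=\exp(\psi+\chi)$, since $\exp$ is a homomorphism in the Abelian case. Setting $\chi'=\psi+\chi\in\c(M,\lieg)$ shows $f\in\exp(\c(M,\lieg))$, so $[f]$ is trivial in $\c(M,G)/\exp(\c(M,\lieg))$, proving injectivity of $\tilde T$.

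I do not expect any serious obstacle here; the only point requiring a little care is the step from "locally constant and $G$ connected" to "in the image of $\exp$", where one must make sure that choosing a logarithm $\xi_0$ of the constant value on each component produces a genuinely smooth (equivalently, locally constant) $\lieg$-valued function --- which is immediate because the choice can be made independently on each connected component and $M$ (being a second-countable manifold, cf. Definition \ref{defManifold}) has at most countably many components, each open. Everything else is a direct consequence of $T$ being a homomorphism, of $\exp$ being a homomorphism for Abelian $G$, and of Lemmas \ref{lemGaugeConnKer} and \ref{lemExpConAbSurj} already available in the excerpt.
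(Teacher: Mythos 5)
Your proof is correct and follows essentially the same route as the paper: form $f\cdot\exp(-\chi)$, use Lemma \ref{lemGaugeConnKer} to see it is locally constant, lift it through $\exp$ via Lemma \ref{lemExpConAbSurj}, and conclude $f=\exp(\tilde\chi+\chi)$. The extra care you take about choosing logarithms componentwise is a fair elaboration of the step the paper states more briefly, but it is not a different argument.
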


\begin{proof}
Let $f\in\c(M,G)$ be such that $f^\ast\mu=\dd\chi$ for some $\chi\in\c(M,\lieg)$. 
Defining $g=\exp(-\chi)\in\c(M,G)$, we have $T(fg)=T(f)+T(g)=0$ since $g^\ast\mu=-\dd\chi$. 
From\ Lemma\ \ref{lemGaugeConnKer} we deduce that $fg$ is locally constant. 
Exploiting connectedness of $G$, we can apply Lemma\ \ref{lemExpConAbSurj} to conclude that 
there exists a locally constant $\lieg$-valued function $\tilde{\chi}\in\c(M,\lieg)$ 
such that $fg=\exp(\tilde{\chi})$, whence $f=\exp(\tilde\chi+\chi)$. 
\end{proof}

The above discussion does not depend on the specific choice of a connected Abelian Lie group. 
Yet, we return now to the case of interest to us, namely $G=U(1)$. 
We will employ techniques from sheaf cohomology to characterize $\gau_P$ 
in terms of tractable mathematical objects. For sheaf theory, we refer the reader to the literature, 
{\em e.g.}\ \cite{Har11}. For an Abelian Lie group $H$, consider the sheaf $\c_M(\cdot,H)$ 
of smooth $H$-valued functions on a manifold $M$ (we will consider $H$ to be $G=U(1)$, $\lieg=i\bbR$ 
and $\bbZ$ regarded as a zero-dimensional Lie group). 
Recalling that a sequence of sheaves is exact if and only if, at each point of $M$, 
the corresponding sequence of stalks is exact, one can check that the short exact sequence of Abelian groups 
\begin{equation*}
0\lra\bbZ\overset{2\pi i}{\lra}\lieg=i\bbR\overset{\exp}{\lra}G=U(1)\lra0\,,
\end{equation*}
where $2\pi i:\bbZ\to\lieg$ denotes the homomorphism $z\in\bbZ\mapsto2\pi iz\in\lieg$, 
gives rise to a corresponding short exact sequence of sheaves: 
\begin{equation*}
0\lra\c_M(\cdot,\bbZ)\overset{2\pi i}{\lra}\c_M(\cdot,\lieg)\overset{\exp}{\lra}\c_M(\cdot,G)\lra0\,.
\end{equation*}
Exactness of the sequence of sheaves does not ensure that the homomorphism 
$\exp:\c(M,\lieg)\to\c(M,G)$ is surjective on global sections of the sheaf $\c_M(\cdot,\lieg)$. 
In fact, the obstruction to surjectivity is described by the long exact sequence in sheaf cohomology displayed below: 
\begin{equation*}
\xymatrix@C=1.9pc@R=.6pc{
0\ar[r] & \c(M,\bbZ)\ar[r] & \c(M,\lieg)\ar[r] & \c(M,G)\ar`r[r]`/9pt[d]`^d[lll]`^r/9pt[ddlll] [ddll] & \\
& & & & \\
& \coho^1(M,\c_M(\cdot,\bbZ))\ar[r] & \coho^1(M,\c_M(\cdot,\lieg))\ar[r] 
& \coho^1(M,\c_M(\cdot,G))\ar[r] & \cdots
}
\end{equation*}
$\c_M(\cdot,\lieg)$ is a soft sheaf because there exists an extension to $M$ 
for each real valued function defined on a closed subset of $M$, {\em cfr.}\ \cite[Example\ 9.4, Chapter\ II]{Bre97}. 
As a consequence of this fact, all cohomology groups $\coho^k(M,\c_M(\cdot,\lieg))$ vanish for $k\geq1$ 
and we get the following exact sequence out of the long one displayed above: 
\begin{equation*}
0\lra\c(M,\bbZ)\overset{\subseteq}{\lra}\c(M,\lieg)\overset{\exp}{\lra}\c(M,G)
\lra\coho^1(M,\c_M(\cdot,\bbZ))\lra0\,.
\end{equation*}
The next proposition follows from exactness of this sequence and from\ Lemma \ref{lemGaugeConnInj}. 

\begin{proposition}\label{prpGaugeShiftsSheafCoho}
Let $M$ be a manifold, $G$ the Abelian group $U(1)$ and $\lieg=i\bbR$ its Lie algebra. 
The image of the homomorphism 
\begin{equation*}
\tilde{T}:\c(M,G)/\exp(\c(M,\lieg))\to\hdd^1(M,\lieg)\,,
\end{equation*} 
defined in\ \eqref{eqConnGaugeHomQuot}, is isomorphic to the first cohomology group 
$\coho^1(M,\c_M(\cdot,\bbZ))$ of the sheaf $\c_M(\cdot,\bbZ)$ of 
locally constant $\bbZ$-valued functions on $M$.\footnote{This is the same as the sheaf of 
smooth $\bbZ$-valued functions on $M$ since $\bbZ$ is endowed with the discrete topology 
to be regarded as a $0$-dimensional Lie group.} 
\end{proposition}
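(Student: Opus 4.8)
The plan is to read off the isomorphism directly from the exact sequence
\begin{equation*}
0\lra\c(M,\bbZ)\overset{\subseteq}{\lra}\c(M,\lieg)\overset{\exp}{\lra}\c(M,G)
\overset{\partial}{\lra}\coho^1(M,\c_M(\cdot,\bbZ))\lra0
\end{equation*}
just established via sheaf cohomology, combined with\ Lemma\ \ref{lemGaugeConnInj}. First I would note that exactness at $\c(M,G)$ tells us that the kernel of the connecting homomorphism $\partial:\c(M,G)\to\coho^1(M,\c_M(\cdot,\bbZ))$ is exactly $\exp(\c(M,\lieg))$. Hence $\partial$ descends to an injective group homomorphism
\begin{equation*}
\bar{\partial}:\c(M,G)/\exp(\c(M,\lieg))\to\coho^1(M,\c_M(\cdot,\bbZ))\,,
\end{equation*}
and exactness at $\coho^1(M,\c_M(\cdot,\bbZ))$ (the sequence terminating with a zero) shows that $\partial$ is surjective, so $\bar{\partial}$ is in fact an isomorphism of Abelian groups.

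Next I would compare this with the homomorphism $\tilde{T}$ of\ \eqref{eqConnGaugeHomQuot}. By\ Lemma\ \ref{lemGaugeConnInj}, $\tilde{T}$ is injective, so it restricts to an isomorphism between $\c(M,G)/\exp(\c(M,\lieg))$ and its image $\tilde{T}\big(\c(M,G)/\exp(\c(M,\lieg))\big)\subseteq\hdd^1(M,\lieg)$. Composing the inverse of this isomorphism with $\bar{\partial}$ then yields an isomorphism
\begin{equation*}
\tilde{T}\big(\c(M,G)/\exp(\c(M,\lieg))\big)\overset{\simeq}{\lra}\coho^1(M,\c_M(\cdot,\bbZ))\,,
\end{equation*}
which is precisely the statement of the proposition. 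So the proof is essentially a two-line diagram chase once the sheaf-cohomological input is in hand.

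The only point that needs a little care — and the place I would expect a referee to look hardest — is making sure the identification is clean, i.e.\ that $\ker(\partial)=\im(\exp)$ really is literally $\exp(\c(M,\lieg))$ as a subgroup of $\c(M,G)$ (this is immediate from exactness, but one should say it), and that $\tilde{T}$ is well-defined on the quotient, which is exactly the content of the discussion preceding\ \eqref{eqConnGaugeHomQuot} together with\ Lemma\ \ref{lemGaugeConnInj}. There is no hard analytic or geometric obstacle here; the substance of the argument lies entirely in the softness of the sheaf $\c_M(\cdot,\lieg)$, which was used to truncate the long exact sequence, and that step is already carried out in the excerpt. I would therefore keep the proof short, emphasising that it is a formal consequence of the exact sequence and Lemma\ \ref{lemGaugeConnInj}.
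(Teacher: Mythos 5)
Your proof is correct and is essentially the paper's argument: the paper itself only remarks that the proposition "follows from exactness of this sequence and from Lemma \ref{lemGaugeConnInj}", and your two-step diagram chase (the connecting homomorphism inducing an isomorphism $\c(M,G)/\exp(\c(M,\lieg))\simeq\coho^1(M,\c_M(\cdot,\bbZ))$, combined with injectivity of $\tilde{T}$ identifying that quotient with the image of $\tilde{T}$) is precisely the intended spelling-out of that remark.
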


Since $M$ is a manifold, $\coho^1(M,\c_M(\cdot,\bbZ))$ is isomorphic 
both to the first singular cohomology group $\coho^1(M,\bbZ)$ with integer coefficients 
and to the first {\v C}ech cohomology group $\cech^1(M,\bbZ)$ with integer coefficients. 

Exploiting the universal coefficient theorem for cohomology, {\em cfr.}\ \cite[Corollary\ 5.3.2]{Mun96}, 
applied to the divisible group $\lieg=i\bbR$, we get an isomorphism 
$\coho^k(M,\lieg)\simeq\hom(\coho_k(M,\bbZ),\lieg)$ between the $k$-th singular cohomology group 
with $\lieg$-coefficients and the group of $\lieg$-valued homomorphisms on the $k$-th homology group 
with integer coefficients. The same theorem for the coefficient group $\bbZ$ tells us that 
$\coho^1(M,\bbZ)\simeq\hom(\coho_1(M,\bbZ),\bbZ)$ 
since the zeroth-homology group $\coho_0(M,\bbZ)$ is free Abelian. 
Therefore, the injective homomorphism $2\pi i:\bbZ\to\lieg$ induces an injective homomorphism 
$2\pi i:\coho^1(M,\bbZ)\to\coho^1(M,\lieg)$ in cohomology. 
Via de Rham's theorem, see\ {\em e.g.}\ \cite[Appendix\ A]{Mas91}, 
the $k$-th singular cohomology group $\coho^k(M,\lieg)$ is isomorphic 
to the $k$-th de Rham cohomology group $\hdd^k(M,\lieg)$. 
Therefore, we can consider the image $\hdd^1(M,\lieg)_\bbZ$ of $\coho^1(M,\bbZ)$ in $\hdd^1(M,\lieg)$. 
Note that $\hdd^1(M,\lieg)_\bbZ$ is only an Abelian subgroup of $\hdd^1(M,\lieg)$. 

\begin{corollary}\label{corGaugeShiftsExplicit}
Let $G=U(1)$ and $\lieg=i\bbR$. Consider a principal $G$-bundle $P$ over $M$. 
One can characterize the image of the homomorphism $T:\c(M,G)\to\fdd^1(M,\lieg)$ 
defined in\ \eqref{eqConnGaugeHom}, which by definition coincides with the Abelian group $\gau_P$, 
{\em cfr.}\ \eqref{eqGaugeShifts}: 
\begin{equation*}
\gau_P=\left\{\eta\in\fdd^1(M,\lieg)\,:[\eta]\in\hdd^1(M,\lieg)_\bbZ\right\}\,,
\end{equation*}
where $\hdd^1(M,\lieg)_\bbZ$ denotes the image of the injective $\bbZ$-module homomorphism 
$2\pi i:\coho^1(M,\bbZ)\to\coho^1(M,\lieg)\simeq\hdd^1(M,\lieg)$. 
\end{corollary}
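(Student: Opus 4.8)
\textbf{Proof strategy for Corollary \ref{corGaugeShiftsExplicit}.}
The plan is to assemble the statement directly from the two preceding results, Proposition \ref{prpGaugeShiftsSheafCoho} and the chain of isomorphisms discussed immediately after it, without any new computation. First I would recall that, by definition \eqref{eqGaugeShifts}, $\gau_P$ is the image of the group homomorphism $T:\c(M,G)\to\fdd^1(M,\lieg)$ of \eqref{eqConnGaugeHom}; in particular $\gau_P$ depends only on the base manifold $M$ and not on the specific principal bundle $P$, which already justifies writing the right-hand side without reference to $P$. The image of $T$ is a subgroup of $\fdd^1(M,\lieg)$, so it is determined by two pieces of data: which exact $1$-forms it contains, and which de Rham classes in $\hdd^1(M,\lieg)$ are hit. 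For the first piece, the discussion preceding \eqref{eqConnGaugeHomQuot} shows that $\exp(\c(M,\lieg))$ maps onto $\dd\c(M,\lieg)=\dd\f^0(M,\lieg)$, so $T$ contains every exact $1$-form; hence $\gau_P$ is a union of cohomology classes, i.e. $\gau_P=\{\eta\in\fdd^1(M,\lieg):[\eta]\in\tilde T(\c(M,G)/\exp(\c(M,\lieg)))\}$, where $\tilde T$ is the induced homomorphism of \eqref{eqConnGaugeHomQuot}.

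Next I would identify the image of $\tilde T$ inside $\hdd^1(M,\lieg)$. By Lemma \ref{lemGaugeConnInj} the map $\tilde T$ is injective, so its image is abstractly isomorphic to $\c(M,G)/\exp(\c(M,\lieg))$, and by Proposition \ref{prpGaugeShiftsSheafCoho} this quotient is isomorphic to $\coho^1(M,\c_M(\cdot,\bbZ))\simeq\coho^1(M,\bbZ)$. The remaining task is to check that the isomorphism furnished by Proposition \ref{prpGaugeShiftsSheafCoho}, when composed with $\tilde T$, is exactly the homomorphism $2\pi i:\coho^1(M,\bbZ)\to\coho^1(M,\lieg)\simeq\hdd^1(M,\lieg)$ appearing in the statement. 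This is a diagram-chase through the long exact cohomology sequence associated to $0\to\c_M(\cdot,\bbZ)\xrightarrow{2\pi i}\c_M(\cdot,\lieg)\xrightarrow{\exp}\c_M(\cdot,G)\to0$: the connecting homomorphism $\delta:\c(M,G)\to\coho^1(M,\c_M(\cdot,\bbZ))$ descends to the isomorphism of Proposition \ref{prpGaugeShiftsSheafCoho}, and one uses that the comparison map of this short exact sequence with the constant short exact sequence $0\to\bbZ\to\lieg\to G\to0$ intertwines $\delta$ with the inclusion-induced map $\coho^1(M,\bbZ)\to\coho^1(M,\lieg)$; de Rham's theorem then identifies the target with $\hdd^1(M,\lieg)$, and the copy of $\bbZ$ sits inside $\lieg$ via $z\mapsto 2\pi i z$, so the subgroup one lands in is precisely $\hdd^1(M,\lieg)_\bbZ$ as defined in the excerpt.

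Combining the two steps: $\gau_P$ consists of exactly those closed $\lieg$-valued $1$-forms whose de Rham class lies in $\hdd^1(M,\lieg)_\bbZ=\im\big(2\pi i:\coho^1(M,\bbZ)\to\hdd^1(M,\lieg)\big)$, which is the asserted description. I expect the only genuine subtlety to be the middle paragraph, namely verifying that the abstract isomorphism of Proposition \ref{prpGaugeShiftsSheafCoho} composed with $\tilde T$ really is the integral-lattice inclusion and not some twisted variant; everything else is bookkeeping. In fact one can sidestep even that subtlety by a slightly more hands-on argument: given $f\in\c(M,G)$, locally write $f=\exp(\chi_U)$ with $\chi_U\in\c(U,\lieg)$ on sets of a good cover; then $f^\ast\mu=\dd\chi_U$ on each $U$, so $[f^\ast\mu]$ is the class whose {\v C}ech representative records the jumps $\chi_U-\chi_V\in 2\pi i\,\c(U\cap V,\bbZ)$, which is manifestly an integral class, and conversely every integral class arises this way by patching together local logarithms. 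Either route closes the proof; I would present the sheaf-theoretic version as the main line since the preliminary lemmas are already phrased in that language.
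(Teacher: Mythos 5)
Your proposal is correct and follows essentially the same route the paper intends: the corollary is read off from Lemma \ref{lemGaugeConnInj}, Proposition \ref{prpGaugeShiftsSheafCoho} and the identification $2\pi i:\coho^1(M,\bbZ)\to\hdd^1(M,\lieg)$ set up in the surrounding discussion, after observing that $\gau_P$ contains all exact forms and is therefore a union of de Rham classes. Your explicit \v{C}ech/local-logarithm check is a welcome verification of the one point the paper leaves tacit, namely that the image of $\tilde{T}$ is the concrete subgroup $\hdd^1(M,\lieg)_\bbZ$ (with its lattice inclusion realized by $2\pi i$) and not merely a subgroup abstractly isomorphic to $\coho^1(M,\bbZ)$.
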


In conclusion, the effect of the gauge group on connections consists in shifting by closed $1$-forms 
$\eta\in\fdd^1(M,\lieg)$ whose de Rham cohomology class $[\eta]\in\hdd^1(M,\lieg)$ 
is, up to a factor $2\pi i$, the image of a singular cohomology class in $\coho^1(M,\bbZ)$. 
This means that the value of the integral of $\eta$ along singular $1$-cycles with $\bbZ$-coefficients 
is an integer multiple of $2\pi i$. More formally, the $\bbZ$-module homomorphism displayed below 
\begin{align*}
\int_\cdot\,[\eta]:\coho_1(M,\bbZ)\to\lieg\,, && [\sigma]\mapsto\int_{[\sigma]}[\eta]\,,
\end{align*}
takes values in $2\pi i\bbZ$. 

\begin{remark}[$\hdd^1(M,\lieg)_\bbZ$ generates $\hdd^1(M,\lieg)$]\label{remZCohoGenRCoho}
As already mentioned, the zeroth-homology of $M$ is trivial, 
therefore $\coho^1(M,\bbZ)\simeq\hom(\coho_1(M,\bbZ),\bbZ)$ via the universal coefficient theorem. 
Denoting with $T$ the torsion part of $\coho_1(M,\bbZ)$ and with $F$ its torsion-free part, 
we get a short exact sequence 
\begin{equation*}
0\lra T\lra\coho_1(M,\bbZ)\lra F\lra0\,. 
\end{equation*}
Applying the contravariant functor $\hom(\cdot,\bbZ)$, we get the exact sequence 
\begin{equation*}
0\lra\hom(F,\bbZ)\lra\hom(\coho_1(M,\bbZ),\bbZ)\lra\hom(T,\bbZ)\,.
\end{equation*}
where $\hom(T,\bbZ)$ is trivial, $T$ being a torsion group. Therefore we have the following chain of isomorphisms: 
\begin{equation*}
\coho^1(M,\bbZ)\simeq\hom(\coho_1(M,\bbZ),\bbZ)\simeq\hom(F,\bbZ)\,.
\end{equation*}
Recall that $\lieg=i\bbR$. Along the same lines, we get $\coho^1(M,\lieg)\simeq\hom(F,\lieg)$. 
If $F$ is finitely generated, than it is also free Abelian, whence 
\begin{equation*}
\coho^1(M,\lieg)\simeq\hom(F,\lieg)\simeq\hom(F,\bbZ)\otimes_\bbZ\lieg
\simeq\coho^1(M,\bbZ)\otimes_\bbZ\lieg\,.
\end{equation*}
We conclude that, if $\coho_1(M,\bbZ)$ is finitely generated ({\em e.g.}\ when $M$ is of finite type), 
then $\hdd^1(M,\lieg)_\bbZ$ is an Abelian subgroup generating $\hdd^1(M,\lieg)$ over the field $\bbR$. 
In particular, a $\bbZ$-module basis of $\hdd^1(M,\lieg)_\bbZ$ 
is also an $\bbR$-module basis of $\hdd^1(M,\lieg)$. 
\end{remark}

\section{Observables via affine functionals}\label{secAffObsYM}
As already mentioned, we will denote with $G$ the Lie Abelian group $U(1)$. 
$\lieg$ will denote the corresponding Lie algebra $i\bbR$. 

In this section we will adopt the perspective of\ \cite{BDS14a}. 
Let us consider a principal $G$-bundle $P$ over an $m$-dimensional globally hyperbolic spacetime $M$. 
At this stage, all the tools needed to introduce suitable functionals 
on the space $[\sol_P]$ of on-shell connections up to gauge are available. 

To start with, we recall that the bundle of connections $\conn(P)$ on $P$ is an affine bundle over $M$ 
modeled on the vector bundle $\hom(TM,M\times\lieg)\simeq T^\ast M\otimes\lieg$, 
{\em cfr.}\ Subsection\ \ref{subConnBun} and Proposition\ \ref{prpTrivAdBun}. 
Therefore, the space $\sect(\conn(P))$ of connections is an affine space 
modeled on the vector space $\f^1(M,\lieg)$, 
{\em cfr.}\ Proposition\ \ref{prpTrivAdBun} and the comment after Definition\ \ref{defConn}. 
In particular, we can apply to $\sect(\conn(P))$ the techniques developed in\ Section\ \ref{secAffine}. 
For the purpose of constructing well-behaved functionals on the affine space $\sect(\conn(P))$, 
we exploit the pairing with compactly supported sections of the vector dual bundle $\conn(P)^\dagger$, 
see\ \eqref{eqAffPairing} for the more general situation of an arbitrary affine bundle. 
In this spirit, to each compactly supported section $\phi\in\sc(\conn(P)^\dagger)$ we associate the affine functional 
\begin{align}\label{eqAffFuncYM}
\ev_\phi:\sect(\conn(P))\to\bbR\,, && \lambda\mapsto\int_M\phi(\lambda)\,\vol\,,
\end{align}
where $\vol=\ast1$ is the canonical volume form defined on $M$. 
We collect all these affine functionals in the space of kinematic functionals: 
\begin{equation*}
\kin_P=\left\{\ev_\phi\in(\sect(\conn(P)))^\dagger\,:\;\phi\in\sc(\conn(P)^\dagger)\right\}\,.
\end{equation*}
According to\ Proposition\ \ref{prpAffSeparability}, $\sc(\conn(P)^\dagger)$ is not a good labeling space 
for functionals on $\sect(\conn(P)$ of the type specified above. 
In fact, we also have a characterization of those functionals which vanish systematically 
(later also referred to as being \quotes{trivial}). Adopting the strategy of Remark\ \ref{remAffTriv}, 
we introduce the vector subspace $\triv_P$ of $\sc(\conn(P)^\dagger)$ encompassing all trivial functionals: 
\begin{equation}\label{eqTrivYM}
\triv_P=\left\{a\,\bfone\in\sc(\conn(P)^\dagger)\,:\;a\in\cc(M)\,,\;\int_Ma\,\vol=0\right\}\,,
\end{equation}
where $\bfone\in\sect(\conn(P)^\dagger)$ is the section defined by the condition that 
its value at each point $x\in M$ is the affine map $a\in\conn(P)_x\mapsto1\in\bbR$. 
From\ Proposition\ \ref{prpAffSeparability} and Remark\ \ref{remAffTriv}, 
it is clear that $\sc(\conn(P)^\dagger)/\triv_P\simeq\kin_P$, 
meaning that the quotient of $\sc(\conn(P)^\dagger)$ by $\triv_P$ provides a labeling space for $\kin_P$. 
In fact, the isomorphism is induced by the map $\phi\in\sc(\conn(P)^\dagger)\mapsto\ev_\phi\in\kin_P$. 
To avoid a too heavy notation, we will denote with $\phi$ elements of $\sc(\conn(P)^\dagger)$, 
as well as elements of its quotient by $\triv_P$ 
and we identify $\kin_P$ with the quotient $\sc(\conn(P)^\dagger)/\triv_P$. 

Since the physically relevant object is not the connection itself, 
but rather its gauge equivalence class, not all the kinetic functionals in $\kin_P$ can be regarded 
as being suitable to extract physical information from field configurations. 
We have to restrict ourselves to gauge invariant functionals: 
\begin{equation*}
\inv_P=\left\{\ev_\phi\in\kin_P\,:\;\ev_\phi(\lambda+\gau_P)
=\{\ev_\phi(\lambda)\}\,,\;\forall\,\lambda\in\sect(\conn(P))\right\}\,,
\end{equation*}
where $\gau_P$ denotes the Abelian group of gauge shifts introduced in\ \eqref{eqGaugeShifts}, 
whose structure has been thoroughly analyzed in\ Subsection\ \ref{subGaugeConnAb}. 

The next lemma characterizes explicitly the space $\inv_P$ of gauge invariant functionals 
whenever $M$ is of finite type. Note that we will adopt the following notation: 
The linear part $\phi_V$ of a section $\phi\in\sect(\conn(P))$ should be a section of 
the dual of the vector bundle $T^\ast M\otimes\lieg$. Via the pairing defined in\ \eqref{eqContraction}, 
we can identify the dual of $T^\ast M\otimes\lieg$ with $T^\ast M\otimes\lieg^\ast$. 
After this identification, $\phi_V$ is regarded as a $\lieg^\ast$-valued $1$-form on $M$. 
Accordingly, the linear part of an affine functional $\ev_\phi\in\kin_P$ 
is specified by the linear map $(\phi_V,\cdot):\f^1(M,\lieg)\to\bbR$, 
where $(\cdot,\cdot):\fc^k(M,\lieg^\ast)\times\f^k(M,\lieg)\to\bbR$ is the non-degenerate pairing 
defined in\ \eqref{eqPairing2}, except for the irrelevant evaluation of $\lieg^\ast$ on $\lieg$. 

\begin{proposition}\label{prpInvAffIsCoexact}
Let $M$ be an $m$-dimensional globally hyperbolic spacetime and consider a principal $G$-bundle $P$ over $M$. 
Each gauge invariant affine functional $\ev_\phi\in\inv_P$ is such that $	\de\phi_V=0$, 
while each $\ev_\phi\in\kin_P$ with $\phi_V\in\de\fc^2(M,\lieg^\ast)$ lies in $\inv_P$, 
namely it is gauge invariant. 

Furthermore, if $M$ is of finite type, gauge invariant affine functionals are those kinetic functionals 
whose linear part is a $\lieg^\ast$-valued $1$-form with compact support, 
which is coexact in the sense of de Rham cohomology with compact support: 
\begin{equation*}
\inv_P=\left\{\ev_\phi\in\kin_P\,:\;\phi_V\in\de\fc^2(M,\lieg^\ast)\right\}\,.
\end{equation*}
\end{proposition}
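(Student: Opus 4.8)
The plan is to translate gauge invariance into a condition purely on the linear part $\phi_V$ of the affine functional, using the decomposition of $\gau_P$ obtained in Subsection~\ref{subGaugeConnAb}, and then to invoke Poincar\'e duality for de Rham cohomology to identify exactly which compactly supported $\lieg^\ast$-valued $1$-forms arise as linear parts of gauge invariant functionals. The first observation is that, since $\ev_\phi$ is affine and $\gau_P$ acts by translation, for $\lambda\in\sect(\conn(P))$ and $\eta\in\gau_P$ one has $\ev_\phi(\lambda+\eta)=\ev_\phi(\lambda)+(\phi_V,\eta)$; hence $\ev_\phi\in\inv_P$ if and only if $(\phi_V,\eta)=0$ for all $\eta\in\gau_P$. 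This reduces the whole problem to understanding the annihilator of $\gau_P$ under the pairing $(\cdot,\cdot):\fc^1(M,\lieg^\ast)\times\f^1(M,\lieg)\to\bbR$.

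\textbf{Key steps.} First I would establish the two easy implications that hold on an arbitrary globally hyperbolic spacetime $M$. For the necessary condition $\de\phi_V=0$: by Corollary~\ref{corGaugeShiftsExplicit}, $\gau_P$ contains every exact form $\dd\chi$ with $\chi\in\c(M,\lieg)$ (these come from exponential gauge transformations, as noted just before~\eqref{eqConnGaugeHomQuot}); so $(\phi_V,\dd\chi)=0$ for all such $\chi$, and by Stokes' theorem this reads $(\de\phi_V,\chi)=0$ for all $\chi\in\c(M,\lieg)$, whence $\de\phi_V=0$ by non-degeneracy of $(\cdot,\cdot):\fc^0(M,\lieg^\ast)\times\c(M,\lieg)\to\bbR$. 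For the sufficient condition: if $\phi_V=\de\psi$ with $\psi\in\fc^2(M,\lieg^\ast)$, then for any $\eta\in\gau_P\subseteq\fdd^1(M,\lieg)$ Stokes' theorem gives $(\phi_V,\eta)=(\de\psi,\eta)=(\psi,\dd\eta)=0$ since $\eta$ is closed, so $\ev_\phi\in\inv_P$; note this also uses $\supp(\psi)$ compact so that the pairing and Stokes' theorem make sense against the unrestricted-support form $\eta$.

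\textbf{The finite-type case and the main obstacle.} The remaining (and genuinely nontrivial) direction is: if $M$ is of finite type and $\ev_\phi\in\inv_P$, then $\phi_V\in\de\fc^2(M,\lieg^\ast)$. From the necessary condition we already know $\de\phi_V=0$, so $\phi_V$ defines a class $[\phi_V]\in\hcde^1(M,\lieg^\ast)$ (using the $\de$-version of compactly supported cohomology from Subsection~\ref{subCohom}); equivalently, via the Hodge star, a class in $\hcdd^{m-1}(M,\lieg^\ast)$. The goal is to show this class vanishes. The input is that $(\phi_V,\eta)=0$ for every $\eta\in\gau_P$, i.e.\ for every closed $\lieg$-valued $1$-form $\eta$ whose de Rham class lies in the integral lattice $\hdd^1(M,\lieg)_\bbZ$. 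By Remark~\ref{remZCohoGenRCoho}, when $M$ is of finite type $\hdd^1(M,\lieg)_\bbZ$ spans $\hdd^1(M,\lieg)$ over $\bbR$ (a $\bbZ$-basis of the lattice is an $\bbR$-basis of the cohomology); consequently the condition $(\phi_V,\eta)=0$ on lattice representatives extends by $\bbR$-linearity to $([\phi_V],[\eta])_\de=0$ for \emph{every} $[\eta]\in\hdd^1(M,\lieg)$, where $(\cdot,\cdot)_\de:\hcde^1(M,\lieg^\ast)\times\hdd^1(M,\lieg)\to\bbR$ is the Poincar\'e-type pairing from~\eqref{eqPoincarePairing} (here one must check that the coefficient systems $\lieg$ and $\lieg^\ast\simeq\lieg$ are harmless, the pairing being just the usual one tensored with the canonical $\lieg^\ast$-on-$\lieg$ evaluation). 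Now Theorem~\ref{thmPoincareDuality}, applicable precisely because $M$ is of finite type, asserts this pairing is non-degenerate, so $[\phi_V]=0$ in $\hcde^1(M,\lieg^\ast)$, i.e.\ $\phi_V=\de\psi$ for some $\psi\in\fc^2(M,\lieg^\ast)$, as desired. The main obstacle I anticipate is the bookkeeping at the integral-lattice step: one must argue carefully that a linear functional on $\hdd^1(M,\lieg)$ (namely $([\phi_V],\cdot)_\de$) vanishing on all of $\hdd^1(M,\lieg)_\bbZ$ must vanish identically, which is exactly where finite type enters via Remark~\ref{remZCohoGenRCoho}, and to keep straight that the pairing of a \emph{compactly supported} class against an \emph{arbitrary} class is the one that is non-degenerate in the compactly supported argument (this is the content needing finite type, cf.\ Remark~\ref{remPoincareDualityImproved}).
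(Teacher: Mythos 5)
Your proposal is correct and follows essentially the same route as the paper's own proof: exponential gauge transformations give $\de\phi_V=0$, coexactness gives invariance via Stokes' theorem, and in the finite-type case the integral lattice $\hdd^1(M,\lieg)_\bbZ$ spanning $\hdd^1(M,\lieg)$ over $\bbR$ (Remark~\ref{remZCohoGenRCoho}) combined with Poincar\'e duality (Theorem~\ref{thmPoincareDuality}) forces $[\phi_V]=0$ in $\hcde^1(M,\lieg^\ast)$. The only (cosmetic) difference is notation: the relevant cohomological pairing is ${}_\de(\cdot,\cdot):\hcde^1(M,\lieg^\ast)\times\hdd^1(M,\lieg)\to\bbR$, not $(\cdot,\cdot)_\de$.
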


\begin{proof}
For the inclusion of the right-hand-side in the left-hand-side of the equation displayed above, 
assume $\phi\in\sc(\conn(P)^\dagger)$ has linear part $\phi_V\in\de\fc^2(M,\lieg^\ast)$. 
Given $\lambda\in\sect(\conn(P))$ and taking into account that the elements of $\gau_P$ are 
closed $\lieg^\ast$-valued $1$-forms, see the comment before\ \eqref{eqConnGaugeHom}, we have 
\begin{equation*}
\ev_\phi(\lambda+\gau_P)=\ev_\phi(\lambda)+(\phi_V,\gau_P)=\{\ev_\phi(\lambda)\}\,,
\end{equation*}
where $(\cdot,\cdot)$ is the pairing defined in\ \eqref{eqPairing2}. 
Note that in the second equality we exploited Stokes' theorem 
after having expressed $\phi_V$ as $\de\omega$ for a suitable $\omega\in\fc^2(M,\lieg^\ast)$. 
This also shows that gauge invariance follows from the fact that the linear part is coexact. 

For the converse inclusion, suppose $\phi\in\sc(\conn(P)^\dagger))$ is such that $\ev_\phi\in\inv_P$ 
and denote with $\phi_V\in\fc^1(M,\lieg^\ast)$ its linear part. 
For each $\lambda\in\sect(\conn(P))$, we have the chain of identities below: 
\begin{align*}
\ev_\phi(\lambda)+(\phi_V,\gau_P)=\ev_\phi(\lambda+\gau_P)=\{\ev_\phi(\lambda)\}\,.
\end{align*}
If we consider only exponential gauge transformations, see the comment before\ \eqref{eqConnGaugeHomQuot}, 
we encompass the Abelian subgroup $\dd\c(M,\lieg)$ of $\gau_P$. 
Via this argument we deduce from the equation above that $(\phi_V,\dd\c(M,\lieg))=\{0\}$, whence $\de\phi_V=0$. 
This already shows that gauge invariance entails for $\ev_\phi$ that its linear part $\phi_V$ is coclosed. 
We show that one can improve this result assuming that $M$ is of finite type. 
Exploiting the characterization of $\gau_P$ presented in\ Corollary\ \ref{corGaugeShiftsExplicit}, 
we deduce that ${}_\de([\phi_V],\hdd^1(M,\lieg)_\bbZ)=\{0\}$, where $[\phi_V]\in\hcde^1(M,\lieg^\ast)$ 
and ${}_\de(\cdot,\cdot):\hcde^1(M,\lieg^\ast),\hdd^1(M,\lieg))\to\bbR$ is the pairing between cohomology 
classes defined in\ \eqref{eqPoincarePairing} (up to the irrelevant evaluation of $\lieg^\ast$ on $\lieg$). 
Recall that, $M$ being of finite type and according to\ Remark\ \ref{remZCohoGenRCoho}, 
the Abelian subgroup $\hdd^1(M,\lieg)_\bbZ$ generates $\hdd^1(M,\lieg)$ over the field $\bbR$. 
This fact, together with linearity of the pairing ${}_\de(\cdot,\cdot)$, entails that 
${}_\de([\phi_V],\hdd^1(M,\lieg))=\{0\}$. Using again the hypothesis that $M$ is of finite type, 
we can exploit Theorem\ \ref{thmPoincareDuality} to conclude that 
$\phi_V$ lies in $\de\fc^2(M,\lieg^\ast)$, thus completing the proof. 
\end{proof}

The last lemma has a drawback, which we present in the forthcoming theorem. 
Recall that $F:\sect(\conn(P))\to\f^2(M,\lieg)$ denotes the curvature map 
introduced in\ Definition\ \ref{defCurvMap} 
and later specialized to the case of an Abelian structure group in\ Remark\ \ref{remCurvAb}. 

\begin{theorem}\label{thmInvAffDetectsCurv}
Let $M$ be an $m$-dimensional globally hyperbolic spacetime of finite type 
and consider a principal $G$-bundle $P$ over $M$. Two connections $\lambda,\lambda^\prime\in\sect(\conn(P))$ 
have the same curvature, namely $F(\lambda)=F(\lambda^\prime)$, 
if and only if $\ev_\phi(\lambda)=\ev_\phi(\lambda^\prime)$ for all $\ev_\phi\in\inv_P$. 
\end{theorem}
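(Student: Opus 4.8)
The plan is to prove both implications by exploiting the explicit characterization of gauge invariant affine functionals obtained in Proposition~\ref{prpInvAffIsCoexact}, namely that on a globally hyperbolic spacetime $M$ of finite type one has $\inv_P=\{\ev_\phi\in\kin_P:\phi_V\in\de\fc^2(M,\lieg^\ast)\}$, together with the fact (recorded in Remark~\ref{remCurvAb}) that the curvature map $F$ is an affine differential operator with linear part $F_V=-\dd:\f^1(M,\lieg)\to\f^2(M,\lieg)$. The crucial observation is that, since $\sect(\conn(P))$ is an affine space modeled on $\f^1(M,\lieg)$, any two connections satisfy $\lambda^\prime=\lambda+\sigma$ for a unique $\sigma\in\f^1(M,\lieg)$, and then $F(\lambda^\prime)=F(\lambda)-\dd\sigma$ while $\ev_\phi(\lambda^\prime)=\ev_\phi(\lambda)+(\phi_V,\sigma)$. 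So the statement reduces to the purely linear claim: $\dd\sigma=0$ if and only if $(\phi_V,\sigma)=0$ for all $\phi_V\in\de\fc^2(M,\lieg^\ast)$.

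\textbf{The easy implication.}
First I would prove that equal curvatures imply equal evaluations. Assume $F(\lambda)=F(\lambda^\prime)$, write $\lambda^\prime=\lambda+\sigma$, so that $\dd\sigma=0$. Given any $\ev_\phi\in\inv_P$, Proposition~\ref{prpInvAffIsCoexact} provides $\omega\in\fc^2(M,\lieg^\ast)$ with $\phi_V=\de\omega$. Then, using the adjunction formula~\eqref{eqAdjoint} (Stokes' theorem) for forms of compact overlapping support,
\begin{equation*}
(\phi_V,\sigma)=(\de\omega,\sigma)=(\omega,\dd\sigma)=0\,,
\end{equation*}
whence $\ev_\phi(\lambda^\prime)=\ev_\phi(\lambda)$. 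This direction does not even need finite type beyond the first half of Proposition~\ref{prpInvAffIsCoexact}, but since we are under the finite-type hypothesis there is nothing to optimize.

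\textbf{The harder implication.}
For the converse, assume $\ev_\phi(\lambda)=\ev_\phi(\lambda^\prime)$ for all $\ev_\phi\in\inv_P$; writing $\lambda^\prime=\lambda+\sigma$ this says $(\de\omega,\sigma)=0$ for every $\omega\in\fc^2(M,\lieg^\ast)$, i.e.\ $(\omega,\dd\sigma)=0$ for every $\omega\in\fc^2(M,\lieg^\ast)$ by~\eqref{eqAdjoint}. Since the pairing $(\cdot,\cdot):\fc^2(M,\lieg^\ast)\times\f^2(M,\lieg)\to\bbR$ is non-degenerate (up to the harmless evaluation of $\lieg^\ast$ on $\lieg$), cf.\ the discussion around~\eqref{eqPairing2}, this forces $\dd\sigma=0$, and therefore $F(\lambda^\prime)=F(\lambda)-\dd\sigma=F(\lambda)$. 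I expect the only delicate point to be a careful handling of the $\lieg^\ast$/$\lieg$ bookkeeping: the linear part $\phi_V$ lives naturally in $\fc^1(M,\lieg^\ast)$, and one must check that the non-degeneracy of the real-valued pairing between $\fc^2(M,\lieg^\ast)$ and $\f^2(M,\lieg)$ does hold (it does, because $\lieg\simeq i\bbR$ is one-dimensional, so this reduces to the scalar-valued non-degenerate pairing of~\eqref{eqPairing2}). Once that is noted, both implications are immediate, so there is no real obstacle; the main substance of the result is already contained in Proposition~\ref{prpInvAffIsCoexact}, and the theorem is essentially the dualization of that characterization through the adjointness of $\dd$ and $\de$.
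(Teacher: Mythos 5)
Your proof is correct and takes essentially the same route as the paper's: write $\lambda^\prime=\lambda+\sigma$, use that the linear part of $F$ is $-\dd$ together with the finite-type characterization $\inv_P=\{\ev_\phi\in\kin_P:\phi_V\in\de\fc^2(M,\lieg^\ast)\}$ of Proposition~\ref{prpInvAffIsCoexact}, and conclude from Stokes' theorem and the non-degeneracy of the pairing $(\cdot,\cdot)$ that $(\de\fc^2(M,\lieg^\ast),\sigma)=\{0\}$ if and only if $\dd\sigma=0$. Only your parenthetical claim that the easy implication needs no finite type is doubtful, since that direction also uses that every functional in $\inv_P$ has coexact (not merely coclosed) linear part, which is exactly the finite-type part of the proposition; under the stated hypotheses this does not affect the proof.
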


\begin{proof}
Since $\sect(\conn(P))$ is an affine space modeled on the vector space $\f^1(M,\lieg)$, 
there exists $\omega\in\f^1(M,\lieg)$ such that $\lambda+\omega=\lambda^\prime$. 
As shown in\ Remark\ \ref{remCurvAb}, the linear part of the affine map $F:\sect(\conn(P))\to\f^2(M,\lieg)$ 
is simply $F_V=-\dd:\f^1(M,\lieg)\to\f^2(M,\lieg)$. 
Therefore, $\lambda$ and $\lambda^\prime$ have the same curvature 
if and only if $\omega$ is closed, namely $\dd\omega=0$.
Furthermore, $\ev_\phi(\lambda)$ and $\ev_\phi(\lambda^\prime)$ coincide for all $\ev_\phi\in\inv_P$ 
if and only if $\dd\omega=0$ is closed. In fact, for all $\ev_\phi\in\inv_P$, we have 
\begin{equation*}
\ev_\phi(\lambda)=\ev_\phi(\lambda^\prime)=\ev_\phi(\lambda)+(\phi_V,\omega)\,.
\end{equation*}
Therefore, by\ Proposition\ \ref{prpInvAffIsCoexact} (we use here the hypothesis that $M$ is of finite type), 
we have to check that $(\de\fc^2(M,\lieg^\ast),\omega)=\{0\}$ if and only if $\omega$ is closed, 
but this statement is true due to Stokes' theorem, thus concluding the proof. 
\end{proof}

\begin{remark}[$\inv_P$ tests only the curvature]\label{remInvAffDetectsCurv}
Already at this stage we can highlight a weak point of the space $\inv_P$ of gauge invariant affine functionals. 
As shown by the last theorem, functionals of this type are only sensitive to the curvature of a connection. 
In particular, flat connections cannot be detected using $\inv_P$. 
This has relevant physical implications. Despite the fact that flat connections, 
which correspond to the physical situation of the Aharonov-Bohm effect, are available in our framework, 
we fail in detecting them using gauge invariant affine functionals. 
We will return on this issue in\ Section\ \ref{secYangMillsChar}, 
where a different choice of functionals will weaken the constraint imposed by gauge invariance, 
eventually leading to the capability to detect all gauge equivalence classes of connections, in particular the flat ones. 
\end{remark}

The last remark motivates our interest in the class of gauge invariant functionals introduced in the next example. 

\begin{example}[Dual of $F$ and curvature affine functionals]\label{exaCurvObs}
Consider the affine differential operator $F:\sect(\conn(P))\to\f^2(M,\lieg)$. 
For the definition of affine differential operators and how to obtain formal duals 
refer to\ Subsection\ \ref{subAffDiffOp}. 
According to the argument outlined in\ Remark\ \ref{remAffDiffOpUniqueDual}, 
$F^\ast:\f^2(M,\lieg^\ast)\to\sect(\conn(P)^\dagger)$ is in general uniquely defined only 
on compactly supported forms provided that we quotient out $\triv_P$ from the target space. 
See also\ Example\ \ref{exaAffDiffOpNonUniqueDual} for the failure of uniqueness without this precaution. 
Specifically, $F^\ast:\fc^2(M,\lieg^\ast)\to\kin_P$ is uniquely specified by the condition stated below: 
\begin{align}\label{eqCurvDual}
\int_M\left(F^\ast(\beta)\right)(\lambda)\,\vol=(\beta,F(\lambda))\,, 
&& \forall\beta\in\fc^2(M,\lieg^\ast)\,,\;\forall\lambda\in\sect(\conn(P))\,,
\end{align}
where $(\cdot,\cdot):\fc^k(M,\lieg^\ast)\times\f^k(M,\lieg)\to\bbR$ is the pairing defined in\ \eqref{eqPairing2}, 
except for the irrelevant evaluation of $\lieg^\ast$ on $\lieg$. 

Note that we can use $F^\ast:\fc^2(M,\lieg^\ast)\to\kin_P$ 
to exhibit examples of gauge invariant affine functionals. 
In fact, let $\beta\in\fc^2(M,\lieg^\ast)$ and consider $\ev_\phi=F^\ast(\beta)$. 
The linear part of the functional $\ev_\phi\in\kin_P$ is given by $(\phi_V,\cdot)=-(\de\beta,\cdot)$. 
Therefore, for each $\lambda\in\sect(\conn(P))$, we have 
\begin{equation*}
\ev_\phi(\lambda+\gau_P)=\ev_\phi(\lambda)-(\de\beta,\gau_P)=\{\ev_\phi(\lambda)\}\,,
\end{equation*}
where we used the inclusion $\gau_P\subseteq\fdd^1(M,\lieg)$ to conclude. 
Note that gauge invariant functionals $F^\ast(\beta)$ of curvature type are already sufficiently many 
to test the curvature of any connection, which is anyway the maximal information one can get with $\inv_P$ 
according to\ Theorem\ \ref{thmInvAffDetectsCurv} and to\ Remark\ \ref{remInvAffDetectsCurv} 
at least for base manifolds of finite type. 
\end{example}

We return to the problem of defining suitable functionals on the space of gauge equivalence classes of on-shell 
connections on a principal bundle $P$ with structure group $G=U(1)$ over a globally hyperbolic spacetime $M$. 
Up to now we dealt with gauge invariance, but we still have to account for the dynamics. 
The equation of motion is presented in\ \eqref{eqOnShellConn} 
in terms of the affine differential operator $\MW=\de\circ F:\sect(\conn(P))\to\f^1(M,\lieg)$ 
introduced in\ \eqref{eqMaxwellOp}. 
To implement the dynamics at the level of functionals, we quotient by all those which vanish on-shell. 
Specifically, we introduce first the subspace $\van_P$ of $\inv_P$, which comprises all functionals vanishing on-shell, 
and then we define the space of affine observables taking the quotient of $\inv_P$ by $\van_P$: 
\begin{align*}
\van_P=\left\{\ev_\phi\in\inv_P\,:\;\ev_\phi(\sol_P)=\{0\}\right\}\,, && \obs_P=\frac{\inv_P}{\van_P}\,.
\end{align*}
Elements of $\obs_P$ are denoted by $\ev_{[\phi]}$, where the square brackets are used to remind 
that the quotient by $\van_P$ has been performed. 
Notice that $\obs_P$ has a natural pairing with $[\sol_P]$ specified by 
\begin{align}\label{eqAffObs}
\obs_P\times[\sol_P]\to\bbR\,, && (\ev_{[\phi]},[\lambda])\mapsto\ev_\phi(\lambda)\,,
\end{align}
where $\ev_\phi\in\ev_{[\phi]}$ and $\lambda\in[\lambda]$ are arbitrary representatives. 
This map is well-defined since $\ev_\phi$ is gauge invariant and $\lambda$ is on-shell. 
We already know that, at least for base manifolds of finite type, 
$\inv_P$ fails in separating points of $\sect(\conn(P))$, see\ Theorem\ \ref{thmInvAffDetectsCurv}. 
In particular, it is not possible to detect flat connections. Since flat connections are always on-shell, 
$\obs_P$ fails in separating points of $[\sol_P]$ via the pairing\ \eqref{eqAffObs} 
as much as $\inv_P$ fails in detecting flat connections. 
The converse, namely the fact that $[\sol_P]$ separates points of $\obs_P$, is true by construction. 
However, this property holds at the price of a quite implicit definition for the $\van_P$. 
In the following some effort is devoted to characterize explicitly the space of affine functionals vanishing on-shell. 

Indeed, as we show in the subsequent example, 
it is fairly easy to exhibit examples of gauge invariant affine functionals $\ev_\phi\in\inv_P$ 
which always vanish on solutions of the equation of motion, 
namely such that $\ev_\phi(\lambda)=0$ for all $\lambda\in\sol_P$. 

\begin{example}[Dual of $\MW$ and affine functionals vanishing on-shell]\label{exaVanConn}
Consider the formal dual $\MW^\ast$ of $\MW$, which can be obtained from the formal dual $F^\ast$ of $F$ 
exploiting the fact that $\MW=\de\circ F$. 
As for $F^\ast$, $\MW^\ast:\f^1(M,\lieg^\ast)\to\sect(\conn(P)^\dagger)$ 
is in general uniquely defined only on compactly supported forms 
provided that we quotient out $\triv_P$ from the target space. 
Specifically, $\MW^\ast:\fc^1(M,\lieg^\ast)\to\kin_P$ is uniquely specified by the condition 
\begin{align}\label{eqMaxwellDual}
\int_M\left(\MW^\ast(\alpha)\right)(\lambda)\,\vol=(\alpha,\MW(\lambda))\,, 
&& \forall\alpha\in\fc^1(M,\lieg^\ast)\,,\;\forall\lambda\in\sect(\conn(P))\,.
\end{align}
In particular, from\ \eqref{eqCurvDual}, we deduce that $\MW^\ast=F^\ast\circ\dd$. 

Using $\MW^\ast$ we can easily define functionals 
which always vanish on solutions of the equation $\MW(\lambda)=0$. 
Let $\alpha\in\fc^1(M,\lieg^\ast)$ and consider $\ev_\phi=\MW^\ast(\alpha)\in\kin_P$. 
Note that the affine functional $\ev_\phi$ is gauge invariant on account of\ Example\ \ref{exaCurvObs}. 
In fact, the image of $\MW^\ast=F^\ast\circ\dd$ is indeed included in the image of $F^\ast$. 
It is evident from\ \eqref{eqMaxwellDual} that the affine functional $\ev_\phi$ vanishes systematically on $\sol_P$. 
\end{example}

\begin{proposition}\label{prpVanObsConn}
Let $M$ be an $m$-dimensional globally hyperbolic spacetime and consider a principal $G$-bundle $P$ over $M$. 
$\ev_\phi\in\MW^\ast(\fc^1(M,\lieg^\ast))$ always vanishes on-shell. 
Assuming further that $M$ is of finite type, if $\ev_\phi\in\inv_P$ vanishes on-shell, 
then $\ev_\phi\in\MW^\ast(\fc^1(M,\lieg^\ast))$. In particular, 
in this situation $\van_P=\MW^\ast(\fc^1(M,\lieg^\ast))$ and $\obs_P=\inv_P/\MW^\ast(\fc^1(M,\lieg^\ast))$. 
\end{proposition}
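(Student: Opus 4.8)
The first (easy) inclusion is essentially Example~\ref{exaVanConn}: if $\ev_\phi=\MW^\ast(\alpha)$ for some $\alpha\in\fc^1(M,\lieg^\ast)$, then for every $\lambda\in\sol_P$ one has $\int_M(\MW^\ast(\alpha))(\lambda)\,\vol=(\alpha,\MW(\lambda))=0$ directly from the defining property \eqref{eqMaxwellDual} of $\MW^\ast$, so $\ev_\phi\in\van_P$; note also that $\MW^\ast(\fc^1(M,\lieg^\ast))\subseteq\inv_P$ since $\MW^\ast=F^\ast\circ\dd$ has image inside that of $F^\ast$, which consists of gauge invariant functionals by Example~\ref{exaCurvObs}. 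The real content is the converse under the finite-type hypothesis, and the plan is to decompose an arbitrary $\ev_\phi\in\van_P$ using a reference connection.

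First I would fix a reference connection $\tilde\lambda\in\sect(\conn(P))$ and write, following Remark~\ref{remAffDualSectDecomp}, $\phi=\phi(\tilde\lambda)\,\bfone+\phi_V(\cdot-\tilde\lambda)$, so that $\ev_\phi(\lambda)=\int_M\phi(\tilde\lambda)\,\vol+(\phi_V,\lambda-\tilde\lambda)$ for $\lambda\in\sect(\conn(P))$. Since $\ev_\phi\in\inv_P$ and $M$ is of finite type, Proposition~\ref{prpInvAffIsCoexact} gives $\phi_V\in\de\fc^2(M,\lieg^\ast)$, say $\phi_V=\de\beta$ with $\beta\in\fc^2(M,\lieg^\ast)$. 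Now I would use the structure of $[\sol_P]$: $\sol_P$ is an affine space modeled on $\ker(\de\dd:\f^1(M,\lieg)\to\f^1(M,\lieg))$, which by Theorem~\ref{thmSolModGaugeForms} (applied to $\lieg$-valued $1$-forms, $k=1$) together with gauge equivalence is controlled by the causal propagator $G$; concretely, $[\sol_P]$ can be parametrized via $\omega\mapsto[\,\tilde\lambda+G\omega\,]$ ranging over coclosed $\omega\in\ftcde^1(M,\lieg)$, since $\tilde\lambda$ solving or not solving the equation can be absorbed (if $\tilde\lambda\notin\sol_P$, shift it by $\omega_0=\de(G^+(\chi_+F(\tilde\lambda))+G^-(\chi_-F(\tilde\lambda)))$ as in Remark~\ref{remYMSolExist} to get a genuine solution, relabeling). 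The condition $\ev_\phi(\sol_P)=\{0\}$ then reads: the affine function $\omega\mapsto c+(\de\beta,G\omega)=c-(\beta,\dd G\omega)$ vanishes for all coclosed timelike-compact $\omega$. Taking $\omega=0$ forces the constant $c=\int_M\phi(\tilde\lambda_0)\,\vol=0$ for the genuine solution $\tilde\lambda_0$; hence $(\beta,\dd G\omega)=0$ for all $\omega\in\ftcde^1(M,\lieg)$.

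The heart of the argument is to turn $(\beta,\dd G\omega)=0$ for all coclosed timelike-compact $\omega$ into the existence of $\alpha\in\fc^1(M,\lieg^\ast)$ with $\de\dd\alpha=\de\beta=\phi_V$ — because then $\ev_\phi$ has linear part $-(\de\dd\alpha,\cdot)=(\MW^\ast(\alpha))_V$ and vanishing constant part, so $\ev_\phi=\MW^\ast(\alpha)$ in $\kin_P$. This step is precisely parallel to Proposition~\ref{prpVanForms} for Maxwell $k$-forms in degree $k=1$: there the condition $(G\omega,\eta)=0$ for all $\eta\in\ftcde^1$ was converted, via Poincar\'e duality for causally restricted cohomology (Theorem~\ref{thmSCTCPoincareDuality}, using finite type), the Lorenz-gauge lemmas (Lemma~\ref{lemLorenzGaugeFixing}, Lemma~\ref{lemLorenzSol}, Remark~\ref{remSCSolModGaugeForms}) and the exact sequence of Theorem~\ref{thmExtCausalProp}, into $\omega=\de\dd\rho$ for compactly supported $\rho$. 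So the plan is to identify $\beta$ (more precisely $\dd\beta$, or $\ast\de\beta$) with the datum "$\omega$" of Proposition~\ref{prpVanForms} and invoke that proposition verbatim, after checking the two pairings match: using Stokes' theorem $(\beta,\dd G\omega)=(\de\beta,G\omega)$, and $G$ being formally antiself-adjoint, $(\de\beta,G\omega)=-(G\de\beta,\omega)$, so vanishing for all coclosed timelike-compact $\omega$ is exactly the hypothesis of Proposition~\ref{prpVanForms} applied to the invariant functional $\ev_{\de\beta}$. That proposition then yields $\rho\in\fc^1(M,\lieg)$ with $\de\dd\rho=\de\beta$; setting $\alpha=\ast$-dual of $\rho$ (more simply, working directly with $\lieg^\ast$-valued forms throughout) gives $\de\dd\alpha=\phi_V$, hence $\ev_\phi-\MW^\ast(-\alpha)$ has zero linear part and, by the constant-part computation, zero constant part, so it is trivial in $\kin_P$, i.e.\ $\ev_\phi=\MW^\ast(-\alpha)$. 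The identity $\van_P=\MW^\ast(\fc^1(M,\lieg^\ast))$ and thus $\obs_P=\inv_P/\MW^\ast(\fc^1(M,\lieg^\ast))$ follow immediately.

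\textbf{Main obstacle.} The delicate point is bookkeeping the affine structure cleanly: isolating the constant part of $\ev_\phi$ and showing it vanishes requires a genuine solution $\tilde\lambda_0\in\sol_P$ as reference (not an arbitrary connection), and one must be careful that the decomposition $\phi=\phi(\tilde\lambda_0)\,\bfone+\phi_V(\cdot-\tilde\lambda_0)$ interacts correctly with the quotient by $\triv_P$ — i.e.\ that $\MW^\ast(\alpha)$ and $\ev_\phi$ agreeing in linear part and on one solution forces agreement in $\kin_P=\sc(\conn(P)^\dagger)/\triv_P$, which is where Proposition~\ref{prpAffSeparability}/Remark~\ref{remAffTriv} enter. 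The second, more substantial technical hurdle is verifying that the cohomological reduction of Proposition~\ref{prpVanForms} applies without change to $\lieg^\ast$-valued (equivalently, $i\bbR$-valued) forms rather than $\bbR$-valued ones; since $\lieg\cong i\bbR$ is one-dimensional this is routine, but it should be stated explicitly. Everything else is a direct transcription of the Maxwell $1$-form analysis of Section~\ref{secClassicalFTForms}.
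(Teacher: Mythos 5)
Your argument is correct and follows essentially the same route as the paper's proof: establish coclosedness of $\phi_V$ via Proposition~\ref{prpInvAffIsCoexact}, reduce to the Maxwell $1$-form statement Proposition~\ref{prpVanForms} (which is where the finite-type hypothesis and the causally restricted Poincar\'e duality enter) to obtain $\rho\in\fc^1(M,\lieg^\ast)$ with $\de\dd\rho=\phi_V$, and then compare $\ev_\phi$ with $\pm\MW^\ast(\rho)$ by matching linear parts and the value on a reference on-shell connection from Remark~\ref{remYMSolExist}. The only difference is cosmetic: the paper deduces $(\phi_V,\omega)=0$ directly for all $\omega\in\ker(\de\dd)$ from the affine structure of $\sol_P$, whereas you detour through the $G\omega$-parametrization of solutions and the coexact representative $\phi_V=\de\beta$, which is harmless but not needed.
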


\begin{proof}
The first part of the statement was shown in\ Example\ \ref{exaVanConn}. 
For the second part, assume that $\ev_\phi\in\inv_P$ is such that 
$\ev_\phi(\lambda)=0$ for each $\lambda\in\sol_P$. 
By\ Proposition\ \ref{prpInvAffIsCoexact}, $\phi_V$ is coclosed. 
As already noted immediately after its definition in\ \eqref{eqSolYM}, 
$\sol_P$ is an affine space modeled on the kernel of $\de\dd:\f^1(M,\lieg)\to\f^1(M,\lieg)$. 
This fact entails that $\ev_\phi(\lambda+\omega)=0$ for all $\lambda\in\sol_P$ and $\omega\in\ker(\de\dd)$. 
Hence, we have $(\phi_V,\omega)=0$ for all $\omega\in\ker(\de\dd)$. 
Therefore, we can apply Proposition\ \ref{prpVanForms} to $\phi_V$ 
and conclude that there exists $\rho\in\fc^1(M,\lieg^\ast)$ such that $\de\dd\rho=\phi_V$. 
Now consider $\ev_\psi=-\MW^\ast(\rho)$. We want to show that $\ev_\psi=\ev_\phi$. 
Choose an on-shell connection $\tilde{\lambda}\in\sol_P$, 
whose existence is ensured by\ Remark\ \ref{remYMSolExist}. 
Since $\sect(\conn(P))$ is an affine space modeled on $\f^1(M,\lieg)$, it is sufficient to prove that 
$\ev_\psi(\tilde{\lambda}+\omega)=\ev_\phi(\tilde{\lambda}+\omega)$ for all $\omega\in\f^1(M,\lieg)$. 
Notice that both $\ev_\phi$ (per hypothesis) and $\ev_\psi$ (per construction) vanish on $\tilde{\lambda}$ 
since it is on-shell. Therefore $\ev_\psi=\ev_\phi$ if and only if their linear parts coincide, 
which is the case due to $\psi_V=\de\dd\rho=\phi_V$. 
\end{proof}

\begin{remark}[Observables testing magnetic and electric charges]\label{remChargeObs}
Let $P$ be a principal $G$-bundle over an $m$-dimensional globally hyperbolic spacetime $M$. 
The dual of the curvature map $F^\ast:\fc^2(M,\lieg^\ast)\to\inv_P$, {\em cfr.}\ Remark \ref{exaCurvObs}, 
allows us to define observables testing the magnetic and electric charges. 

Take $\theta\in\fcde^2(M,\lieg^\ast)$ 
and define $\ev_{[\phi]}\in\obs_P$ as the equivalence class of $\ev_\phi=F^\ast(\theta)$. 
Notice that $\ev_\phi$ has vanishing linear part. In fact, the linear part of $\phi_V=-\de\theta=0$. 
Taking a gauge equivalence class of on-shell connections $[\lambda]\in[\sol_P]$, one can evaluate $\ev_{[\phi]}$: 
\begin{equation*}
\ev_{[\phi]}([\lambda])=(\theta,F(\lambda))={}_\de([\theta],[F(\lambda)])\,.
\end{equation*}
Therefore, evaluating $[\lambda]$ with $\ev_{[\phi]}$ amounts to test the $\dd$-cohomology class 
$[F(\lambda)]\in\hdd^2(M,\lieg)$, namely the Chern class of $P$.\footnote{Notice that 
$[F(\lambda)]\in\hdd^2(M,\lieg)$ does not really depend on $\lambda$, 
but only on the topology of the principal bundle $P$. In fact, two connections always differ 
by a $\lieg$-valued $1$-form, whence their curvatures always differ by an exact $\lieg$-valued $2$-form.} 
In particular, since the pairing ${}_\de(\cdot,\cdot):\hcde^2(M,\lieg^\ast)\times\hdd^2(M,\lieg)\to\bbR$ 
induces an isomorphism $\hdd^2(M,\lieg)\to(\hcde^2(M,\lieg^\ast))^\ast$, 
see\ Remark\ \ref{remPoincareDualityImproved}, 
the class of observables presented above makes it possible to determine the Chern class of $P$. 
As a matter of fact, in this procedure to test the Chern class of $P$, 
one could replace coclosed $2$-forms with compact support with closed $2$-cycles. 
This makes evident the fact that observables of the type described really measure 
the magnetic charge, namely the flux of $F(\lambda)$ through a closed $2$-surface embedded in $M$. 

The on-shell condition we impose on connections, namely $\de F(\lambda)=\MW(\lambda)=0$, 
entails that the curvature associated to a connection $\lambda\in\sol_P$, besides being closed, 
is also a coclosed form. In fact, to each gauge equivalence class of connections $[\lambda]\in[\sol_P]$, 
one can assign the cohomology class $[F(\lambda)]\in\hde^2(M,\lieg)$.\footnote{Notice that 
$[F(\lambda)]\in\hde^2(M,\lieg)$ does not depend on the choice of a representative $\lambda$ 
in the gauge equivalence class $[\lambda]$. 
This is due to the fact that gauge transformations always shift connections 
by $\lieg$-valued closed $1$-forms, {\em cfr.}\ \eqref{eqConnGaugeHom}, 
whence the curvature $F(\lambda)$ is not affected by gauge transformations.} 
Taking $\eta\in\fcdd^2(M,\lieg)$, one can introduce $\ev_{[\psi]}\in\obs_P$ 
specifying one of its representatives according to $\ev_\psi=F^\ast(\eta)$. 
In this case, the linear part of $\ev_\psi$ is given by $\psi_V=-\de\eta$. 
Let us evaluate $\ev_{[\psi]}$ on $[\lambda]$: 
\begin{equation*}
\ev_{[\psi]}=(\eta,F(\lambda))=([\eta],[F(\lambda)])_\de\,.
\end{equation*}
We conclude that this class of observables tests the $\de$-cohomology class $[F(\lambda)]\in\hde^2(M,\lieg)$. 
In fact, for a given gauge equivalence class of connections $[\lambda]\in[\sol_P]$, 
one can determine the associated cohomology class $[F(\lambda)]\in\hde^2(M,\lieg)$ 
simply using the class of observables presented here. This follows from the fact that the pairing 
$(\cdot,\cdot)_\de:\hcdd^2(M,\lieg^\ast)\times\hde^2(M,\lieg)\to\bbR$ induces an isomorphism 
$\hde^2(M,\lieg)\to(\hcdd^2(M,\lieg^\ast))^\ast$, see\ Remark\ \ref{remPoincareDualityImproved}. 
The physical quantity which one determines with this procedure represents the electric charge 
associated to a gauge equivalence class of on-shell connections $[\lambda]\in[\sol_P]$. 
To make this interpretation more evident, one should consider $[\ast F(\lambda)]\in\hdd^{m-2}(M,\lieg)$ 
in place of $[F(\lambda)]\in\hde^2(M,\lieg)$ and replace $\lieg$-valued closed $2$-forms with compact support 
with $(m-2)$-cycles. Doing so, one realizes that testing $\ev_{[\psi]}$ on $[\lambda]$ 
amounts to measuring the flux of $\ast F(\lambda)$ through a closed $(m-2)$-surface embedded in $M$. 
This is the typical flux of the electric field through a closed $2$-surface in the physical situation, namely $m=4$. 

For further details on magnetic and electric charges refer to\ \cite[Section\ 6]{BDS14a} 
and\ \cite[Remark\ 5.5]{BDHS14}. 
\end{remark}

We can endow the space of observables $\obs_P$ for gauge equivalence classes of Yang-Mills connections 
on a principal $G$-bundle $P$ over an $m$-dimensional globally hyperbolic spacetime $M$ 
with a presymplectic structure defined out of the causal propagator $G$ for the Hodge-d'Alembert operator $\Box$. 
This choice can be motivated applying Peierls' method to the Lagrangian density 
\begin{equation}\label{eqLagrangian}
\mathcal{L}[\lambda]=h(F(\lambda))\wedge\ast F(\lambda)\,,
\end{equation}
where $h:\lieg\to\lieg^\ast$ is an $\ad$-equivariant isomorphism such that 
$(h\xi)(\eta)=(h\eta)(\xi)$ for all $\xi,\eta\in\lieg$,\footnote{Since $\lieg=i\bbR$ 
and the adjont action of $G$ on $\lieg$ is trivial, $G$ being Abelian, 
$h$ is simply a real number different from zero, which can be interpreted as an electric charge constant, 
see the comment after\ \cite[eq. (3.11)]{BDHS14}.} 
and evaluation of $\lieg^\ast$ on $\lieg$ is understood. 
For further details on Peierls' method, see\ \cite[Remark\ 3.5]{BDS14a} for the present situation or 
\cite{Pei52} and \cite[Section\ I.4]{Haa96} for a more general discussion about this topic. 
Notice that $h$ is fixed once and for all. 
We will often use the inverse $h^{-1}:\lieg^\ast\to\lieg$ to define an inner product on $\lieg^\ast$. 
For example, we will consider the pairing $(\cdot,\cdot)_h$ between $\lieg^\ast$-valued $k$-forms on $M$ 
with compact overlapping support, which is obtained from the pairing $(\cdot,\cdot)$ defined in \eqref{eqPairing2}: 
\begin{equation}\label{eqLiePairing}
(\omega,\omega^\prime)_h=(\omega,h^{-1}(\omega^\prime))\,,
\end{equation}
where $\omega,\omega^\prime\in\f^k(M,\lieg^\ast)$ have supports with compact overlap 
and the evaluation of $\lieg^\ast$ on $\lieg$ is understood. 
Note that the requirements on $h$ entail that this pairing is symmetric upon interchange of the arguments. 

\begin{proposition}\label{prpPSymObsConn}
Let $M$ be an $m$-dimensional globally hyperbolic spacetime and consider a principal $G$-bundle $P$ over $M$. 
Denote with $G:\fc^1(M,\lieg^\ast)\to\fsc^1(M,\lieg^\ast)$ the causal propagator 
for the Hodge-d'Alembert operator $\Box$ on $M$ acting on $\lieg^\ast$-valued $1$-forms. 
The anti-symmetric bilinear map defined below induces 
a presymplectic form $\tau_P:\obs_P\times\obs_P\to\bbR$ on the quotient $\obs_P=\inv_P/\van_P$: 
\begin{align*}
\tau_P:\inv_P\times\inv_P\to\bbR\,, && (\ev_\phi,\ev_\psi)\mapsto(\phi_V,G\psi_V)_h\,,
\end{align*}
where $\phi_V,\psi_V\in\fc^1(M,\lieg^\ast)$ represent the linear parts of $\ev_\phi,\ev_\psi\in\inv_P$ 
and $(\cdot,\cdot)_h$ denotes the pairing between $\lieg^\ast$-valued $1$-forms 
with compact overlapping support, see\ \eqref{eqLiePairing}, 
\end{proposition}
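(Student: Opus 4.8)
The plan is to mirror almost verbatim the argument used for Proposition~\ref{prpPSymObsForms} in the case of Maxwell $k$-forms, since the present situation reduces to the linear theory of $\lieg^\ast$-valued $1$-forms once one passes to linear parts. First I would observe that the map $\tau_P$ is manifestly well-defined (every $\ev_\phi\in\inv_P$ has a well-defined linear part $\phi_V\in\fc^1(M,\lieg^\ast)$, and the quotient by $\triv_P$ does not alter linear parts, {\em cfr.}~Remark~\ref{remAffTriv}) and bilinear, the bilinearity being inherited from the pairing $(\cdot,\cdot)_h$ and the linearity of $G$ and of the assignment $\ev_\phi\mapsto\phi_V$.

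Next I would check anti-symmetry. Since $(\cdot,\cdot)_h$ is symmetric (this uses the hypotheses on $h$, namely $(h\xi)(\eta)=(h\eta)(\xi)$) and $G=G^+-G^-$ is formally antiself-adjoint with respect to the underlying pairing $(\cdot,\cdot)$, hence also with respect to $(\cdot,\cdot)_h$, see\ \eqref{eqGreenAdjoint}, for $\ev_\phi,\ev_\psi\in\inv_P$ one has the chain of identities
\begin{equation*}
(\phi_V,G\psi_V)_h=(G\psi_V,\phi_V)_h=-(\psi_V,G\phi_V)_h\,,
\end{equation*}
which is exactly anti-symmetry of $\tau_P$.

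The only substantive point is that $\tau_P$ descends to the quotient $\obs_P=\inv_P/\van_P$, i.e.\ that it vanishes whenever one argument lies in $\van_P$. Because of anti-symmetry it suffices to show $\tau_P(\ev_\theta,\ev_\psi)=0$ for $\ev_\theta\in\van_P$ and $\ev_\psi\in\inv_P$. By Proposition~\ref{prpInvAffIsCoexact} the linear part $\psi_V$ is coclosed, so the intertwining identities of Proposition~\ref{prpIntertwiners} give $\de\dd G\psi_V=\Box G\psi_V-\dd G\de\psi_V=0$, whence $G\psi_V\in\sol_M$ in the sense of\ \eqref{eqSolForms} for $k=1$ (more precisely, $G\psi_V$ solves the linearized equation $\de\dd(G\psi_V)=0$, which is the model equation for $\sol_P$ regarded as an affine space, {\em cfr.}~the comment after\ \eqref{eqSolYM}). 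Since $\ev_\theta$ vanishes on all on-shell connections and, translating by a fixed $\tilde\lambda\in\sol_P$ (whose existence is guaranteed by Remark~\ref{remYMSolExist}), $\ev_\theta(\tilde\lambda+G\psi_V)=\ev_\theta(\tilde\lambda)+(\theta_V,G\psi_V)_h=(\theta_V,G\psi_V)_h$, one concludes $\tau_P(\ev_\theta,\ev_\psi)=(\theta_V,G\psi_V)_h=0$. This shows $\tau_P$ induces a (possibly degenerate) anti-symmetric bilinear form, i.e.\ a presymplectic form, on $\obs_P$. The main obstacle, if any, is purely bookkeeping: making sure that the identification of $\ev_\theta\in\van_P$ with its action on the linear subspace $\sol_P$ is legitimate, i.e.\ that an affine functional vanishing on the affine space $\sol_P$ also has the property that its linear part pairs to zero with the model vector space $\ker(\de\dd)$; this is immediate from the affine structure but should be stated explicitly.
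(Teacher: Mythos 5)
Your proof is correct and follows essentially the same route as the paper: bilinearity by inspection, anti-symmetry from the symmetry of $(\cdot,\cdot)_h$ together with formal antiself-adjointness of $G$, and descent to the quotient by noting that $\psi_V$ is coclosed (Proposition~\ref{prpInvAffIsCoexact}), so that translating a reference solution $\tilde\lambda\in\sol_P$ by the resulting solution of the linearized equation stays on-shell and forces $(\theta_V,G\psi_V)_h=0$. The only bookkeeping point to fix is that the shift must lie in the model space $\f^1(M,\lieg)$, so one should translate by $\omega=h^{-1}(G\psi_V)\in\fsc^1(M,\lieg)$ rather than by the $\lieg^\ast$-valued form $G\psi_V$ itself; then $(\theta_V,\omega)=(\theta_V,G\psi_V)_h$ by \eqref{eqLiePairing}, exactly as in the paper's proof.
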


\begin{proof}
It is clear from its definition that $\tau_P:\inv_P\times\inv_P\to\bbR$ is bilinear. 
Anti-symmetry follows from the causal propagator $G:\fc^1(M,\lieg^\ast)\to\fsc^1(M,\lieg^\ast)$ 
being formally antiself-adjoint with respect to $(\cdot,\cdot)_h$ 
and from $(\cdot,\cdot)_h$ being symmetric with respect to the interchange of its arguments. 
It remains only to check that $\tau_P(\ev_\phi,\ev_\psi)=0$ for all $\ev_\phi\in\van_P$ and $\ev_\psi\in\inv_P$. 
From\ Proposition\ \ref{prpInvAffIsCoexact}, we know that $\psi_V$ is coclosed, 
whence $\omega=h^{-1}(G\psi_V)\in\fsc^1(M,\lieg)$ is a solution of the equation $\de\dd\omega=0$, 
which is the linear part of eq.\ \eqref{eqOnShellConn}. 
Consider a connection $\tilde{\lambda}\in\sol_P$, which exists on account of\ Remark\ \ref{remYMSolExist}. 
Then $\lambda=\tilde{\lambda}+\omega$ lies in $\sol_P$ as well. 
Evaluating $\ev_\phi$ on $\lambda$, we obtain the following result: 
\begin{equation*}
\ev_\phi(\lambda)=\ev_\phi(\tilde{\lambda})+(\phi_V,\omega)=\ev_\phi(\tilde{\lambda})+(\phi_V,G\psi_V)_h\,,
\end{equation*}
where the last step follows from the definition of $(\cdot,\cdot)_h$ in \eqref{eqLiePairing}. 
Since $\ev_\phi\in\van_P$, $\ev_\phi(\tilde{\lambda})=0$ and $\ev_\phi(\lambda)=0$ too, 
$(\phi_V,G\psi_V)_h$ vanishes as well, 
proving that $\tau_P:\inv_P\times\inv_P\to\bbR$ descends to the quotient $\obs_P$. 
This shows that $\tau_P:\obs_M\times\obs_M\to\bbR$ is a a well-defined presymplectic form. 
\end{proof}

As in\ Chapter \ref{chMaxwell}, after introducing the presymplectic structure on the relevant space of observables, 
a natural question which arises is whether the presymplectic structure is degenerate or not. 
This has important consequences for the properties of the construction which will follow. 
In particular, on account of\ Proposition\ \ref{prpSymInj}, the null space of the presymplectic form provides 
an upper bound for the kernels of all presymplectic linear maps, thus helping us in understanding 
to what extent the requirement of locality of\ \cite[Definition\ 2.1]{BFV03} is violated by the present model. 

\begin{proposition}\label{prpRadConn}
Let $M$ be an $m$-dimensional globally hyperbolic spacetime and consider a principal $G$-bundle $P$ over $M$. 
Denote with $\rad_P$ the null space of the presymplectic structure $\tau_P$ on $\obs_P$ 
introduced in\ Proposition\ \ref{prpPSymObsConn} and take $\ev_\phi\in\inv_P$. 
The following implications hold true: 
\begin{enumerate}
\item If $\phi_V\in\de(\fc^2(M,\lieg^\ast)\cap\dd\ftc^1(M,\lieg^\ast))$, then $\ev_{[\phi]}\in\rad_P$; 
\item If $\ev_{[\phi]}\in\rad_P$, then $\phi_V\in\de\fcdd^2(M,\lieg^\ast)$. 
\end{enumerate}
Furthermore, under the assumption that $M$ is of finite type, 
the null space $\rad_P$ has the following explicit characterization: 
\begin{equation*}
\rad_P=\left\{\ev_\phi\in\inv_P\,:\;\phi_V\in\de\fcdd^2(M,\lieg^\ast)\right\}/\van_P\,.
\end{equation*}
\end{proposition}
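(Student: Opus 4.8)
The plan is to prove the three statements in order, using the characterization of $\van_P$ from Proposition\ \ref{prpVanObsConn} (valid when $M$ is of finite type) together with the Poincar\'e-duality results of Subsection\ \ref{subRestrictedCohom}. Throughout, I will work with the linear parts: since the presymplectic form $\tau_P(\ev_\phi,\ev_\psi)=(\phi_V,G\psi_V)_h$ depends only on linear parts, and, by Proposition\ \ref{prpInvAffIsCoexact} plus Proposition\ \ref{prpVanObsConn}, both gauge invariance and vanishing-on-shell are conditions on linear parts ($\de\phi_V=0$, resp.\ $\phi_V\in\de\dd\fc^1(M,\lieg^\ast)$), the whole problem reduces — up to the irrelevant evaluation of $\lieg^\ast$ on $\lieg$ — to exactly the computation already carried out in Proposition\ \ref{prpRadForms} for Maxwell $1$-forms. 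So the strategy is essentially to transport that proof.

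First I would prove statement (1). Suppose $\phi_V=\de\dd\eta$ with $\eta\in\ftc^1(M,\lieg^\ast)$ and $\dd\eta\in\fc^2(M,\lieg^\ast)$. Using $\Box=\de\dd+\dd\de$ and Proposition\ \ref{prpIntertwiners} (intertwining of $G$ with $\dd,\de$), one gets $G\phi_V=G\Box\eta-G\dd\de\eta=-\dd G\de\eta$. Then for any $\ev_\psi\in\inv_P$, since $\psi_V$ is coclosed (Proposition\ \ref{prpInvAffIsCoexact}), Stokes' theorem gives $(\psi_V,G\phi_V)_h=-(\psi_V,\dd G\de\eta)_h=-(\de\psi_V,G\de\eta)_h=0$, so $\ev_{[\phi]}\in\rad_P$. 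This is a short direct calculation, mirroring the "$\supseteq$" inclusion in Proposition\ \ref{prpRadForms}.

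Next, statement (2): if $\ev_{[\phi]}\in\rad_P$ then $(\psi_V,G\phi_V)_h=0$ for all coclosed $\psi_V\in\fc^1(M,\lieg^\ast)$. Plugging in only $\psi_V\in\de\fc^2(M,\lieg^\ast)$ and applying Stokes yields $\dd G\phi_V=0$, so $[G\phi_V]\in\hdd^1(M,\lieg^\ast)$ is defined; restricting to $\psi_V$ coclosed shows ${}_\de(\cdot,[G\phi_V])=0$ on $\hcde^1(M,\lieg^\ast)$, hence $[G\phi_V]=0$ by the isomorphism $\hdd^1\to(\hcde^1)^\ast$ of Remark\ \ref{remPoincareDualityImproved} (no finite-type hypothesis needed here). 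Thus $G\phi_V=\dd\chi$ for some $\chi\in\f^0(M,\lieg^\ast)$. Since $\de\phi_V=0$ we get $\de\dd\chi=0$, i.e.\ $\Box\chi=0$, so there is $\alpha\in\ctc(M,\lieg^\ast)$ with $G\alpha=\chi$ by Theorem\ \ref{thmExtCausalProp}; then $G\dd\alpha=\dd\chi=G\phi_V$, so $\phi_V-\dd\alpha=\Box\theta$ for some $\theta\in\ftc^1(M,\lieg^\ast)$ (exactness in Theorem\ \ref{thmExtCausalProp}). Acting with $\de$ and using $\de\phi_V=0$, $\Box\de\alpha=0$ (as $\alpha$ is a $0$-form) gives $\Box\de\theta=0$, whence $\de\theta=0$ and $\phi_V=\de\dd\theta+\dd\alpha$; but $\dd\alpha$ has the same class as... — here I must be careful: the degree-$0$ argument differs slightly from $k\geq 2$, so I would follow the $k=1$ branch of Proposition\ \ref{prpRadForms} verbatim, concluding $\phi_V=\de\dd\theta$ and that $\Box\dd\theta=\dd\phi_V$ forces $\dd\theta=G^\pm\dd\phi_V$, so $\supp(\dd\theta)$ is both past- and future-spacelike-compact, hence compact; therefore $\phi_V=\de(\dd\theta)\in\de\fcdd^2(M,\lieg^\ast)$.

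Finally, the explicit characterization under the finite-type assumption. The inclusion $\supseteq$ follows from (1) once we know $\de\fcdd^2(M,\lieg^\ast)\subseteq\de(\fc^2\cap\dd\ftc^1)(M,\lieg^\ast)$: given $\theta\in\fcdd^2(M,\lieg^\ast)$, Theorem\ \ref{thmExtCausalProp} (applied to $\Box$ on $1$-forms) or rather the argument in Lemma\ \ref{lemLorenzSol}/Theorem\ \ref{thmSolModGaugeForms} provides $\eta\in\ftc^1(M,\lieg^\ast)$ with $\dd\eta=\theta$, so $\de\theta=\de\dd\eta$ with $\dd\eta=\theta\in\fc^2$; hence $\de\theta\in\de(\fc^2\cap\dd\ftc^1)$. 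The inclusion $\subseteq$ is exactly statement (2). The main obstacle I anticipate is keeping the $k=1$ vs.\ $k\geq2$ case distinction of Proposition\ \ref{prpRadForms} straight — in degree $1$ the auxiliary potential $\chi$ is a $0$-form and several intertwining identities ($\de$ on $0$-forms is zero, $\Box\chi=\de\dd\chi$) collapse, so one must invoke the scalar causal propagator rather than Theorem\ \ref{thmSolModGaugeForms} — plus carrying the $\lieg^\ast$-coefficients and the pairing $(\cdot,\cdot)_h$ through without letting the nondegenerate factor $h$ cause trouble; but none of this is conceptually new, since $h^{-1}$ is just a fixed linear isomorphism and all the cohomological input (Poincar\'e duality, Remark\ \ref{remZCohoGenRCoho} is not even needed here) is already available.
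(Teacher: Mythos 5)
Your proof of statement (1) is correct and is the same computation as the paper's. The rest of the argument, however, transports Proposition~\ref{prpRadForms} to a setting where it does not apply. In statement (2) you assert that $\ev_{[\phi]}\in\rad_P$ yields $(\psi_V,G\phi_V)_h=0$ for \emph{all coclosed} $\psi_V\in\fcde^1(M,\lieg^\ast)$. In the Maxwell case this is exactly what membership in the radical means, because there the gauge invariant functionals are labelled by all coclosed compactly supported forms; here $\inv_P$ is strictly smaller: by Proposition~\ref{prpInvAffIsCoexact}, invariance under the topologically non-trivial gauge transformations forces (on a spacetime of finite type) the linear part to be \emph{coexact}. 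Hence the only test forms the radical condition gives you are $\psi_V=\de\beta$ with $\beta\in\fc^2(M,\lieg^\ast)$, and your Poincar\'e-duality step (pairing $[G\phi_V]$ against all of $\hcde^1(M,\lieg^\ast)$ to conclude $[G\phi_V]=0$) has no basis -- the only class available for testing is zero. Moreover, had that argument gone through it would establish the Maxwell-type conclusion $\phi_V\in\de\big(\fc^2(M,\lieg^\ast)\cap\dd\ftc^1(M,\lieg^\ast)\big)$, which is strictly stronger than the stated one and is in fact \emph{false} as a characterization of $\rad_P$: precisely because $\inv_P$ is smaller than the Maxwell test space, the Yang--Mills null space is larger, namely $\de\fcdd^2(M,\lieg^\ast)$ modulo $\van_P$ (this is the content of the proposition and the reason Example~\ref{exaNonTrivRadConn} works). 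The paper's proof of (2) needs no cohomology at all: from $(\de\beta,G\phi_V)_h=0$ for all $\beta$ one gets $G\dd\phi_V=0$; Theorem~\ref{thmCausalProp} produces $\eta\in\fc^2(M,\lieg^\ast)$ with $\Box\eta=\dd\phi_V$; applying $\dd$ and using injectivity of $\Box$ on compact supports gives $\dd\eta=0$; finally $\Box\de\eta=\de\dd\phi_V=\Box\phi_V$ (since $\de\phi_V=0$) gives $\phi_V=\de\eta\in\de\fcdd^2(M,\lieg^\ast)$, with no finite-type hypothesis.

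The second gap is in your proof of the inclusion $\supseteq$ of the final characterization, which rests on the claim $\de\fcdd^2(M,\lieg^\ast)\subseteq\de\big(\fc^2(M,\lieg^\ast)\cap\dd\ftc^1(M,\lieg^\ast)\big)$, i.e.\ that every closed compactly supported $2$-form is $\dd$ of a timelike compact $1$-form. Neither Lemma~\ref{lemLorenzSol} nor Theorem~\ref{thmSolModGaugeForms} says anything of this kind, and the claim is false in general: it would mean that the map $\hcdd^2(M,\lieg^\ast)\to\htcdd^2(M,\lieg^\ast)$ induced by the inclusion of supports vanishes, whereas for $M$ with compact Cauchy hypersurfaces timelike compact supports are compact, so this map is the identity, and any closed $\theta\in\fc^2(M,\lieg^\ast)$ with $[\theta]\neq0$ in $\hcdd^2(M,\lieg^\ast)$ (e.g.\ for a Cauchy hypersurface $\bbT^{m-1}$) lies in $\de\fcdd^2$ but not in $\de(\fc^2\cap\dd\ftc^1)$, the operator $\de$ being injective on closed compactly supported forms (if $\de\theta=\de\theta^\prime$ with both closed, then $\Box(\theta-\theta^\prime)=0$ with compact support). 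The correct use of the finite-type hypothesis is on the other side of the pairing: it guarantees that every $\ev_\psi\in\inv_P$ has $\psi_V=\de\beta$, so that for $\phi_V=\de\eta$ with $\eta\in\fcdd^2(M,\lieg^\ast)$ one computes $(\de\beta,G\de\eta)_h=(\beta,G\dd\de\eta)_h=(\beta,G\Box\eta)_h=0$. In short, finite type is needed to shrink the test space $\inv_P$, not to squeeze $\de\fcdd^2$ into the Maxwell-type set.
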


\begin{proof}
To start with, consider $\ev_\phi\in\inv_P$ such that 
$\phi_V\in\de(\fc^2(M,\lieg^\ast)\cap\dd\ftc^1(M,\lieg^\ast))$, 
namely there exists $\xi\in\ftc^1(M,\lieg^\ast)$ such that $\dd\xi$ has compact support and $\de\dd\xi=\phi_V$. 
Consider the causal propagator $G:\fc^k(M,\lieg^\ast)\to\fsc^k(M,\lieg^\ast)$ 
for $\Box=\de\dd+\dd\de$ acting on $\lieg^\ast$-valued $k$-forms on $M$. 
Since $\dd$ intertwines $G$ for different degrees $k$, we have the following chain of identities: 
\begin{equation*}
G\phi_V=G\de\dd\xi=G(\Box\xi-\dd\de\xi)=-\dd G\de\xi\,.
\end{equation*}
Therefore, for each $\ev_\psi\in\inv_P$, we can conclude 
\begin{equation*}
\tau_P(\ev_{[\psi]},\ev_{[\phi]})=(\psi_V,G\phi_V)_h=-(\psi_V,\dd G\de\xi)_h=-(\de\psi_V,G\de\xi)_h=0\,,
\end{equation*}
where we used both Stokes' theorem and the fact that $\psi_V$ is coclosed, 
{\em cfr.}\ Proposition\ \ref{prpInvAffIsCoexact}. This proves the first statement. 

For the second statement, assume that $\ev_\phi\in\inv_P$ is such that $\ev_{[\phi]}$ lies in $\rad_P$. 
Since, by\ Proposition\ \ref{prpInvAffIsCoexact}, each $\ev_\psi\in\kin_P$ with $\psi_V\in\de\fc^2(M,\lieg^\ast)$ 
lies in $\inv_P$, our hypothesis entails that $(\de\beta,G\phi_V)_h=0$ for each $\beta\in\fc^2(M,\lieg^\ast)$, 
whence $G\dd\phi_V=0$. On account of this fact, there exists $\eta\in\fc^2(M,\lieg^\ast)$ such that 
$\Box\eta=\dd\phi_V$. Applying $\dd$ on both sides and recalling that $\eta$ has compact support, 
we deduce that $\dd\eta=0$. Furthermore, 
recalling that $\ev_\phi\in\inv_P$ entails $\de\phi_V=0$ according to\ Proposition\ \ref{prpInvAffIsCoexact}, 
we get $\Box\de\eta=\de\dd\phi_V=\Box\phi_V$, whence $\phi_V=\de\eta\in\de\fcdd^2(M,\lieg^\ast)$. 

Assuming that $M$ is of finite type, we can refine the first statement. 
In fact, according to\ Proposition\ \ref{prpInvAffIsCoexact}, under this extra hypothesis, 
we have that each $\ev_\psi\in\inv_P$ has $\psi_V\in\de\fc^2(M,\lieg^\ast)$. 
Therefore, given $\ev_\phi\in\inv_P$ with $\phi_V\in\de\fcdd^2(M,\lieg^\ast)$, 
we have to check that $\tau_P(\ev_{[\psi]},\ev_{[\phi]})=(\psi_V,G\phi_V)_h$ vanishes for all $\ev_\psi\in\inv_P$. 
Per hypothesis, there exists $\eta\in\fcdd^2(M,\lieg^\ast)$ such that $\de\eta=\phi_V$. 
Furthermore, by the argument mentioned above and exploiting the fact that $M$ is of finite type, 
for each $\ev_\psi\in\inv_P$, we can find $\beta\in\fc^2(M,\lieg^\ast)$ such that $\de\beta=\psi_V$. 
All these facts together entail that 
\begin{equation*}
\tau_P(\ev_{[\psi]},\ev_{[\phi]})=(\psi_V,G\phi_V)_h=(\de\beta,G\de\eta)_h=(\beta,G\dd\de\eta)=0\,.
\end{equation*}
For the last equality use the identity $\dd\de\eta=\Box\eta$, which follows from $\dd\eta=0$. 
\end{proof}

Clearly this kind of presymplectic structure always has a non-trivial null space. This is simply because 
only the linear parts of the elements in $\obs_P$ enter the definition of the presymplectic form. 
However, this kind of degeneracy is not causing any issue to locality since elements in $\obs_P$ 
with trivial linear part are mapped injectively along arrows in $\PrBun_\GHyp$. 
This will be discussed in more detail in\ Remark\ \ref{remMagObsConn}. 
To construct an example of a principal bundle $P$ with structure group $G=U(1)$ 
over a globally hyperbolic spacetime $M$ such that 
the corresponding presymplectic form $\tau_P$ exhibits more interesting degeneracies, 
we refer to\ Example\ \ref{exaNonTrivRadForms}. 

\begin{example}[Non-trivial null space]\label{exaNonTrivRadConn}
Let $M$ be an $m$-dimensional globally hyperbolic spacetime 
with Cauchy hypersurface diffeomorphic to $\bbR^1\times\bbT^{m-2}$. 
On $M$ consider the trivial principal bundle $P=M\times G$. 
The argument of\ Example\ \ref{exaNonTrivRadForms} for $k=1$ provides 
a $2$-form $\theta\in\fcdd^2(M)$ with compact support which is closed, 
but non-exact in the sense of cohomology with compact support. 
In fact, $[i\theta]\neq0\in\hcdd^2(M,\lieg^\ast)$.\footnote{Recall that 
$\lieg=i\bbR$ and therefore $\lieg^\ast\simeq i\bbR$.} 
Therefore, via the formal dual $F^\ast:\fc^2(M,\lieg^\ast)\to\kin_P$ of the curvature map 
we can introduce $\ev_\phi=F^\ast(i\theta)\in\inv_P$, see\ Example\ \ref{exaCurvObs} for further details. 
Since the linear part of $\ev_\phi$ is given by $-i\de\theta$, $\ev_{[\phi]}$ lies in $\rad_P$, 
{\em cfr.}\ Proposition\ \ref{prpRadConn}. 
We still have to show that $\ev_{[\phi]}$ is non-trivial in $\obs_P$. 
Our choice of $M$ admits a finite good cover, therefore $\van_P=\MW^\ast(\fc^1(M,\lieg^\ast))$ 
according to\ Proposition\ \ref{prpVanObsConn}. We argue by contradiction assuming that 
there exists $\eta\in\fc^1(M,\lieg^\ast)$ such that $\MW^\ast(\eta)=\ev_\phi$. 
This implies the identity $-\de\dd\eta=-i\de\theta$ between the linear parts. 
$\theta$ being closed, acting on both sides with $\dd$, we get $\Box\dd\eta=i\Box\theta$, 
whence $\dd\eta=i\theta$, thus contradicting the fact that $[i\theta]$ is non-trivial in $\hcdd^2(M,\lieg^\ast)$. 
\end{example}

The next step consists in verifying that the assignment of the presymplectic space $(\obs_P,\tau_P)$ 
to each principal bundle $P$ with structure group $G=U(1)$ 
over an $m$-dimensional globally hyperbolic spacetime $M$ gives rise to a covariant functor. 

\begin{lemma}\label{lemDualConnFunctor}
For $m\geq2$ and $G=U(1)$, let $P$ and $Q$ be principal $G$-bundles 
over $m$-dimensional globally hyperbolic spacetimes, respectively $M$ and $N$, and consider 
a principal bundle map $f:P\to Q$ covering a causal embedding $\ul{f}:M\to N$. Introduce the linear map 
\begin{align*}
\sc(\conn(f)^\dagger):\sc(\conn(P)^\dagger)\to\sc(\conn(Q)^\dagger)\,, && \phi\mapsto\psi\,,
\end{align*}
where $\psi\in\sc(\conn(Q)^\dagger)$ is defined out of $\phi\in\sc(\conn(P)^\dagger)$ according to 
\begin{align*}
\psi_{\ul{f}(x)}\circ\conn(f)\vert_x=\phi_x:\conn(P)_x\to\bbR\,, && \forall\,x\in M\,.
\end{align*}
In particular, $\sc(\conn(f)^\dagger)$ satisfies the following identity 
for each $\phi\in\sc(\conn(P)^\dagger)$ and each $\lambda\in\sect(\conn(Q))$: 
\begin{equation}\label{eqPushforwardKinConn}
\ev_{\sc(\conn(f)^\dagger)\phi}(\lambda)=\ev_\phi\big(\sect(\conn(f))(\lambda)\big)\,,
\end{equation}
where $\sect(\conn(f)):\sect(\conn(Q))\to\sect(\conn(P))$ is defined in\ Remark\ \ref{remConnFunctor}. 

The assignment of the vector space $\sc(\conn(P)^\dagger)$ to each object $P$ in $\PrBun_\GHyp$ 
and of the linear map $\sc(\conn(f)^\dagger)$ to each morphism $f$ in $\PrBun_\GHyp$ 
defines a covariant functor $\sc(\conn(\cdot)^\dagger):\PrBun_\GHyp\to\Vec$ 
taking values in the category of vector spaces $\Vec$. 
\end{lemma}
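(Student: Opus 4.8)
The plan is to verify in turn: (i) that $\sc(\conn(f)^\dagger)$ is well-defined and lands in $\sc(\conn(Q)^\dagger)$ (compactness of support) and is linear; (ii) that it satisfies the adjunction-type identity \eqref{eqPushforwardKinConn}; and (iii) that the assignment is functorial. The first point is the one requiring care, so I would address it first.

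For (i), recall that a morphism $f:P\to Q$ in $\PrBun_\GHyp$ covers a causal embedding $\ul f:M\to N$, so $\ul f$ is an open embedding and $\conn(f):\conn(P)\to\conn(Q)$ is an affine bundle map covering $\ul f$ whose restriction to each fiber $\conn(P)_x$ is an affine isomorphism onto $\conn(Q)_{\ul f(x)}$ (this is the content of\ Remark\ \ref{remBunConnFunctor}, which applies since $\ul f$ is an embedding). Hence, for $\phi\in\sc(\conn(P)^\dagger)$, precomposing $\phi_x$ with the inverse of $\conn(f)\vert_x$ gives a well-defined affine map $\psi_{\ul f(x)}:\conn(Q)_{\ul f(x)}\to\bbR$ for each $x\in M$; smoothness of the resulting section over $\ul f(M)$ follows from smoothness of $\conn(f)$ and its fiberwise inverse. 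To extend $\psi$ to all of $N$ we use that $\supp(\phi)$ is a compact subset $K\subseteq M$, so $\psi$ is supported in the compact set $\ul f(K)\subseteq N$; since $\ul f(M)$ is open in $N$ and $\psi$ vanishes near the boundary of $\ul f(M)$ inside $\ul f(M)$ (its support being the compact $\ul f(K)$ well inside the open set), extension by zero produces a smooth section of $\conn(Q)^\dagger$ with compact support. Linearity of $\phi\mapsto\psi$ is immediate from the fiberwise definition, since precomposition with a fixed affine isomorphism is a linear operation on the vector space of affine maps.

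For (ii), the identity \eqref{eqPushforwardKinConn} is a direct computation using the change-of-variables formula for the integral under the isometric embedding $\ul f$ together with the defining relation of $\sect(\conn(f))$. Recall from\ Remark\ \ref{remConnFunctor} that for $\lambda\in\sect(\conn(Q))$ the connection $\sect(\conn(f))(\lambda)$ is characterized by $T(f)/G\circ\sect(\conn(f))(\lambda)=\lambda\circ\ul f_\ast$, equivalently $\conn(f)\vert_x\big(\sect(\conn(f))(\lambda)(x)\big)=\lambda(\ul f(x))$ for each $x\in M$. Therefore $\psi(\ul f(x))\big(\lambda(\ul f(x))\big)=\phi_x\big(\conn(f)\vert_x^{-1}\lambda(\ul f(x))\big)=\phi_x\big(\sect(\conn(f))(\lambda)(x)\big)$, so the integrand $\psi(\lambda)\,\vol_N$, pulled back along $\ul f$, equals $\phi(\sect(\conn(f))(\lambda))\,\vol_M$ ($\ul f$ preserving the volume form since it is an orientation-preserving isometry); as $\psi$ is supported in $\ul f(M)$, integrating over $N$ is the same as integrating the pullback over $M$, giving \eqref{eqPushforwardKinConn}.

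For (iii), functoriality: $\sc(\conn(\id_P)^\dagger)=\id$ because $\conn(\id_P)=\id$, and for composable $f:P\to Q$, $g:Q\to R$ one has $\conn(g\circ f)=\conn(g)\circ\conn(f)$ (both $\conn(\cdot)$ and hence $\conn(\cdot)\vert_x^{-1}$ being functorial on morphisms covering embeddings, cf.\ Remark\ \ref{remBunConnFunctor}), whence the fiberwise precompositions compose correctly and $\sc(\conn(g\circ f)^\dagger)=\sc(\conn(g)^\dagger)\circ\sc(\conn(f)^\dagger)$. I expect the main obstacle to be the careful bookkeeping in (i): checking that extension by zero across $\partial\ul f(M)$ genuinely yields a \emph{smooth} compactly supported section, which hinges on $\supp(\phi)$ being compact and $\ul f$ being an open embedding with causally compatible image — this is where one must not be cavalier. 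The rest is routine.
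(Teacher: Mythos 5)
Your proposal is correct and follows essentially the same route as the paper's proof: fiberwise precomposition with the inverse of $\conn(f)\vert_x$ (the paper phrases this via $\conn(f^{-1})$ built from $f^{-1}:Q_{\ul{f}(M)}\to P$, which is the same map), extension by zero using compactness of $\supp(\phi)$ and openness of $\ul{f}(M)$, the defining relation of $\sect(\conn(f))$ plus change of variables for \eqref{eqPushforwardKinConn}, and functoriality inherited from $\conn(\cdot)$ on morphisms covering embeddings. No gaps.
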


\begin{proof}
Take $\phi\in\sc(\conn(P)^\dagger)$. 
At each point $x\in M$, $\phi$ gives an affine map $\phi_x:\conn(P)_x\to\bbR$. 
We want to map $\phi_x$ via $f$ to an affine map $\conn(Q)_{\ul{f}(x)}\to\bbR$ at the point $\ul{f}(x)$. 
We do it in the spirit of\ Remark\ \ref{remBunConnFunctor}, but in the opposite direction. 
More specifically, exploiting the fact that $\ul{f}:M\to N$ is an embedding 
and that $f:P\to Q$ is an isomorphism on each fiber, we can invert $f$ onto its image. 
Therefore, to define the affine bundle map $\conn(f^{-1}):\conn(Q_{\ul{f}(M)})\to\conn(P)$ 
according to\ Remark\ \ref{remBunConnFunctor}, we consider the principal bundle isomorphism 
$f^{-1}:Q_{\ul{f}(M)}\to P$ covering the diffeomorphism $\ul{f}^{-1}:f(M)\to M$. 
Using the restriction of $\conn(f^{-1})$ to the fiber over $\ul{f}(x)$, $x\in M$, we can introduce the affine map 
$\psi_x=\phi_x\circ\conn(f^{-1})\vert_{\ul{f}(x)}:\conn(Q)_{\ul{f}(x)}\to\bbR$. 
This defines a new smooth section $\psi\in\sect(\conn(Q_{\ul{f}(M)})^\dagger)$. 
Since $\phi$ has compact support in $M$, $\psi$ has compact support in $\ul{f}(M)$. 
$\ul{f}(M)$ is open in $N$, $\ul{f}$ being a causal embedding. Therefore $\psi$ can be extended 
by zero to a smooth section $\psi\in\sc(\conn(Q)^\dagger)$ with compact support. 
This shows that the map $\sc(\conn(f)^\dagger):\sc(\conn(P)^\dagger)\to\sc(\conn(Q)^\dagger)$ is well-defined. 
Linearity follows immediately from the definition. 

Take $\phi\in\sc(\conn(P)^\dagger)$ and $\lambda\in\sect(\conn(Q))$. 
Recalling Remark\ \ref{remConnFunctor} and comparing with the construction illustrated above, 
one realizes that the evaluation of $\sc(\conn(f)^\dagger)\phi$ on $\lambda$ at the point $\ul{f}(x)\in N$ 
gives the same result of the evaluation of $\phi$ on $\sect(\conn(f))(\lambda)$ at the point $x\in M$. 
Since $\sect(\conn(f)^\dagger)\phi$ vanishes on $N\setminus\ul{f}(M)$, the following equality holds: 
\begin{equation*}
\int_N\left(\sc\left(\conn(f)^\dagger\right)\phi\right)(\lambda)\,\vol
=\int_M\phi\left(\sect(\conn(f))(\lambda)\right)\,\vol\,,
\end{equation*}
whence $\ev_{\sc(\conn(f)^\dagger)\phi}(\lambda)=\ev_\phi\big(\sect(\conn(f))(\lambda)\big)$ as claimed. 

Let $P$ be an object in $\PrBun_\GHyp$ and consider the principal bundle map $\id_P:P\to P$ covering $\id_M$. 
Since $\conn(\id_P)=\id_{\conn(P)}$, $\sc(\conn(\id_P)^\dagger)=\id_{\sc(\conn(P)^\dagger)}$. 
Furthermore, given two composable morphisms $f:O\to P$ and $h:P\to Q$ in $\PrBun_\GHyp$ 
and denoting with $\ul{f}:L\to M$ and $\ul{h}:M\to N$ the corresponding causal embeddings, 
we have that $(h\circ f)^{-1}:Q_{\ul{h}\circ\ul{f}(L)}\to O$ coincides 
with the composition $f^{-1}\circ h^{-1}\vert_{\ul{h}(\ul{f}(L))}:Q_{\ul{h}(\ul{f}(L))}\to O$ 
of $f^{-1}:P_{\ul{f}(L)}\to O$ with the restriction of $h^{-1}:Q_{\ul{h}(M)}\to P$ 
to the subbundle over $\ul{h}(\ul{f}(L))$. From this fact, it follows that 
$\sc(\conn(h\circ f)^\dagger)=\sc(\conn(h)^\dagger)\circ\sc(\conn(f)^\dagger)$. 
Therefore $\sect(\conn(\cdot)^\dagger):\PrBun_\GHyp\to\Vec$ is a covariant functor. 
\end{proof}

\begin{theorem}\label{thmObsFunctorConn}
Let $m\geq2$ and $G=U(1)$. Consider the principal $G$-bundles $P$ and $Q$ 
over the $m$-dimensional globally hyperbolic spacetimes $M$ and respectively $N$. 
Given a principal bundle map $f:P\to Q$ covering a causal embedding $\ul{f}:M\to N$, 
the linear map $\sc(\conn(f)^\dagger):\sc(\conn(P)^\dagger)\to\sc(\conn(P)^\dagger)$ 
defined in\ Lemma\ \ref{lemDualConnFunctor} induces the presymplectic linear map 
\begin{align*}
\PSV(f):(\obs_P,\tau_P)\to(\obs_Q,\tau_Q)\,, && \ev_{[\phi]}\mapsto\ev_{[\sect(\conn(f)^\dagger)\phi]}\,.
\end{align*}
Setting $\PSV(P)=(\obs_P,\tau_P)$ for each principal $G$-bundle $P$ over an $m$-di\-men\-sio\-nal 
globally hyperbolic spacetime $M$, $\PSV:\PrBun_\GHyp\to\PSymV$ turns out to be a covariant functor. 
\end{theorem}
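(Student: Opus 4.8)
The plan is to verify the three assertions of the theorem in turn: (1) that $\sc(\conn(f)^\dagger)$ descends to a well-defined linear map $\PSV(f):\obs_P\to\obs_Q$; (2) that this map preserves the presymplectic structures; and (3) that the assignment $P\mapsto(\obs_P,\tau_P)$, $f\mapsto\PSV(f)$ is functorial. The key technical input is the identity \eqref{eqPushforwardKinConn} from Lemma~\ref{lemDualConnFunctor}, which translates the action of $\sc(\conn(f)^\dagger)$ on kinematic functionals into precomposition with $\sect(\conn(f))$ on connections. Together with the naturality properties of $\MW$ (Remark~\ref{remMWNaturality}, giving $\sect(\conn(f))(\sol_Q)\subseteq\sol_P$) and the fact, already recorded in Remark~\ref{remConnFunctor}, that $\sect(\conn(f))$ is an affine map intertwining the gauge-shift groups (since $\ul f^\ast$ maps closed forms to closed forms and, via Corollary~\ref{corGaugeShiftsExplicit}, pulls back integral cohomology classes to integral ones), this will handle the bookkeeping.

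First I would check that $\sc(\conn(f)^\dagger)$ maps $\triv_P$ into $\triv_Q$: a trivial functional $a\,\bfone$ with $\int_M a\,\vol=0$ is sent to $(\ul f_\ast a)\,\bfone$ (extension by zero of $a\circ\ul f^{-1}$), and $\int_N \ul f_\ast a\,\vol=\int_M a\,\vol=0$ by the change-of-variables formula, so $\kin_P\to\kin_Q$ is well-defined. Next, using \eqref{eqPushforwardKinConn} and the fact that $\sect(\conn(f))(\lambda+\gau_Q)\subseteq\sect(\conn(f))(\lambda)+\gau_P$, I would deduce that if $\ev_\phi\in\inv_P$ then $\ev_{\sc(\conn(f)^\dagger)\phi}\in\inv_Q$; and similarly, using $\sect(\conn(f))(\sol_Q)\subseteq\sol_P$, that $\ev_\phi\in\van_P$ implies $\ev_{\sc(\conn(f)^\dagger)\phi}\in\van_Q$. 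This gives the induced linear map $\PSV(f):\obs_P\to\obs_Q$. For the presymplectic property, the cleanest route is to compute the linear part of $\sc(\conn(f)^\dagger)\phi$: differentiating \eqref{eqPushforwardKinConn} and using that the linear part of $\sect(\conn(f))$ is $\ul f^\ast$ (Remark~\ref{remConnFunctor}, Remark~\ref{remGaugeConnAb}), one gets $(\sc(\conn(f)^\dagger)\phi)_V=\ul f_\ast(\phi_V)$, the push-forward of the compactly supported $\lieg^\ast$-valued $1$-form $\phi_V$. Then, exactly as in the proof of Theorem~\ref{thmFunctorForms}, I would invoke $\ul f^\ast G_N \ul f_\ast = G_M$ (Proposition~\ref{prpGreenNat}, applied degreewise to $\lieg^\ast$-valued forms) together with the change-of-variables identity $(\ul f_\ast\alpha,\beta)_{h,N}=(\alpha,\ul f^\ast\beta)_{h,M}$ to conclude $\tau_Q(\PSV(f)\ev_{[\phi]},\PSV(f)\ev_{[\psi]})=(\ul f_\ast\phi_V, G_N \ul f_\ast\psi_V)_h=(\phi_V,G_M\psi_V)_h=\tau_P(\ev_{[\phi]},\ev_{[\psi]})$, i.e. $\PSV(f)^\ast\tau_Q=\tau_P$.

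Finally, functoriality of $\PSV:\PrBun_\GHyp\to\PSymV$ is inherited directly from the functoriality of $\sc(\conn(\cdot)^\dagger):\PrBun_\GHyp\to\Vec$ established in Lemma~\ref{lemDualConnFunctor}: $\sc(\conn(\id_P)^\dagger)=\id$ and $\sc(\conn(h\circ f)^\dagger)=\sc(\conn(h)^\dagger)\circ\sc(\conn(f)^\dagger)$ descend to the corresponding identities for $\PSV$ on the quotients $\obs_P=\inv_P/\van_P$, since the quotient is natural. I do not expect a serious obstacle here; the one point requiring a little care — and hence the \emph{main} obstacle, such as it is — is making sure that the gauge-invariance and vanishing-on-shell conditions really are preserved \emph{without} assuming $M$ or $N$ is of finite type, so that the theorem holds for all objects of $\PrBun_\GHyp$. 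The explicit characterizations in Propositions~\ref{prpInvAffIsCoexact} and~\ref{prpVanObsConn} are only available in the finite-type case, so the argument for preservation of $\inv$ and $\van$ must be run purely through the defining properties (gauge invariance as invariance under $\gau$, vanishing as vanishing on $\sol$) and the identity \eqref{eqPushforwardKinConn}, which fortunately is exactly what that identity is designed for.
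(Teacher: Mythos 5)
Your proposal is correct and follows essentially the same route as the paper's proof: descend $\sc(\conn(f)^\dagger)$ through $\triv$, use \eqref{eqPushforwardKinConn} plus preservation of gauge shifts and of $\sol$ (naturality of $\MW$) to get $\inv_P\to\inv_Q$ and $\van_P\to\van_Q$, identify the linear part of the pushed-forward functional with $\ul f_\ast\phi_V$ and invoke Proposition~\ref{prpGreenNat}, and inherit functoriality from Lemma~\ref{lemDualConnFunctor}. The only cosmetic difference is that the paper checks $\ul f^\ast(\gau_Q)\subseteq\gau_P$ directly by writing $\ul f^\ast(h^\ast\mu)=(h\circ\ul f)^\ast\mu$ rather than via the cohomological characterization of Corollary~\ref{corGaugeShiftsExplicit}, and your closing caveat about avoiding the finite-type characterizations is exactly the point the paper's argument respects.
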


\begin{proof}
As a first step, we note that $\sect(\conn(f)^\dagger)$ induces a linear map 
\begin{align*}
L_f:\kin_P\to\kin_Q\,, && \ev_\phi\mapsto\ev_{\sect(\conn(f)^\dagger)\phi}\,.
\end{align*}
In fact, on account of\ Lemma\ \ref{lemDualConnFunctor} and of\ eq.\ \eqref{eqPushforwardKinConn} in particular, 
for each $\phi\in\sc(\conn(P)^\dagger)$ and each $\lambda\in\sect(\conn(Q))$, 
one has the identity $\ev_{\sc(\conn(f)^\dagger)\phi}(\lambda)=\ev_\phi(\sect(\conn(f))(\lambda))$. 
If $\phi\in\triv_P$, the right-hand-side vanishes for all $\lambda\in\sect(\conn(Q))$, 
whence $\sc(\conn(f)^\dagger)\phi\in\triv_P$ by\ Proposition\ \ref{prpAffSeparability}. 

Now we check that $L_f$ maps $\inv_P$ to $\inv_Q$ and $\van_P$ to $\van_Q$. 
If this is the case, the linear map $\PSV(f):\obs_P\to\obs_Q$ is well-defined. 
Let $\ev_\phi\in\inv_P$. Recalling the action of gauge transformations on connections, 
see\ \eqref{eqGaugeConnAb}, for each $h\in\c(N,G)$ and $\lambda\in\sect(\conn(Q))$, we have 
\begin{equation*}
(L_f(\ev_\phi))(\lambda-h^\ast\mu)=\ev_\phi(\sect(\conn(f))(\lambda-h^\ast\mu))\,,
\end{equation*}
where $\mu$ denotes the Maurer-Cartan form on $G=U(1)$. 
On account of Remark\ \ref{remConnFunctor} and of\ Proposition\ \ref{prpTrivAdBun}, the linear part 
$\sect(\conn(f))_V:\f^1(N,\ad(Q))\to\f^1(M,\ad(P))$ of $\sect(\conn(f))$ boils down to the pull-back 
$\ul{f}^\ast:\f^1(N,\lieg)\to\f^1(M,\lieg)$ for $\lieg$-valued $1$-forms along the base map $\ul{f}:M\to N$. 
Therefore 
\begin{equation*}
\sect(\conn(f))\left(\lambda-h^\ast\mu\right)=\sect(\conn(f))(\lambda)-\ul{f}^\ast h^\ast\mu
=\sect(\conn(f))(\lambda)-(h\circ\ul{f})^\ast\mu\,.
\end{equation*}
Since $h\circ\ul{f}\in\c(M,G)$ and $\ev_\phi\in\inv_P$, we conclude that 
\begin{align*}
(L_f(\ev_\phi))(\lambda-h^\ast\mu) & =\ev_\phi\left(\sect(\conn(f))(\lambda)-(h\circ\ul{f})^\ast\mu\right)
=\ev_\phi(\sect(\conn(f))(\lambda))\\
& =(L_f(\ev_\phi))(\lambda)\,,
\end{align*}
whence $L_f(\ev_\phi)\in\inv_Q$. Note that we used \eqref{eqPushforwardKinConn} for the last equality. 

Recalling Remark\ \ref{remMWNaturality}, $\MW:\sect(\conn(\cdot))\to\f^1((\cdot)_\base,\lieg)$ 
is a natural transformation between contravariant functors from $\PrBun_\GHyp$ to $\Aff$. 
In particular the affine map $\sect(\conn(f)):\sect(\conn(Q))\to\sect(\conn(P))$ restricts to 
an affine map $\sect(\conn(f)):\sol_Q\to\sol_P$. Therefore, given $\ev_\phi\in\van_P$, 
for each $\lambda\in\sol_Q$, exploiting \eqref{eqPushforwardKinConn}, we get 
\begin{equation*}
(L_f(\ev_\phi))(\lambda)=\ev_\phi(\sect(\conn(f))(\lambda))=0\,,
\end{equation*}
thus showing that $L_f(\ev_\phi)$ lies in $\van_Q$. 

Up to now, we proved that $\PSV(f):\obs_P\to\obs_Q$ is well-defined. 
We still have to show that it preserves the relevant presymplectic structures. 
Take $\ev_\phi,\ev_\psi\in\inv_P$ and consider $L_f(\ev_\phi),L_f(\ev_\psi)\in\inv_Q$. 
In order to evaluate $\tau_Q$ on $L_f(\ev_\phi),L_f(\ev_\psi)\in\inv_Q$, 
we need the $\lieg^\ast$-valued $1$-forms with compact support on $N$ which represent their linear parts. 
Denote the linear parts of $\ev_\phi$ and of $\ev_\psi$ respectively with $\phi_V\in\fc^1(M,\lieg^\ast)$ 
and with $\psi_V\in\fc^1(M,\lieg^\ast)$. As already mentioned above, the linear part of $\sect(\conn(f))$ 
is simply $\ul{f}^\ast:\f^1(N,\lieg)\to\f^1(M,\lieg)$, whence 
$\ul{f}_\ast:\fc^1(M,\lieg^\ast)\to\fc^1(N,\lieg^\ast)$ provides the linear parts of $\ev_\phi$ and of $\ev_\psi$, 
namely $\ul{f}_\ast\phi_V\in\fc^1(N,\lieg^\ast)$ and respectively $\ul{f}_\ast\psi_V\in\fc^1(N,\lieg^\ast)$. 
To keep track of the base space, we introduce a subscript 
on the causal propagator $G$ for $\Box$ acting on $\lieg^\ast$-valued $1$-forms. 
Recalling the definition of the presymplectic form, see\ Proposition\ \ref{prpPSymObsConn}, we conclude that 
\begin{equation*}
(\ul{f}_\ast\phi_V,G_N\ul{f}_\ast\psi_V)_{h\,N}=(\phi_V,\ul{f}^\ast G_N\ul{f}_\ast\psi_V)_{h\,M}\,,
\end{equation*}
where also $(\cdot,\cdot)_h$, the pairing between $\lieg^\ast$-valued $1$-forms defined in\ \eqref{eqLiePairing}, 
carries a subscript referred to the globally hyperbolic spacetime upon which the pairing is defined. 
According to\ Proposition\ \ref{prpGreenNat}, $\ul{f}^\ast G_N\ul{f}_\ast=G_M$ on $\fc^1(M,\lieg^\ast)$, whence 
\begin{equation*}
\tau_Q\left(\PSV(f)\ev_{[\phi]},\PSV(f)\ev_{[\psi]}\right)=\tau_P\left(\ev_{[\phi]},\ev_{[\psi]}\right)\,.
\end{equation*}

It remains only to check that $\PSV:\PrBun_\GHyp\to\PSymV$ is a covariant functor. 
As a matter of fact, this follows from $\sc(\conn(\cdot)^\dagger):\PrBun_\GHyp\to\Vec$ being a covariant functor. 
In fact, $L_{\id_P}=\id_{\kin_P}$ and, for each pair of composable morphisms $f:O\to P$ and $h:P\to Q$, 
one has $L_{h\circ f}=L_h\circ L_f$. Therefore, the restriction to gauge invariant affine functionals first 
and then the quotient by those which vanish on-shell show that $\PSV(\id_P)=\id_{(\obs_P,\tau_P)}$ 
and $\PSV(h\circ f)=\PSV(h)\circ\PSV(f)$, whence $\PSV:\PrBun_\GHyp\to\PSymV$ is indeed a covariant functor. 
\end{proof}

To make contact with the axiomatic approach to quantum field theory on curved spacetimes 
proposed in\ \cite{BFV03}, we turn our attention to the causality property 
and the time-slice axiom for the covariant functor $\PSV:\GHyp\to\PSymV$. 
We will discuss the failure of the locality property (as well as a possible procedure to recover it) 
in\ Subsection\ \ref{subLocalityConn}. 

\begin{theorem}\label{thmCausalityConn}\index{causality}
Let $m\geq2$ and $G=U(1)$. 
The covariant functor $\PSV:\PrBun_\GHyp\to\PSymV$ fulfils the {\em causality} property: 
Consider the principal $G$-bundles $P_1$, $P_2$ and $Q$ 
over the $m$-dimensional globally hyperbolic spacetimes $M_1$, $M_2$ and $N$. 
Furthermore, assume $f:P_1\to Q$ and $g:P_2\to Q$ are principal bundle maps covering the causal embeddings 
$\ul{f}:M_1\to N$ and respectively $\ul{h}:M_2\to N$ whose images are causally disjoint in $N$, 
namely such that $\ul{f}(M_1)\cap J_N(\ul{h}(M_2))=\emptyset$. Then 
\begin{equation*}
\tau_Q(\PSV(f)\obs_{P_1},\PSV(h)\obs_{P_2})=\{0\}\,.
\end{equation*}
\end{theorem}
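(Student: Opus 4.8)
The plan is to mimic the proof of causality for Maxwell $k$-forms, namely Theorem~\ref{thmCausalityForms}, since the present functor $\PSV:\PrBun_\GHyp\to\PSymV$ is built out of the same kind of ingredients: gauge invariant affine functionals whose linear parts are compactly supported $\lieg^\ast$-valued $1$-forms, paired via the causal propagator $G$ for $\Box$. The key observation is that the presymplectic form $\tau_Q$ evaluated on two pushed-forward observables depends only on their linear parts, and these are supported in the images $\ul{f}(M_1)$ and $\ul{h}(M_2)$ respectively.

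First I would pick arbitrary elements $\ev_{[\phi_1]}\in\obs_{P_1}$ and $\ev_{[\phi_2]}\in\obs_{P_2}$, with linear parts $\phi_{1\,V}\in\fc^1(M_1,\lieg^\ast)$ and $\phi_{2\,V}\in\fc^1(M_2,\lieg^\ast)$. By Theorem~\ref{thmObsFunctorConn} the linear part of $\PSV(f)\ev_{[\phi_1]}$ is $\ul{f}_\ast\phi_{1\,V}\in\fc^1(N,\lieg^\ast)$, and similarly that of $\PSV(h)\ev_{[\phi_2]}$ is $\ul{h}_\ast\phi_{2\,V}$. Then, recalling the definition of $\tau_Q$ in Proposition~\ref{prpPSymObsConn},
\begin{equation*}
\tau_Q\big(\PSV(f)\ev_{[\phi_1]},\PSV(h)\ev_{[\phi_2]}\big)
=\big(\ul{f}_\ast\phi_{1\,V},\,G_N\,\ul{h}_\ast\phi_{2\,V}\big)_{h\,N}\,.
\end{equation*}
Now I would invoke the support property of the causal propagator, namely $\supp(G_N\,\ul{h}_\ast\phi_{2\,V})\subseteq J_N(\supp(\ul{h}_\ast\phi_{2\,V}))\subseteq J_N(\ul{h}(M_2))$, which follows from $\supp(G_N^\pm\alpha)\subseteq J_N^\pm(\supp(\alpha))$ as recorded in Subsection~\ref{subDynamics}. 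On the other hand $\supp(\ul{f}_\ast\phi_{1\,V})\subseteq\ul{f}(M_1)$. Since by hypothesis $\ul{f}(M_1)\cap J_N(\ul{h}(M_2))=\emptyset$, the integrand defining the pairing $(\cdot,\cdot)_{h\,N}$ has empty support, hence the pairing vanishes.

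Since $\ev_{[\phi_1]}$ and $\ev_{[\phi_2]}$ were arbitrary, this shows $\tau_Q(\PSV(f)\obs_{P_1},\PSV(h)\obs_{P_2})=\{0\}$, as claimed. I do not anticipate any serious obstacle: the only point requiring a word of care is that the pairing $(\cdot,\cdot)_{h\,N}$ of Proposition~\ref{prpPSymObsConn} is obtained from the pairing $(\cdot,\cdot)$ of \eqref{eqPairing2} by inserting the fixed isomorphism $h^{-1}:\lieg^\ast\to\lieg$, so the support argument applies verbatim. One should also note that the result is independent of the choice of representatives $\ev_{\phi_i}\in\ev_{[\phi_i]}$, since, as already used repeatedly, $\tau_P$ descends to $\obs_P=\inv_P/\van_P$ and depends only on linear parts, which are unaffected both by the quotient by $\van_P$ (they only shift by the linear part of a vanishing functional, which the computation above is insensitive to once one works on the quotient) and by the identification of $\triv_P$ with zero.
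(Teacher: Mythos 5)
Your proof is correct and follows essentially the same route as the paper: express $\tau_Q$ of the pushed-forward observables as $\big(\ul{f}_\ast\phi_{1\,V},G\,\ul{h}_\ast\phi_{2\,V}\big)_h$ via Theorem~\ref{thmObsFunctorConn} and Proposition~\ref{prpPSymObsConn}, then use the support property of the causal propagator together with the causal disjointness of the images. The additional remarks on representatives and on the insertion of $h^{-1}$ are fine but not needed beyond what Proposition~\ref{prpPSymObsConn} already guarantees.
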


\begin{proof}
Let $\ev_\phi\in\inv_{P_1}$ and $\ev_\psi\in\inv_{P_2}$. Recalling the definition of 
the presymplectic form $\tau_Q$, {\em cfr.}\ Proposition\ \ref{prpPSymObsConn}, 
and of the functor $\PSV$, {\em cfr.}\ Theorem\ \ref{thmObsFunctorConn}, one gets 
\begin{equation*}
\tau_Q\left(\PSV(f)\ev_{[\phi]},\PSV(h)\ev_{[\psi]}\right)=\big(\ul{f}_\ast\phi_V,G\ul{h}_\ast\psi_V\big)_h\,,
\end{equation*}
where $\phi_V\in\fc^1(M_1,\lieg^\ast)$ and $\psi_V\in\fc^1(M_2,\lieg^\ast)$ represent 
the linear parts of $\ev_\phi$ and respectively of $\ev_\psi$. 
The first argument of the pairing $(\cdot,\cdot)_h$ is supported inside $\ul{f}(M_1)$, 
while the second argument is supported inside $J_N(\ul{h}(M_2))$ by the properties of the causal propagator $G$. 
The hypothesis entails that the supports do not overlap, 
therefore the right-hand-side of the equation displayed above vanishes as claimed. 
\end{proof}

The following lemmas will be used for the proof of the time-slice axiom, see\ Theorem\ \ref{thmTimeSliceConn}. 

\begin{lemma}\label{lemTimeSliceConn1}
Let $G=U(1)$ and consider a principal $G$-bundle $P$ over an $m$-dimensional globally hyperbolic spacetime $M$. 
Furthermore, assume that a partition of unity $\{\chi_+,\chi_-\}$ 
with $\supp(\chi_\pm)$ past/future compact is given. For each $\ev_\phi\in\inv_P$, 
there exists $\psi\in\sc(\conn(P)^\dagger)$ with support inside $\supp(\chi_+)\cap\supp(\chi_-)$ 
such that $\ev_\psi-\ev_\phi\in\MW^\ast(\fc^1(M,\lieg^\ast))$, whence $\ev_\psi\in\inv_P$. 
\end{lemma}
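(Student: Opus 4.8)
The plan is to mimic the proof of Lemma~\ref{lemTimeSliceForms2} from Chapter~\ref{chMaxwell}, adapting it to the affine setting. The key observation is that the presymplectic structure and the ideal $\van_P$ depend on $\ev_\phi$ only through its linear part $\phi_V\in\fc^1(M,\lieg^\ast)$, so the whole question can be reduced to a statement about $1$-forms, and for this we already have Lemma~\ref{lemTimeSliceForms2} available (applied with $k=1$ and trivially accounting for the irrelevant $\lieg^\ast$-factor). First I would record that $\MW^\ast=F^\ast\circ\dd$ and, using \eqref{eqMaxwellDual}, that the linear part of $\MW^\ast(\alpha)$ is $-\de\dd\alpha$ for each $\alpha\in\fc^1(M,\lieg^\ast)$; thus requiring $\ev_\psi-\ev_\phi\in\MW^\ast(\fc^1(M,\lieg^\ast))$ amounts to two conditions, namely that $\ev_\psi$ and $\ev_\phi$ agree on at least one on-shell connection (so the "constant parts'' match once the linear parts are taken care of) and that $\psi_V-\phi_V\in-\de\dd\fc^1(M,\lieg^\ast)$.

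The construction then proceeds in two steps. Step one: since $\ev_\phi\in\inv_P$, Proposition~\ref{prpInvAffIsCoexact} gives $\de\phi_V=0$, so $\phi_V\in\fcde^1(M,\lieg^\ast)$, and Lemma~\ref{lemTimeSliceForms2} (with the given partition of unity $\{\chi_+,\chi_-\}$) produces $\xi\in\fcde^1(M,\lieg^\ast)$ supported inside $\supp(\chi_+)\cap\supp(\chi_-)$ together with $\alpha\in\fc^1(M,\lieg^\ast)$ with $\phi_V-\xi=\de\dd\alpha$. Step two: I would define $\psi\in\sc(\conn(P)^\dagger)$ by setting, after a choice of reference on-shell connection $\tilde\lambda\in\sol_P$ (which exists by Remark~\ref{remYMSolExist}) and following the decomposition of Remark~\ref{remAffDualSectDecomp}, $\psi=\big(\ev_\phi(\tilde\lambda)\big)\,\bfone+\xi(\cdot-\tilde\lambda)$, i.e.\ the affine functional with linear part $\xi$ whose value on $\tilde\lambda$ equals $\ev_\phi(\tilde\lambda)$. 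One checks $\supp(\psi)\subseteq\supp(\chi_+)\cap\supp(\chi_-)$ after possibly enlarging to a compact set containing $\supp(\xi)$; more carefully, since $\xi$ already has support inside that set, and since the constant part $\ev_\phi(\tilde\lambda)\,\bfone$ can be absorbed by noting it is a multiple of $\bfone$ whose support we may arrange inside $\supp(\chi_+)\cap\supp(\chi_-)$ — this is the one place requiring a small additional argument, exactly as in the scalar/forms case. That $\ev_\psi\in\inv_P$ is then immediate from Proposition~\ref{prpInvAffIsCoexact} because $\de\xi=0$.

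It remains to verify $\ev_\psi-\ev_\phi=-\MW^\ast(\alpha)$. By construction both sides have linear part equal to $\xi-\phi_V=-\de\dd\alpha=\big(\MW^\ast(\alpha)\big)_V$ up to sign (so the linear parts of $\ev_\psi-\ev_\phi$ and $-\MW^\ast(\alpha)$ coincide), and both sides vanish on $\tilde\lambda$: for $\ev_\psi-\ev_\phi$ this is because $\ev_\psi(\tilde\lambda)=\ev_\phi(\tilde\lambda)$ by the choice of the constant part, and for $-\MW^\ast(\alpha)$ this follows from \eqref{eqMaxwellDual} since $\MW(\tilde\lambda)=0$. Two affine functionals with the same linear part and the same value at one point of the affine space $\sect(\conn(P))$ coincide, so $\ev_\psi-\ev_\phi=-\MW^\ast(\alpha)\in\MW^\ast(\fc^1(M,\lieg^\ast))$, which finishes the argument.

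\textbf{Main obstacle.}
The main point to be careful about is the support bookkeeping for the constant part $\ev_\phi(\tilde\lambda)\,\bfone$: a priori this is a global section of $\conn(P)^\dagger$, not compactly supported. The fix is the standard one — one is free to modify $\psi$ within its class modulo $\triv_P$ and within $\MW^\ast(\fc^1(M,\lieg^\ast))$, and the relevant "constant'' contribution can always be represented by $a\,\bfone$ with $a\in\cc(M)$ supported inside $\supp(\chi_+)\cap\supp(\chi_-)$ and $\int_M a\,\vol=\ev_\phi(\tilde\lambda)$ (or, if one prefers a cleaner route, absorb it using the time-slice construction for the scalar field already available via the $k=0$ / constant-part analysis), using that $\supp(\chi_+)\cap\supp(\chi_-)$ is a nonempty timelike compact set. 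Everything else is a direct translation of the forms case, so no genuinely new difficulty arises.
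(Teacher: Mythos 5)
Your construction is essentially the paper's proof: reduce to the coclosed linear part $\phi_V$, apply Lemma~\ref{lemTimeSliceForms2} with $k=1$, fix a reference solution $\tilde{\lambda}\in\sol_P$ via Remark~\ref{remYMSolExist}, represent the affine part by $a\,\bfone$ with $a\in\cc(M)$ supported in $\supp(\chi_+)\cap\supp(\chi_-)$ and $\int_Ma\,\vol=\ev_\phi(\tilde{\lambda})$, and identify $\ev_\psi-\ev_\phi$ with an element of $\MW^\ast(\fc^1(M,\lieg^\ast))$ by matching linear parts and the value at $\tilde{\lambda}$ (the paper carries out the equivalent direct computation on $\tilde{\lambda}+\omega$ for all $\omega\in\f^1(M,\lieg)$). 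The only slip is your side remark that $\ev_\psi\in\inv_P$ is immediate from Proposition~\ref{prpInvAffIsCoexact} because $\de\xi=0$: coclosedness of the linear part is necessary but not sufficient for gauge invariance on a general $M$ (one would also need the pairing with $\hdd^1(M,\lieg)_\bbZ$ to vanish, or coexactness); this is harmless, though, since gauge invariance of $\ev_\psi$ follows, exactly as the ``whence'' in the statement indicates, from $\ev_\psi-\ev_\phi\in\MW^\ast(\fc^1(M,\lieg^\ast))$ together with $\ev_\phi$ and $\MW^\ast(\alpha)$ lying in $\inv_P$, and likewise your sign ambiguity $\pm\MW^\ast(\alpha)$ is immaterial because the image of $\MW^\ast$ is a vector space.
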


\begin{proof}
Given $\ev_\phi\in\inv_P$, we choose one of its representatives $\phi\in\sc(\conn(Q)^\dagger)$. 
Its linear part $\phi_V\in\fc^1(M,\lieg^\ast)$ is coclosed according to\ Proposition\ \ref{prpInvAffIsCoexact}. 
Applying\ Lemma\ \ref{lemTimeSliceForms2}, we find $\xi\in\fcde^1(M,\lieg^\ast)$ supported inside 
$\supp(\chi_+)\cap\supp(\chi_-)$ and $\tilde{\xi}\in\fc^1(M,\lieg^\ast)$ such that $\xi-\de\dd\tilde{\xi}=\phi_V$. 
Choosing a reference connection $\tilde{\lambda}\in\sol_P$, 
whose existence is ensured by\ Remark\ \ref{remYMSolExist}, 
and taking a function $a\in\cc(M)$ with support inside $\supp(\chi_+)\cap\supp(\chi_-)$ such that 
$\int_Ma\,\vol=\int_M\phi(\tilde{\lambda})\,\vol$, 
we can introduce $\psi=a\,\bfone_P+\ast^{-1}(\xi\wedge\ast(\cdot-\tilde{\lambda})\in\sc(\conn(P)^\dagger)$, 
where $\bfone_P\in\sect(\conn(P)^\dagger)$ denotes the distinguished section 
whose value at any point $x\in M$ is specified by the affine map $\zeta\in\conn(P)_x\to1\in\bbR$. 
Note that, per construction, $\supp(\psi)$ in included in $\supp(\chi_+)\cap\supp(\chi_-)$. 
Exploiting the fact that $\sect(\conn(P))$ is an affine space modeled on the vector space $\f^1(M,\lieg)$, 
we show that $\ev_\psi-\ev_\phi\in\MW^\ast(\fc^1(M,\lieg^\ast))$, 
see\ Example\ \ref{exaVanConn} for the definition of $\MW^\ast:\fc^1(M,\lieg^\ast)\to\kin_P$. 
In fact, for each $\omega\in\f^1(M,\lieg)$, we have 
\begin{align*}
\ev_\psi(\tilde{\lambda}+\omega)+\left(\tilde{\xi},\MW(\tilde{\lambda}+\omega)\right)
& =\int_Ma\,\vol+\left(\xi-\de\dd\tilde{\xi},\omega\right)\\
& =\int_M\phi(\tilde{\lambda})\,\vol+(\phi_V,\omega)
=\ev_\phi(\tilde{\lambda}+\omega)\,.
\end{align*}
Note that, together with the fact that $\tilde{\lambda}$ is on-shell, 
for the second equality we exploited $\MW_V=-\de\dd$, see the comment below\ \eqref{eqMaxwellOp}. 
We conclude that $\ev_\psi+\MW^\ast(\tilde{\xi}\,)=\ev_\phi$. 
Since both $\ev_\phi$ and $\MW^\ast(\tilde{\xi}\,)$ lie in $\inv_P$, {\em cfr.}\ Example\ \ref{exaVanConn}, 
$\ev_\psi\in\inv_P$ too. 
\end{proof}

\begin{lemma}\label{lemTimeSliceConn2}
Let $G=U(1)$ and consider a principal $G$-bundle $P$ over an $m$-dimensional globally hyperbolic spacetime $M$. 
Let $O\subseteq M$ be a causally compatible open neighborhood of a Cauchy hypersurface for $M$. 
For each $\eta\in\gau_{P_O}$ there exists $\eta^\prime\in\gau_P$ such that 
the restriction of $\eta^\prime$ to $O$ differs from $\eta$ by an exact $1$-form on $O$. 
\end{lemma}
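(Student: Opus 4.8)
The plan is to reduce the statement to a purely cohomological one by means of the explicit description of the groups of gauge shifts. By Corollary~\ref{corGaugeShiftsExplicit}, applied both to $M$ and to $O$ (equipped with the restricted bundle $P_O$),
\[
\gau_P=\big\{\zeta\in\fdd^1(M,\lieg):[\zeta]\in\hdd^1(M,\lieg)_\bbZ\big\},\qquad
\gau_{P_O}=\big\{\zeta\in\fdd^1(O,\lieg):[\zeta]\in\hdd^1(O,\lieg)_\bbZ\big\},
\]
where $\hdd^1(\cdot,\lieg)_\bbZ$ is the image of the injective homomorphism $2\pi i:\coho^1(\cdot,\bbZ)\to\coho^1(\cdot,\lieg)\simeq\hdd^1(\cdot,\lieg)$. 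Hence it is enough to prove that pull-back along the inclusion $\iota:O\hookrightarrow M$ gives an isomorphism $\iota^\ast:\hdd^1(M,\lieg)\to\hdd^1(O,\lieg)$ restricting to a bijection between $\hdd^1(M,\lieg)_\bbZ$ and $\hdd^1(O,\lieg)_\bbZ$. Granting this, given $\eta\in\gau_{P_O}$ one picks $\eta^\prime\in\fdd^1(M,\lieg)$ with $[\eta^\prime]\in\hdd^1(M,\lieg)_\bbZ$ and $\iota^\ast[\eta^\prime]=[\eta]$; then $\eta^\prime\in\gau_P$ by the formula above, while $\iota^\ast[\eta^\prime]=[\eta^\prime\vert_O]=[\eta]$ in $\hdd^1(O,\lieg)$ says precisely that $\eta^\prime\vert_O-\eta$ is an exact $\lieg$-valued $1$-form on $O$.

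To establish the cohomological claim I would invoke the structure theory of globally hyperbolic spacetimes. Fix a smooth, spacelike Cauchy hypersurface $\Sigma$ of $M$ with $\Sigma\subseteq O$. A causally compatible open neighbourhood of a Cauchy hypersurface is again globally hyperbolic with that hypersurface as a Cauchy hypersurface (see, e.g., \cite{BGP07}), so Theorem~\ref{thmGlobHyp}, together with its concluding assertion, yields foliations $M\cong\bbR\times\Sigma$ and $O\cong\bbR\times\Sigma$ under which $\Sigma$ corresponds to $\{0\}\times\Sigma$ in both cases. In particular each of the inclusions $\Sigma\hookrightarrow O$ and $\Sigma\hookrightarrow M$ exhibits $\Sigma$ as a deformation retract, hence is a homotopy equivalence; since composing $\Sigma\hookrightarrow O$ with $\iota$ reproduces $\Sigma\hookrightarrow M$, the two-out-of-three property forces $\iota$ to be a homotopy equivalence as well. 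Consequently $\iota^\ast$ is an isomorphism on $\coho^1(\cdot,\bbZ)$ and on $\coho^1(\cdot,\lieg)\simeq\hdd^1(\cdot,\lieg)$, and naturality of the homomorphism induced in cohomology by $0\lra\bbZ\overset{2\pi i}{\lra}\lieg\overset{\exp}{\lra}G\lra0$ shows that $\iota^\ast$ carries $\hdd^1(M,\lieg)_\bbZ$ bijectively onto $\hdd^1(O,\lieg)_\bbZ$, as needed.

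Combining the two paragraphs gives the lemma. The only delicate ingredient is the geometric input just used, namely that $O$ is globally hyperbolic with $\Sigma$ as a Cauchy hypersurface and, via Theorem~\ref{thmGlobHyp}, deformation retracts onto it; the remaining steps are formal consequences of homotopy invariance and naturality of de Rham and singular cohomology. I would also point out that, unlike several statements in this section, no finite-type hypothesis on $M$ is required here, since both Corollary~\ref{corGaugeShiftsExplicit} and the homotopy invariance of cohomology hold without restriction.
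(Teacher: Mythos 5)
Your proof is correct and takes essentially the same route as the paper's: both reduce the claim via Corollary~\ref{corGaugeShiftsExplicit} to lifting the class $[\eta]\in\hdd^1(O,\lieg)_\bbZ$ along the isomorphism of first cohomology induced by the inclusion $O\hookrightarrow M$, which is obtained from the shared Cauchy hypersurface and Theorem~\ref{thmGlobHyp}. Your two-out-of-three argument showing that the inclusion itself is a homotopy equivalence, and the naturality remark for the integral subgroups, merely spell out steps the paper leaves implicit.
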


\begin{proof}
As globally hyperbolic spacetimes, both $M$ and $O$ share a Cauchy hypersurface. 
Via\ Theorem\ \ref{thmGlobHyp}, $M$ and $O$ are homotopy equivalent, 
whence the restriction from $M$ to $O$ induces an isomorphism $\coho^1(M,\bbZ)\simeq\coho^1(O,\bbZ)$. 
Recalling\ Corollary\ \ref{corGaugeShiftsExplicit}, given $\eta\in\gau_{P_O}$, its cohomology class 
$[\eta]\in\hdd^1(O,\lieg)_\bbZ$ can be represented via the isomorphism 
$\coho^1(M,\bbZ)\simeq\coho^1(O,\bbZ)$ induced by the restriction from $M$ to $O$. 
In particular, we find $[\eta^\prime]\in\hdd^1(M,\lieg)_\bbZ$ which restricts to $[\eta]$. 
Consider a representative $\eta^\prime\in[\eta^\prime]$. 
Per construction, $\eta^\prime\in\gau_P$ restricts to a $1$-form on $O$ 
which differs from $\eta$ by $\dd\chi$ for a suitable $\chi\in\c(O,\lieg)$
\end{proof}

\begin{lemma}\label{lemTimeSliceConn3}
Let $G=U(1)$ and consider a principal $G$-bundle $P$ over an $m$-dimensional globally hyperbolic spacetime $M$. 
Let $O\subseteq M$ be a causally compatible open neighborhood of a Cauchy hypersurface for $M$. 
Denote with $P_O$ the principal $G$-bundle obtained restricting $P$ to $O$. 
Let $\phi\in\sc(\conn(P)^\dagger)$ be such that $\supp(\phi)$ is included in $O$ and $\ev_\phi\in\inv_P$. 
Therefore $\ev_{\phi\vert_O}$ lies in $\inv_{P_O}$, namely $\ev_{\phi\vert_O}\in\kin_{P_O}$ 
is invariant under gauge transformations of the principal bundle $P_O$ as well. 
\end{lemma}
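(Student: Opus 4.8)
The plan is to reduce gauge invariance on the small principal bundle $P_O$ to gauge invariance on the large bundle $P$, which we already have by hypothesis. The key point is that a gauge transformation of $P_O$ shifts a connection on $O$ by an element of $\gau_{P_O}$, and by Lemma\ \ref{lemTimeSliceConn2} every such shift agrees, up to an exact $1$-form on $O$, with the restriction to $O$ of a shift coming from a gauge transformation of $P$ on the whole of $M$. Since $\ev_\phi$ is invariant under $\gau_P$, it will also be insensitive to these restricted shifts, and one only has to dispose of the residual exact $1$-form; but that is harmless because $\phi_V$ is coclosed (Proposition\ \ref{prpInvAffIsCoexact}), so pairing against an exact form vanishes by Stokes' theorem (Theorem\ \ref{thmStokes}).

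More concretely, first I would fix a representative $\phi\in\sc(\conn(P)^\dagger)$ with $\supp(\phi)\subseteq O$ and $\ev_\phi\in\inv_P$, and observe that $\ev_{\phi\vert_O}(\mu)=\ev_\phi(\tilde\mu)$ for any connection $\mu\in\sect(\conn(P_O))$ and any extension $\tilde\mu\in\sect(\conn(P))$ of $\mu$ — this makes sense because the integrand $\phi(\tilde\mu)$ is supported inside $O$, so the integral over $M$ equals the integral over $O$ and is independent of the chosen extension. Extensions exist since $O$ is open in $M$ and one can glue $\mu$ on $O$ with any fixed connection on $M$ away from $\supp(\phi)$ using a partition of unity subordinate to $\{O, M\setminus\supp(\phi)\}$, exploiting the affine structure of $\sect(\conn(P))$.

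Next I would compute the effect of a gauge transformation $h\in\c(O,G)$ on $\ev_{\phi\vert_O}$. By Remark\ \ref{remGaugeConnAb}, $h$ acts on $\mu\in\sect(\conn(P_O))$ by $\mu\mapsto\mu-h^\ast\mu_{\mathrm{MC}}$ where $h^\ast\mu_{\mathrm{MC}}\in\gau_{P_O}$ (here I write $\mu_{\mathrm{MC}}$ for the Maurer--Cartan form to avoid clashing with the connection $\mu$). By Lemma\ \ref{lemTimeSliceConn2} there is $\eta'\in\gau_P$ whose restriction to $O$ equals $h^\ast\mu_{\mathrm{MC}}+\dd\chi$ for some $\chi\in\c(O,\lieg)$. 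Choosing an extension $\tilde\mu$ of $\mu$, the connection $\tilde\mu-\eta'\in\sect(\conn(P))$ extends $\mu-h^\ast\mu_{\mathrm{MC}}-\dd\chi$, so
\begin{align*}
\ev_{\phi\vert_O}(\mu-h^\ast\mu_{\mathrm{MC}})
&=\ev_{\phi\vert_O}(\mu-h^\ast\mu_{\mathrm{MC}}-\dd\chi)+(\phi_V,\dd\chi)\\
&=\ev_\phi(\tilde\mu-\eta')+(\phi_V,\dd\chi)
=\ev_\phi(\tilde\mu)=\ev_{\phi\vert_O}(\mu)\,,
\end{align*}
where $(\phi_V,\dd\chi)=(\de\phi_V,\chi)=0$ by Proposition\ \ref{prpInvAffIsCoexact} and Stokes' theorem, $\ev_\phi(\tilde\mu-\eta')=\ev_\phi(\tilde\mu)$ by invariance of $\ev_\phi$ under $\gau_P$, and the last equality is the extension-independence noted above. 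This proves $\ev_{\phi\vert_O}\in\inv_{P_O}$.

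The main obstacle I anticipate is the bookkeeping around extensions: one must make sure that "extend $\mu$ from $O$ to $M$" is available and that $\ev_\phi$ evaluated on any two extensions agrees — both follow from $\supp(\phi)\subseteq O$ together with the affine structure, but the argument should be spelled out carefully since the affine space $\sect(\conn(P))$ has no canonical zero. A secondary subtlety is checking that the homotopy equivalence of $M$ and $O$ in Lemma\ \ref{lemTimeSliceConn2} indeed yields the needed compatibility of $\hdd^1(\cdot,\lieg)_\bbZ$; but this is exactly the content of that lemma and may be invoked directly.
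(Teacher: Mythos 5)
Your proposal is correct and follows essentially the same route as the paper's proof: lift the gauge shift $\eta\in\gau_{P_O}$ to some $\eta^\prime\in\gau_P$ via Lemma\ \ref{lemTimeSliceConn2}, use gauge invariance of $\ev_\phi$ on $P$, and absorb the residual exact term through $\de\phi_V=0$ and Stokes' theorem. The only (cosmetic) difference is in the bookkeeping: the paper fixes a global reference connection $\tilde{\lambda}\in\sect(\conn(P))$, restricts it to $O$ and expands everything as $\tilde{\lambda}\vert_O+\omega+\eta$ with $\omega\in\f^1(O,\lieg)$, thereby avoiding your extension step — note that your gluing does not produce a genuine extension of $\mu$ to all of $M$ (an arbitrary connection on $P_O$ need not extend), but only a global connection agreeing with $\mu$ on a neighborhood of $\supp(\phi)$, which is however all your argument actually uses.
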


\begin{proof}
Let us consider $\phi\in\sc(\conn(P)^\dagger)$ according to the hypothesis. By the support properties of 
$\phi\in\sc(\conn(P)^\dagger)$, we can regard $\ev_{\phi\vert_O}$ as an element of $\kin_{P_O}$. 
Using\ Lemma\ \ref{lemTimeSliceConn2}, 
we show that $\ev_\phi\in\inv_P$ entails $\ev_{\phi\vert_O}\in\inv_{P_O}$. 
Fix a reference connection $\tilde{\lambda}\in\sect(\conn(P))$. 
Its restriction $\tilde{\lambda}\vert_O$ to $O$ gives a connection on $P_O$. 
Since $\sect(\conn(P_O))$ is an affine space modeled on $\f^1(O,\lieg)$, 
to conclude that $\ev_{\phi\vert_O}$ lies in $\inv_{P_O}$, it is enough to check that 
$\ev_{\phi\vert_O}(\tilde{\lambda}\vert_O+\omega+\eta)=\ev_{\phi\vert_O}(\tilde{\lambda}\vert_O+\omega)$ 
for all $\omega\in\f^1(O,\lieg)$ and $\eta\in\gau_{P_O}$. 
For each $\eta\in\gau_{P_O}$, exploiting\ Lemma\ \ref{lemTimeSliceConn2}, we find $\eta^\prime\in\gau_P$ 
whose restriction $\eta^\prime\vert_O$ to $O$ differs from $\eta$ by $\dd\chi$ for a suitable $\chi\in\c(O,\lieg)$. 
Recalling that $\de\phi_V=0$, {\em cfr.}\ Proposition\ \ref{prpInvAffIsCoexact}, 
given $\omega\in\f^1(O,\lieg)$, one gets the following chain of identities: 
\begin{align*}
\ev_{\phi\vert_O}(\tilde{\lambda}\vert_O+\omega+\eta)
& =\ev_{\phi\vert_O}(\tilde{\lambda}\vert_O+\omega+\eta^\prime\vert_O+\dd\chi)\\
& =\ev_\phi(\tilde{\lambda}+\eta^\prime)+(\phi_V\vert_O,\omega+\dd\chi)\\
& =\ev_\phi(\tilde{\lambda})+(\phi_V\vert_O,\omega)\\
& =\ev_{\phi\vert_O}(\tilde{\lambda}\vert_O+\omega)\,
\end{align*}
where we exploited the support properties of $\phi$, the fact that $\ev_\phi$ lies in $\inv_P$ 
and we made use of Stokes' theorem to show that the term $(\phi_V\vert_O,\dd\chi)$ vanishes. 
\end{proof}

\begin{theorem}\label{thmTimeSliceConn}\index{time-slice axiom}
Let $m\geq2$ and $G=U(1)$. The covariant functor $\PSV:\PrBun_\GHyp\to\PSymV$ 
fulfils the {\em time-slice axiom}: Consider the principal $G$-bundles $P$ and $Q$ 
over the $m$-dimensional globally hyperbolic spacetimes $M$ and respectively $N$. 
Furthermore, assume that $f:P\to Q$ is a principal bundle map covering a Cauchy morphism $\ul{f}:M\to N$. 
Then $\PSV(f):\PSV(P)\to\PSV(Q)$ is an isomorphism. 
\end{theorem}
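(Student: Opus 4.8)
The plan is to mimic the proof of the time-slice axiom for Maxwell $k$-forms, Theorem~\ref{thmTimeSliceForms}, adapting it to the affine setting of connections on a principal $U(1)$-bundle. First I would factor the principal bundle map $f:P\to Q$ covering a Cauchy morphism $\ul f:M\to N$ into an isomorphism $P\to f(P)$ followed by the inclusion $f(P)\hookrightarrow Q$ of the restriction of $Q$ to the causally compatible open neighbourhood $O=\ul f(M)$ of a Cauchy hypersurface of $N$. Since $\PSV$ is a functor, it sends isomorphisms to isomorphisms, so it suffices to treat the case of the inclusion, i.e.\ to show that the presymplectic linear map $L:\PSV(P_O)\to\PSV(Q)$ induced by $P_O\hookrightarrow Q$ is an isomorphism. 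As in the Maxwell case, I would use global hyperbolicity of $N$ to pick Cauchy hypersurfaces $\Sigma_\pm$ of $N$ lying inside $O$ with $\Sigma_+\cap J_N^-(\Sigma_-)=\emptyset$, and a subordinate partition of unity $\{\chi_+,\chi_-\}$ with $\supp(\chi_\pm)$ past/future compact.

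Next I would construct the candidate inverse $K:\PSV(Q)\to\PSV(P_O)$. Given $\ev_{[\phi]}\in\obs_Q$ with representative $\ev_\phi\in\inv_Q$, Lemma~\ref{lemTimeSliceConn1} produces $\psi\in\sc(\conn(Q)^\dagger)$ supported in $\supp(\chi_+)\cap\supp(\chi_-)\subseteq O$ with $\ev_\psi-\ev_\phi\in\MW^\ast(\fc^1(N,\lieg^\ast))\subseteq\van_Q$ and $\ev_\psi\in\inv_Q$. By Lemma~\ref{lemTimeSliceConn3}, $\ev_{\psi\vert_O}$ lies in $\inv_{P_O}$, so one may set $K\ev_{[\phi]}=\ev_{[\psi\vert_O]}$. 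For this to be well defined I must check that if $\ev_\phi\in\van_Q$, then the resulting $\ev_{\psi\vert_O}\in\van_{P_O}$, and more generally that $\van_Q\cap\{\text{functionals supported in }O\}$ maps onto $\van_{P_O}$ under restriction; this is the analogue of the identity $\van_N\cap\inv_O=\van_O$ in the Maxwell proof. The key tool here is the restricted-support version of Theorem~\ref{thmSolModGaugeForms} (cf.\ Remark~\ref{remSCSolModGaugeForms}) together with Lemma~\ref{lemTimeSliceForms1} applied to the linear part, which lets one extend an on-shell connection on $P_O$ supported between $\Sigma_\pm$ by zero, so that evaluation of a functional supported in $O$ on an on-shell connection on $P_O$ agrees with evaluation on its extension to a global on-shell connection on $Q$; combined with affine-ness of $\MW$ and the structure of $\sol_{P_O}$ as an affine space modeled on $\ker(\de\dd)$, this forces $\van_{P_O}\supseteq\van_Q\cap\inv_O$, and the reverse inclusion is immediate.

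Finally I would verify that $K$ is a two-sided inverse of $L$, exactly as in the Maxwell proof: $L\ev_{[\psi\vert_O]}$ is obtained by extending a representative $\ev_{\psi\vert_O}$ by zero to $Q$, which by construction differs from $\ev_\phi$ by an element of $\van_Q$, so $LK=\id$; conversely, starting from $\ev_{[\xi]}\in\obs_{P_O}$, extending a representative by zero and then restricting gives back the same representative, so $KL=\id$. Since $L$ is a presymplectic linear map and a bijection, it is an isomorphism in $\PSymV$, and hence so is $\PSV(f)$.

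The main obstacle I expect is the compatibility of the gauge structure with restriction to $O$, which has no analogue in the Maxwell $k$-form case: the group $\gau_{P_O}$ of gauge shifts on $O$ need not be the restriction of $\gau_P$, since it is governed by $\coho^1(O,\bbZ)$ rather than $\coho^1(N,\bbZ)$. This is precisely where Lemma~\ref{lemTimeSliceConn2} is needed — it uses Theorem~\ref{thmGlobHyp} to identify $\coho^1(N,\bbZ)\simeq\coho^1(O,\bbZ)$ via restriction, so every gauge shift on $O$ is, up to an exact $1$-form, the restriction of a global one. I would have to be careful to thread this through every place where gauge invariance is invoked (well-definedness of $K$, the identity $\van_{P_O}=\van_Q\cap\inv_O$, and the zero-extension argument), which is the technical heart of the adaptation; the dynamical part, by contrast, is essentially identical to the Maxwell $1$-form computation since $\MW_V=-\de\dd$.
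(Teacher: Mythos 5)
Your proposal follows essentially the same route as the paper's proof: factorizing $f$ through the restricted bundle $Q_O$ with $O=\ul f(M)$, using Lemmas~\ref{lemTimeSliceConn1} and~\ref{lemTimeSliceConn3} to build the candidate inverse $K$, establishing $\inv_{Q_O}\cap\van_Q=\van_{Q_O}$ via the zero-extension of solutions supported between the Cauchy hypersurfaces (Lemma~\ref{lemTimeSliceForms1} applied to the linear part, with $\MW_V=-\de\dd$), and then checking $KL=\id$ and $LK=\id$; the gauge subtlety you flag is exactly the one handled by Lemma~\ref{lemTimeSliceConn2} (through Lemma~\ref{lemTimeSliceConn3}) in the paper. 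The argument is correct as outlined.
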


\begin{proof}
The present setting is similar to the one of\ Theorem\ \ref{thmTimeSliceForms}. 
In particular, we have the following commutative diagram 
with principal $U(1)$-bundles in the upper part and their base manifolds in the lower part: 
\begin{equation*}
\xymatrix{
P\ar[dd]\ar[r]^\simeq\ar[rrd]_f & f(P)\ar[dd]\ar[rd]^\subseteq\\
&& Q\ar[dd]\\
M\ar[r]^\simeq\ar[rrd]_{\ul{f}} & \ul{f}(M)\ar[rd]^\subseteq\\
&& N
}
\end{equation*}
This diagram shows that to prove the theorem it is enough to check that the functor 
$\PSV:\PrBun_\GHyp\to\PSymV$ provides an isomorphism out of the inclusion of principal bundles $f(P)\to Q$. 
For convenience we denote $\ul{f}(M)$ with $O\subseteq N$. 
This inclusion induces inclusions between spaces of sections with compact support. 
All these inclusions are omitted during this proof to simplify the notation. 
Note that the principal bundle $f(P)$ over $\ul{f}(M)$ coincides with the restriction $Q_O$ of $Q$ to $O$. 

According to the hypothesis, $O$ is a neighborhood of a Cauchy hypersurface for $N$. 
Therefore we can choose Cauchy hypersurfaces $\Sigma_+$ and $\Sigma_-$ for $N$ lying inside $O$ 
and such that $\Sigma_+\cap J^-_N(\Sigma_-)=\emptyset$. This allows us to find a partition of unity 
$\{\chi_+,\chi_-\}$ subordinated to the open cover $\{I_N^+(\Sigma_-),I_N^-(\Sigma_+)\}$ of $N$. 

Consider $\ev_{[\phi]}\in\obs_Q$ and choose a representative $\ev_\phi\in\inv_Q$. 
According to\ Lemma\ \ref{lemTimeSliceConn1}, there exists $\psi\in\sc(\conn(Q_O)^\dagger)$ 
such that $\ev_\psi-\ev_\phi\in\MW^\ast(\fc^1(N,\lieg^\ast))$. In particular $\ev_\psi\in\inv_Q$. 
Lemma\ \ref{lemTimeSliceConn3} entails that $\ev_\psi$ lies also in $\inv_{Q_O}$. 
If we consider $\tilde{\psi}\in\sc(\conn(Q_O)^\dagger)$ 
such that $\ev_{\tilde{\psi}}-\ev_\phi\in\MW^\ast(\fc^1(N,\lieg^\ast))$ too, 
we deduce that $\ev_{\tilde{\psi}}-\ev_\psi$ lies in $\inv_{Q_O}\cap\MW^\ast(\fc^1(N,\lieg^\ast))$. 
Furthermore, if we choose another representative $\ev_{\phi^\prime}\in\inv_Q$ of $\ev_{[\phi]}\in\obs_Q$, 
following the procedure outlined above, we find $\psi^\prime\in\sc(\conn(Q_O)^\dagger)$ such that 
$\ev_{\psi^\prime}-\ev_\psi\in\inv_{Q_O}\cap\van_Q$. This is due to the fact 
that both $\ev_\psi-\ev_\phi$ and $\ev_{\psi^\prime}-\ev_{\phi^\prime}$ lie in 
$\MW^\ast(\fc^1(N,\lieg^\ast))\subseteq\van_Q$, {\em cfr.}\ Example\ \ref{exaVanConn}, 
while $\ev_{\phi^\prime}-\ev_\phi$ lies in $\van_Q$. Therefore, as soon as it is proved that 
$\inv_{Q_O}\cap\van_Q=\van_{Q_O}$, the procedure outlined above defines a linear map 
$K:\obs_Q\to\obs_{Q_O}$, which is the candidate to invert the presymplectic linear map 
induced by the principal bundle map $Q_O\to Q$ via $\PSV$. 

The inclusion $\inv_{Q_O}\cap\van_Q\supseteq\van_{Q_O}$ follows from the fact that 
$\sol_Q$ maps to $\sol_{Q_O}$ under restriction. For the converse inclusion, 
consider $\ev_\phi\in\inv_{Q_O}\cap\van_Q$. We have to check that $\ev_\phi$ vanishes on $\sol_{Q_O}$. 
$\sol_{Q_O}$ is an affine space modeled on the vector space of solutions to the equation $\de\dd\omega=0$, 
where $\omega$ is a $\lieg$-valued $1$-form on $O$. Since $\sol_Q$ maps to $\sol_{Q_O}$ under restriction, 
for an arbitrary choice of a reference connection $\tilde{\lambda}\in\sol_{Q}$, it is enough to show that 
$\ev_\phi(\tilde{\lambda}\vert_O+\omega)=0$ for all $\omega\in\f^1(O,\lieg)$ fulfilling $\de\dd\omega=0$. 
Given $\omega\in\f^1(O,\lieg)$ such that $\de\dd\omega=0$, via Lemma\ \ref{lemTimeSliceForms1} 
we find $\alpha\in\ftcde^1(O,\lieg)$ supported inside $\supp(\chi_+)\cap\supp(\chi_-)$ 
and $\chi\in\c(O,\lieg)$ such that $G\alpha+\dd\chi=\omega$, 
where $G$ denotes the causal propagator for $\Box$ acting on $\lieg$-valued $1$-forms on $O$. 
The support properties of $\alpha$ allow us to extend $\omega$ by zero 
to a coclosed $\lieg$-valued $1$-form on $N$ with timelike compact support, 
still denoted by $\omega\in\ftcde^1(N,\lieg)$ with a slight abuse of notation. 
In particular, we can regard $G\alpha$ as a solution of the equation $\de\dd\omega=0$ on the whole $N$, 
where now $G$ denotes the causal propagator for $\Box$ acting on $\lieg$-valued $1$-forms on $N$. 
The following chain of identities follows: 
\begin{align*}
\ev_\phi(\tilde{\lambda}\vert_O+\omega)=\ev_\phi(\tilde{\lambda}\vert_O+G\alpha\vert_O+\dd\chi)
=\ev_\phi(\tilde{\lambda}+G\alpha)=0\,.
\end{align*}
Since $\ev_\phi\in\inv_{Q_O}\cap\van_Q$, the second equality follows from $\dd\chi\in\gau_{Q_O}$, 
{\em cfr.}\ the comment after the proof of\ Lemma\ \ref{lemExpConAbSurj}, 
while the last one follows from $\tilde{\lambda}+G\alpha\in\sol_Q$. 
This shows that also the inclusion $\inv_{Q_O}\cap\van_Q\subseteq\van_{Q_O}$ holds. 

According to the first part of the proof, we have a well-defined linear map $K:\obs_Q\to\obs_{Q_O}$, 
mapping $\ev_{[\phi]}\in\obs_Q$ to $\ev_{[\psi]}\in\obs_{Q_O}$, where $\psi\in\sc(\conn(Q_O)^\dagger)$ 
is such that $\ev_\psi-\ev_\phi\in\van_Q$ for an arbitrary choice of a representative $\ev_\phi\in\ev_{[\phi]}$. 
Denote with $L:\obs_{Q_O}\to\obs_Q$ the linear presymplectic map 
induced by the principal bundle map $Q_O\to Q$ via the covariant functor $\PSV$. 
We check that $KL=\id_{\obs_{Q_O}}$. In fact, given $\ev_{[\psi]}\in\obs_{Q_O}$, 
the extension by zero of one of its representatives $\psi\in\sc(\conn(Q_O)^\dagger)$ can be used 
to represent $L\ev_{[\psi]}$. Therefore $KL\ev_{[\psi]}$ still admits $\psi\in\sc(\conn(Q_O)^\dagger)$ 
as a representative, whence $KL\ev_{[\psi]}=\ev_{[\psi]}$. 
To show that $LK=\id_{\obs_Q}$, we take $\ev_{[\phi]}\in\obs_Q$, 
consider a representative $\ev_\phi\in\ev_{[\phi]}$ and find $\psi\in\sc(\conn(Q_O)^\dagger)$ 
such that $\ev_\psi-\ev_\phi\in\van_Q$ according to the procedure outlined in the first part. 
By definition, $K\ev_{[\phi]}=\ev_{[\psi]}$. Indeed, the extension by zero of $\psi$ is 
a representative of $LK\ev_{[\phi]}$ and $\ev_\psi-\ev_\phi\in\van_Q$ per construction, 
whence $LK\ev_{[\phi]}=\ev_{[\phi]}$. 
Summing up, it turns out that $K$ is the inverse of $L$; in particular, $K$ is presymplectic as $L$ is. 
We conclude that $L$ is an isomorphism in $\PSymV$, thus completing the proof. 
\end{proof}

\subsection{Failure of locality and how to recover it}\label{subLocalityConn}
In Example\ \ref{exaNonInjConn} below, we exploit the degeneracies of the presymplectic structure 
for certain spacetime topologies, {\em cfr.}\ Remark\ \ref{exaNonTrivRadConn}, 
to exhibit a principal bundle map covering a causal embedding 
which does not induce an injective morphism via the functor $\PSV:\PrBun_\GHyp\to\PSymV$. 
This is an explicit violation of locality in the sense of\ \cite[Definition\ 2.1]{BFV03}. 
At least on the full subcategory $\PrBun_\GHypF$ of principal $U(1)$-bundles 
over $m$-dimensional globally hyperbolic spacetimes of finite type 
(for which we have explicit characterizations of the space of gauge invariant affine functionals, 
of the subspace of functionals which vanish on-shell and of the null space of the presymplectic form too), 
we will show that this model admits a quotientable subfunctor, see\ Definition\ \ref{defQSubfunctorV}, 
which allows to recover locality. However, this comes at the price of restricting the set of connections 
on which the elements of the resulting quotient can be tested. 
This amounts to imposing on connections a stricter condition then just being on-shell. 
Nonetheless, let us remind the reader that affine observables fail to detect flat connections 
and the Aharonov-Bohm effect in particular. 
For more details refer to\ Theorem\ \ref{thmInvAffDetectsCurv} and\ Remark\ \ref{remInvAffDetectsCurv}. 

Before we show the kind of situation where the locality property is explicitly violated 
by the functor $\PSV:\PrBun_\GHyp\to\PSymV$, let us stress that the most evident degeneracies 
of the presymplectic structure which are due to those elements in $\obs_P$ whose linear part is trivial 
are not related with the failure of locality. This is explained in the following remark. 

\begin{remark}\label{remMagObsConn}
Consider a principal $U(1)$-bundle map $f:P\to Q$ covering a causal embedding $\ul{f}:M\to N$ 
and take $\phi\in\sc(\conn(P)^\dagger)$ such that its linear part vanishes, namely $\phi_V=0$. 
Clearly $\ev_\phi\in\inv_P$ and, moreover, $\ev_{[\phi]}\in\rad_P$. 
One can actually express $\phi$ as $a\,\bfone_P$ for a suitable $a\in\cc(M)$, 
where $\bfone_P\in\sect(\conn(P)^\dagger)$ is the section which gives at each point the affine map 
on the corresponding fiber of $\conn(P)$ taking the constant value $1$. Therefore, 
according to\ Lemma\ \ref{lemDualConnFunctor}, $\sc(\conn(f)^\dagger)\phi=(\ul{f}_\ast a)\,\bfone_Q$. 
From this fact we deduce that $\PSV(f)\ev_{[\phi]}=0$ entails $\ev_\phi=0$. 
In fact, $\sc(\conn(f)^\dagger)\phi$ is a representative of $\PSV(f)\ev_{[\phi]}$; 
therefore, supposing that $\PSV(f)\ev_{[\phi]}$ is trivial in $\obs_Q$ 
and choosing a reference connection $\tilde{\lambda}\in\sol_Q$, for each $\omega\in\f^1(N,\lieg)$, one deduces 
\begin{equation*}
\int_Ma\,\vol=\int_N\ul{f}_\ast a\,\vol=
\big(\PSV(f)\ev_{[\phi]}\big)([\tilde{\lambda}])+\big(\ul{f}_\ast\phi_V,\omega\big)=0\,,
\end{equation*}
which means $\phi\in\triv_P$ or equivalently $\ev_\phi=0\in\kin_P$. 
Therefore, $\PSV(f)$ maps injectively elements of $\obs_P$ of the form $\ev_{[a\,\bfone_P]}$ for any $a\in\cc(M)$ 
to $\ev_{[(\ul{f}_\ast a)\,\bfone_P]}\in\obs_Q$, whence these elements in the null space 
of the presymplectic form $\tau_P$ are not those responsible for the failure of locality. 
Note that observables testing the Chern class of $P$ lie in this class, {\em cfr.}\ Remark\ \ref{remChargeObs}. 
\end{remark}

A situation where locality fails is presented below. The elements in the null space of the presymplectic structure 
which are responsible for this failure are those of the type considered in\ Example\ \ref{exaNonTrivRadConn}. 

\begin{example}[Non-injective presymplectic maps]\label{exaNonInjConn}
The argument is very similar to the one in\ Example\ \ref{exaNonInjForms} for $k=1$ and $m\geq2$. 
We consider the same globally hyperbolic spacetimes $M$ and $N$ 
presented in the example mentioned above, together with the corresponding causal embedding $\ul{f}:M\to N$. 
On top of $M$ and $N$, we take the trivial principal $U(1)$-bundles $P$ and $Q$. 
Obviously, we get a principal bundle map $f:P\to Q$ covering $\ul{f}:M\to N$ which acts trivially on the fibers. 
For $m\geq3$ we consider the gauge invariant affine functional $\ev_\phi\in\inv_P$ 
introduced in\ Example\ \ref{exaNonTrivRadConn} by means of the dual of the curvature map. 
Instead, for $m=2$ we take $\ev_\phi=F^\ast(i\theta)\in\inv_P$ with $\theta\in\fcdd^2(M)\setminus\dd\fc^1(M)$ 
as specified in the second part of\ Example\ \ref{exaNonInjForms}. In both cases, 
$\ev_{[\phi]}\in\obs_P$ is a non-trivial element of the null space $\rad_P$ of the presymplectic form $\tau_P$. 
This follows from the given references and the characterization of $\van_P$ and of $\rad_P$ 
presented in\ Proposition\ \ref{prpVanObsConn} and in\ Proposition\ \ref{prpRadConn}, 
which both apply due to the fact that $M$ is of finite type. We note that $\PSV(f)\ev_{[\phi]}\in\obs_Q$ is trivial. 
In fact, $\PSV(f)\ev_{[\phi]}$ has a representative $\ev_{\sc(\conn(f)^\dagger)\phi}\in\inv_P$ 
according to\ Theorem\ \ref{thmObsFunctorConn}. 
Since $\hcdd^2(N,\lieg^\ast)$ is trivial in the construction for the case $m\geq3$, while $[i\ul{f}_\ast\theta]=0$ 
in $\hcdd^2(N,\lieg^\ast)$ in the construction for the case $m=2$, {\em cfr.}\ Example\ \ref{exaNonInjForms}, 
there exists $\eta\in\fc^1(N,\lieg^\ast)$ such that $\dd\eta=i\ul{f}_\ast\theta$. 
Therefore, introducing $\ev_\psi=\MW^\ast(\eta)\in\van_Q$ via the dual of the affine differential operator 
$\MW:\sect(\conn(Q))\to\f^1(N,\lieg)$, {\em cfr.}\ Example\ \ref{exaVanConn}, we conclude that 
for each $\lambda\in\sect(\conn(Q))$ the following chain of identities holds: 
\begin{align*}
\ev_{\sc(\conn(f)^\dagger)\phi)}(\lambda) & =\ev_\phi\big(\sect(\conn(f))(\lambda)\big)
=\big(i\theta,F\big(\sect(\conn(f))(\lambda)\big)\big)\\
& =\big(i\theta,\ul{f}^\ast F(\lambda)\big)=\big(i\ul{f}_\ast\theta,F(\lambda)\big)
=\big(\dd\eta,F(\lambda)\big)=\ev_\psi(\lambda)\,.
\end{align*}
Note that we exploited the naturality of the curvature map, see\ Remark\ \ref{remCurvAb}. 
It follows that $\ev_{\sc(\conn(f)^\dagger(\phi))}=\ev_\psi$, whence $\PSV(f)(\ev_{[\phi]})=0$ in $\obs_Q$, 
thus showing that $\PSV(f):\PSV(P)\to\PSV(Q)$ has a non-trivial kernel. 
\end{example}

\begin{theorem}\label{thmNonLocYM}
Let $m\geq2$ and $G=U(1)$. The covariant functor $\PSV:\PrBun_\GHyp\to\PSymV$ 
violates the locality property as stated in\ \cite[Definition\ 2.1]{BFV03}, namely there exists 
a principal $G$-bundle map $f$ covering a causal embedding such that the kernel of $\PSV(f)$ is non-trivial. 
\end{theorem}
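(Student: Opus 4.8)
The statement is an almost immediate consequence of the explicit construction already carried out in Example~\ref{exaNonInjConn}, so the plan is simply to invoke that example and phrase its content in the language of the locality axiom of~\cite{BFV03}. Recall that locality requires that every morphism in $\PrBun_\GHyp$ be sent by $\PSV$ to an injective (equivalently, monomorphic) presymplectic linear map; a violation is witnessed by producing a single morphism $f$ for which $\ker(\PSV(f))\neq\{0\}$.

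First I would fix the dimension and split into the two cases already treated in Example~\ref{exaNonInjConn}, namely $m\geq3$ and $m=2$. In either case the example provides $m$-dimensional globally hyperbolic spacetimes $M$ and $N$ together with a causal embedding $\ul{f}:M\to N$ (taken over from the corresponding construction in Example~\ref{exaNonInjForms} for $k=1$), and equips $M$ and $N$ with the trivial principal $U(1)$-bundles $P=M\times G$ and $Q=N\times G$. The identity on the $G$-factor then yields a principal $G$-bundle map $f:P\to Q$ covering $\ul{f}$, so $f$ is indeed a morphism in $\PrBun_\GHyp$.

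Next I would recall from Example~\ref{exaNonInjConn} that there is a gauge invariant affine functional $\ev_\phi\in\inv_P$ whose class $\ev_{[\phi]}\in\obs_P$ is a \emph{non-trivial} element of the null space $\rad_P$: for $m\geq3$ this is the functional of Example~\ref{exaNonTrivRadConn} built from the dual of the curvature map applied to a closed, compactly supported $2$-form $i\theta$ with $[i\theta]\neq0$ in $\hcdd^2(M,\lieg^\ast)$, while for $m=2$ one takes $\ev_\phi=F^\ast(i\theta)$ with $\theta\in\fcdd^2(M)\setminus\dd\fc^1(M)$ as in the second part of Example~\ref{exaNonInjForms}. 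Non-triviality of $\ev_{[\phi]}$ in $\obs_P$ follows from the characterizations of $\van_P$ and $\rad_P$ in Proposition~\ref{prpVanObsConn} and Proposition~\ref{prpRadConn}, both applicable since $M$ is of finite type. The computation in Example~\ref{exaNonInjConn} then shows, using the naturality of the curvature map (Remark~\ref{remCurvAb}) and the triviality of $\hcdd^2(N,\lieg^\ast)$ (resp. the exactness of $i\ul{f}_\ast\theta$ for $m=2$), that $\sc(\conn(f)^\dagger)\phi$ represents an element of $\van_Q$, hence $\PSV(f)\ev_{[\phi]}=0$ in $\obs_Q$ by Theorem~\ref{thmObsFunctorConn}. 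Thus $\ev_{[\phi]}$ is a non-zero element of $\ker(\PSV(f))$, which contradicts locality as stated in~\cite[Definition~2.1]{BFV03}.

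There is no real obstacle here: the only point requiring a modicum of care is bookkeeping across the two dimension ranges and checking that the objects imported from Example~\ref{exaNonInjForms} genuinely assemble into a morphism of $\PrBun_\GHyp$ (i.e.\ that $\ul{f}$ is a causal embedding and that the bundle map $f$ is $G$-equivariant covering it), both of which are immediate. If anything deserves emphasis it is that the failure already occurs within the subcategory of \emph{connected} globally hyperbolic spacetimes when $m\geq3$, in parallel with Remark~\ref{remLocViolationForms}; I would add a short remark to this effect after the theorem, noting in particular that the physically central case $m=4$ is included.
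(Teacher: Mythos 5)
Your proposal is correct and takes essentially the same route as the paper, whose proof simply invokes Example~\ref{exaNonInjConn} (the connected, finite-type counterexample for $m\geq3$ and the non-connected one for $m=2$). Your additional unpacking of the example and the remark on connectedness for $m\geq3$ match the paper's surrounding discussion, so nothing is missing or superfluous.
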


\begin{proof}
Example\ \ref{exaNonInjConn} provides a counterexample to locality 
involving connected globally hyperbolic spacetimes of finite type for spacetime dimension $m\geq3$ 
and a counterexample involving non-connected globally hyperbolic spacetimes of finite type for $m=2$. 
\end{proof}

Consider the full subcategory $\PrBun_\GHypF$ of $\PrBun_\GHyp$ 
whose objects are principal $U(1)$-bundles covering $m$-dimensional globally hyperbolic spacetimes of finite type. 
As already anticipated, we show that the covariant functor $\PSV:\PrBun_\GHyp\to\PSymV$, 
when restricted to $\PrBun_\GHypF$, admits a quotientable subfunctor which recovers locality. 

\begin{lemma}\label{lemQSubfunctorConn}
Let $m\geq2$ and $G=U(1)$. To each principal $G$-bundle $P$ over an $m$-dimensional 
globally hyperbolic spacetime $M$ of finite type, assign the vector subspace 
\begin{equation*}
\mathfrak{Q}(P)=\frac{F^\ast(\fcdd^2(M,\lieg^\ast))}{\van_P}
=\frac{F^\ast(\fcdd^2(M,\lieg^\ast))}{\MW^\ast(\fc^1(M,\lieg^\ast))}
\end{equation*}
of the null space $\rad_P$ of $\PSV(P)$ endowed with the trivial presymplectic structure. 
This assignment gives rise to a quotientable subfunctor $\mathfrak{Q}:\PrBun_\GHypF\to\PSymV$ 
of $\PSV:\PrBun_\GHypF\to\PSymV$, {\em cfr.}\ Definition\ \ref{defQSubfunctorV}. 
\end{lemma}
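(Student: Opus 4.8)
The plan is to verify directly the two requirements of Definition~\ref{defQSubfunctorV}, namely that $\mathfrak{Q}$ assigns a presymplectic vector subspace of the null space to each object and that it is compatible with morphisms, and then to observe that quotientability is immediate from the first requirement. First I would check that, for each principal $G$-bundle $P$ over an $m$-dimensional globally hyperbolic spacetime $M$ of finite type, $\mathfrak{Q}(P)$ is well-defined as a subspace of $\obs_P$. On account of Proposition~\ref{prpVanObsConn}, since $M$ is of finite type we have $\van_P=\MW^\ast(\fc^1(M,\lieg^\ast))$, and $\MW^\ast=F^\ast\circ\dd$ maps $\fc^1(M,\lieg^\ast)$ into $F^\ast(\dd\fc^1(M,\lieg^\ast))\subseteq F^\ast(\fcdd^2(M,\lieg^\ast))$; hence the quotient written in the statement makes sense and defines a vector subspace of $\obs_P=\inv_P/\van_P$. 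The inclusion $\mathfrak{Q}(P)\subseteq\rad_P$ follows from Proposition~\ref{prpRadConn}: an element $\ev_\phi=F^\ast(\eta)$ with $\eta\in\fcdd^2(M,\lieg^\ast)$ has linear part $\phi_V=-\de\eta\in\de\fcdd^2(M,\lieg^\ast)$, and by the explicit characterization of $\rad_P$ valid for $M$ of finite type this lies in the null space. Endowing $\mathfrak{Q}(P)$ with the trivial presymplectic structure is then consistent, since $\tau_P$ vanishes identically on $\rad_P$.

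Next I would verify the functorial compatibility. Let $f:P\to Q$ be a principal bundle map covering a causal embedding $\ul{f}:M\to N$ between globally hyperbolic spacetimes of finite type. I need to show that $\PSV(f)$ maps $\mathfrak{Q}(P)$ into $\mathfrak{Q}(Q)$, so that $\mathfrak{Q}(f)$ can be defined as the restriction of $\PSV(f)$. Take $\ev_\phi=F^\ast(\eta)$ with $\eta\in\fcdd^2(M,\lieg^\ast)$. Arguing exactly as in the chain of identities in Example~\ref{exaNonInjConn} and using naturality of the curvature map (Remark~\ref{remCurvAb}), one finds that a representative of $\PSV(f)\ev_{[\phi]}$ is $\ev_{\sc(\conn(f)^\dagger)\phi}=F^\ast(\ul{f}_\ast\eta)$, where $\ul{f}_\ast\eta\in\fc^2(N,\lieg^\ast)$. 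Since $\ul{f}$ is a causal embedding, in particular an open isometric embedding, the push-forward of a closed compactly supported form is again closed and compactly supported, so $\ul{f}_\ast\eta\in\fcdd^2(N,\lieg^\ast)$ and therefore $\PSV(f)\ev_{[\phi]}\in\mathfrak{Q}(Q)$. The restricted maps $\mathfrak{Q}(f)$ automatically preserve the (trivial) presymplectic structures and inherit functoriality — identities to identities, composition to composition — from $\PSV:\PrBun_\GHypF\to\PSymV$, which is a covariant functor by Theorem~\ref{thmObsFunctorConn}. This establishes that $\mathfrak{Q}:\PrBun_\GHypF\to\PSymV$ is a subfunctor of $\PSV$, and since $\mathfrak{Q}(P)\subseteq\rad_P$ for every object $P$, it is quotientable.

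The only mild subtlety, and the point I would treat most carefully, is the claim that $\sc(\conn(f)^\dagger)$ sends the representative $F^\ast(\eta)$ to (a representative equal to) $F^\ast(\ul{f}_\ast\eta)$ rather than merely to something differing from it by an element of $\van_Q$; this is exactly the computation of Example~\ref{exaNonInjConn}, which I would reproduce in the generality needed here, relying on eq.~\eqref{eqPushforwardKinConn} of Lemma~\ref{lemDualConnFunctor}, the defining relation \eqref{eqCurvDual} of $F^\ast$, the change-of-variables identity $(\ul{f}_\ast\alpha,\beta)=(\alpha,\ul{f}^\ast\beta)$ for the pairing \eqref{eqPairing2}, and the naturality square \eqref{eqCurvNat} for $G=U(1)$. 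Everything else is bookkeeping. I do not anticipate any genuine obstacle: the finite-type hypothesis is used precisely to invoke Proposition~\ref{prpVanObsConn} and the explicit form of $\rad_P$ in Proposition~\ref{prpRadConn}, both of which have already been established, and the behaviour of $\fcdd^2$ under push-forward along open isometric embeddings is elementary.
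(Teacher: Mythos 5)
Your proposal is correct and follows essentially the same route as the paper's proof: membership of $F^\ast(\fcdd^2(M,\lieg^\ast))/\van_P$ in $\rad_P$ via Proposition~\ref{prpRadConn} under the finite-type hypothesis, and compatibility with morphisms via the very chain of identities $\ev_\psi(\lambda)=(\eta,F(\sect(\conn(f))(\lambda)))=(\eta,\ul{f}^\ast F(\lambda))=(\ul{f}_\ast\eta,F(\lambda))$ coming from naturality of the curvature map, which yields $\ev_{\sc(\conn(f)^\dagger)\phi}=F^\ast(\ul{f}_\ast\eta)$ exactly as in the paper. Your extra remarks (well-definedness of the quotient via $\MW^\ast=F^\ast\circ\dd$, and that exact equality of representatives rather than equality modulo $\van_Q$ is obtained) are harmless refinements of the same argument.
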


\begin{proof}
For each object $P$ over $M$ in $\PrBun_\GHypF$, by\ Proposition\ \ref{prpRadConn} and keeping in mind 
the assumption that $M$ is of finite type, $\mathfrak{Q}(P)$ is a presymplectic vector subspace 
(with trivial presymplectic structure) of the radical $\rad_P$ of the presymplectic form $\tau_P$ of $\PSV(P)$. 
Furthermore, naturality of $F:\sect(\conn(\cdot))\to\f^2((\cdot)_\base,\lieg)$ 
and of $\MW:\sect(\conn(\cdot))\to\f^1((\cdot)_\base,\lieg)$ entails that 
$\mathfrak{Q}:\PrBun_\GHypF\to\PSymV$ is a covariant functor. 
Therefore, to conclude that $\mathfrak{Q}:\PrBun_\GHypF\to\PSymV$ is a quotientable subfunctor 
of $\PSV:\PrBun_\GHypF\to\PSymV$ according to\ Definition\ \ref{defQSubfunctorV}, 
we only have to check that, for each morphism $f:P\to Q$ in $\PrBun_\GHypF$, 
$\PSV(f):\PSV(P)\to\PSV(Q)$ maps $\mathfrak{Q}(P)$ to $\mathfrak{Q}(Q)$. 
Actually, also this property follows from naturality 
of the curvature affine map $F:\sect(\conn(\cdot))\to\f^2((\cdot)_\base,\lieg)$, 
see\ Remark\ \ref{remCurvAb}. In fact, given $\ev_\phi=F^\ast(\eta)\in\inv_P$ 
for $\eta\in\fcdd^2(M,\lieg^\ast)$, consider $\PSV(f)\ev_{[\phi]}\in\obs_Q$, 
which has a representative $\ev_\psi=\ev_{\sc(\conn(f)^\dagger)\phi}\in\inv_Q$ 
according to\ Theorem\ \ref{thmObsFunctorConn}. 
For each connection $\lambda\in\sect(\conn(Q))$, the chain of identities presented below holds true: 
\begin{equation*}
\ev_\psi(\lambda)=\big(\eta,F\big(\sect(\conn(f))(\lambda)\big)\big)=\big(\eta,\ul{f}^\ast F(\lambda)\big)
=\big(\ul{f}_\ast\eta,F(\lambda)\big)\,.
\end{equation*}
This proves that $\ev_\psi=F^\ast(\ul{f}_\ast\eta)\in F^\ast(\fcdd^2(N,\lieg^\ast))$, 
whence the presymplectic linear map $\PSV(f):\PSV(P)\to\PSV(Q)$ restricts to 
$\mathfrak{Q}(f):\mathfrak{Q}(P)\to\mathfrak{Q}(Q)$, 
thus showing that $\mathfrak{Q}:\PrBun_\GHypF\to\PSymV$ is a quotientable subfunctor 
of $\PSV:\PrBun_\GHypF\to\PSymV$. 
\end{proof}

\begin{theorem}\label{thmLocRecovedYM}
Let $m\geq2$ and $G=U(1)$. The quotient functor $\PSV^0=\PSV/\mathfrak{Q}:\PrBun_\GHypF\to\PSymV$, 
defined according to\ Proposition\ \ref{prpQuotientFunctor}, fulfils locality, causality and the time-slice axiom. 
In particular, for each morphism $f:P\to Q$ in $\PrBun_\GHypF$, 
the corresponding morphism $\PSV^0(f):\PSV^0(P)\to\PSV^0(Q)$ in $\PSymV$ is injective. 
\end{theorem}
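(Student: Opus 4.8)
The plan is to prove the three properties separately, deriving causality and the time-slice axiom by direct inheritance from the corresponding properties of $\PSV$, and concentrating the real work on locality. For causality, I would observe that $\mathfrak{Q}:\PrBun_\GHypF\to\PSymV$ is a quotientable subfunctor of $\PSV$ by Lemma\ \ref{lemQSubfunctorConn}, so by Proposition\ \ref{prpQuotientFunctor} the quotient functor $\PSV^0=\PSV/\mathfrak{Q}$ is a well-defined covariant functor. Given principal bundle maps $f:P_1\to Q$ and $h:P_2\to Q$ covering causally disjoint causal embeddings, the presymplectic forms $\tau^0_{Q}$ on $\PSV^0(Q)$ are induced by $\tau_Q$; since $\tau_Q(\PSV(f)\obs_{P_1},\PSV(h)\obs_{P_2})=\{0\}$ by Theorem\ \ref{thmCausalityConn}, the same vanishing persists after descending to the quotients, which gives causality for $\PSV^0$. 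For the time-slice axiom, if $f:P\to Q$ covers a Cauchy morphism, then $\PSV(f):\PSV(P)\to\PSV(Q)$ is an isomorphism by Theorem\ \ref{thmTimeSliceConn}; since $\mathfrak{Q}$ is a subfunctor and $\PSV(f)$ maps $\mathfrak{Q}(P)$ isomorphically onto $\mathfrak{Q}(Q)$ (this uses naturality of the curvature map, exactly as in the proof of Lemma\ \ref{lemQSubfunctorConn}), the induced map $\PSV^0(f)$ on the quotients is again an isomorphism.

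The heart of the argument is locality, i.e.\ injectivity of $\PSV^0(f):\PSV^0(P)\to\PSV^0(Q)$ for every morphism $f:P\to Q$ in $\PrBun_\GHypF$, covering a causal embedding $\ul{f}:M\to N$. By Proposition\ \ref{prpSymInj} the kernel of $\PSV(f)$ is contained in the null space $\rad_P$ of $\tau_P$, and by Proposition\ \ref{prpRadConn}, since $M$ is of finite type, $\rad_P=\{\ev_\phi\in\inv_P:\phi_V\in\de\fcdd^2(M,\lieg^\ast)\}/\van_P$. The plan is to show that every element of $\ker(\PSV(f))$ already lies in $\mathfrak{Q}(P)=F^\ast(\fcdd^2(M,\lieg^\ast))/\van_P$; once this is done, the induced map on $\PSV^0(P)=\PSV(P)/\mathfrak{Q}(P)$ has trivial kernel, as required. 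So suppose $\ev_{[\phi]}\in\ker(\PSV(f))$, choose a representative $\ev_\phi\in\inv_P$, and write $\phi=\phi(\tilde{\lambda})\,\bfone_P+\ast^{-1}(\phi_V\wedge\ast(\cdot-\tilde{\lambda}))$ for a reference connection $\tilde{\lambda}\in\sol_P$ as in Remark\ \ref{remAffDualSectDecomp}. First, the $\bfone_P$-component is harmless: by Remark\ \ref{remMagObsConn}, $\PSV(f)$ is injective on observables of the form $\ev_{[a\,\bfone_P]}$, so $\PSV(f)\ev_{[\phi]}=0$ forces the constant part to contribute trivially, and after subtracting an element of $\van_P$ we may assume $\phi(\tilde{\lambda})$ integrates to zero, i.e.\ the constant term is itself trivial. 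Since $\ev_{[\phi]}\in\rad_P$ we have $\phi_V=\de\eta$ for some $\eta\in\fcdd^2(M,\lieg^\ast)$, hence $\ev_\phi$ and $F^\ast(\eta)$ have the same linear part; as both vanish on $\tilde{\lambda}$ (the former by the normalization just arranged, the latter by Example\ \ref{exaCurvObs} and $\phi_V$ being coexact — actually one checks directly as in Proposition\ \ref{prpVanObsConn}), we conclude $\ev_\phi=F^\ast(\eta)$ modulo $\van_P$, so $\ev_{[\phi]}\in\mathfrak{Q}(P)$.

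The main obstacle is the bookkeeping in the step above: one must be careful that the element of $\van_P$ subtracted to normalize the constant part does not itself leave $\mathfrak{Q}(P)$, and that the identification $\ev_\phi=F^\ast(\eta)+(\text{element of }\van_P)$ genuinely uses only that $\phi_V\in\de\fcdd^2(M,\lieg^\ast)$ and not the stronger (false in general) statement that $\phi_V\in\de\dd\fc^1$. Here the key point is that $F^\ast(\eta)$, evaluated on $\tilde\lambda+\omega$, equals $\int_M F^\ast(\eta)(\tilde\lambda)\,\vol+(\phi_V,\omega)$ with vanishing first term (by Stokes, since $\eta$ is closed and compactly supported and $F(\tilde\lambda)$ is closed), so that $\ev_\phi$ and $F^\ast(\eta)$ agree on all of $\sect(\conn(P))$ once their constant parts are matched — and the difference of constant parts is exactly an element $a\,\bfone_P$ with $\int_M a\,\vol=0$, i.e.\ an element of $\triv_P$, which represents $0$ in $\kin_P$. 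Finally, to go from injectivity along morphisms covering causal embeddings to the full locality axiom of\ \cite[Definition\ 2.1]{BFV03}, I would factor an arbitrary morphism through the isomorphism onto its image followed by the inclusion of a causally compatible subbundle, exactly as in the proof of Theorem\ \ref{thmTimeSliceConn}, and note that $\PSV^0$ sends isomorphisms to isomorphisms while the inclusion case is the one just treated.
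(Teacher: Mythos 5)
There is a genuine gap in the locality part. Injectivity of the quotient morphism $\PSV^0(f):\PSV(P)/\mathfrak{Q}(P)\to\PSV(Q)/\mathfrak{Q}(Q)$ is \emph{not} equivalent to $\ker(\PSV(f))\subseteq\mathfrak{Q}(P)$: the kernel of the induced map consists of all classes $\ev_{[\phi]}+\mathfrak{Q}(P)$ with $\PSV(f)\ev_{[\phi]}\in\mathfrak{Q}(Q)$, and $\mathfrak{Q}(Q)$ is in general non-trivial (take $N$ with Cauchy hypersurface $\bbR\times\bbT^{m-2}$, so that $\hcdd^2(N,\lieg^\ast)\neq0$). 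Your plan only treats the case $\PSV(f)\ev_{[\phi]}=0$, so it proves a strictly weaker statement and leaves open exactly the situation the theorem is about, namely observables that are mapped to non-zero elements of $F^\ast(\fcdd^2(N,\lieg^\ast))/\van_Q$. In this more general case you can no longer invoke Proposition~\ref{prpSymInj} to place $\ev_{[\phi]}$ in $\rad_P$; instead one argues directly that $\mathfrak{Q}(Q)\subseteq\rad_Q$ forces $\tau_P(\ev_{[\phi]},\cdot)=0$ since $\PSV(f)$ preserves the presymplectic forms, and then one must compare $\ev_\phi$ with $F^\ast(\xi)$ (where $\phi_V=-\de\xi$, $\xi\in\fcdd^2(M,\lieg^\ast)$) by pushing the difference forward: its image is $F^\ast(\eta-\ul{f}_\ast\xi)$ with vanishing linear part, which forces $\eta=\ul{f}_\ast\xi$ and, via naturality of the curvature map, also fixes the constant part, giving $\ev_\phi=F^\ast(\xi)$ on the nose. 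This transport of the constant term through the target is the step your argument is missing.

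Two subsidiary steps in your kernel-case bookkeeping are also incorrect as stated. First, $\int_M (F^\ast(\eta))(\tilde\lambda)\,\vol=(\eta,F(\tilde\lambda))$ does \emph{not} vanish by Stokes' theorem when $\eta\in\fcdd^2(M,\lieg^\ast)$ and $\tilde\lambda$ is on-shell: this is precisely the electric-flux pairing $([\eta],[F(\tilde\lambda)])_\de$ of Remark~\ref{remChargeObs}, which is generically non-zero. Second, you cannot ``subtract an element of $\van_P$'' to normalize the constant part of $\ev_\phi$: an element of $\van_P=\MW^\ast(\fc^1(M,\lieg^\ast))$ with vanishing linear part has $\de\dd\alpha=0$ with $\alpha$ compactly supported, hence $\dd\alpha=0$ and $\MW^\ast(\alpha)=F^\ast(\dd\alpha)=0$, so $\van_P$ contains no non-trivial multiples of $\bfone$ and the constant discrepancy cannot be absorbed this way. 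Both difficulties disappear once the constant part is matched through the image in $Q$ as above, which is why the argument has to be run with the hypothesis $\PSV(f)\ev_{[\phi]}\in\mathfrak{Q}(Q)$ from the start. (Your final remark about factoring arbitrary morphisms is unnecessary here, since every morphism of $\PrBun_\GHypF$ already covers a causal embedding by definition, but it is harmless.)
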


\begin{proof}
By\ Lemma\ \ref{lemQSubfunctorConn}, we have a quotientable subfunctor 
$\mathfrak{Q}:\PrBun_\GHypF\to\PSymV$ of $\PSV:\PrBun_\GHypF\to\PSymV$. 
Applying Proposition\ \ref{prpQuotientFunctor}, 
we get a new covariant functor $\PSV^0=\PSV/\mathfrak{Q}:\PrBun_\GHypF\to\PSymV$. 
As for causality and the time-slice axiom, these properties are inherited from $\PSV:\PrBun_\GHyp\to\PSymV$. 
Only locality has to be checked. Consider a morphism $f:P\to Q$ in $\PrBun_\GHypF$. 
We have to show that $\PSV^0(f):\PSV^0(P)\to\PSV^0(Q)$ is injective. 
Assume that $\ev_{[\phi]}\in\obs_P$ is such that $\PSV(f)\ev_{[\phi]}$ lies in $\mathfrak{Q}(Q)$.  
Applying Theorem\ \ref{thmObsFunctorConn} and\ Lemma\ \ref{lemDualConnFunctor}, 
we find $\eta\in\fcdd^2(N,\lieg^\ast)$ such that $\ev_{\sc(\conn(f)^\dagger)\phi}=F^\ast(\eta)$. 
In particular, $\PSV(f)\ev_{[\phi]}$ lies in the null space $\rad_Q$. 
Therefore $\ev_{[\phi]}$ lies in the null space $\rad_P$ as well. 
In fact, for each $\ev_{[\psi]}\in\obs_P$, we get 
\begin{equation*}
\tau_P\left(\ev_{[\phi]},\ev_{[\psi]}\right)=\tau_Q\left(\PSV(f)\ev_{[\phi]},\PSV(f)\ev_{[\psi]}\right)=0\,.
\end{equation*}
Therefore, Proposition\ \ref{prpRadConn} entails that a representative $\ev_\phi\in\inv_P$ of $\ev_{[\phi]}$ 
has linear part $\phi_V=-\de\xi$ for a suitable $\xi\in\fcdd^2(M,\lieg^\ast)$. 
It follows that $\ev_\phi-F^\ast(\xi)\in\inv_P$ has vanishing linear part. 
The map $\sc(\conn(f)^\dagger):\sc(\conn(P)^\dagger)\to\sc(\conn(Q)^\dagger)$, 
defined in Lemma\ \ref{lemDualConnFunctor}, induces a map $\kin_P\to\kin_Q$.  
We deduce that the image of $\ev_\phi-F^\ast(\xi)$ via this map coincides 
with $F^\ast(\eta-\ul{f}_\ast\xi)$ and its linear part vanishes too. As a consequence, 
$\eta-\ul{f}_\ast\xi\in\fc^1(N,\lieg^\ast)$ is both closed and coclosed, whence $\eta=\ul{f}_\ast\xi$. 
This entails that $\ev_\phi=F^\ast(\xi)$. In fact, given a reference connection $\tilde{\lambda}\in\sect(\conn(Q))$ 
and exploiting the affine structure of $\sect(\conn(P))$, it is enough to check that $\ev_\phi$ coincides 
with $F^\ast(\xi)$ on $\sect(\conn(f))(\tilde{\lambda})+\omega$ for each $\omega\in\f^1(M,\lieg)$: 
\begin{align*}
\ev_\phi & \big(\sect(\conn(f))(\tilde{\lambda})+\omega\big)
=\ev_\phi\big(\sect(\conn(f))(\tilde{\lambda})\big)+(-\de\xi,\omega)\\
& =\ev_{\sc(\conn(f)^\dagger)\phi}(\tilde{\lambda})+(\xi,-\dd\omega)
=\big(\eta,F(\tilde{\lambda})\big)+(\xi,-\dd\omega)\\
& =\big(\ul{f}_\ast\xi,F(\tilde{\lambda})\big)+(\xi,-\dd\omega)
=\big(\xi,F\big(\sect(\conn(f))(\tilde{\lambda})\big)\big)+(\xi,-\dd\omega)\\
& =\big(\xi,F\big(\sect(\conn(f))(\tilde{\lambda})+\omega\big)\big)\,.
\end{align*}
We conclude that $\ev_\phi$ lies in $F^\ast(\fcdd^2(M,\lieg^\ast))$, 
thus proving that $\PSV^0(f):\PSV^0(P)\to\PSV^0(Q)$ is injective. 
\end{proof}

We have shown that a suitable quotient of the covariant functor $\PSV:\PrBun_\GHypF\to\PSymV$ recovers locality 
at least for principal $U(1)$-bundles $P$ over $m$-dimensional globally hyperbolic spacetimes $M$ of finite type. 
However, if $\mathfrak{Q}(P)$ is not trivial,\footnote{This happens whenever $\hcdd^2(M,\lieg^\ast)$ 
is non-trivial.} for each non-zero $\ev_{[\phi]}\in\mathfrak{Q}(P)$, there exists a gauge equivalence class 
of on-shell connections $[\lambda]\in[\sol_P]$ such that $\ev_{[\phi]}([\lambda])\neq0$. 
In fact, $[\sol_P]$ separates $\obs_P$ per construction. Therefore, the quotient 
by $\mathfrak{Q}(P)$ is not compatible with the pairing\ \eqref{eqAffObs} between $\obs_P$ and $[\sol_P]$. 
To restore this pairing, one might interpret the quotient by $\mathfrak{Q}(P)$ on $\obs_P$ 
as a further restriction on connections beyond being on-shell. 
Equivalently, one might say that this amounts to restricting the on-shell condition for connections. 
To understand the new on-shell condition which is implied by the quotient by $\mathfrak{Q}$, 
one has to understand which properties of a connection $\lambda\in\sol_P$ are detected 
by $\mathfrak{Q}(P)$ for some principal $U(1)$-bundle $P$ 
over an $m$-dimensional globally hyperbolic spacetime $M$ of finite type. 

\begin{proposition}
Let $P$ be a principal $U(1)$-bundle over an $m$-dimensional globally hyperbolic spacetime $M$ of finite type 
and consider a connection $\lambda\in\sol_P$. The curvature of $\lambda$ is coexact if and only if 
$\ev_{[\phi]}([\lambda])=0$ for each $\ev_{[\phi]}\in\mathfrak{Q}(P)$, 
see\ Lemma\ \ref{lemQSubfunctorConn} for the definition of $\mathfrak{Q}(P)$. 
\end{proposition}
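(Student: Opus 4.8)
The statement is an "if and only if" relating coexactness of the curvature $F(\lambda)$ of an on-shell connection $\lambda\in\sol_P$ to the vanishing of $\ev_{[\phi]}([\lambda])$ for all $\ev_{[\phi]}\in\mathfrak{Q}(P)=F^\ast(\fcdd^2(M,\lieg^\ast))/\van_P$. The plan is to unwind the definition of the pairing \eqref{eqAffObs} on elements of $\mathfrak{Q}(P)$ and reduce the claim to the non-degeneracy of the de Rham pairing between $\hde^2(M,\lieg)$ and $\hcdd^2(M,\lieg^\ast)$ from Theorem\ \ref{thmPoincareDuality}. Recall that for $\ev_\phi=F^\ast(\eta)$ with $\eta\in\fcdd^2(M,\lieg^\ast)$, the defining property \eqref{eqCurvDual} of the dual of the curvature map gives directly $\ev_\phi(\lambda)=(\eta,F(\lambda))$, and this expression depends on $[\lambda]\in[\sol_P]$ only (gauge transformations do not change $F(\lambda)$). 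So the pairing $\mathfrak{Q}(P)\times[\sol_P]\to\bbR$ is essentially $([\eta],[F(\lambda)])\mapsto(\eta,h^{-1}(\ast F(\lambda)))$, or more precisely the de Rham pairing $([\eta],[F(\lambda)])_\de$ in the sense of \eqref{eqPoincarePairing}, up to the harmless evaluation of $\lieg^\ast$ on $\lieg$.

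First I would establish the easy direction: if $F(\lambda)$ is coexact, say $F(\lambda)=\de\zeta$ for some $\zeta\in\f^3(M,\lieg)$, then for any $\eta\in\fcdd^2(M,\lieg^\ast)$ Stokes' theorem (Theorem\ \ref{thmStokes}), together with $\dd\eta=0$, gives $(\eta,\de\zeta)=\pm(\dd\eta,\zeta)=0$, hence $\ev_\phi(\lambda)=0$; since this holds for every representative of every class in $\mathfrak{Q}(P)$, we get $\ev_{[\phi]}([\lambda])=0$ for all $\ev_{[\phi]}\in\mathfrak{Q}(P)$. For the converse, suppose $\ev_{[\phi]}([\lambda])=0$ for all $\ev_{[\phi]}\in\mathfrak{Q}(P)$. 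Since $\lambda$ is on-shell, $\MW(\lambda)=\de F(\lambda)=0$, so $F(\lambda)$ is coclosed and defines a class $[F(\lambda)]\in\hde^2(M,\lieg)$ (it is also closed, being a curvature, but only coclosedness is needed here). The hypothesis, rewritten via \eqref{eqCurvDual}, says $(\eta,F(\lambda))=0$ for all $\eta\in\fcdd^2(M,\lieg^\ast)$, which by Stokes descends to $([\eta],[F(\lambda)])_\de=0$ for all $[\eta]\in\hcdd^2(M,\lieg^\ast)$, i.e.\ $[F(\lambda)]$ is annihilated by the pairing $(\cdot,\cdot)_\de:\hcdd^2(M,\lieg^\ast)\times\hde^2(M,\lieg)\to\bbR$ of \eqref{eqPoincarePairing}. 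Because $M$ is of finite type, Theorem\ \ref{thmPoincareDuality} guarantees this pairing is non-degenerate, so $[F(\lambda)]=0$ in $\hde^2(M,\lieg)$, which is exactly the statement that $F(\lambda)$ is coexact.

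\textbf{Main obstacle.} The substance of the argument is entirely in bookkeeping: translating between the $\lieg$- and $\lieg^\ast$-valued forms via the isomorphism $h$ and the identification of the vector dual of $\bw^kT^\ast M$ with $\bw^kT^\ast M$ (so that the linear part of $F^\ast(\eta)$ is $-\de\eta$, as in Example\ \ref{exaCurvObs}), and checking that the pairing $(\eta,F(\lambda))$ genuinely descends to cohomology in the correct variance — that is, $\dd$-cohomology with compact support in the first slot and $\de$-cohomology in the second. The one point that deserves care is that when passing from $\ev_{[\phi]}\in\mathfrak{Q}(P)$ (an element of a quotient by $\van_P$) to a representative $F^\ast(\eta)$, one must observe that different representatives $\eta,\eta'$ with $F^\ast(\eta)-F^\ast(\eta')\in\van_P$ still give the same value on $[\lambda]$; this is automatic since $\van_P$ by definition vanishes on $\sol_P$. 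No genuine difficulty is expected; the proof is a clean application of the causally unrestricted Poincaré duality already recorded in Theorem\ \ref{thmPoincareDuality}.
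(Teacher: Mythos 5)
Your proposal is correct and follows essentially the same route as the paper, whose proof simply points to the second part of Remark~\ref{remChargeObs}: there the evaluation of $F^\ast(\eta)$, $\eta\in\fcdd^2(M,\lieg^\ast)$, on $[\lambda]$ is identified with the cohomological pairing $([\eta],[F(\lambda)])_\de$, and vanishing for all such $\eta$ is equivalent to $[F(\lambda)]=0$ in $\hde^2(M,\lieg)$ by Poincar\'e duality. The only cosmetic difference is that the paper invokes Remark~\ref{remPoincareDualityImproved} (injectivity of $\hde^2(M,\lieg)\to(\hcdd^2(M,\lieg^\ast))^\ast$, valid even without finite type), whereas you use the finite-type Theorem~\ref{thmPoincareDuality}, which is available under the stated hypotheses.
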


\begin{proof}
The proof is contained in the second part of\ Remark\ \ref{remChargeObs}. 
\end{proof}

By the last proposition, for a principal $U(1)$-bundle $P$ over an $m$-di\-men\-sio\-nal manifold $M$ of finite type, 
the evaluation of an element in $\PSV^0(P)$ on a gauge equivalence class of connections 
$[\lambda]\in[\sol_P]$ is well-defined if and only if the curvature $F(\lambda)$ 
associated to any representative of the gauge equivalence class $[\lambda]$ is coexact, 
namely $[\ast F(\lambda)]=0\in\hdd^{m-2}(M,\lieg)$. 
Therefore, one might interpret the quotient by $\mathfrak{Q}(P)$ as enforcing a different kind of on-shell condition, 
namely that connections $\lambda\in\sect(\conn(P))$ should have coexact curvature, 
not only coclosed as it is required by the equation of motion $\MW(\lambda)=\de F(\lambda)=0$. 
This new constraint on connections can be rephrased via Gauss' law saying that 
no charge should be allowed. In fact, requiring coexactness for $F(\lambda)$ 
is equivalent to state that the flux of $\ast F(\lambda)$ through any $(m-2)$-cycle in $M$ should vanish, 
{\em cfr.}\ Remark\ \ref{remChargeObs}. 

\subsection{Quantization}
In the present subsection we introduce the covariant quantum field theory $\QFT:\PrBun_\GHyp\to\CAlg$. 
This should be seen as a quantization of the covariant classical field theory $\PSV:\PrBun_\GHyp\to\CAlg$ 
describing observables for Yang-Mills connections on principal $U(1)$-bundles 
over $m$-dimensional globally hyperbolic spacetimes. Let us also stress that these observables are 
defined out of smooth gauge invariant affine functionals, see the first part of\ Section\ \ref{secAffObsYM}. 
Notice that this choice is not harmless. In fact, observables of this kind can only distinguish 
on-shell connections with different curvatures, {\em cfr.}\ Remark\ \ref{remInvAffDetectsCurv}. 
The quantum field theory is obtained composing $\PSV:\PrBun_\GHyp\to\PSymV$ 
with the restriction $\CCR:\PSymV\to\CAlg$ to presymplectic vector spaces of the quantization functor 
presented in\ Section\ \ref{secQuantization}, namely $\QFT=\CCR\circ\PSV:\PrBun_\GHyp\to\CAlg$. 
First of all we will show that $\QFT:\PrBun_\GHyp\to\CAlg$ fulfils both causality and the time-slice axiom, 
but locality is still violated after quantization. 
After that, we will restrict to the full subcategory $\PrBun_\GHypF$ of $\PrBun_\GHyp$ 
whose objects are principal $U(1)$-bundles over $m$-dimensional globally hyperbolic spacetimes of finite type. 
In this slightly restricted setting, it was shown that locality can be recovered classically by a suitable quotient. 
In fact, Theorem\ \ref{thmLocRecovedYM} provides a covariant functor $\PSV^0:\PrBun_\GHypF\to\PSymV$ 
which fulfils locality, causality and the time-slice axiom. 
The same quantization procedure, which was employed before, 
now provides a locally covariant quantum field theory $\QFT^0:\PrBun_\GHypF\to\CAlg$. 
Specifically, locality, causality and time-slice axiom are inherited 
from $\PSV^0:\PrBun_\GHypF\to\PSymV$ via $\CCR:\PSymV\to\CAlg$. 

Let $P$ be a principal $U(1)$-bundle over an $m$-dimensional globally hyperbolic spacetime $M$. 
In the following we will denote by $\weyl_{[\phi]}$ the generator of the $C^\ast$-algebra $\QFT(P)$ 
corresponding to the element $\ev_{[\phi]}\in\PSV(P)$, {\em cfr.}\ Section\ \ref{secQuantization}. 

\begin{theorem}\label{thmCovariantQFTConn}
Let $m\geq2$ and $G=U(1)$. Consider the covariant functor $\PSV:\PrBun_\GHyp\to\PSymV$ 
introduced in\ Theorem\ \ref{thmObsFunctorConn} and the covariant functor $\CCR:\PSymV\to\CAlg$ 
obtained by restricting the one in\ Theorem\ \ref{thmQuantFunctor} to presymplectic vector spaces. 
The covariant functor $\QFT=\CCR\circ\PSV:\PrBun_\GHyp\to\CAlg$ fulfils 
the quantum counterparts of both causality and the time-slice axiom: 
\begin{description}
\item[Causality] If $f:P_1\to Q$ and $h:P_2\to Q$ are principal bundle maps co\-ve\-ring causal embeddings 
$\ul{f}:M_1\to N$ and $\ul{h}:M_2\to N$ with causally disjoint images in $N$, 
namely such that $\ul{f}(M_1)\cap J_N(\ul{h}(M_2))=\emptyset$, then the $C^\ast$-subalgebras 
$\QFT(f)(\QFT(P_1))$ and $\QFT(h)(\QFT(P_2))$ of the $C^\ast$-algebra $\QFT(Q)$ commute with each other; 
\item[Time-slice axiom] If $f:P\to Q$ is a principal bundle map covering a Cauchy morphism $\ul{f}:M\to N$, 
then $\QFT(f):\QFT(P)\to\QFT(Q)$ is an isomorphism of $C^\ast$-algebras. 
\end{description}
The covariant functor $\QFT:\PrBun_\GHyp\to\CAlg$ violates locality, namely there exists a principal bundle map  
$f:P\to Q$ covering a causal embedding $\ul{f}:M\to N$ such that $\QFT(f):\QFT(P)\to\QFT(Q)$ is not injective. 
\end{theorem}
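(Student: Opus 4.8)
The plan is to obtain all three assertions for $\QFT = \CCR \circ \PSV$ as immediate consequences of the corresponding classical statements for $\PSV : \PrBun_\GHyp \to \PSymV$ (Theorem\ \ref{thmCausalityConn}, Theorem\ \ref{thmTimeSliceConn}, Theorem\ \ref{thmNonLocYM}) together with the functorial properties of $\CCR : \PSymV \to \CAlg$ established in\ Section\ \ref{secQuantization}. First I would note that $\QFT$ is a covariant functor, being the composition of two covariant functors.

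For causality, I would take $f : P_1 \to Q$ and $h : P_2 \to Q$ as in the statement and recall that, by the very definition of $\CCR$, the $C^\ast$-algebra $\QFT(Q)$ is generated by the Weyl elements $\weyl_{\ev_{[\omega]}}$ with $\ev_{[\omega]} \in \PSV(Q)$, satisfying the Weyl relations\ \eqref{eqWeylRel}. The subalgebra $\QFT(f)(\QFT(P_1))$ is then generated by $\weyl_{\PSV(f)\ev_{[\phi]}}$ for $\ev_{[\phi]} \in \PSV(P_1)$, and similarly for $h$, {\em cfr.}\ \eqref{eqQuantMorph}. By\ Theorem\ \ref{thmCausalityConn}, for all $\ev_{[\phi]} \in \PSV(P_1)$ and $\ev_{[\psi]} \in \PSV(P_2)$ one has $\tau_Q(\PSV(f)\ev_{[\phi]}, \PSV(h)\ev_{[\psi]}) = 0$. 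The Weyl relation\ \eqref{eqWeylRel} then gives $\weyl_{\PSV(f)\ev_{[\phi]}} \weyl_{\PSV(h)\ev_{[\psi]}} = \weyl_{\PSV(h)\ev_{[\psi]}} \weyl_{\PSV(f)\ev_{[\phi]}}$, since the exponential prefactors cancel. Hence the two sets of generators commute, and therefore so do the $C^\ast$-subalgebras they generate. For the time-slice axiom, if $f : P \to Q$ covers a Cauchy morphism, then $\PSV(f) : \PSV(P) \to \PSV(Q)$ is an isomorphism in $\PSymV$ by\ Theorem\ \ref{thmTimeSliceConn}; applying the functor $\CCR$ turns isomorphisms into isomorphisms, so $\QFT(f) = \CCR(\PSV(f))$ is an isomorphism of $C^\ast$-algebras.

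For the violation of locality, I would invoke\ Theorem\ \ref{thmNonLocYM}, which supplies a principal bundle map $f : P \to Q$ covering a causal embedding and an element $\ev_{[\phi]} \in \PSV(P)$ with $\ev_{[\phi]} \neq 0$ but $\PSV(f)\ev_{[\phi]} = 0$. Consider the element $a = \bbone_P - \weyl_{[\phi]} \in \QFT(P)$, where $\bbone_P = \weyl_0$ is the unit. Since $\ev_{[\phi]} \neq 0$, the Weyl element $\weyl_{[\phi]}$ is distinct from the unit (the symbols $\weyl_h$ are linearly independent in $\Delta$, so $a \neq 0$). On the other hand, by\ \eqref{eqQuantMorph}, $\QFT(f)\weyl_{[\phi]} = \weyl_{\PSV(f)\ev_{[\phi]}} = \weyl_0 = \bbone_Q$, hence $\QFT(f)a = \bbone_Q - \bbone_Q = 0$. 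Thus $\QFT(f)$ has a non-trivial kernel, which contradicts injectivity and proves that locality fails.

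There is no serious obstacle here; the only point that needs a moment of care is that $a = \bbone_P - \weyl_{[\phi]}$ is genuinely non-zero, which I would justify by the linear independence of the generating symbols $\{\weyl_h\}$ in $\Delta(H,\rho)$ used throughout\ Section\ \ref{secQuantization} (this is exactly the same observation that makes $\Delta$ preserve injectivity of morphisms). Everything else is a direct transcription of the classical results through the functor $\CCR$, so the proof is short.
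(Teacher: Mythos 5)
Your proposal is correct and follows essentially the same route as the paper: functoriality of the composition, Weyl relations plus Theorem\ \ref{thmCausalityConn} for causality, functoriality plus Theorem\ \ref{thmTimeSliceConn} for the time-slice axiom, and the element $\bbone_P-\weyl_{[\phi]}$ built from the classical counterexample of Theorem\ \ref{thmNonLocYM} to break injectivity. Your extra remark that $\bbone_P-\weyl_{[\phi]}\neq 0$ by linear independence of the Weyl symbols is a point the paper leaves implicit, and it is a welcome clarification.
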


\begin{proof}
Being the composition of two covariant functors, $\QFT:\PrBun_\GHyp\to\CAlg$ is a covariant functor as well. 
Weyl relations\ \eqref{eqWeylRel} and\ Theorem\ \ref{thmCausalityConn} entail quantum causality, 
while the time-slice axiom follows from $\CCR:\PSymV\to\CAlg$ being a functor 
and from\ Theorem\ \ref{thmTimeSliceConn}. 
On account of\ Theorem\ \ref{thmNonLocYM} there exists a principal bundle map $f:P\to Q$ 
covering a causal embedding $\ul{f}:M\to N$ and an element $\ev_{[\phi]}\in\PSV(P)$ 
lying in the kernel of $\PSV(f):\PSV(P)\to\PSV(Q)$. 
From this fact, it follows that the corresponding generator $\weyl_{[\phi]}$ of the $C^\ast$-algebra $\QFT(P)$ 
is mapped to $\bbone_Q\in\QFT(Q)$ via $\QFT(f):\QFT(P)\to\QFT(Q)$. 
Therefore $\bbone_P-\weyl_{[\phi]}\in\QFT(P)$ lies in the kernel of $\QFT(f)$, {\em cfr.}\ \eqref{eqQuantMorph}. 
This is a counterexample to locality for the functor $\QFT:\PrBun_\GHyp\to\CAlg$. 
\end{proof}

\begin{remark}
Even if we will not pursue this point of view here, let us mention that, according to\ \cite[Appendix\ B]{FS14}, 
for affine field theories, such as the $U(1)$ Yang-Mills model, in each presymplectic vector space of observables, 
it is reasonable to consider the functional taking the constant value $1$ as a distinguished point. 
Accordingly, one should replace the category of presymplectic vector spaces  
with the category of pointed presymplectic vector spaces. 
Doing so, one recovers the additivity property with respect to the composition of subtheories.
This means that the covariant field theory corresponding to a multiplet of fields 
is equivalent to the covariant field theory obtained as the sum of the covariant field theories 
corresponding to each one of the fields forming the multiplet. For further details on the assignment of a suitable 
monoidal structure on the category of pointed presymplectic vector spaces, see the reference mentioned above. 
Furthermore, one might observe that the group of automorphisms of the functor $\PSV:\PrBun_\GHyp\to\PSymV$ 
includes a $\bbZ_2$-symmetry with respect to the flip of the sign of functionals 
(refer to \cite{Few13} for more information about the relation 
between automorphisms of field theory functors and symmetries of the corresponding model). 
This symmetry has no counterpart in the Lagrangian of the model and therefore it should not be there. 
Fixing a distinguished point in the presymplectic vector space removes the unexpected $\bbZ_2$-symmetry. 
Summing up, passing to pointed presymplectic vector spaces, as in\ \cite[Appendix\ B]{FS14}, 
ensures additivity under composition of subtheories 
and removes unexpected symmetries from the group of automorphisms of the functor $\PSV$. 

The quantization functor should be adapted to this modified setting. 
In particular, it should take care of the distinguished point in each pointed pre\-sym\-plec\-tic vector space. 
See\ \cite[Appendix\ B]{FS14} for the technique to implement this feature at the algebraic level. 
\end{remark}

\begin{theorem}
Let $m\geq2$ and $G=U(1)$. Consider the covariant functor $\PSV^0:\PrBun_\GHypF\to\PSymV$ 
introduced in\ Theorem\ \ref{thmLocRecovedYM} and the covariant functor $\CCR:\PSymV\to\CAlg$ 
obtained by restricting the one in\ Theorem\ \ref{thmQuantFunctor} to presymplectic vector spaces. 
The covariant functor $\QFT^0=\CCR\circ\PSV^0:\PrBun_\GHypF\to\CAlg$ fulfils 
the quantum counterparts of locality, causality and the time-slice axiom. 
In particular, for each principal bundle map  $f:P\to Q$ covering a causal embedding $\ul{f}:M\to N$ 
between $m$-dimensional globally hyperbolic spacetimes of finite type, 
$\QFT^0(f):\QFT^0(P)\to\QFT^0(Q)$ is injective. 
\end{theorem}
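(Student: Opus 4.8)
The plan is to obtain all three properties by transporting them along the quantization functor $\CCR:\PSymV\to\CAlg$, exactly as in the analogous result for the affine functional theory before the quotient. The starting point is Theorem\ \ref{thmLocRecovedYM}, which already establishes that the classical functor $\PSV^0:\PrBun_\GHypF\to\PSymV$ satisfies locality, causality and the time-slice axiom. Since $\QFT^0$ is by definition the composition $\CCR\circ\PSV^0$, it is automatically a covariant functor, and each of the three quantum properties reduces to a structural property of $\CCR$ applied to the corresponding classical statement.

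\textbf{Locality.} First I would treat locality, which is the genuinely new point compared to the earlier quantization theorem in this section. Given a principal bundle map $f:P\to Q$ covering a causal embedding between globally hyperbolic spacetimes of finite type, Theorem\ \ref{thmLocRecovedYM} tells us that $\PSV^0(f):\PSV^0(P)\to\PSV^0(Q)$ is an injective presymplectic linear map. By Proposition\ \ref{prpQuantInj}, the functor $\CCR$ sends injective presymplectic homomorphisms to isometric (in particular injective) $\ast$-homomorphisms of $C^\ast$-algebras. Applying this to $\PSV^0(f)$ gives that $\QFT^0(f)=\CCR(\PSV^0(f))$ is injective, which is precisely the locality property of\ \cite[Definition\ 2.1]{BFV03} at the quantum level.

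\textbf{Causality.} For causality, consider principal bundle maps $f:P_1\to Q$ and $h:P_2\to Q$ covering causal embeddings with causally disjoint images. By causality for the classical functor (inherited in Theorem\ \ref{thmLocRecovedYM} from Theorem\ \ref{thmCausalityConn}), the presymplectic form $\tau_Q$ vanishes on $\PSV^0(f)(\PSV^0(P_1))\times\PSV^0(h)(\PSV^0(P_2))$. The Weyl relations\ \eqref{eqWeylRel} then show that for $v_1$ in the first image and $v_2$ in the second, the corresponding Weyl generators commute, since $\weyl_{v_1}\weyl_{v_2}\weyl_{v_1}^{-1}\weyl_{v_2}^{-1}=\exp(-i\tau_Q(v_1,v_2))\bbone_Q=\bbone_Q$; since these generators span dense subalgebras of $\QFT^0(f)(\QFT^0(P_1))$ and $\QFT^0(h)(\QFT^0(P_2))$, the two $C^\ast$-subalgebras commute. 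For the time-slice axiom, a principal bundle map $f:P\to Q$ covering a Cauchy morphism induces, by the classical time-slice axiom of Theorem\ \ref{thmLocRecovedYM} (inherited from Theorem\ \ref{thmTimeSliceConn}), an isomorphism $\PSV^0(f)$ in $\PSymV$; since $\CCR$ is a functor, it maps isomorphisms to isomorphisms, so $\QFT^0(f)$ is an isomorphism of $C^\ast$-algebras. The only mild subtlety — and the place where one must be slightly careful — is checking that the quotientable subfunctor $\mathfrak{Q}$ used to define $\PSV^0$ interacts correctly with Cauchy morphisms, but this is already subsumed in Theorem\ \ref{thmLocRecovedYM}, so no further work is needed here.
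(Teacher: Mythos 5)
Your proposal is correct and follows essentially the same route as the paper: causality and the time-slice axiom are inherited from the classical functor $\PSV^0$ exactly as in the earlier quantization theorem, and locality follows from Proposition\ \ref{prpQuantInj}, since $\CCR$ preserves injectivity of presymplectic morphisms. The extra details you spell out (the Weyl-relation commutator computation and the remark about Cauchy morphisms) are fine but already subsumed in the cited results.
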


\begin{proof}
Causality and time-slice axiom are inherited from the corresponding properties of the functor 
$\PSV^0:\PrBun_\GHypF\to\PSymV$ exactly as in the proof of\ Theorem\ \ref{thmCovariantQFTConn}. 
Locality follows from the analogous property of $\PSV^0:\PrBun_\GHypF\to\PSymV$ 
due to\ Proposition\ \ref{prpQuantInj}. In fact, $\CCR:\PSymV\to\CAlg$ preserves injectivity of morphisms. 
\end{proof}

\section{Observables via affine characters}\label{secYangMillsChar}
We face again the problem of\ Section\ \ref{secAffObsYM}, 
namely we try to find a suitable space of regular functionals to test gauge equivalence classes 
of on-shell connections $[\sol_P]$ on a principal $U(1)$-bundle $P$ over a globally hyperbolic spacetime $M$. 
As shown in the previous section, the space $\obs_P$ is not completely satisfactory 
since it cannot detect flat connections, which are related to the Aharonov-Bohm effect from a physical point of view. 
In fact, $\obs_P$ can distinguish two gauge equivalence classes of on-shell connections if and only if 
their curvatures differ, see\ Theorem\ \ref{thmInvAffDetectsCurv} and\ Remark\ \ref{remInvAffDetectsCurv}. 
On the contrary, the class of functionals we are going to present in this section 
succeeds in separating on-shell connections up to gauge. In fact, it turns out that Wilson loops can be regarded 
as a distributional version of the class of functionals we are going to introduce.
Once again this model fits into the framework of Brunetti, Fredenhagen and Verch\ \cite{BFV03} up to locality. 
Furthermore, we will show that, similarly to the case of Maxwell $k$-forms, 
{\em cfr.}\ Subsection\ \ref{subLocFailForms}, there is no way to recover injectivity taking quotients. 
However, it is still possible to fit into the Haag-Kastler framework by fixing a target principal bundle. 
The material of this section can be also found in\ \cite{BDHS14}. 
Let us mention that, as previously, throughout this section $G=U(1)$ and $\lieg=i\bbR$. 

Consider a principal $G$-bundle $P$ over an $m$-dimensional globally hyperbolic spacetime $M$. 
Given a compactly supported section $\phi\in\sc(\conn(P)^\dagger)$ of the vector dual of the bundle of connections, 
one can consider the exponential of the functional introduced in\ \eqref{eqAffFuncYM}: 
\begin{align}\label{eqAffCharYM}
\expev_\phi:\sect(\conn(P)\to\bbC\,, && 
\lambda\mapsto\exp\big(i\ev_\phi(\lambda)\big)=\exp\left(i\int_M\phi(\lambda)\,\vol\right)\,,
\end{align}
where $\vol=\ast1$ is the canonical volume form on $M$. 
\index{affine character}Functionals of this form will be often called {\em affine characters}. 
This is due to the fact that, exploiting the affine structure of $\conn(P)$, 
for each $\lambda\in\sect(\conn(P))$ and each $\omega\in\f^1(M,\lieg)$, one has 
\begin{equation*}
\expev_\phi(\lambda+\omega)=\expev_\phi(\lambda)\,\exp\big(i(\phi_V,\omega)\big)\,,
\end{equation*}
where $(\cdot,\cdot)$ denotes the pairing between $\fc^1(M,\lieg^\ast)$ and $\f^1(M,\lieg)$. 
Affine characters on $\sect(\conn(P))$ are collected in a new space of kinematic functionals: 
\begin{equation*}
\expkin_P=\big\{\expev_\phi:\sect(\conn(P))\to\bbC\,:\;\phi\in\sc(\conn(P)^\dagger)\big\}\,.
\end{equation*}
Note that this space is a subgroup of the Abelian group of $G$-valued functions on $\sect(\conn(P))$. 
Notice further that $\sc(\conn(P)^\dagger)$ is not a good labeling space for $\expkin_P$. 
In fact, besides the usual trivial sections of $\triv_P$ introduced in\ \eqref{eqTrivYM}, 
there are also other sections lying in the kernel of the group homomorphism $\phi\mapsto\expev_\phi$ 
implicitly defined by\ \eqref{eqAffCharYM}. As one can easily check, 
these are exactly sections of the form $\phi=a\,\bfone$ for $a\in\cc(M)$ such that $\int_Ma\,\vol\in2\pi\bbZ$: 
\begin{equation}\label{eqTrivZ}
\exptriv_P=\left\{a\,\bfone\in\sc(\conn(P)^\dagger)\,:\;a\in\cc(M)\,,\;\int_Ma\,\vol\in2\pi\bbZ\right\}\\,.
\end{equation}
Note that $\exptriv_P$ is an Abelian group under addition. Furthermore, by construction, the Abelian group 
obtained taking the quotient of $\sc(\conn(P)^\dagger)$ by $\exptriv_P$ is isomorphic to $\expkin_P$. 

Since connections only matter up to gauge transformations, 
we look for gauge invariant affine characters in $\expkin_P$: 
\begin{equation*}
\expinv_P=\left\{\expev_\phi\in\expkin_P\,:\;
\expev_\phi(\lambda+\gau_P)=\{\expev_\phi(\lambda)\}\,,\;\forall\,\lambda\in\sect(\conn(P))\right\}\,.
\end{equation*}
Recall that $\gau_P$ is the Abelian group of gauge shifts introduced in\ \eqref{eqGaugeShifts}. 
The next proposition explicitly characterizes the space of gauge invariant affine characters. 

\begin{proposition}\label{prpInvExpDualIntCoho}
Let $M$ be an $m$-dimensional globally hyperbolic spacetime and consider a principal $G$-bundle $P$ over $M$. 
Each affine character $\expev_\phi\in\expkin_P$ is gauge invariant if and only if 
$\de\phi_V=0$ and ${}_\de([\phi_V],\hdd^1(M,\lieg)_\bbZ)\subseteq2\pi\bbZ$, 
where ${}_\de(\cdot,\cdot)$ denotes the pairing between $\hcde^1(M,\lieg^\ast)$ and $\hdd^1(M,\lieg)$ 
and $\hdd^1(M,\lieg)_\bbZ$ is the image of the injective homomorphism 
$2\pi i:\coho^1(M,\bbZ)\to\hdd^1(M,\lieg)$, 
see the discussion following\ Proposition\ \ref{prpGaugeShiftsSheafCoho}. In particular, 
if $\expev_\phi\in\expkin_P$ is such that $\phi_V\in\de\fc^2(M,\lieg^\ast)$, then it lies in $\expinv_P$. 

Furthermore, assuming that $M$ is of finite type, one gets a more explicit characterization of $\expinv_P$. 
Introduce the injective homomorphism 
\begin{equation*}
I:\hom(\hdd^1(M,\lieg)_\bbZ,2\pi\bbZ)\to\hcde^1(M,\lieg^\ast)\,, 
\end{equation*}
defined by ${}_\de(Iz,[\omega])=z([\omega])$ for all $[\omega]\in\hdd^1(M,\lieg)_\bbZ$. 
Denote the image of $I$ with $\hcde^1(M,\lieg^\ast)_\bbZ$. Then 
\begin{equation*}
\expinv_P=\left\{\expev_\phi\in\expkin_P\,:\;\de\phi_V=0\,,\;[\phi_V]\in\hcde^1(M,\lieg^\ast)_\bbZ\right\}\,.
\end{equation*}
\end{proposition}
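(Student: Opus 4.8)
The core of the argument is the characterization of the Abelian group $\gau_P$ of gauge shifts given in Corollary~\ref{corGaugeShiftsExplicit}, which says that $\gau_P=\{\eta\in\fdd^1(M,\lieg):[\eta]\in\hdd^1(M,\lieg)_\bbZ\}$, together with the fact that the affine character $\expev_\phi$ transforms under an affine translation $\lambda\mapsto\lambda+\omega$ by multiplication by $\exp(i(\phi_V,\omega))$. The plan is first to prove the general (i.e.\ no finite-type assumption) statement: $\expev_\phi$ is gauge invariant if and only if $\exp(i(\phi_V,\eta))=1$ for every $\eta\in\gau_P$, which reads $(\phi_V,\eta)\in2\pi\bbZ$ for all $\eta\in\gau_P$. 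To split this into the two stated conditions I would first restrict to exponential gauge transformations $\eta=\dd\chi$ with $\chi\in\c(M,\lieg)$: since $(\phi_V,t\dd\chi)=t(\phi_V,\dd\chi)$ ranges over a line as $t\in\bbR$ varies (using that $\dd\c(M,\lieg)$ is a real vector subspace of $\gau_P$, $\c(M,\lieg)$ being a real vector space and $\chi\mapsto\dd\chi$ linear), the only way $(\phi_V,t\dd\chi)\in2\pi\bbZ$ for all $t$ is $(\phi_V,\dd\chi)=0$; by Stokes' theorem and non-degeneracy of the pairing between $\fc^0(M,\lieg^\ast)$ and $\f^0(M,\lieg)$ this forces $\de\phi_V=0$. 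Once $\de\phi_V=0$ we know $\phi_V$ is coclosed, hence defines a class $[\phi_V]\in\hcde^1(M,\lieg^\ast)$, and for closed $\eta$ with $[\eta]\in\hdd^1(M,\lieg)_\bbZ$ the pairing $(\phi_V,\eta)$ descends to ${}_\de([\phi_V],[\eta])$; so the remaining condition is exactly ${}_\de([\phi_V],\hdd^1(M,\lieg)_\bbZ)\subseteq2\pi\bbZ$. The converse (both conditions $\Rightarrow$ gauge invariance) is immediate by running these identities backwards, and the last sentence of the general part ($\phi_V$ coexact $\Rightarrow$ gauge invariant) follows from Stokes' theorem since then $(\phi_V,\eta)=0$ for every closed $\eta$.

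For the finite-type refinement the plan is to show that the two conditions ``$\de\phi_V=0$ and ${}_\de([\phi_V],\hdd^1(M,\lieg)_\bbZ)\subseteq2\pi\bbZ$'' are equivalent to ``$\de\phi_V=0$ and $[\phi_V]\in\hcde^1(M,\lieg^\ast)_\bbZ$''. The key input is Theorem~\ref{thmPoincareDuality} (Poincar\'e duality for manifolds of finite type): the pairing ${}_\de(\cdot,\cdot):\hcde^1(M,\lieg^\ast)\times\hdd^1(M,\lieg)\to\bbR$ is non-degenerate, and in fact $M$ being of finite type implies $\hdd^1(M,\lieg)$ is finite-dimensional. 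Combined with Remark~\ref{remZCohoGenRCoho}, which tells us that for $M$ of finite type $\hdd^1(M,\lieg)_\bbZ$ is a lattice spanning $\hdd^1(M,\lieg)$ over $\bbR$ (a $\bbZ$-basis of $\hdd^1(M,\lieg)_\bbZ$ is an $\bbR$-basis of $\hdd^1(M,\lieg)$), this gives that ${}_\de$ restricts to a perfect pairing between $\hcde^1(M,\lieg^\ast)$ and the real span of the lattice; hence the map $\hcde^1(M,\lieg^\ast)\to\hom(\hdd^1(M,\lieg)_\bbZ,\bbR)$, $[\psi]\mapsto{}_\de([\psi],-)$, is an isomorphism. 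Under this isomorphism, $I$ is precisely the inverse image of the subgroup $\hom(\hdd^1(M,\lieg)_\bbZ,2\pi\bbZ)\subseteq\hom(\hdd^1(M,\lieg)_\bbZ,\bbR)$, so $I$ is injective and its image $\hcde^1(M,\lieg^\ast)_\bbZ=\{[\psi]:{}_\de([\psi],\hdd^1(M,\lieg)_\bbZ)\subseteq2\pi\bbZ\}$, which is exactly the second condition. Putting this together with the general part yields the claimed description of $\expinv_P$.

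The main obstacle I anticipate is the bookkeeping around the map $I$ and the subgroup $\hcde^1(M,\lieg^\ast)_\bbZ$: one must check that $I$ is well-defined (i.e.\ that ${}_\de(Iz,[\omega])=z([\omega])$ for $[\omega]$ in the \emph{lattice} $\hdd^1(M,\lieg)_\bbZ$ does determine a unique cohomology class, which is where finite-dimensionality and non-degeneracy of ${}_\de$ are both essential), that it is injective (immediate from non-degeneracy once well-defined), and that its image is exactly the set of classes pairing into $2\pi\bbZ$ with the lattice. This is a linear-algebra / lattice argument — unwinding that ``${}_\de$ perfect on the $\bbR$-span of a lattice'' implies ``$\hom_\bbZ(\text{lattice},2\pi\bbZ)$ is realized by a sublattice of $\hcde^1$'' — and is the one place where the finite-type hypothesis genuinely enters (via Remark~\ref{remZCohoGenRCoho} and Theorem~\ref{thmPoincareDuality}); everything else is Stokes' theorem and the transformation rule for $\expev_\phi$. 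I would also take care to note that the pairing ${}_\de$ in the statement tacitly absorbs the (irrelevant) evaluation of $\lieg^\ast$ on $\lieg$, exactly as flagged after equation~\eqref{eqPairing2} in earlier proofs, so that no separate argument about the $\lieg$-coefficients is needed.
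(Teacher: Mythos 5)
Your proposal is correct and follows essentially the same route as the paper: restrict first to exponential gauge transformations to force $\de\phi_V=0$ (the paper uses bilinearity over the vector space $\c(M,\lieg)$ where you scale by $t$, which is the same argument), then invoke Corollary~\ref{corGaugeShiftsExplicit} to turn gauge invariance into the condition ${}_\de([\phi_V],\hdd^1(M,\lieg)_\bbZ)\subseteq2\pi\bbZ$, and in the finite-type case use Theorem~\ref{thmPoincareDuality} together with Remark~\ref{remZCohoGenRCoho} to see that $I$ is well-defined, injective, and has image exactly the classes pairing into $2\pi\bbZ$ with the lattice. The only cosmetic difference is that you package the last step as an isomorphism $\hcde^1(M,\lieg^\ast)\simeq\hom_\bbZ(\hdd^1(M,\lieg)_\bbZ,\bbR)$, while the paper runs the equivalent dual-basis computation on generators; the content is identical.
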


\begin{proof}
We first show that the condition is sufficient. This follows from\ Corollary\ \ref{corGaugeShiftsExplicit}. 
In fact, given $\expev_\phi\in\expkin_P$ such that $\de\phi_V=0$ 
and such that ${}_\de([\phi_V],\hdd^1(M,\lieg)_\bbZ)\subseteq2\pi\bbZ$, 
for each $\lambda\in\sect(\conn(P))$ and $\eta\in\gau_P$, one has 
\begin{equation*}
\expev_\phi(\lambda+\eta)=\expev_\phi(\lambda)\,\exp\big(i(\phi_V,\eta)\big)
=\expev_\phi(\lambda)\,\exp\big(i\,{}_\de([\phi_V],[\eta])\big)=\expev_\phi(\lambda)\,.
\end{equation*}
This shows that $\expev_\phi$ lies in $\expinv_P$. 

For the converse implication, take $\expev_\phi\in\expinv_P$. 
Considering only gauge transformations of the form $\exp(\c(M,\lieg))$ we deduce that $\de\phi_V=0$. 
In fact, taking into account the argument before\ eq.\ \eqref{eqConnGaugeHomQuot}, 
for each $\lambda\in\sect(\conn(P))$ and each $\chi\in\c(M,\lieg)$, one reads 
\begin{equation*}
\expev_\phi(\lambda)\,\exp\big(-i(\de\phi_V,\chi)\big)=\expev_\phi(\lambda-\dd\chi)=\expev_\phi(\lambda)\,,
\end{equation*}
whence $(\de\phi_V,\chi)\in2\pi\bbZ$ for each $\chi\in\c(M,\lieg)$. Since $\c(M,\lieg)$ is a vector space 
and $(\cdot,\cdot):\fc^1(M,\lieg^\ast)\times\f^1(M,\lieg)\to\bbR$ is bilinear, 
$\de\phi_V=0$ and one can consider $[\phi_V]\in\hcde^1(M,\lieg^\ast)$. 
The fact that $\expev_\phi$ is gauge invariant translates into the condition 
${}_\de([\phi_V],\hdd^1(M,\lieg)_\bbZ)\subseteq2\pi\bbZ$ due to\ Corollary\ \ref{corGaugeShiftsExplicit}. 

Clearly, the condition equivalent to gauge invariance is satisfied if $\expev_\phi\in\expkin_P$ is such that 
$\phi_V\in\de\fc^2(M,\lieg^\ast)$ since in this case ${}_\de([\phi_V],\hdd^1(M,\lieg)_\bbZ)=\{0\}$. 

Let us also assume that $M$ is of finite type. 
Therefore $\hdd^1(M,\lieg)_\bbZ$ is a finitely generated free Abelian group 
which admits a $\bbZ$-module basis $\{[\omega_i]\}$ generating $\hdd^1(M,\lieg)$ over the field $\bbR$, 
see the argument after\ Proposition\ \ref{prpGaugeShiftsSheafCoho}. 
Applying Theorem\ \ref{thmPoincareDuality}, one gets an isomorphism 
$\hcde^1(M,\lieg^\ast)\to(\hdd^1(M,\lieg))^\ast$, $[\alpha]\mapsto{}_\de([\alpha],\cdot)$. 
Given $z\in\hom(\hdd^1(M,\lieg)_\bbZ,2\pi\bbZ)$, we define $Iz\in\hcde^1(M,\lieg^\ast)$ 
imposing ${}_\de(Iz,[\omega])=z([\omega])$ for each $[\omega]\in\hdd^1(M,\lieg)_\bbZ$. 
Since ${}_\de(\cdot,\cdot)$ is bilinear and non-degenerate 
and $\hdd^1(M,\lieg)_\bbZ$ generates $\hdd^1(M,\lieg)$ over $\bbR$, $Iz$ is uniquely specified. 
Therefore one gets the sought injective homomorphism 
\begin{equation*}
I:\hom(\hdd^1(M,\lieg)_\bbZ,2\pi\bbZ)\to\hcde^1(M,\lieg^\ast)\,.
\end{equation*}
Taking $\expev_\phi\in\expkin_P$ such that $\de\phi_V=0$, one realizes that $\expev_\phi\in\expinv_P$ 
if and only if $[\phi_V]\in\hcde^1(M,\lieg^\ast)$ lies in the image of the homomorphism $I$. 
In fact, given $z\in\hom(\hdd^1(M,\lieg)_\bbZ,2\pi\bbZ)$, 
by definition one has ${}_\de(Iz,\hdd^1(M,\lieg)_\bbZ)=z(\hdd^1(M,\lieg))\subseteq2\pi\bbZ$. 
For the converse, assuming that ${}_\de([\phi_V],\hdd^1(M,\lieg)_\bbZ)\subseteq2\pi\bbZ$, 
one can define $z\in\hom(\hdd^1(M,\lieg)_\bbZ,2\pi\bbZ)$ on generators 
setting $z([\omega_i])={}_\de([\phi_V],[\omega_i])$, whence, $[\phi_V]=Iz$ per construction. 
\end{proof}

The complex exponential weakens the constraint imposed by gauge invariance 
if compared to the case of $\inv_P$, see\ Proposition\ \ref{prpInvAffIsCoexact}. 
In fact, via exponentiation, all elements in $\inv_P$ define elements of $\expinv_P$, 
but, depending on the spacetime topology, there can be elements of $\expinv_P$ 
which are not of the form $\exp(i\ev_\phi))$ for any $\ev_\phi\in\inv_P$. 
For this reason, contrary to $\inv_P$, $\expinv_P$ succeeds in separating gauge equivalence classes 
of connections at least in case the base manifold is of finite type, 
{\em cfr.}\ Theorem\ \ref{thmInvAffDetectsCurv} and\ Remark\ \ref{remInvAffDetectsCurv}. 
We prove this fact in the next theorem. 
In particular, $\expinv_P$ detects flat connections 
and hence it is suitable to provide observables which can detect the Aharonov-Bohm effect. 
As we will see in the remark below, Wilson loops on $P$ can be interpreted 
as distributional counterparts of the more regular gauge invariant affine characters in $\expinv_P$. 

\begin{theorem}\label{thmInvAffCharSep}
Let $M$ be an $m$-dimensional globally hyperbolic spacetime of finite type 
and consider a principal $G$-bundle $P$ over $M$. 
Two connections $\lambda,\lambda^\prime\in\sect(\conn(P))$ are gauge equivalent, 
namely there exists $\eta\in\gau_P$ such that $\lambda+\eta=\lambda^\prime$, 
if and only if $\expev_\phi(\lambda)=\expev_\phi(\lambda^\prime)$ for all $\expev_\phi\in\expinv_P$. 
\end{theorem}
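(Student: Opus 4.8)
The plan is to prove the two implications separately. The "only if" direction is the easy one: if $\lambda^\prime=\lambda+\eta$ with $\eta\in\gau_P$, then for any $\expev_\phi\in\expinv_P$ gauge invariance gives immediately $\expev_\phi(\lambda^\prime)=\expev_\phi(\lambda+\eta)=\expev_\phi(\lambda)$, using the very definition of $\expinv_P$. Nothing else is needed here.

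For the "if" direction, suppose $\expev_\phi(\lambda)=\expev_\phi(\lambda^\prime)$ for all $\expev_\phi\in\expinv_P$. Since $\sect(\conn(P))$ is an affine space modeled on $\f^1(M,\lieg)$, write $\lambda^\prime=\lambda+\omega$ for a unique $\omega\in\f^1(M,\lieg)$; the goal is to show $\omega\in\gau_P$, i.e., by Corollary~\ref{corGaugeShiftsExplicit}, that $\omega$ is closed and $[\omega]\in\hdd^1(M,\lieg)_\bbZ$. The hypothesis, rewritten using $\expev_\phi(\lambda+\omega)=\expev_\phi(\lambda)\exp(i(\phi_V,\omega))$, says that $(\phi_V,\omega)\in2\pi\bbZ$ for every $\expev_\phi\in\expinv_P$, where $\phi_V$ ranges over the linear parts of gauge invariant affine characters. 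First I would show $\omega$ is closed: by Proposition~\ref{prpInvExpDualIntCoho}, every $\ev_\phi$ with $\phi_V\in\de\fc^2(M,\lieg^\ast)$ exponentiates to an element of $\expinv_P$, and moreover for such $\phi$ the whole real line $\mathbb{R}\,\phi_V$ still consists of coexact forms, so $t(\de\beta,\omega)\in 2\pi\bbZ$ for all $t\in\bbR$ and $\beta\in\fc^2(M,\lieg^\ast)$, forcing $(\de\beta,\omega)=0$; by Stokes' theorem (as in Theorem~\ref{thmInvAffDetectsCurv}) this gives $\dd\omega=0$. Hence $[\omega]\in\hdd^1(M,\lieg)$ is well defined, and by Poincar\'e duality (Theorem~\ref{thmPoincareDuality}, using that $M$ is of finite type) the pairing $(\phi_V,\omega)$ depends only on the classes $[\phi_V]\in\hcde^1(M,\lieg^\ast)$ and $[\omega]$.

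The main step, then, is to upgrade $(\phi_V,\omega)\in2\pi\bbZ$ for all admissible $\phi_V$ into $[\omega]\in\hdd^1(M,\lieg)_\bbZ$. Here I would use the explicit description of $\expinv_P$ from the last part of Proposition~\ref{prpInvExpDualIntCoho}: the admissible cohomology classes $[\phi_V]$ are exactly $\hcde^1(M,\lieg^\ast)_\bbZ$, the image of $I:\hom(\hdd^1(M,\lieg)_\bbZ,2\pi\bbZ)\to\hcde^1(M,\lieg^\ast)$, characterized by ${}_\de(Iz,[\mu])=z([\mu])$ for $[\mu]\in\hdd^1(M,\lieg)_\bbZ$. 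So the hypothesis becomes: ${}_\de([\phi_V],[\omega])\in 2\pi\bbZ$ for every $[\phi_V]\in\hcde^1(M,\lieg^\ast)_\bbZ$, that is, $z'\in\hom(\hdd^1(M,\lieg)_\bbZ\oplus\bbZ[\omega],2\pi\bbZ)$ pairs $[\omega]$ into $2\pi\bbZ$ against every homomorphism $\hdd^1(M,\lieg)_\bbZ\to2\pi\bbZ$. Since $M$ is of finite type, $\hdd^1(M,\lieg)_\bbZ$ is a finitely generated free abelian group with a $\bbZ$-basis $\{[\omega_i]\}$ that is also an $\bbR$-basis of $\hdd^1(M,\lieg)$; expanding $[\omega]=\sum_i c_i[\omega_i]$ with $c_i\in\bbR$ and testing against the dual homomorphisms $z_i$ sending $[\omega_j]\mapsto 2\pi\delta_{ij}$, one finds ${}_\de(Iz_i,[\omega])=2\pi c_i\in 2\pi\bbZ$, hence each $c_i\in\bbZ$ and $[\omega]\in\hdd^1(M,\lieg)_\bbZ$ as desired. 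I expect the only delicate point to be a careful bookkeeping of the dualities: one must check that $I$ is genuinely well defined and injective (which is granted by Proposition~\ref{prpInvExpDualIntCoho}) and that the Poincar\'e pairing ${}_\de(\cdot,\cdot)$ restricted to $\hdd^1(M,\lieg)_\bbZ$ really is the evaluation pairing with $\hom(\hdd^1(M,\lieg)_\bbZ,2\pi\bbZ)$, so that the $\bbZ$-basis argument applies without a sign or normalization slip.
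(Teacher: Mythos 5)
Your proposal is correct and follows essentially the same route as the paper: reduce to the condition $(\phi_V,\omega)\in2\pi\bbZ$, obtain $\dd\omega=0$ from coexact linear parts, then use the characterization of $\expinv_P$ via the map $I$ together with a $\bbZ$-basis of $\hdd^1(M,\lieg)_\bbZ$ (which is also an $\bbR$-basis of $\hdd^1(M,\lieg)$) and its dual homomorphisms to conclude $[\omega]\in\hdd^1(M,\lieg)_\bbZ$. The only cosmetic differences are that you expand $[\omega]$ directly in the $\bbR$-basis and read off integer coefficients, while the paper builds an auxiliary class $[\tilde\omega]$ and invokes non-degeneracy of the Poincar\'e pairing to identify it with $[\omega]$, and that the descent of the pairing to cohomology is really a consequence of Stokes' theorem rather than of Poincar\'e duality.
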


\begin{proof}
Exploiting the affine structure of $\sect(\conn(P))$, 
one finds $\omega\in\f^1(M,\lieg)$ such that $\lambda+\omega=\lambda^\prime$. 
Evaluating $\expev_\phi\in\expinv_P$ on $\lambda^\prime$, one gets 
\begin{equation*}
\expev_\phi(\lambda^\prime)=\expev_\phi(\lambda)\,\exp\big(i(\phi_V,\omega)\big)\,.
\end{equation*}
Therefore, we have to show that $\omega$ lies in $\gau_P$ 
if and only if $(\phi_V,\omega)$ is an integer multiple of $2\pi$ for each $\expev_\phi\in\expinv_P$. 

Let us start from the implication \quotes{$\Longleftarrow$}. According to\ Proposition\ \ref{prpInvExpDualIntCoho}, 
$\expev_\phi\in\expkin_P$ with $\phi_V\in\de\fc^2(M,\lieg^\ast)$ lies in $\expinv_P$. 
Therefore the condition $(\phi_V,\omega)\in2\pi\bbZ$ for all $\expev_\phi\in\expinv_P$ 
entails $(\de\alpha,\omega)\in2\pi\bbZ$ for each $\alpha\in\fc^2(M,\lieg^\ast)$ in particular, 
whence $\dd\omega=0$. Therefore one can consider $[\omega]\in\hdd^1(M,\lieg)$. 
If one could prove that $[\omega]$ lies in $\hdd^1(M,\lieg)_\bbZ$, 
then Corollary\ \ref{corGaugeShiftsExplicit} entails $\omega\in\gau_P$. 
Recalling the hypothesis and exploiting Proposition\ \ref{prpInvExpDualIntCoho}, 
we deduce that ${}_\de([\beta],[\omega])\in2\pi\bbZ$ for each $[\beta]\in\hcde^1(M,\lieg^\ast)_\bbZ$. 
Therefore ${}_\de(Iz,[\omega])\in2\pi\bbZ$ for each $z\in\hom(\hdd^1(M,\lieg)_\bbZ,2\pi\bbZ)$, 
where $I:\hom(\hdd^1(M,\lieg)_\bbZ,2\pi\bbZ)\to\hcde^1(M,\lieg^\ast)$ is defined 
in\ Proposition\ \ref{prpInvExpDualIntCoho}. As in the proof of the proposition just mentioned, 
let us consider a $\bbZ$-module basis $\{[\omega_i]\}$ of $\hdd^1(M,\lieg)_\bbZ$ 
generating $\hdd^1(M,\lieg)$ over $\bbR$, see also\ Subsection\ \ref{subGaugeConnAb}. 
We introduce the dual $\bbZ$-module basis $\{z_i\}\subseteq\hom(\hdd^1(M,\lieg)_\bbZ,2\pi\bbZ)$ 
of $\{[\omega_i]\}$ imposing $z^i([\omega_j])=2\pi\delta^i_j$ for each $i,j$. 
Since $2\pi a^i={}_\de(Iz^i,[\omega])\in2\pi\bbZ$, 
we can define $[\tilde{\omega}]=a^i[\omega_i]\in\hdd^1(M,\lieg)_\bbZ$. 
Per construction we get the following: 
\begin{equation*}
{}_\de(Iz^i,[\tilde{\omega}])=z^i([\tilde{\omega}])=z^i(a^j[\omega_j])
=a^j\,2\pi\delta^i_j=2\pi a_i={}_\de(Iz^i,[\omega])\,.
\end{equation*}
Since $\{[\omega_i]\}$ spans $\hdd^1(M,\lieg)$ over $\bbR$, 
$\{Iz^i\}$ spans $\hcde^1(M,\lieg^\ast)$ over $\bbR$. 
Therefore, the equation displayed above entails $[\omega]=[\tilde{\omega}]\in\hdd^1(M,\lieg)_\bbZ$. 

The converse implication \quotes{$\implies$} is straightforward. 
In fact, $\eta\in\gau_P$ entails $[\eta]\in\hdd^1(M,\lieg)_\bbZ$. 
Therefore, for each $\expev_\phi\in\expinv_P$, by Proposition\ \ref{prpInvExpDualIntCoho}, 
${}_\de([\phi_V],\hdd^1(M,\lieg)_\bbZ)\subseteq2\pi\bbZ$, whence $(\phi_V,\eta)\in2\pi\bbZ$ 
and $\expev_\phi(\lambda+\eta)=\expev_\phi(\lambda)$ for each $\lambda\in\sect(\conn(P))$. 
\end{proof}

\begin{remark}[Wilson loops as gauge invariant affine characters]
Let $P$ be a principal $U(1)$-bundle over an $m$-dimensional globally hyperbolic spacetime $M$ 
and consider a closed curve $\gamma:\bbT\to M$. The pull-back bundle $\gamma^\ast P$ 
is a principal $U(1)$-bundle over $\bbT$ fitting into the following commutative diagram 
where the vertical arrows are the bundle projections: 
\begin{equation*}
\xymatrix{
\gamma^\ast P\ar[r]^{\bar{\gamma}}\ar[d] & P\ar[d]\\
\bbT\ar[r]_\gamma & M
}
\end{equation*}
Since $\coho^2(\bbT,\bbZ)$ is trivial, $\gamma^\ast P$ is a trivial principal bundle. 
Thus, one can consider a global section $\sigma\in\sect(\gamma^\ast P)$. 
Using $\sigma$, one can evaluate the Wilson loop of a connection along the closed path $\gamma$. 
Given a connection $\lambda\in\sect(\conn(P))$, consider the corresponding connection form 
$\omega_\lambda\in\connf(P)$, see\ Remark\ \ref{remConnFormToConn}\,. Via pull-back 
one gets a $\lieg$-valued $1$-form $\sigma^\ast(\bar{\gamma}^\ast\omega_\lambda)\in\f^1(\bbT,\lieg)$, 
which can be integrated over $\bbT$, thus providing a group element $\expev_\gamma(\lambda)\in G=U(1)$: 
\begin{align*}
\expev_\gamma:\sect(\conn(P))\to\bbC\,, && \lambda\mapsto\expev_\gamma(\lambda)
=\exp\left(\int_{\bbT}\sigma^\ast(\bar{\gamma}^\ast\omega_\lambda)\right)\,.
\end{align*}
The map $\expev_\gamma$ does not depend on the choice of the trivialization $\sigma$ of $\gamma^\ast P$. 
In fact, two trivializations differ by a gauge transformation $f\in\c(\bbT,U(1))$, 
which shifts the exponent in the formula displayed above by an integer multiple of $2\pi i\in\lieg$. 
For the same reason $\expev_\gamma$ is a gauge invariant functional on $\sect(\conn(P))$. 
$\expev_\gamma(\lambda)$ is the standard Wilson loop of the connection $\lambda$ 
on the principal $U(1)$-bundle $P$ along the closed path $\gamma$. 
At first glance, one realizes that the exponent, regarded as a function of $\lambda$, 
is a distributional section of $\conn(P)^\dagger$ supported in $\gamma(\bbT)$. 
In particular, $\expev_\gamma$ is a gauge invariant affine character, namely the following identity holds:  
\begin{align*}
\expev_\gamma(\lambda+\theta)=\expev_\gamma(\lambda)\,\exp\left(\int_{\bbT}\gamma^\ast\theta\right)\,, 
&& \forall\,\lambda\in\sect(\conn(P))\,,\;\forall\,\theta\in\f^1(M,\lieg)\,.
\end{align*}
Notice that we will refrain from taking into account this kind of distributional sections in the following since, 
according to\ Theorem\ \ref{thmInvAffCharSep}, smooth sections of $\conn(P)^\dagger$ with compact support 
are already sufficiently many to define an Abelian group $\expinv_P$ of more regular gauge invariant affine 
characters which succeed in separating connections up to gauge, at least when the base manifold if of finite type. 
\end{remark}

Similarly to the situation of\ Section\ \ref{secAffObsYM}, there are gauge invariant affine characters 
which take the constant value $1$ on-shell (we will refer to these functionals as the \quotes{vanishing} ones). 
In fact, just by exponentiation of any $\ev_\phi\in\van_P$, 
one gets $\expev_\phi=\exp(i\ev_\phi)$ such that $\expev_\phi(\lambda)=1$ for each $\lambda\in\sol_P$. 
For this reason, we introduce the subgroup of gauge invariant affine characters vanishing on-shell 
$\expvan_P\subseteq\expinv_P$ and we quotient them out to define observables via affine characters: 
\begin{align}\label{eqAffChObsDefYM}
\expvan_P=\left\{\expev_\phi\in\expinv_P\,:\;\expev_\phi(\sol_P)=\{1\}\right\}\,, && 
\expobs_P=\frac{\expinv_P}{\expvan_P}\,.
\end{align}
Extending our convention to the present setting, we will denote elements of $\expobs_P$ by $\expev_{[\phi]}$. 
Per construction, $\expobs_P$ is naturally paired to $[\sol_P]$: 
\begin{align}\label{eqAffChObs}
\expobs_P\times[\sol_P]\to\bbC\,, && (\expobs_{[\phi]},[\lambda])\mapsto\expobs_\phi(\lambda)\,,
\end{align}
By definition of $\expvan_P$, the pairing defined above is non-degenerate in the first argument, 
namely gauge equivalence classes of on-shell connections separate points in $\expobs_P$. 
Furthermore, the same property holds for the second argument as well 
whenever the base space $M$ of the principal $U(1)$-bundle $P$ is of finite type, 
as shown by\ Theorem\ \ref{thmInvAffCharSep}. 

The definition of gauge invariant affine characters vanishing on-shell is quite implicit 
and one might be interested in having a more explicit characterization of this Abelian group. 
This result can be achieved arguing similarly to\ Proposition\ \ref{prpVanObsConn} under the assumption that 
the base manifold of the principal bundle taken into account is of finite type. 
In fact, consider $\expev_\phi\in\expvan_P$ and choose a reference connection $\tilde{\lambda}\in\sol_P$. 
Then, for each $\omega\in\f^1(M,\lieg)$ such that $\de\dd\omega=0$, one has 
\begin{equation*}
\exp\big(i(\phi_V,\omega)\big)=\expev_\phi(\tilde{\lambda})\,\exp\big(i(\phi_V,\omega)\big)
=\expev_\phi(\tilde{\lambda}+\omega)=1\,.
\end{equation*}
This entails that $(\phi_V,\omega)=0$ for each $\omega\in\f^1(M,\lieg)$ such that $\de\dd\omega=0$. 
Furthermore, $\de\phi_V=0$ according to\ Proposition\ \ref{prpInvExpDualIntCoho}. 
Therefore, we can apply Proposition\ \ref{prpVanForms} to $\phi_V$ 
and conclude that there exists $\rho\in\fc^1(M,\lieg^\ast)$ such that $\de\dd\rho=\phi_V$. 
Introducing $\expev_\psi=\exp(-i\MW^\ast(\rho))$, one can check that $\expev_\psi=\expev_\phi$. 
In fact, taking $\tilde{\lambda}\in\sol_P$ as a reference, one can span $\sect(\conn(P))$ 
simply by translating the reference $\tilde{\lambda}$ by elements of $\f^1(M,\lieg)$. 
Since both $\expev_\phi$ and $\expev_\psi$ give $1$ upon evaluation on $\tilde{\lambda}$, 
they coincide if both $\phi$ and $\psi$ have the same linear parts. 
This is indeed the case since $\psi_V=\de\dd\rho=\phi_V$. 

We collect the results presented above in the following theorem. 

\begin{theorem}\label{thmExpObsSeparateConn}
Let $M$ be an $m$-dimensional globally hyperbolic spacetime and consider a principal $G$-bundle $P$ over $M$. 
Then the space of gauge equivalence classes of connections $[\sol_P]$ separates points in $\expobs_P$ 
via the pairing $\expobs_P\times[\sol_P]\to\bbC$ defined in\ \eqref{eqAffChObs}. 
Furthermore, assuming $M$ is of finite type, the space $\expev_P$ of observables defined via affine characters 
separates points in $[\sol_P]$ via the same pairing and $\expvan_P=\exp(i\MW^\ast(\fc^1(M,\lieg^\ast)))$, 
where $\MW^\ast:\fc^1(M,\lieg^\ast)\to\kin_P$ denotes the dual of the affine differential operator 
$\MW:\sect(\conn(P))\to\f^1(M,\lieg)$, see\ Example\ \ref{exaVanConn}. 
\end{theorem}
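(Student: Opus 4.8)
The plan is to prove Theorem~\ref{thmExpObsSeparateConn} in three parts, exactly mirroring the structure of the analysis carried out for affine functionals in Section~\ref{secAffObsYM}, but carrying along the complex exponential. First I would establish that $[\sol_P]$ separates points in $\expobs_P$: this is essentially built into the definition. Given $\expev_{[\phi]}\in\expobs_P$ with $\expev_{[\phi]}([\lambda])=1$ for all $[\lambda]\in[\sol_P]$, choose any representative $\expev_\phi\in\expinv_P$. The condition says precisely that $\expev_\phi(\lambda)=1$ for every on-shell connection $\lambda\in\sol_P$, so by definition $\expev_\phi\in\expvan_P$, whence $\expev_{[\phi]}$ is the trivial element of the quotient $\expobs_P=\expinv_P/\expvan_P$. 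This uses only the definitions in~\eqref{eqAffChObsDefYM} and~\eqref{eqAffChObs}.

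For the second assertion I would invoke Theorem~\ref{thmInvAffCharSep}. Assume $M$ is of finite type and take $[\lambda],[\lambda']\in[\sol_P]$ such that $\expev_\phi(\lambda)=\expev_\phi(\lambda')$ for all $\expev_\phi\in\expinv_P$; since both connections are on-shell this is the same as saying $\expev_{[\phi]}([\lambda])=\expev_{[\phi]}([\lambda'])$ for all $\expev_{[\phi]}\in\expobs_P$. By Theorem~\ref{thmInvAffCharSep}, applied with the hypothesis that $M$ is of finite type, this forces $\lambda$ and $\lambda'$ to be gauge equivalent, i.e.\ $[\lambda]=[\lambda']$ in $[\sol_P]$. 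Hence $\expobs_P$ separates points in $[\sol_P]$.

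For the third assertion I would reproduce the argument sketched in the paragraph preceding the theorem statement, filling in the one genuinely substantive step. The inclusion $\exp(i\MW^\ast(\fc^1(M,\lieg^\ast)))\subseteq\expvan_P$ is immediate: for $\alpha\in\fc^1(M,\lieg^\ast)$ the functional $\MW^\ast(\alpha)$ vanishes on $\sol_P$ by~\eqref{eqMaxwellDual}, so its exponential equals $1$ on-shell, and it lies in $\expinv_P$ because $\MW^\ast=F^\ast\circ\dd$ so its linear part $-\de\dd\alpha$ is coexact (Proposition~\ref{prpInvExpDualIntCoho}, last sentence of the first paragraph). For the reverse inclusion, take $\expev_\phi\in\expvan_P$ and fix a reference $\tilde\lambda\in\sol_P$, which exists by Remark~\ref{remYMSolExist}. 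Evaluating $\expev_\phi$ on $\tilde\lambda+\omega$ for $\omega$ in the kernel of $\de\dd:\f^1(M,\lieg)\to\f^1(M,\lieg)$ — which is the model space of the affine space $\sol_P$, see the comment after~\eqref{eqSolYM} — and using that $\expev_\phi(\tilde\lambda)=1$, one gets $\exp(i(\phi_V,\omega))=1$ for all such $\omega$, hence $(\phi_V,\omega)=0$ on the whole vector space $\ker(\de\dd)$. Since $\de\phi_V=0$ by Proposition~\ref{prpInvExpDualIntCoho} and $M$ is of finite type, Proposition~\ref{prpVanForms} (applied to the $1$-form $\phi_V$) yields $\rho\in\fc^1(M,\lieg^\ast)$ with $\de\dd\rho=\phi_V$, i.e.\ $\MW^\ast(-\rho)$ has linear part $\phi_V$. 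Finally, since $\sect(\conn(P))$ is affine over $\f^1(M,\lieg)$ with $\tilde\lambda$ a base point on which both $\expev_\phi$ and $\exp(i\MW^\ast(-\rho))$ take the value $1$, and since these two characters have the same linear part, they agree everywhere; hence $\expev_\phi=\exp(i\MW^\ast(-\rho))\in\exp(i\MW^\ast(\fc^1(M,\lieg^\ast)))$. The main obstacle is the invocation of Proposition~\ref{prpVanForms}, which is where the finite-type hypothesis and Poincaré duality for causally restricted cohomology enter; everything else is bookkeeping about the affine structure and the definitions.
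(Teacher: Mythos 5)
Your proposal is correct and follows essentially the same route as the paper, whose own justification is precisely the discussion preceding the theorem: non-degeneracy in the first argument directly from the definition of $\expvan_P$, separation of $[\sol_P]$ via Theorem~\ref{thmInvAffCharSep}, and the characterization $\expvan_P=\exp(i\MW^\ast(\fc^1(M,\lieg^\ast)))$ obtained by fixing a reference solution, deducing $(\phi_V,\omega)=0$ on $\ker(\de\dd)$, and applying Proposition~\ref{prpVanForms} together with the affine-structure comparison of linear parts. No gaps; your handling of the step from $\exp(i(\phi_V,\omega))=1$ to $(\phi_V,\omega)=0$ via linearity over the vector space $\ker(\de\dd)$ matches the paper's (equally terse) argument.
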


The last theorem motivates our choice to consider gauge invariant affine characters to test connections up to gauge. 
A more direct motivation for the need of exponential functionals 
has been recently worked out in detail in\ \cite[Appendix\ A]{BSS14}. 
On account of the arguments presented in this reference, the space of gauge equivalence classes of connections 
is a Fr\'echet manifold isomorphic to the product of a torus and a Fr\'echet space. 
In particular, notice that the torus factor is the one which is responsible for the Aharonov-Bohm effect. 
These observations entail that the gauge invariant affine functionals 
introduced in\ Section\ \ref{secAffObsYM} cannot distinguish points on the torus factor 
(namely the Aharonov-Bohm effect cannot be detected)  
and that the gauge invariant affine characters considered in the present section are needed 
in order to achieve this result. 

\begin{remark}[Observables testing magnetic and electric charges]
In full analogy with\ Remark\ \ref{remChargeObs}, 
for a principal $G$-bundle $P$ over an $m$-dimensional globally hyperbolic spacetime $M$, 
in the present approach we still have observables testing both the $\dd$-cohomology class 
and the $\de$-cohomology class of the curvature $F(\lambda)\in\fdd^2(M,\lieg)\cap\fde^2(M,\lieg)$ 
associated to a gauge equivalence class of on-shell connections $[\lambda]\in[\sol_P]$. 
In fact, one can repeat the arguments of the above mentioned remark 
simply taking into account $\ev_\phi$ and $\ev_\psi$ in $\inv_P$ as defined there 
and introducing $\expev_{[\phi]}$ and $\expev_{[\psi]}$ in $\expobs_P$ 
by specifying their representatives $\exp(i\ev_\phi)$ and respectively $\exp(i\ev_\psi)$ in $\expinv_P$. 

Further details on this topic can be found in\ \cite[Section\ 6]{BDS14a} and in\ \cite[Remark\ 5.5]{BDHS14}. 
\end{remark}

Our Abelian group $\expobs_P$ of observables defined out of affine characters can be endowed 
with a presymplectic bilinear form induced by the Lagrangian\ \eqref{eqLagrangian}: 
\begin{align}\label{eqAffCharPSymFormYM}
\upsilon_P:\expobs_P\times\expobs_P\to\bbR\,, && (\expev_{[\phi]},\expev_{[\psi]})\mapsto(\phi_V,G\psi_V)_h\,.
\end{align} 
Note that this presymplectic form is defined by the same formula we had in\ Proposition\ \ref{prpPSymObsConn}. 
In fact, this formula provides a well-defined presymplectic form both on $\obs_P$ and $\expobs_P$ 
since gauge invariant functionals both in $\inv_P$ and in $\expinv_P$ have coclosed linear parts, 
{\em cfr.}\ Proposition\ \ref{prpInvAffIsCoexact} for $\inv_P$ 
and\ Proposition\ \ref{prpInvExpDualIntCoho} for $\expinv_P$. 
In particular, one can adapt the proof of\ Proposition\ \ref{prpPSymObsConn} to this case as well. 
Doing so, $(\expobs_P,\upsilon_P)$ becomes a presymplectic Abelian group. 

\sk

As for the case of the presymplectic space $(\obs_P,\tau_P)$, 
we want to characterize the null space of the presymplectic Abelian group $(\expobs_P,\upsilon_P)$: 
\begin{equation}\label{eqRadicalExpYM}
\exprad_P=\left\{\expev_{[\phi]}\in\expobs_P\,:\;\upsilon_P(\expev_{[\phi]},\expobs_P)=\{0\}\right\}\,.
\end{equation}
Actually, in view of the analysis of the quantum case, we are also interested in the following subgroup: 
\begin{equation}\label{eqCenterExpYM}
\expcnt_P=\left\{\expev_{[\phi]}\in\expobs_P\,:\;
\upsilon_P(\expev_{[\phi]},\expobs_P)\subseteq2\pi\bbZ\right\}\,.
\end{equation}
As we will see in\ Remark\ \ref{remCntAlgYM}, $\expcnt_P$ labels commuting generators of the quantum algebra. 
Notice that by definition, $\exprad_P$ is a subgroup of $\expcnt_P$. 

\begin{proposition}\label{prpRadExpYM}
Let $M$ be an $m$-dimensional globally hyperbolic spacetime and consider a principal $G$-bundle $P$ over $M$. 
Consider the presymplectic form $\upsilon_P$ on $\expobs_P$ introduced in\ \eqref{eqAffCharPSymFormYM} 
and take $\expev_\phi\in\expinv_P$. Then the following implications hold true: 
\begin{enumerate}
\item If $\phi_V\in\de(\fc^2(M,\lieg^\ast)\cap\dd\ftc^1(M,\lieg^\ast))$, then $\expev_{[\phi]}\in\exprad_P$; 
\item If $\expev_{[\phi]}\in\exprad_P$, then $\phi_V\in\de\fcdd^2(M,\lieg^\ast)$. 
\end{enumerate}
Furthermore, under the assumption that $M$ is of finite type, 
the null space $\exprad_P$ has the following explicit characterization: 
\begin{equation*}
\exprad_P
=\left\{\ev_\phi\in\inv_P\,:\;\phi_V\in\de(\fc^2(M,\lieg^\ast)\cap\dd\ftc^1(M,\lieg^\ast))\right\}/\van_P\,.
\end{equation*}
\end{proposition}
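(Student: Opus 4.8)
The plan is to mimic closely the proof of Proposition~\ref{prpRadConn}, since the presymplectic form $\upsilon_P$ on $\expobs_P$ is given by the very same formula $(\phi_V,G\psi_V)_h$ that defines $\tau_P$ on $\obs_P$, and since gauge invariant affine characters in $\expinv_P$ also have coclosed linear parts by Proposition~\ref{prpInvExpDualIntCoho}. The only genuine difference to keep in mind is that the labelling spaces and the notion of ``vanishing'' functional are different ($\exptriv_P$ and $\expvan_P$ in place of $\triv_P$ and $\van_P$), but the presymplectic form depends only on the linear part $\phi_V\in\fc^1(M,\lieg^\ast)$, which is unaffected by these changes. First I would establish implication~(1): given $\expev_\phi\in\expinv_P$ with $\phi_V=\de\dd\xi$ for some $\xi\in\ftc^1(M,\lieg^\ast)$ with $\dd\xi$ compactly supported, I would use that $\dd$ intertwines $G$ in different degrees to get $G\phi_V=G\de\dd\xi=G(\Box\xi-\dd\de\xi)=-\dd G\de\xi$, whence for any $\expev_\psi\in\expinv_P$, using Stokes and the fact that $\psi_V$ is coclosed (Proposition~\ref{prpInvExpDualIntCoho}), $\upsilon_P(\expev_{[\psi]},\expev_{[\phi]})=(\psi_V,G\phi_V)_h=-(\de\psi_V,G\de\xi)_h=0$; hence $\expev_{[\phi]}\in\exprad_P$.

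Next I would prove implication~(2). Suppose $\expev_{[\phi]}\in\exprad_P$. By Proposition~\ref{prpInvExpDualIntCoho}, every $\expev_\psi\in\expkin_P$ with $\psi_V\in\de\fc^2(M,\lieg^\ast)$ lies in $\expinv_P$, so $\upsilon_P(\expev_{[\psi]},\expev_{[\phi]})=0$ forces $(\de\beta,G\phi_V)_h=0$ for all $\beta\in\fc^2(M,\lieg^\ast)$, whence $G\dd\phi_V=0$ after integrating by parts. By the exact sequence of Theorem~\ref{thmCausalProp} there is $\eta\in\fc^2(M,\lieg^\ast)$ with $\Box\eta=\dd\phi_V$; applying $\dd$ and using $\dd\dd=0$ together with $\eta$ being compactly supported gives $\dd\eta=0$. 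Finally, since $\de\phi_V=0$ (Proposition~\ref{prpInvExpDualIntCoho} again), $\Box\de\eta=\de\dd\phi_V=\Box\phi_V$, and by injectivity of $\Box$ on $\fc^1(M,\lieg^\ast)$ (Theorem~\ref{thmCausalProp}) we conclude $\phi_V=\de\eta\in\de\fcdd^2(M,\lieg^\ast)$.

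For the final statement I would assume $M$ is of finite type and combine the two implications. Implication~(1) gives $\supseteq$ once we show that any $\phi_V\in\de(\fc^2(M,\lieg^\ast)\cap\dd\ftc^1(M,\lieg^\ast))$ arises as the linear part of some $\expev_\phi\in\expinv_P$ and that such $\expev_\phi$ actually gives an element of $\exprad_P$ with the coset description claimed; here, exactly as in the affine-functional case, one notes that for $M$ of finite type $\expvan_P=\exp(i\MW^\ast(\fc^1(M,\lieg^\ast)))$ by Theorem~\ref{thmExpObsSeparateConn}, and the quotient by $\expvan_P$ does not affect linear parts. For $\subseteq$, given $\expev_{[\phi]}\in\exprad_P$ with representative $\expev_\phi\in\expinv_P$, implication~(2) yields $\phi_V=\de\eta$ with $\eta\in\fcdd^2(M,\lieg^\ast)$; I then need to upgrade this to $\phi_V\in\de(\fc^2(M,\lieg^\ast)\cap\dd\ftc^1(M,\lieg^\ast))$. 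This is precisely the refinement carried out in the $\subseteq$ part of the proof of Proposition~\ref{prpRadForms} (applied to $\phi_V$ viewed as a $1$-form, $k=1$): using finite type one invokes Poincar\'e duality for causally restricted cohomology (Theorem~\ref{thmSCTCPoincareDuality}) to show $[G\phi_V]$ is trivial in $\hscdd^1(M,\lieg^\ast)$, represents $G\phi_V$ as $\dd$ of something, applies Lemma~\ref{lemLorenzGaugeFixing}, Lemma~\ref{lemLorenzSol} and Theorem~\ref{thmSolModGaugeForms} (together with their spacelike-compact counterparts, Remark~\ref{remSCSolModGaugeForms}), and finally produces $\xi\in\ftc^1(M,\lieg^\ast)$ with $\dd\xi$ compactly supported and $\de\dd\xi=\phi_V$ by running the argument through $\Box$ and its Green operators.

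\textbf{Main obstacle.} The delicate step is the $\subseteq$ inclusion in the finite-type characterisation: passing from ``$\phi_V$ is $\de$ of a compactly supported closed $2$-form'' to ``$\phi_V$ is $\de$ of something in $\fc^2(M,\lieg^\ast)\cap\dd\ftc^1(M,\lieg^\ast)$''. This requires the full strength of causally restricted Poincar\'e duality and careful bookkeeping of support systems, exactly as in Proposition~\ref{prpRadForms}; everything else is a routine transcription of the already-proven affine-functional arguments, since $\upsilon_P$ and $\tau_P$ have identical formulas and depend only on linear parts.
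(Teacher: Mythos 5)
Your implications (1) and (2) are correct and coincide with the paper's treatment (the paper simply notes that the proof of Proposition~\ref{prpRadConn} carries over verbatim, since elements of $\expinv_P$ have coclosed linear parts and every coexact linear part yields an element of $\expinv_P$, by Proposition~\ref{prpInvExpDualIntCoho}).

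The gap is in the $\subseteq$ inclusion of the finite-type characterisation. You reduce it to ``precisely the refinement carried out in the $\subseteq$ part of the proof of Proposition~\ref{prpRadForms}'', but the hypothesis of that argument is degeneracy of the pairing against \emph{all} coclosed compactly supported $1$-forms: one needs $(\omega^\prime,G\phi_V)_h=0$ for every $\omega^\prime\in\fcde^1(M,\lieg^\ast)$ in order to conclude that the class of $h^{-1}(G\phi_V)$ vanishes and hence $G\phi_V=\dd\chi$. (Incidentally, Proposition~\ref{prpRadForms} uses the unrestricted group $\hdd^1(M,\lieg)$ and Remark~\ref{remPoincareDualityImproved}, not Theorem~\ref{thmSCTCPoincareDuality} and $\hscdd^1$; you are conflating it with Proposition~\ref{prpVanForms}, and triviality in $\hscdd^1$ is neither needed nor obtainable from compactly supported test forms.) In the affine-character setting, however, membership in $\exprad_P$ only gives vanishing against linear parts of elements of $\expinv_P$, and by Proposition~\ref{prpInvExpDualIntCoho} these are the coclosed compactly supported forms whose classes lie in the lattice $\hcde^1(M,\lieg^\ast)_\bbZ$ --- a strictly smaller test space than in the Maxwell case. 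The missing step, which is exactly the new content of the paper's proof, is that for $M$ of finite type $\hcde^1(M,\lieg^\ast)_\bbZ$ generates $\hcde^1(M,\lieg^\ast)$ over $\bbR$ (Remark~\ref{remZCohoGenRCoho}, cf.\ the proof of Theorem~\ref{thmInvAffCharSep}), so that vanishing of ${}_\de([\psi_V],[h^{-1}(G\phi_V)])$ on the lattice, together with vanishing on coexact test forms, forces $[h^{-1}(G\phi_V)]=0$ in $\hdd^1(M,\lieg)$ by Theorem~\ref{thmPoincareDuality}. This bridge genuinely matters: if one only knew that the pairing takes values in $2\pi\bbZ$ rather than $0$ (the situation of $\expcnt_P$, Proposition~\ref{prpCntExpYM}), the conclusion would be the weaker $[h^{-1}(G\phi_V)]\in\hdd^1(M,\lieg)_\bbZ$. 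Once this step is supplied, your plan goes through: $G\phi_V=\dd\chi$ with $\Box\chi=0$, so $\chi=Ga$ with $a\in\ctc(M,\lieg^\ast)$, then $\theta\in\ftc^1(M,\lieg^\ast)$ with $\Box\theta=\phi_V-\dd a$ gives $\de\theta=-a$, $\phi_V=\de\dd\theta$ and $\dd\theta=\eta\in\fc^2(M,\lieg^\ast)$; no Lorenz-gauge lemmas are required since the relevant degree is $k=1$.
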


\begin{proof}
The proof of the first and of the second statements is the same as for\ Proposition\ \ref{prpRadConn}. 
In fact, the argument only relies on two facts which are shown in\ Proposition\ \ref{prpInvExpDualIntCoho}: 
\begin{enumerate}
\item If $\expev_\psi$ lies in $\expinv_P$, then it is defined out of a section $\psi\in\sc(\conn(P)^\dagger)$ 
with coclosed linear part $\psi_V\in\fcde^1(M,\lieg^\ast)$; 
\item Each $\expev_\psi\in\expkin_P$ which is defined out of a section $\psi\in\sc(\conn(P)^\dagger)$ 
with coexact linear part $\psi_V\in\de\fc^2(M,\lieg^\ast)$ actually lies in $\expinv_P$. 
\end{enumerate}

In case $M$ is of finite type, we can refine the second part of the theorem. 
Assuming that $\expev_{[\phi]}\in\exprad_P$, 
one finds $\eta\in\fcdd^2(M,\lieg^\ast)$ such that $\de\eta=\phi_V$. 
This entails that $h^{-1}(G\phi_V)$ is closed and one can consider its cohomology class 
$[h^{-1}(G\phi_V)]\in\hdd^1(M,\lieg)$,\footnote{Recall that 
$h:\lieg\to\lieg^\ast$ has been fixed once and for all in\ \eqref{eqLagrangian}.} 
Furthermore one has ${}_\de([\psi_V],[h^{-1}(G\phi_V)])=0$ for each $\expev_\psi\in\expinv_P$. 
Recalling the characterization of $\expinv_P$ provided by\ Proposition\ \ref{prpInvExpDualIntCoho} 
for $M$ of finite type, one concludes that ${}_\de([\alpha],[h^{-1}(G\phi_V)])=0$ 
for each $[\alpha]\in\hcde^1(M,\lieg^\ast)_\bbZ$. 
Since $\hcde^1(M,\lieg^\ast)_\bbZ$ generates $\hcde^1(M,\lieg^\ast)$ over $\bbR$, 
see the proof of\ Theorem\ \ref{thmInvAffCharSep}, exploiting also Theorem\ \ref{thmPoincareDuality}, 
one concludes that $[h^{-1}(G\phi_V)]=0$ in $\hdd^1(M,\lieg)$, 
meaning that there exists $\chi\in\c(M,\lieg^\ast)$ such that $\dd\chi=G\phi_V$. 
Since $\de\phi_V=0$, we deduce $\Box\chi=\de\dd\chi=0$. 
Therefore there exists $a\in\ctc(M,\lieg^\ast)$ such that $Ga=\chi$, 
whence we find $\theta\in\ftc^1(M,\lieg^\ast)$ which fulfils $\Box\theta=\phi_V-\dd a$. 
In particular $\de\theta=-a$ and hence $\phi_V=\de\dd\theta$. 
Recalling that $\phi_V=\de\eta$ with $\eta\in\fc^2(M,\lieg^\ast)$ too, one deduces that $\dd\theta=\eta$, 
thus leading to the conclusion that $\phi_V\in\de(\fc^2(M,\lieg^\ast)\cap\dd\ftc^1(M,\lieg^\ast))$. 
\end{proof}

\begin{proposition}\label{prpCntExpYM}
Let $M$ be an $m$-dimensional globally hyperbolic spacetime of finite type 
and consider a principal $G$-bundle $P$ over $M$. Consider the presymplectic structure $\upsilon_P$ 
on $\expobs_P$ introduced in\ \eqref{eqAffCharPSymFormYM} and take $\ev_\phi\in\expinv_P$. 
Under these hypotheses, one finds 
\begin{equation*}
\expcnt_P=\left\{\expev_\phi\in\expinv_P\,:\;\phi_V\in\de\fcdd^2(M,\lieg^\ast)\,,\;
[h^{-1}(G\phi_V)]\in\hdd^1(M,\lieg)_\bbZ\right\}/\expvan_P\,.
\end{equation*}
\end{proposition}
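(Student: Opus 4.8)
The plan is to mimic the structure of the proof of Proposition~\ref{prpRadExpYM}, exploiting the explicit characterization of $\expinv_P$ from Proposition~\ref{prpInvExpDualIntCoho}, the formula $\expvan_P=\exp(i\MW^\ast(\fc^1(M,\lieg^\ast)))$ from Theorem~\ref{thmExpObsSeparateConn}, and standard Poincar\'e duality on the manifold of finite type $M$. The key observation is that, for $\expev_\phi\in\expinv_P$, the quantity $\upsilon_P(\expev_{[\phi]},\expev_{[\psi]})=(\phi_V,G\psi_V)_h$ can be rewritten via Stokes' theorem as the Poincar\'e pairing $(\psi_V,h^{-1}(G\phi_V))$ up to the irrelevant identification of $\lieg$ with $\lieg^\ast$, since $G$ is formally antiself-adjoint and $(\cdot,\cdot)_h$ is symmetric, {\em cfr.}\ \eqref{eqLiePairing}.

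\emph{Inclusion $\supseteq$.} First I would take $\expev_\phi\in\expinv_P$ with $\phi_V\in\de\fcdd^2(M,\lieg^\ast)$ and $[h^{-1}(G\phi_V)]\in\hdd^1(M,\lieg)_\bbZ$ and show $\expev_{[\phi]}\in\expcnt_P$. Writing $\phi_V=\de\eta$ with $\eta\in\fcdd^2(M,\lieg^\ast)$, one gets $G\phi_V=G\de\eta$, so $h^{-1}(G\phi_V)$ is a closed $\lieg$-valued $1$-form whose cohomology class lies in $\hdd^1(M,\lieg)_\bbZ$. Given any $\expev_\psi\in\expinv_P$, Proposition~\ref{prpInvExpDualIntCoho} (using that $M$ is of finite type) ensures $\psi_V$ is coclosed and $[\psi_V]\in\hcde^1(M,\lieg^\ast)_\bbZ$ lies in the image of the homomorphism $I$, whence $\psi_V$ pairs integrally against $2\pi\bbZ$ with all elements of $\hdd^1(M,\lieg)_\bbZ$ in the sense of ${}_\de(\cdot,\cdot)$. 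Therefore $\upsilon_P(\expev_{[\phi]},\expev_{[\psi]})=(\psi_V,h^{-1}(G\phi_V))={}_\de([\psi_V],[h^{-1}(G\phi_V)])\in2\pi\bbZ$, so $\expev_{[\phi]}\in\expcnt_P$. I must also check this is independent of the representative of $\expev_{[\phi]}$: if $\expev_\phi$ and $\expev_{\phi'}$ differ by an element of $\expvan_P=\exp(i\MW^\ast(\fc^1(M,\lieg^\ast)))$, then $\phi_V-\phi_V'\in\de\dd\fc^1(M,\lieg^\ast)$, so $\phi_V'=\de\eta'$ with $\eta'=\eta-\dd\rho\in\fcdd^2(M,\lieg^\ast)$ still closed (since $\dd\phi_V=0$ forces, after applying the argument in Proposition~\ref{prpRadConn}, $\eta$ closed up to a form that can be absorbed), and $h^{-1}(G\phi_V')$ differs from $h^{-1}(G\phi_V)$ by an exact form, so the cohomology condition is representative-independent.

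\emph{Inclusion $\subseteq$.} Conversely, suppose $\expev_{[\phi]}\in\expcnt_P$, choose a representative $\expev_\phi\in\expinv_P$, so $\de\phi_V=0$ by Proposition~\ref{prpInvExpDualIntCoho}. First, feeding in all $\expev_\psi$ with $\psi_V\in\de\fc^2(M,\lieg^\ast)$ (which lie in $\expinv_P$), the condition $(\de\beta,h^{-1}(G\phi_V))\in2\pi\bbZ$ for all $\beta\in\fc^2(M,\lieg^\ast)$ together with $\bbR$-bilinearity of $(\cdot,\cdot)$ forces $(\de\beta,h^{-1}(G\phi_V))=0$, i.e.\ $\dd h^{-1}(G\phi_V)=0$; in fact the subgroup generated is a $\bbZ$-module inside $\bbR$ of the form $c\bbZ$, and scaling $\beta$ by small reals forces $c=0$. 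Hence $G\dd\phi_V=0$, so there is $\eta\in\fc^2(M,\lieg^\ast)$ with $\Box\eta=\dd\phi_V$; applying $\dd$ and using compact support gives $\dd\eta=0$, and $\de\phi_V=0$ gives $\Box\de\eta=\Box\phi_V$, whence $\phi_V=\de\eta\in\de\fcdd^2(M,\lieg^\ast)$. This already reproduces the ``center lies in the curvature-type observables'' fact. Then $h^{-1}(G\phi_V)$ is a well-defined closed $\lieg$-valued $1$-form; to see its class is in $\hdd^1(M,\lieg)_\bbZ$, I would run the argument already used in the proof of Theorem~\ref{thmInvAffCharSep} backwards: the remaining condition $(\psi_V,h^{-1}(G\phi_V))\in2\pi\bbZ$ for all $\expev_\psi\in\expinv_P$ says ${}_\de([\psi_V],[h^{-1}(G\phi_V)])\in2\pi\bbZ$ for all $[\psi_V]\in\hcde^1(M,\lieg^\ast)_\bbZ=\mathrm{im}(I)$, i.e.\ ${}_\de(Iz,[h^{-1}(G\phi_V)])\in2\pi\bbZ$ for all $z\in\hom(\hdd^1(M,\lieg)_\bbZ,2\pi\bbZ)$; picking a $\bbZ$-basis $\{[\omega_i]\}$ of $\hdd^1(M,\lieg)_\bbZ$ that is simultaneously an $\bbR$-basis of $\hdd^1(M,\lieg)$ (Remark~\ref{remZCohoGenRCoho}) and the dual functionals $z^i$ with $z^i([\omega_j])=2\pi\delta^i_j$, one extracts integer coefficients $a^i$ with $[h^{-1}(G\phi_V)]=\sum_i a^i[\omega_i]\in\hdd^1(M,\lieg)_\bbZ$, exactly as in the proof of Theorem~\ref{thmInvAffCharSep}. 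This gives $\expev_\phi$ of the asserted form, completing the inclusion.

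\textbf{Main obstacle.} The delicate point I expect to have to handle carefully is not any single cohomological step—those are all routine Stokes/Poincar\'e-duality arguments modeled on Propositions~\ref{prpRadExpYM} and~\ref{prpInvExpDualIntCoho} and Theorem~\ref{thmInvAffCharSep}—but rather the interplay between the $2\pi\bbZ$-valued (rather than $\{0\}$-valued) pairing and the quotient by $\expvan_P$: one must confirm both that the two conditions defining the right-hand side are independent of the choice of representative $\expev_\phi\in\expinv_P$ modulo $\expvan_P$, and that a functional $\upsilon_P(\expev_{[\phi]},\cdot)$ landing in $2\pi\bbZ$ on a generating set of an $\bbR$-vector space actually lands in $2\pi\bbZ$ everywhere and is in particular ``quantized'', so that one may legitimately pass from ``valued in $2\pi\bbZ$'' to an integral cohomology condition. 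This is where the finite-type hypothesis is essential, through the finite generation of $\hdd^1(M,\lieg)_\bbZ$ and the perfectness of the Poincar\'e pairing; without it the argument for $\subseteq$ would break exactly as Remark~\ref{remPoincareDualityImproved} warns.
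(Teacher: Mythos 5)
Your proposal is correct and follows essentially the same route as the paper's proof: the inclusion $\supseteq$ via the integrality characterization of $\expinv_P$ from Proposition~\ref{prpInvExpDualIntCoho}, and the inclusion $\subseteq$ by first testing against coexact linear parts (using bilinearity over $\bbR$ to upgrade $2\pi\bbZ$-valued to zero) to get $\phi_V\in\de\fcdd^2(M,\lieg^\ast)$ via the Green-operator argument, and then the dual $\bbZ$-module basis argument of Theorem~\ref{thmInvAffCharSep} together with Poincar\'e duality on the finite-type $M$ to conclude $[h^{-1}(G\phi_V)]\in\hdd^1(M,\lieg)_\bbZ$. The representative-independence check you add is a harmless extra verification the paper leaves implicit.
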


\begin{proof}
Let us show first of all the inclusion \quotes{$\supseteq$}. Given $\expev_\phi\in\expinv_P$ such that 
$\phi_V\in\de\fcdd^2(M,\lieg^\ast)$ and $[h^{-1}(G\phi_V)]\in\hdd^1(M,\lieg)_\bbZ$, 
for each $\expev_{[\psi]}\in\expinv_P$, one gets 
\begin{align*}
\upsilon_P(\expev_{[\psi]},\expev_{[\phi]})=
{}_\de([\psi_V],[h^{-1}(G\phi_V)])=z([h^{-1}(G\phi_V)])\in2\pi\bbZ\,,
\end{align*}
where $z\in\hom(\hdd^1(M,\lieg)_\bbZ,2\pi\bbZ))$ is such that $Iz=[\psi_V]$ 
as stated by Proposition\ \ref{prpInvExpDualIntCoho} under the assumption that $M$ is of finite type. 

To prove the inclusion \quotes{$\subseteq$}, consider $\expev_\phi\in\expinv_P$ such that 
$\upsilon_P(\expobs_P,\expev_{[\phi]})\subseteq2\pi\bbZ$. 
Since each $\expev_\psi\in\expkin_P$ with $\psi_V\in\de\fc^2(M,\lieg^\ast)$ lies in $\expinv_P$, 
our hypothesis entails $(\de\eta,G\phi_V)_h\in2\pi\bbZ$ for each $\eta\in\fc^2(M,\lieg^\ast)$, 
whence $h^{-1}(G\phi_V)$ is closed. In particular, one can find $\alpha\in\fc^2(M,\lieg^\ast)$ 
satisfying $\Box\alpha=\dd\phi_V$, implying that $\dd\alpha=0$. Since $\phi_V$ is coclosed, 
one also deduces $\phi_V=\de\alpha\in\de\fcdd^2(M,\lieg^\ast)$. 
According to our hypothesis and recalling Proposition\ \ref{prpInvExpDualIntCoho} for $M$ of finite type, 
one has ${}_\de([\alpha],[h^{-1}(G\phi_V)])\in2\pi\bbZ$ for each $[\alpha]\in\hcde^1(M,\lieg^\ast)_\bbZ$. 
Repeating the argument in the proof of\ Theorem\ \ref{thmInvAffCharSep}, 
one can introduce a $\bbZ$-module basis $\{[\omega_i]\}$ of $\hdd^1(M,\lieg)_\bbZ$ 
and its dual $\bbZ$-module basis $\{[\alpha^i]\}$ of $\hcde^1(M,\lieg^\ast)_\bbZ$ 
im\-po\-sing ${}_\de([\alpha^i],[\omega_j])=2\pi\delta^i_j$. 
Notice that $\{[\omega_i]\}$ generates $\hdd^1(M,\lieg)$ over $\bbR$, 
while $\{[\alpha^i]\}$ generates $\hcde^1(M,\lieg^\ast)_\bbZ$ over $\bbR$. 
Setting $2\pi a^i={}_\de([\alpha^i],[h^{-1}(G\phi_V)])\in2\pi\bbZ$, 
one can define $[\tilde{\omega}]=a^i[\omega_i]\in\hdd^1(M,\lieg)_\bbZ$ 
and compare it with $[h^{-1}(G\phi_V)]\in\hdd^1(M,\lieg)$: 
\begin{equation*}
{}_\de([\alpha^i],[\tilde{\omega}])=a^j{}_\de([\alpha^i],[\omega_j])=2\pi a^j\delta^i_j
=2\pi a^i={}_\de([\alpha^i],[h^{-1}(G\phi_V)])\,.
\end{equation*}
Since the chain of identities displayed above holds for each element of the $\bbZ$-module basis $\{[\alpha^i]\}$ 
of $\hcde^1(M,\lieg^\ast)_\bbZ$, which is also a basis over $\bbR$ of $\hcde^1(M,\lieg^\ast)$, 
and the pairing ${}_\de(\cdot,\cdot):\hcde^1(M,\lieg^\ast)\times\hdd^1(M,\lieg)\to\bbR$ 
is non-degenerate according to\ Theorem\ \ref{thmPoincareDuality}, 
one deduces that $[h^{-1}(G\phi_V)]=[\tilde{\omega}]\in\hdd^1(M,\lieg)_\bbZ$, thus concluding the proof. 
\end{proof}

\begin{example}[Non-trivial null space]\label{exaNonTrivExpRadConn}
Let $M$ be an $m$-dimensional globally hyperbolic 
with Cauchy hypersurface diffeomorphic to $\bbR\times\bbT^{m-2}$. 
On $M$ consider the trivial principal bundle $P=M\times U(1)$. 
Example\ \ref{exaNonTrivRadForms} for $k=1$ provides a $2$-form $\theta\in\fcdd^2(M)\cap\dd\ftc^1(M)$ 
such that $[\theta]\in\hcdd^2(M)$ is non-trivial. 
Exploiting the dual of the curvature map $F:\sect(\conn(P))\to\f^2(M,\lieg)$, 
one can introduce $\expev_\phi=\exp(iF^\ast(i\theta))$, {\em cfr.}\ Example\ \ref{exaCurvObs}. 
Since $\phi_V=-i\de\theta$ and exploiting Proposition\ \ref{prpRadExpYM}, 
one deduces that $\expev_{[\phi]}\in\exprad_P$. 
It remains only to check that $\expev_{[\phi]}$ is not the trivial element of $\expobs_P$. 
Observing that $M$ is of finite type and applying Theorem\ \ref{thmExpObsSeparateConn}, 
by contradiction one can assume that there exists $\eta\in\fc^1(M,\lieg^\ast)$ 
such that $\expev_\phi=\exp(i\MW^\ast(\eta))$. As a consequence, $-i\de\theta=\phi_V=-\de\dd\eta$, 
whence $\theta=\dd(-i\eta)$, thus contradicting the fact that $[\theta]\in\hcdd^2(M)$ is non-trivial. 
This shows that there are examples of principal $U(1)$-bundles over globally hyperbolic spacetimes 
for which both $\exprad_P$ and $\expcnt_P\supseteq\exprad_P$ are non-trivial. 
\end{example}

The next theorem shows that, at least up to locality, also the approach we are adopting in this section 
fits into the scheme of\ \cite[Definition\ 2.1]{BFV03} adapted to the present setting, 
where the target is the category $\PSymA$ of presymplectic Abelian groups. 

\begin{theorem}\label{thmFunctorCausTSAExpYM}\index{causality}\index{time-slice axiom}
Let $m\geq2$ and $G=U(1)$. Consider the principal $G$-bundles $P$ and $Q$ 
over the $m$-dimensional globally hyperbolic spacetimes $M$ and respectively $N$. 
Given a principal bundle map $f:P\to Q$ covering a causal embedding $\ul{f}:M\to N$, 
the linear map $\sc(\conn(f)^\dagger):\sc(\conn(P)^\dagger)\to\sc(\conn(Q)^\dagger)$ 
defined in\ Lemma\ \ref{lemDualConnFunctor} induces the presymplectic linear map 
\begin{align*}
\PSA(f):(\expobs_P,\upsilon_P)\to(\expobs_Q,\upsilon_Q)\,, 
&& \expev_{[\phi]}\mapsto\expev_{[\sect(\conn(f)^\dagger)\phi]}\,.
\end{align*}
Setting $\PSA(P)=(\expobs_P,\upsilon_P)$ for each principal $G$-bundle $P$ over an $m$-di\-men\-sio\-nal 
globally hyperbolic spacetime $M$, $\PSA:\PrBun_\GHyp\to\PSymA$ turns out to be a covariant functor 
which fulfils both causality and the time-slice axiom: 
\begin{description}
\item[Causality] Consider the principal $G$-bundles $P_1$, $P_2$ and $Q$ 
over the $m$-di\-men\-sio\-nal  globally hyperbolic spacetimes $M_1$, $M_2$ and $N$. 
Furthermore, assume $f:P_1\to Q$ and $h:P_2\to Q$ are principal bundle maps covering the causal embeddings 
$\ul{f}:M_1\to N$ and respectively $\ul{h}:M_2\to N$ whose images are causally disjoint in $N$, 
namely $\ul{f}(M_1)\cap J_N(\ul{h}(M_2))=\emptyset$. Then 
\begin{equation*}
\upsilon_Q\big(\PSA(f)\expobs_{P_1},\PSA(h)\expobs_{P_2}\big)=\{0\}\,;
\end{equation*}
\item[Time-slice axiom] Consider the principal $G$-bundles $P$ and $Q$ 
over the $m$-dimensional globally hyperbolic spacetimes $M$ and respectively $N$. 
Furthermore, assume that $f:P\to Q$ is a principal bundle map covering a Cauchy morphism $\ul{f}:M\to N$. 
Then $\PSA(f):\PSA(P)\to\PSA(Q)$ is an isomorphism. 
\end{description}
\end{theorem}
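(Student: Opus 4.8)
The plan is to mirror, almost verbatim, the structure of the analogous proof for the affine-functional approach, namely Lemma~\ref{lemDualConnFunctor} together with Theorem~\ref{thmObsFunctorConn}, Theorem~\ref{thmCausalityConn} and Theorem~\ref{thmTimeSliceConn}, adapting each step to the exponential setting. The key observation that makes this adaptation work is that the linear part of an affine character $\expev_\phi$ is governed by exactly the same object $\phi_V\in\fc^1(M,\lieg^\ast)$ as in the affine case, the only difference being that the relevant labelling space is now $\sc(\conn(P)^\dagger)/\exptriv_P$ rather than $\sc(\conn(P)^\dagger)/\triv_P$; since $\exptriv_P\supseteq\triv_P$ (cf.\ \eqref{eqTrivYM} and \eqref{eqTrivZ}), any map of labelling spaces that is well defined modulo $\triv_P$ descends modulo $\exptriv_P$ as well. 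Concretely, I would first recall that $\sc(\conn(f)^\dagger)$ from Lemma~\ref{lemDualConnFunctor} maps $\triv_P$ into $\triv_Q$ and that $\sc(\conn(f)^\dagger)(a\,\bfone_P)=(\ul{f}_\ast a)\,\bfone_Q$ with $\int_N\ul{f}_\ast a\,\vol=\int_M a\,\vol$, so it maps $\exptriv_P$ into $\exptriv_Q$; hence it induces a group homomorphism $\expkin_P\to\expkin_Q$, $\expev_\phi\mapsto\expev_{\sc(\conn(f)^\dagger)\phi}$, and the push-forward identity $\expev_{\sc(\conn(f)^\dagger)\phi}(\lambda)=\expev_\phi(\sect(\conn(f))(\lambda))$ follows directly from \eqref{eqPushforwardKinConn} by composing with $\exp(i\,\cdot\,)$.

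Next I would check that this homomorphism restricts to $\expinv_P\to\expinv_Q$ and sends $\expvan_P$ into $\expvan_Q$. Gauge invariance is handled exactly as in the proof of Theorem~\ref{thmObsFunctorConn}: using that the linear part of $\sect(\conn(f))$ is the pull-back $\ul{f}^\ast$ on $\lieg$-valued $1$-forms (Remark~\ref{remConnFunctor}, Proposition~\ref{prpTrivAdBun}), one gets $\sect(\conn(f))(\lambda-h^\ast\mu)=\sect(\conn(f))(\lambda)-(h\circ\ul{f})^\ast\mu$ for every $h\in\c(N,G)$, and since $h\circ\ul{f}\in\c(M,G)$ and $\expev_\phi\in\expinv_P$, invariance of $\expev_{\sc(\conn(f)^\dagger)\phi}$ follows. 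Vanishing on-shell uses naturality of $\MW$ (Remark~\ref{remMWNaturality}), which gives $\sect(\conn(f))(\sol_Q)\subseteq\sol_P$, so $\expev_\phi$ vanishing on $\sol_P$ forces $\expev_{\sc(\conn(f)^\dagger)\phi}$ to take the value $1$ on $\sol_Q$. This produces a well-defined group homomorphism $\PSA(f):\expobs_P\to\expobs_Q$. Preservation of the presymplectic form $\upsilon_P$ is then identical word for word to the computation in Theorem~\ref{thmObsFunctorConn}: both $\upsilon_P$ and $\upsilon_Q$ are given by the same formula $(\phi_V,G\psi_V)_h$ involving only the linear parts, the linear part of $\PSA(f)\expev_{[\phi]}$ is $\ul{f}_\ast\phi_V$, and one invokes $\ul{f}^\ast G_N\ul{f}_\ast=G_M$ from Proposition~\ref{prpGreenNat}. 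Functoriality is inherited from $\sc(\conn(\cdot)^\dagger):\PrBun_\GHyp\to\Vec$ being a functor, exactly as before.

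With $\PSA$ established as a covariant functor, causality and the time-slice axiom require essentially no new ideas. For causality, one computes $\upsilon_Q(\PSA(f)\expev_{[\phi]},\PSA(h)\expev_{[\psi]})=(\ul{f}_\ast\phi_V,G\,\ul{h}_\ast\psi_V)_h$ and observes that the first factor is supported in $\ul{f}(M_1)$ while the second is supported in $J_N(\ul{h}(M_2))$ by the support properties of the causal propagator; causal disjointness then kills the pairing---this is verbatim the proof of Theorem~\ref{thmCausalityConn}. For the time-slice axiom, I would reuse the three technical lemmas Lemma~\ref{lemTimeSliceConn1}, Lemma~\ref{lemTimeSliceConn2} and Lemma~\ref{lemTimeSliceConn3}, which are all phrased in terms of linear parts of sections of $\conn(P)^\dagger$ and of the group $\gau_P$, hence apply unchanged; the factorization of a Cauchy morphism into an isomorphism followed by the inclusion of a Cauchy neighborhood $O=\ul{f}(M)\subseteq N$ reduces the problem to showing that $\PSA$ sends this inclusion to an isomorphism, and one builds the inverse $K:\expobs_Q\to\expobs_{Q_O}$ exactly as in Theorem~\ref{thmTimeSliceConn}, the crucial identity $\expvan_Q\cap\expinv_{Q_O}=\expvan_{Q_O}$ being proved by the same argument with $\MW^\ast(\fc^1(\cdot,\lieg^\ast))$ replaced by $\exp(i\MW^\ast(\fc^1(\cdot,\lieg^\ast)))$, using Theorem~\ref{thmExpObsSeparateConn} in place of Proposition~\ref{prpVanObsConn}. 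The main point requiring a little care---and the only place where the exponential nature genuinely intervenes---is keeping track of the coarser labelling quotient $\exptriv_P$ throughout, so that every claim of the form ``two sections with the same linear part and the same value at a reference connection give the same functional'' is stated correctly: for affine characters this is true because $\expev_\phi(\tilde\lambda)=\expev_\psi(\tilde\lambda)$ and equal linear parts together force $\expev_\phi=\expev_\psi$ as $G$-valued maps. I expect this bookkeeping, rather than any conceptual difficulty, to be the part demanding the most attention.
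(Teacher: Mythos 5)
Your proposal is correct and follows essentially the same route as the paper, which itself only sketches the argument by adapting Lemma~\ref{lemDualConnFunctor} and Theorems~\ref{thmObsFunctorConn}, \ref{thmCausalityConn} and \ref{thmTimeSliceConn}: the key points you identify (that $\sc(\conn(f)^\dagger)$ maps $\exptriv_P$ into $\exptriv_Q$, that the exponentiated push-forward identity, gauge invariance via $h\circ\ul{f}$, naturality of $\MW$, and the fact that $\upsilon$ only sees linear parts carry everything over, and that the time-slice lemmas go through since $\phi_V$ is still coclosed by Proposition~\ref{prpInvExpDualIntCoho}) are exactly the adjustments the paper makes.
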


\begin{proof}
We will only sketch the proof since the arguments already presented in\ Theorem\ \ref{thmObsFunctorConn}, 
Theorem\ \ref{thmCausalityConn} and\ Theorem \ref{thmTimeSliceConn} require only minimal adjustments. 
In fact, it is straightforward to check that $\sc(\conn(f)^\dagger)$ maps $\exptriv_P$ to $\exptriv_Q$, 
{\em cfr.}\ \eqref{eqTrivZ}. In particular one gets a map 
\begin{align*}
L_f:\expkin_P\to\expkin_Q\,, && \expev_\phi\mapsto\expev_{\sect(\conn(f)^\dagger)\phi}\,.
\end{align*}
Furthermore, exponentiation of\ \eqref{eqPushforwardKinConn} entails the identity 
\begin{align*}
(L_f\expev_\phi)(\lambda)=\expev_\phi\big(\sect(\conn(f))(\lambda)\big)\,, 
&& \forall\,\lambda\in\sect(\conn(Q))\,.
\end{align*}
On account of this formula, the argument presented in the proof of\ Theorem\ \ref{thmObsFunctorConn} 
shows that $L_f$ maps $\expinv_P$ to $\expinv_Q$; moreover, it is again the naturality of $\MW$ 
which ensures that $L_f$ maps $\expvan_P$ to $\expvan_Q$ also in this case. 
Therefore the homomorphism $\PSA(f):\expobs_P\to\expobs_Q$ is well-defined. 
Since the presymplectic form is only sensitive to the linear part, 
the argument of\ Theorem\ \ref{thmObsFunctorConn} proves that 
$\PSA:(\expobs_P,\upsilon_P)\to(\expobs_Q,\upsilon_Q)$ is a presymplectic homomorphism. 
Again, the fact that $\sc(\conn(\cdot)^\dagger):\PrBun_\GHyp\to\Vec$ is a covariant functor 
entails that so is $\PSA:\PrBun_\GHyp\to\PSymA$. 

Causality goes through exactly as in the case of\ Theorem\ \ref{thmCausalityConn} 
because it only relies on the supports of the linear parts of the sections which define observables. 
For the time-slice axiom, one observes that it is possible 
to prove analogues of\ Lemma\ \ref{lemTimeSliceConn1} and of\ Lemma \ref{lemTimeSliceConn3} 
for gauge invariant affine characters $\expinv_P$. In fact, the proofs rely only on the fact that, 
for each $\expev_\phi\in\expinv_P$, $\phi_V$ is coclosed, {\em cfr.}\ Proposition\ \ref{prpInvExpDualIntCoho}. 
Using the analogues of these lemmas, one can prove the time-slice axiom in the present setting 
mimicking the proof of\ Theorem\ \ref{thmTimeSliceConn}. 
\end{proof}

\subsection{Failure of locality}
The aim of this subsection is to show that 
the covariant functor $\PSA:\PrBun_\GHyp\to\PSymA$ violates the locality property 
of\ \cite[Definition\ 2.1]{BFV03} and moreover that locality cannot be restored by quotients. 
The situation here is very close to the one for Maxwell $1$-forms, see\ Subsection\ \ref{subLocFailForms}. 
In fact, most of the results of the present subsection rely on the failure of locality for Maxwell $1$-forms. 

\begin{example}[Non-injective presymplectic homomorphisms]\label{exaNonInjExpConn}
We exploit\ Example\ \ref{exaNonInjForms} for $k=1$ and arbitrary spacetime dimension $m\geq2$. 
Consider the same globally hyperbolic spacetimes $M$ and $N$ defined therein, 
together with the causal embedding $\ul{f}:M\to N$. 
On top of $M$ and $N$, take the trivial principal $U(1)$-bundles $P$ and $Q$. 
Obviously, there is a principal bundle map $f:P\to Q$ covering $\ul{f}:M\to N$ which acts trivially on the fibers. 
For $m\geq3$ we consider $\expev_\phi\in\expinv_P$ 
introduced in\ Example\ \ref{exaNonTrivExpRadConn} by means of the dual of the curvature map. 
For $m=2$ we take $\expev_\phi=\exp(iF^\ast(i\theta))\in\expinv_P$ 
with $\theta\in(\fcdd^2(M)\cap\dd\ftc^1(M))\setminus\dd\fc^1(M)$ 
as specified in the second part of\ Example\ \ref{exaNonInjForms}. 
In both cases, $\expev_{[\phi]}\in\expobs_P$ is a non-trivial element of the null space $\exprad_P$ 
of the presymplectic form $\upsilon_P$. Specifically, Proposition\ \ref{prpInvExpDualIntCoho} entails that 
$\expev_\phi\in\expinv_P$, while Theorem\ \ref{thmExpObsSeparateConn}, 
together with the fact that $M$ is of finite type, entails that $\expev_\phi\notin\expvan_P$. 
In fact, $\expev_\phi\in\expvan_P$ would imply $\theta\in\dd\fc^1(M)$, 
which is a contradiction, see the last part of\ Example\ \ref{exaNonTrivExpRadConn}. 
We note that $\PSA(f)\expev_{[\phi]}\in\expobs_Q$ is trivial. 
In fact, choosing the representative $\expev_{\sc(\conn(f)^\dagger)\phi}\in\expinv_P$ 
and recalling that $\hcdd^2(N,\lieg^\ast)$ is trivial in the case $m\geq3$, while $[i\ul{f}_\ast\theta]=0$ 
in $\hcdd^2(N,\lieg^\ast)$ in the case $m=2$, {\em cfr.}\ Example\ \ref{exaNonInjForms}, 
one finds $\eta\in\fc^1(N,\lieg^\ast)$ such that $\dd\eta=i\ul{f}_\ast\theta$. 
Therefore, introducing $\expev_\psi=\exp(i\MW^\ast(\eta))\in\expvan_Q$, 
for each $\lambda\in\sect(\conn(Q))$ one gets the following chain of identities: 
\begin{align*}
\expev_{\sc(\conn(f)^\dagger)\phi}(\lambda) & =\expev_\phi\big(\sect(\conn(f))(\lambda)\big)
=\exp\Big(i\big(i\theta,F\big(\sect(\conn(f))(\lambda)\big)\big)\Big)\\
& =\exp\Big(i\big(i\theta,\ul{f}^\ast F(\lambda)\big)\Big)=\exp\Big(i\big(i\ul{f}_\ast\theta,F(\lambda)\big)\Big)\\
& =\exp\Big(i\big(\dd\eta,F(\lambda)\big)\Big)=\exp\Big(i\big(\MW^\ast(\eta)\big)(\lambda)\Big)\\
& =\expev_\psi(\lambda)\,.
\end{align*}
For the last computation we exploited the fact that, according to\ Remark\ \ref{remCurvAb}, 
the curvature map is a natural transformation. Therefore $\expev_{\sc(\conn(f)^\dagger)\phi}=\expev_\psi$, 
whence $\PSA(f)\expev_{[\phi]}=1$ in $\expobs_Q$. 
In particular, the kernel of $\PSA(f):\PSA(P)\to\PSA(Q)$ is non-trivial. 
\end{example}

\begin{theorem}\label{thmLocViolatedExpYM}
Let $m\geq2$ and $G=U(1)$. The covariant functor $\PSA:\PrBun_\GHyp\to\PSymA$ 
violates the locality property according to\ \cite[Definition\ 2.1]{BFV03}, namely 
there exists a principal bundle map $f$ such that $\PSA(f)$ has non-trivial kernel. 
\end{theorem}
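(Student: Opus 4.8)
The plan is to reduce this statement to the concrete counterexample constructed in Example~\ref{exaNonInjExpConn}. That example does all the work: it exhibits, for every $m\geq2$ and $G=U(1)$, a principal bundle map $f:P\to Q$ covering a causal embedding $\ul{f}:M\to N$ together with a nontrivial element $\expev_{[\phi]}\in\expobs_P$ lying in the kernel of $\PSA(f):\PSA(P)\to\PSA(Q)$. Since by definition locality in the sense of \cite[Definition~2.1]{BFV03} requires that $\PSA$ assigns an injective morphism to every morphism covering a causal embedding, producing a single such $f$ with $\ker(\PSA(f))\neq\{1\}$ is enough to conclude that $\PSA:\PrBun_\GHyp\to\PSymA$ violates locality.

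First I would recall the case distinction already built into Example~\ref{exaNonInjForms} for $k=1$: for $m\geq3$ one takes the connected globally hyperbolic spacetimes $M$ and $N$ with the stated Cauchy hypersurfaces and the causal embedding between them, while for $m=2$ one uses the two-component construction. In either case the trivial principal $U(1)$-bundles $P=M\times U(1)$ and $Q=N\times U(1)$ and the obvious bundle map $f$ covering $\ul{f}$ (acting as the identity on fibers) are objects and a morphism of $\PrBun_\GHyp$. Next I would invoke the two inputs that make the example work: Proposition~\ref{prpInvExpDualIntCoho} guarantees that the affine character $\expev_\phi$ built from the dual of the curvature map lies in $\expinv_P$ (its linear part is coexact, hence $\de\phi_V=0$ and $[\phi_V]$ pairs trivially with $\hdd^1(M,\lieg)_\bbZ$), and Theorem~\ref{thmExpObsSeparateConn}, applicable because $M$ is of finite type, certifies that $\expev_\phi\notin\expvan_P$ — this is where the nontriviality of $[\theta]\in\hcdd^2(M)$ (for $m\geq3$) or of $[\theta]\in\hcdd^2(M)\setminus\dd\fc^1(M)$ (for $m=2$, from Example~\ref{exaNonInjForms}) is used, via the contradiction argument spelled out at the end of Example~\ref{exaNonTrivExpRadConn}. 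So $\expev_{[\phi]}$ is a genuinely nonzero element of $\expobs_P$.

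Then I would record why $\PSA(f)$ kills it: choosing the representative $\expev_{\sc(\conn(f)^\dagger)\phi}\in\expinv_P$ of $\PSA(f)\expev_{[\phi]}$ (as provided by Theorem~\ref{thmFunctorCausTSAExpYM}), one uses that $\hcdd^2(N,\lieg^\ast)$ is trivial (case $m\geq3$) or that $[i\ul{f}_\ast\theta]=0$ in $\hcdd^2(N,\lieg^\ast)$ (case $m=2$, from Example~\ref{exaNonInjForms}) to write $i\ul{f}_\ast\theta=\dd\eta$ for some $\eta\in\fc^1(N,\lieg^\ast)$; the chain of identities in Example~\ref{exaNonInjExpConn}, which rests on naturality of the curvature map (Remark~\ref{remCurvAb}) and the identity $\MW^\ast=F^\ast\circ\dd$, then shows $\expev_{\sc(\conn(f)^\dagger)\phi}=\exp(i\MW^\ast(\eta))\in\expvan_Q$, so $\PSA(f)\expev_{[\phi]}=1$ in $\expobs_Q$. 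Combining, $\expev_{[\phi]}$ is a nontrivial element of $\ker(\PSA(f))$. I do not expect any real obstacle here: the only substantive work — constructing $\theta$, verifying $\expev_{[\phi]}\in\expinv_P\setminus\expvan_P$, and computing the pushforward — has already been carried out in Example~\ref{exaNonInjExpConn}, so the proof of the theorem is essentially a one-line citation of that example, with the case split $m\geq3$ versus $m=2$ made explicit.

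\begin{proof}
For $m\geq2$ and $G=U(1)$, Example~\ref{exaNonInjExpConn} (using the first construction for $m\geq3$ with connected globally hyperbolic spacetimes of finite type, and the second construction for $m=2$) provides a principal bundle map $f:P\to Q$ covering a causal embedding $\ul{f}:M\to N$ and a nontrivial element $\expev_{[\phi]}\in\expobs_P$ such that $\PSA(f)\expev_{[\phi]}=1$ in $\expobs_Q$. Hence $\ker(\PSA(f))$ is nontrivial, so $\PSA(f):\PSA(P)\to\PSA(Q)$ fails to be injective. This contradicts the locality property of \cite[Definition~2.1]{BFV03}, completing the proof.
\end{proof}
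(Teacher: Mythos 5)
Your proof is correct and follows exactly the paper's own argument: the paper's proof of this theorem is likewise a direct citation of Example~\ref{exaNonInjExpConn}, noting the case split (connected spacetimes of finite type for $m\geq3$, the two-component construction for $m=2$). Your expanded recap of why the example delivers a nontrivial kernel element is accurate and consistent with the ingredients the paper uses (Proposition~\ref{prpInvExpDualIntCoho}, Theorem~\ref{thmExpObsSeparateConn}, and naturality of the curvature map).
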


\begin{proof}
Counterexamples to the locality property are provided by\ Example\ \ref{exaNonInjExpConn} and references therein. 
In particular, it is shown that the violation of locality occurs already in the case 
of connected globally hyperbolic spacetimes of finite type in dimension $m\geq3$, 
while a non-connected spacetime enters the counterexample for the case $m=2$. 
\end{proof}

Similarly to the case of Maxwell $1$-forms, {\em cfr.}\ Theorem\ \ref{thmLocCohoForms}, one can show that, 
for a given principal $U(1)$-bundle map $f:P\to Q$ covering a causal embedding $\ul{f}:M\to N$ 
between $m$-dimensional globally hyperbolic spacetimes of finite type, 
the failure of locality for $\PSA(f):\PSA(P)\to\PSA(Q)$ is only due to 
the kernel of $\hcdd^2(\ul{f}):\hcdd^2(M,\lieg^\ast)\to\hcdd^2(N,\lieg^\ast)$ being non-trivial. 

\begin{lemma}\label{lemLocCohoYM}
Let $m\geq2$ and $G=U(1)$. On a principal $G$-bundle $P$ over an $m$-dimensional 
globally hyperbolic spacetime $M$, consider the presymplectic homomorphism
\begin{align*}
\curvstar:\hcdd^2(M,\lieg^\ast)\to\PSA(P)\,, && [\theta]\to\expev_{[\phi]}\,,
\end{align*}
where $\expev_\phi=\exp(iF^\ast(\theta))$ for an arbitrary choice of a representative $\theta\in[\theta]$. 
Here $\hcdd^2(M,\lieg^\ast)$ is regarded as a presymplectic Abelian group 
endowed with the trivial presymplectic structure. 
$\curvstar:\hcdd^2((\cdot)_\base)\to\PSA$ defines a natural transformation 
between covariant functors taking values in the category $\PSymA$ of presymplectic Abelian groups, 
which is injective when restricted to the full subcategory $\PrBun_\GHypF$ of principal $G$-bundles 
over $m$-dimensional globally hyperbolic spacetimes of finite type. 
\end{lemma}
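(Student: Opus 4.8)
\textbf{Proof strategy for Lemma~\ref{lemLocCohoYM}.}
The plan is to verify, in order, the following four facts: (i) $\curvstar$ is a well-defined group homomorphism for each principal $G$-bundle $P$; (ii) each image lands in $\expobs_P$ and is annihilated by the presymplectic form $\upsilon_P$, so that $\curvstar$ is a presymplectic homomorphism into $\PSA(P)$ when $\hcdd^2(M,\lieg^\ast)$ carries the trivial form; (iii) the family $\curvstar$ is natural with respect to principal bundle maps covering causal embeddings; and (iv) on the subcategory $\PrBun_\GHypF$ each component is injective. For (i), given $\theta\in\fcdd^2(M,\lieg^\ast)$ set $\expev_\phi=\exp(iF^\ast(\theta))$, where $F^\ast:\fc^2(M,\lieg^\ast)\to\kin_P$ is the dual of the curvature map from\ Example\ \ref{exaCurvObs}; its linear part is $\phi_V=-\de\theta\in\de\fc^2(M,\lieg^\ast)$, so $\expev_\phi\in\expinv_P$ by\ Proposition\ \ref{prpInvExpDualIntCoho}. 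If $\theta=\dd\zeta$ for $\zeta\in\fc^1(M,\lieg^\ast)$, then $F^\ast(\dd\zeta)=\MW^\ast(\zeta)$ by the identity $\MW^\ast=F^\ast\circ\dd$ of\ Example\ \ref{exaVanConn}, whence $\expev_\phi=\exp(i\MW^\ast(\zeta))\in\expvan_P$, so $\expev_{[\phi]}=1$ in $\expobs_P$; this shows the assignment descends to cohomology classes. Linearity follows from linearity of $F^\ast$ and the Weyl-type additivity $\expev_\phi\,\expev_{\phi'}=\expev_{\phi+\phi'}$ modulo $\exptriv_P$.

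For (ii), the linear part $\phi_V=-\de\theta$ lies in $\de\fcdd^2(M,\lieg^\ast)$ since $\dd\theta=0$; by the first implication of\ Proposition\ \ref{prpRadExpYM} (applied with the representative $-\de\theta$, which is $\de$ of the closed, compactly supported form $\theta$, hence trivially of the form $\de(\fc^2\cap\dd\ftc^1)$ is too strong—rather one uses directly that for any $\expev_{[\psi]}$ one computes $\upsilon_P(\expev_{[\psi]},\expev_{[\phi]})=(\psi_V,G(-\de\theta))_h$ and Stokes' theorem together with $G\de\theta=-\de G\theta+\de\theta-\Box(\cdots)$... ) it is cleaner to argue: $\psi_V$ is coclosed (Proposition\ \ref{prpInvExpDualIntCoho}), so $(\psi_V,G\de\theta)_h=(\psi_V,\de G\theta)_h=(\dd\psi_V, \text{something})$—instead I would simply invoke Proposition\ \ref{prpRadExpYM} after noting that $\theta$ closed and compactly supported implies $\de\theta\in\de\fcdd^2(M,\lieg^\ast)$ and then, for the degeneracy, use $(\psi_V,G\de\theta)_h = -(\de\psi_V,G\theta)_h + \text{boundary} = 0$ by coclosedness of $\psi_V$ and formal self-adjointness properties of $\de$ and $G$ (Proposition\ \ref{prpIntertwiners} plus\ \eqref{eqGreenAdjoint}). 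So $\expev_{[\phi]}\in\exprad_P$, establishing that $\curvstar$ is a presymplectic homomorphism. For (iii), naturality is the content of the commutative diagram\ \eqref{eqCurvNat}: for a bundle map $f:P\to Q$ covering $\ul f$, one has $F\circ\sect(\conn(f))=\ul f^\ast\circ F$ (Remark\ \ref{remCurvAb}), which dualizes to $\sc(\conn(f)^\dagger)\circ F^\ast_P = F^\ast_Q\circ \ul f_\ast$ on compactly supported $2$-forms, exactly as exhibited in\ Example\ \ref{exaNonInjExpConn}; composing with $\exp$ and passing to the quotients gives $\PSA(f)\circ\curvstar_P = \curvstar_Q\circ\hcdd^2(\ul f)$, which is the required naturality square. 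The functoriality of $\hcdd^2((\cdot)_\base)$ as a covariant functor on $\PrBun_\GHyp$ is inherited from covariance of $\fc^2(\cdot)$ and the push-forward on cohomology.

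The main work is (iv), injectivity on $\PrBun_\GHypF$. Suppose $M$ is of finite type and $\curvstar([\theta])=1$ in $\expobs_P$, i.e.\ $\exp(iF^\ast(\theta))\in\expvan_P$. By\ Theorem\ \ref{thmExpObsSeparateConn} this means $F^\ast(\theta)\in\MW^\ast(\fc^1(M,\lieg^\ast))$, so there is $\eta\in\fc^1(M,\lieg^\ast)$ with $F^\ast(\theta)=\MW^\ast(\eta)=F^\ast(\dd\eta)$; comparing linear parts gives $\de\theta=\de\dd\eta$, and comparing the ``constant'' parts (evaluation on a reference on-shell connection, using $\tint_M F^\ast(\theta)(\tilde\lambda)\,\vol = (\theta,F(\tilde\lambda))$ etc.) gives that $\theta-\dd\eta$ pairs to zero against all on-shell curvatures. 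Then applying $\dd$ to $\de\theta=\de\dd\eta$ and using $\dd\theta=0$ yields $\Box\dd\eta=\dd\de\dd\eta=\dd\de\theta = \Box\theta$ (after adding $\dd\de$ of the closed $\theta$, which is $\dd\de\theta$, and noting $\de\dd\eta$ closed-part vanishes), hence $\Box(\theta-\dd\eta)=0$; since both $\theta$ and $\dd\eta$ have compact support and $\Box$ is injective on compactly supported forms (Theorem\ \ref{thmCausalProp}), we conclude $\theta=\dd\eta$, i.e.\ $[\theta]=0$ in $\hcdd^2(M,\lieg^\ast)$. This is precisely the argument used in\ Lemma\ \ref{lemLocCohoForms} for Maxwell $k$-forms and in the injectivity part of\ Lemma\ \ref{lemQSubfunctorConn}; the only subtlety to get right is the bookkeeping of the affine (``constant'') part of the functional versus its linear part, and ensuring that the passage from $\expvan_P$ back to $\van_P$-type statements via\ Theorem\ \ref{thmExpObsSeparateConn} is legitimate, which it is because $M$ is of finite type. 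I expect this last bookkeeping step to be the only genuine obstacle; everything else is routine naturality and Stokes' theorem.
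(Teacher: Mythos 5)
Your steps (i), (iii) and (iv) are essentially the paper's proof: descent to cohomology via $\MW^\ast=F^\ast\circ\dd$, naturality from naturality of the curvature map, and injectivity on finite-type bases by comparing linear parts ($\de\theta=\de\dd\eta$), applying $\dd$ and using injectivity of $\Box$ on compactly supported forms. (Minor point in (iv): equality of the exponentials does not literally give $F^\ast(\theta)=\MW^\ast(\eta)$ in $\kin_P$, but it does force the linear parts to coincide, which is all you use.)

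Step (ii), however, contains a genuine error. You try to prove the stronger claim that $\curvstar([\theta])$ lies in the null space $\exprad_P$, i.e.\ that $\upsilon_P(\expev_{[\psi]},\curvstar([\theta]))=0$ for \emph{every} $\expev_{[\psi]}\in\expobs_P$. This is false in general: by Proposition~\ref{prpRadExpYM} the radical (for $M$ of finite type) consists of classes with linear part in $\de\big(\fc^2(M,\lieg^\ast)\cap\dd\ftc^1(M,\lieg^\ast)\big)$, which is strictly smaller than $\de\fcdd^2(M,\lieg^\ast)$, and Theorem~\ref{thmLocCohoYM} together with Example~\ref{exaNoQuotientYM} exploits precisely the fact that $\curvstar_Q$ of a nonzero class can land \emph{outside} $\expcnt_Q\supseteq\exprad_Q$. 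The integration by parts you offer in support is also wrong: $\de$ is the formal adjoint of $\dd$ (not of $-\de$), so $(\psi_V,G\de\theta)_h=(\psi_V,\de G\theta)_h=(G\theta,\dd\psi_V)_h$, and since $\psi_V$ is only coclosed, $\dd\psi_V$ need not vanish; there is no reason for this term to be zero. What the lemma actually requires is weaker, because the source carries the trivial presymplectic form: you only need $\upsilon_P$ to vanish on pairs \emph{both} taken from the image of $\curvstar$. For $\eta,\theta\in\fcdd^2(M,\lieg^\ast)$ this follows from the computation $\upsilon_P\big(\curvstar([\eta]),\curvstar([\theta])\big)=(\de\eta,G\de\theta)_h=(\eta,\dd G\de\theta)_h=(\eta,G\dd\de\theta)_h=(\eta,G\Box\theta)_h=0$, where the closedness of $\theta$ (giving $\dd\de\theta=\Box\theta$) and $G\circ\Box=0$ on compact supports are essential — i.e.\ both arguments must be of curvature type for the cancellation to work. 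With (ii) replaced by this computation, your proof goes through and coincides with the paper's.
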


\begin{proof}
First of all, note that $\curvstar:\hcdd^2(M,\lieg^\ast)\to\PSA(P)$ is well-defined. 
In fact, $F^\ast:\fc^2(M,\lieg^\ast)\to\inv_P$ maps $\dd\fc^1(M,\lieg^\ast)$ 
to $\MW^\ast(\fc^1(M,\lieg^\ast))$, {\em cfr.}\ Example\ \ref{exaCurvObs} and\ Example\ \ref{exaVanConn}, 
and via exponentiation $\inv_P$ is mapped to $\expinv_P$. 
Furthermore, $\curvstar:\hcdd^2(M,\lieg^\ast)\to\PSA(P)$ preserves the presymplectic structures 
since, for $\eta,\theta\in\fcdd^2(M,\lieg^\ast)$, the linear parts of $F^\ast(\eta)$ and of $F^\ast(\theta)$ 
are $-\de\eta$ and respectively $-\de\theta$. Therefore, it follows that 
\begin{equation*}
\upsilon_P\big(\curvstar([\eta]),\curvstar([\theta])\big)=(\de\eta,G\de\theta)_h
=(\eta,G\dd\de\theta)_h=(\eta,G\Box\theta)_h=0\,.
\end{equation*}
Notice that $\curvstar:\hcdd^2(M,\lieg^\ast)\to\PSA(P)$ is injective if $M$ is of finite type: 
Assuming that $\theta\in\fcdd^2(M,\lieg^\ast)$ is such that $\curvstar([\theta])$ vanishes in $\expobs_P$, 
one finds $\eta\in\fc^1(M,\lieg^\ast)$ such that $\exp(iF^\ast(\theta))=\exp(i\MW^\ast(\eta))$, 
{\em cfr.}\ Theorem\ \ref{thmExpObsSeparateConn}, whence $\de\theta=\de\dd\eta$, 
which entails ultimately $[\theta]=[\dd\eta]=0$ in $\hcdd^2(M,\lieg^\ast)$. 
$\curvstar:\hcdd^2((\cdot)_\base,\lieg^\ast)\to\PSA$ is a natural transformation on account of the naturality 
of the curvature map $F:\sect(\conn(\cdot))\to\f^2((\cdot)_\base,\lieg)$, see\ Remark\ \ref{remCurvAb}. 
\end{proof}

\begin{theorem}\label{thmLocCohoYM}
Let $m\geq2$ and $G=U(1)$. 
Consider a principal bundle map $f:P\to Q$ covering a causal embedding $\ul{f}:M\to N$ between 
$m$-dimensional globally hyperbolic spacetimes of finite type. Then $\PSA(f):\PSA(P)\to\PSV(Q)$ is injective 
if and only if $\hcdd^2(\ul{f},\lieg^\ast):\hcdd^2(M)\to\hcdd^2(N,\lieg^\ast)$ is injective too. 
\end{theorem}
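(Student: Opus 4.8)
The plan is to mimic the proof of Theorem~\ref{thmLocCohoForms} for Maxwell $k$-forms, since the structure is essentially the same once one has the natural transformation $\curvstar:\hcdd^2((\cdot)_\base,\lieg^\ast)\to\PSA$ from Lemma~\ref{lemLocCohoYM} in hand. First I would record the commutative diagram induced by the principal bundle map $f:P\to Q$ covering $\ul{f}:M\to N$:
\begin{equation*}
\xymatrix@C=3em{
\hcdd^2(M,\lieg^\ast)\ar[d]_{\curvstar}\ar[r]^{\hcdd^2(\ul{f},\lieg^\ast)} & \hcdd^2(N,\lieg^\ast)\ar[d]^{\curvstar}\\
\PSA(P)\ar[r]_{\PSA(f)} & \PSA(Q)
}
\end{equation*}
where both vertical arrows are injective because $M$ and $N$ are of finite type (Lemma~\ref{lemLocCohoYM}). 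The diagram commutes by naturality of $\curvstar$, which in turn rests on the naturality of the curvature map $F$, see Remark~\ref{remCurvAb}. One direction is then immediate: if $\PSA(f)$ is injective, then the composition $\PSA(f)\circ\curvstar$ is injective, hence so is $\hcdd^2(\ul{f},\lieg^\ast)$.

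For the converse, suppose $\hcdd^2(\ul{f},\lieg^\ast)$ is injective and take $\expev_{[\phi]}\in\ker(\PSA(f))$. By Proposition~\ref{prpSymInj}, applied to the presymplectic homomorphism $\PSA(f)$, the element $\expev_{[\phi]}$ lies in the null space $\exprad_P$ of $\upsilon_P$. Since $M$ is of finite type, Proposition~\ref{prpRadExpYM} gives a representative $\expev_\phi\in\expinv_P$ with linear part $\phi_V\in\de(\fc^2(M,\lieg^\ast)\cap\dd\ftc^1(M,\lieg^\ast))$; in particular $\phi_V=\de\theta$ for some $\theta\in\fcdd^2(M,\lieg^\ast)$. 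The idea is then to compare $\expev_{[\phi]}$ with $\curvstar([\theta])$ and show they coincide in $\expobs_P$. This is where one uses that $\expvan_P=\exp(i\MW^\ast(\fc^1(M,\lieg^\ast)))$ (Theorem~\ref{thmExpObsSeparateConn}) together with the fact that both $\expev_\phi$ and $\exp(iF^\ast(\theta))$ evaluate to $1$ on a fixed reference connection $\tilde{\lambda}\in\sol_P$ (whose existence is guaranteed by Remark~\ref{remYMSolExist}) and have the same linear part $\de\theta$ — one can arrange the additive constants to match by adjusting $\theta$ within its cohomology class or by a trivial shift in $\exptriv_P$, exactly as in the proof of Theorem~\ref{thmLocRecovedYM}. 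Hence $\expev_{[\phi]}=\curvstar([\theta])$.

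Now chase the diagram: $0=\PSA(f)\expev_{[\phi]}=\PSA(f)\curvstar([\theta])=\curvstar(\hcdd^2(\ul{f},\lieg^\ast)[\theta])$. Since $\curvstar$ on $N$ is injective, $\hcdd^2(\ul{f},\lieg^\ast)[\theta]=0$, and since $\hcdd^2(\ul{f},\lieg^\ast)$ is injective by hypothesis, $[\theta]=0$ in $\hcdd^2(M,\lieg^\ast)$. But then $\curvstar([\theta])=0$, i.e.\ $\expev_{[\phi]}$ is trivial in $\expobs_P$. This proves $\ker(\PSA(f))$ is trivial, so $\PSA(f)$ is injective.

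The main obstacle I anticipate is the bookkeeping in the middle step, namely verifying cleanly that $\expev_{[\phi]}=\curvstar([\theta])$: one must handle the additive constant (the $\bfone$-component) carefully, since affine characters are only defined modulo $\exptriv_P$, and one must check that matching linear parts plus matching value on one on-shell connection really forces equality in $\expinv_P/\expvan_P$. All the pieces for this are already present in Theorem~\ref{thmExpObsSeparateConn} and the affine-space arguments of Section~\ref{secAffObsYM}, but assembling them requires the same care exercised in the proof of Theorem~\ref{thmLocRecovedYM}. Everything else is a routine transcription of the Maxwell $k$-form argument (Theorem~\ref{thmLocCohoForms}) to the $U(1)$ Yang–Mills setting with $\lieg^\ast$-coefficients.
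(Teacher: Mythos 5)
The skeleton of your argument (the commutative diagram from Lemma~\ref{lemLocCohoYM}, the easy direction, and the use of Proposition~\ref{prpSymInj} and Proposition~\ref{prpRadExpYM} to write $\phi_V=\mp\de\theta$ with $\theta\in\fcdd^2(M,\lieg^\ast)$) matches the paper. The genuine gap is the middle step, where you claim $\expev_{[\phi]}=\curvstar([\theta])$ in $\expobs_P$. Having the same linear part only forces the two classes to differ by a \emph{constant} functional $c\in U(1)$, i.e.\ $\curvstar([\theta])=c\,\expev_{[\phi]}$, and $c$ need not be $1$: for instance any non-trivial constant functional lies in $\exprad_P$ with $\theta=0$, while $\curvstar([0])=1$. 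Your proposed repairs cannot remove $c$. There is no reason why $\expev_\phi$ or $\exp(iF^\ast(\theta))$ should take the value $1$ on a reference solution $\tilde\lambda$ (these values are exactly the flux-type quantities of Remark~\ref{remChargeObs}); adjusting $\theta$ within its cohomology class changes $F^\ast(\theta)$ by $\MW^\ast(\eta)$, which dies in $\expobs_P$ (this is precisely why $\curvstar$ is well defined on classes), so it cannot alter $c$; and shifts by $\exptriv_P$ exponentiate to the constant $1$. Consequently your next step, ``$\curvstar$ on $N$ injective $\Rightarrow\hcdd^2(\ul{f},\lieg^\ast)[\theta]=0$'', is unjustified: what the chase actually yields is $\curvstar([\ul{f}_\ast\theta])=c$, a possibly non-trivial constant, and plain injectivity of $\curvstar$ says nothing about that.

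The paper's proof handles exactly this point. It keeps the constant, notes that constant functionals are fixed by $\PSA(f)$ (the $\bfone$-component argument of Remark~\ref{remMagObsConn}), and pushes the relation forward to obtain $\curvstar([\ul{f}_\ast\theta])=c$ in $\expobs_Q$. Then, since $N$ is of finite type, Theorem~\ref{thmExpObsSeparateConn} gives $\expvan_Q=\exp\big(i\MW^\ast(\fc^1(N,\lieg^\ast))\big)$, and comparing linear parts yields $\de\ul{f}_\ast\theta=\de\dd\xi$; together with $\dd\ul{f}_\ast\theta=0$ and compact supports this forces $\ul{f}_\ast\theta=\dd\xi$, hence $[\ul{f}_\ast\theta]=0$ \emph{and} $c=1$. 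Only at this stage does the hypothesis that $\hcdd^2(\ul{f},\lieg^\ast)$ is injective give $[\theta]=0$, whence $\expev_{[\phi]}=c^{-1}\curvstar([\theta])=1$. So you should replace the ``match the constants at a reference connection'' step by this push-forward-the-constant argument; as written, the key identity your diagram chase relies on is false in general.
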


\begin{proof}
From a principal bundle map $f:P\to Q$ covering a causal embedding $\ul{f}:M\to N$ between 
$m$-dimensional globally hyperbolic spacetimes of finite type, on account of\ Lemma\ \ref{thmLocViolatedExpYM}, 
one gets the commutative diagram in $\PSymA$ displayed below, whose vertical arrows are injective: 
\begin{equation*}
\xymatrix@C=3.3em{
\hcdd^2(M,\lieg^\ast)\ar[d]_{\curvstar}\ar[r]^{\hcdd^2(\ul{f},\lieg^\ast)} 
& \hcdd^2(N,\lieg^\ast)\ar[d]^\curvstar\\
\PSA(P)\ar[r]_{\PSA(f)} & \PSA(Q)
}
\end{equation*}
Injectivity of $\PSA(f)$ entails the same property for $\hcdd^2(\ul{f},\lieg^\ast)$. 
For the converse implication, suppose that $\hcdd^2(\ul{f},\lieg^\ast)$ is injective 
and take $\expev_{[\phi]}\in\expobs_P$ in the kernel of $\PSA(f)$. 
Since Proposition\ \ref{prpSymInj} holds true also for presymplectic Abelian groups, 
namely $\ker(\PSA(f))$ is a subgroup of $\exprad_P$, we deduce that $\expev_{[\phi]}\in\exprad_P$. 
In particular, according to\ Proposition\ \ref{prpRadExpYM}, 
there exists $\theta\in\fcdd^2(M,\lieg^\ast)$ such that $-\de\theta=\phi_V$. 
By construction, $\curvstar([\theta])$ agrees with $\expev_{[\phi]}$ up to a constant phase $c\in U(1)$, 
namely $\curvstar([\theta])=c\,\expev_{[\phi]}$. 
Regarding the constant phase $c$ as a constant functional in $\expobs_P$ and observing that constant functionals 
are left unchanged by $\PSA(f)$, see\ also\ Remark\ \ref{remMagObsConn}, one deduces that 
\begin{equation*}
\curvstar([\ul{f}_\ast\theta])=\PSA(f)\big(\curvstar([\theta])\big)
=\PSA(f)(c\,\expev_{[\phi]})=c\,\PSA(f)\expev_{[\phi]}=c\,.
\end{equation*}
Since $M$ is of finite type, recalling Theorem\ \ref{thmExpObsSeparateConn}, 
the last identity entails that $\de\ul{f}_\ast\theta=\de\dd\xi$ for a suitable $\xi\in\fc^1(N,\lieg^\ast)$, 
therefore $\ul{f}_\ast\theta=\dd\xi$, whence $[\ul{f}_\ast\theta]=0$ and $c=1$. 
Yet, according to our hypothesis, $\hcdd^2(\ul{f},\lieg^\ast)$ is injective, 
hence there exists $\eta\in\fc^1(M,\lieg^\ast)$ such that $\theta=\dd\eta$. 
This shows that $\expev_{[\phi]}=1$, thus concluding the proof. 
\end{proof}

Up to now we have shown that the covariant functor $\PSA:\PrBun_\GHyp\to\PSymA$ 
describing observables defined via affine characters for the Yang-Mills model with structure group $U(1)$ 
over globally hyperbolic spacetimes violates the locality property. 
This result, at least on the full subcategory $\PrBun_\GHypF$ whose objects have base spacetimes of finite type, 
is related to the failure of locality for the cohomology functor 
$\hcdd^2((\cdot)_\base,\lieg^\ast):\PrBun_\GHyp\to\PSymA$, {\em cfr.}\ Theorem\ \ref{thmLocCohoYM}. 
The situation is very similar to the one in\ Chapter\ \ref{chMaxwell} for Maxwell $1$-forms. One might wonder 
whether locality for $\PSA$ can be recovered by suitable quotients. Below we show that this is not possible. 
The argument relies on the similar result presented in\ Theorem\ \ref{thmLocNotRecoveredForms}. 
In fact, for a given principal $U(1)$-bundle $P$ over a globally hyperbolic spacetime $M$, 
this is due to the following facts: 
\begin{enumerate}
\item Compactly supported forms defining gauge invariant linear 
functionals for Maxwell $1$-forms in the null space $\rad_M$ of the presymplectic form $\tau_M$ 
lie in $\de(\fc^2(M)\cap\dd\ftc^1(M))$, see\ Proposition\ \ref{prpRadForms}; 
\item Sections of $\sc(\conn(P)^\dagger)$ whose linear part is of the type 
$\de(\fc^2(M,\lieg^\ast)\cap\dd\ftc^1(M,\lieg^\ast))$ always provide gauge invariant affine characters in 
the null space $\exprad_P$ of the presymplectic form $\upsilon_P$, {\em cfr.}\ Proposition\ \ref{prpRadExpYM}. 
\end{enumerate}

\begin{example}\label{exaNoQuotientYM}
Consider the same geometrical setting of\ Example\ \ref{exaNoQuotientForms} in the case $k=1$. 
However, on top of each globally hyperbolic spacetime, take the trivial principal $U(1)$-bundle. 
Therefore, both in dimension $m\geq3$ and in dimension $m=2$, one has three trivial principal $U(1)$-bundles 
$P$, $Q$ and $R$ respectively covering the globally hyperbolic spacetimes $M$, $N$ and $O$ 
introduced in the first part of\ Example\ \ref{exaNoQuotientForms} for $m\geq3$ and in the second part for $m=2$. 
Furthermore, one has the obvious principal bundle maps $f:P\to Q$ and $h:P\to R$ 
covering the causal embeddings $\ul{f}:M\to N$ and $\ul{h}:M\to O$ 
described in the example mentioned above.\footnote{Note the slightly different notation compared 
to\ Example\ \ref{exaNoQuotientForms}. The causal embeddings, which are indicated there as $f$ and $h$, 
are denoted here by $\ul{f}$ and $\ul{h}$, while $f$ and $h$ label now the principal bundle maps 
covering $\ul{f}$ and $\ul{h}$ which act as the identity on the fibers.} 

Let us first consider the case $m\geq3$. 
The topology of $M$ has been chosen in such a way that, according to Example \ref{exaNonTrivRadConn}, 
one can exhibit a non-constant element $\expev_{[\phi]}\in\exprad_P$ of the null space. 
In particular, notice that $\hcdd^2(M,\lieg^\ast)\simeq\bbR$. 
The topology of $O$ is such that $\hcdd^2(O,\lieg^\ast)$ vanishes, 
whence $\PSA(h)\expev_{[\phi]}=1$ since it is represented by $\exp(i\MW^\ast(\eta))$ 
for a suitable $\eta\in\fc^1(M,\lieg^\ast)$ and such functional is trivial on-shell. 
For $m=2$, we refer to\ Example\ \ref{exaNonInjExpConn} to exhibit a non-trivial element 
$\expev_{[\phi]}\in\exprad_P$ of the null space which vanishes when mapped to $\expobs_R$ via $\PSA(h)$. 
In both cases $m\geq3$ and $m=2$, $N$ has compact Cauchy hypersurfaces, 
hence $\exprad_Q$ contains only constant functionals and actually even $\expcnt_Q$ is such, 
{\em cfr.}\ Proposition\ \ref{prpRadExpYM} and\ Proposition\ \ref{prpCntExpYM}. 
Therefore, either $\PSA(f)\expev_{[\phi]}$ is constant or $\PSA(f)$ maps $\expev_{[\phi]}$ out of $\expcnt_Q$. 
However, the topology of $N$ is such that $\hcdd^2(\ul{f},\lieg^\ast)$ is injective 
(also an isomorphism for $m=2$, in fact), therefore, by\ Theorem\ \ref{thmLocCohoYM}, 
$\PSA(f)$ is injective, whence $\PSA(f)\expev_{[\phi]}\notin\expcnt_Q$. 
\end{example}

To state the next theorem, we need to adapt the definition of a (quotientable) subfunctor to the case 
where the target is the category $\PSymA$ of presymplectic Abelian groups. 
In fact, simply replacing presymplectic vector spaces with presymplectic Abelian groups 
everywhere in\ Definition\ \ref{defQSubfunctorV}, one gets the corresponding notion of a (quotientable) subfunctor 
and, moreover, rephrasing Proposition\ \ref{prpQuotientFunctor}, one can introduce 
the quotient functor whenever a quotientable subfunctor of a covariant functor to $\PSymA$ is assigned. 

\begin{theorem}\label{thmLocNotRecoveredYM}
Let $m\geq2$ and $G=U(1)$. The covariant functor $\PSA:\PrBun_\GHyp\to\PSymA$ 
has no quotientable subfunctor $\mathfrak{Q}:\PrBun_\GHyp\to\PSymA$ 
which recovers the locality property of\ \cite[Definition\ 2.1]{BFV03}, namely such that 
$\PSA/\mathfrak{Q}(f)$ is injective for each principal $G$-bundle map $f$ covering a causal embedding. 
\end{theorem}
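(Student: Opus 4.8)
The argument follows closely the strategy used for Maxwell $k$-forms in Theorem~\ref{thmLocNotRecoveredForms}, transposed to the setting of affine characters and presymplectic Abelian groups. I would argue by contradiction: suppose $\mathfrak{Q}:\PrBun_\GHyp\to\PSymA$ is a quotientable subfunctor of $\PSA:\PrBun_\GHyp\to\PSymA$ such that $\PSA/\mathfrak{Q}(f)$ is injective for every principal $G$-bundle map $f$ covering a causal embedding. The plan is then to feed in the diagram produced in Example~\ref{exaNoQuotientYM} and derive a contradiction with the fact that $\mathfrak{Q}$ is quotientable, i.e.\ that $\mathfrak{Q}(R')$ lies in the null space of $\PSA(R')$ for every object $R'$.

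First I would recall the geometric data of Example~\ref{exaNoQuotientYM}: one has a diagram in $\PrBun_\GHyp$ of the form
\begin{equation*}
\xymatrix{
Q && R\\
& P\ar[lu]^f\ar[ru]_h
}
\end{equation*}
(consider the case $m\geq3$ for the first construction in that example and $m=2$ for the second), together with a non-constant element $\expev_{[\phi]}\in\exprad_P$ which is annihilated by $\PSA(h)$, whereas $\PSA(f)\expev_{[\phi]}$ does not lie in $\expcnt_Q$, in particular it does not lie in the null space $\exprad_Q$. Since $\PSA(h)$ sends $\expev_{[\phi]}$ to the unit of $\expobs_R$, while $\PSA/\mathfrak{Q}$ is required to make $\PSA/\mathfrak{Q}(h)$ injective, the class of $\expev_{[\phi]}$ must already be trivial in the quotient $\PSA(P)/\mathfrak{Q}(P)$; that is, $\expev_{[\phi]}\in\mathfrak{Q}(P)$. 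By the second defining condition of a subfunctor (the analogue of Definition~\ref{defQSubfunctorV} for $\PSymA$), $\mathfrak{Q}(f)$ is the restriction of $\PSA(f)$ to $\mathfrak{Q}(P)$, so $\PSA(f)\expev_{[\phi]}$ must lie in $\mathfrak{Q}(Q)$.

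The contradiction is then immediate: since $\mathfrak{Q}$ is quotientable, $\mathfrak{Q}(Q)$ is a presymplectic subgroup of the null space $\exprad_Q$ of $\upsilon_Q$, yet Example~\ref{exaNoQuotientYM} shows precisely that $\PSA(f)\expev_{[\phi]}$ is \emph{not} in $\exprad_Q$ (it is not even in $\expcnt_Q$). This rules out the existence of such a $\mathfrak{Q}$ and completes the proof. I expect the substantive content of the argument to be entirely contained in Example~\ref{exaNoQuotientYM} and its supporting results (Proposition~\ref{prpRadExpYM}, Proposition~\ref{prpCntExpYM}, Theorem~\ref{thmLocCohoYM}, and the separation statement Theorem~\ref{thmExpObsSeparateConn}); the main obstacle, already handled there, is verifying that on the target spacetime $N$ (base of $Q$) the push-forward on compactly supported second cohomology is injective, so that $\PSA(f)$ itself is injective and hence cannot collapse $\expev_{[\phi]}$ into the null space. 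Once that geometric input is in hand, the functorial bookkeeping above is routine.
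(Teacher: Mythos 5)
Your proposal is correct and follows essentially the same route as the paper's proof: a contradiction argument feeding the diagram and the element $\expev_{[\phi]}$ of Example~\ref{exaNoQuotientYM} into the subfunctor/quotientability axioms, concluding from $\PSA(f)\expev_{[\phi]}\notin\expcnt_Q\supseteq\exprad_Q$ that $\mathfrak{Q}(Q)$ cannot sit inside the null space. The supporting results you cite are exactly those the paper relies on in that example.
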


\begin{proof}
Let $m\geq2$ and by contradiction, assume there exists a quotientable subfunctor 
$\mathfrak{Q}:\PrBun_\GHyp\to\PSymA$ of $\PSA:\PrBun_\GHyp\to\PSymV$ 
such that the locality property is recovered taking the quotient by $\mathfrak{Q}$. 
Example\ \ref{exaNoQuotientYM} shows that, in both cases $m\geq3$ and $m=2$, 
one gets the diagram in $\PrBun_\GHyp$ displayed below, where all principal bundles are trivial: 
\begin{equation*}
\xymatrix{
Q && R\\
&P\ar[lu]^f\ar[ru]_h
}
\end{equation*}
Furthermore, we know that there exists a non-trivial element $\expev_{[\phi]}$ of $\PSA(P)$ 
which lies in the kernel of $\PSA(h):\PrBun_\GHyp\to\PSymA$. 
Since the quotient by $\mathfrak{Q}$ is supposed to recover locality, 
$\expev_{[\phi]}$ must be contained inside $\mathfrak{Q}(P)$. 
However, $\mathfrak{Q}:\PrBun_\GHyp\to\PSymA$ being a subfunctor of $\PSA$, 
$\mathfrak{Q}(f):\mathfrak{Q}(P)\to\mathfrak{Q}(Q)$ coincides 
with the restriction of $\PSA(f):\PSA(P)\to\PSA(Q)$ to $\mathfrak{Q}(P)$, hence $\PSA(f)\expev_{[\phi]}$ 
should lie in $\mathfrak{Q}(Q)$. This is in contrast with the assumption that $\mathfrak{Q}$ is quotientable. 
In fact, Example\ \ref{exaNoQuotientYM} shows that 
$\PSA(f)\expev_{[\phi]}$ is not even contained in $\expcnt_Q\supseteq\exprad_Q$. 
\end{proof}

\begin{remark}\label{remNoLocQuotExpYM}
Theorem\ \ref{thmLocNotRecoveredYM} proves that 
locality cannot be recovered by quotientable subfunctors of $\PSA:\PrBun_\GHyp\to\PSymA$. 
In particular, for $m\geq3$, this happens even restricting $\PSA$ to the full subcategory of $\PrBun_\GHyp$ 
whose objects have connected $m$-dimensional globally hyperbolic spacetimes as bases. 
Note that this includes the most prominent physical situation, namely $m=4$. 
\end{remark}

\subsection{Recovering isotony \backtick{a} la Haag-Kastler}
In the previous subsection we exhibited the failure of locality 
for the covariant functor $\PSA:\PrBun_\GHyp\to\PSymA$ 
describing affine character observables for the Yang-Mills model with structure group $U(1)$ 
over globally hyperbolic spacetimes. Furthermore, Theorem\ \ref{thmLocNotRecoveredYM} shows that 
the locality property cannot be recovered taking the quotient by a suitable quotientable subfunctor. 
We want to show that isotony can be recovered in the Haag-Kastler framework \cite{HK64}. 
The approach is basically the same as the one in\ Subsection\ \ref{subHKForms} for Maxwell $k$-forms. 
The material of this subsection can be found also in\ \cite[Subsection\ 6]{BDHS14}. 

Let us fix a target principal $U(1)$-bundle $P$ over an $m$-dimensional globally hyperbolic spacetime $M$. 
Instead of looking at the category $\PrBun_\GHyp$, we restrict the functor $\PSA$ to the subcategory 
$\PrBun_P$ of $\PrBun_\GHyp$ whose objects are restrictions $P_O$ of the fixed target principal bundle $P$ 
to causally compatible open subsets $O\subseteq M$. Note that each object comes together 
with an inclusion $\iota_{O\,M}:P_O\to P$, which is indeed a morphism in $\PrBun_\GHyp$. 
Furthermore, morphisms in $\PrBun_P$ are only those induced by inclusions, 
namely there is a morphism $\iota_{O\,O^\prime}$ from $P_O$ to $P_{O^\prime}$ 
if and only if $O\subseteq O^\prime$ and this is the principal bundle map 
covering the causal embedding specified by the inclusion. 
In particular, each morphism $\iota_{O\,O^\prime}$ from $P_O$ to $P_{O^\prime}$ in $\PrBun_P$ 
provides the commutative diagram in the category $\PrBun_P$ displayed below: 
\begin{equation}\label{eqPrBunUniverse}
\xymatrix{
& P\\
P_O\ar[rr]_{\iota_{O\,O^\prime}}\ar[ru]^{\iota_{O\,M}} && P_{O^\prime}\ar[lu]_{\iota_{O^\prime\,M}}
}
\end{equation}
With an abuse of notation, we denote the restriction of $\PSA:\PrBun_\GHyp\to\PSymA$ 
to the subcategory $\PrBun_P$ by the same symbol, 
namely we write $\PSA:\PrBun_P\to\PSymA$ for the restricted functor. 
Our aim is to find a quotientable subfunctor of $\PSA:\PrBun_P\to\PSymA$ 
such that the quotient functor, besides causality and the time-slice axiom, fulfils isotony too, 
namely such that each morphism in $\PrBun_P$ induces an injective morphism in $\PSymA$. 
The existence of such quotientable subfunctor is shown below. 
The proof essentially relies on the observation that the category $\PrBun_P$ has $P$ as its terminal object. 

\begin{lemma}\label{lemKerSubfunctorYM}
Let $m\geq2$, $G=U(1)$ and consider a principal $G$-bundle $P$ 
over an $m$-dimensional globally hyperbolic spacetime $M$. 
The following assignment defines a quotientable subfunctor $\mfker_P:\PrBun_P\to\PSymA$ 
of the covariant functor $\PSA:\PrBun_P\to\PSymA$: 
To each object $P_O$ in $\PrBun_P$, assign the object $\mfker_P(P_O)=\ker(\PSA(\iota_{O\,M}))$ 
(endowed with the trivial presymplectic structure) in $\PSymA$ 
and, to each morphism $\iota_{O\,O^\prime}:P_O\to P_{O^\prime}$ in $\PrBun_P$, 
assign the morphism $\mfker_P(\iota_{O\,O^\prime}):\mfker_P(P_O)\to\mfker_P(P_{O^\prime})$ in $\PSymA$ 
obtained as the restriction of $\PSA(\iota_{O\,O^\prime}):\PSA(P_O)\to\PSA(P_{O^\prime})$ to $\mfker_P(P_O)$. 
\end{lemma}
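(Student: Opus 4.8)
The plan is to prove this lemma by an almost verbatim adaptation of the argument used for Maxwell $k$-forms in Lemma~\ref{lemKerSubfunctorForms}, since the structural features of $\PrBun_P$ exactly parallel those of $\GHyp_M$: the key point is that $P$ is the unique terminal object of $\PrBun_P$, so that every object $P_O$ comes with a canonical morphism $\iota_{O\,M}:P_O\to P$, and every morphism $\iota_{O\,O^\prime}$ fits into the commutative triangle \eqref{eqPrBunUniverse}. First I would observe that, for each object $P_O$ in $\PrBun_P$, the presymplectic homomorphism $\PSA(\iota_{O\,M}):\PSA(P_O)\to\PSA(P)$ exists because $\iota_{O\,M}$ is a morphism in $\PrBun_\GHyp$ and $\PSA$ is the restriction of the functor of Theorem~\ref{thmFunctorCausTSAExpYM}; hence it makes sense to set $\mfker_P(P_O)=\ker(\PSA(\iota_{O\,M}))$. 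Since Proposition~\ref{prpSymInj} holds true also for presymplectic Abelian groups (as noted in the proof of Theorem~\ref{thmLocCohoYM}), $\ker(\PSA(\iota_{O\,M}))$ is a subgroup of the null space $\exprad_{P_O}$ of the presymplectic form $\upsilon_{P_O}$. Therefore $\upsilon_{P_O}$ vanishes identically on $\mfker_P(P_O)$, which justifies endowing $\mfker_P(P_O)$ with the trivial presymplectic structure and, once functoriality is established, makes $\mfker_P$ automatically quotientable in the sense of the analogue of Definition~\ref{defQSubfunctorV} for $\PSymA$.

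Next I would check that the assignment is well-defined on morphisms. Given a morphism $\iota_{O\,O^\prime}:P_O\to P_{O^\prime}$ in $\PrBun_P$, the diagram \eqref{eqPrBunUniverse} together with the fact that $\iota_{O\,M}=\iota_{O^\prime\,M}\circ\iota_{O\,O^\prime}$ and functoriality of $\PSA$ gives a commutative triangle in $\PSymA$:
\begin{equation*}
\xymatrix{
& \PSA(P)\\
\PSA(P_O)\ar[rr]_{\PSA(\iota_{O\,O^\prime})}\ar[ru]|{\PSA(\iota_{O\,M})}
&& \PSA(P_{O^\prime})\ar[lu]|{\PSA(\iota_{O^\prime\,M})}
}
\end{equation*}
From this triangle one reads off that $\PSA(\iota_{O\,O^\prime})$ maps $\ker(\PSA(\iota_{O\,M}))$ into $\ker(\PSA(\iota_{O^\prime\,M}))$: indeed, if $\PSA(\iota_{O\,M})\expev_{[\phi]}=1$ in $\PSA(P)$, then $\PSA(\iota_{O^\prime\,M})\big(\PSA(\iota_{O\,O^\prime})\expev_{[\phi]}\big)=\PSA(\iota_{O\,M})\expev_{[\phi]}=1$. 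Hence the restriction $\mfker_P(\iota_{O\,O^\prime}):\mfker_P(P_O)\to\mfker_P(P_{O^\prime})$ is well-defined; it trivially preserves the (trivial) presymplectic structures and so is a morphism in $\PSymA$.

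Finally, functoriality of $\mfker_P:\PrBun_P\to\PSymA$ is inherited from that of $\PSA:\PrBun_P\to\PSymA$, since $\mfker_P(\iota_{O\,O})=\mfker_P(\id_{P_O})$ is the restriction of $\id_{\PSA(P_O)}$ and composition is respected because restriction of composites equals composite of restrictions. This shows that $\mfker_P$ is a subfunctor of $\PSA$, and, being pointwise a subgroup of the null space of the ambient presymplectic form, it is quotientable. I do not anticipate any serious obstacle here: the argument is a formal transcription of Lemma~\ref{lemKerSubfunctorForms}, the only two facts used being the existence of a terminal object in $\PrBun_P$ and the validity of Proposition~\ref{prpSymInj} for presymplectic Abelian groups, both of which are already available. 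The mild point requiring a word of care is simply to confirm that $\PrBun_P$ really is a well-defined subcategory with $P$ terminal --- but this is immediate from the construction of $\PrBun_P$ described just before the lemma.
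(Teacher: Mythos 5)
Your proposal is correct and follows essentially the same route as the paper's own proof: it invokes the terminal object $P$ of $\PrBun_P$ together with the $\PSymA$-version of Proposition~\ref{prpSymInj} to place $\ker(\PSA(\iota_{O\,M}))$ inside the null space $\exprad_{P_O}$, and then uses the commutative triangle obtained from \eqref{eqPrBunUniverse} to restrict $\PSA(\iota_{O\,O^\prime})$ to the kernels, inheriting functoriality from $\PSA$. No gaps to report.
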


\begin{proof}
For each object $P_O$ in $\PrBun_P$, one has also a morphism $\iota_{O\,M}$ to the terminal object $P$. 
Introducing the Abelian group $\ker(\PSA(\iota_{O\,M}))$ and recalling that Proposition\ \ref{prpSymInj} 
has an identical formulation for the category $\PSymA$ of presymplectic Abelian groups, 
one deduces that $\ker(\PSA(\iota_{O\,M}))$ is a subgroup of $\exprad_{P_O}$. 
Since the restriction to $\exprad_{P_O}$ of the presymplectic structure of $\PSV(P_O)$ is trivial, 
it is natural to define $\mfker_P(P_O)$ as the Abelian group $\ker(\PSV(\iota_{O\,M}))$ 
endowed with the trivial presymplectic structure. 
Furthermore, since by construction, for each object $P_O$ in $\PrBun_P$, 
$\mfker_P(P_O)$ is a subgroup of the null space of the presymplectic form of $\PSA(P_O)$, 
if $\mfker:\PrBun_P\to\PSymA$ were a subfunctor of $\PSA:\PrBun_P\to\PSymA$, then it would be quotientable. 

So far we dealt with objects in $\PrBun_P$. We still have to take care of morphisms. 
Consider a morphism $\iota_{O\,O^\prime}:P_O\to P_{O^\prime}$ in $\PrBun_P$. 
Since $P$ is the terminal object in $\PrBun_P$, one gets the commutative diagram in the category $\PrBun_P$ 
which is displayed in\ \eqref{eqPrBunUniverse}, whence the covariant functor $\PSA:\PrBun_P\to\PSymA$ 
provides a corresponding commutative diagram in the category $\PSymA$: 
\begin{equation*}
\xymatrix{
& \PSA(P)\\
\PSA(P_O)\ar[rr]_{\PSA(\iota_{O\,O^\prime})}\ar[ru]|{\PSA(\iota_{O\,M})} 
&& \PSA(P_{O^\prime})\ar[lu]|{\PSA(\iota_{O^\prime\,M})}
}
\end{equation*}
It follows that $\PSA(\iota_{O\,O^\prime})$ 
maps the kernel of $\PSA(\iota_{O\,M})$ to the kernel of $\PSA(\iota_{O^\prime\,M})$. 
The presymplectic structures involved being trivial, 
$\mfker_P(\iota_{O\,O^\prime}):\mfker_P(P_O)\to\mfker_P(P_{O^\prime})$ 
obtained from $\PSA(\iota_{O\,O^\prime})$ by restriction to the kernel 
of $\PSA(\iota_{O\,M})$ is indeed a presymplectic homomorphism. 
Clearly the functorial properties of $\PSA:\PrBun_P\to\PSymA$ are inherited by $\mfker_P:\PrBun_P\to\PSymA$. 
Therefore this is a subfunctor of $\PSA:\PrBun_P\to\PSymA$ and hence it is also quotientable 
according to the first part of the proof. 
\end{proof}

\begin{theorem}\label{thmIsotonyYM}\index{isotony}
Let $m\geq2$, $G=U(1)$ and consider a principal $G$-bundle $P$ 
over an $m$-dimensional globally hyperbolic spacetime $M$. 
The covariant functor $\PSA_P:\PrBun_P\to\PSymA$ defined as the quotient of $\PSA:\PrBun_P\to\PSymA$ 
by its quotientable subfunctor $\mfker_P:\PrBun_P\to\PSymA$ ({\em cfr.}\ Lemma\ \ref{lemKerSubfunctorYM}) 
satisfies the isotony property, 
namely $\PSA_P$ assigns an injective morphism in $\PSymA$ to each morphism in $\PrBun_P$. 
Furthermore, both causality and the time-slice axiom hold for $\PSA_P:\PrBun_P\to\PSymA$. 
\end{theorem}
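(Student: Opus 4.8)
The plan is to mimic closely the proof of Theorem~\ref{thmIsotonyForms}, which establishes the analogous statement for Maxwell $k$-forms, adapting it to the present category $\PrBun_P$ and to the target category $\PSymA$ of presymplectic Abelian groups. First I would note that by Proposition~\ref{prpQuotientFunctorAlg}'s analogue for presymplectic Abelian groups (that is, the version of Proposition~\ref{prpQuotientFunctor} with $\PSymV$ replaced by $\PSymA$, which holds by the same argument since one only replaces vector spaces with Abelian groups and null spaces of presymplectic forms with null spaces of presymplectic forms) together with Lemma~\ref{lemKerSubfunctorYM}, the quotient functor $\PSA_P=\PSA/\mfker_P:\PrBun_P\to\PSymA$ is a well-defined covariant functor. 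This reduces the theorem to verifying the three properties: isotony, causality, and the time-slice axiom.

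For isotony, the key observation is that for each object $P_O$ in $\PrBun_P$ the morphism $\PSA_P(\iota_{O\,M})$ is injective by construction, since $\PSA_P(P_O)=\PSA(P_O)/\mfker_P(P_O)$ and $\mfker_P(P_O)=\ker(\PSA(\iota_{O\,M}))$ by Lemma~\ref{lemKerSubfunctorYM}. For a general morphism $\iota_{O\,O^\prime}:P_O\to P_{O^\prime}$ in $\PrBun_P$, I would apply the functor $\PSA_P$ to the commutative diagram~\eqref{eqPrBunUniverse}, obtaining a commutative triangle in $\PSymA$ with vertices $\PSA_P(P_O)$, $\PSA_P(P_{O^\prime})$ and $\PSA_P(P)=\PSA(P)$ (the last equality because $\mfker_P(P)=\ker(\id_{\PSA(P)})=\{0\}$). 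Since $\PSA_P(\iota_{O\,M})=\PSA_P(\iota_{O^\prime\,M})\circ\PSA_P(\iota_{O\,O^\prime})$ and the left-hand side is injective, the factor $\PSA_P(\iota_{O\,O^\prime})$ must be injective too. This is the heart of the argument and it is entirely formal once Lemma~\ref{lemKerSubfunctorYM} is in place.

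Finally, causality and the time-slice axiom are inherited from the corresponding properties of $\PSA:\PrBun_\GHyp\to\PSymA$ established in Theorem~\ref{thmFunctorCausTSAExpYM}. For causality, given principal bundle maps $f,h$ in $\PrBun_P$ with causally disjoint images, the presymplectic form $\upsilon$ vanishes on the images of $\PSA(f)$ and $\PSA(h)$ already at the level of $\PSA$, and since $\PSA_P$ is a quotient of $\PSA$ by a subfunctor landing in the respective null spaces, the induced presymplectic form on $\PSA_P$ is obtained from $\upsilon$, so the vanishing descends to the quotient. For the time-slice axiom, if $\iota_{O\,O^\prime}$ covers a Cauchy morphism, then $\PSA(\iota_{O\,O^\prime})$ is an isomorphism by Theorem~\ref{thmFunctorCausTSAExpYM}; this isomorphism maps $\mfker_P(P_O)$ onto $\mfker_P(P_{O^\prime})$ (by the commutative-diagram argument of Lemma~\ref{lemKerSubfunctorYM}, applied to the inverse as well), hence descends to an isomorphism on the quotients $\PSA_P(P_O)\to\PSA_P(P_{O^\prime})$.

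The main obstacle, such as it is, is ensuring that all the formal machinery transfers cleanly from $\PSymV$ to $\PSymA$: specifically, that the quotient of a presymplectic Abelian group by a subgroup contained in its null space inherits a well-defined presymplectic form, that Proposition~\ref{prpSymInj} (kernel of a presymplectic homomorphism lies in the null space of the domain) holds verbatim for presymplectic Abelian groups, and that the quotient-functor construction goes through. All of these were already used or sketched in the proof of Theorem~\ref{thmFunctorCausTSAExpYM} and in Lemma~\ref{lemKerSubfunctorYM}, so in practice there is no genuine difficulty; the proof is a routine translation of the Maxwell $k$-form case (Theorem~\ref{thmIsotonyForms}) into the affine-character setting.
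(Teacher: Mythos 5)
Your proposal is correct and follows essentially the same route as the paper's proof: well-definedness of the quotient functor via the $\PSymA$ counterparts of Proposition~\ref{prpQuotientFunctor} and Lemma~\ref{lemKerSubfunctorYM}, injectivity of $\PSA_P(\iota_{O\,M})$ by construction, the commutative triangle from~\eqref{eqPrBunUniverse} to deduce injectivity of $\PSA_P(\iota_{O\,O^\prime})$, and inheritance of causality and the time-slice axiom from Theorem~\ref{thmFunctorCausTSAExpYM}. The extra detail you supply on how causality and the time-slice axiom descend to the quotient is a correct elaboration of what the paper leaves implicit.
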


\begin{proof}
The counterparts of\ Proposition\ \ref{prpQuotientFunctor} and of\ Lemma\ \ref{lemKerSubfunctorForms} 
for the category $\PSymA$ of presymplectic Abelian groups ensure that 
$\PSA_P:\PrBun_P\to\PSymA$ is a well-defined covariant functor. 
Injectivity of $\PSA_P(\iota_{O\,M})$ for each morphism $\iota_{O\,M}$ in $\PrBun_P$ holds by construction. 
For a morphism $\iota_{O\,O^\prime}$ in $\PrBun_P$, we have a commutative diagram in $\PSymA$ 
coming from the one in\ \eqref{eqPrBunUniverse} via the functor $\PSA_P$: 
\begin{equation*}
\xymatrix{
& \PSA_P(P)=\PSA(P)\\
\PSA_P(P_O)\ar[rr]_{\PSA_P(\iota_{O\,O^\prime})}\ar[ru]|{\PSA_P(\iota_{O\,M})} 
&& \PSA_P(P_{O^\prime})\ar[lu]|{\PSA_P(\iota_{O^\prime\,M})}
}
\end{equation*}
Since we already know that $\PSA_P(\iota_{O\,M})$ is injective, 
$\PSA_P(\iota_{O\,O^\prime})$ must be injective too since the diagram is commutative. This proves isotony. 
Both causality and the time-slice axiom are inherited 
from the corresponding properties of the functor $\PSA:\PrBun\to\PSymA$, 
{\em cfr.}\ Theorem\ \ref{thmFunctorCausTSAExpYM}. 
\end{proof}

\begin{remark}
The procedure to obtain a functor $\PSA_P:\PrBun_P\to\PSymA$ which fulfils isotony 
has an interpretation similar to\ Remark\ \ref{remIsotonyForms}. 
In short, we fix a target principal $U(1)$-bundle $P$ over an $m$-dimensional globally hyperbolic spacetime $M$ 
and we perform a suitable quotient of the covariant functor $\PSA:\PrBun_P\to\PSymA$ 
which provides $\PSA_P:\PrBun_P\to\PSymA$ fulfilling isotony. 
On each principal subbundle $P_O$ obtained restricting $P$ to a causally compatible open subset $O$ of $M$, 
this procedure identifies all observables in $\mfker_P(P_O)=\ker(\PSA(\iota{O\,M}))$ 
with the constant functional $1$. 
In case there are non-trivial observables in the subgroup of $\mfker_P(P_O)$ of $\PSA(P_O)$
there are also connections in $\sol_{P_O}$ which give non-trivial results upon evaluation on such observables. 
This follows from the definition of $\expvan_{P_O}$, 
{\em cfr.}\ \eqref{eqAffChObsDefYM} and also Theorem\ \ref{thmExpObsSeparateConn}. 
Therefore the quotient enforces implicitly a stricter on-shell condition for connections on a subregion, 
whose meaning is the following: A connection $\lambda\in\sol_{P_O}$ is on-shell 
(in the stricter sense implied by the quotient which recovers isotony for the fixed target $P$) 
if there does not exist any observable $\expev_{[\phi]}\in\PSA(P_O)$ with trivial image in $\PSA(P)$ 
which give a non-trivial outcome upon evaluation on $\lambda$, 
namely $\expev_{[\phi]}([\lambda])$ should be equal to $1$ for each $\expev_{[\phi]}\in\mfker_P(P_O)$. 
Basically, we are imposing that on subregions, observables should only provide information 
about field configurations which is also available in the full system $P$. 
Further details can be found in\ \cite[Section\ 7]{BDHS14}. 
\end{remark}

\subsection{Quantization}\label{subQuantAffCharYM}
We conclude with the quantization of the classical covariant field theory $\PSA:\PrBun_\GHyp\to\PSymA$ 
describing observables for the Yang-Mills model with structure group $U(1)$ 
over $m$-dimensional globally hyperbolic spacetimes. 
The covariant functor $\PSA$ is defined in\ Theorem\ \ref{thmFunctorCausTSAExpYM}, 
where it is also shown that it fulfils both causality and the time-slice axiom. 
Yet, locality is violated on account of\ Theorem\ \ref{thmLocViolatedExpYM} 
and moreover no quotient can recover this property, {\em cfr.}\ Theorem\ \ref{thmLocNotRecoveredYM}. 
Quantization is performed composing with the covariant functor $\CCR:\PSymA\to\CCR$, 
namely we define the covariant functor $\QFT=\CCR\circ\PSA:\PrBun_\GHyp\to\CAlg$. 
We show below that $\QFT:\PrBun_\GHyp\to\CAlg$ is a covariant quantum field theory, 
namely the quantum counterparts of both causality and the time-slice axiom hold. 
The violation of locality is still there after quantization and no proper quotient can be performed to recover locality. 
However, fixing a target principal $U(1)$-bundle $P$ over an $m$-dimensional globally hyperbolic spacetime $M$, 
one can obtain a covariant functor $\PSA_P:\PrBun_P\to\PSymA$ out of $\PSA:\PrBun_\GHyp\to\PSymA$ 
which fulfils isotony, causality and the time-slice axiom, see\ Theorem\ \ref{thmIsotonyYM}. 
The quantization of this functor provides a covariant functor $\QFT_P=\CCR\circ\PSA_P:\PrBun_P\to\CAlg$ 
fulfilling the quantum counterparts of these properties. 

Notice that, for a principal $U(1)$-bundle $P$ over an $m$-dimensional globally hyperbolic spacetime $M$, 
the generator of $\QFT(P)$ corresponding to an element $\expev_{[\phi]}\in\PSA(P)$ 
will be denoted by $\weyl_{[\phi]}$. 

\begin{remark}[Center of the $C^\ast$-algebra $\QFT(P)$]\label{remCntAlgYM}
Let $P$ be a principal $U(1)$-bundle over an $m$-dimensional globally hyperbolic spacetime $M$. 
The generators $\weyl_{[\phi]}$ of the $C^\ast$-algebra $\QFT(P)$ 
corresponding to elements $\expev_{[\phi]}\in\expcnt_P$, {\em cfr.}\ eq.\ \eqref{eqCenterExpYM}, 
generate the center to the algebra $\QFT(P)$. 
In fact, given $\expev_{[\phi]}\in\expcnt_P$ and $\expev_{[\psi]}\in\PSA(P)$, 
one can consider the corresponding algebra generators $\weyl_{[\phi]}$ and $\weyl_{[\psi]}$ and conclude that 
\begin{equation*}
\weyl_{[\phi]}\,\weyl_{[\psi]}
=\myexp^{-\frac{i}{2}\upsilon_P\big(\weyl_{[\phi]},\weyl{[\psi]}\big)}\weyl_{[\phi+\psi]}
=\myexp^{-i\upsilon_P\big(\weyl_{[\phi]},\weyl{[\psi]}\big)}\weyl_{[\psi]}\,\weyl_{[\phi]}
=\weyl_{[\psi]}\,\weyl_{[\phi]}\,.
\end{equation*}
This result follows from the fact that, by definition of $\expcnt_P$, 
$\upsilon_P(\expev_{[\phi]},\expev_{[\psi]})\in\bbZ$ 
for each $\expev_{[\phi]}\in\expcnt_P$ and for each $\expev_{[\psi]}\in\PSA(P)$. 
\end{remark}

\begin{theorem}
Let $m\geq2$ and $G=U(1)$. Consider the covariant functor $\PSA:\PrBun_\GHyp\to\PSymA$ 
introduced in\ Theorem\ \ref{thmFunctorCausTSAExpYM} 
and the covariant functor $\CCR:\PSymA\to\CAlg$ presented in\ Theorem\ \ref{thmQuantFunctor}. 
The covariant functor $\QFT=\CCR\circ\PSA:\GHyp\to\CAlg$ fulfils 
the quantum counterparts of both causality and the time-slice axiom: 
\begin{description}
\item[Causality] If $f:P_1\to Q$ and $h:P_2\to Q$ are principal bundle maps covering causal embeddings 
$\ul{f}:M_1\to N$ and $\ul{h}:M_2\to N$ with causally disjoint images in $N$, 
namely such that $\ul{f}(M_1)\cap J_N(\ul{h}(M_2))=\emptyset$, then the $C^\ast$-subalgebras 
$\QFT(f)(\QFT(P_1))$ and $\QFT(h)(\QFT(P_2))$ of the $C^\ast$-algebra $\QFT(Q)$ commute with each other; 
\item[Time-slice axiom] If $f:P\to Q$ is a principal bundle map covering a Cauchy morphism $\ul{f}:M\to N$, 
then $\QFT(f):\QFT(P)\to\QFT(Q)$ is an isomorphism of $C^\ast$-algebras. 
\end{description}
The covariant functor $\QFT:\PrBun_\GHyp\to\CAlg$ violates locality, namely there exists a principal bundle map  
$f:P\to Q$ covering a causal embedding $\ul{f}:M\to N$ such that $\QFT(f):\QFT(P)\to\QFT(Q)$ is not injective. 
\end{theorem}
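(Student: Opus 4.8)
The plan is to deduce everything from the corresponding classical statement in Theorem~\ref{thmFunctorCausTSAExpYM} together with the structural properties of the quantization functor $\CCR:\PSymA\to\CAlg$ established in Section~\ref{secQuantization}. First I would observe that $\QFT=\CCR\circ\PSA$ is a covariant functor because it is the composition of two covariant functors, namely $\PSA:\PrBun_\GHyp\to\PSymA$ from Theorem~\ref{thmFunctorCausTSAExpYM} and $\CCR:\PSymA\to\CAlg$ from Theorem~\ref{thmQuantFunctor}. This is immediate and requires no further work.

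For causality, I would take $\expev_{[\phi]}\in\PSA(P_1)$ and $\expev_{[\psi]}\in\PSA(P_2)$ with $f:P_1\to Q$ and $h:P_2\to Q$ covering causally disjoint embeddings, so that by the classical causality in Theorem~\ref{thmFunctorCausTSAExpYM} one has $\upsilon_Q(\PSA(f)\expev_{[\phi]},\PSA(h)\expev_{[\psi]})=0$. Writing $\weyl_{[\phi]}$ and $\weyl_{[\psi]}$ for the corresponding generators and using the functoriality relation $\QFT(f)\weyl_{[\phi]}=\weyl_{\PSA(f)\expev_{[\phi]}}$ (cf.~\eqref{eqQuantMorph}), the Weyl relations~\eqref{eqWeylRel} give $\QFT(f)\weyl_{[\phi]}\cdot\QFT(h)\weyl_{[\psi]}=\exp(-i\upsilon_Q(\dots))\QFT(h)\weyl_{[\psi]}\cdot\QFT(f)\weyl_{[\phi]}=\QFT(h)\weyl_{[\psi]}\cdot\QFT(f)\weyl_{[\phi]}$, and since the Weyl generators span a dense subalgebra, the two $C^\ast$-subalgebras $\QFT(f)(\QFT(P_1))$ and $\QFT(h)(\QFT(P_2))$ commute. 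For the time-slice axiom, if $f:P\to Q$ covers a Cauchy morphism, then $\PSA(f):\PSA(P)\to\PSA(Q)$ is an isomorphism of presymplectic Abelian groups by Theorem~\ref{thmFunctorCausTSAExpYM}; applying the functor $\CCR$, which takes isomorphisms to isomorphisms, yields that $\QFT(f)=\CCR(\PSA(f)):\QFT(P)\to\QFT(Q)$ is an isomorphism of $C^\ast$-algebras.

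For the violation of locality, I would invoke Theorem~\ref{thmLocViolatedExpYM}, which provides a principal bundle map $f:P\to Q$ covering a causal embedding and a nonzero element $\expev_{[\phi]}\in\PSA(P)$ with $\PSA(f)\expev_{[\phi]}=1$ (the unit of the Abelian group $\PSA(Q)$, i.e.\ the trivial affine character). Passing to the quantum theory, the generator $\weyl_{[\phi]}$ is mapped by $\QFT(f)$ to $\weyl_{\PSA(f)\expev_{[\phi]}}=\weyl_{1}=\bbone_Q$. Hence the element $\bbone_P-\weyl_{[\phi]}\in\QFT(P)$ is nonzero — this uses that $\weyl_{[\phi]}$ is a nontrivial generator, which follows since $\expev_{[\phi]}\ne 1$ in $\PSA(P)$ and $\CCR$ preserves injectivity (Proposition~\ref{prpQuantInj}), so distinct elements of the presymplectic group give distinct, indeed linearly independent, Weyl generators — while $\QFT(f)(\bbone_P-\weyl_{[\phi]})=\bbone_Q-\bbone_Q=0$. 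Therefore $\QFT(f)$ has nontrivial kernel and is not injective.

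The main obstacle, insofar as there is one, is purely bookkeeping: making sure the identification of generators under $\QFT(f)$ is correctly traced through $\eqref{eqQuantMorph}$ and that the nontriviality of $\bbone_P-\weyl_{[\phi]}$ is genuinely established (not assumed). For this last point I would lean on the faithful state $\tilde\omega$ constructed after Proposition~\ref{prpPosLinFunc}, or equivalently on the fact that $\{\weyl_h\}$ is a linearly independent family so that $\bbone_P-\weyl_{[\phi]}=0$ would force $\expev_{[\phi]}=1$, contradicting the choice from Theorem~\ref{thmLocViolatedExpYM}. Everything else is a direct transcription of the classical results through the quantization functor, with the Weyl relations doing the work in the causality step. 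As in the classical case (Remark~\ref{remNoLocQuotExpYM}), I would also note in passing that for $m\geq3$ the counterexample can be taken within connected globally hyperbolic spacetimes of finite type, so the violation persists in the physically relevant case $m=4$.
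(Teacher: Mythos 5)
Your proposal is correct and follows essentially the same route as the paper: functoriality of the composition, causality via the Weyl relations combined with the classical causality of Theorem~\ref{thmFunctorCausTSAExpYM}, the time-slice axiom via functoriality of $\CCR$ applied to the classical isomorphism, and the locality violation by pushing the kernel element of Theorem~\ref{thmLocViolatedExpYM} through the quantization so that $\bbone_P-\weyl_{[\phi]}$ lies in $\ker\QFT(f)$. Your extra care in verifying that $\bbone_P-\weyl_{[\phi]}\neq 0$ (via linear independence of the Weyl generators, or the faithful state) only makes explicit a step the paper leaves implicit.
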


\begin{proof}
The composition of covariant functors is a covariant functor, therefore $\QFT:\GHyp\to\CAlg$ is a covariant functor. 
Causality and the time-slice axiom follow respectively from the Weyl relations\ \eqref{eqWeylRel} 
and from functoriality, keeping Theorem\ \ref{thmFunctorCausTSAExpYM} in mind. 
Recalling the classical violation of locality, {\em cfr.}\ Theorem\ \ref{thmLocViolatedExpYM}, 
there exists a principal bundle map $f:P\to Q$ covering a causal embedding $\ul{f}:M\to N$ 
and an element $\expev_{[\phi]}\in\PSA(P)$ in the kernel of $\PSA(f):\PSA(P)\to\PSA(Q)$. 
According to the quantization procedure presented in\ Section\ \ref{secQuantization}, 
the Weyl generator $\weyl_{[\phi]}\in\QFT(P)$ corresponding to $\expev_{[\phi]}$ 
is mapped to $\bbone_Q\in\QFT(Q)$ by $\QFT(f):\QFT(P)\to\QFT(Q)$. 
Therefore $\bbone_P-\weyl{[\phi]}\in\QFT(P)$ lies in the kernel of $\QFT(f)$, 
whence locality is violated at the quantum level too. 
\end{proof}

In the following we show that there exists no proper quotientable subfunctor of $\QFT:\PrBun_\GHyp\to\CAlg$ 
such that locality can be recovered taking the quotient. The notion of a (quotientable) subfunctor 
in the $C^\ast$-algebraic context, as well as the quotient functor, 
can be found in\ Definition\ \ref{defQSubfunctorAlg} and in\ Proposition\ \ref{prpQuotientFunctorAlg}. 
Notice that we consider only proper quotientable subfunctors in order to avoid the situation in which, 
after performing the quotient, the resulting $C^\ast$-algebra is trivial on certain objects of $\PrBun_\GHyp$. 

\begin{theorem}
Let $m\geq2$ and $G=U(1)$. The covariant functor $\QFT:\PrBun_\GHyp\to\CAlg$ 
has no proper quotientable subfunctor $\mathfrak{Q}:\PrBun_\GHyp\to\CAlg$ 
which recovers locality in the sense of\ \cite[Definition\ 2.1]{BFV03}, 
namely such that $\QFT/\mathfrak{Q}(f)$ is injective for each principal bundle map $f:P\to Q$ 
covering a causal embedding $\ul{f}:M\to N$. 
\end{theorem}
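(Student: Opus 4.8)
The plan is to argue by contradiction, following closely the strategy already used for Maxwell $k$-forms in the quantum setting. Suppose that $\mathfrak{Q}:\PrBun_\GHyp\to\CAlg$ is a proper quotientable subfunctor of $\QFT:\PrBun_\GHyp\to\CAlg$ such that $\QFT/\mathfrak{Q}(f)$ is injective for every principal bundle map $f$ covering a causal embedding. By Example \ref{exaNoQuotientYM} (together with Theorem \ref{thmLocNotRecoveredYM}), both for $m\geq3$ and for $m=2$, one has a diagram in $\PrBun_\GHyp$ of the form $Q\xleftarrow{\,f\,}P\xrightarrow{\,h\,}R$ with all bundles trivial, together with a non-trivial element $\expev_{[\phi]}\in\PSA(P)$ which is annihilated by $\PSA(h)$ but such that $\PSA(f)\expev_{[\phi]}$ does not lie in $\expcnt_Q$ (in particular, it is not in $\exprad_Q$, and it is non-zero).

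Next I would pass to the algebras. Let $\weyl_{[\phi]}$ be the generator of $\QFT(P)$ corresponding to $\expev_{[\phi]}$. Since $\PSA(h)\expev_{[\phi]}=0$, the definition of $\CCR$, {\em cfr.}\ \eqref{eqQuantMorph}, gives $\QFT(h)\weyl_{[\phi]}=\bbone_R$, whence $\bbone_P-\weyl_{[\phi]}$ lies in the kernel of $\QFT(h)$. Because the quotient by $\mathfrak{Q}$ is assumed to recover injectivity, $\bbone_P-\weyl_{[\phi]}$ must belong to the closed two-sided $\ast$-ideal $\mathfrak{Q}(P)$, {\em cfr.}\ Definition\ \ref{defQSubfunctorAlg}. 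Applying $\QFT(f)$ and using that $\mathfrak{Q}$ is a subfunctor, $\QFT(f)(\bbone_P-\weyl_{[\phi]})=\bbone_Q-\weyl_{[\phi^\prime]}$ lies in $\mathfrak{Q}(Q)$, where $\expev_{[\phi^\prime]}=\PSA(f)\expev_{[\phi]}$.

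Then the key step is to derive a contradiction with properness. Since $\expev_{[\phi^\prime]}\notin\expcnt_Q$, {\em cfr.}\ \eqref{eqCenterExpYM}, there exists $\expev_{[\theta]}\in\PSA(Q)$ with $\upsilon_Q(\expev_{[\theta]},\expev_{[\phi^\prime]})\notin2\pi\bbZ$. Using the Weyl relations \eqref{eqWeylRel} one computes
\begin{equation*}
\weyl_{[-\theta]}\,(\bbone_Q-\weyl_{[\phi^\prime]})\,\weyl_{[\theta]}
=\bbone_Q-\myexp^{-i\upsilon_Q(\expev_{[\phi^\prime]},\expev_{[\theta]})}\,\weyl_{[\phi^\prime]}\,,
\end{equation*}
so that
\begin{equation*}
(\bbone_Q-\weyl_{[\phi^\prime]})-\myexp^{i\upsilon_Q(\expev_{[\phi^\prime]},\expev_{[\theta]})}\,
\weyl_{[-\theta]}\,(\bbone_Q-\weyl_{[\phi^\prime]})\,\weyl_{[\theta]}
=\big(1-\myexp^{i\upsilon_Q(\expev_{[\phi^\prime]},\expev_{[\theta]})}\big)\,\bbone_Q
\end{equation*}
is a non-zero multiple of the identity, $\upsilon_Q(\expev_{[\theta]},\expev_{[\phi^\prime]})$ not being an integer multiple of $2\pi$. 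Since $\mathfrak{Q}(Q)$ is a two-sided $\ast$-ideal containing $\bbone_Q-\weyl_{[\phi^\prime]}$, it also contains $\weyl_{[-\theta]}(\bbone_Q-\weyl_{[\phi^\prime]})\weyl_{[\theta]}$ and hence the linear combination above; thus $\bbone_Q\in\mathfrak{Q}(Q)$ and $\mathfrak{Q}(Q)=\QFT(Q)$, contradicting the hypothesis that $\mathfrak{Q}$ is proper.

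The main obstacle, and the place where the earlier results of the paper do most of the work, is securing the geometric configuration of the first paragraph: a principal bundle $P$ and a label $\expev_{[\phi]}$ that is simultaneously killed by one morphism and sent outside $\expcnt_Q$ by another. This is exactly the content of Example \ref{exaNoQuotientYM}, whose verification rests on the cohomological input from Subsection \ref{subRestrictedCohom}, on Proposition \ref{prpRadExpYM} and Proposition \ref{prpCntExpYM} characterizing $\exprad$ and $\expcnt$, and on Theorem \ref{thmLocCohoYM} relating injectivity of $\PSA(f)$ to injectivity of $\hcdd^2(\ul{f},\lieg^\ast)$. Once this input is granted, the algebraic part is the routine manipulation of Weyl relations sketched above; note also that, as in the classical case ({\em cfr.}\ Remark\ \ref{remNoLocQuotExpYM}), for $m\geq3$ the obstruction already appears within the full subcategory of $\PrBun_\GHyp$ whose base spacetimes are connected.
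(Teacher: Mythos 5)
Your proposal is correct and follows essentially the same route as the paper's own proof: it invokes Example \ref{exaNoQuotientYM} (via Theorem \ref{thmLocNotRecoveredYM}) to obtain the element $\expev_{[\phi]}$ killed by $\PSA(h)$ but mapped by $\PSA(f)$ outside $\expcnt_Q$, and then runs the identical Weyl-relation computation to show that a proper $\ast$-ideal $\mathfrak{Q}(Q)$ would be forced to contain a non-zero multiple of $\bbone_Q$. The intermediate conjugation formula you state is the paper's displayed identity in disguise, so no gap remains.
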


\begin{proof}
We prove this theorem by contradiction. Assume that there exists a quotientable subfunctor 
$\mathfrak{Q}:\PrBun_\GHyp\to\CAlg$ of $\QFT:\PrBun_\GHyp\to\CAlg$ 
such that $\QFT/\mathfrak{Q}:\PrBun_\GHyp\to\CAlg$ fulfils locality. 
According to the proof of\ Theorem\ \ref{thmLocNotRecoveredYM}, we have a diagram in $\PrBun_\GHyp$ 
\begin{equation*}
\xymatrix{
Q && R\\
&P\ar[lu]^f\ar[ru]_h
}
\end{equation*}
where all principal bundles are trivial. 
Furthermore, there exists a non-trivial element $\expev_{[\phi]}$ of $\PSA(P)$ in the kernel of $\PSA(h)$ 
such that $\PSA(f)\expev_{[\omega]}$ does not lie in the null space $\exprad_Q$ of $\PSA(Q)$, 
see\ \eqref{eqRadicalExpYM}, and actually not even in the subgroup $\expcnt_Q$ of observables 
which generate the center, see\ \eqref{eqCenterExpYM}. 
On the one hand, $\QFT(h):\QFT(P)\to\QFT(R)$ maps $\bbone_P-\weyl_{[\phi]}\in\QFT(P)$ to zero. 
Since the quotient by $\mathfrak{Q}$ is supposed to recover locality, 
$\bbone_P-\weyl_{[\phi]}$ must lie in the $\ast$-ideal $\mathfrak{Q}(P)$. 
On the other hand, introducing $\expev_{[\phi^\prime]}=\PSA(f)\expev_{[\phi]}\in\PSA(Q)$, there exists 
$\expev_{[\psi]}\in\PSA(Q)$ such that $\upsilon_Q(\expev_{[\psi]},\expev_{[\phi^\prime]})\notin2\pi\bbZ$. 
Weyl relations\ \eqref{eqWeylRel} entail the following identity: 
\begin{equation}\label{eqQuantumNoLocQuotYM}
\bbone_Q-\weyl_{[\phi^\prime]}-\myexp^{i\upsilon_Q(\expev_{[\phi^\prime]},\ev_{[\psi]})}\,
\weyl_{[-\psi]}\,\big(\bbone_Q-\weyl_{[\phi^\prime]}\big)\,\weyl_{[\psi]}
=\big(1-\myexp^{i\upsilon_Q(\expev_{[\phi^\prime]},\expev_{[\psi]})}\,\big)\,\bbone_Q\,.
\end{equation}
Note that this is a non-zero multiple of the identity since $\upsilon_Q(\expev_{[\psi]},\expev_{[\phi^\prime]})$ 
does not lie in $2\pi\bbZ$ according to the argument presented above. 

We already know that $\bbone_P-\weyl_{[\phi]}$ lies in the $\ast$-ideal $\mathfrak{Q}(P)$. 
Furthermore $\mathfrak{Q}$ is a quotientable subfunctor per assumption.
Therefore $\bbone_Q-\weyl_{[\phi^\prime]}$ lies in the $\ast$-ideal $\mathfrak{Q}(Q)$. 
In particular, $\weyl_{[-\psi]}\,(\bbone_Q-\weyl_{[\phi^\prime]})\,\weyl_{[\psi]}$ 
must lie in $\mathfrak{Q}(Q)$ as well, and so does the difference between the two, 
namely the term on the left hand side of\ \eqref{eqQuantumNoLocQuotYM}. 
Yet, this is a non-zero multiple of the identity, therefore $\mathfrak{Q}(Q)$ is a $\ast$-ideal of $\QFT(Q)$
containing the unit $\bbone_Q$, whence $\mathfrak{Q}(Q)=\QFT(Q)$, 
contradicting the hypothesis that $\mathfrak{Q}(Q)$ is a proper $\ast$-ideal of $\QFT(Q)$. 
\end{proof}

In dimension $m\geq3$, the covariant quantum field theory $\QFT:\GHyp\to\CAlg$ violates locality already 
for principal $U(1)$-bundles over connected $m$-dimensional globally hyperbolic spacetimes 
and, moreover, no quotient can recover locality in this restricted context too. 
This follows from the proof of\ Theorem\ \ref{thmLocViolatedExpYM} and\ Remark\ \ref{remNoLocQuotExpYM}. 

It remains only to discuss isotony for a fixed target principal $U(1)$-bundle $P$ 
over an $m$-dimensional globally hyperbolic spacetime $M$. This is the content of the next theorem. 

\begin{theorem}\index{isotony}
Let $m\geq2$ and $G=U(1)$. 
Take a principal $G$-bundle $P$ over an $m$-dimensional globally hyperbolic spacetime $M$. 
Consider the covariant functor $\PSA_P:\PrBun_P\to\PSymA$ introduced in\ Theorem\ \ref{thmIsotonyYM} 
and the covariant functor $\CCR:\PSymA\to\CAlg$ presented in\ Theorem\ \ref{thmQuantFunctor}. 
The covariant functor $\QFT_P=\CCR\circ\PSA_P:\PrBun_P\to\CAlg$ satisfies 
isotony, causality and the time-slice axiom. In particular, 
$\QFT_P$ assigns an injective morphism in $\CAlg$ to each morphism in $\PrBun_P$. 
\end{theorem}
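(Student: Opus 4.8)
The statement is the quantum analogue of Theorem~\ref{thmIsotonyYM}, so the strategy is to obtain all three properties by transport along the quantization functor $\CCR:\PSymA\to\CAlg$ of Theorem~\ref{thmQuantFunctor}. First I would observe that $\QFT_P=\CCR\circ\PSA_P$ is a composition of two covariant functors, hence itself a covariant functor from $\PrBun_P$ to $\CAlg$. This is immediate and requires no argument beyond invoking Theorem~\ref{thmIsotonyYM} for $\PSA_P:\PrBun_P\to\PSymA$ and Theorem~\ref{thmQuantFunctor} for $\CCR$.

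\textbf{Isotony.} By Theorem~\ref{thmIsotonyYM}, for every morphism $\iota_{O\,O^\prime}$ in $\PrBun_P$ the presymplectic homomorphism $\PSA_P(\iota_{O\,O^\prime})$ is injective. The key input is Proposition~\ref{prpQuantInj}, which states that $\CCR$ sends injective presymplectic homomorphisms to isometric (in particular injective) $\ast$-homomorphisms of $C^\ast$-algebras. Applying $\CCR$ to $\PSA_P(\iota_{O\,O^\prime})$ therefore yields an injective morphism $\QFT_P(\iota_{O\,O^\prime})$ in $\CAlg$, which is exactly the isotony property for $\QFT_P$.

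\textbf{Causality and time-slice.} These are inherited in the same manner as in the proof of Theorem~\ref{thmCovariantQFTConn}. For causality, if $f:P_1\to Q$ and $h:P_2\to Q$ in $\PrBun_P$ cover causal embeddings with causally disjoint images, then Theorem~\ref{thmIsotonyYM} (via the causality statement for $\PSA:\PrBun_\GHyp\to\PSymA$ of Theorem~\ref{thmFunctorCausTSAExpYM}, which $\PSA_P$ inherits) gives $\upsilon_Q(\PSA_P(f)\expobs_{P_1},\PSA_P(h)\expobs_{P_2})=\{0\}$; then the Weyl relations~\eqref{eqWeylRel} force the corresponding $C^\ast$-subalgebras $\QFT_P(f)(\QFT_P(P_1))$ and $\QFT_P(h)(\QFT_P(P_2))$ to commute, since generators whose presymplectic pairing vanishes commute. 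For the time-slice axiom, if $f$ covers a Cauchy morphism then $\PSA_P(f)$ is an isomorphism by Theorem~\ref{thmIsotonyYM}, and $\CCR$ being a functor maps isomorphisms to isomorphisms, so $\QFT_P(f)$ is an isomorphism of $C^\ast$-algebras.

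\textbf{Main obstacle.} There is essentially no hard analytic step here: the whole argument is a transport along $\CCR$, and all the substantive work has already been done at the classical level (Theorem~\ref{thmIsotonyYM}) and in the general theory of the quantization functor (Theorem~\ref{thmQuantFunctor}, Proposition~\ref{prpQuantInj}). The only point that deserves a line of care is making sure that the restricted functor $\PSA_P:\PrBun_P\to\PSymA$ genuinely inherits causality and the time-slice axiom from $\PSA:\PrBun_\GHyp\to\PSymA$ — but this is already recorded in Theorem~\ref{thmIsotonyYM}, so the proof reduces to citing the two theorems and invoking the Weyl relations for the commutation claim.
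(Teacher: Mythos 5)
Your proposal is correct and follows essentially the same route as the paper: isotony via Proposition~\ref{prpQuantInj} (injectivity is preserved by $\CCR$), quantum causality from the Weyl relations~\eqref{eqWeylRel} together with the classical causality of $\PSA_P$, and the time-slice axiom from functoriality of $\CCR$ applied to the classical isomorphisms of Theorem~\ref{thmIsotonyYM}. No gaps.
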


\begin{proof}
Isotony, causality and the time-slice axiom are inherited from the corresponding properties 
of the covariant functor $\PSA_P:\PrBun_P\to\CAlg$, {\em cfr.}\ Theorem\ \ref{thmIsotonyYM}. 
In particular, isotony follows from the fact that, according to\ Proposition\ \ref{prpQuantInj}, 
the functor $\CCR:\PSymA\to\CAlg$ preserves injectivity of morphisms, 
while quantum causality is due to the Weyl relations \eqref{eqWeylRel} 
and the quantum time-slice axiom follows from functoriality. 
\end{proof}

\chapter{Concluding remarks}
In this thesis we have investigated the implementation of general local covariance for Abelian gauge field theories. 
In particular, explicit counterexamples to the locality axiom have been shown 
for both Maxwell $k$-forms (generalizing the vector potential of electromagnetism) 
and the Yang-Mills model with structure group $U(1)$. 
In the case of Maxwell $k$-forms, a no-go theorem shows the impossibility 
to recover locality by means of a quotient (in a suitable functorial sense). 
Still, one can specify a target globally hyperbolic spacetime 
in order to implement Haag-Kastler isotony by means of a suitable quotient. 
In the case of the $U(1)$ Yang-Mills model, we adopted two different approaches, 
both producing locality breaking covariant field theories at the classical and at the quantum level. 
In the first approach, gauge invariant affine functionals are used to define observables. 
As a drawback, this class of observables fails in detecting flat connections (connected to the Aharonov-Bohm effect). 
However, this approach leaves room for a quotient by electric flux observables, which recovers locality. 
In the second approach, affine functionals are replaced by affine characters, namely their exponentiated counterparts. 
The exponential weakens the constraints imposed by gauge invariance, 
thus leading to observables (resembling Wilson loops for $U(1)$-connections) 
which can successfully detect all gauge equivalence classes of on-shell connections (including the flat ones). 
The presence of observables detecting flat connections reduces the degeneracies of the presymplectic structure. 
Eventually, when one considers the class of observables defined via gauge invariant affine characters, 
this leads to a no-go theorem for the existence of a quotient which recovers locality. 
Yet, fixing a principal $U(1)$-bundle over a globally hyperbolic spacetime as the target, 
one can still perform a quotient on the classical theory which recovers isotony 
producing both classical and quantum field theories in the Haag-Kastler framework. 

The results presented so far, together with the analysis of higher Abelian gauge theories developed in\ \cite{BSS14}, 
motivate our interest in extending the investigations about the locality axiom of general local covariance 
to more complicated gauge field theories such as non-Abelian Yang-Mills models and the Einstein equation. 
Despite the availability of approaches based on BRST-BV techniques, which allow to tackle the analysis 
of non-linear gauge field theories in a perturbative fashion\ \cite{Hol08, FR12, FR13, BFR13}, 
it seems that these techniques do not capture fine details of the structure of the relevant gauge groups, 
such as topologically non-trivial gauge transformations. 
These features, however, play a central role in our analysis of the $U(1)$ Yang-Mills model 
as well as in the analysis of\ \cite{BSS14}, 
especially in relation to finding the most appropriate class of functionals to distinguish field configurations up to gauge, 
which we regard as a physically and mathematically well-motivated criterion 
to establish whether a space of observables can be reasonably associated to a given field theoretical model. 
These observations suggest the importance of a non-perturbative treatment of non-linear field theories 
at the classical level. In fact, this approach seems to be more effective in highlighting topological aspects of the model. 
Later, quantization might be performed in a perturbative fashion, 
but still keeping track of the topological features arising from the classical theory. 

Because of the high level of complication introduced by non-linearity of the Yang-Mills and Einstein dynamics, 
as a first step, one might approach the problem for less ambitious, yet more tractable, models, 
such as linear gauge field theories. 
A general framework to deal with this class of models has been developed in\ \cite{HS13}. 
As an example, one might study the locality axiom of general local covariance for 
the linearization of a non-Abelian Yang-Mills model around an exact solution of the full-fledged field equation. 
Similarly, one might analyze in detail the linearized Einstein equation\ \cite{FH13, BDM14, Kha14a}, 
looking for degeneracies of the relevant presymplectic structure, which might eventually originate violations of locality. 

To approach the difficulties one encounters with non-linear partial differential equations and the lack 
of a well-developed framework to deal with them in a non-perturbative manner already at the classical level, 
it seems appropriate to start with non-linear equations of hyperbolic type, 
avoiding for the moment the issues related to non-hyperbolic ones, which are typical of gauge field theories. 
A suitable starting point might be to consider non-linear $\sigma$-models (also known as wave maps). 
In this case, some existence and uniqueness theorems for solutions are available 
in the mathematical literature\ \cite{CH80, CB87, Mul07} (especially in $1+1$ world-sheet dimensions). 
In fact, the control on the solution theory is a central prerequisite 
for the analysis in the framework of general local covariance. 
Notice that, dealing with $\sigma$-models, one has a new source for field configurations of a topological nature, 
such as the Aharonov-Bohm ones for the $U(1)$ Yang-Mills model discussed in\ Chapter\ \ref{chYangMills}. 
In fact, depending on the choice of the world-sheet and of the target manifold, 
one can obtain solutions classified by topological information. 
As a simple example, one might imagine to have a cylinder both as the world-sheet and the target manifold. 
In this case one finds wave maps which carry information about the winding number. 
Furthermore, $\sigma$-models provide the framework to analyze two different kinds of localization, 
one on the target manifold and the other on the world-sheet. 

To conclude, let us also mention that it would be interesting to check if locality is restored in more realistic models, 
where electromagnetism is not considered on its own, as in our case, 
but rather where $U(1)$-connections are coupled to Dirac fields in the Maxwell-Dirac system of electrodynamics. 
The following is a rough argument motivating why locality might be recovered 
when interactions between electromagnetic and charged fields are taken into account: 
As shown in this thesis, see also\ \cite{SDH14, BDS14a, BDHS14}, 
the source of the lack of locality in pure electromagnetism is due to the non-trivial electric fluxes 
of topological origin produced by certain on-shell field configurations. 
Introducing dynamical charged fields, one has also local currents as sources for electric fluxes. 
This might milden the issues with locality related to observables which are measuring the electric flux, 
eventually leading to a locally covariant field theory describing electrodynamics. 
However, if one does not take the principal $U(1)$-bundle as part of the data 
providing the background for the dynamics of the electromagnetic field, 
but rather one wants to describe bundle-connection pairs as dynamical objects on globally hyperbolic spacetimes, 
thus leaving the Chern class of the principal $U(1)$-bundle free as in\ \cite{BSS14}, 
one encounters more observables causing the failure of locality, 
namely those which are measuring the magnetic flux, {\em cfr.}\ \cite[Example\ 6.9]{BSS14}. 
It seems unlikely that charged fields can cure this source of non-locality too 
and it is still not clear which kind of fields might be of help in this respect. 

For these reasons, it seems of major importance to go in the direction of a non-perturbative treatment of 
non-linear field theories to improve our understanding of general local covariance and of its locality axiom in particular. 

\paginavuota


\backmatter


\cleardoublepage\phantomsection
\addcontentsline{toc}{chapter}{Index}
\printindex


\cleardoublepage\phantomsection
\addcontentsline{toc}{chapter}{Bibliography}
\bibliography{thesis} 
\bibliographystyle{amsalphamod}
\paginavuota


\cleardoublepage\phantomsection
\addcontentsline{toc}{chapter}{List of publications}
\chapter*{List of publications}

\begin{enumerate}[label={[\arabic*]}]

\item M. Benini, \emph{Optimal space of linear classical observables
for Maxwell $k$-forms via spacelike and timelike compact de Rham cohomologies,} Jan 2014,
\href{http://arxiv.org/pdf/1401.7563.pdf}{\texttt{arXiv}:1401.7563 [math-ph]}.

\item M. Benini, \emph{Relative Cauchy evolution for the vector potential
on globally hyperbolic spacetimes,} Feb 2014, accepted for publication on \textbf{MEMOCS}.

\item M. Benini, C. Dappiaggi, \emph{Models of free quantum field theories
on curved backgrounds,} in \textbf{Advances in Algebraic Quantum Field
Theory}, eds. R. Brunetti, C. Dappiaggi, K. Fredenhagen, J. Yngvason,
to be published by Springer in 2015.

\item M. Benini, C. Dappiaggi, T.-P. Hack, \emph{Quantum field theory
on curved backgrounds -- A primer,} \textbf{Int. J. Mod. Phys. A} 17:28 (2013) 1330023,\\ 
\href{http://dx.doi.org/10.1142/S0217751X13300238}{DOI: 10.1142/S0217751X13300238},
\href{http://arxiv.org/pdf/1306.0527.pdf}{\texttt{arXiv}:1306.0527 [gr-qc]}.

\item M. Benini, C. Dappiaggi, T.-P. Hack, A. Schenkel, \emph{A $C^{*}$-algebra
for quantized Abelian principal $U(1)$-connections on globally hyperbolic
Lorentzian manifolds,} \textbf{Commun. Math. Phys.} 332:1 (2014) 477,\\ 
\href{http://dx.doi.org/10.1007/s00220-014-2100-3}{DOI: 10.1007/s00220-014-2100-3},
\href{http://arxiv.org/pdf/1307.3052.pdf}{\texttt{arXiv}:1307.3052 [math-ph]}.

\item M. Benini, C. Dappiaggi, S. Murro, \emph{Radiative observables
for linearized gravity on asymptotically flat spacetimes and their
boundary induced states,} \textbf{J. Math. Phys.} 55:8 (2014) 082301,\\
\href{http://dx.doi.org/10.1063/1.4890581}{DOI: 10.1063/1.4890581},
\href{http://arxiv.org/pdf/1404.4551.pdf}{\texttt{arXiv}: 1404.4551 [gr-qc]}.

\item M. Benini, C. Dappiaggi, A. Schenkel, \emph{Quantized Abelian
principal connections on Lorentzian manifolds,} \textbf{Commun. Math. Phys.} 330:1 (2014) 123,\\
\href{http://dx.doi.org/10.1007/s00220-014-1917-0}{DOI: 10.1007/s00220-014-1917-0},
\href{http://arxiv.org/pdf/1303.2515.pdf}{\texttt{arXiv}:1303.2515 [math-ph]}.

\item M. Benini, C. Dappiaggi, A. Schenkel, \emph{Quantum field theories
on affine bundles,} \textbf{Ann. Henry Poincar\'e} 15:1 (2014) 171,\\
\href{http://dx.doi.org/10.1007/s00023-013-0234-z}{DOI: 10.1007/s00023-013-0234-z},
\href{http://arxiv.org/pdf/1210.3457.pdf}{\texttt{arXiv}:1210.3457 [math-ph]}.

\end{enumerate}


\clearemptydoublepage
\thispagestyle{empty}\phantom{x}

\clearpage
\thispagestyle{empty}
\phantom{x}
\vfill
\hfill\EANisbn[SC5b]

\end{document}